\begin{document}
	
\title{Polynomial and analytic methods for classifying complexity of planar graph homomorphisms}
\author{
  Jin-Yi Cai\\
  \texttt{jyc@cs.wisc.edu}
  \and
  Ashwin Maran\\
  \texttt{amaran@wisc.edu}
}
\date{}
\maketitle

\begin{abstract}
    We introduce some polynomial and analytic methods in the  classification program
    for the complexity of planar graph homomorphisms. These methods allow us to
    handle infinitely many lattice conditions and isolate the new P-time tractable
    matrices represented by tensor products of matchgates. We use these methods
    to prove a complexity dichotomy for $4 \times 4$ matrices that says Valiant's
    holographic algorithm is universal for planar tractability in this setting.
\end{abstract}
\thispagestyle{empty}

\newpage

\tableofcontents

\thispagestyle{empty}
\clearpage
\pagenumbering{arabic}

\section{Introduction}\label{sec: introduction}

Given graphs $G$ and $H$, a mapping from $V(G)$ to $V(H)$ is called
a \textit{homomorphism} if edges of $G$ are mapped to edges of $H$.
This is put in a more general or quantitative setting by the notion of a \textit{partition function}.
Let  $M = (M_{ij})$ be a $q \times q$ symmetric matrix.
In this paper we consider non-negative arbitrary real entries
$M_{ij} \in \mathbb{R}_{\geq 0}$; 
if $M_{ij} \in \{0, 1\}$, then $M$ is 
the unweighted adjacency matrix of a graph
$H = H_M$. Given $M$,
the partition function  $Z_{M}(G)$ for any input undirected multi-graph $G = (V, E)$
is defined as
$$Z_{M}(G) = \sum_{\sigma: V \rightarrow [q]} \prod_{(u, v) \in E} M_{\sigma(u) \sigma(v)}.$$
Obviously isomorphic graphs $G \cong G'$ have the same value $Z_{M}(G) = Z_{M}(G')$, and thus every $M$
defines a graph property $Z_{M}(\cdot)$. For a 0-1 matrix $M$,
$Z_{M}(G)$ counts the number of homomorphisms from $G$ to $H$.
Graph homomorphism (GH) encompasses a great deal of graph properties~\cite{lovasz2012large}.

Each $M$ defines a
computational problem, denoted by $\EVAL(M)$: given an input graph $G$ compute $Z_{M}(G)$.
The  complexity of 
$\EVAL(M)$ has been a major focus of research.
A number of increasingly general complexity dichotomy theorems have been achieved.
Dyer and Greenhill \cite{dyer2000complexity} proved that for any 0-1 symmetric matrix $M$, computing $Z_{M}(G)$ is either in P-time or is $\#$P-complete. Bulatov and Grohe \cite{bulatov2005complexity}  found a complete classification for $\EVAL(M)$ for all nonnegative matrices $M$. 
Goldberg, Grohe, Jerrum, and Thurley \cite{goldberg2010complexity} then proved a dichotomy for all real-valued matrices $M$. Finally, Cai, Chen, and Lu \cite{cai2013graph} established a dichotomy for all complex valued matrices $M$.
We also note that graph homomorphism can be viewed as a special case of
counting CSP.
For counting CSP,
a series of results established a complexity dichotomy for any set of constraint functions
${\cal F}$, going from 
0-1 valued~\cite{bulatov2013complexity,dyer2010complexity,dyer2011decidability,dyer2013complexity} to nonnegative rational valued~\cite{bulatov2012complexity}, to
nonnegative real valued~\cite{cai2016nonnegative}, to all complex valued functions~\cite{cai2017complexity}.

Parallel to this development, Valiant~\cite{valiant2008holographic} introduced
\emph{holographic algorithms}. It is well known that counting the number of perfect
matchings (\#PM) is \#P-complete~\cite{valiant1979complexity}. On the other hand,
since the 60's, there has been a famous FKT algorithm~\cite{kasteleyn1961statistics,temperley1961dimer,kasteleyn1963dimer,kasteleyn1967graph} that can
compute \#PM on planar graphs in P-time. Valiant's holographic algorithms
greatly extended its reach, in fact so much so that a most intriguing question arises: Is this   a \emph{universal} algorithm that 
\emph{every} counting problem expressible as a sum-of-products
that \emph{can be solved} in P-time on planar graphs
(but \#P-hard in general) \emph{is solved} by this method alone?
Such a universality statement must appear to be extraordinarily, if not overly, ambitious.


After a series of work~\cite{cai2009holant,cai2016complete,cai2019holographic, backens2017new, backens2018complete, yang2022local, fu2019blockwise, fu2014holographic, cai2019holographic}
it was established that for every set of complex valued constraint functions ${\cal F}$
on the Boolean domain (i.e., $q=2$) there is a 3-way exact
classification for  
\#CSP(${\cal F}$): 
 (1) P-time solvable, 
 (2) P-time solvable over planar graphs but \#P-hard over general graphs, 
 (3) \#P-hard over planar graphs.
Moreover,
category (2) consists of precisely  those problems that can be solved by
Valiant's holographic algorithm using FKT.
Cai and Maran~\cite{cai2023complexity} showed that for $\EVAL$ the
same 3-way exact classification holds even on the domain $q=3$,
and category (2) again consists of precisely those problems that can be
solved by Valiant's holographic algorithm using FKT.
So far little is known 
for higher domain problems ($q >3$) on this universality question.

Let $\PlEVAL(M)$ denote the problem $\EVAL(M)$ 
when the input graphs are restricted to planar graphs.
Planar GH is also intimately related to 
quantum isomorphism of graphs, a relaxation of classical isomorphism~\cite{atserias2019quantum}.
It is known that graphs $H$ and $H'$ are quantum isomorphic iff there is a perfect
winning strategy in a two-player graph isomorphism game in which the players share and
 perform measurements on an entangled quantum state. This is also equivalent to
the existence of a quantum permutation matrix transforming $H$ to $H'$~\cite{lupini2020nonlocal}.
Let $M$ and $M'$ be the adjacency matrices of $H$ and $H'$. 
Man\v{c}inska and Roberson~\cite{manvcinska2020quantum} proved that $H$ and $H'$ are quantum isomorphic 
iff $Z_M(G) = Z_{M'}(G)$ for every planar graph $G$, i.e.,  $H$ and $H'$  define the same Planar GH 
problem. Furthermore, a fascinating consequence of this line of work is that
it is undecidable whether $\PlEVAL(M) = \PlEVAL(M')$~\cite{manvcinska2020quantum}, which hints
at the difficulty that we face in this paper.

%
Our goal is to classify the complexity of  $\PlEVAL(M)$, i.e., when 
the input $G$ is restricted to be planar for $Z_M(G)$.
(The underlying graph $H_M$ is not restricted to planar graphs.)
%
We want to classify the problems $\PlEVAL(M)$:
What is the computational complexity of $Z_M(G)$ from  planar input graphs $G$?
We present some strong polynomial and analytic techniques that will
help us approach this problem.
We demonstrate the power of these techniques by giving a complete classification of the complexity of
$\PlEVAL(M)$ for all non-negative real valued
full rank $4 \times 4$ matrices
$M$.
The full rank $4 \times 4$ case is particularly important as it is the first case where tensor
product of matchgates~\cite{valiant2008holographic,valiant2001quantum} defines new P-time tractable problems.
We prove that an exact classification according to the three categories above holds
for this class,
and a holographic reduction to FKT remains 
a \emph{universal} algorithm for category (2).
\section{Preliminaries}\label{sec: preliminaries}

\subsection{Model of Computation}\label{sec: modelComputation}

The Turing machine model is naturally suited to the study of computation over discrete structures such as integers or graphs. 
When  $M \in \mathbb{R}^{q \times q}$, for  $\PlEVAL(M)$ 
one usually restricts $M$ to be a matrix
with only algebraic numbers. This is strictly for the consideration
of the model of computation, even though allowing all
real-valued matrices would be more natural. 
 
 There is a formal (albeit nonconstructive) method to treat  $\PlEVAL(M)$ 
 for arbitrary real-valued matrices $M$ and yet stay strictly
 within the Turing machine model in terms of  bit-complexity.
In this paper, because our proof depends heavily on analytic
argument with continuous functions on the real line, this
logical formal view becomes necessary.

To begin with, we recall a theorem from field theory:
Every extension field ${\bf F}$ over  ${\mathbb Q}$
by a finite set of real numbers is  a finite algebraic extension ${\bf E}'$
of a certain  purely transcendental   extension field ${\bf E}$
over ${\mathbb Q}$,
which has the form ${\bf E} = {\mathbb Q}(X_1,
\ldots, X_m)$ where $m \ge 0$ and $X_1,
\ldots, X_m$ are algebraically independent~\cite{jacobson1985basic} (Theorem 8.35, p.~512).
${\bf F}$ is said to have
a finite transcendence degree $m$ over ${\mathbb Q}$.
It is known that $m$ is uniquely defined for ${\bf F}$.
Since 
$\rm{char}~{\mathbb Q} =0$,
the finite algebraic extension ${\bf E}'$ over ${\bf E}$
is actually simple, ${\bf E}' = {\bf E}(\beta)$ for some $\beta$,
and it is specified by a
minimal polynomial in ${\bf E}[X]$.
Now given a real matrix $M$, let ${\bf F} = {\mathbb Q}(M)$ 
be the extension field by adjoining the entries of $M$.
We  consider $M$ is fixed for the problem $\PlEVAL(M)$,
and thus we may assume (nonconstructively) that the form
${\bf F} = {\bf E}(\beta)$
 and ${\bf E} = {\mathbb Q}(X_1,
\ldots, X_m)$ are given. (This means, among other things,
that the minimal polynomial of $\beta$ over ${\bf E}$ is given,
and all arithmetic operations can be performed on ${\bf F}$.)

Now, the computational problem $\PlEVAL(M)$ is the following:
Given a planar 
$G$, compute $Z_{M}(G)$ as an element in ${\bf F}$
(which is expressed as a polynomial in $\beta$ with coefficients in ${\bf E}$).
More concretely, we can show that this is equivalent to the following problem $\COUNT(M)$:
The input  is a pair $(G,x)$,
  where $G=(V,E)$ is a planar graph and $x\in {\bf F}$.
The output is\vspace{-0.1cm}
$$
\text{\#}_{M}(G,x)= \Big|\big\{\sigma:V\rightarrow
  [q]\hspace{0.08cm}: \hspace{0.08cm} \prod_{(u, v) \in E} m_{\sigma(u), \sigma(v)}=x\big\}\Big|,\label{full_COUNTM}
  $$
a non-negative integer. Note that, in this definition,
we are basically combining terms with the same 
product value in the definition of $Z_{M}(G)$.

Let $n=|E|$.
Define  $X$ to be the  set of all possible product values
appearing in $Z_{M}(G)$:
\begin{equation}\label{full_definitionreuse}
X=\left\{\prod_{i,j\in [q]}m_{ij}^{k_{ij}}\hspace{0.08cm}\Big|\hspace{0.1cm}
  \text{integers $k_{ij}\ge 0$ and $\sum_{i,j\in [q]}k_{ij}=n$}
\right\}.
\end{equation}
There are $\binom{n+q^2-1}{q^2-1} = n^{O(1)}$ many integer sequences
$(k_{i,j})$ such that  $k_{i,j}\ge 0$ and $\sum_{i,j\in [q]}k_{i,j}=n$.
$X$
  is defined as a set, not a multi-set.
After removing repeated
elements the cardinality $|X|$ is also polynomial in $n$.
 For fixed and given ${\bf F}$
 the elements in $X$ can be enumerated in polynomial time in $n$.
 (It is important that ${\bf F}$ and $q$ are all treated as fixed constants.)
It then follows from the definition that
  $\text{\#}_{M}(G,x)= 0$ for any $x\notin X$.
This gives us the following relation:
\[Z_M(G)= \sum_{x\in X} x \cdot \text{\#}_{M}(G,x),\ \ \ \text{for any 
  graph $G$,}\]
and thus, $\PlEVAL(M)\le \COUNT(M).$

For the other direction,
  we construct, for any $p\in [|X|]$ (recall that $|X|$ is polynomial in $n$),
  a  planar graph $T_pG$ from $G$ by replacing
  every edge of $G$ with $p$ parallel edges.
Then,
$$
Z_M(T_pG)= \sum_{x\in X} x^p \cdot \text{\#}_{M}(G,x),\ \ \ \text{for any 
  graph $G$.}
$$
This is a Vandermonde system; it has full rank since
elements in $X$ are distinct by definition. So by
querying $\PlEVAL(M)$ for the
  values of $Z_{M}(T_pG)$,
  we can solve it in polynomial time
  and get $\text{\#}_{M}(G,x)$ for\vspace{0.0015cm} every non-zero $x\in X$.
To obtain $\text{\#}_{M}(G,0)$ (if $0\in X$), we note that
$$
\sum_{x\in X} \text{\#}_{M}(G,x) =q^{|V|}.
$$
This gives us a polynomial-time reduction and thus, $\COUNT(M)\le \PlEVAL(M)$.
We have proved
\begin{lemma}\label{lemma: full_count}
For any fixed  $M \in \mathbb{R}^{q \times q}$,  
   $\PlEVAL(M)\equiv \COUNT(M)$.
\end{lemma}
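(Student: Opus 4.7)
The plan is to establish the equivalence via two polynomial-time reductions that exploit the set $X$ from equation~(\ref{full_definitionreuse}). The crucial preliminary fact is that $|X| \le \binom{n+q^2-1}{q^2-1} = n^{O(1)}$ for fixed $q$, and that its elements can be enumerated in polynomial time because the representation of $\mathbf{F} = \mathbf{E}(\beta)$, with $\mathbf{E} = \mathbb{Q}(X_1,\ldots,X_m)$, is treated as fixed, so arithmetic and equality tests in $\mathbf{F}$ are polynomial-time operations.

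For $\PlEVAL(M) \le \COUNT(M)$, I would simply regroup the defining sum of $Z_M(G)$ by the value of its edge-product, obtaining
$$Z_M(G) = \sum_{x \in X} x \cdot \text{\#}_M(G,x),$$
and evaluate the right-hand side by one $\COUNT(M)$ query per element of $X$. For the reverse direction $\COUNT(M) \le \PlEVAL(M)$, the plan is Vandermonde interpolation: for each $p \in [|X|]$, form the planar graph $T_p G$ by replacing each edge of $G$ with $p$ parallel copies, which preserves planarity, and observe
$$Z_M(T_p G) = \sum_{x \in X} x^p \cdot \text{\#}_M(G,x).$$
Restricted to the nonzero elements of $X$, this is a Vandermonde system with distinct nodes (since $X$ is defined as a set), hence invertible, so inverting it recovers $\text{\#}_M(G,x)$ for every nonzero $x \in X$. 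If $0 \in X$, the identity $\sum_{x \in X} \text{\#}_M(G,x) = q^{|V|}$ supplies the remaining value $\text{\#}_M(G,0)$.

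The main obstacle is not any single step, but rather the interaction between planarity, the bit-complexity model, and the algebraic structure of $\mathbf{F}$. One has to check that the thickened graphs $T_p G$ remain valid planar inputs (any planar embedding of $G$ extends to one of $T_p G$), and more seriously that the Vandermonde inversion is carried out within $\mathbf{F}$ in polynomial bit-complexity; this is precisely why the nonconstructive fixing of $\mathbf{F} = \mathbf{E}(\beta)$ discussed just before the lemma is necessary. Once these bookkeeping issues are in place, the argument is short.
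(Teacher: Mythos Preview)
Your proposal is correct and follows essentially the same approach as the paper: the reduction $\PlEVAL(M)\le\COUNT(M)$ via grouping terms by product value, and the reverse direction via thickening to $T_pG$ and inverting the resulting Vandermonde system, with the $0\in X$ case handled by the total count $q^{|V|}$. The only cosmetic difference is that the paper presents these steps in the paragraphs preceding the lemma rather than as a separate proof block.
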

Thus, $\PlEVAL(M)$ can be identified with the 
problem of producing those polynomially many integer coefficients
in the canonical expression for $Z_M(G)$ as a sum
of (distinct) terms from $X$.

This  formalistic view has 
the advantage that we can treat the complexity 
of  $\PlEVAL(M)$ for  general $M$, and not restricted to algebraic numbers. 
 Thus, numbers such as $e$ or $\pi$
need not be excluded.
More importantly, in this paper this generality is essential, due to the proof technique that we employ.
Furthermore, once freed from this restriction we in fact explicitly use
transcendental numbers as a tool in our proof (see \cref{lemma: latticeHardnessDelta}).
In short, in this paper, treating the complexity of  $\PlEVAL(M)$ for  general real $M$
is not a \emph{bug} but a \emph{feature}.

However, we note that this treatment 
has the following subtlety.  
For the computational problem $\PlEVAL(M)$
the formalistic view demands that
${\bf F}$ be specified in the form ${\bf F} = {\bf E}(\beta)$.
Such a form exists, and its specification is of
constant size when  measured
in terms of the size  of the input graph $G$. 
However, 
in reality many basic questions for transcendental numbers
are unknown.
For example, it is still unknown whether $e + \pi$ or $e \pi$ are
rational, algebraic irrational or transcendental,
and it is open whether ${\mathbb Q}(e, \pi)$ has  transcendence degree
2 (or 1) over  ${\mathbb Q}$, i.e., whether $e$ and $\pi$ are algebraically
independent.
The formalistic view here non-constructively
assumes this information is given for ${\bf F}$.
A  polynomial time reduction $\Pi_1  \le \Pi_2$  from one problem
to another  in this setting merely
implies that the \emph{existence} of a polynomial time algorithm
for $\Pi_2$ logically implies the 
\emph{existence} of  a  polynomial time algorithm
for $\Pi_1$. We do not actually obtain such
an algorithm constructively. 

This logical detour not withstanding, if a reader
is  interested only in the complexity of $\PlEVAL(M)$ 
for integer matrices $M$, then the
complexity dichotomy proved in this paper holds
according to the standard definition of $\PlEVAL(M)$ 
for integral $M$ in terms of
the model of computation; the fact that this is proved 
in a broader setting for all real matrices $M$ 
is irrelevant. This  is akin to the situation
in analytic number theory, where one might be 
interested in a question strictly about the ordinary
integers, but the theorems are proved
in a broader setting of analysis.

\subsection{Definitions}
As we will refer to various different types of matrices
throughout this paper,
it will be helpful to establish some notation.
Given a positive integer $q \geq 1$,
we let $\text{Sym}_{q}(X)$
denote the set of $q \times q$ symmetric matrices
such that each entry of the matrix is from the set
$X \subset \mathbb{R}$.
For example, with $X = \mathbb{R}$
 (respectively, $\mathbb{R}_{\geq 0}$, or
$\mathbb{R}_{\neq 0}$)
$\text{Sym}_{q}(X)$ denotes $q \times q$ symmetric matrices with
 real  (respectively, non-negative, or non-zero) entries.
Similarly, we  let $\text{Sym}^{\tt{F}}_{q}(X)$ denote the
set of $q \times q$ full rank symmetric matrices
such that each entry of the matrix is from the set
$X \subset \mathbb{R}$, and let
$\text{Sym}_{q}^{\tt{pd}}(X)$ denote the set of
$q \times q$ positive definite symmetric matrices
with entries from $X$.
	
Now, consider some $M \in \text{Sym}_{q}(\mathbb{R})$
with entries $M_{ij} \in \mathbb{R}$ for $i, j \in [q]$.
Given a planar, undirected multi-graph $G = (V, E)$,
we can perform certain elementary operations (that preserve
planarity) on the graph $G$
to transform it into a new graph $G'$,
such that $Z_{M}(G') = Z_{M'}(G)$ for some matrix $M'$.
For most of this paper we will use two such operations,
thickening and stretching.

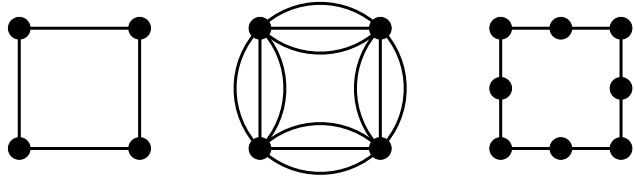
\begin{figure}[b]
	\centering
	\scalebox{0.8}{\begin{tikzpicture}[line join=miter, draw opacity=1]

\node[circle, draw=black, fill=black] (c) at (-1, -1){};
\node[circle, draw=black, fill=black] (c) at (-1, 1){};
\node[circle, draw=black, fill=black] (c) at (1, 1){};
\node[circle, draw=black, fill=black] (c) at (1, -1){};

\draw[line width=0.5mm, black] (-1, -1) -- (-1, 1);
\draw[line width=0.5mm, black] (-1, 1) -- (1, 1);
\draw[line width=0.5mm, black] (1, 1) -- (1, -1);
\draw[line width=0.5mm, black] (1, -1) -- (-1, -1);

\begin{scope}[xshift = 4cm]

\node[circle, draw=black, fill=black] (c) at (-1, -1){};
\node[circle, draw=black, fill=black] (c) at (-1, 1){};
\node[circle, draw=black, fill=black] (c) at (1, 1){};
\node[circle, draw=black, fill=black] (c) at (1, -1){};

\draw[line width=0.5mm, black] (-1, -1) -- (-1, 1);
\draw[line width=0.5mm, black] (-1, 1) -- (1, 1);
\draw[line width=0.5mm, black] (1, 1) -- (1, -1);
\draw[line width=0.5mm, black] (1, -1) -- (-1, -1);

\draw[line width=0.5mm, black] (-1, -1) to [in=225, out=135] (-1, 1);
\draw[line width=0.5mm, black] (-1, -1) to [in=-45, out=45] (-1, 1);

\draw[line width=0.5mm, black] (-1, 1) to [in=135, out=45] (1, 1);
\draw[line width=0.5mm, black] (-1, 1) to [in=225, out=-45] (1, 1);

\draw[line width=0.5mm, black] (1, 1) to [in=45, out=-45] (1, -1);
\draw[line width=0.5mm, black] (1, 1) to [in=135, out=225] (1, -1);

\draw[line width=0.5mm, black] (1, -1) to [in=45, out=135] (-1, -1);
\draw[line width=0.5mm, black] (1, -1) to [in=-45, out=225] (-1, -1);

\end{scope}

\begin{scope}[xshift = 8cm]

\node[circle, draw=black, fill=black] (c) at (-1, -1){};
\node[circle, draw=black, fill=black] (c) at (-1, 1){};
\node[circle, draw=black, fill=black] (c) at (1, 1){};
\node[circle, draw=black, fill=black] (c) at (1, -1){};

\draw[line width=0.5mm, black] (-1, -1) -- (-1, 1);
\draw[line width=0.5mm, black] (-1, 1) -- (1, 1);
\draw[line width=0.5mm, black] (1, 1) -- (1, -1);
\draw[line width=0.5mm, black] (1, -1) -- (-1, -1);

\node[circle, draw=black, fill=black] (c) at (-1, 0){};
\node[circle, draw=black, fill=black] (c) at (0, 1){};
\node[circle, draw=black, fill=black] (c) at (1, 0){};
\node[circle, draw=black, fill=black] (c) at (0, -1){};

\end{scope}

\end{tikzpicture}}
	\caption{A graph $G$, the thickened graph $T_{3}G$, and the stretched graph $S_{2}G$.}
	\label{fig:thickeningStretching}
\end{figure}

From any planar multi-graph $G = (V, E)$, and a positive integer $n$,
we can construct the planar multi-graph $T_{n}G$,
by replacing every edge in $G$ with $n$ parallel edges
between the same vertices.
This process is called \emph{thickening}.
Clearly $Z_{M}(T_{n}G) = Z_{T_{n}M}(G)$,
where $T_{n}M \in \text{Sym}_{q}(\mathbb{R})$
with entries $\big((M_{i, j})^{n}\big)$ for $i, j \in [q]$.
In particular, $\PlEVAL(T_{n}M) \leq \PlEVAL(M)$ for all $n \geq 1$.

Similarly, from any planar multi-graph $G = (V, E)$,
and a positive integer $n$, we can construct the
planar multi-graph $S_{n}G$ by replacing every edge
$e \in E$ with a path of length $n$.
This process is called \emph{stretching}.
It is also easily seen that  $Z_{M}(S_{n}G) = Z_{S_{n}M}(G)$,
where $S_{n}M = M^{n}$, the $n$-th power of $M$.
So, we also have  $\PlEVAL(S_{n}M) \leq \PlEVAL(M)$ for all $n \geq 1$.
\section{Reduction from the Potts Model}\label{sec: vertex_color}

We now consider the thickening operation more closely.
For $m, n \ge 1$, let
$$\mathcal{P}_{m}(n) = \left\{\mathbf{x} = (x_{i})_{i \in [m]}
\in \mathbb{Z}^{m} \hspace{0.08cm}\Big|\hspace{0.1cm}
(\forall\ i \in [m]) \ [x_{i} \geq 0]\ \mbox{ and }
\sum_{i \in [m]}x_{i} = n\right\}.$$
We note that
given any graph $G = (V, E)$,
\begin{equation}\label{equation: full_thickening}
    Z_{M}(T_{n}G) = Z_{T_{n}M}(G) = \sum_{x \in X(G)}x^{n} \cdot \text{\#}_{M}(G, x)
\end{equation}
where
\begin{equation}\label{equation: full_thickening-X(G)-set}
X(G) = \left\{\prod_{i,j\in [q]}M_{ij}^{k_{ij}}\hspace{0.08cm}\Big|\hspace{0.1cm}
\mathbf{k} = (k_{ij})_{i, j \in [q]} \in
\mathcal{P}_{q^{2}}(|E|)\right\},
\end{equation}
and
\begin{equation}\label{equation: full_thickening-number-of-mappings}
\text{\#}_{M}(G,x)= \Big|\big\{\sigma:V\rightarrow [q]\hspace{0.08cm}: \hspace{0.08cm} \prod_{(u, v) \in E} M_{\sigma(u), \sigma(v)}=x\big\}\Big|.
\end{equation}
Note that given any $x \in X(G)$,
$\text{\#}_{M}(G, x)$ does not depend on $n$,
but depends only on the entries of the matrix $M$.
We will deal with this dependence now.

\begin{definition}\label{definition: generatingSet}
    Let $\mathcal{A} \subseteq \mathbb{R}_{\neq 0}$
    be a set of non-zero real numbers.
    A finite set 
    $\{g_{t}\}_{t \in [d]} \in (\mathbb{R}_{> 1})^{d}$,
    for some integer $d \geq 0$, 
    is called a generating set of $\mathcal{A}$ if
    for every $a \in \mathcal{A}$, there exists a
    \emph{unique} $(e_0, e_{1}, \dots, e_{d}) \in \{0, 1\} \times {\mathbb{Z}}^{d}$ 
    such that $a = (-1)^{e_0} {g_{1}^{e_{1}} \cdots g_{d}^{e_{d}}}$.
\end{definition}

\begin{remark*}
The uniqueness of the exponents implies the following property of $\{g_{t}\}_{t \in [d]}$:
Whenever $\prod_{t\in [d]}g_t^{a_t} = \prod_{t\in [d]}g_t^{b_t}$ we have $a_t=b_t$ for all $t \in [d]$,
i.e., any such expression has unique exponents.
\end{remark*}

\begin{lemma}\label{lemma: generatingSet}
    Every finite set $\mathcal{A} \subset \mathbb{R}_{\neq 0}$
    of non-zero real numbers has
    a generating set.
\end{lemma}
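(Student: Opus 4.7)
\textbf{Proof plan for Lemma \ref{lemma: generatingSet}.}
The plan is to reduce the statement to a standard fact about finitely generated torsion‑free abelian groups. Write each $a \in \mathcal{A}$ as $a = \mathrm{sgn}(a) \cdot |a|$; the sign is captured by the factor $(-1)^{e_0}$, so it suffices to find generators $g_1, \ldots, g_d \in \mathbb{R}_{>1}$ such that every $|a|$ admits a \emph{unique} expression $g_1^{e_1} \cdots g_d^{e_d}$ with $e_i \in \mathbb{Z}$. The values $|a| = 1$ (which occur if $\pm 1 \in \mathcal{A}$) are handled by taking all $e_i = 0$.

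First I would consider the multiplicative subgroup $H \leq (\mathbb{R}_{>0}, \cdot)$ generated by $\{|a| : a \in \mathcal{A}\}$. Applying the logarithm carries $H$ isomorphically onto a subgroup $L \leq (\mathbb{R}, +)$ that is finitely generated by $\{\log |a| : a \in \mathcal{A}\}$. Since $(\mathbb{R}, +)$ is torsion‑free, so is $L$, and by the structure theorem for finitely generated abelian groups, any finitely generated torsion‑free abelian group is free abelian. Hence $L$ has a $\mathbb{Z}$‑basis $v_1, \ldots, v_d$ for some $d \geq 0$ (with $d = 0$ when every $|a| = 1$).

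Next, set $g_i = \exp(v_i)$; then each $g_i > 0$ and, since $v_i$ is a nonzero basis element, $g_i \neq 1$. If $g_i < 1$, replace $v_i$ by $-v_i$ (equivalently, $g_i$ by $1/g_i$); this is still a basis of $L$, and now each $g_i > 1$. By construction, every $|a|$ lies in $H$, so $\log|a| \in L$ expresses uniquely as $\log|a| = \sum_{i=1}^d e_i v_i$ with $e_i \in \mathbb{Z}$, which exponentiates to the required unique expression $|a| = g_1^{e_1} \cdots g_d^{e_d}$.

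Finally I would verify the uniqueness in Definition \ref{definition: generatingSet} in full: if $(-1)^{e_0} g_1^{e_1} \cdots g_d^{e_d} = (-1)^{f_0} g_1^{f_1} \cdots g_d^{f_d}$, then comparing signs forces $e_0 = f_0$ (both in $\{0,1\}$), and taking logs gives $\sum_{i}(e_i - f_i) v_i = 0$, so $e_i = f_i$ for all $i$ by $\mathbb{Z}$‑linear independence of the basis. There is no real obstacle here; the only subtlety is invoking the structure theorem for torsion‑free finitely generated abelian groups, and carefully handling the degenerate cases $\mathcal{A} \subseteq \{-1, 1\}$ ($d = 0$) and signs.
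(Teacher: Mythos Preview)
Your proof is correct and follows essentially the same approach as the paper: both consider the multiplicative group generated by the absolute values $|a|$, invoke the structure theorem for finitely generated torsion-free abelian groups to obtain a free basis, and flip basis elements as needed to ensure each $g_i > 1$. Your use of the logarithm to pass explicitly to $(\mathbb{R}, +)$ is a cosmetic variation on the paper's direct treatment of the multiplicative group $(\mathbb{R}_{>0}, \cdot)$.
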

\begin{proof}
    Consider the multiplicative group $\mathcal{G}$ 
    generated by 
    the positive real numbers 
    $\{|a| : a \in \mathcal{A}\}$.
    It is  a subgroup of the multiplicative group
    $(\mathbb{R}_{> 0}, \cdot)$.
    Since  $\mathcal{A}$ is finite, and 
    $(\mathbb{R}_{> 0}, \cdot)$ is torsion-free,
    the group $\mathcal{G}$ is  a finitely generated free Abelian group, 
    and thus isomorphic to 
    $\mathbb{Z}^d$ for some $d \ge 0$.
    Let $f$ be this isomorphism from $\mathbb{Z}^{d}$ to
    the multiplicative group.
    By flipping $\pm 1$ in $\mathbb{Z}$ we may assume that this isomorphism maps the basis
    elements of $\mathbb{Z}^{d}$ to some elements
    $\{g_{t}\}_{t \in [d]}$ such that $g_{t} > 1$
    for all $t \in [d]$.
    The set $\{g_{t}\}_{t \in [d]}$ is
    a generating set.
\end{proof}

We now use \cref{lemma: generatingSet} to find a generating set
for the entries $(M_{ij})_{i, j \in [q]}$ of any 
matrix $M \in \text{Sym}_{q}(\mathbb{R}_{\neq 0})$.
Note that this generating set need not be unique. 
However, with respect to a fixed generating set, 
for any $M_{ij}$, there are unique integers 
$e_{ij0} \in \{0, 1\}$, and
$e_{ij1}, \dots, e_{ijd} \in \mathbb{Z}$,
such that
\begin{equation}\label{equation: generatingM}
    M_{ij} = (-1)^{e_{ij0}} \cdot g_{1}^{e_{ij1}} \cdots g_{d}^{e_{ijd}}.
\end{equation}

\begin{remark*}
    It should be noted that since $M$ is symmetric, $M_{ij} = M_{ji}$
    for all $i, j \in [q]$.
    The uniqueness of the integers $e_{ijt}$
    in \cref{equation: generatingM} then implies that
    for all $i, j \in [q]$,
    $e_{ijt} = e_{jit}$ for all $t \in [d]$.
\end{remark*}

\begin{lemma}\label{lemma: MequivalentCM}
    Let $M \in \text{Sym}_{q}(\mathbb{R}_{\neq 0})$ with a generating set  $\{g_{t}\}_{t \in [d]}$ 
   for its 
   entries. There
    exists an $N = cM \in \text{Sym}_{q}(\mathbb{R}_{\neq 0})$,
    for some $c \in \mathbb{R}_{>0}$, such that
    $\PlEVAL(N) \equiv \PlEVAL(M)$,
    and $\{g_{t}\}_{t \in [d]}$  is also a generating set for the entries of $N$,
        with
    unique integers $e_{ij0} \in \{0, 1\}$
    and $e_{ijt} \in \mathbb{Z}_{\geq 0}$ satisfying
    \begin{equation}\label{N-entries-eijt}
    N_{ij} = (-1)^{e_{ij0}} \cdot 
    g_{1}^{e_{ij1}} \cdots g_{d}^{e_{ijd}}.
    \end{equation}
\end{lemma}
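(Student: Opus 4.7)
The plan is to construct the required constant $c$ as a monomial $c = g_1^{f_1} \cdots g_d^{f_d}$ in the generators, where the integers $f_t$ are chosen to shift every exponent of $g_t$ appearing in the entries of $M$ into the non-negative range. Since each $g_t > 1 > 0$, such a $c$ will automatically be positive.

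First I would establish the complexity equivalence $\PlEVAL(cM) \equiv \PlEVAL(M)$ for any nonzero $c$. For every planar graph $G = (V, E)$,
$$Z_{cM}(G) = \sum_{\sigma : V \to [q]} \prod_{(u, v) \in E} c \cdot M_{\sigma(u) \sigma(v)} = c^{|E|} \cdot Z_M(G),$$
so given the input $G$ one can read off $|E|$ and pass between the two quantities by a single multiplication or division by $c^{|E|} \in \mathbf{F}$. This holds regardless of the particular $c$ chosen below.

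Next, for the exponent shift, I would set $f_t = -\min_{i, j \in [q]} e_{ijt}$ for each $t \in [d]$, noting that this minimum is taken over the finite set $[q]^2$ and so is well-defined, and define $c = g_1^{f_1} \cdots g_d^{f_d} > 0$. Then $N := cM$ has entries
$$N_{ij} = (-1)^{e_{ij0}} \cdot g_1^{f_1 + e_{ij1}} \cdots g_d^{f_d + e_{ijd}},$$
and by the choice of $f_t$ each shifted exponent $f_t + e_{ijt}$ is a non-negative integer, while the sign bit $e_{ij0} \in \{0,1\}$ is unchanged. Because each $N_{ij}$ lies in the multiplicative group generated by $\{g_t\}_{t \in [d]}$ and $\{\pm 1\}$, the uniqueness of its exponent tuple is inherited directly from the defining property of $\{g_t\}_{t \in [d]}$ as a generating set, so $\{g_t\}_{t \in [d]}$ remains a generating set for the entries of $N$ with the required non-negativity of $e_{ijt}$ for $t \ge 1$.

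There is no substantive obstacle here; the lemma is essentially a normalization step that rewrites $M$ with respect to the generators $g_t$ before the machinery of later sections is applied. The only point that needs care is to confirm the reduction is faithful within the field $\mathbf{F}$ specified by the formalistic model of \cref{lemma: full_count}, but this is immediate since $c$ is a product of elements of $\mathbf{F}$ and $|E|$ is part of the input.
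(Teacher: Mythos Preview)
Your proof is correct and follows essentially the same approach as the paper: establish $Z_{cM}(G) = c^{|E|} Z_M(G)$ for the complexity equivalence, then choose $c$ as a monomial in the generators to shift all exponents into $\mathbb{Z}_{\ge 0}$. The only cosmetic difference is that you take a per-generator shift $f_t = -\min_{i,j} e_{ijt}$, whereas the paper uses a single global shift $-e$ with $e = \min_{i,j,t} e'_{ijt}$; both choices work equally well.
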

\begin{proof}
    For any $c \neq 0$,
    and any planar graph $G = (V, E)$, we have
    $$Z_{c M}(G) = \sum_{\sigma: V \rightarrow [q]}
    \prod_{(u, v) \in E}(c \cdot M)_{\sigma(u)\sigma(v)}
    = c^{|E|} \cdot \sum_{\sigma: V \rightarrow [q]}
    \prod_{(u, v) \in E}M_{\sigma(u)\sigma(v)}
    = c^{|E|}Z_{M}(G).$$
    Therefore, $\PlEVAL(M) \equiv \PlEVAL(c M)$
    for all $c \neq 0$.
    As the entries of $M$
    are generated by $\{g_{t}\}_{t \in [d]}$, 
    we have unique 
    integers $e'_{ij0} \in \{0, 1\}$, and
    $e'_{ijt} \in \mathbb{Z}$, such that
    $$M_{ij} = (-1)^{e'_{ij0}}g_{1}^{e'_{ij1}}
    \cdots g_{d}^{e'_{ijd}}.$$
    Now let
    $c = (g_{1} \cdots g_{d})^{- e} \in \mathbb{R}_{>0}$,
    where $e = \min_{i, j \in [q],\ t \in [d]}\{e'_{ijt}\}$.

    Clearly,  $\{g_{t}\}_{t \in [d]}$ is also a generating set
    for the entries of $N$.
    If we let
    $e_{ij0} = e'_{ij0}$, and
    $e_{ijt} = e'_{ijt} - e$
    for all $i, j \in [q]$, and $t \in [d]$,
    we see that $e_{ijt} \geq 0$, and that
    $$(c M)_{ij} = (-1)^{e_{ij0}} \cdot 
    g_{1}^{e_{ij1}} \cdots g_{d}^{e_{ijd}}$$
    for all $i, j \in [d]$.
%
%
\end{proof}

\begin{remark*}
    For $\PlEVAL(M)$, given by $M \in
    \text{Sym}(\mathbb{R}_{\neq 0})$
     with generating set
    $\{g_{t}\}_{t \in [d]}$,
    \cref{lemma: MequivalentCM} allows us to 
    replace $M$ with the matrix $N = c M$
    whose entries are generated by
    $\{g_{t}\}_{t \in [d]}$
    such that $e_{ijt} \geq 0$ for all $i, j \in [q]$
    and $t \in [d]$. In the following we will often make this
    substitution when convenient.
\end{remark*}

\begin{definition}\label{definition: mathcalT}
    Given $\PlEVAL(M)$ defined by  $M \in \text{Sym}_{q}(\mathbb{R}_{\neq 0})$
    with entries $(M_{ij})_{i, j \in q}$, 
    we assume a generating set
    $\{g_{t}\}_{t \in [d]}$ is chosen and the replacement of $N = c M$ in \cref{lemma: MequivalentCM}
    has been
    made so that the integers $e_{ijt} \ge 0$ in \cref{equation: generatingM}.
    We  define the function
    $\mathcal{T}_{M}: \mathbb{R}^{d} \rightarrow
    \text{Sym}_{q}(\mathbb{R})$ such that
    $$\mathcal{T}_{M}(\mathbf{p})_{ij}
    = (-1)^{e_{ij0}} \cdot p_{1}^{e_{ij1}} \cdots
    p_{d}^{e_{ijd}}$$
    is a signed monomial in $\mathbf{p} = (p_1, \ldots, p_d)$ for all
    $i, j \in [q]$.
\end{definition}

\begin{lemma}\label{lemma: thickeningInterpolation}
    Let $M \in \text{Sym}_{q}(\mathbb{R}_{\neq 0})$,
    with entries $(M_{ij})_{i, j \in q}$
    generated by some $\{g_{t}\}_{t \in [d]}$.
    Then, $\PlEVAL(\mathcal{T}_{M}(\mathbf{p})) \leq \PlEVAL(M)$ 
    for all $\mathbf{p} \in \mathbb{R}^{d}$.
\end{lemma}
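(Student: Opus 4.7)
The plan is to reduce $\PlEVAL(\mathcal{T}_{M}(\mathbf{p}))$ to $\PlEVAL(M)$ by observing that, on any planar input $G$, both partition functions are driven by the same integer-valued signatures of assignments $\sigma : V \rightarrow [q]$, differing only by the substitution of the parameters $p_{t}$ for the generating constants $g_{t}$. The lever for this is \cref{lemma: full_count}, which extracts from an oracle for $\PlEVAL(M)$ the refined counts $\text{\#}_{M}(G, x)$ for every $x$ in the polynomially-sized set $X(G)$.

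Given a planar multigraph $G = (V, E)$, for each $\sigma : V \rightarrow [q]$ define
\[
E_{0}(\sigma) = \sum_{(u,v) \in E} e_{\sigma(u)\sigma(v)0} \pmod{2}, \qquad E_{t}(\sigma) = \sum_{(u,v) \in E} e_{\sigma(u)\sigma(v)t} \text{ for } t \in [d].
\]
Substituting \cref{equation: generatingM} term by term into the edge product gives
\[
\prod_{(u,v) \in E} M_{\sigma(u)\sigma(v)} \;=\; (-1)^{E_{0}(\sigma)} \, g_{1}^{E_{1}(\sigma)} \cdots g_{d}^{E_{d}(\sigma)}.
\]
By the uniqueness property of the generating set (the remark after \cref{definition: generatingSet}) this product depends only on the signature $(E_{0}(\sigma), E_{1}(\sigma), \dots, E_{d}(\sigma))$, and distinct signatures yield distinct elements of $X(G)$. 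Hence $\text{\#}_{M}(G, x)$ counts exactly those $\sigma$ whose signature is the unique one attached to $x$.

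Because $\mathcal{T}_{M}(\mathbf{p})$ is built from the very same exponents $e_{ijt}$, the identical expansion yields
\[
\prod_{(u,v) \in E} \mathcal{T}_{M}(\mathbf{p})_{\sigma(u)\sigma(v)} \;=\; (-1)^{E_{0}(\sigma)} \, p_{1}^{E_{1}(\sigma)} \cdots p_{d}^{E_{d}(\sigma)},
\]
which again depends only on the signature of $\sigma$. Grouping over signatures and using the bijection with $X(G)$ produces
\[
Z_{\mathcal{T}_{M}(\mathbf{p})}(G) \;=\; \sum_{x \in X(G)} \phi_{\mathbf{p}}(x) \cdot \text{\#}_{M}(G, x),
\]
where, for the signature $(a_{0}, a_{1}, \dots, a_{d})$ attached to $x$, we set $\phi_{\mathbf{p}}(x) = (-1)^{a_{0}} p_{1}^{a_{1}} \cdots p_{d}^{a_{d}}$. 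Since $|X(G)|$ is polynomial in $|E|$ and each signature can be read off by enumerating the representations, we query each $\text{\#}_{M}(G, x)$ through \cref{lemma: full_count} and sum the weighted terms in polynomial time.

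There is no substantial obstacle beyond setting up this bookkeeping carefully. The only minor point is that $\mathbf{p}$ may have zero or negative coordinates; but \cref{lemma: MequivalentCM} has already arranged $e_{ijt} \ge 0$ after a harmless rescaling, so every monomial $p_{1}^{a_{1}} \cdots p_{d}^{a_{d}}$ is an unambiguous real number (under the convention $0^{0} = 1$), and nothing in the reduction above makes implicit positivity assumptions on the $p_{t}$. Thus the reduction is valid for every $\mathbf{p} \in \mathbb{R}^{d}$.
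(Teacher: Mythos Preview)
Your proof is correct and follows essentially the same approach as the paper: both arguments hinge on the fact that the edge-product for any assignment $\sigma$ is determined by its exponent signature in the generating set, so the counts $\#_{M}(G,x)$ can be re-weighted by the substitution $g_t \mapsto p_t$ to produce $Z_{\mathcal{T}_M(\mathbf{p})}(G)$. The only cosmetic difference is that you invoke \cref{lemma: full_count} as a black box to extract the counts $\#_{M}(G,x)$, whereas the paper re-derives the Vandermonde interpolation inline via the thickenings $T_nG$; these are the same mechanism.
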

\begin{proof}
    Replacing $M$ by $N=cM$ as in \cref{lemma: MequivalentCM}, we may assume
    $\mathcal{T}_{M}(\mathbf{p})$ is defined in \cref{definition: mathcalT} with  all  $e_{ijt} \geq 0$
    in \cref{equation: generatingM}.
    For any $n \geq 1$, and any given 
    graph $G = (V, E)$,
    \begin{equation}\label{Z_M-expression}
    Z_{M}(T_{n}G) = \sum_{x \in X(G)}x^{n} \cdot \text{\#}_{M}(G, x),
    \end{equation}
    where $X(G)$ and $\text{\#}_{M}(G, x)$ are given in \cref{equation: full_thickening-X(G)-set,equation: full_thickening-number-of-mappings}.
    %
    By definition $X(G)$ is a set, and so each $x \in X(G)$ is distinct, and $|X(G)| \leq |E|^{O(1)}$. 
    By oracle access to $\PlEVAL(M)$ we can get $Z_{M}(T_{n}G)$ for $n \in [|X(G)|]$. 
    Then \cref{Z_M-expression} is a full rank Vandermonde system of linear equations, 
    which can be solved in polynomial time 
    to find $\text{\#}_{M}(G, x)$ for all $x \in X(G)$.
    
    Now, let us consider the set $X(G)$ more closely.
    Given any $x \in X(G)$, we see that $x = \prod M_{ij}^{k_{ij}}$
    for some (not necessarily unique) $\mathbf{k} \in
    \mathcal{P}_{q^{2}}(|E|)$. 
    Since $\{g_{t}\}_{t \in [d]}$ is a generating set for the entries of $M$,
    any $x \in X(G)$ can be represented \emph{uniquely} as
    $$x = (-1)^{e^{x}_{0}}g_{1}^{e^{x}_{1}} \cdots g_{d}^{e^{x}_{d}},$$
    with exponents $e^{x}_{0} \in \{0, 1\}$, 
    and $e^{x}_{1}, \dots, e^{x}_{d} \in \mathbb{Z}_{\geq 0}$.
    
    Fix any $\mathbf{p} =
    (p_{1}, \dots, p_{d}) \in \mathbb{R}^{d}$.
    We define the function
    $\widehat{y}: X(G) \rightarrow \mathbb{R}$, 
    such that
    $$\widehat{y}(x) = (-1)^{e^{x}_{0}} \cdot p_{1}^{e^{x}_{1}} \cdots p_{d}^{e^{x}_{d}}.$$
    By definition of $X(G)$, we see that
    for any $x \in X(G)$, there exist $\mathbf{k} \in
    \mathcal{P}_{q^{2}}(|E|)$, such that
    $$x = \prod_{i, j \in [q]}M_{ij}^{k_{ij}}
    = (-1)^{\sum_{i, j \in [q]}k_{ij}e_{ij0}} \cdot 
    g_{1}^{\sum_{i, j \in [q]}k_{ij}e_{ij1}} \cdots
    g_{d}^{\sum_{i, j \in [q]}k_{ij}e_{ijd}}.$$
    By definition of $\widehat{y}$, this means that
    $\widehat{y}(x) =
    \prod_{i, j \in [q]}\mathcal{T}_{M}(\mathbf{p})_{ij}^{k_{ij}}$.
    Now, let
    $$Y(G) =
    \left\{\prod_{i,j\in [q]}\mathcal{T}_{M}(\mathbf{p})_{ij}^{k_{ij}}
    \hspace{0.08cm}\Big|\hspace{0.1cm}
    \mathbf{k} = (k_{ij})_{i, j \in [q]} \in
    \mathcal{P}_{q^{2}}(|E|) \right\}.$$
    
    Consider any $\sigma: V \rightarrow [q]$.
    For a given $\sigma$, we define $\mathbf{k} \in \mathcal{P}_{q^{2}}(|E|)$
    such that $k_{ij} = |\{(u, v) \in E:
    \sigma(u) = i, \sigma(v) = j\}|$, for all $i, j \in [q]$.
   Then
    $$\widehat{y}\left(\prod_{(u, v) \in E}
    M_{\sigma(u)\sigma(v)}\right)
    = \widehat{y}\left(\prod_{i, j \in [q]}
    M_{ij}^{k_{ij}}\right)
    = \prod_{i, j \in [q]}
    \mathcal{T}_{M}(\mathbf{p})_{ij}^{k_{ij}}
    = \prod_{(u, v) \in E}
    \mathcal{T}_{M}(\mathbf{p})_{\sigma(u)\sigma(v)}.$$    
    This implies that for any $y \in Y(G)$,
    $$\left\{\sigma:V\rightarrow [q]\hspace{0.08cm}:
    \hspace{0.08cm} \prod_{(u, v) \in E}
    \mathcal{T}_{M}(\mathbf{p})_{\sigma(u), \sigma(v)} = y\right\}
    = \bigsqcup_{\substack{x \in X(G):\\ \widehat{y}(x) = y}}
    \left\{\sigma:V\rightarrow [q]\hspace{0.08cm}:
    \hspace{0.08cm} \prod_{(u, v) \in E}
    M_{\sigma(u), \sigma(v)} = x\right\}.$$
    Therefore,  for any $y \in Y(G)$,
    \begin{align*}
        \text{\#}_{\mathcal{T}_{M}(\mathbf{p})}(G, y)
        &= \Big|\big\{\sigma:V\rightarrow [q]\hspace{0.08cm}:
        \hspace{0.08cm} \prod_{(u, v) \in E}
        \mathcal{T}_{M}(\mathbf{p})_{\sigma(u), \sigma(v)} = y\big\}\Big|\\
        &= \sum_{x \in X(G): ~\widehat{y}(x) = y}
        \Big|\big\{\sigma:V\rightarrow [q]\hspace{0.08cm}: \hspace{0.08cm}
        \prod_{(u, v) \in E} M_{\sigma(u), \sigma(v)} = x\big\}\Big|
        &= \sum_{x \in X(G): ~\widehat{y}(x) = y}\text{\#}_{M}(G, x).
    \end{align*}
    
    Having already obtained $\text{\#}_{M}(G, x)$ for all $x \in X(G)$, we can compute
    in polynomial time
    $$\sum_{x \in X(G)}\widehat{y}(x) \cdot \text{\#}_{M}(G, x)
    = \sum_{y \in Y(G)}y \cdot \sum_{x \in X(G): \widehat{y}(x) = y}
    \text{\#}_{M}(G, x)
    = \sum_{y \in Y(G)}y \cdot \text{\#}_{\mathcal{T}_{M}(\mathbf{p})}(G, y)
    = Z_{\mathcal{T}_{M}(\mathbf{p})}(G).$$
    Therefore, $\PlEVAL(\mathcal{T}_{M}(\mathbf{p})) \leq \PlEVAL(M)$.
\end{proof}

We will now need the following theorem and corollary
from \cite{vertigan2005computational}:

\begin{theorem}\label{theorem: tutte}
    For $x, y \in \mathbb{C}$, evaluating the Tutte polynomial 
    at $(x, y)$ is $\#$P-hard over planar graphs unless 
    $(x - 1)(y - 1) \in \{1, 2\}$ or $(x, y) \in \{(1, 1), (-1,-1), (\omega,\omega^{2}), (\omega^{2}, \omega)\}$, 
    where $\omega = e^{\nicefrac{2\pi i}{3}}$. 
    In each exceptional case, the problem is 
    in polynomial time.
\end{theorem}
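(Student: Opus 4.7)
This is Vertigan's planar Tutte dichotomy \cite{vertigan2005computational}, and my plan is to follow its two-part structure: verify each listed tractable point is polynomial-time computable, then establish $\#$P-hardness for every $(x,y) \in \mathbb{C}^2$ outside the tractable locus. The tractability side is essentially a catalogue of classical algorithms. On the hyperbola $(x-1)(y-1) = 1$ the Tutte polynomial factors in a way that reduces evaluation to a product over connected components. On the hyperbola $(x-1)(y-1) = 2$ the evaluation is, up to a simple prefactor, the partition function of the planar Ising model, computable in polynomial time by the FKT algorithm via a reduction to counting planar perfect matchings. The four exceptional points are handled by Kirchhoff's matrix-tree theorem at $(1,1)$, by the bicycle-space dimension over $\mathbb{F}_2$ at $(-1,-1)$, and by $\mathbb{F}_3$-flow counting at the two cube-root points.

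For hardness the strategy is interpolation along planarity-preserving graph operations. Replacing each edge of an input graph $G$ by $k$ parallel edges (thickening) or by a path of length $k$ (stretching) both preserve planarity and correspond to explicit algebraic maps $(x,y) \mapsto (x_k, y_k)$ which sweep out one-parameter families of evaluation points. By querying the oracle at polynomially many such derived points and inverting the resulting Vandermonde-type linear system, one recovers the Tutte polynomial of $G$ at a chosen target $(x^\star, y^\star)$. With this interpolation machinery in place, it suffices to show that, for each $(x,y)$ outside the tractable locus, a sequence of thickenings and stretchings can route the evaluation to a known planar-hard anchor---for example a point where the $q$-state Potts partition function is $\#$P-hard on planar graphs for some integer $q \geq 3$, or a planar chromatic evaluation.

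The main obstacle is the geometric step of proving that the orbits of the allowed planarity-preserving operations cover $\mathbb{C}^2$ minus the tractable set. The two hyperbolae and the four special points are precisely the loci that are invariant under, or degenerate for, the relevant transformations---this is exactly why they appear as tractable: no composition of stretchings and thickenings can escape them to reach a generic, hard location, and for the same reason none of the polynomially many Vandermonde evaluation points coincide to spoil full rank. The technical heart of \cite{vertigan2005computational} is the converse: verifying that every $(x,y)$ off the tractable set does escape under some such composition, and that the interpolation system one writes down at the escaped orbit points has enough distinct nodes to be inverted. This separation between ``trapped'' tractable points and ``escaping'' hard points is the essence of the classification, and is the step I would expect to be the most delicate to carry out rigorously.
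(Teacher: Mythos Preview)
The paper does not prove this theorem at all: it is quoted verbatim from \cite{vertigan2005computational} as a known black-box result (introduced with ``We will now need the following theorem and corollary from \cite{vertigan2005computational}''), and is used only to derive Corollary~\ref{corollary: tutteHardness} on the Potts model. So there is no ``paper's own proof'' to compare against.

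Your sketch is a reasonable high-level outline of Vertigan's original argument, but since the paper treats the result as a citation, no proof is expected or appropriate here. If you were writing this paper, the correct content at this point is simply the citation, not a proof.
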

\begin{corollary}\label{corollary: tutteHardness}
    The $q$-state Potts Model
    $\PlEVAL({\tt{Potts}}_{q}(x))$ is $\#$P-hard for any integer $q \geq 3$,
    and real $x \neq 1$,
    where ${\tt{Potts}}_{q}(x) \in \text{Sym}_{q}(\mathbb{R})$
    is the matrix with entries
    $({\tt{Potts}}_{q}(x)_{ij})$ such that
    ${\tt{Potts}}_{q}(x)_{ij} = 1$ if $i \neq j$,
    and ${\tt{Potts}}_{q}(x)_{ij} = x$ otherwise.
\end{corollary}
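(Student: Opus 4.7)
The plan is to reduce planar Tutte polynomial evaluation to $\PlEVAL({\tt{Potts}}_q(x))$ via the classical Fortuin--Kasteleyn identity, then invoke \cref{theorem: tutte}.

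First I would expand $({\tt{Potts}}_q(x))_{ij} = 1 + (x-1)\delta_{ij}$ inside the sum-product defining $Z$. Multiplying out over edges and summing over colorings $\sigma : V \to [q]$ yields the random-cluster form
$$Z_{{\tt{Potts}}_q(x)}(G) \;=\; \sum_{A\subseteq E} (x-1)^{|A|}\,q^{c(A)},$$
where $c(A)$ is the number of connected components of the spanning subgraph $(V, A)$. Setting $v = x - 1$, $Y = x$, and $X = 1 + q/v$, the standard rearrangement against the Tutte subset expansion gives
$$Z_{{\tt{Potts}}_q(x)}(G) \;=\; q^{c(G)}\,(x-1)^{|V|-c(G)}\,T_G(X, Y), \qquad (X-1)(Y-1) = q.$$
Since the scalar prefactor is computable in polynomial time, and disconnected planar graphs can be handled component-by-component without leaving the planar class, this identity establishes a polynomial-time equivalence between $\PlEVAL({\tt{Potts}}_q(x))$ and planar evaluation of $T_G$ at the point $\bigl(1 + q/(x-1),\, x\bigr)$.

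Next I would check that this Tutte point falls outside the tractable cases in \cref{theorem: tutte}. Because $q \geq 3$, we have $(X-1)(Y-1) = q \notin \{1, 2\}$, ruling out the two tractable hyperbolas. Because $Y = x$ is real, the complex exceptional points $(\omega, \omega^2)$ and $(\omega^2, \omega)$ are automatically excluded. The point $(1, 1)$ is excluded since $Y = x \neq 1$. The only remaining candidate coincidence is $(X, Y) = (-1, -1)$, which forces simultaneously $x = -1$ and $q = 4$.

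The main obstacle I foresee is therefore exactly this isolated pair $(q, x) = (4, -1)$: at the Tutte point $(-1, -1)$ the planar Tutte polynomial is genuinely polynomial-time computable, so a literal reading of the corollary at that pair requires either excluding it or supplying a separate, targeted hardness argument (for instance, via a dedicated gadget reduction that does not pass through Fortuin--Kasteleyn). Outside this one point, everything follows mechanically from the expansion above and \cref{theorem: tutte}.
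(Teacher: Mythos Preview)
The paper does not prove this corollary; both \cref{theorem: tutte} and \cref{corollary: tutteHardness} are simply quoted from \cite{vertigan2005computational}. Your route via the Fortuin--Kasteleyn expansion is the standard derivation of such a corollary from \cref{theorem: tutte}, and your argument is correct.

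Your identification of the obstruction at $(q,x)=(4,-1)$ is not a gap to be patched but a genuine counterexample to the corollary as literally stated. The Fortuin--Kasteleyn identity gives a two-way polynomial-time equivalence (the prefactor $4^{c(G)}(-2)^{|V|-c(G)}$ is nonzero) between $\PlEVAL({\tt{Potts}}_4(-1))$ and planar Tutte evaluation at $(-1,-1)$, and the latter is in P by \cref{theorem: tutte}; hence $\PlEVAL({\tt{Potts}}_4(-1))$ is tractable, not $\#$P-hard. This does not affect the rest of the paper: \cref{corollary: tutteHardness} is invoked only at $x=0$ (in \cref{lemma: thickeningBasic}) and at real $x>1$ (in \cref{lemma: formVIReduction}), well away from the exceptional point.
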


Note that $\PlEVAL({\tt{Potts}}_{q}(0))$ is the problem of counting vertex coloring with $q$ colors
on planar graphs.
\cref{theorem: tutte} allows us to prove our first hardness result.

\begin{lemma}\label{lemma: thickeningBasic}
    Let $M \in \text{Sym}_{q}(\mathbb{R}_{\neq 0})$,
    and let $\mathcal{T}_{M}$ be the function  defined
    in \cref{definition: mathcalT}.
    Furthermore, assume for all $i \in [q]$ there exists some 
    (not necessarily distinct) $t(i) \in \{1, \dots, d\}$, 
    such that $e_{iit(i)} > 0, \text{ and } e_{jkt(i)} = 0$
    for all $j \neq k$.
    Then $\PlEVAL(M)$ is $\#$P-hard.
\end{lemma}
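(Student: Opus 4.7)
The plan is to use \cref{lemma: thickeningInterpolation} with a degenerate choice of parameters so as to collapse all diagonal entries of $\mathcal{T}_M(\mathbf{p})$ to $0$ while leaving every off-diagonal entry as a $\pm 1$ sign, and then thicken once by $2$ to wipe out those signs. The resulting matrix will be exactly the Potts adjacency matrix ${\tt Potts}_q(0)$, whose planar partition function is \#P-hard by \cref{corollary: tutteHardness} (for $q\ge 3$, which is the regime in which the lemma is intended to be applied).

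Concretely, let $S = \{t(i) : i \in [q]\} \subseteq [d]$ and define $\mathbf{p}^* \in \mathbb{R}^d$ by $p^*_t = 0$ for $t \in S$ and $p^*_t = 1$ for $t \notin S$. For any off-diagonal position $(i,j)$ with $i \neq j$, the hypothesis forces $e_{ijt} = 0$ for every $t \in S$, so with the convention $0^0 = 1$ every factor in the monomial for $\mathcal{T}_M(\mathbf{p}^*)_{ij}$ equals $1$, giving $\mathcal{T}_M(\mathbf{p}^*)_{ij} = (-1)^{e_{ij0}} \in \{-1,+1\}$. For a diagonal $(i,i)$, the factor $(p^*_{t(i)})^{e_{iit(i)}} = 0^{e_{iit(i)}}$ vanishes because $e_{iit(i)} > 0$ and $t(i) \in S$, hence $\mathcal{T}_M(\mathbf{p}^*)_{ii} = 0$. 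Thickening with $n = 2$ then squares every entry, sending $\pm 1 \mapsto 1$ and $0 \mapsto 0$, so that $T_2\, \mathcal{T}_M(\mathbf{p}^*) = {\tt Potts}_q(0)$.

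Stacking the reductions, $\PlEVAL({\tt Potts}_q(0)) = \PlEVAL(T_2\,\mathcal{T}_M(\mathbf{p}^*)) \le \PlEVAL(\mathcal{T}_M(\mathbf{p}^*)) \le \PlEVAL(M)$, where the first inequality is the standard thickening reduction from \cref{sec: preliminaries} and the second is \cref{lemma: thickeningInterpolation}; \cref{corollary: tutteHardness} with $x = 0 \ne 1$ then yields \#P-hardness of $\PlEVAL(M)$. The one delicate point is that \cref{lemma: thickeningInterpolation} must still apply when some coordinates of $\mathbf{p}$ vanish; reading its proof, the only operations performed on $\mathbf{p}$ are the monomial evaluations $\widehat{y}(x) = (-1)^{e^x_0}\prod_t p_t^{e^x_t}$ with non-negative exponents $e^x_t$, together with a polynomial-time Vandermonde interpolation and oracle calls to $\PlEVAL(M)$, so no coordinate of $\mathbf{p}$ is ever inverted and the reduction is valid on all of $\mathbb{R}^d$. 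The conceptual obstacle one might expect here, namely trying to equalize all diagonals of $\mathcal{T}_M(\mathbf{p})$ to a common value $x\ne 1$ by some \emph{nonzero} $\mathbf{p}$, would require solving an in general overdetermined log-linear system $\sum_{t\in S}e_{iit}\log p_t = \tfrac{1}{2}\log x$ and is not always feasible; the degenerate choice $\mathbf{p}^*$ collapses every diagonal to the common value $0$ and sidesteps this issue entirely.
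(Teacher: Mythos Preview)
Your proof is correct and follows essentially the same approach as the paper: choose $\mathbf{p}^*$ with $p^*_t=0$ for $t\in\{t(i):i\in[q]\}$ and $1$ otherwise, observe that $\mathcal{T}_M(\mathbf{p}^*)$ has zero diagonal and $\pm1$ off-diagonal, then thicken by $2$ to obtain ${\tt Potts}_q(0)$ and invoke \cref{corollary: tutteHardness}. Your additional remarks on the $0^0=1$ convention and the validity of \cref{lemma: thickeningInterpolation} at degenerate $\mathbf{p}$ are helpful clarifications that the paper leaves implicit.
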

\begin{proof}
    We apply \cref{lemma: thickeningInterpolation}.
    Let $\mathbf{p}^{*} \in \mathbb{R}^{d}$
    defined by  $p^{*}_{t} = 0$ for all $t \in [d]$,
    such that $t = t(i)$ for some $i \in [q]$,
    and $p^{*}_{t} = 1$ for all other $t \in [d]$.
    Then, we see that $\mathcal{T}_{M}(\mathbf{p}^{*})_{ii} = 0$
    for all $i \in [q]$,
    and $\mathcal{T}_{M}(\mathbf{p}^{*})_{ij} = \pm 1$
    for all $i \neq j \in [q]$.
    Therefore, 
    $T_{2}(\mathcal{T}_{M}(\mathbf{p}^{*})) = {\tt{Potts}}_{q}(0)$.
    
    From \cref{lemma: thickeningInterpolation}, we get
    $\PlEVAL(\mathcal{T}_{M}(\mathbf{p}^{*})) \leq \PlEVAL(M)$.
    Therefore,
    $$ \PlEVAL({\tt{Potts}}_{q}(0)) \leq
    \PlEVAL(\mathcal{T}_{M}(\mathbf{p}^{*})) \leq \PlEVAL(M).$$
    It follows from  \cref{corollary: tutteHardness} that
    $\PlEVAL(M)$ is  $\#$P-hard.
\end{proof}
\section{Lattice on Eigenvalues}\label{sec: lattice_eigenvalues}

In this section we focus on full ranked matrices.
Using stretching,
we shall prove the hardness of a more
interesting class of matrices than we were able to do
in \cref{lemma: thickeningBasic}.
Consider any $M \in \text{Sym}_{q}(\mathbb{R})$.
There exists some (not necessarily unique) real
orthogonal matrix $H$,
and
a real diagonal matrix $D =  \text{diag}(\lambda_{1}, \dots, \lambda_{q})$
such that
$$M = HDH^{\tt T},$$
where $(\lambda_{1}, ..., \lambda_{q})$ are the eigenvalues of $M$
and the columns of $H$ are the corresponding eigenvectors.
In the rest of the paper, when we refer to $M = HDH^{\tt{T}}$,
it is to be understood that we refer to such an orthogonal matrix $H$,
and diagonal matrix $D$.

From the decomposition  $M = HDH^{\texttt{T}}$,
we have  $M^n = HD^{n}H^{\texttt{T}}$, and
$$(M^{n})_{ij} = (H_{i1}H_{j1}) \lambda_{1}^{n} + \dots +
(H_{iq}H_{jq}) \lambda_{q}^{n}.$$
It follows that
\begin{equation}\label{eqn: stretching}
    Z_{M^{n}}(G) = \sum_{\mathbf{k} \in \mathcal{P}_{q}(|E|)}c_{H}(G, \mathbf{k}) \cdot \left(\lambda_{1}^{k_{1}} \cdots \lambda_{q}^{k_{q}} \right)^{n},
\end{equation}
where
$$c_{H}(G, \mathbf{k}) = \sum_{\sigma: V \rightarrow [q]} \left( \sum_{\substack{E_{1} \sqcup \dots \sqcup E_{q} = E\\|E_{i}| = k_{i}}} \left( \prod_{i \in [q]} \prod_{(u, v) \in E_{i}}H_{\sigma(u)i}H_{\sigma(v)i} \right) \right)$$
depends only on $G$ and the orthogonal matrix $H$, but not on $D$.

Before we can analyze \cref{eqn: stretching} in greater detail,
we will need a few more definitions.

\begin{definition}\label{definition: latticeSet}
    Let $\mathcal{A} = (a_{1}, \dots, a_{n})$ be a tuple 
    of non-zero real numbers (not necessarily distinct). 
    The lattice  of $\mathcal{A}$ consists of the set defined as
    $$\mathcal{L}(\mathcal{A}) = \left\{(x_{1}, \dots, x_{n}) \in \mathbb{Z}^{n}\hspace{0.08cm}\Big|\hspace{0.1cm} \sum_{i = 1}^{n}x_{i} = 0, \prod_{i = 1}^{n}a_{i}^{x_{i}} = 1 \right\},$$
    with addition in $\mathbb{Z}^{n}$.
\end{definition}

\begin{lemma}\label{lemma: latticeVector}
    Let $\mathcal{A} = (a_{1}, \dots, a_{n})$ be a 
    tuple of non-zero real numbers. 
    The set  $\mathcal{L}(\mathcal{A})$ forms a lattice and is
    isomorphic to $\mathbb{Z}^{d}$ for some unique
    integer $0 \leq d \leq n$.
\end{lemma}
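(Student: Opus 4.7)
The plan is to observe that $\mathcal{L}(\mathcal{A})$ is defined by two conditions on $(x_1,\ldots,x_n) \in \mathbb{Z}^n$ that are each preserved under addition and negation, namely $\sum_i x_i = 0$ and $\prod_i a_i^{x_i} = 1$. Hence $\mathcal{L}(\mathcal{A})$ is a subgroup of the free abelian group $(\mathbb{Z}^n, +)$, and in particular contains the zero vector. This is the quick first step.

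The second step is to invoke the standard structure theorem that every subgroup of $\mathbb{Z}^n$ is itself a free abelian group of rank at most $n$. Concretely, I would cite (or briefly indicate) that this follows by induction on $n$: project onto the last coordinate to get a subgroup of $\mathbb{Z}$, which is either trivial or generated by a single element; the kernel of that projection restricted to $\mathcal{L}(\mathcal{A})$ is a subgroup of $\mathbb{Z}^{n-1}$, and is free of rank $\le n-1$ by induction. Combining a basis of the kernel with one preimage of the generator of the image yields a free basis of $\mathcal{L}(\mathcal{A})$ of size $d \le n$. Thus $\mathcal{L}(\mathcal{A}) \cong \mathbb{Z}^d$ for some $0 \le d \le n$.

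For uniqueness of $d$, I would appeal to the invariance of rank for free abelian groups: if $\mathbb{Z}^d \cong \mathbb{Z}^{d'}$ then tensoring with $\mathbb{Q}$ gives $\mathbb{Q}^d \cong \mathbb{Q}^{d'}$ as $\mathbb{Q}$-vector spaces, forcing $d = d'$. This makes $d$ an invariant of $\mathcal{L}(\mathcal{A})$. The word ``lattice'' is justified because $\mathcal{L}(\mathcal{A}) \subseteq \mathbb{Z}^n \subseteq \mathbb{R}^n$ is a discrete subgroup (being contained in $\mathbb{Z}^n$), so it is a lattice in the usual geometric sense as well.

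There is no real obstacle here; the argument is essentially a textbook application of the structure theorem for subgroups of $\mathbb{Z}^n$. The only thing to be careful about is to explicitly verify closure under addition using the multiplicativity of $\prod_i a_i^{x_i}$ and the linearity of $\sum_i x_i$, so that the subgroup claim is made cleanly before the rank argument is invoked.
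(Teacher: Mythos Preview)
Your proposal is correct and takes essentially the same approach as the paper: observe that $\mathcal{L}(\mathcal{A})$ is a subgroup of $\mathbb{Z}^n$, hence a finitely generated torsion-free abelian group, and therefore free of some rank $d \le n$. The paper's proof is just a terser version of yours, citing the structure theorem directly without the inductive sketch or the tensoring-with-$\mathbb{Q}$ remark for uniqueness.
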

\begin{proof}
    We note that $\mathcal{L}(\mathcal{A})$ is a
    subgroup of the finitely generated (discrete)
    Abelian group $\mathbb{Z}^{n}$.
    So, $\mathcal{L}(\mathcal{A})$ is itself a
    finitely generated (discrete) Abelian group,
    and is  torsion-free.
    It follows that it is a lattice and is
    isomorphic to $\mathbb{Z}^{d}$ for some unique
    $0 \leq d \leq n$.
\end{proof}

\begin{definition}\label{definition: latticeDegree}
    Let $\mathcal{A} = (a_{1}, \dots, a_{n})$ be a 
    tuple of non-zero real numbers.
    The lattice dimension of $\mathcal{A}$, denoted
    by $\dim(\mathcal{L}(\mathcal{A}))$,
    is the unique integer
    $d \geq 0$ such that
    $\mathcal{L}(\mathcal{A}) \cong \mathbb{Z}^{d}$.
    A set $\{\mathbf{x}_{1}, \dots, \mathbf{x}_{d}\}
    \subset \mathcal{L}(\mathcal{A})$,
    is called a lattice basis of $\mathcal{A}$ if there
    exists an isomorphism from $\mathcal{L}(\mathcal{A})$
    to $\mathbb{Z}^{d}$ that maps this to a
    basis of $\mathbb{Z}^{d}$.
\end{definition}

\begin{remark*}
    Here we note the known fact that if
    $\mathbb{Z}^{d} \cong \mathbb{Z}^{d'}$, then
    $d = d'$.
    This fact, together with \cref{lemma: latticeVector}
    guarantees that the lattice dimension of any given
    tuple of non-zero reals is well-defined.
\end{remark*}

With the help of these new definitions, we can now go back
to studying the effects of the stretching gadget.

\begin{lemma}\label{lemma: stretchingBasic}
    Let $M \in \text{Sym}^{\tt{F}}_{q}(\mathbb{R})$,
    such that $M = HDH^{\tt{T}}$, where
    $D = \text{\rm diag}(\lambda_{1}, \dots, \lambda_{q})$.
    Then $\PlEVAL(H\Delta H^{\tt{T}}) \leq \PlEVAL(M)$ 
    for any diagonal matrix
    $\Delta = \text{diag}(\Delta_{1}, \dots, \Delta_{q})$,
    such that
    $\Delta_{i} \in \mathbb{R}_{\neq 0}$ for all $i \in [q]$,
    and $\mathcal{L}(\lambda_{1}, \dots, \lambda_{q})
    \subseteq \mathcal{L}(\Delta_{1}, \dots, \Delta_{q})$.
\end{lemma}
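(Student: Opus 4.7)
The plan is to combine the stretching operation with Vandermonde interpolation on the distinct eigenvalue-products $v=\lambda_1^{k_1}\cdots\lambda_q^{k_q}$ appearing in \cref{eqn: stretching}, and then use the lattice inclusion hypothesis to repackage the resulting quantities as $Z_{H\Delta H^{\tt T}}(G)$. Since $M$ has full rank all $\lambda_i\neq 0$, so the lattices in question are well defined, and the coefficients $c_H(G,\mathbf{k})$ depend only on the shared eigenvector matrix $H$ and on $G$, not on the diagonal part. This last point is the reason we can freely swap $D^n$ for $\Delta$ once we have extracted enough information from the $D^n$ side.

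Fix a planar $G=(V,E)$ and set
$$V_\lambda(G) \;=\; \bigl\{\lambda_1^{k_1}\cdots\lambda_q^{k_q} : \mathbf{k}\in\mathcal{P}_q(|E|)\bigr\},$$
a \emph{set} of at most $|\mathcal{P}_q(|E|)|=|E|^{O(1)}$ distinct nonzero reals, enumerable in polynomial time. Grouping terms in \cref{eqn: stretching} by the common value $v$,
$$Z_{M^n}(G) \;=\; \sum_{v\in V_\lambda(G)} v^n \cdot C(v), \qquad C(v) \;:=\; \sum_{\mathbf{k}:\ \lambda_1^{k_1}\cdots\lambda_q^{k_q}=v} c_H(G,\mathbf{k}).$$
Because $\PlEVAL(S_n M)\le \PlEVAL(M)$ and $Z_{M^n}(G)=Z_M(S_n G)$, an oracle for $\PlEVAL(M)$ supplies the left-hand side for $n=1,2,\dots,|V_\lambda(G)|$. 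The resulting linear system in the unknowns $\{C(v)\}_{v\in V_\lambda(G)}$ is a full-rank Vandermonde system in distinct nodes, so every $C(v)$ can be recovered in polynomial time.

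To finish, I invoke the lattice hypothesis. If $\mathbf{k},\mathbf{k}'\in\mathcal{P}_q(|E|)$ lie in the same fibre, i.e.\ $\prod_i\lambda_i^{k_i}=\prod_i\lambda_i^{k_i'}$, then $\mathbf{k}-\mathbf{k}'\in\mathbb{Z}^q$ has coordinate sum $0$ and $\prod_i\lambda_i^{k_i-k_i'}=1$, so $\mathbf{k}-\mathbf{k}'\in\mathcal{L}(\lambda_1,\dots,\lambda_q)\subseteq\mathcal{L}(\Delta_1,\dots,\Delta_q)$ by hypothesis; hence $\prod_i\Delta_i^{k_i}=\prod_i\Delta_i^{k_i'}$. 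Therefore the quantity $w(v):=\prod_i\Delta_i^{k_i}$ depends only on the fibre and is obtained in polynomial time from any single representative $\mathbf{k}$ with $\lambda^{\mathbf{k}}=v$. The same derivation that produced \cref{eqn: stretching}, applied to $N=H\Delta H^{\tt T}$ in place of $M^n=HD^n H^{\tt T}$, yields
$$Z_{H\Delta H^{\tt T}}(G) \;=\; \sum_{\mathbf{k}\in\mathcal{P}_q(|E|)} c_H(G,\mathbf{k})\cdot\prod_{i\in[q]}\Delta_i^{k_i} \;=\; \sum_{v\in V_\lambda(G)} w(v)\cdot C(v),$$
which the reduction outputs.

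The step requiring the most care is the bookkeeping for the fibres: one must enumerate $V_\lambda(G)$ as a set (detecting duplicate values), produce a representative $\mathbf{k}$ for each fibre, and evaluate both $w(v)$ and the Vandermonde inverse entirely inside the fixed field $\mathbf{F}=\mathbf{E}(\beta)$ of \cref{sec: modelComputation}. All of this is polynomial in $|E|$ because $|V_\lambda(G)|$ and $|\mathcal{P}_q(|E|)|$ are, and because arithmetic in $\mathbf{F}$ is available; but the argument must respect the logical framework for $\PlEVAL$, in the same style as the proof of \cref{lemma: thickeningInterpolation}.
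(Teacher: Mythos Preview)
Your proposal is correct and follows essentially the same approach as the paper: Vandermonde interpolation on the distinct eigenvalue-products via stretching to recover the fibre-sums $C(v)$, followed by the lattice inclusion argument showing that the $\Delta$-product is constant on each $\lambda$-fibre, and finally reassembling $Z_{H\Delta H^{\tt T}}(G)$. The paper's proof is organized slightly differently---it explicitly shows that each $\Delta$-fibre $X_\Delta(G,\nu)$ is a disjoint union of $\lambda$-fibres $X_D(G,\mu_i)$ and then sums the $c_{H,D}(G,\mu_i)$---but this is equivalent to your direct definition of $w(v)$ on $\lambda$-fibres, and the content is the same.
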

\begin{proof}
    We recall that \cref{eqn: stretching} states that for any graph
    $G = (V, E)$,
    $$Z_{M}(S_{n}G) =
    \sum_{\mathbf{k} \in \mathcal{P}_{q}(|E|)}c_{H}(G, \mathbf{k}) \cdot
    \left(\lambda_{1}^{k_{1}} \cdots \lambda_{q}^{k_{q}} \right)^{n},$$
    where
    $$c_{H}(G, \mathbf{k}) = \sum_{\sigma: V \rightarrow [q]}
    \left( \sum_{\substack{E_{1} \sqcup \dots \sqcup E_{q}= E\\|E_{i}| = k_{i}}}
    \left( \prod_{i \in [q]} \prod_{(u, v) \in E_{i}}
    H_{\sigma(u)i}H_{\sigma(v)i} \right) \right).$$

    We now define the set
    $$\Lambda_{D}(G) = \left\{\prod_{i \in [q]} \lambda_{i}^{k_{i}}
    \Big|\hspace{0.1cm}
    \mathbf{k} \in \mathcal{P}_{q}(|E|)\right\}.$$
    For each $\mu \in \Lambda_{D}(G)$, we then define
    $$X_{D}(G, \mu) =
    \left\{\mathbf{k} \in \mathcal{P}_{q}(|E|)\Big|\hspace{0.1cm}
    \prod_{i \in [q]}\lambda_{i}^{k_{i}} = \mu \right\}
    ~~\text{ and }~~ c_{H, D}(G, \mu) =
    \sum_{\mathbf{k} \in X_{D}(G, \mu)}c_{H}(G, \mathbf{k}).$$

    Putting everything together, we see that
    $$Z_{M}(S_{n}G) = \sum_{\mu \in \Lambda_{D}(G)}c_{H, D}(G, \mu) \cdot \mu^{n}.$$
    Since $\Lambda_{D}(G)$ is a  set, each $\mu \in \Lambda_{D}(G)$ is distinct,
    and $|\Lambda_{D}(G)| \leq |E|^{O(1)}$.
    With oracle access to $\PlEVAL(M)$  we can get  $Z_{M}(S_{n}G)$ for $n \in [|\Lambda_{D}(G)|]$.
    Then we have a full rank Vandermonde system of linear equations,
    which can be solved in polynomial time to find
    $c_{H, D}(G, \mu)$ for all $\mu \in \Lambda_{D}(G)$.

    Now, we consider
    $$Z_{H\Delta H^{\tt{T}}}(G) = \sum_{\nu \in \Lambda_{\Delta}(G)}c_{H, \Delta}(G, \nu) \cdot \nu.$$
    Consider $\mathbf{k}, \mathbf{l} \in X_{D}(G, \mu)$
    for some $\mu \in \Lambda_{D}(G)$.
    By definition, this implies that
    $$\prod_{i \in [q]}\lambda_{i}^{k_{i}} =
    \prod_{i \in [q]}\lambda_{i}^{l_{i}} = \mu.$$
    Therefore,
    $(k_{1} - l_{1}, \dots, k_{q} - l_{q})
    \in \mathcal{L}(\lambda_{1}, \dots, \lambda_{q})
    \subseteq \mathcal{L}(\Delta_{1}, \dots, \Delta_{q})$
    by our choice of $\Delta$.
    So, it follows that there exists some $\nu \in \Lambda_{\Delta}(G)$
    such that
    $$\prod_{i \in [q]}\Delta_{i}^{k_{i}} =
    \prod_{i \in [q]}\Delta_{i}^{l_{i}} = \nu.$$
    Therefore, given any $\mu \in \Lambda_{D}(G)$, there exists
    some $\nu \in \Lambda_{\Delta}(G)$, such that
    $X_{D}(G, \mu) \subseteq X_{\Delta}(G, \nu)$.
    Now, we consider some
    $\nu \in \Lambda_{\Delta}(G)$.
    Let $\mathbf{k} \in X_{\Delta}(G, \nu)$.
    Then, we let $\mu = \prod_{i \in [q]}\lambda_{i}^{k_{i}}$.
    We see that $\mathbf{k} \in X_{D}(G, \mu)$.
    This implies that given any $\nu \in \Lambda_{\Delta}(G)$,
    there exist some $\mu_{1}, \dots, \mu_{t} \in \Lambda_{D}(G)$,
    such that
    $$X_{\Delta}(G, \nu) = X_{D}(G, \mu_{1}) \sqcup \dots \sqcup X_{D}(G, \mu_{t}).$$
    
    Now, we know that
    $$c_{H, \Delta}(G, \nu) = \sum_{\mathbf{k} \in X_{\Delta}(\nu)}c_{H}(G, \mathbf{k})
    = \sum_{i \in [t]}\left( \sum_{\mathbf{l} \in X_{D}(\mu_{i})}c_{H}(G, \mathbf{l})\right) = \sum_{i \in [t]}c_{H, D}(G, \mu_{i}).$$
    Note that we have already computed $c_{H, D}(G, \mu)$ for all $\mu \in \Lambda_{D}(G)$. Therefore, we can compute $c_{H, \Delta}(G, \nu)$ for each
    $\nu \in \Lambda_{\Delta}(G)$, and therefore,
    $Z_{H \Delta H^{\tt{T}}}(G)$ can also be computed.
    Therefore, $\PlEVAL(H \Delta H^{\tt{T}}) \leq \PlEVAL(M)$.
\end{proof}

\begin{corollary}\label{corollary: stretchingDegreeZero}
    If $M = HDH^{\tt{T}} \in \text{Sym}^{\tt{F}}_{q}(\mathbb{R})$
    is such that its eigenvalues
    $(\lambda_{1}, \dots, \lambda_{q})$
    have lattice dimension zero,
    then $\PlEVAL(H\Delta H^{\tt{T}}) \leq \PlEVAL(M)$ 
    for any diagonal matrix
    $\Delta = \text{diag}(\Delta_{1}, \dots, \Delta_{q})$
    such that $\Delta_{i} \in \mathbb{R}_{\neq 0}$
    for all $i \in [q]$.
\end{corollary}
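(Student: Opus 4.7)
The plan is to observe that this corollary is an immediate specialization of \cref{lemma: stretchingBasic}. The lemma gives the reduction $\PlEVAL(H\Delta H^{\tt{T}}) \leq \PlEVAL(M)$ under the hypothesis $\mathcal{L}(\lambda_{1}, \dots, \lambda_{q}) \subseteq \mathcal{L}(\Delta_{1}, \dots, \Delta_{q})$, so the entire task reduces to verifying that this containment holds automatically whenever the lattice dimension of $(\lambda_{1}, \dots, \lambda_{q})$ is zero.

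First I would unpack what ``lattice dimension zero'' means via \cref{definition: latticeDegree}: it says $\mathcal{L}(\lambda_{1}, \dots, \lambda_{q}) \cong \mathbb{Z}^{0}$, i.e., the lattice is the trivial group consisting only of the identity element. Going back to \cref{definition: latticeSet}, the identity element of $\mathcal{L}(\mathcal{A})$ is the zero vector $\mathbf{0} \in \mathbb{Z}^{q}$. So the hypothesis forces $\mathcal{L}(\lambda_{1}, \dots, \lambda_{q}) = \{\mathbf{0}\}$.

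Next I would note that for \emph{any} tuple $(\Delta_{1}, \dots, \Delta_{q})$ of non-zero reals, the zero vector trivially satisfies $\sum_{i} 0 = 0$ and $\prod_{i} \Delta_{i}^{0} = 1$, so $\mathbf{0} \in \mathcal{L}(\Delta_{1}, \dots, \Delta_{q})$. Hence
\[
\mathcal{L}(\lambda_{1}, \dots, \lambda_{q}) = \{\mathbf{0}\} \subseteq \mathcal{L}(\Delta_{1}, \dots, \Delta_{q}).
\]
Applying \cref{lemma: stretchingBasic} with this containment yields $\PlEVAL(H\Delta H^{\tt{T}}) \leq \PlEVAL(M)$, which is exactly the corollary.

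There is really no obstacle to overcome here; the proof is a two-line invocation of the preceding lemma after parsing the definitions, and I would present it as such. The substantive content lives entirely in \cref{lemma: stretchingBasic}, and the role of this corollary is just to highlight the especially clean case in which no constraint on $\Delta$ is needed at all, a fact which will presumably be convenient to cite later.
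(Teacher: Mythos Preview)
Your proposal is correct and matches the paper's proof essentially verbatim: the paper also observes that lattice dimension zero forces $\mathcal{L}(\lambda_{1},\dots,\lambda_{q})=\{\mathbf{0}\}$, notes that this is contained in any $\mathcal{L}(\Delta_{1},\dots,\Delta_{q})$, and invokes \cref{lemma: stretchingBasic}.
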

\begin{proof}
    If the lattice dimension of $(\lambda_{1}, \dots, \lambda_{q})$ is zero,
    it implies that
    $\mathcal{L}(\lambda_{1}, \dots, \lambda_{q}) = \{\mathbf{0}\}$.
    Therefore, $\mathcal{L}(\lambda_{1}, \dots, \lambda_{q}) \subseteq
    \mathcal{L}(\Delta_{1}, \dots, \Delta_{q})$
    for any $\Delta = (\Delta_{1}, \dots, \Delta_{q})$
    such that $\Delta_{i} \in \mathbb{R}_{\neq 0}$.
    The result then follows from \cref{lemma: stretchingBasic}.
\end{proof}

We will now prove that there exists some $\Delta$ such that $\PlEVAL(H\Delta H^{\tt{T}})$ is $\#$P-hard.

\begin{lemma}\label{lemma: transcendentKappaExists}
    Let $\mathcal{A} = \{a_{1}, \dots, a_{n}\}$, 
    and $\mathcal{B} = \{b_{1}, \dots, b_{m}\}$
    be finite sets of positive real numbers.
    There exists some $\kappa \in \mathbb{R}$
    such that $\kappa + a_{i} > 1$
    for all $i \in [n]$, $\kappa$ is \emph{transcendental}
    to the field $\mathbf{F} = \mathbb{Q}(\mathcal{B})$,
    and for all $(e_{1}, \dots, e_{n}) \in \mathbb{Z}^{n}$,
    $$(\kappa + a_{1})^{e_{1}} \cdots (\kappa + a_{n})^{e_{n}} = 1
    \implies (e_{1}, \dots, e_{n}) = \mathbf{0}.$$
\end{lemma}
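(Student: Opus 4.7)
The plan is to produce $\kappa$ as an element of an uncountable open half-line with a countable exceptional set removed. The positivity constraint $\kappa + a_i > 1$ for all $i \in [n]$ reduces to $\kappa > 1 - \min_{i \in [n]} a_i$, which defines a non-empty open interval $I \subset \mathbb{R}$. Since $\mathbf{F} = \mathbb{Q}(\mathcal{B})$ is a finitely generated extension of $\mathbb{Q}$ it is countable, so the set $\mathcal{S}_{\mathrm{alg}} \subset \mathbb{R}$ of real numbers algebraic over $\mathbf{F}$ is also countable; excluding $\mathcal{S}_{\mathrm{alg}}$ enforces the transcendence condition.

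The substantive step is the multiplicative independence. For each non-zero $\mathbf{e} = (e_1, \ldots, e_n) \in \mathbb{Z}^n$, consider the rational function
$$R_{\mathbf{e}}(x) \;=\; \prod_{i=1}^{n}(x + a_i)^{e_i}.$$
The key claim is that $R_{\mathbf{e}}(x) \not\equiv 1$. Since $\mathcal{A} = \{a_1, \ldots, a_n\}$ is a set, the $a_i$ are pairwise distinct, so $R_{\mathbf{e}}$ has a zero or pole of order $|e_i|$ at $x = -a_i$ whenever $e_i \neq 0$, and these lie at distinct points and so cannot cancel across different~$i$. Hence $R_{\mathbf{e}} \equiv 1$ would force $\mathbf{e} = \mathbf{0}$. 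Consequently, after clearing denominators the equation $R_{\mathbf{e}}(\kappa) = 1$ becomes a non-zero polynomial equation in $\kappa$ with only finitely many real roots, so the set
$$\mathcal{S}_{\mathrm{dep}} \;=\; \bigcup_{\mathbf{e} \in \mathbb{Z}^n \setminus \{\mathbf{0}\}}\big\{\kappa \in \mathbb{R} : R_{\mathbf{e}}(\kappa) = 1\big\}$$
is a countable union of finite sets, hence countable.

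Since $I$ is uncountable while $\mathcal{S}_{\mathrm{alg}} \cup \mathcal{S}_{\mathrm{dep}}$ is countable, the complement $I \setminus (\mathcal{S}_{\mathrm{alg}} \cup \mathcal{S}_{\mathrm{dep}})$ is uncountable and in particular non-empty; any $\kappa$ in this complement satisfies all three required properties simultaneously. The only non-routine step is the divisor argument showing $R_{\mathbf{e}} \not\equiv 1$ for $\mathbf{e} \neq \mathbf{0}$; the rest is a standard cardinality count against the continuum.
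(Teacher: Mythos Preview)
Your proof is correct and follows essentially the same approach as the paper: both arguments reduce the multiplicative independence condition for each nonzero $\mathbf{e}$ to a non-trivial polynomial equation in $\kappa$ (the paper writes $f_{\mathbf{e}}(x)=\prod_{e_i>0}(x+a_i)^{e_i}-\prod_{e_i<0}(x+a_i)^{-e_i}$, while you clear denominators of the rational function $R_{\mathbf{e}}$), observe that distinctness of the $a_i$ prevents cancellation, and then pick $\kappa$ outside a countable union of finite zero sets together with the countable set of reals algebraic over $\mathbf{F}$.
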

\begin{proof}
    For each $\mathbf{e} = (e_{1}, \dots, e_{n}) \neq \mathbf{0}
    \in \mathbb{Z}^{n}$,
    we define the polynomial $f_{\mathbf{e}}:
    \mathbb{R} \rightarrow \mathbb{R}$ as
    $$f_{\mathbf{e}}(x) = 
    \prod_{i \in [n]: e_{i} > 0}(x + a_{i})^{e_{i}} -
    \prod_{i \in [n]: e_{i} < 0}(x + a_{i})^{-e_{i}}.$$
    We can see that no such $f_{\mathbf{e}}$ is the zero polynomial
    as each $a_{i} \in \mathcal{A}$ is a distinct
    element in a unique factorization domain.
    Therefore, for each $f_{\mathbf{e}}$, the set
    $\emptyset_{\mathbf{e}} = \{x \in \mathbb{R}:
    f_{\mathbf{e}} (x) = 0\}$ is finite.
    We also see that the set $Z \subset \mathbb{R}$ of
    all the algebraic numbers over the field $\mathbf{F}
    = \mathbb{Q}(\mathcal{B})$ is countable, since
    $\mathcal{B}$ is finite.
    Therefore, $(\cup_{\mathbf{e}} \emptyset_{\mathbf{e}}) \cup Z$
    is a countable set.
    Therefore, we can pick some $\kappa \in \mathbb{R}
    \setminus ((\cup_{\mathbf{e}} \emptyset_{\mathbf{e}}) \cup Z)$
    such that $\kappa + a_{i} > 1$ for all $i \in [n]$.
    This $\kappa$ satisfies all the requirements
    of the lemma.
\end{proof}

\begin{lemma}\label{lemma: latticeHardnessDelta}
    Let $M = HDH^{\texttt{T}} \in \text{Sym}^{\tt{F}}_{q}
    (\mathbb{R}_{\neq 0})$ (for $q \geq 3$).
    Then there exists a diagonal matrix $\Delta = D + \kappa I$
    for some $\kappa \in \mathbb{R}$,
    such that $\PlEVAL(H\Delta H^{\tt{T}})$ is $\#$P-hard.
\end{lemma}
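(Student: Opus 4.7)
The plan is to first simplify $H\Delta H^{\tt T}$ using the orthogonality $HH^{\tt T} = I$: this gives $H\Delta H^{\tt T} = HDH^{\tt T} + \kappa I = M + \kappa I$. So it suffices to choose $\kappa \in \mathbb{R}$ so that $\PlEVAL(N)$ is $\#$P-hard, where $N := M + \kappa I$ has $N_{ii} = M_{ii} + \kappa$ and $N_{ij} = M_{ij}$ for $i \neq j$. My goal is to reduce $N$ to the setting of \cref{lemma: thickeningBasic} via a careful generating set in which each shifted diagonal contributes its own fresh generator.

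To pick $\kappa$, I invoke \cref{lemma: transcendentKappaExists}. After translating by a rational constant $C$ with $M_{ii} + C > 0$ for all $i$, I apply the lemma to $\mathcal{A} = \{M_{ii} + C : i \in [q]\}$ and $\mathcal{B} = \{|M_{ij}| : i,j \in [q]\}$, then undo the translation. The resulting $\kappa$ is transcendental over $\mathbb{Q}(\mathcal{B})$, satisfies $M_{ii} + \kappa > 1$ for every $i$, and makes the distinct values in $\{M_{ii} + \kappa\}_{i \in [q]}$ multiplicatively independent.

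Next I construct a generating set for the entries of $N$ in the sense of \cref{definition: generatingSet}. By \cref{lemma: generatingSet} the off-diagonal entries $\{M_{ij} : i \neq j\} \subset \mathbb{Q}(\mathcal{B})$ admit a generating set $\{h_1, \dots, h_s\} \subset \mathbb{R}_{>1}$. Let $\alpha_1, \dots, \alpha_r > 1$ be the distinct values in $\{M_{ii} + \kappa\}_{i \in [q]}$. I claim $\{h_1, \dots, h_s, \alpha_1, \dots, \alpha_r\}$ is a generating set for all entries of $N$: every off-diagonal $N_{jk}$ has exponent $0$ on each $\alpha_\ell$, and every diagonal $N_{ii}$ equals $\alpha_{j(i)}^{1}$ with exponent $0$ on each $h_t$. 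Uniqueness across the two blocks rests on the transcendence of $\kappa$: a putative nontrivial relation $\prod_\ell \alpha_\ell^{b_\ell} = \pm \prod_t h_t^{a_t}$ would give, after substituting $\alpha_\ell = \kappa + c_\ell$ (with $c_\ell \in \mathbb{Q}(\mathcal{B})$), a nontrivial polynomial identity in $\kappa$ over $\mathbb{Q}(\mathcal{B})$, contradicting transcendence; the multiplicative independence of the $\alpha_\ell$ and the generating-set property of the $h_t$ handle the remaining cases. After applying \cref{lemma: MequivalentCM} using a scalar that is a monomial in only the $h_t$'s (which leaves each off-diagonal's $\alpha$-exponent unchanged at $0$), all exponents in the resulting representation are non-negative, so the hypothesis for $\mathcal{T}_N$ in \cref{definition: mathcalT} is met.

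Finally I apply \cref{lemma: thickeningBasic} to $N$. For each $i \in [q]$, let $t(i)$ be the index of $\alpha_{j(i)} = M_{ii} + \kappa$ in the generating set; then $e_{ii\,t(i)} = 1 > 0$, and for $j \neq k$ the off-diagonal $N_{jk} = M_{jk}$ has $e_{jk\,t(i)} = 0$ by construction. Since $q \geq 3$, \cref{lemma: thickeningBasic} yields that $\PlEVAL(N) = \PlEVAL(H\Delta H^{\tt T})$ is $\#$P-hard. The delicate step is the generating-set construction in the third paragraph: uniqueness of the representation depends entirely on the transcendence and multiplicative-independence properties of $\kappa$ produced by \cref{lemma: transcendentKappaExists}, and the scalar chosen in the application of \cref{lemma: MequivalentCM} must be a monomial in only the $h_t$'s so as not to shift the crucial $\alpha$-exponents on off-diagonals away from $0$.
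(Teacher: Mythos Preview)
The proposal is correct and follows essentially the same approach as the paper: observe $H\Delta H^{\tt T} = M + \kappa I$, use \cref{lemma: transcendentKappaExists} to pick $\kappa$ so that the shifted diagonals are multiplicatively independent and transcendental over the field generated by the entries, exhibit a generating set in which each diagonal contributes its own fresh generator with exponent $1$ while every off-diagonal has exponent $0$ on these new generators, and conclude via \cref{lemma: thickeningBasic}. The only cosmetic differences are that you build a generating set for the off-diagonal entries separately and insert a rational translation to force positivity of $\mathcal{A}$ before invoking \cref{lemma: transcendentKappaExists}; the paper instead first normalizes $M$ via \cref{lemma: MequivalentCM} (so all $g_t$-exponents are already nonnegative) and only then adjoins the shifted diagonals $\kappa+M_{tt}$, which sidesteps the need for your careful restriction on the scaling factor.
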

\begin{proof}
    Let $(M_{ij})_{i, j \in [q]}$ be the entries of the matrix $M$
    generated by some $\{g_{t}\}_{t \in [d]}$.
    We will replace $M$ with the matrix $N$
    guaranteed by \cref{lemma: MequivalentCM}.
    So, we may assume that $e_{ijt} \geq 0$
    for all $i, j \in [q]$, and $t \in [d]$.
    Let $\mathcal{A} = \{M_{ii}: i \in [q]\}$ be the set of diagonal elements (with
    duplicates removed). 
    Without loss of generality, we let
    $\mathcal{A} 
    = \{M_{11}, \dots, M_{rr}\}$
    for some $1 \le r \leq q$. Let
    $\mathcal{B} = \{g_{t}\}_{t \in [d]}$.
    With this choice of $\mathcal{A}$ and
    $\mathcal{B}$, we can let $\kappa \in \mathbb{R}$ be the
    number whose existence is guaranteed by
    \cref{lemma: transcendentKappaExists}.
    We will now let $\Delta = D + \kappa I$, a diagonal matrix with all diagonal elements positive.

    Clearly, $H\Delta H^{\tt{T}} = M + \kappa I \in \text{Sym}^{\tt{F}}_{q}
    (\mathbb{R}_{\neq 0})$. All its non-diagonal entries are the same as that of $M$, and can be
    represented as a product of
    non-negative integer powers of
    the generating set $\{g_{t}\}_{t \in [d]}$, up to a $\pm$ sign.
    The diagonal entries of $H\Delta H^{\tt{T}}$
    can also be trivially represented as
    non-negative integer powers
    of 
    the set $\{\kappa + M_{11}, \dots,
    \kappa + M_{rr}\}$.
    We will now show that each entry of
    $H\Delta H^{\tt{T}}$ can in fact be expressed
    \emph{uniquely} as a product of integer powers
    of $\{g_{t}\}_{t \in [d]} \cup
    \{\kappa + M_{tt}: t \in [r]\}$, up to a $\pm$ sign.
    To show that, we only need to prove that for any 
   $(e_{1}, \dots, e_{d},
    e'_{1}, \dots, e'_{r}) \in \mathbb{Z}^{d+ r}$,
    if 
    \begin{equation}\label{g-kappa-generating-set-eqn}
    g_{1}^{e_{1}} \cdots g_{d}^{e_{d}} \cdot
    (\kappa + M_{11})^{e'_{1}} \cdots (\kappa + M_{rr})^{e'_{r}} = 1,
    \end{equation}
    then $(e_{1}, \dots, e_{d},
    e'_{1}, \dots, e'_{r}) = \mathbf{0}$.

    First, we assume
    $(e_{1}, \dots, e_{d}) \neq \mathbf{0}$.    
    In \cref{g-kappa-generating-set-eqn}, 
    since $\{g_{t}\}_{t \in [d]}$ is a generating set for $M$,
    and $(e_{1}, \dots, e_{d}) \neq \mathbf{0}$, 
    we have $\prod_{t \in [d]}g_{t}^{e_{t}} \neq 1$. 
    Therefore, 
    $(e'_{1}, \dots, e'_{r}) \neq  \mathbf{0}$  by \cref{g-kappa-generating-set-eqn}. 
    Separating out positive and negative $e'_{t}$'s, we have
    \begin{equation}\label{eqn: generatingSetDecomp}
        \left(\prod_{t \in [d]}g_{t}^{e_{t}}\right) \cdot
        \prod_{t \in [r]: e'_{t} > 0}(\kappa + M_{tt} )^{e'_{t}} =
        \prod_{t \in [r]: e'_{t} < 0}(\kappa + M_{tt} )^{-e'_{t}}.
    \end{equation}
    Since $(e'_{1}, \dots, e'_{r}) \neq  \mathbf{0}$,
    at least one side
    is a non-constant polynomial in $\kappa$ over the field ${\bf F} =
    \mathbb{Q}(\{g_{t}\}_{t \in [d]})$, yet both sides have  different leading coefficients.
    This contradicts our assumption that $\kappa$ is 
    transcendental to ${\bf F}$.
    Therefore, we must have $(e_{1}, \dots, e_{d}) = \mathbf{0}$.
    But then,
    $\prod_{t \in [d]}g_{t}^{e_{t}} = 1$, which implies that
    $$\prod_{t \in [r]}(\kappa + M_{tt})^{e'_{t}} = 1.$$
    \cref{{lemma: transcendentKappaExists}} implies that
    for $\kappa$,
   this is only possible if
    $(e'_{1}, \dots, e'_{r}) = \mathbf{0}$, 
    so $(e_{1}, \dots, e_{d},
    e'_{1}, \dots, e'_{r}) = \mathbf{0}$.
    
    This proves that $\{g_{t}\}_{t \in [d]} \cup
    \{\kappa + M_{tt}: t \in [r]\}$ is
    a generating set of the entries of
    $H\Delta H^{\tt{T}}$.
    Importantly, if we let $e_{ijt}$ and $e'_{ijt}$
    be the unique integers such that
    $$(H\Delta H^{\tt{T}})_{ij}
    = (-1)^{e_{ij0}}g_{1}^{e_{ij1}} \cdots g_{d}^{e_{ijd}}
    \cdot (\kappa + M_{11})^{e'_{ij1}} \cdots 
    (\kappa + M_{r'})^{e'_{ijr}},$$
    we see that $e_{ijt}, e'_{ijt'} \geq 0$ for all
    $i, j \in [q]$, $t \in [d]$, and $t' \in [r]$.
    Moreover,
    for every $i \in [q]$, we know that there exists
    some $t(i) \in [r]$ such that
    $e'_{iit(i)} = 1$, and $e'_{jkt(i)} = 0$ for
    all $j \neq k$.
    So, from \cref{lemma: thickeningBasic}, 
    we conclude that $\PlEVAL(H\Delta H^{\tt{T}})$ is $\#$P-hard.
\end{proof}

\cref{corollary: stretchingDegreeZero} and
\cref{lemma: latticeHardnessDelta} immediately allow us
to prove the following theorem.
\begin{theorem}\label{theorem: latticeHardnessSimple}
    If $M \in \text{Sym}_{q}^{\tt{F}}(\mathbb{R}_{\neq 0})$
    (for $q \geq 3$) has eigenvalues with lattice dimension $0$,
    then $\PlEVAL(M)$ is $\#$P-hard.
\end{theorem}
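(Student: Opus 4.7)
The plan is essentially to chain the two preceding results together. Write $M = HDH^{\tt T}$ via its spectral decomposition, where $D = \text{diag}(\lambda_1,\ldots,\lambda_q)$. Since $q \ge 3$ and $M \in \text{Sym}_q^{\tt F}(\mathbb{R}_{\neq 0})$, \cref{lemma: latticeHardnessDelta} applies directly: it produces a real number $\kappa$ and a diagonal matrix $\Delta = D + \kappa I$ such that $\PlEVAL(H\Delta H^{\tt T})$ is $\#$P-hard. Moreover, the $\kappa$ guaranteed by \cref{lemma: transcendentKappaExists} satisfies $\kappa + M_{ii} > 1 > 0$ for each $i \in [q]$, so the diagonal entries $\Delta_i = \lambda_i + \kappa$ might in principle still vanish — however, inspecting the proof of \cref{lemma: latticeHardnessDelta}, the relevant positivity ensures the diagonal entries of $H\Delta H^{\tt T}$ form part of the generating set, and in particular $\Delta_i \neq 0$ for all $i$ (if some $\Delta_i = 0$, then $\kappa$ would be algebraic over $\mathbb{Q}(\{g_t\})$, contradicting its choice).

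Having $\Delta_i \in \mathbb{R}_{\neq 0}$ for all $i \in [q]$ is exactly the hypothesis needed to invoke \cref{corollary: stretchingDegreeZero}, since by assumption the eigenvalues $(\lambda_1,\ldots,\lambda_q)$ of $M$ have lattice dimension zero. The corollary gives the reduction $\PlEVAL(H\Delta H^{\tt T}) \le \PlEVAL(M)$. Composing the two,
\[
\PlEVAL(H\Delta H^{\tt T}) \le \PlEVAL(M),
\]
and the left-hand side is $\#$P-hard, so $\PlEVAL(M)$ is $\#$P-hard, as claimed.

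Since the proof is a direct composition of \cref{lemma: latticeHardnessDelta} and \cref{corollary: stretchingDegreeZero}, there is essentially no technical obstacle; the only small point to verify is that the $\Delta$ produced by the lemma satisfies the non-vanishing hypothesis required by the corollary, which follows immediately from the transcendence of $\kappa$ over $\mathbb{Q}(\{g_t\}_{t \in [d]})$.
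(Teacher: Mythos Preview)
Your proposal is correct and follows precisely the paper's approach: the theorem is stated immediately after \cref{corollary: stretchingDegreeZero} and \cref{lemma: latticeHardnessDelta} with the remark that these two results ``immediately allow us to prove'' it. Your extra care in verifying $\Delta_i \neq 0$ (via the transcendence of $\kappa$ over $\mathbb{Q}(\{g_t\})$ and the fact that each $\lambda_i$ is algebraic over this field as a root of the characteristic polynomial) is a detail the paper leaves implicit in its assertion that $H\Delta H^{\tt T} = M + \kappa I \in \text{Sym}_q^{\tt F}(\mathbb{R}_{\neq 0})$, but your justification is sound.
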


With a little more effort however, \cref{lemma: latticeHardnessDelta}
allows us to prove a slightly stronger version
of \cref{theorem: latticeHardnessSimple}. Given $i \neq j \in [q]$,
we define $\delta_{ij} \in \mathbb{Z}^{q}$ such that
$$\delta_{ij}(k) = \begin{cases}
    1 & \text{if } k = i,\\
    -1 & \text{if } k = j,\\
    0 & \text{otherwise}.
\end{cases}$$
We then let

\begin{equation}\label{equation: mathcalD}
    \mathcal{D} = \{\delta_{ij}: i \neq j \in [q]\}
\end{equation}

\begin{theorem}\label{theorem: latticeHardness}
    If $M \in \text{Sym}_{q}^{\tt{F}}(\mathbb{R}_{\neq 0})$
    (for $q \geq 3$) has
    eigenvalues $(\lambda_{1}, \dots, \lambda_{q})$
    with a lattice basis $\mathcal{B}$ such that
    $\mathcal{B} \subseteq \mathcal{D}$,
    then $\PlEVAL(M)$ is $\#$P-hard.
\end{theorem}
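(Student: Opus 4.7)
The plan is to derive the theorem directly from \cref{lemma: latticeHardnessDelta} using \cref{lemma: stretchingBasic} as a bridge. The key observation is that the diagonal shift $D \mapsto D + \kappa I$ preserves equalities among eigenvalues: whenever $\lambda_i = \lambda_j$ we automatically have $\lambda_i + \kappa = \lambda_j + \kappa$. Since the hypothesis $\mathcal{B} \subseteq \mathcal{D}$ says the entire lattice $\mathcal{L}(\lambda_1, \dots, \lambda_q)$ is generated by equality relations of this kind (each $\delta_{ij}$ encodes precisely $\lambda_i = \lambda_j$), the shift cannot shrink the lattice, so the containment needed to apply stretching will automatically hold.

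Concretely, I would first apply \cref{lemma: latticeHardnessDelta} to $M$ to obtain $\kappa \in \mathbb{R}$ such that, setting $\Delta = D + \kappa I$ with diagonal entries $\Delta_i = \lambda_i + \kappa$, the problem $\PlEVAL(H\Delta H^{\tt{T}})$ is $\#$P-hard; the construction in \cref{lemma: transcendentKappaExists} ensures each $\Delta_i$ is a positive real, hence non-zero, as required by \cref{lemma: stretchingBasic}. I would then verify $\mathcal{L}(\lambda_1, \dots, \lambda_q) \subseteq \mathcal{L}(\Delta_1, \dots, \Delta_q)$ by checking only the basis vectors: for each $\delta_{ij} \in \mathcal{B}$, membership in $\mathcal{L}(\lambda_1, \dots, \lambda_q)$ forces $\lambda_i/\lambda_j = 1$, i.e., $\lambda_i = \lambda_j$, whence $\Delta_i = \Delta_j$ and thus $\prod_k \Delta_k^{\delta_{ij}(k)} = \Delta_i/\Delta_j = 1$, so $\delta_{ij} \in \mathcal{L}(\Delta_1, \dots, \Delta_q)$.

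Finally, \cref{lemma: stretchingBasic} yields $\PlEVAL(H\Delta H^{\tt{T}}) \leq \PlEVAL(M)$, and combined with the hardness of the left-hand side we conclude $\PlEVAL(M)$ is $\#$P-hard.

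The main obstacle — constructing a transcendentally chosen $\kappa$ so that the shifted diagonal produces a generating set witnessing hardness via \cref{lemma: thickeningBasic} — has already been absorbed into \cref{lemma: latticeHardnessDelta}. Given that machinery, what the present theorem adds over \cref{theorem: latticeHardnessSimple} is exactly the realization that a uniform shift of $D$ preserves a lattice generated by difference vectors in $\mathcal{D}$, so the lattice-dimension-zero hypothesis can be relaxed to the more permissive $\mathcal{B} \subseteq \mathcal{D}$ condition while still letting stretching carry hardness back from $H\Delta H^{\tt{T}}$ to $M$. The only subtle point is aligning the direction of the containment in \cref{lemma: stretchingBasic} with the shape of $\mathcal{B}$, which is precisely why the shift-by-$\kappa$ construction succeeds here.
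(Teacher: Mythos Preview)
Your proposal is correct and matches the paper's proof essentially step for step: invoke \cref{lemma: latticeHardnessDelta} to get $\Delta = D + \kappa I$ with $\PlEVAL(H\Delta H^{\tt T})$ hard, observe that each $\delta_{ij} \in \mathcal{B}$ survives the shift because $\lambda_i = \lambda_j \Rightarrow \Delta_i = \Delta_j$, and then apply \cref{lemma: stretchingBasic}. One tiny imprecision: the nonvanishing of $\Delta_i = \lambda_i + \kappa$ is not literally what \cref{lemma: transcendentKappaExists} guarantees (that lemma controls $\kappa + M_{ii}$, not $\kappa + \lambda_i$); rather, $\Delta_i \neq 0$ follows because $\kappa$ is transcendental over $\mathbb{Q}(\{g_t\})$ while each $\lambda_i$ is algebraic over that field.
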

\begin{proof}
    From \cref{lemma: latticeHardnessDelta}, we know that given
    $M = HDH^{\tt{T}}$, there exists a diagonal matrix
    $\Delta = D + \kappa I$ for some $\kappa \in \mathbb{R}$
    such that $\PlEVAL(H\Delta H^{\tt{T}})$ is $\#$P-hard.
    Now, if $\delta_{ij} \in \mathcal{B}$ for some $i < j$,
    we know that $\lambda_{i}^{+1}\cdot\lambda_{j}^{-1} = 1$.
    This implies that $\lambda_{i} = \lambda_{j}$.
    Therefore, $\lambda_{i} + \kappa = \lambda_{j} + \kappa$.
    So, $\Delta_{i}^{+1}\cdot\Delta_{j}^{-1} = 1$.
    Therefore, $\delta_{ij} \in
    \mathcal{L}(\Delta_{1}, \dots, \Delta_{q})$.
    Since this is true for all $\delta_{ij} \in \mathcal{B}$,
    it follows that $\mathcal{B} \subseteq
    \mathcal{L}(\Delta_{1}, \dots, \Delta_{q})$.
    Therefore, $\mathcal{L}(\lambda_{1}, \dots, \lambda_{q})
    \subseteq \mathcal{L}(\Delta_{1}, \dots, \Delta_{q})$.
    Now, \cref{lemma: stretchingBasic} implies that
    $\PlEVAL(H \Delta H^{\tt{T}}) \leq \PlEVAL(M)$.
    By our choice of $\Delta$, this implies that
    $\PlEVAL(M)$ is $\#$P-hard.
\end{proof}

\begin{remark*}
    It should be noted that the identity matrix
    $I \in \text{Sym}_{q}^{\tt{F}}(\mathbb{R})$ has
    eigenvalues $(1, \dots, 1)$,
    which trivially has a lattice basis
    $\mathcal{B} = \{\delta_{1j}: 2 \leq j \leq q\}
    \subseteq \mathcal{D}$.
    However, we are not claiming that $\PlEVAL(I)$ is $\#$P-hard (in fact the problem 
     $\PlEVAL(I)$  is clearly in {\rm P}).
    This is because the identity matrix does not satisfy the
    crucial property that all entries of the matrix
    belong to $\mathbb{R}_{\neq 0}$.
    In general, \cref{theorem: latticeHardness} implies that
    if $M \in \text{Sym}_{q}^{\tt{F}}(\mathbb{R})$ 
    has eigenvalues with a lattice basis $\mathcal{B} \subseteq \mathcal{D}$,
    then either $\PlEVAL(M)$ is $\#$P-hard, or $M$ is guaranteed
    to have zero entries.
\end{remark*}
\section{Proof of Hardness}\label{sec: proof_of_hardness}

We will now try to understand the lattice
$\mathcal{L}(\lambda_{1}, \dots, \lambda_{q})$
of the eigenvalues of matrices
$M \in \text{Sym}_{q}^{\tt{F}}(\mathbb{R}_{> 0})$
which do not satisfy the conditions of
\cref{theorem: latticeHardness}.
Let
$$\chi_{q} = \left\{\mathbf{x} = (x_{1}, \dots, x_{q})
\in \mathbb{Z}^{q} \Big|\hspace{0.1cm}
x_{1} + \cdots + x_{q} = 0\right\}.$$
Consider some $\mathbf{0}  \neq  \mathbf{x}  
\in \chi_{q}$.
We will use this $\mathbf{x}$ to define a polynomial
$\phi_{\mathbf{x}}: \mathbb{R}^{q} \rightarrow \mathbb{R}$ as
\begin{equation}\label{equation: phiX}
    \phi_{\mathbf{x}}(\alpha_{1}, \dots, \alpha_{q}) =
    \prod_{i \in [q]:\ x_{i} > 0} \alpha_{i}^{x_{i}} -
    \prod_{i \in [q]:\ x_{i} < 0} \alpha_{i}^{-(x_{i})}.
\end{equation}
Since $\mathbf{x} \neq \mathbf{0}$, we can see
that $\phi_{\mathbf{x}}$ is not the trivial
constant zero function.
Moreover, since $\sum_{i \in [q]}x_{i} = 0$, we see that
$$\sum_{i \in [q]:\ x_{i} > 0} x_{i} = 
\sum_{i \in [q]:\ x_{i} < 0} -(x_{i}).$$
So, we see that $\phi_{\mathbf{x}}$ is a
homogeneous polynomial.
Moreover, by construction, we see that
$$\mathbf{x} \in \mathcal{L}(\alpha_{1}, \dots, \alpha_{q}) \iff
\phi_{\mathbf{x}}(\alpha_{1}, \dots, \alpha_{q}) = 0.$$
We will now define the polynomial
$\Phi_{\mathbf{x}}: \mathbb{R}^{q} \rightarrow \mathbb{R}$
as follows:
\begin{equation}\label{eqn: Phi}
    \Phi_{\mathbf{x}}(\alpha_{1}, \dots, \alpha_{q}) = 
    \prod_{\sigma \in S_{q}}\phi_{\mathbf{x}}(\alpha_{\sigma(1)}, \dots, \alpha_{\sigma(q)}),
\end{equation}
where $S_{q}$ is the symmetric group over $[q]$.
By construction, we see that $\Phi_{\mathbf{x}}$ is a
symmetric homogeneous polynomial.
In the next section, we will explore this polynomial
in greater detail, and exploit its symmetry to prove
some useful results.
For now, the most important property of this polynomial
is that if $\mathbf{x} \in \mathcal{L}(\alpha)$
for some $\alpha = (\alpha_{1}, \dots, \alpha_{q})$,
then $\Phi_{\mathbf{x}}(\alpha) = 0$.
However, because we constructed $\Phi_{\mathbf{x}}$ to be
symmetric, the converse does not quite hold.

\begin{lemma}\label{lemma: permutatePhi}
    Let $\mathbf{x}, \mathbf{y} \in \chi_{q}$.
    Let $\sigma \in S_{q}$ such that $x_{i} = y_{\sigma(i)}$
    for all $i \in [q]$.
    Then, $\Phi_{\mathbf{x}} = \Phi_{\mathbf{y}}$.
\end{lemma}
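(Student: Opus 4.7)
The plan is to unfold the definitions of $\Phi_{\mathbf{x}}$ and $\Phi_{\mathbf{y}}$ and recognize that the hypothesis $x_i = y_{\sigma(i)}$ converts a reindexing of the exponent tuple into a reindexing of the argument tuple. The key observation is that $\phi_{\mathbf{x}}$ is defined by a product indexed over coordinates using the exponents $x_i$ attached to variables $\alpha_i$. If we relabel the coordinates via $\sigma$, the exponents get permuted in one place and the variable labels get permuted in the corresponding place, so $\phi_{\mathbf{y}}(\alpha_{\tau(1)}, \dots, \alpha_{\tau(q)})$ should just be $\phi_{\mathbf{x}}(\alpha_{(\tau \circ \sigma)(1)}, \dots, \alpha_{(\tau \circ \sigma)(q)})$ (or something close to it, up to how $\sigma$ acts from left/right).

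More concretely, first I would fix $\tau \in S_q$ and compute $\phi_{\mathbf{y}}(\alpha_{\tau(1)}, \dots, \alpha_{\tau(q)})$ directly from \cref{equation: phiX}: it equals $\prod_{j: y_j > 0} \alpha_{\tau(j)}^{y_j} - \prod_{j: y_j < 0} \alpha_{\tau(j)}^{-y_j}$. Then I would perform the substitution $j = \sigma(i)$, using that as $j$ ranges over $[q]$ so does $i$, and the hypothesis $y_{\sigma(i)} = x_i$. This rewrites each product as $\prod_{i: x_i > 0} \alpha_{\tau(\sigma(i))}^{x_i}$ and $\prod_{i: x_i < 0} \alpha_{\tau(\sigma(i))}^{-x_i}$, and so the whole expression matches $\phi_{\mathbf{x}}(\alpha_{(\tau\sigma)(1)}, \dots, \alpha_{(\tau\sigma)(q)})$ on the nose.

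Finally, I would plug this identity into the product defining $\Phi_{\mathbf{y}}$ from \cref{eqn: Phi}, obtaining
\[
\Phi_{\mathbf{y}}(\alpha_1, \dots, \alpha_q) = \prod_{\tau \in S_q} \phi_{\mathbf{x}}\bigl(\alpha_{(\tau\sigma)(1)}, \dots, \alpha_{(\tau\sigma)(q)}\bigr),
\]
and conclude by the trivial reindexing $\tau' = \tau\sigma$: right-multiplication by the fixed element $\sigma$ is a bijection of $S_q$, so the product over $\tau$ is the same as the product over $\tau'$, which is exactly $\Phi_{\mathbf{x}}(\alpha_1, \dots, \alpha_q)$.

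There is essentially no obstacle here; the lemma is really the statement that symmetrizing over $S_q$ kills any dependence on a permutation of the exponent vector. The only thing to be careful about is bookkeeping with left versus right composition of $\sigma$ and $\tau$, since $\sigma$ acts on the exponent indices while $\tau$ acts on the variable indices, and the proof works precisely because $S_q$ is closed under right multiplication by $\sigma$.
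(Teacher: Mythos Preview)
Your proof is correct and is essentially the same as the paper's: both establish the identity $\phi_{\mathbf{y}}(\alpha_{\tau(1)},\dots,\alpha_{\tau(q)}) = \phi_{\mathbf{x}}(\alpha_{(\tau\sigma)(1)},\dots,\alpha_{(\tau\sigma)(q)})$ via the substitution $j=\sigma(i)$ and then reindex the product over $S_q$ by $\tau'=\tau\sigma$. The only cosmetic difference is that the paper first records the special case $\tau=\mathrm{id}$ before substituting, whereas you work directly with general~$\tau$.
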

\begin{proof}
    By definition, given any $\alpha \in \mathbb{R}^{q}$, we know that
    \begin{align*}
        \phi_{\mathbf{x}}(\alpha_{\sigma(1)}, \dots, \alpha_{\sigma(q)})
        &= \prod_{i \in [q]:\ x_{i} > 0} \alpha_{\sigma(i)}^{x_{i}} -
        \prod_{i \in [q]:\ x_{i} < 0} \alpha_{\sigma(i)}^{-(x_{i})}\\
        &= \prod_{i \in [q]:\ y_{\sigma(i)} > 0} \alpha_{\sigma(i)}^{y_{\sigma(i)}} -
        \prod_{i \in [q]:\ y_{\sigma(i)} < 0} \alpha_{\sigma(i)}^{-(y_{\sigma(i)})}\\
        &= \prod_{j \in [q]:\ y_{j} > 0} \alpha_{j}^{y_{j}} -
        \prod_{j \in [q]:\ y_{j} < 0} \alpha_{j}^{-(y_{j})}\\
        &= \phi_{\mathbf{y}}(\alpha_{1}, \dots, \alpha_{q}).
    \end{align*}
    Therefore,
    \begin{align*}
        \Phi_{\mathbf{y}}(\alpha_{1}, \dots, \alpha_{q})
        &= \prod_{\sigma' \in S_{q}}\phi_{\mathbf{y}}(\alpha_{\sigma'(1)}, \dots, \alpha_{\sigma'(q)})\\
        &= \prod_{\sigma' \in S_{q}}\phi_{\mathbf{x}}(\alpha_{\sigma(\sigma'(1))}, \dots, \alpha_{\sigma(\sigma'(q))})\\
        &= \prod_{\tau \in S_{q}}\phi_{\mathbf{x}}(\alpha_{\tau(1)}, \dots, \alpha_{\tau(q)})\\
        &= \Phi_{\mathbf{x}}(\alpha_{1}, \dots, \alpha_{q}).
        \end{align*}
\end{proof}

\cref{lemma: permutatePhi} implies that if
$\mathbf{x} \in \mathcal{L}(\lambda_{1}, \dots, \lambda_{q})$,
then $\Phi_{\mathbf{y}}(\lambda_{1}, \dots, \lambda_{q}) = 0$
for all $\mathbf{y} \in \chi_{q}$ such that $y_{\sigma(i)} = x_{i}$
for some $\sigma \in S_{q}$.
To navigate through some of the problems introduced by this,
we will now introduce some notation.

\begin{notation}
    Let $X \subset \chi_{q}$.
    Then, given any $\sigma \in S_{q}$, we define
    $$X^{\sigma} = \{\mathbf{x} \in \chi_{q}: (x_{\sigma(1)}, \dots, x_{\sigma(q)}) \in X \}.$$
    We also define
    $$\overline{X} = \bigcup_{\sigma \in S_{q}}X^{\sigma}.$$
\end{notation}

\begin{lemma}\label{lemma: PhiZero}
    Let $\alpha = (\alpha_{1}, \dots, \alpha_{q})$
    be non-zero reals.
    Then,
    $$\mathbf{x} \in \overline{\mathcal{L}(\alpha)} \iff 
    \Phi_{\mathbf{x}}(\alpha) = 0.$$
\end{lemma}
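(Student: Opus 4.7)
The plan is to reduce both directions to the single-permutation observation that is implicit in the definitions: by construction, a vector $\mathbf{y} \in \chi_{q}$ lies in $\mathcal{L}(\alpha)$ if and only if $\phi_{\mathbf{y}}(\alpha) = 0$, since the condition $\prod_{i}\alpha_{i}^{y_{i}}=1$ is exactly the balance between the two monomial products in the definition of $\phi_{\mathbf{y}}$ (and $\sum y_{i}=0$ is built into $\chi_{q}$). The extra work needed is to translate this pointwise statement through the symmetrization $\Phi_{\mathbf{x}} = \prod_{\sigma}\phi_{\mathbf{x}}\circ \sigma$ and the closure $\overline{(\cdot)}$ under $S_{q}$.

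For the forward implication, suppose $\mathbf{x} \in \overline{\mathcal{L}(\alpha)}$. Then there exists $\sigma \in S_{q}$ such that the vector $\mathbf{y}$ defined by $y_{i} = x_{\sigma(i)}$ satisfies $\mathbf{y} \in \mathcal{L}(\alpha)$, hence $\phi_{\mathbf{y}}(\alpha) = 0$. The calculation inside the proof of \cref{lemma: permutatePhi} shows exactly that $\phi_{\mathbf{x}}(\alpha_{\sigma(1)}, \dots, \alpha_{\sigma(q)}) = \phi_{\mathbf{y}}(\alpha_{1}, \dots, \alpha_{q})$. This quantity is one of the factors in the product defining $\Phi_{\mathbf{x}}(\alpha)$, so $\Phi_{\mathbf{x}}(\alpha) = 0$.

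For the converse, suppose $\Phi_{\mathbf{x}}(\alpha) = 0$. Since $\mathbb{R}$ is an integral domain, at least one factor vanishes, so there is some $\sigma \in S_{q}$ with $\phi_{\mathbf{x}}(\alpha_{\sigma(1)}, \dots, \alpha_{\sigma(q)}) = 0$. Define $\mathbf{y} \in \chi_{q}$ by $y_{i} = x_{\sigma(i)}$; again by the identity from \cref{lemma: permutatePhi}, $\phi_{\mathbf{y}}(\alpha) = 0$, so $\mathbf{y} \in \mathcal{L}(\alpha)$. By the definition $\mathcal{L}(\alpha)^{\sigma} = \{\mathbf{z} \in \chi_{q} : (z_{\sigma(1)}, \dots, z_{\sigma(q)}) \in \mathcal{L}(\alpha)\}$, the equality $(x_{\sigma(1)}, \dots, x_{\sigma(q)}) = \mathbf{y}$ gives $\mathbf{x} \in \mathcal{L}(\alpha)^{\sigma} \subseteq \overline{\mathcal{L}(\alpha)}$.

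The main ``obstacle'' is purely notational: one must be careful that the convention for $X^{\sigma}$ used in the definition of $\overline{X}$ matches the direction in which $\sigma$ acts on coordinates inside the factor $\phi_{\mathbf{x}}(\alpha_{\sigma(1)},\dots,\alpha_{\sigma(q)})$. Once the identification $y_{i} = x_{\sigma(i)}$ is used consistently on both sides, both implications are immediate from \cref{lemma: permutatePhi} and the definition of $\mathcal{L}(\alpha)$; there is no deeper analytic content in this step.
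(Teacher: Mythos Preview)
Your approach is essentially the same as the paper's: unwind the definitions of $\overline{\mathcal{L}(\alpha)}$ and $\Phi_{\mathbf{x}}$ and match them up factor by factor via the identity inside \cref{lemma: permutatePhi}. The only slip is the very bookkeeping issue you flag at the end: with your convention $y_i = x_{\sigma(i)}$ one actually has $\phi_{\mathbf{y}}(\alpha) = \phi_{\mathbf{x}}(\alpha_{\sigma^{-1}(1)},\dots,\alpha_{\sigma^{-1}(q)})$, not $\phi_{\mathbf{x}}(\alpha_{\sigma(\cdot)})$, so in the converse direction you should take $y_i = x_{\sigma^{-1}(i)}$ (and then land in $\mathcal{L}(\alpha)^{\sigma^{-1}}$), exactly as the paper does.
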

\begin{proof}
    If $\mathbf{x} \in \overline{\mathcal{L}(\alpha)}$,
    we know that there exists some $\sigma \in S_{q}$ such that
    $\mathbf{x} \in \mathcal{L}(\alpha)^{\sigma}$.
    If we now define $\mathbf{y}\in \chi_{q}$, such that
    $y_{i} = x_{\sigma(i)}$ for all $i \in [q]$,
    we see that $\mathbf{y} \in \mathcal{L}(\alpha)$.
    Now, \cref{lemma: permutatePhi} implies that
    $\Phi_{\mathbf{x}}(\alpha) = \Phi_{\mathbf{y}}(\alpha) = 0$.

    If $\Phi_{\mathbf{x}}(\alpha) = 0$, then we see that
    $\phi_{\mathbf{x}}(\alpha_{\sigma(1)}, \dots, \alpha_{\sigma(q)})
    = 0$ for some $\sigma \in S_{q}$.
    Therefore, $\phi_{\mathbf{y}}(\alpha) = 0$, where
    $y_{i} = x_{\sigma^{-1}(i)}$ for all $i \in [q]$.
    So, $\mathbf{y} \in \mathcal{L}(\alpha)$, which implies that
    $\mathbf{x} \in \mathcal{L}(\alpha)^{\sigma^{-1}} \subset
    \overline{\mathcal{L}(\alpha)}$.
\end{proof}

We recall from \cref{eqn: Phi} that given any
$\mathbf{0}  \neq  \mathbf{x}
 \in \chi_{q}$,
the polynomial $\Phi_{\mathbf{x}}$ is a symmetric,
homogeneous polynomial.
We now recall the Fundamental Theorem
of Symmetric Polynomials, stated below:

\begin{theorem}\label{theorem: symmetricFundamental}
    Let $\Phi(\alpha_{1}, \dots, \alpha_{q})$ be a symmetric
    polynomial on $q$ variables. Let $e_{i}$ for $i \in [q]$
    represent the elementary symmetric polynomials, such that
    $$e_{i}(\alpha_{1}, \dots, \alpha_{q}) =
    \sum_{1 \leq t_{1} < \dots < t_{i} \leq q}
    \alpha_{t_{1}} \cdots \alpha_{t_{i}}.$$
    Then, there exists a polynomial function
    $\widehat{\Phi}: \mathbb{R}^{q} \rightarrow \mathbb{R}$ such that
    $$\Phi(\alpha_{1}, \dots, \alpha_{q}) = 
    \widehat{\Phi}\Big(e_{1}(\alpha_{1}, \dots, \alpha_{q}), \dots,
    e_{q}(\alpha_{1}, \dots, \alpha_{q})\Big).$$
\end{theorem}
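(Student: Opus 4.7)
The plan is to prove this by a standard induction using a monomial ordering, which reduces the problem to constructing an explicit cancellation of the leading term using elementary symmetric polynomials. I will use the lexicographic order on monomials: for two monomials $\alpha_{1}^{a_{1}} \cdots \alpha_{q}^{a_{q}}$ and $\alpha_{1}^{b_{1}} \cdots \alpha_{q}^{b_{q}}$, the first is larger iff $(a_{1}, \dots, a_{q}) > (b_{1}, \dots, b_{q})$ lexicographically.

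First, I would observe that if $\Phi$ is symmetric, then the lex-leading monomial $c \cdot \alpha_{1}^{a_{1}} \cdots \alpha_{q}^{a_{q}}$ must satisfy $a_{1} \ge a_{2} \ge \cdots \ge a_{q}$, for otherwise permuting $\alpha_{i}$ and $\alpha_{i+1}$ with $a_{i} < a_{i+1}$ would produce a lexicographically larger monomial with the same coefficient $c$, contradicting the definition of ``leading''. Next, I would consider the elementary symmetric polynomial product
\[
P = c \cdot e_{1}^{a_{1}-a_{2}} \cdot e_{2}^{a_{2}-a_{3}} \cdots e_{q-1}^{a_{q-1}-a_{q}} \cdot e_{q}^{a_{q}}.
\]
Since $e_{i}$ has lex-leading monomial $\alpha_{1} \alpha_{2} \cdots \alpha_{i}$, a direct computation shows that the leading monomial of $P$ is $c \cdot \alpha_{1}^{a_{1}} \alpha_{2}^{a_{2}} \cdots \alpha_{q}^{a_{q}}$, matching that of $\Phi$.

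Then $\Phi - P$ is again a symmetric polynomial (since elementary symmetric polynomials are symmetric), and its lex-leading monomial is strictly smaller than that of $\Phi$. I would then iterate this subtraction. Termination is clear because at each step the leading exponent tuple strictly decreases in lexicographic order, while all tuples stay in $\mathbb{Z}_{\ge 0}^{q}$ with total degree bounded by $\deg \Phi$; this set is finite, so the process halts after finitely many steps, leaving $0$. Collecting the subtracted terms gives an explicit polynomial expression of $\Phi$ in $e_{1}, \dots, e_{q}$, which is exactly the desired $\widehat{\Phi}$.

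The main obstacle, if one insists on full rigor, is the verification that the leading monomial of the product $P$ is indeed what I claimed; this requires checking that lex order is compatible with multiplication of polynomials (i.e., the leading monomial of a product equals the product of leading monomials), which is a routine but nontrivial fact about monomial orders. Beyond that, the argument is completely mechanical and constructive. Since this is the classical Fundamental Theorem of Symmetric Polynomials and the paper invokes it as a black box, I expect the authors will simply cite a standard algebra reference rather than reproduce this induction.
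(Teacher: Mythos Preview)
Your proof is correct and is the standard argument; as you anticipated, the paper does not prove this theorem at all but simply states it as a recalled fact (``We now recall the Fundamental Theorem of Symmetric Polynomials'') and uses it as a black box in constructing the polynomial $\Psi_{\mathbf{x}}$.
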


Now, consider some $\mathbf{0}  \neq  \mathbf{x} \in \chi_{q}$,
and $M \in \text{Sym}_{q}(\mathbb{R})$
with eigenvalues $(\lambda_{1}, \dots, \lambda_{q})$.
\cref{theorem: symmetricFundamental} implies that
there exists a polynomial $\widehat{\Phi_{\mathbf{x}}}$
such that
$$\Phi_{\mathbf{x}}(\lambda_{1}, \dots, \lambda_{q}) =
\widehat{\Phi_{\mathbf{x}}}\Big(e_{1}(\lambda_{1}, \dots, \lambda_{q}),
\dots, e_{q}(\lambda_{1}, \dots, \lambda_{q})\Big).$$
We will now use the fact that  these
$e_{i}(\lambda_{1}, \dots, \lambda_{q})$ are
all just coefficients of the characteristic polynomial of $M$. Thus, each $e_i$ can be written
as a homogeneous polynomial
of the entries of $M$.

\begin{theorem}\label{theorem: charPolyElementarySym}
    Let $M \in \text{Sym}_{q}(\mathbb{R})$
    with eigenvalues $(\lambda_{1}, \dots, \lambda_{q})$.
    For any elementary symmetric polynomial $e_{k}$
    for $k \in [q]$, there exists a homogeneous polynomial
    $s_{k}: \text{Sym}_{q}(\mathbb{R}) \rightarrow \mathbb{R}$
    of degree $k$ such that
    $$e_{k}(\lambda_{1}, \dots, \lambda_{q}) =
    s_{k}\left((M_{ij})_{i \leq j \in [q]}\right).$$
\end{theorem}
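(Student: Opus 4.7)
The plan is to extract $e_k(\lambda_1,\ldots,\lambda_q)$ as a coefficient of the characteristic polynomial $\chi_M(x)=\det(xI-M)$, and then expand that determinant in a way that makes each coefficient manifestly a homogeneous polynomial in the entries of $M$. Since $\lambda_1,\ldots,\lambda_q$ are the roots of $\chi_M$, I have
\[
\chi_M(x)=\prod_{i=1}^{q}(x-\lambda_i)=\sum_{k=0}^{q}(-1)^{k}\, e_{k}(\lambda_1,\ldots,\lambda_q)\, x^{q-k},
\]
so $e_k(\lambda_1,\ldots,\lambda_q)$ is (up to sign) the coefficient of $x^{q-k}$ in $\chi_M(x)$.

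Next, I would apply the Leibniz formula:
\[
\det(xI-M)=\sum_{\pi\in S_q}\mathrm{sgn}(\pi)\prod_{i=1}^{q}\bigl(x\,\delta_{i,\pi(i)}-M_{i,\pi(i)}\bigr).
\]
A term contributes to $x^{q-k}$ precisely when the factor $x\,\delta_{i,\pi(i)}$ is chosen on $q-k$ indices $i$ (which forces $\pi(i)=i$ there) and the factor $-M_{i,\pi(i)}$ on the remaining $k$ indices; those $k$ indices form a subset $S\subseteq [q]$ on which $\pi$ restricts to a permutation. Grouping contributions by $S$ yields the standard identity
\[
[x^{q-k}]\,\chi_M(x)=(-1)^{k}\sum_{\substack{S\subseteq [q]\\ |S|=k}}\det(M_S),
\]
where $M_S=(M_{ij})_{i,j\in S}$ is the $k\times k$ principal submatrix of $M$ indexed by $S$. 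Combining the two expressions gives
\[
e_k(\lambda_1,\ldots,\lambda_q)=\sum_{\substack{S\subseteq [q]\\|S|=k}}\det(M_S).
\]

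Finally, each $\det(M_S)$ is, by its own Leibniz expansion, a sum of signed products of exactly $k$ entries of $M$, so it is a homogeneous polynomial of degree $k$ in $(M_{ij})_{i,j\in S}$, and a fortiori in $(M_{ij})_{i\le j\in [q]}$ (using symmetry $M_{ij}=M_{ji}$). Setting
\[
s_k\bigl((M_{ij})_{i\le j\in [q]}\bigr)\;:=\;\sum_{\substack{S\subseteq [q]\\|S|=k}}\det(M_S)
\]
therefore produces a homogeneous polynomial of degree $k$ that represents $e_k(\lambda_1,\ldots,\lambda_q)$, as required. There is no serious obstacle here; the only care needed is in bookkeeping the Leibniz expansion to verify the principal-minor identity, which is entirely routine.
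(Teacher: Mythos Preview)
Your proof is correct. The paper actually does not supply a proof of this theorem: it is stated as a known fact, with only the one-sentence remark preceding it that the $e_i(\lambda_1,\ldots,\lambda_q)$ are coefficients of the characteristic polynomial of $M$ and hence homogeneous polynomials in the entries of $M$. Your argument fills in exactly this standard computation via the Leibniz expansion and the principal-minor identity, which is the expected justification.
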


We will finally define the polynomial $\Psi_{\mathbf{x}}:
\text{Sym}_{q}(\mathbb{R}) \rightarrow \mathbb{R}$ as
\begin{equation}\label{eqn: Psi}
    \Psi_{\mathbf{x}}(M) = \widehat{\Phi_{\mathbf{x}}}\Big(
    s_{1}\left((M_{ij})_{i \leq j \in [q]}\right), \dots, 
    s_{q}\left((M_{ij})_{i \leq j \in [q]}\right)\Big).
\end{equation}

\begin{definition}\label{definition: matrixPoly}
    A function $F: \text{Sym}_{q}(\mathbb{R})
    \rightarrow \mathbb{R}$ is called
    a $\text{Sym}_{q}(\mathbb{R})$-polynomial if it
    is a homogeneous polynomial in the
    entries of the matrix.
\end{definition}

\begin{remark*}
    Note that we require $\text{Sym}_{q}(\mathbb{R})$-polynomials
    to be homogeneous.
    If $F$ is a $\text{Sym}_{q}(\mathbb{R})$-polynomial
    of degree $d$, then
    $F(c M) = c^{d} \cdot F(M)$.
    In particular, this would allow us to conclude
    that $F(M) = 0$ if and only if $F(c M) = 0$.
    So, we can safely replace any $M$
    with $N = c M$ as guaranteed by
    \cref{lemma: MequivalentCM} without
    changing whether $F(M) = 0$ for any
    $\text{Sym}_{q}(\mathbb{R})$
    polynomial $F$.
\end{remark*}

Clearly, given any $\mathbf{0}  \neq  \mathbf{x} \in \chi_{q}$,
we see that $\Psi_{\mathbf{x}}(M)$ is 
polynomial in the entries of $M$.
Moreover, we note that for each $i \in [q]$,
$e_{i}$ and $s_{i}$ are homogeneous polynomials of the
same degree.
Since $\Phi_{\mathbf{x}}$ is a homogeneous polynomial,
by construction of $\widehat{\Phi}$, it follows that
$\Psi_{\mathbf{x}}$ is also a homogeneous polynomial,
of the same degree as $\Phi_{\mathbf{x}}$.
In summation, we have managed to prove the following
lemma.

\begin{lemma}\label{lemma: PsiExists}
    Let $\mathbf{0}  \neq  \mathbf{x} \in \chi_{q}$.
    There exists a $\text{Sym}_{q}(\mathbb{R})$-polynomial
    $\Psi_{\mathbf{x}}$, such that
    given any $M \in \text{Sym}^{\tt{F}}_{q}(\mathbb{R})$
    with eigenvalues $(\lambda_{1}, \dots, \lambda_{q})$,
    $$\Psi_{\mathbf{x}}(M) = 0 \iff \mathbf{x} \in
    \overline{\mathcal{L}(\lambda_{1}, \dots, \lambda_{q})}.$$
\end{lemma}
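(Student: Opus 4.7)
The plan is to assemble the conclusion directly from the pieces already built up in the paper: the polynomial $\Psi_{\mathbf{x}}$ has already been constructed in \cref{eqn: Psi}, so what remains is to verify its two asserted properties, namely that it is a $\text{Sym}_{q}(\mathbb{R})$-polynomial in the sense of \cref{definition: matrixPoly}, and that its vanishing on $M$ is equivalent to $\mathbf{x} \in \overline{\mathcal{L}(\lambda_{1}, \dots, \lambda_{q})}$.

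First I would chase the definitions to establish the equivalence. Starting from $\Psi_{\mathbf{x}}(M)$ as defined in \cref{eqn: Psi}, I apply \cref{theorem: charPolyElementarySym} to replace each $s_{k}((M_{ij})_{i \leq j})$ by $e_{k}(\lambda_{1}, \dots, \lambda_{q})$. Then by the construction of $\widehat{\Phi_{\mathbf{x}}}$ from \cref{theorem: symmetricFundamental}, the resulting value equals $\Phi_{\mathbf{x}}(\lambda_{1}, \dots, \lambda_{q})$. So the identity
\[
\Psi_{\mathbf{x}}(M) \;=\; \Phi_{\mathbf{x}}(\lambda_{1}, \dots, \lambda_{q})
\]
holds. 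Since $M \in \text{Sym}^{\tt{F}}_{q}(\mathbb{R})$ is full rank, all eigenvalues $\lambda_{i}$ are non-zero, so \cref{lemma: PhiZero} applies and yields
\[
\Phi_{\mathbf{x}}(\lambda_{1}, \dots, \lambda_{q}) = 0 \iff \mathbf{x} \in \overline{\mathcal{L}(\lambda_{1}, \dots, \lambda_{q})}.
\]
Chaining these two equivalences gives exactly the claimed biconditional.

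Second I need to confirm that $\Psi_{\mathbf{x}}$ really is a $\text{Sym}_{q}(\mathbb{R})$-polynomial, i.e., a homogeneous polynomial in the matrix entries. By \cref{theorem: charPolyElementarySym}, each $s_{k}$ is a homogeneous polynomial of degree $k$ in the entries of $M$. By the definition of $\phi_{\mathbf{x}}$ in \cref{equation: phiX}, and because $\mathbf{x} \in \chi_{q}$ ensures $\sum_{i : x_{i} > 0} x_{i} = \sum_{i : x_{i} < 0} (-x_{i})$, the polynomial $\phi_{\mathbf{x}}$ is homogeneous in $(\alpha_{1}, \dots, \alpha_{q})$, and then $\Phi_{\mathbf{x}}$ is a product of such polynomials and hence also homogeneous, say of degree $D$. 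Substituting the homogeneous polynomials $s_{k}$ of degree $k$ for the arguments of $\widehat{\Phi_{\mathbf{x}}}$ in $\Phi_{\mathbf{x}}$ preserves homogeneity because each monomial of $\widehat{\Phi_{\mathbf{x}}}$ of weighted degree $D$ (with $e_{k}$ carrying weight $k$) becomes a monomial of degree $D$ in the entries of $M$. Hence $\Psi_{\mathbf{x}}$ is a homogeneous polynomial of degree $D$ in the entries of $M$.

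The only subtle point, and what I would call the main obstacle, is the homogeneity bookkeeping after passing through \cref{theorem: symmetricFundamental}: one must argue that $\widehat{\Phi_{\mathbf{x}}}$, when viewed with $e_{k}$ carrying weight $k$, is \emph{weighted-homogeneous} of weight equal to the ordinary degree $D$ of $\Phi_{\mathbf{x}}$. This is a standard fact (it follows from the uniqueness of the representation in the fundamental theorem and from scaling all $\alpha_{i}$ by a common factor), but without it one could not guarantee that the composition with the $s_{k}$'s is homogeneous in the entries of $M$. Once that is in hand, the lemma is immediate.
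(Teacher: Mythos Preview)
Your proposal is correct and follows exactly the paper's approach: the paper constructs $\Psi_{\mathbf{x}}$ via \cref{eqn: Psi}, argues (in the paragraph preceding the lemma) that it is homogeneous because $e_i$ and $s_i$ share degree $i$ and $\Phi_{\mathbf{x}}$ is homogeneous, and then states that the biconditional follows immediately from the construction together with \cref{lemma: PhiZero}. Your treatment of the weighted-homogeneity bookkeeping is in fact more careful than the paper's one-line justification.
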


The proof immediately follows from the construction of
$\Psi_{\mathbf{x}}$, and \cref{lemma: PhiZero}.
Before we can proceed further, we will need to
establish a few important properties of
$\overline{\mathcal{L}(\lambda_{1}, \dots, \lambda_{q})}$.

\begin{lemma}\label{lemma: dimReduction}
    Let $\mathcal{L}_{1} = \mathcal{L}(\alpha_{1}, \dots, \alpha_{q})$
    and $\mathcal{L}_{2} = \mathcal{L}(\beta_{1}, \dots, \beta_{q})$
    be lattices such that $\beta_{i} > 0$ for all $i \in [q]$.
    Let $d_{1} = \dim(\mathcal{L}_{1})$ and
    $d_{2} = \dim(\mathcal{L}_{2})$.
    If $\overline{\mathcal{L}_{2}} \subsetneq
    \overline{\mathcal{L}_{1}}$,
    then $d_{2} < d_{1}$.
\end{lemma}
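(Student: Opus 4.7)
The plan is to establish the contrapositive in two stages: first show that $\overline{\mathcal{L}_2} \subseteq \overline{\mathcal{L}_1}$ alone forces $d_2 \le d_1$, and then show that in the equality case $d_2 = d_1$ the containment can never be strict, using positivity of the $\beta_i$. Two structural observations do the bookkeeping: each permutation $\sigma \in S_q$ acts as a lattice automorphism of $\chi_q$, so $\overline{\mathcal{L}_j} = \bigcup_\sigma \mathcal{L}_j^\sigma$ is a finite union of sublattices of $\chi_q$, all of dimension $d_j$, and by construction $\overline{\mathcal{L}_j}$ is $S_q$-invariant.

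For the first stage, I would view each $\mathcal{L}_1^\sigma$ inside its real span $V_\sigma \subseteq \chi_q \otimes_{\mathbb{Z}} \mathbb{R}$, a $d_1$-dimensional subspace. The inclusion $\mathcal{L}_2 \subseteq \overline{\mathcal{L}_1}$ decomposes $\mathcal{L}_2$ as the finite union $\bigcup_\sigma (\mathcal{L}_2 \cap V_\sigma)$ of sublattices, and tensoring with $\mathbb{Q}$ gives the analogous decomposition of $\mathcal{L}_2 \otimes \mathbb{Q} \cong \mathbb{Q}^{d_2}$. The standard fact that a vector space over an infinite field is never a union of finitely many proper subspaces forces some $\mathcal{L}_2 \cap V_\sigma$ to have rank $d_2$, and hence $\mathrm{span}_{\mathbb{R}}(\mathcal{L}_2) \subseteq V_\sigma$, giving $d_2 \le d_1$.

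For the second stage, suppose $d_2 = d_1$ and the $\sigma$ above satisfies $\mathrm{span}_{\mathbb{R}}(\mathcal{L}_2) = V_\sigma$. This is where positivity is used: $\beta_i > 0$ implies $\mathcal{L}_2$ is a \emph{primitive} (saturated) sublattice of $\chi_q$, because $\prod_i \beta_i^{k x_i} = 1$ together with $\beta_i > 0$ forces $\prod_i \beta_i^{x_i} = 1$ after taking positive $k$-th roots. Primitive sublattices of $\chi_q$ are determined by their real span, so $\mathcal{L}_2 = \chi_q \cap V_\sigma$; in particular $\mathcal{L}_1^\sigma \subseteq \mathcal{L}_2 \subseteq \overline{\mathcal{L}_2}$. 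Using $S_q$-invariance of $\overline{\mathcal{L}_2}$ and the fact that the $S_q$-action sweeps $\mathcal{L}_1^\sigma$ through every $\mathcal{L}_1^\pi$, this propagates to $\overline{\mathcal{L}_1} \subseteq \overline{\mathcal{L}_2}$, contradicting strict containment.

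The main obstacle is isolating exactly where positivity must enter. Without it, a $q=2$ example with $\beta_1 = -\beta_2$ yields $\mathcal{L}_2 = 2\mathbb{Z}$ sitting as a non-primitive, rank-one sublattice of the primitive lattice $\mathbb{Z} \cong \chi_2$, showing that the conclusion can fail in the equality case $d_2 = d_1$ when $\mathcal{L}_2$ is not primitive. Thus the real content of the lemma is that positivity upgrades ``$\mathcal{L}_2$ lies in some $V_\sigma$'' to ``$\mathcal{L}_2$ equals the maximal integer lattice in that $V_\sigma$,'' which is precisely what rules out strict containment at matched dimensions.
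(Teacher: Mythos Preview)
Your proof is correct and follows essentially the same approach as the paper: both use the finite-union-of-proper-subspaces fact over $\mathbb{Q}$ to obtain $d_2 \le d_1$, and both exploit positivity of the $\beta_i$ via the implication $c\mathbf{y} \in \mathcal{L}_2 \Rightarrow \mathbf{y} \in \mathcal{L}_2$. You package the second step as primitivity of $\mathcal{L}_2$ in $\chi_q$ and conclude $\mathcal{L}_1^\sigma \subseteq \mathcal{L}_2$ wholesale via $S_q$-invariance, whereas the paper chases a single witness $\mathbf{x} \in \overline{\mathcal{L}_1} \setminus \overline{\mathcal{L}_2}$ through the same implication; these are the same argument in slightly different dress.
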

\begin{proof}
    We will first define the rational span of a
    lattice $\mathcal{L}$ of dimension $d$
    with a lattice basis $\mathcal{B} =
    \{\mathbf{y}_{1}, \dots, \mathbf{y}_{d}\}$ to be
    $$\text{Qsp}(\mathcal{L}) = \{c_{1}\cdot \mathbf{y}_{1}
    + \cdots + c_{d} \cdot \mathbf{y}_{d} \Big|\hspace{0.1cm}
    c_{1}, \dots, c_{d} \in \mathbb{Q}\}.$$
    It is easily seen that $\text{Qsp}(\mathcal{L})$ is a 
    $\mathbb{Q}$-vector space of the same dimension $d$, as the
    lattice dimension of $\mathcal{L}$.
    We also note that given any $\sigma \in S_{q}$,
    $\mathcal{L}^{\sigma}$ is also a lattice 
    of dimension $d$, with $\mathcal{B}^{\sigma}$
    as a lattice basis.
    Therefore,
    $$\overline{\text{Qsp}(\mathcal{L})} =
    \bigcup_{\sigma \in S_{q}} \text{Qsp}(\mathcal{L}^{\sigma})$$
    for any lattice $\mathcal{L}$.

    Now, let us consider any $\mathbf{y} \in
    \overline{\text{Qsp}(\mathcal{L}_{2})}$.
    There exists some $\sigma \in S_{q}$,
    such that $\mathbf{y} \in
    \text{Qsp}(\mathcal{L}_{2}^{\sigma})$.
    So, there exists some
    $c \in \mathbb{Z}_{> 0}$, such that
    $c \cdot \mathbf{y} \in \mathcal{L}_{2}^{\sigma}
    \subset \overline{\mathcal{L}_{2}}$.
    From our assumptions, we know that
    $\overline{\mathcal{L}_{2}} \subset
    \overline{\mathcal{L}_{1}}$.
    Therefore,  there exists some
    $\tau \in S_{q}$ such that
    $c \cdot \mathbf{y} \in \mathcal{L}_{1}^{\tau}$.
    So, $\mathbf{y} \in
    \text{Qsp}(\mathcal{L}_{1}^{\tau}) \subset
    \overline{\text{Qsp}(\mathcal{L}_{1})}$.
    Therefore, we have shown that
    $$\overline{\text{Qsp}(\mathcal{L}_{2})} \subseteq
    \overline{\text{Qsp}(\mathcal{L}_{1})}.$$
    Hence, 
    $$\text{Qsp}(\mathcal{L}_{2}) \subseteq
    \overline{\text{Qsp}(\mathcal{L}_{2})} \subseteq
    \overline{\text{Qsp}(\mathcal{L}_{1})} = 
    \bigcup_{\sigma \in S_{q}} \text{Qsp}(\mathcal{L}_{1}^{\sigma}).$$
    Since this is a finite union of $\mathbb{Q}$-vector spaces,
    we know that there must be some $\tau \in S_{q}$,
    such that $\text{Qsp}(\mathcal{L}_{2}) \subseteq
    \text{Qsp}(\mathcal{L}_{1}^{\tau})$.
    In particular, this implies that
    $d_{2} = \dim(\text{Qsp}(\mathcal{L}_{2})) \leq
    \dim(\text{Qsp}(\mathcal{L}_{1}^{\tau})) = d_{1}.$
    Now, for a contradiction assume   $d_{2} = d_{1}$.
    This implies that in fact,
    $\text{Qsp}(\mathcal{L}_{2}) =
    \text{Qsp}(\mathcal{L}_{1}^{\tau})$.
    
    Now, we recall that there exists some
    $\mathbf{x} \in \overline{\mathcal{L}_{1}}
    \setminus \overline{\mathcal{L}_{2}}$. Being outside $\mathcal{L}_{2}$, clearly $\mathbf{x} \neq \mathbf{0}$.
    There exists some $\sigma \in S_{q}$,
    such that $\mathbf{x} \in \mathcal{L}_{1}^{\sigma}$.
    If we let $\mathbf{y} \in \chi_{q}$ such that
    $y_{i} = x_{\tau^{-1}(\sigma(i))}$ for all $i \in [q]$,
    we see that $\mathbf{y} \in \mathcal{L}_{1}^{\tau}
    \subset \text{Qsp}(\mathcal{L}_{1}^{\tau}) = 
    \text{Qsp}(\mathcal{L}_{2})$.
    Therefore, there exists some $c \in \mathbb{Z}_{> 0}$,
    such that $c \cdot \mathbf{y} \in \mathcal{L}_{2}$.
    We recall that $\mathcal{L}_{2} =
    \mathcal{L}(\beta_{1}, \dots, \beta_{q})$.
    Therefore, by definition, this implies that
    $$\beta_{1}^{c \cdot y_{1}} \cdots \beta_{q}^{c \cdot y_{q}} 
    = \left(\beta_{1}^{y_{1}} \cdots \beta_{q}^{y_{q}}\right)^{c} = 1
    \implies \beta_{1}^{y_{1}} \cdots \beta_{q}^{y_{q}}= \pm 1.$$
    Since $\beta_{i} > 0$ for all $i \in [q]$,
    it must be the case that
    $\beta_{1}^{y_{1}} \cdots \beta_{q}^{y_{q}} = 1$,
    which means that $\mathbf{y} \in \mathcal{L}_{2}$.
    By the construction of $\mathbf{y}$,
    it follows that $\mathbf{x} \in
    \mathcal{L}_{2}^{\tau^{-1} \sigma} \subset
    \overline{\mathcal{L}_{2}}$.
    But this contradicts our assumption about
    $\mathbf{x}$ that $\mathbf{x} \notin
    \overline{\mathcal{L}_{2}}$.
    Therefore, our assumption that $d_{2} = d_{1}$
    must be false.
\end{proof}

We will now see how \cref{lemma: dimReduction}
allows us to prove the $\#$P-hardness of
$\PlEVAL(M)$.
We will need just one more definition.

\begin{definition}\label{definition: reduct}
    Let $M \in \text{Sym}_{q}^{\tt{F}}(\mathbb{R}_{\neq 0})$
    have eigenvalues $(\lambda_{1}, \dots, \lambda_{q})$.
    Let $\mathcal{F}$ be a countable set of
    $\text{Sym}_{q}(\mathbb{R})$-polynomials.
    A matrix $N \in \text{Sym}_{q}^{\tt{F}}(\mathbb{R}_{\neq 0})$
    is called a reduct of $(M, \mathcal{F})$ if
    $\PlEVAL(N) \leq \PlEVAL(M)$, and
    $F(N) \neq 0$ for all $F \in \mathcal{F}$.
    We  denote the set of reducts of $(M, \mathcal{F})$ by
    $$\mathfrak{R}(M, \mathcal{F}) =
    \Big\{N \in \text{Sym}_{q}^{\tt{F}}(\mathbb{R}_{\neq 0})
    \Big|\hspace{0.1cm} N \text{ is a reduct of }
    (M, \mathcal{F})\Big\}.$$
\end{definition}

Consider any 
$M \in
\text{Sym}_{q}^{\tt{pd}}(\mathbb{R}_{\neq 0})$,
with eigenvalues $(\lambda_{1}, \dots, \lambda_{q})$.
We define
\begin{equation}\label{equation: FM}
    \mathcal{F}_{M} = \Big\{\Psi_{\mathbf{y}}:
    \mathbf{0}  \neq \mathbf{y} \in \chi_{q}, \Psi_{\mathbf{y}}(M) \neq 0 \Big\}.
\end{equation}
Since $\chi_{q}$ is a countable set, it follows that
$\mathcal{F}_{M}$ is also a countable set
of $\text{Sym}_{q}(\mathbb{R})$-polynomials.
From \cref{lemma: PsiExists}, we  see that
$\mathcal{F}_{M}$ consists of precisely the polynomials  indexed by vectors in  $\chi_{q} \setminus
\overline{\mathcal{L}(\lambda_{1}, \dots, \lambda_{q})}$.

Now, if the eigenvalues $(\lambda_{1}, \dots, \lambda_{q})$
of $M$ have a lattice basis $\mathcal{B}$
such that $\mathcal{B} \subset \mathcal{D}$, we know from
\cref{theorem: latticeHardness} that $\PlEVAL(M)$
is $\#$P-hard.
Otherwise, for any lattice basis $\mathcal{B}$
of the eigenvalues, there exists some
$\mathbf{x} \in \mathcal{B} \setminus \mathcal{D}$.
Our goal will be to prove that there exists some
$$N_{1} \in \mathfrak{R}(M, \mathcal{F}_{M} \cup
\{\Psi_{\mathbf{x}}\}) \cap
\text{Sym}_{q}^{\tt{pd}}(\mathbb{R}_{\neq 0}).$$
If such an $N_1$  exists, let its
 eigenvalues be $(\mu_{1}, \dots, \mu_{q})$.
From \cref{definition: reduct}, 
for all $\mathbf{0}  \neq  \mathbf{y} \in \chi_{q}$, if
$\Psi_{\mathbf{y}}(M) \neq 0$, then we also have 
$\Psi_{\mathbf{y}}(N_{1}) \neq 0$.
In other words,
for all $\mathbf{0}  \neq  \mathbf{y}  \in \chi_{q}$,
$\Psi_{\mathbf{y}}(N_{1}) = 0$ implies that
$\Psi_{\mathbf{y}}(M) = 0$.
So, \cref{lemma: PsiExists} now implies that
$$\overline{\mathcal{L}(\mu_{1}, \dots, \mu_{q})} \subseteq
\overline{\mathcal{L}(\lambda_{1}, \dots, \lambda_{q})}.$$
Moreover, by our choice of $N_{1}$,
we know that $\Psi_{\mathbf{x}}(N_{1}) \neq 0$, 
i.e., $\mathbf{x} \not \in \overline{\mathcal{L}(\mu_{1}, \dots, \mu_{q})} $, again by \cref{lemma: PsiExists}.
But by our choice of $\mathbf{x}$,
we know that $\mathbf{x} \in \mathcal{B} \subset
\mathcal{L}(\lambda_{1}, \dots, \lambda_{q}) \subseteq
\overline{\mathcal{L}(\lambda_{1}, \dots, \lambda_{q})}$.
Therefore, we see that in fact,
$$\overline{\mathcal{L}(\mu_{1}, \dots, \mu_{q})} \subsetneq
\overline{\mathcal{L}(\lambda_{1}, \dots, \lambda_{q})}.$$
Therefore,
\cref{lemma: dimReduction} tells us that
$$\dim(\mathcal{L}(\mu_{1}, \dots, \mu_{q})) <
\dim(\mathcal{L}(\lambda_{1}, \dots, \lambda_{q})).$$
Now we can repeat this process with
$N_{1}$ in place of $M$.
Since the lattice dimension of
$\mathcal{L}(\lambda_{1}, \dots, \lambda_{q})$
is some finite $d \in \mathbb{Z}_{\geq 0}$,
we can only  repeat this process
$k$ times, for some $k \leq d$, until we
find some $N_{k} \in 
\text{Sym}_{q}^{\tt{pd}}(\mathbb{R}_{\neq 0})$
such that it has a basis $\mathcal{B} \subset \mathcal{D}$,
and thus $\PlEVAL(N_{k})$ is $\#$P-hard
by \cref{theorem: latticeHardness}.
Furthermore,
$$\PlEVAL(N_{k}) \leq \cdots \leq \PlEVAL(N_{1})
\leq \PlEVAL(M).$$
This would then allow us to prove that $\PlEVAL(M)$
is $\#$P-hard.

In other words, given a matrix $M \in
\text{Sym}_{q}^{\tt{pd}}(\mathbb{R}_{\neq 0})$,
and any non-zero $\mathbf{x} \in \chi_{q} \setminus \mathcal{D}$,
our goal will be to prove that there exists some
$N \in \mathfrak{R}(M, \mathcal{F}_{M} \cup
\{\Psi_{\mathbf{x}}\} ) \cap
\text{Sym}_{q}^{\tt{pd}}(\mathbb{R}_{\neq 0})$.
In subsequent sections, we will
find larger and larger classes of matrices $M$
and $\mathbf{x}$ for which we can find such an
$N$.


\section{Lattice on Diagonal Entries}\label{sec: lattice_diagonal}

Let $M \in \text{Sym}_{q}^{\tt{F}}(\mathbb{R}_{\neq 0})$,
and let
$\mathcal{F}$ be a countable set of
$\text{Sym}_{q}(\mathbb{R})$-polynomials.
Since our goal is to show that for some
choices of $M$ and $\mathcal{F}$,
$\mathfrak{R}(M, \mathcal{F}) \neq \emptyset$,
we will first prove some sufficient conditions for this
set to be non-empty.
We will recall from \cref{lemma: generatingSet}
that given any $M \in 
\text{Sym}_{q}(\mathbb{R}_{\neq 0})$, there exists
a generating set $\{g_{1}, \dots, g_{d}\}$
for the entries of $M$.
We also recall that \cref{lemma: MequivalentCM}
allows us to replace this matrix $M$ with a
matrix $N = c M$ satisfying $\PlEVAL(M)
\equiv \PlEVAL(N)$ and
such that the entries of $N$ are generated by
$\{g_{t}\}_{t \in [d]}$ with  exponents $e_{ijt} \geq 0$
for all $i, j \in [q]$, and $t \in [d]$.
Therefore, \cref{lemma: thickeningInterpolation}
tells us that $\PlEVAL(\mathcal{T}_{M}(\mathbf{p})) \leq
\PlEVAL(M)$ for all $\mathbf{p} \in \mathbb{R}^{d}$,
where $\mathcal{T}_{M}$ is as defined in
\cref{definition: mathcalT}, and satisfies
$\mathcal{T}_{M}(g_{1}, \dots, g_{d}) = M$.

\begin{lemma}\label{lemma: thickeningWorks}
    Let $M \in \text{Sym}_{q}^{\tt{F}}(\mathbb{R}_{\neq 0})$
    be a matrix,
    whose entries are generated by $\{g_{t}\}_{t \in [d]}$.
    Let $\mathcal{F}$ be a countable set of 
    $\text{Sym}_{q}(\mathbb{R})$-polynomials,
    such that for each $F \in \mathcal{F}$,
    there exists some $\mathbf{p}_{F} \in \mathbb{R}^{d}$,
    such that
    $F(\mathcal{T}_{M}(\mathbf{p}_{F})) \neq 0$.
    Then, there exists  $N
    = \mathcal{T}_{M}(\mathbf{p}^{*})\in 
    \mathfrak{R}(M, \mathcal{F}) \cap
    \text{Sym}_{q}^{\tt{F}}(\mathbb{R}_{\neq 0})$
    for some $\mathbf{p}^{*} \in \mathbb{R}^{d}$.
    Moreover,
    \begin{itemize}
        \item if $M \in
            \text{Sym}_{q}^{\tt{F}}(\mathbb{R}_{> 0})$,
            we can ensure that $N \in
            \text{Sym}_{q}^{\tt{F}}(\mathbb{R}_{> 0})$,
        \item if $M \in
            \text{Sym}_{q}^{\tt{pd}}(\mathbb{R}_{\neq 0})$,
            we can ensure that $N \in
            \text{Sym}_{q}^{\tt{pd}}(\mathbb{R}_{\neq 0})$,
        \item if $M \in
            \text{Sym}_{q}^{\tt{pd}}(\mathbb{R}_{> 0})$,
            we can ensure that $N \in
            \text{Sym}_{q}^{\tt{pd}}(\mathbb{R}_{> 0})$.
    \end{itemize}
\end{lemma}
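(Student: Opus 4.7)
The plan is to use a Zariski/measure-theoretic genericity argument on the parameter space $\mathbb{R}^{d}$. By \cref{lemma: thickeningInterpolation}, for \emph{every} $\mathbf{p} \in \mathbb{R}^{d}$ we automatically have $\PlEVAL(\mathcal{T}_{M}(\mathbf{p})) \leq \PlEVAL(M)$, so it only remains to choose $\mathbf{p}^{*}$ so that $N = \mathcal{T}_{M}(\mathbf{p}^{*})$ (a) has all entries non-zero, (b) has full rank, and (c) satisfies $F(N) \neq 0$ for every $F \in \mathcal{F}$. I would try to show that the set of $\mathbf{p}$ failing any of these conditions is a countable union of proper algebraic subvarieties of $\mathbb{R}^{d}$, hence a Lebesgue-null set, so its complement is dense.

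By \cref{definition: mathcalT}, each entry $\mathcal{T}_{M}(\mathbf{p})_{ij}$ is $\pm$ a single monomial in $\mathbf{p}$, hence a non-zero polynomial (it takes value $\pm 1$ at $\mathbf{p}=(1,\dots,1)$). For every $F \in \mathcal{F}$, the composition $P_{F}(\mathbf{p}) := F(\mathcal{T}_{M}(\mathbf{p}))$ is a polynomial in $\mathbf{p}$, and $P_{F}$ is not identically zero because the hypothesis supplies a witness $\mathbf{p}_{F}$ with $P_{F}(\mathbf{p}_{F}) \neq 0$. Similarly, $D(\mathbf{p}) := \det(\mathcal{T}_{M}(\mathbf{p}))$ is a polynomial in $\mathbf{p}$, and since $\mathcal{T}_{M}(g_{1},\dots,g_{d}) = M$ is full rank, $D$ is not the zero polynomial. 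The zero set of each of these polynomials is therefore a proper algebraic subvariety of $\mathbb{R}^{d}$ and has Lebesgue measure zero. The bad set
$$
B \;=\; \{D=0\} \;\cup\; \bigcup_{i,j \in [q]} \{\mathcal{T}_{M}(\mathbf{p})_{ij}=0\} \;\cup\; \bigcup_{F \in \mathcal{F}} \{P_{F}=0\}
$$
is a countable union of such, hence measure zero; choose any $\mathbf{p}^{*} \in \mathbb{R}^{d} \setminus B$ and set $N = \mathcal{T}_{M}(\mathbf{p}^{*})$.

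For the three moreover clauses I would carry out the same argument, but on a suitable non-empty open subset $U \subset \mathbb{R}^{d}$ instead of on all of $\mathbb{R}^{d}$. (i) If $M \in \text{Sym}_{q}^{\tt{F}}(\mathbb{R}_{>0})$, then all sign exponents $e_{ij0}=0$, and on $U = \mathbb{R}_{>0}^{d}$ every entry $\mathcal{T}_{M}(\mathbf{p})_{ij} = \prod_{t} p_{t}^{e_{ijt}}$ is automatically positive. (ii) If $M \in \text{Sym}_{q}^{\tt{pd}}(\mathbb{R}_{\neq 0})$, positive-definiteness is an open condition on symmetric matrices (all leading principal minors positive), and since $\mathcal{T}_{M}$ is continuous with $\mathcal{T}_{M}(g_{1},\dots,g_{d}) = M \in \text{PD}$, the preimage $U = \mathcal{T}_{M}^{-1}(\text{PD})$ is a non-empty open subset of $\mathbb{R}^{d}$. (iii) For $M \in \text{Sym}_{q}^{\tt{pd}}(\mathbb{R}_{>0})$, take the intersection $U = \mathbb{R}_{>0}^{d} \cap \mathcal{T}_{M}^{-1}(\text{PD})$, which still contains $(g_{1},\dots,g_{d})$. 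In each case $U$ is a non-empty open subset of $\mathbb{R}^{d}$, hence has positive Lebesgue measure; and a non-zero polynomial on $\mathbb{R}^{d}$ cannot vanish on a non-empty open set, so $B \cap U$ remains measure zero within $U$. Any $\mathbf{p}^{*} \in U \setminus B$ yields an $N$ of the required type.

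I expect no substantial obstacle: the essential content is simply that each of the countably many polynomials we need to avoid is genuinely a non-zero polynomial in $\mathbf{p}$, which is furnished directly by the hypothesis witnesses $\mathbf{p}_{F}$ together with the full-rank / positivity / positive-definiteness of $M$ itself. The mildly delicate point is the moreover cases, where one must notice that positive-definiteness (and positivity of entries) defines a non-empty open neighborhood of $(g_{1},\dots,g_{d})$ in $\mathbb{R}^{d}$, so that the genericity argument still goes through after restricting to that neighborhood.
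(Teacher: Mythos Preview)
Your proposal is correct and follows essentially the same approach as the paper: both argue that the bad set (where $\det$, some entry, or some $F\circ\mathcal{T}_M$ vanishes) is a countable union of proper algebraic subvarieties hence measure zero, and for the moreover clauses both restrict to a non-empty open neighborhood of $(g_1,\dots,g_d)$ on which positivity/positive-definiteness is preserved. The only cosmetic differences are that the paper restricts to $(\mathbb{R}_{>1})^d$ (exploiting $g_t>1$) to get nonzero entries for free rather than adding the entry zero sets to $B$, and uses a product of intervals rather than the full preimage $\mathcal{T}_M^{-1}(\mathrm{PD})$ for the positive-definite case.
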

\begin{proof}
    We will first replace $M$ with $c M$
    as guaranteed by \cref{lemma: MequivalentCM}, for some $c >0$.
  Since $M \in \text{Sym}_{q}^{\tt{F}}(\mathbb{R}_{\neq 0})$,
    it follows that $c M \in
    \text{Sym}_{q}^{\tt{F}}(\mathbb{R}_{\neq 0})$ as well.
    If $M \in \text{Sym}_{q}^{\tt{F}}(\mathbb{R}_{> 0})$
    (similarly, $\text{Sym}_{q}^{\tt{pd}}(\mathbb{R}_{\neq 0})$
    and $\text{Sym}_{q}^{\tt{pd}}(\mathbb{R}_{> 0})$), then
    so is $c M$.
    We note from the definition of
    $\mathcal{T}_{M}$ in 
    \cref{definition: mathcalT},
    that the entries of the matrix
    $\mathcal{T}_{M}(\mathbf{p})$ are all polynomials in
    $\mathbf{p}$.
    Since each $F \in \mathcal{F}$ is a 
    $\text{Sym}_{q}(\mathbb{R})$-polynomial,
    it follows that
    $F(\mathcal{T}_{M}(\mathbf{p}))$
    is a polynomial in $\mathbf{p}$ for
    all $F \in \mathcal{F}$.
    Moreover, by our choice of $\mathcal{F}$, we know
    that for each $F \in \mathcal{F}$, there exists
    some $\mathbf{p}_{F} \in \mathbb{R}^{d}$
    such that $F(\mathcal{T}_{M}(\mathbf{p}_{F})) \neq 0$.
    We also note that
    $\det(\mathcal{T}_{M}(g_{1}, \dots, g_{d})) = \det(M) \neq 0$,
    since $M$ is full rank.
    Therefore, $F(\mathcal{T}_{M}(\mathbf{p})):
    \mathbb{R}^{d} \rightarrow \mathbb{R}$
    is a non-zero polynomial for all $F \in \mathcal{F} \cup \{\det\}$.
    
    Let us now assume that $M$ is positive definite.
    So, the eigenvalues
    $(\lambda_{1}, \dots, \lambda_{q})$ of
    $\mathcal{T}_{M}(g_{1}, \dots, g_{d}) = M$ will all be positive.
    We note that each entry of the matrix
    $\mathcal{T}_{M}(\mathbf{p})$ is a continuous function
    of $\mathbf{p}$.
    So, each of the coefficients of the 
    characteristic polynomial of $\mathcal{T}_{M}(\mathbf{p})$
    is a continuous function of $\mathbf{p}$.
    Since the eigenvalues of $\mathcal{T}_{M}(\mathbf{p})$
    are simply the roots of the characteristic
    polynomial, we can use the well-known
    fact that the roots of a polynomial
    are continuous in the coefficients \cite{harris1987shorter}
    to see that eigenvalues of
    $\mathcal{T}_{M}(\mathbf{p})$ are also continuous
    as functions of $\mathbf{p}$.
    In other words, there exist open intervals $I_{1}, \dots, I_{d}$
    such that $g_{t} \in I_{t}$ for all $t \in [d]$,
    and $\mathcal{T}_{M}(\mathbf{p})$ is positive definite
    for all $\mathbf{p} \in
    (I_{1} \times \cdots \times I_{d}) = U$.
    Since $g_{t} > 1$ for all $t \in [q]$, we may further
    assume that $U \subset (\mathbb{R}_{> 1})^{d}$.
    By construction, we note that $U$ has positive measure.
    If on the other hand, $M$ is not positive definite,
    we can simply let
    $U = (\mathbb{R}_{> 1})^{d}$.
    
    We note that for each $F \in \mathcal{F} \cup \{\det \}$,
    the set $\emptyset_{F} =
    \{\mathbf{p}: F(\mathcal{T}_{M}(\mathbf{p})) = 0\}$
    has  measure $0$, since $F(\mathcal{T}_{M}(\mathbf{p}))$
    is a non-zero polynomial.
    The measure of a countable union of measure $0$
    sets is also $0$.
    Therefore, there exists some $\mathbf{p}^{*} = 
    (p^{*}_{1}, \dots, p^{*}_{d}) \in U 
    \setminus \cup_{F} \emptyset_{F}$.
    If we let $N = \mathcal{T}_{M}(\mathbf{p}^{*})$,
    we now see that $F(N) \neq 0$ for all
    $F \in \mathcal{F} \cup \{ \det\}$.
    In particular, this implies that $\det(N) \neq 0$.
    Moreover, since $\mathbf{p}^{*} \in U \subseteq
    (\mathbb{R}_{> 1})^{d}$,
    we see that $N \in
    \text{Sym}_{q}^{\tt{F}}(\mathbb{R}_{\neq 0})$.
    So, $N \in \mathfrak{R}(M, \mathcal{F}) \cap
    \text{Sym}_{q}^{\tt{F}}(\mathbb{R}_{\neq 0})$
    is the required matrix.

    Moreover, if $M \in \text{Sym}_{q}(\mathbb{R}_{> 0})$,
    we know from \cref{equation: generatingM}
    that $e_{ij0} = 0$ for all $i, j \in [q]$.
    Since $\mathbf{p}^{*} \in U$, this implies that
    $N \in \text{Sym}_{q}(\mathbb{R}_{> 0})$ as well.
    Finally, if $M \in \text{Sym}_{q}^{\tt{pd}}(\mathbb{R}_{\neq 0})$,
    we note that by our choice of $U$,
    $N \in \text{Sym}_{q}^{\tt{pd}}(\mathbb{R}_{\neq 0})$
    as well.
    Therefore, we see that this $N$ satisfies
    all the requirements of the lemma.
\end{proof}

\begin{corollary}\label{corollary: thickeningSufficient}
    Let $M \in \text{Sym}_{q}^{\tt{pd}}(\mathbb{R}_{> 0})$.
    Let $\mathcal{F}$ be a countable set of
    $\text{Sym}_{q}(\mathbb{R})$-polynomials
    such that $F(M) \neq 0$ for all $F \in \mathcal{F}$.
    Let $\mathbf{0} \neq  \mathbf{x}  \in \chi_{q}$
    such that $\Psi_{\mathbf{x}}(T_{n}M) \neq 0$
    for some integer $n \geq 1$.
    Then, there exists some $N \in 
    \mathfrak{R}(M, \mathcal{F} \cup 
    \mathcal{F}_{M} \cup \{\Psi_{\mathbf{x}}\})
    \cap \text{Sym}_{q}^{\tt{pd}}(\mathbb{R}_{> 0})$.
\end{corollary}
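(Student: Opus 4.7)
The plan is to reduce the corollary to a direct application of \cref{lemma: thickeningWorks}, with the matrix $M$ and the countable polynomial set $\mathcal{F} \cup \mathcal{F}_M \cup \{\Psi_{\mathbf{x}}\}$, invoking the third bullet to preserve positive definiteness over $\mathbb{R}_{>0}$. First I would fix a generating set $\{g_t\}_{t \in [d]}$ for the entries of $M$, applying \cref{lemma: MequivalentCM} so that all $e_{ijt} \ge 0$. Because $M \in \text{Sym}_q^{\tt{pd}}(\mathbb{R}_{>0})$, every entry is positive, forcing $e_{ij0} = 0$ for all $i,j \in [q]$. In particular $\mathcal{T}_M(g_1, \dots, g_d) = M$.

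To invoke \cref{lemma: thickeningWorks}, it suffices to exhibit, for every $F$ in the countable set, some $\mathbf{p}_F \in \mathbb{R}^d$ with $F(\mathcal{T}_M(\mathbf{p}_F)) \neq 0$. For each $F \in \mathcal{F}$ we are given $F(M) \neq 0$, so $\mathbf{p}_F = (g_1, \dots, g_d)$ works. Similarly, every $\Psi_{\mathbf{y}} \in \mathcal{F}_M$ satisfies $\Psi_{\mathbf{y}}(M) \neq 0$ by the definition of $\mathcal{F}_M$ in \cref{equation: FM}, and the same choice of $\mathbf{p}_{\Psi_{\mathbf{y}}}$ suffices.

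The key case is $\Psi_{\mathbf{x}}$, and it is here that the hypothesis $\Psi_{\mathbf{x}}(T_n M) \neq 0$ is used. The central observation is that, since $e_{ij0} = 0$ throughout, the thickened entries satisfy
\[
(T_n M)_{ij} \;=\; M_{ij}^n \;=\; g_1^{n\, e_{ij1}} \cdots g_d^{n\, e_{ijd}} \;=\; \mathcal{T}_M(g_1^n, \dots, g_d^n)_{ij},
\]
so $T_n M = \mathcal{T}_M(g_1^n, \dots, g_d^n)$. Thus $\mathbf{p}_{\Psi_{\mathbf{x}}} = (g_1^n, \dots, g_d^n) \in \mathbb{R}^d$ witnesses the required non-vanishing for $\Psi_{\mathbf{x}}$. (If the matrix were allowed to have negative entries, this step would be the main obstacle, because the sign twists $(-1)^{n e_{ij0}}$ could mix into the identification; positivity is what makes thickening align cleanly with $\mathcal{T}_M$.)

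With all hypotheses of \cref{lemma: thickeningWorks} verified, the lemma produces an $\mathbf{p}^* \in (\mathbb{R}_{>1})^d$ and $N = \mathcal{T}_M(\mathbf{p}^*)$ satisfying $F(N) \neq 0$ for every $F$ in the set, and hence $N \in \mathfrak{R}(M, \mathcal{F} \cup \mathcal{F}_M \cup \{\Psi_{\mathbf{x}}\})$. The third bullet of the lemma guarantees $N \in \text{Sym}_q^{\tt{pd}}(\mathbb{R}_{>0})$, completing the proof. No further work is needed beyond the identification $T_n M = \mathcal{T}_M(g_1^n, \dots, g_d^n)$ and bookkeeping.
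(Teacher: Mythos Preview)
Your proposal is correct and follows essentially the same approach as the paper: fix a generating set, use \cref{lemma: MequivalentCM} to normalize exponents, witness nonvanishing for $\mathcal{F} \cup \mathcal{F}_M$ at $\mathbf{p} = (g_1,\dots,g_d)$ and for $\Psi_{\mathbf{x}}$ at $\mathbf{p} = (g_1^n,\dots,g_d^n)$ via the identity $T_nM = \mathcal{T}_M(g_1^n,\dots,g_d^n)$ (valid because positivity forces $e_{ij0}=0$), then invoke \cref{lemma: thickeningWorks}. The paper's proof is essentially identical, with the only additional remark being that the replacement $M \leadsto cM$ preserves the nonvanishing of each $F$ by homogeneity.
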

\begin{proof}
    Let $\{g_{t}\}_{t \in [d]}$ be a generating set
    of the entries of $M$.
    As we have already seen, we may replace $M$
    with $c \cdot M$ as guaranteed by
    \cref{lemma: MequivalentCM}.
    We have seen in the remark following
    \cref{definition: matrixPoly}
    that $F(M) = 0$ if and only if $F(c M) = 0$.
    Therefore, $F(c \cdot M) \neq 0$
    for all $F \in \mathcal{F}$.
    By our choice of $\mathcal{F}$, and 
    the definition of $\mathcal{F}_{M}$ in
    \cref{equation: FM},
    we see that for all $F \in \mathcal{F} \cup \mathcal{F}_{M}$,
    $F(\mathcal{T}_{M}(g_{1}, \dots, g_{d})) = F(M) \neq 0$.
    Moreover,
    we also see that
    since $M \in \text{Sym}_{q}(\mathbb{R}_{> 0})$,
    $\Psi_{\mathbf{x}}(\mathcal{T}_{M}
    ((g_{1})^{n}, \dots, (g_{d})^{n}))
    = \Psi_{\mathbf{x}}(T_{n}M) \neq 0$ as given.
    Let  $\mathcal{F}' = \mathcal{F} \cup \mathcal{F}_{M} 
    \cup \{\Psi_{\mathbf{x}}\}$. It
    is a countable set of $\text{Sym}_{q}(\mathbb{R})$-polynomials
     with the property that for every $F \in \mathcal{F}'$
    there exists some $\mathbf{p}_{F} \in \mathbb{R}^{d}$
    such that $F(\mathcal{T}_{M}(\mathbf{p}_{F})) \neq 0$.
    We are also given that $M \in \text{Sym}_{q}^{\tt{pd}}(\mathbb{R}_{> 0})$.
    Therefore, \cref{lemma: thickeningWorks}
    implies that there exists some $N \in 
    \mathfrak{R}(M, \mathcal{F}')
    \cap \text{Sym}_{q}^{\tt{pd}}(\mathbb{R}_{> 0})$.
\end{proof}

Now we will consider $M \in
\text{Sym}_{q}^{\tt{pd}}(\mathbb{R}_{> 0})$
and $\mathbf{0} \neq  \mathbf{x} \in \chi_{q}$,
such that $\Psi_{\mathbf{x}}(T_{n}M) = 0$ for all $n$.
To better understand these matrices, it will be helpful
for us to study the function $\Psi_{\mathbf{x}}:
\text{Sym}_{q}(\mathbb{R}) \rightarrow \mathbb{R}$ better.
We already know that the function $\Psi_{\mathbf{x}}$ is a
homogeneous polynomial of some degree $d \geq 1$.

Now, we will try to understand the individual terms of this polynomial.
Recall that we have defined
$$\mathcal{P}_{q^{2}}(d) = \left\{(k_{ij})_{i, j \leq q} \in \mathbb{Z}^{q^{2}}
\hspace{0.08cm}\Big|\hspace{0.1cm} k_{ij} \geq 0,
\sum_{i, j \in [q]}k_{ij} = d\right\}.$$
Given any $\mathbf{k} \in \mathcal{P}_{q^{2}}(d)$,
we will now define
$m_{\mathbf{k}}:
\text{Sym}_{q}(\mathbb{R}) \rightarrow \mathbb{R}$, such that
$$m_{\mathbf{k}}(M)
= \prod_{i, j \in [q]}M_{ij}^{k_{ij}}.$$
As
 $\Psi_{\mathbf{x}}$  had degree $d$, each $m_{\mathbf{k}}$  represents a monomial term
in $\Psi_{\mathbf{x}}$ (with possibly a 0 coefficient).
We then let
$$\mathcal{M} = \left\{m_{\mathbf{k}}\hspace{0.08cm}\Big|\hspace{0.1cm}
\mathbf{k} \in \mathcal{P}_{q^{2}}(d)\right\},$$
and we can now express $\Psi_{\mathbf{x}}$ as
$$\Psi_{\mathbf{x}}(M) =
\sum_{m \in \mathcal{M}}c_{m}(\mathbf{x})m(M),$$
where $c_{m}(\mathbf{x}) \in \mathbb{R}$ for all $m \in \mathcal{M}$.

\begin{lemma}\label{lemma: diagonalLatticeSpecial}
    Let $\mathbf{0} \neq  \mathbf{x}  \in \chi_{q}$.
    There exists $\epsilon > 0$, such that given any
    matrix $M \in \text{Sym}_{q}(\mathbb{R})$ satisfying
    the conditions that $|M_{ij} - I_{ij}| < \epsilon$
    for all $i, j \in [q]$, and
    $\Psi_{\mathbf{x}}(T_{n}M) = 0$ for all $n \geq 1$,
    then
    $$\Psi_{\mathbf{x}}(\text{diag}(M)) = 0,$$
    where ${\rm \text{diag}}(M) \in \text{Sym}_{q}(\mathbb{R})$
    is the diagonal matrix such that
    $$\text{diag}(M)_{ij} = \begin{cases}
        M_{ii} & \text{if } i = j,\\
        0 & \text{otherwise}.
    \end{cases}$$
\end{lemma}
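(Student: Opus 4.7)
The plan is to combine the strong hypothesis $\Psi_{\mathbf{x}}(T_{n}M) = 0$ for all $n\ge 1$ (a Vandermonde-type condition) with a magnitude-gap argument that isolates the purely diagonal monomials of $\Psi_{\mathbf{x}}$ when $M$ is close enough to the identity.

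First, I will expand $\Psi_{\mathbf{x}}(M) = \sum_{\mathbf{k}\in \mathcal{P}_{q^{2}}(d)} c_{\mathbf{k}}(\mathbf{x})\, m_{\mathbf{k}}(M)$, where $d$ is the fixed degree of $\Psi_{\mathbf{x}}$, and set $\alpha_{\mathbf{k}} := m_{\mathbf{k}}(M) = \prod_{i,j} M_{ij}^{k_{ij}}$. Since $(T_{n}M)_{ij} = M_{ij}^{n}$ each monomial is multiplicative in $n$, so $m_{\mathbf{k}}(T_{n}M) = \alpha_{\mathbf{k}}^{n}$ and the hypothesis becomes
\[
\sum_{\mathbf{k}} c_{\mathbf{k}}(\mathbf{x})\, \alpha_{\mathbf{k}}^{n} \;=\; 0 \qquad \text{for every } n \geq 1.
\]
Grouping indices $\mathbf{k}$ by the (finitely many) distinct values of $\alpha_{\mathbf{k}}$ and using the classical linear independence of distinct nonzero geometric sequences (any $s$ distinct nonzero bases yield a full-rank Vandermonde system on the values $n = 1, \dots, s$) — with the terms $\alpha_{\mathbf{k}} = 0$ dropping out automatically for $n \geq 1$ — I conclude that
\[
\sum_{\mathbf{k}:\, \alpha_{\mathbf{k}} = \alpha} c_{\mathbf{k}}(\mathbf{x}) \;=\; 0 \qquad \text{for every nonzero value } \alpha.
\]

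Second, I will choose $\epsilon$ to force a magnitude gap. Let $\mathcal{P}^{\mathrm{diag}} := \{\mathbf{k} \in \mathcal{P}_{q^{2}}(d) : k_{ij} = 0 \text{ for all } i \neq j\}$. The bound $|M_{ij} - I_{ij}| < \epsilon$ gives $M_{ii} \in (1-\epsilon, 1+\epsilon)$ (in particular $M_{ii} > 0$ once $\epsilon < 1$) and $|M_{ij}| < \epsilon$ for $i \neq j$. Hence for $\mathbf{k} \in \mathcal{P}^{\mathrm{diag}}$ we have $|\alpha_{\mathbf{k}}| \geq (1-\epsilon)^{d} > 0$, whereas for $\mathbf{k}' \notin \mathcal{P}^{\mathrm{diag}}$ the product $\alpha_{\mathbf{k}'}$ contains at least one factor of absolute value less than $\epsilon$, so $|\alpha_{\mathbf{k}'}| \leq (1+\epsilon)^{d} \cdot \epsilon$. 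Choosing $\epsilon > 0$ small enough — depending only on $d$, hence only on $\mathbf{x}$ — so that $(1+\epsilon)^{d}\epsilon < (1-\epsilon)^{d}$, the set $\{\alpha_{\mathbf{k}} : \mathbf{k} \in \mathcal{P}^{\mathrm{diag}}\}$ is strictly separated in absolute value from $\{\alpha_{\mathbf{k}'} : \mathbf{k}' \notin \mathcal{P}^{\mathrm{diag}}\}$, and in particular is disjoint from it.

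Third, I will assemble the conclusion. Since $m_{\mathbf{k}}(\text{diag}(M)) = \alpha_{\mathbf{k}}$ for $\mathbf{k} \in \mathcal{P}^{\mathrm{diag}}$ and $0$ otherwise,
\[
\Psi_{\mathbf{x}}(\text{diag}(M)) \;=\; \sum_{\mathbf{k} \in \mathcal{P}^{\mathrm{diag}}} c_{\mathbf{k}}(\mathbf{x})\, \alpha_{\mathbf{k}} \;=\; \sum_{\alpha \neq 0} \alpha \cdot \Big(\sum_{\mathbf{k} \in \mathcal{P}^{\mathrm{diag}}:\, \alpha_{\mathbf{k}} = \alpha} c_{\mathbf{k}}(\mathbf{x}) \Big).
\]
By the separation of step two, for every nonzero $\alpha$ attained on $\mathcal{P}^{\mathrm{diag}}$ the inner sum coincides with the full sum $\sum_{\mathbf{k}:\, \alpha_{\mathbf{k}} = \alpha} c_{\mathbf{k}}(\mathbf{x})$, which vanishes by step one. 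Hence $\Psi_{\mathbf{x}}(\text{diag}(M)) = 0$. The main subtle point is the magnitude-gap step: one must notice that collisions $\alpha_{\mathbf{k}} = \alpha_{\mathbf{l}}$ within $\mathcal{P}^{\mathrm{diag}}$ are harmless because the Vandermonde conclusion is already stated for summed coefficients of equal-valued $\alpha$'s, so the $M_{ii}$'s need not be multiplicatively independent — only bounded away from the small off-diagonal products.
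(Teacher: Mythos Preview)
Your proof is correct and follows essentially the same approach as the paper: both expand $\Psi_{\mathbf{x}}(T_nM)$ as a power sum in the monomial values $m_{\mathbf{k}}(M)$, use a Vandermonde argument to conclude that the coefficient sum over each level set $\{m_{\mathbf{k}}(M)=\alpha\}$ vanishes (for nonzero $\alpha$), and then exploit a magnitude gap between diagonal monomials ($\geq (1-\epsilon)^d$) and non-diagonal ones ($\leq (1+\epsilon)^{d-1}\epsilon$ in the paper, $(1+\epsilon)^d\epsilon$ in yours) to identify the diagonal level sets with the full level sets. The paper makes the choice $\epsilon = 1/(3d)$ explicit and carries out the ratio estimate $(1+\epsilon)^d/(1-\epsilon)^d \cdot \epsilon/(1+\epsilon) < e/3 < 1$, whereas you leave $\epsilon$ implicit; your handling of possibly-zero monomial values is slightly more careful than the paper's, but the argument is the same.
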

\begin{proof}
    We will let $\epsilon = \nicefrac{1}{3d}$,
    and assume that 
    $|M_{ij} - I_{ij}| < \epsilon$ for all $i, j \in [q]$.
    By construction of $\mathcal{M}$, we see that
    $$\Psi_{\mathbf{x}}(T_{n}M) = \sum_{m \in \mathcal{M}}
    c_{m}(\mathbf{x})\left(m(M)\right)^{n}.$$
    Note that each $m \in \mathcal{M}$ is $m_{\mathbf{k}}$ for some
    $\mathbf{k} \in \mathcal{P}_{q^{2}}(d)$, where $d$ is the degree of $\Psi_{\mathbf{x}}$.
    Now, we let
    $$\mathcal{M}(M) = \left\{m(M)
    \hspace{0.08cm}\Big|\hspace{0.1cm} m \in
    \mathcal{M}\right\}$$
    be the set of all values of the monomial terms in
    $\mathcal{M}$ when evaluated at $M$.
    For all the values $v \in \mathcal{M}(M)$, we let
    $$c_{v, M}(\mathbf{x}) = \sum_{m \in \mathcal{M}:\ m(M)\ =\ v}
    c_{m}(\mathbf{x}).$$
    So, we see that
    $$\Psi_{\mathbf{x}}(T_{n}M) = \sum_{v \in \mathcal{M}(M)}c_{v, M}(\mathbf{x}) \cdot v^{n}.$$
    We can see that $|\mathcal{M}(M)|$ is $O(1)$.
    So, in particular, the equations $\Psi_{\mathbf{x}}(T_{n}M) = 0$
    form a full rank Vandermonde system of linear equations.
    This implies that $c_{v, M}(\mathbf{x}) = 0$ for all $v \in \mathcal{M}(M)$.
    
    Setting that aside for a moment,
    we also note that by construction of $\text{diag}(M)$,
    $$m_{\mathbf{k}}(\text{diag}(M)) = \begin{cases}
        m_{\mathbf{k}}(M) & \text{if } k_{ij} = 0\ \forall\ i \neq j,\\
        0 & \text{otherwise}.
    \end{cases}$$
    Therefore, if we define
    $$\text{diag}(\mathcal{M}) = \left\{m_{\mathbf{k}}
    \hspace{0.08cm}\Big|\hspace{0.1cm}
    \mathbf{k} \in \mathcal{P}_{q^{2}}(d),
    \sum_{i\in [q]}k_{ii} = d\right\},$$
    we see that $\text{diag}(\mathcal{M}) \subset \mathcal{M}$,
    and
    $$\Psi_{\mathbf{x}}(\text{diag}(M)) = 
    \sum_{m \in \text{diag}(\mathcal{M})}
    c_{m}(\mathbf{x})m(M).$$
    So, if we let
    $$\text{diag}(\mathcal{M})(M) = \left\{m(M)
    \hspace{0.08cm}\Big|\hspace{0.1cm} m \in
    \text{diag}(\mathcal{M})\right\}, ~~\text{ and  }~~
    c_{v, \text{diag}(M)}(\mathbf{x}) =
    \sum_{m \in \text{diag}(\mathcal{M}):\
    m(M)\ =\ v} c_{m}(\mathbf{x})$$
    for all $v \in \text{diag}(\mathcal{M})(M)$,
    then we see that
    $$\Psi_{\mathbf{x}}(\text{diag}(M)) = 
    \sum_{v \in \text{diag}(\mathcal{M})(M)}
    c_{v, \text{diag}(M)}(\mathbf{x}) \cdot v.$$

    Let us now consider $m_{\mathbf{k}} \in
    \text{diag}(\mathcal{M})$.
    We note that by our choice of $\epsilon$ and $M$,
    $m_{\mathbf{k}}(M) > (1 - \epsilon)^{d}$.
    On the other hand, let us now consider some $m_{\mathbf{k}'}
    \in \mathcal{M} \setminus \text{diag}(\mathcal{M})$.
    We note that  by our choice of $M$,
    $|m_{\mathbf{k}'}(M)| < (1 + \epsilon)^{d-1} \cdot \epsilon$.
    Therefore,
    $$\frac{|m_{\mathbf{k}'}(M)|}{|m_{\mathbf{k}}(M)|}
    < \frac{(1 + \epsilon)^{d-1} \cdot \epsilon}{(1 - \epsilon)^{d}}
    = \left(\frac{1 + \epsilon}{1 - \epsilon}\right)^{d}\cdot
    \frac{\epsilon}{1 + \epsilon}.$$
    Now, we note that
    $$\left(\frac{1 + \epsilon}{1 - \epsilon}\right)^{d}
    = \left(1 + \frac{2\epsilon}{1 - \epsilon}\right)^{d}
    < \left(e^{\left(\frac{2\epsilon}{1 - \epsilon}\right)}\right)^{d}.$$
    From our choice of $\epsilon = \nicefrac{1}{3d}$, we know that
    $\epsilon \leq \nicefrac{1}{3}$
    since  $d \geq 1$.
    Therefore, $1 - \epsilon \geq \nicefrac{2}{3}$. So,
    $$\frac{2\epsilon}{1 - \epsilon} \leq 3 \epsilon 
    \le \frac{1}{d}.$$
    So,
    $$\frac{|m_{\mathbf{k}'}(M)|}{|m_{\mathbf{k}}(M)|} < \left(\frac{1 + \epsilon}{1 - \epsilon}\right)^{d} \cdot 
    \frac{\epsilon}{1 + \epsilon}
    < \frac{e}{3}
    < 1.$$
    This proves that if $v \in \text{diag}(\mathcal{M})(M)$,
    then there cannot exist any
    $m \in \mathcal{M} \setminus \text{diag}(\mathcal{M})$,
    such that $v = m(M)$.
    Therefore,
    $$c_{v, \text{diag}(M)}(\mathbf{x}) =
    \sum_{m \in \text{diag}(\mathcal{M}):\
    m(M)\ =\ v} c_{m}(\mathbf{x}) = 
    \sum_{m \in \mathcal{M}:\
    m(M)\ =\ v} c_{m}(\mathbf{x}) = 
    c_{v, M}(\mathbf{x})$$
    for all $v \in \text{diag}(\mathcal{M})(M)$.
    But we already know that $c_{v, M}(\mathbf{x}) = 0$
    for all $v \in \mathcal{M}(M)$.
    Therefore, $c_{v, \text{diag}(M)}(\mathbf{x}) = 0$
    for all $v \in \text{diag}(\mathcal{M})(M)$.
    Therefore,
    $$\Psi_{\mathbf{x}}(\text{diag}(M)) = 
    \sum_{v \in \text{diag}(\mathcal{M})(M)}
    c_{v, \text{diag}(M)}(\mathbf{x}) \cdot v = 0.$$
\end{proof}

In order to make effective use of
\cref{lemma: diagonalLatticeSpecial}, we
will need one more lemma that provides a sufficient
condition to prove that $\mathfrak{R}(M, \mathcal{F})$
is not empty.

\begin{definition}\label{definition: mathcalS}
    Let $M = HDH^{\tt{T}} \in
    \text{Sym}_{q}^{\tt{pd}}(\mathbb{R})$.
    We now define the function
    $\mathcal{S}_{M}: \mathbb{R} \rightarrow
    \text{Sym}_{q}^{\tt{pd}}(\mathbb{R})$,
    such that
    $$\mathcal{S}_{M}(\theta) = HD^{\theta} H^{\tt{T}}.$$
\end{definition}

\begin{remark*}
    Let $M = HDH^{\tt{T}} \in
    \text{Sym}_{q}^{\tt{pd}}(\mathbb{R})$
    have the eigenvalues $(\lambda_{1}, \dots, \lambda_{q})$.
    Since these are all positive, $D^{\theta}$ is well-defined
    for all $\theta \in \mathbb{R}$, as the diagonal matrix
    such that $(D^{\theta})_{ii} = (\lambda_{i})^{\theta}$.
    Consequently, $\mathcal{S}_{M}$ is well-defined.
    Moreover, we note that
    $$\mathcal{S}_{M}(\theta)_{ij}
    = (H_{i1}H_{j1})e^{\theta \cdot \log(\lambda_{1})}
    + \cdots + 
    (H_{iq}H_{jq})e^{\theta \cdot \log(\lambda_{q})},$$
    for all $i, j \in [q]$.
    Therefore, $\mathcal{S}_{M}(\theta)_{ij}$
    is a real analytic function in $\theta$
    for all $i, j \in [q]$.    
\end{remark*}

\begin{lemma}\label{lemma: stretchingWorks}
    Let $M = HDH^{\tt{T}} \in
    \text{Sym}_{q}^{\tt{pd}}(\mathbb{R}_{\neq 0})$.
    Let $\mathcal{F}$ be a countable set of 
    $\text{Sym}_{q}(\mathbb{R})$-polynomials,
    such that for each $F \in \mathcal{F}$,
    there exists some $\theta_{F} \in \mathbb{R}$
    such that
    $F(\mathcal{S}_{M}(\theta_{F})) \neq 0$.
    Then, given any interval $(a, b ) \subset \mathbb{R}$,
    there exists some $\theta^{*} \in (a, b)$
    such that $\mathcal{S}_{M}(\theta^{*}) \in 
    \mathfrak{R}(M, \mathcal{F}) \cap
    \text{Sym}_{q}^{\tt{pd}}(\mathbb{R}_{\neq 0})$.
\end{lemma}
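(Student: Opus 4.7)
The plan is to exploit real analyticity of $\mathcal{S}_M(\theta)$, combined with a countability (measure-zero) argument. This mirrors the proof of \cref{lemma: thickeningWorks}, with the polynomial parameterization $\mathcal{T}_M$ replaced by the analytic curve $\mathcal{S}_M$; because $\mathcal{S}_M$ is not polynomial, one cannot quote standard "polynomials have measure-zero zero sets" facts directly, but the identity theorem for real analytic functions gives an adequate substitute.

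First I would observe that $\mathcal{S}_M(\theta)_{ij} = \sum_{k=1}^{q} (H_{ik}H_{jk}) e^{\theta \log \lambda_k}$ is a finite real-linear combination of exponentials in $\theta$ (here I use that $M$ is positive definite, so every $\lambda_k > 0$ and $\log \lambda_k$ is real), and is therefore real analytic on $\mathbb{R}$. Consequently, for any $\text{Sym}_q(\mathbb{R})$-polynomial $F$, the scalar function $\theta \mapsto F(\mathcal{S}_M(\theta))$ is real analytic. By hypothesis this function is nonzero at $\theta_F$, so by the identity theorem it is not identically zero, hence its zero set $Z_F \subset \mathbb{R}$ is discrete and in particular at most countable.

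In exactly the same way, each entry $\theta \mapsto \mathcal{S}_M(\theta)_{ij}$ is real analytic, and equals $M_{ij} \neq 0$ at $\theta = 1$ (since $M \in \text{Sym}_q^{\tt{pd}}(\mathbb{R}_{\neq 0})$); hence its zero set $Z_{ij}$ is at most countable. Then $Z = \bigl(\bigcup_{F \in \mathcal{F}} Z_F\bigr) \cup \bigl(\bigcup_{i,j \in [q]} Z_{ij}\bigr)$ is a countable union of countable sets and has Lebesgue measure zero. In particular, for any interval $(a,b) \subset \mathbb{R}$ I can choose some $\theta^{*} \in (a,b) \setminus Z$.

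To finish, I would verify the three reduct conditions for $N := \mathcal{S}_M(\theta^{*})$. (i) $N \in \text{Sym}_q^{\tt{pd}}(\mathbb{R}_{\neq 0})$: $N$ is symmetric with eigenvalues $\lambda_i^{\theta^{*}} > 0$, so $N$ is positive definite (hence full rank), and every entry is nonzero by avoidance of the $Z_{ij}$. (ii) $F(N) \neq 0$ for all $F \in \mathcal{F}$ by avoidance of the $Z_F$. (iii) $\PlEVAL(N) \leq \PlEVAL(M)$ by \cref{lemma: stretchingBasic} applied to $\Delta = D^{\theta^{*}}$, provided $\mathcal{L}(\lambda_1,\ldots,\lambda_q) \subseteq \mathcal{L}(\lambda_1^{\theta^{*}},\ldots,\lambda_q^{\theta^{*}})$; but if $(k_1,\ldots,k_q) \in \mathcal{L}(\lambda)$ then $\sum_i k_i = 0$ and $\prod_i \lambda_i^{k_i} = 1$, so raising to the $\theta^{*}$ power gives $\prod_i (\lambda_i^{\theta^{*}})^{k_i} = 1$, confirming the inclusion.

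The only genuinely delicate ingredient is the step from "nonzero at a single $\theta_F$" to "nonzero on a co-countable set," which is precisely where real analyticity of $\mathcal{S}_M$ is essential; once that is in place the remaining verifications are routine.
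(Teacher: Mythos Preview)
Your proof is correct and follows essentially the same approach as the paper: use real analyticity of each $F(\mathcal{S}_M(\theta))$ and of each entry $\mathcal{S}_M(\theta)_{ij}$, invoke the identity theorem so that the zero sets are discrete, avoid a countable union of such sets inside $(a,b)$, and then apply \cref{lemma: stretchingBasic} via the lattice inclusion $\mathcal{L}(\lambda)\subseteq\mathcal{L}(\lambda^{\theta^*})$. The paper additionally notes $\theta^*\neq 0$ to get the reverse lattice inclusion and hence $\PlEVAL(N)\equiv\PlEVAL(M)$, but since the definition of reduct only requires $\PlEVAL(N)\leq\PlEVAL(M)$, your one-sided inclusion already suffices.
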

\begin{proof}
    We will let
    $$\mathcal{A} = \Big\{\mathcal{S}_{M}(\theta)_{ij}:
    i, j \in [q] \Big\} \cup
    \Big\{F(\mathcal{S}_{M}(\theta)):
    F \in \mathcal{F} \Big\}.$$
    Since each $F \in \mathcal{F}$ is polynomial
    in the entries of the input matrix, and since
    the entries of the matrix
    $\mathcal{S}_{M}(\theta)$
    are all real analytic functions of $\theta$,
    we see that each function in $\mathcal{A}$
    is a real valued analytic function of $\theta$.
    We also know that
    $(\mathcal{S}_{M}(1))_{ij} = 
    (HDH^{\tt{T}})_{ij} = M_{ij} \neq 0$
    for all $i, j \in [q]$, and that
    $F(\mathcal{S}_{M}(\theta_{F})) \neq 0$
    for all $F \in \mathcal{F}$.
    So, each function in $\mathcal{A}$ is a non-zero
    analytic function.

    Since all the functions in $\mathcal{A}$ are
    non-zero real analytic functions, the Identity
    Theorem for Real Analytic Functions
    (\cite{krantz2002primer}, Corollary 1.2.7)
    implies that
    the set of zeros of any of these functions does not
    have an accumulation point.
    In particular, each of these functions has
    only finitely many zeros within $(a, b)$.
    Since the countable union of finite sets is
    only countable, there exists some $a < \theta^{*} < b$
    such that if we let $N = \mathcal{S}_{M}(\theta^{*})$,
    then $N_{ij} \neq 0$ for all $i, j \in [q]$,
    and $F(N) \neq 0$ for all $F \in \mathcal{F}$.
    We may also assume that $\theta^{*} \neq 0$.

    We also note that since $\theta^{*} \neq 0$, and
    $\lambda_{i} > 0$ for all $i \in [q]$,
    $$\lambda_{1}^{y_{1}} \cdots \lambda_{q}^{y_{q}} = 1 \iff
    (\lambda_{1}^{y_{1}} \cdots \lambda_{q}^{y_{q}})^{\theta^{*}} = 1
    \iff (\lambda_{1}^{\theta^{*}})^{y_{1}} \cdots
    (\lambda_{q}^{\theta^{*}})^{y_{q}} = 1,$$
    for all $\mathbf{y} \in \chi_{q}$.
    This implies that
    $\mathcal{L}(\lambda_{1}, \dots, \lambda_{q}) = 
    \mathcal{L}(\lambda_{1}^{\theta^{*}}, \dots,
    \lambda_{q}^{\theta^{*}})$.
    Now, \cref{lemma: stretchingBasic} implies that
    $\PlEVAL(N) \equiv \PlEVAL(M)$.
    Therefore, $N = \mathcal{S}_{M}(\theta^{*}) \in
    \mathfrak{R}(M, \mathcal{F}) \cap
    \text{Sym}_{q}^{\tt{pd}}(\mathbb{R}_{\neq 0})$
    is the required matrix.
\end{proof}

We will now use \cref{lemma: stretchingWorks} to
prove a useful corollary.
\begin{corollary}\label{corollary: stretchingWorksPositive}
    Let $M \in
    \text{Sym}_{q}^{\tt{pd}}(\mathbb{R}_{> 0})$.
    Let $\mathcal{F}$ be a countable set of 
    $\text{Sym}_{q}(\mathbb{R})$-polynomials,
    such that for each $F \in \mathcal{F}$,
    there exists some $\theta_{F} \in \mathbb{R}$
    such that
    $F(\mathcal{S}_{M}(\theta_{F})) \neq 0$.
    Then, there exists some
    $N = \mathcal{S}_{M}(\theta^{*}) \in 
    \mathfrak{R}(M, \mathcal{F}) \cap
    \text{Sym}_{q}^{\tt{pd}}(\mathbb{R}_{> 0})$.
\end{corollary}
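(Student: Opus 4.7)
The plan is to bootstrap off of \cref{lemma: stretchingWorks}, whose conclusion is almost what we want except that it only guarantees $N \in \text{Sym}_{q}^{\tt{pd}}(\mathbb{R}_{\neq 0})$ rather than $N \in \text{Sym}_{q}^{\tt{pd}}(\mathbb{R}_{> 0})$. The key observation is that \cref{lemma: stretchingWorks} is quite flexible: it lets us pick $\theta^{*}$ from any interval $(a,b) \subset \mathbb{R}$. So the task is simply to identify an interval on which positivity of entries is automatic, and then apply the earlier lemma to that interval.

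First I would note that $M = \mathcal{S}_{M}(1)$, so at $\theta = 1$ every entry $\mathcal{S}_{M}(\theta)_{ij}$ equals $M_{ij} > 0$ by hypothesis. From the remark following \cref{definition: mathcalS}, each entry $\mathcal{S}_{M}(\theta)_{ij}$ is a real analytic, hence continuous, function of $\theta$ on $\mathbb{R}$. Since there are only finitely many entries, by continuity there exists $\epsilon > 0$ such that $\mathcal{S}_{M}(\theta)_{ij} > 0$ for all $i, j \in [q]$ and all $\theta \in (1 - \epsilon, 1 + \epsilon)$.

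Next I would apply \cref{lemma: stretchingWorks} with this choice of $M$ and $\mathcal{F}$ (whose hypotheses are satisfied by assumption) on the interval $(a,b) = (1 - \epsilon, 1 + \epsilon)$. This yields some $\theta^{*} \in (1 - \epsilon, 1 + \epsilon)$ such that $N = \mathcal{S}_{M}(\theta^{*}) \in \mathfrak{R}(M, \mathcal{F}) \cap \text{Sym}_{q}^{\tt{pd}}(\mathbb{R}_{\neq 0})$. By our choice of $\epsilon$, every entry of $N$ is strictly positive, so in fact $N \in \text{Sym}_{q}^{\tt{pd}}(\mathbb{R}_{> 0})$, as required. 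There is no real obstacle here — the work was already done in \cref{lemma: stretchingWorks}, and the corollary is just packaging the positivity conclusion via a continuity argument localized at $\theta = 1$.
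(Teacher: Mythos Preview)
Your proof is correct and essentially identical to the paper's: both note that $\mathcal{S}_{M}(1) = M$ has all positive entries, use continuity to find an interval around $\theta = 1$ on which all entries remain positive, and then apply \cref{lemma: stretchingWorks} on that interval.
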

\begin{proof}
    We note that $M_{ij} > 0$ for all $i, j \in [q]$.
    We also note that $\mathcal{S}_{M}$
    defined in \cref{definition: mathcalS}
    is continuous as a function of $\theta$.
    Since $\mathcal{S}_{M}(1) = M$,
    there exists some $a < 1 < b$
    such that $\mathcal{S}_{M}(\theta)_{ij}
    > 0$ for all $i, j \in [q]$, and $\theta \in (a, b)$.
    We can apply \cref{lemma: stretchingWorks}
    with this choice of $a < b$ to find
    the required $N = \mathcal{S}_{M}(\theta^{*}) \in 
    \mathfrak{R}(M, \mathcal{F}) \cap 
    \text{Sym}_{q}^{\tt{pd}}(\mathbb{R}_{> 0})$.
\end{proof}

\begin{theorem}\label{theorem: positiveDefiniteDiagonal}
    Let $M \in \text{Sym}_{q}^{\tt{pd}}(\mathbb{R}_{> 0})$.
    Let $\mathcal{F}$ be a countable set of
    $\text{Sym}_{q}(\mathbb{R})$-polynomials,
    such that $F(M) \neq 0$ for all $F \in \mathcal{F}$.
    Let $\mathbf{0}  \neq  \mathbf{x}  \in \chi_{q}$.
    If $\Phi_{\mathbf{x}}(M_{11}, \dots, M_{qq}) \neq 0$,
    there exists some $N \in 
    \mathfrak{R}(M, \mathcal{F} \cup 
    \mathcal{F}_{M} \cup \{\Psi_{\mathbf{x}}\}) \cap
    \text{Sym}_{q}^{\tt{pd}}(\mathbb{R}_{> 0})$.
\end{theorem}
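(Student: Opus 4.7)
The plan is to use stretching to produce an intermediate matrix $N' \in \text{Sym}_q^{\tt{pd}}(\mathbb{R}_{>0})$ on which \cref{corollary: thickeningSufficient} becomes directly applicable, and then apply that corollary to obtain $N$. For this, the crux is to establish that there exist $\theta_0 \in \mathbb{R}$ and $n_0 \geq 1$ with $\Psi_{\mathbf{x}}(T_{n_0}\mathcal{S}_M(\theta_0)) \neq 0$.

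To prove this crux, I would argue by contradiction. Suppose $\Psi_{\mathbf{x}}(T_n\mathcal{S}_M(\theta)) = 0$ for every $\theta \in \mathbb{R}$ and every $n \geq 1$. Since $\mathcal{S}_M(0) = HIH^{\tt T} = I$ and each entry of $\mathcal{S}_M(\theta)$ is continuous in $\theta$, for the $\epsilon > 0$ supplied by \cref{lemma: diagonalLatticeSpecial} (applied to $\mathbf{x}$) one can find $\delta > 0$ with $|\mathcal{S}_M(\theta)_{ij} - I_{ij}| < \epsilon$ whenever $|\theta| < \delta$. That lemma then forces $\Psi_{\mathbf{x}}(\text{diag}(\mathcal{S}_M(\theta))) = 0$ on $(-\delta,\delta)$. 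But $\theta \mapsto \Psi_{\mathbf{x}}(\text{diag}(\mathcal{S}_M(\theta)))$ is a real analytic function of $\theta$ (each diagonal entry $\mathcal{S}_M(\theta)_{ii}$ is a sum of exponentials $\lambda_k^\theta$), so the Identity Theorem for real analytic functions forces it to vanish identically on $\mathbb{R}$. Evaluating at $\theta = 1$ then yields $\Phi_{\mathbf{x}}(M_{11},\dots,M_{qq}) = \Psi_{\mathbf{x}}(\text{diag}(M)) = 0$, contradicting the hypothesis.

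Armed with such $\theta_0$ and $n_0$, I would next pick an open interval $(a,b) \ni 1$ small enough that $\mathcal{S}_M(\theta) \in \text{Sym}_q^{\tt{pd}}(\mathbb{R}_{>0})$ throughout, which is possible by continuity since $M = \mathcal{S}_M(1)$ already lies in $\text{Sym}_q^{\tt{pd}}(\mathbb{R}_{>0})$. Apply \cref{lemma: stretchingWorks} to $M$ on this interval with the countable family $\mathcal{F} \cup \mathcal{F}_M \cup \{\Psi_{\mathbf{x}}(T_{n_0}\,\cdot\,)\}$: the required witnesses exist because $F(\mathcal{S}_M(1)) = F(M) \neq 0$ for every $F \in \mathcal{F} \cup \mathcal{F}_M$ and $\Psi_{\mathbf{x}}(T_{n_0}\mathcal{S}_M(\theta_0)) \neq 0$ by the previous step (note that $\Psi_{\mathbf{x}}(T_{n_0}\,\cdot\,)$ is homogeneous in the entries of its argument and hence a $\text{Sym}_q(\mathbb{R})$-polynomial). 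This produces $\theta^* \in (a,b)$ such that $N' := \mathcal{S}_M(\theta^*)$ lies in $\text{Sym}_q^{\tt{pd}}(\mathbb{R}_{>0})$, satisfies $\PlEVAL(N') \leq \PlEVAL(M)$, has $F(N') \neq 0$ for all $F \in \mathcal{F} \cup \mathcal{F}_M$, and has $\Psi_{\mathbf{x}}(T_{n_0}N') \neq 0$. Applying \cref{corollary: thickeningSufficient} to $N'$ (in place of $M$) with the family $\mathcal{F} \cup \mathcal{F}_M$ and the vector $\mathbf{x}$ then yields some $N \in \mathfrak{R}(N', \mathcal{F} \cup \mathcal{F}_M \cup \mathcal{F}_{N'} \cup \{\Psi_{\mathbf{x}}\}) \cap \text{Sym}_q^{\tt{pd}}(\mathbb{R}_{>0})$, and composing reductions ($\PlEVAL(N) \leq \PlEVAL(N') \leq \PlEVAL(M)$) gives the desired $N \in \mathfrak{R}(M, \mathcal{F} \cup \mathcal{F}_M \cup \{\Psi_{\mathbf{x}}\}) \cap \text{Sym}_q^{\tt{pd}}(\mathbb{R}_{>0})$.

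The main obstacle is the crux step. The instinctive approach would be to apply \cref{lemma: diagonalLatticeSpecial} directly at some $\theta$ near $0$ (where the matrix is close to $I$), but then the off-diagonal entries of $\mathcal{S}_M(\theta)$ are close to $0$ and in general fail to be positive, so such a matrix cannot itself serve as $N'$. The analyticity argument above circumvents this by using the local behavior near $\theta = 0$ only to rule out global vanishing, thereby transporting the non-vanishing of $\Psi_{\mathbf{x}}(T_{n_0}\mathcal{S}_M(\cdot))$ to a point $\theta^*$ near $\theta = 1$, where positivity of all entries is preserved.
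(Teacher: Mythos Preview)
Your proposal is correct and follows essentially the same approach as the paper: both use \cref{lemma: diagonalLatticeSpecial} near $\theta=0$ to extract information about $\Psi_{\mathbf{x}}(T_n\,\cdot\,)$, then a stretching step to transport that information to a point in $\text{Sym}_q^{\tt{pd}}(\mathbb{R}_{>0})$ near $\theta=1$, and finally \cref{corollary: thickeningSufficient}. The one organizational difference is that the paper invokes \cref{lemma: stretchingWorks} \emph{twice} (first on $(0,\delta)$ to pick a specific $M'$ with $\Psi_{\mathbf{x}}(\text{diag}(M'))\neq 0$ and then apply \cref{lemma: diagonalLatticeSpecial} at that single point, then again via \cref{corollary: stretchingWorksPositive} near $\theta=1$), whereas you replace the first invocation by a contradiction argument: the analyticity of $\theta\mapsto \Psi_{\mathbf{x}}(\text{diag}(\mathcal{S}_M(\theta)))$ lets you propagate vanishing on $(-\delta,\delta)$ to $\theta=1$, yielding the desired $(\theta_0,n_0)$ directly. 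This is a mild but genuine streamlining.
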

\begin{proof}
    We consider $\mathcal{S}_{M}: \mathbb{R} \rightarrow
    \text{Sym}_{q}^{\tt{F}}(\mathbb{R})$ as defined
    in \cref{definition: mathcalS}.
    We see that
    $F(\mathcal{S}_{M}(1)) = F(M) \neq 0$
    for all $F \in \mathcal{F} \cup \mathcal{F}_{M}$.
    We also see that the function
    $\zeta: N \mapsto \Psi_{\mathbf{x}}(\text{diag}(N))$
    is a $\text{Sym}_{q}(\mathbb{R})$-polynomial.
    We note that since $\text{diag}(N)$ is a diagonal
    matrix, its eigenvalues are precisely the
    diagonal values of $\text{diag}(N)$, i.e.,
    $(N_{11}, \dots, N_{qq})$.
    Therefore, from the construction of 
    $\Phi_{\mathbf{x}}$ and $\Psi_{\mathbf{x}}$,
    we know that
    $$\Psi_{\mathbf{x}}(\text{diag}(N)) = 0 \iff
    \Phi_{\mathbf{x}}(N_{11}, \dots, N_{qq}) = 0$$
    for all $N \in \text{Sym}_{q}^{\tt{F}}(\mathbb{R})$.
    In particular, since
    $\Phi_{\mathbf{x}}(\mathcal{S}_{M}(1)_{11},
    \dots, \mathcal{S}_{M}(1)_{qq}) = 
    \Phi_{\mathbf{x}}(M_{11}, \dots, M_{qq}) \neq 0$
    by our choice of $M$, we see that
    $\zeta(\mathcal{S}_{M}(1)) \neq 0$.
    So, if we let
    $$\mathcal{F}' = \mathcal{F} \cup \mathcal{F}_{M}
    \cup \{\zeta\},$$
    we see that for each $F \in \mathcal{F}'$,
    there exists some $\theta_{F} \in \mathbb{R}$
    such that $F(\mathcal{S}_{M}(\theta_{F})) \neq 0$.

    We now note that $\mathcal{S}_{M}(0)
    = HD^{0}H^{\tt{T}} = HH^{\tt{T}} = I$.
    Given $\mathbf{x} \in \chi_{q}$, we will now
    let $\epsilon > 0$ be the number whose
    existence is guaranteed by
    \cref{lemma: diagonalLatticeSpecial}.
    Since $\mathcal{S}_{M}(\theta)$ is continuous
    as a function of $\theta$, we know that there
    exists some $\delta > 0$, such that
    for all $0 < \theta < \delta$, for all $i, j \in [q]$,
    $|\mathcal{S}_{M}(\theta)_{ij} - I_{ij}| < \epsilon$.
    We can now use \cref{lemma: stretchingWorks}
    to find $0 < \theta^{*} < \delta$
    such that $M' = \mathcal{S}_{M}(\theta^{*})
    \in \mathfrak{R}(M, \mathcal{F}')$.
    It should be stressed here that the 
    entries of $M'$ may be negative.
    
    Since $\zeta \in \mathcal{F}'$, we know that
    $\zeta (M') = \Psi_{\mathbf{x}}(\text{diag}(M'))\neq 0$.
    Moreover, due to our choice of $\theta^{*} < \delta$,
    \cref{lemma: diagonalLatticeSpecial} allows us
    to conclude that there exists some $n \geq 1$
    such that $\Psi_{\mathbf{x}}(T_{n}M')) \neq 0$.
    We can now define the $\text{Sym}_{q}(\mathbb{R})$-polynomial $\xi: \text{Sym}_{q}(\mathbb{R}) 
    \rightarrow \mathbb{R}$ such that
    $\xi(N) = \Psi_{\mathbf{x}}(T_{n}N)$.
    Since $n \geq 1$ is an integer, the entries
    of $T_{n}(N)$ are all homogeneous polynomials
    in the entries of the matrix $N$, and therefore, $\xi$ is a 
    $\text{Sym}_{q}(\mathbb{R})$-polynomial.
    We have seen that $\xi(\mathcal{S}_{M}(\theta^{*}))
    \neq 0$.
    Therefore, since $M \in \text{Sym}_{q}^{\tt{pd}}
    (\mathbb{R}_{> 0})$, we may use
    \cref{corollary: stretchingWorksPositive} to
    find a new
    $M'' \in \mathfrak{R}(M, \mathcal{F} \cup \mathcal{F}_{M} 
    \cup \{\xi\}) \cap
    \text{Sym}_{q}^{\tt{pd}}(\mathbb{R}_{> 0})$.
    Since $\xi(M'') \neq 0$, we see
    that $\Psi_{\mathbf{x}}(T_{n}M'') \neq 0$.
    
    But then, \cref{corollary: thickeningSufficient}
    implies that there exists some $N \in
    \mathfrak{R}(M'', \mathcal{F} \cup \mathcal{F}_{M}
    \cup \{\Psi_{\mathbf{x}}\}) \cap 
    \text{Sym}_{q}^{\tt{pd}}(\mathbb{R}_{> 0})$.
    Since $\PlEVAL(M'') \leq \PlEVAL(M)$,
    this means that $N \in 
    \mathfrak{R}(M, \mathcal{F} \cup \mathcal{F}_{M} \cup
    \{\Psi_{\mathbf{x}}\}) \cap
    \text{Sym}_{q}^{\tt{pd}}(\mathbb{R}_{> 0})$
    is the required matrix.
\end{proof}
\section{Confluence and Pairwise Order Independence}\label{sec: order_indep_matrices}

In this section, we will prove the $\#$P-hardness
of $\PlEVAL(M)$ for a subset of the matrices
$M \in \text{Sym}_{q}^{\tt{pd}}(\mathbb{R}_{> 0})$.
We will do so by identifying a large class of matrices
$M \in \text{Sym}_{q}^{\tt{pd}}(\mathbb{R}_{> 0})$,
and a large class of $\mathbf{x} \in \chi_{q}$,
such that
$\Phi_{\mathbf{x}}(M_{11}, \dots, M_{qq}) \neq 0$.

\begin{definition}\label{definition: confluence}
    Let $I, J$ be finite sets.
    Let $\mathbf{x} = (x_{i})_{i \in I} \in \mathbb{Z}_{> 0}^{|I|}$,
    and $\mathbf{y} = (y_{j})_{j \in J} \in \mathbb{Z}_{> 0}^{|J|}$.
    We say that $(\mathbf{x}, \mathbf{y})$ is a \emph{confluence}
    if it satisfies the following two conditions:
    \begin{enumerate}
    \item $\sum_{i \in I}x_{i} = \sum_{j \in J}y_{j}$,
    and 
    \item for any $S_{1}, S_{2} \subseteq I$, and 
    $T_{1}, T_{2} \subseteq J$,
    $$\left[ \sum_{i \in S_{1}}x_{i} = \sum_{j \in T_{1}}y_{j}
    ~\text{ and }~ \sum_{i \in S_{2}}x_{i} =
    \sum_{j \in T_{2}}y_{j} \right] \implies
    \sum_{i \in S_{1} \cap S_{2}}x_{i} = 
    \sum_{j \in T_{1} \cap T_{2}}y_{j}.$$
    \end{enumerate}
\end{definition}

As all $x_i, y_j >0$, in a \emph{confluence} clearly $I=\emptyset$ if and only if $J=\emptyset$.
We will only be interested in nonempty $I$ and $J$.
While \cref{definition: confluence} is concise,
we will now prove that it has an alternate
equivalent definition, with some useful
properties.

\begin{lemma}\label{lemma: confluenceEquivalence}
    Consider $(\mathbf{x}, \mathbf{y})$,
    where $\mathbf{x} = (x_{i})_{i \in I} \in \mathbb{Z}_{> 0}^{|I|}$,
    and $\mathbf{y} = (y_{i})_{j \in J} \in \mathbb{Z}_{> 0}^{|J|}$
    for nonempty finite sets $I$, and $J$.
    $(\mathbf{x}, \mathbf{y})$ is a confluence if and only
    if there exist partitions
    $I = S_{1} \sqcup \cdots \sqcup S_{r}$
    and $J = T_{1} \sqcup \cdots \sqcup T_{r}$
    for some $r \geq 1$, such that for any
    $S \subseteq I$, $T \subseteq J$,
    \begin{equation}\label{lm-confluence-def}
    \sum_{i \in S}x_{i} = \sum_{j \in T}y_{j} \iff
     \left[  S = \bigsqcup_{a \in P}S_{a} ~\text{ and }~
    T = \bigsqcup_{a \in P}T_{a}, ~\text{ for some }
    P \subseteq [r] \right]. 
    \end{equation}
    Furthermore, the paired partition $\{(S_{1}, T_{1}), \dots,
    (S_{r}, T_{r})\}$ is unique up to the order of the pairs.
\end{lemma}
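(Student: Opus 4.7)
The plan is to identify the set of balanced pairs
\[
\mathcal{B} = \left\{ (S, T) \in 2^{I} \times 2^{J} \;\Big|\; \sum_{i \in S} x_{i} = \sum_{j \in T} y_{j} \right\}
\]
as a Boolean subalgebra of the product Boolean algebra $2^{I} \times 2^{J}$, and then to read off the required paired partition from the unique decomposition into atoms.

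For the forward direction, I would first verify that $\mathcal{B}$ is closed under the componentwise Boolean operations. Closure under intersection is precisely condition 2 of \cref{definition: confluence}. Closure under componentwise complementation $(S, T) \mapsto (I \setminus S,\, J \setminus T)$ follows from condition 1: if $(S, T) \in \mathcal{B}$, then $\sum_{I \setminus S} x_{i} = \sum_{I} x_{i} - \sum_{S} x_{i} = \sum_{J} y_{j} - \sum_{T} y_{j} = \sum_{J \setminus T} y_{j}$. Closure under union then follows by De Morgan. Thus $\mathcal{B}$ is a finite Boolean subalgebra containing the top element $(I, J)$ and the bottom $(\emptyset, \emptyset)$.

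Next I would enumerate the atoms of $\mathcal{B}$ as $(S_{1}, T_{1}), \ldots, (S_{r}, T_{r})$. Because every $x_{i}$ and every $y_{j}$ is strictly positive, an atom cannot have an empty first or second component: an empty $T_{a}$ would force $\sum_{S_{a}} x_{i} = 0$ and hence $S_{a} = \emptyset$, making the pair the zero element rather than an atom (and symmetrically for an empty $S_{a}$). Atoms of a finite Boolean algebra are pairwise disjoint and join to the top, so $\{S_{a}\}_{a \in [r]}$ is a partition of $I$ and $\{T_{a}\}_{a \in [r]}$ is a partition of $J$. Every element of $\mathcal{B}$ has a unique expression as a join of atoms beneath it, so $(S, T) \in \mathcal{B}$ if and only if there is a unique $P \subseteq [r]$ with $(S, T) = \bigsqcup_{a \in P} (S_{a}, T_{a})$, which is exactly the characterization in \cref{lm-confluence-def}.

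The converse is straightforward from \cref{lm-confluence-def}: condition 1 of \cref{definition: confluence} comes from the case $P = [r]$, and condition 2 holds because if $(S_{1}, T_{1})$ and $(S_{2}, T_{2})$ correspond to index sets $P_{1}$ and $P_{2}$, then $(S_{1} \cap S_{2},\, T_{1} \cap T_{2})$ corresponds to $P_{1} \cap P_{2}$ and is therefore balanced. Uniqueness of the paired partition up to reordering follows because any such partition must coincide with the set of atoms of $\mathcal{B}$, which depends only on $(\mathbf{x}, \mathbf{y})$. I do not anticipate any real obstacle; the one subtle point is ruling out atoms with an empty component, which is exactly where strict positivity of all $x_{i}$ and $y_{j}$ is used.
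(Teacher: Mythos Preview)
Your proof is correct and is in essence the same as the paper's, but your packaging via Boolean-algebra structure is cleaner. The paper defines the same set of balanced pairs (calling it $\mathcal{C}$), then works by hand: it takes $\mathcal{M}$ to be the minimal nonzero elements of $\mathcal{C}$, shows directly from the confluence intersection property that distinct members of $\mathcal{M}$ are disjoint, and then proves by induction on $|S|+|T|$ that every element of $\mathcal{C}$ decomposes as a disjoint union from $\mathcal{M}$ (using the ad hoc fact that $(S\setminus S',\,T\setminus T')\in\mathcal{C}$ when $(S',T')\le(S,T)$ both lie in $\mathcal{C}$); uniqueness is then argued separately by showing the block sums $\sum_{i\in S_a}x_i$ are pairwise distinct and setting up a bijection between any two admissible paired partitions. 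Your observation that $\mathcal{B}$ is closed under componentwise complement (from condition~1) promotes $\mathcal{B}$ to a Boolean subalgebra, after which all of the paper's hand-work---pairwise disjointness, the decomposition of every element, partitioning $I$ and $J$, and uniqueness---collapses to the structure theorem for finite Boolean algebras. The one point you should make explicit in a full write-up is the small check that any alternative paired partition satisfying \cref{lm-confluence-def} must consist of atoms of $\mathcal{B}$ and must exhaust them; this is easy (an $(S'_a,T'_a)$ can have no strictly smaller nonzero element of $\mathcal{B}$ beneath it because of the $\Rightarrow$ direction of \cref{lm-confluence-def}, and together they join to $(I,J)$), but it is what turns ``the atoms are canonical'' into ``the partition is unique.''
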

\begin{proof}
    We will first assume that
    $I = S_{1} \sqcup \cdots \sqcup S_{r}$,
    and $J = T_{1} \sqcup \cdots \sqcup T_{r}$ satisfies the condition \cref{lm-confluence-def}.
    If we let $P = [r]$, we immediately get that
    $\sum_{i \in I}x_{i} = \sum_{j \in J}y_{j}$.
    Moreover, for any $a \in [r]$, let $P= \{a\}$
    then $\sum_{i \in S_{a}}x_{i} = \sum_{j \in T_{a}}y_{j}$.
    
    Now if we consider any
    $S_{1}, S_{2} \subseteq I$, and
    $T_{1}, T_{2} \subseteq J$ such that
    $$\sum_{i \in S_{1}}x_{i} = \sum_{j \in T_{1}}y_{j}
    ~~\text{ and }~~ \sum_{i \in S_{2}}x_{i} =
    \sum_{j \in T_{2}}y_{j},$$
    we know that there exist some
    $P_{1}, P_{2} \subseteq [r]$ such that
    $S_{1} = \sqcup_{a \in P_{1}}S_{a}$,
    $T_{1} = \sqcup_{a \in P_{1}}T_{a}$,
    $S_{2} = \sqcup_{a \in P_{2}}S_{a}$, and
    $T_{2} = \sqcup_{a \in P_{2}}T_{a}$.
    Therefore, if we let $P_{3} = P_{1} \cap P_{2}$,
    we see that $S_{1} \cap S_{2} =
    \sqcup_{a \in P_{3}}S_{a}$, and
    $T_{1} \cap T_{2} = \sqcup_{a \in P_{3}}T_{a}$.
    Therefore,
    $$\sum_{i \in S_{1} \cap S_{2}}x_{i} = 
    \sum_{a \in P_{3}}\sum_{i \in S_{a}}x_{i} = 
    \sum_{a \in P_{3}}\sum_{j \in T_{a}}y_{j} =
    \sum_{j \in T_{1} \cap T_{2}}y_{j}.$$
    This proves that $(\mathbf{x}, \mathbf{y})$
    is a confluence.

    Conversely,  assume that $(\mathbf{x}, \mathbf{y})$
    is a confluence.
   Let
    $$\mathcal{C} = \left\{(S, T): S \subseteq I, T \subseteq J,
    ~~\sum_{i \in S}x_{i} = \sum_{j \in T}y_{j}\right\},
    ~\text{ and }~$$
    $$\mathcal{M} = \Big\{(S, T)
    \in \mathcal{C}: (S, T) \neq (\emptyset, \emptyset) 
    ~\mbox{and}~(\forall\ S' \subsetneq S, \forall\ T' \subsetneq T)
    \left[ (S', T') \in \mathcal{C} \implies S' = T' = \emptyset
    \right] \Big\}$$
    be the minimal members of $\mathcal{C}$.
    Since $\mathcal{C}$ and consequently $\mathcal{M}$
    are finite sets,
    we may assume that $\mathcal{M} = \{(S_{1}, T_{1}), \dots,
    (S_{r}, T_{r})\}$ for some $r \geq 1$.
    We will now show that
    $\{S_{1}, \dots, S_{r}\}$, and
    $\{T_{1}, \dots, T_{r}\}$ are the 
    required partitions of $I$ and $J$ respectively.

    For $a \neq b \in [r]$, we first consider
    $(S_{a}, T_{a}), (S_{b}, T_{b}) \in \mathcal{M}$.
    By definition of $\mathcal{M} \subseteq \mathcal{C}$, we know that
    $\sum_{i \in S_{a}}x_{i} = \sum_{j \in T_{a}}y_{j}$,
    and $\sum_{i \in S_{b}}x_{i} = \sum_{j \in T_{b}}y_{j}$.
    Since $(\mathbf{x}, \mathbf{y})$ is a confluence,
    this implies that $\sum_{i \in S_{a} \cap S_{b}}x_{i} 
    = \sum_{j \in T_{a} \cap T_{b}}y_{j}$.
    But $(S_{a} \cap S_{b}) \subseteq S_{a}$,
    and $(T_{a} \cap T_{b}) \subseteq T_{a}$.
    Since $(S_{a}, T_{a}) \neq (S_{b}, T_{b})$,
    we may assume without loss of generality that
    $(S_{a} \cap S_{b}) \subsetneq S_{a}$.
    But then, since $x_{i} \in \mathbb{Z}_{> 0}$
    for all $i \in I$, we see that
    $$\sum_{j \in T_{a}}y_{j} = 
    \sum_{i \in S_{a}}x_{i} >
    \sum_{i \in S_{a} \cap S_{b}}x_{i}
    = \sum_{j \in T_{a} \cap T_{b}}y_{j}.$$
    Therefore, $(T_{a} \cap T_{b}) \subsetneq T_{a}$
    as well.
    But by our definition of $\mathcal{M}$, since
    $(S_{a} \cap S_{b}, T_{a} \cap T_{b}) \in \mathcal{C}$, and $(S_a, T_a) \in \mathcal{C}$,
    it must be the case that $S_{a} \cap S_{b}
    = T_{a} \cap T_{b} = \emptyset$.
    Therefore, $\{S_1, \ldots, S_r\}$ and $\{T_1, \ldots, T_r\}$
    are both pairwise disjoint. 

    Since $(S_{a}, T_{a}) \in \mathcal{M} \subseteq \mathcal{C}$
    for all $a \in [r]$, it is trivial to see that
    given any $P \subseteq [r]$,
    $$\sum_{a \in P}\sum_{i \in S_{a}}x_{i}
    = \sum_{a \in P}\sum_{j \in T_{a}}y_{j},
    \text{ and therefore, }
    \left(\bigcup_{a \in P}S_{a}, \bigcup_{a \in P}T_{a}\right)
    \in \mathcal{C}.$$
    Now, using induction, we  will show that
    given any $(S, T) \in \mathcal{C}$, we can find some
    $P \subseteq [r]$, such that
    $S = \sqcup_{a \in P}S_{a}$, and
    $T = \sqcup_{a \in P}T_{a}$.
    This is trivially true when $|S| + |T| = 0$.
    Let us now assume that it is true when
    $|S| + |T| < k$ for some $k > 0$.
    Now, we consider some $(S, T) \in \mathcal{C}$ such that
    $|S| + |T| = k$.
    If $(S, T) \in \mathcal{M}$, we are already done.
    Otherwise, by definition of $\mathcal{M}$, we know that
    there exist some $(S', T') \in \mathcal{C}$ such that
    $S' \subsetneq S$, and $T' \subsetneq T$,
    but $S' \neq \emptyset$, and $T' \neq \emptyset$.
    But we note that
    $$\sum_{i \in S \setminus S'}x_{i}
    = \sum_{i \in S}x_{i} - \sum_{i \in S'}x_{i}
    = \sum_{j \in T}y_{j} - \sum_{j \in T'}y_{j}
    = \sum_{j \in T \setminus T'}y_{j},$$
    which implies that
    $(S \setminus S', T \setminus T') \in \mathcal{C}$.
    We note that $0 < |S'| + |T'| < k$.
    Therefore, $|S \setminus S'| + |T \setminus T'|
    = (|S| + |T|) - (|S'| + |T'|) < k$.
    Therefore, our induction hypothesis implies that
    there exist $P_{1}, P_{2} \subseteq [r]$,
    such that $S' = \cup_{a \in P_{1}}S_{a}$,
    $T' = \cup_{a \in P_{1}}T_{a}$,
    $(S \setminus S') = \cup_{b \in P_{2}}S_{b}$,
    and $(T \setminus T') = \cup_{b \in P_{2}}T_{b}$.
    Therefore, there exists
    $P = P_{1} \cup P_{2} \subseteq [r]$
    such that
    $$S = \bigcup_{a \in P}S_{a}, \text{ and }
    T = \bigcup_{a \in P}T_{a}.$$
    This completes the induction.
    In particular, since $(\mathbf{x}, \mathbf{y})$
    is a confluence, we know that
    $(I, J) \in \mathcal{C}$.
    This then means that there exists some
    $P \subseteq [r]$ such that
    $I = \sqcup_{a \in P}S_{a}$, and
    $J = \sqcup_{a \in P}T_{a}$.
    Recall that $S_{a} \cap S_{b}  = \emptyset$
    for $a \neq b$.
    Each $(S_a, T_a) \in \mathcal{M}$ satisfies $S_a \ne \emptyset$.
  If there is some $a' \not \in P$ then $S_{a'} = S_{a'} \cap I = S_{a'} \cap \left[ \bigcup_{a \in P}S_{a} \right] = \emptyset$, a contradiction.
    This implies that $P = [r]$.
    This proves that $I = S_{1} \sqcup \cdots \sqcup S_{r}$,
    and $J = T_{1} \sqcup \cdots \sqcup T_{r}$.
    This finishes the proof of the existence of the paired partition $\mathcal{M} =\{(S_{1}, T_{1}), \dots,
    (S_{r}, T_{r})\}$.

    For uniqueness, let $\mathcal{M}'= \{(S'_{1}, T'_{1}), \dots,
    (S'_{\rho}, T'_{\rho})\}$ be another such paired partition. We first 
    note that for $\mathcal{M}$,
    if $a \ne b \in [r]$ then $\sum_{i \in S_a} x_i \ne \sum_{i \in S_b} x_i$.
    Otherwise, by the confluence property,
    $\sum_{j \in T_a} y_j = \sum_{i \in S_a} x_i = \sum_{i \in S_b} x_i$
    implies $0 < \sum_{j \in T_a} y_j =  \sum_{j \in T_a \cap T_a} y_j = \sum_{i \in S_a \cap S_b} x_i
    =0$, a contradiction. The same is true for $\mathcal{M}'$.

    For any $(S_a, T_a) \in \mathcal{M}$, as $\sum_{i \in S_a} x_i
    = \sum_{j \in T_a} y_j$, there exists $P' \subseteq [\rho]$ such that
    $S_a = \cup_{c \in P'} S'_c$. As $S_a \ne \emptyset$, we have $P' \ne \emptyset$.
    Pick any $c \in P'$. Then by the same argument there exists $P \subseteq [r]$
    such that $S'_c =  \cup_{b \in P} S_b$. As $\{S_a: a \in [r]\}$ are pairwise disjoint and nonempty,
    $P'$ must be a singleton set $\{c\}$ and  then in turn $P$ for this $c$ is also a singleton set
    $\{b\}$.
    This sets up a mapping sending $a \in [r]$ to $c \in [\rho]$.
    Since $\sum_{i \in S_a} x_i = \sum_{i \in S'_c} x_i = \sum_{i \in S_b} x_i $,
    we must have $a=b$. Hence the mapping sending $a$ to $c$ is 1-1.
    Switching the role of $\mathcal{M}$ and $\mathcal{M}'$ shows that $r = \rho$ and
    the two paired partitions are in 1-1 correspondence.  This proves uniqueness.
\end{proof}

Now that we have defined confluences and understood their
properties better, we will now relate them to
$\mathbf{x} \in \chi_{q}$, and see how they are
relevant to our problem at hand.

\begin{definition}\label{definition: confluent}
    Let $\mathbf{0} \neq \mathbf{x}   \in \chi_{q}$.
    We will define $I^{+} = \{i \in [q]: x_{i} > 0\}$,
    and $I^{-} = \{i \in [q]: x_{i} < 0\}$.
    Then, we let $\mathbf{x}^{+} = (x_{i})_{i \in I^{+}}$,
    and $\mathbf{x}^{-} = (-x_{i})_{i \in I^{-}}$.
    We say that $\mathbf{x}$ is confluent
    if $(\mathbf{x}^{+}, \mathbf{x}^{-})$
    is a confluence.

    If $\mathbf{x}$ is confluent,
    let $I^{+} = S_{1} \sqcup \cdots \sqcup S_{r}$
    and $I^{-} = T_{1} \sqcup \cdots \sqcup T_{r}$
    be the partition guaranteed by
    \cref{lemma: confluenceEquivalence}.
    We say $((S_{1}, T_{1}), \dots, (S_{r}, T_{r}))$
    is the confluence basis of $\mathbf{x}$. The confluence basis
    is uniquely defined, up to the order of the pairs.
\end{definition}

An example $\mathbf{x}$ that is \emph{non}-confluent is
$(1, 1, -1, -1) \in \chi_{4}$. This will play a pivotal role later in isolating
tensor products.
Before we can study confluent $\mathbf{x}$,
we will need a few more definitions.

\begin{definition}\label{definition: 01Poly}
    We say that a polynomial $f:
    \mathbb{R}^{d} \rightarrow \mathbb{R}$ is
    a 0-1 polynomial with $q$ terms
    if it can be expressed as
    $$f(p_{1}, \dots, p_{d}) =
    \sum_{i \in [q]}\prod_{t \in [d]}p_{t}^{x_{i, t}},$$
    where each monomial $\prod_{t \in [d]}p_{t}^{x_{i, t}}$ 
    is distinct, and has a coefficient $1$.
\end{definition}

\begin{definition}\label{definition: proportionalPoly}
    Two polynomials $f, g: \mathbb{R}^{d} \rightarrow \mathbb{R}$
    are said to be proportional to each other, if
    there exist integers $x_{1}, \dots, x_{d} \geq 0$
    and $y_{1}, \dots, y_{d} \geq 0$ such that for all
    $\mathbf{p} \in \mathbb{R}^{d}$,
    $$p_{1}^{x_{1}} \cdots p_{d}^{x_{d}} \cdot f(\mathbf{p})
    = p_{1}^{y_{1}} \cdots p_{d}^{y_{d}} \cdot g(\mathbf{p}).$$
\end{definition}

We will first focus on 0-1 polynomials $f: \mathbb{R}
\rightarrow \mathbb{R}$ in one variable with $q$ terms.
Such polynomials must be of the form
$$f(p) = p^{x_{1}} + p^{x_{2}} + \cdots + p^{x_{q}},$$
where $0 \leq x_{1} < x_{2} < \cdots < x_{q}$.

\begin{lemma}\label{lemma: confluent01PolyLemmaSimple}
    Let $\mathbf{0}  \neq \mathbf{x} \in \chi_{q}$
    be confluent, with the confluence basis
    $((S_{1}, T_{1}), \dots, (S_{r}, T_{r}))$.
    Let $f_{1}, \dots, f_{q}: \mathbb{R} \rightarrow \mathbb{R}$
    be {\rm 0-1} polynomials with $q$ terms such that
    $\phi_{\mathbf{x}}(f_{1}(p), \dots, f_{q}(p)) = 0$
    for all $p \in \mathbb{R}$.
    Then, given any $i, j \in [q]$, such that
    $i, j \in S_{a} \cup T_{a}$ for some $a \in [r]$,
    we have that $f_{i}$ and $f_{j}$ are
    proportional to each other.
\end{lemma}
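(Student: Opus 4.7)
The plan is to reduce everything to the $\{0,1\}$-valued coefficient sequences of the $f_i$. Write $f_i(p) = \sum_{k \geq 0} c_{i,k} p^k$ with $c_{i,k} \in \{0,1\}$; proportionality of $f_i$ and $f_j$ in the sense of \cref{definition: proportionalPoly} is equivalent to their coefficient sequences agreeing after a common shift. Matching the smallest-degree monomial on the two sides of $\prod_{i \in I^+} f_i^{x_i} = \prod_{j \in I^-} f_j^{-x_j}$ gives $\sum_i x_i m_i = 0$, where $m_i = \min\{k : c_{i,k} = 1\}$, so dividing each $f_i$ by $p^{m_i}$ preserves the identity and normalizes $c_{i,0} = 1$ for every $i$. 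Under this normalization, proportionality becomes simply $f_i = f_j$, and each $f_i$ becomes invertible in $\mathbb{R}[[p]]$.

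A preliminary observation I would then establish from \cref{lemma: confluenceEquivalence}: every $\mathbf{y} \in \{0,1\}^q$ satisfying $\sum_i x_i y_i = 0$ is constant on each set $S_a \cup T_a$. Indeed, setting $S = \{i \in I^+ : y_i = 1\}$ and $T = \{j \in I^- : y_j = 1\}$, the linear equation rewrites as $\sum_{i \in S} x_i^+ = \sum_{j \in T} x_j^-$, and the confluence characterization forces $(S, T) = (\bigsqcup_{a \in P} S_a, \bigsqcup_{a \in P} T_a)$ for some $P \subseteq [r]$. In particular $y_i = y_{i'}$ whenever $i, i' \in S_a \cup T_a$.

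The main step is an induction on $n$ showing $c_{i,n} = c_{i',n}$ whenever $i, i' \in S_a \cup T_a$. Taking the formal logarithm of the identity (well-defined in $\mathbb{R}[[p]]$ since $f_i(0) = 1$) yields $\sum_i x_i \log f_i(p) = 0$. Expanding through $\log(1+h) = h - h^2/2 + h^3/3 - \cdots$ with $h = f_i - 1$ shows that $[p^n] \log f_i = c_{i,n} + Q_n(c_{i,1}, \dots, c_{i,n-1})$ for a universal polynomial $Q_n$, since the only source of $c_{i,n}$ is the linear term $h$ of the log series. Assuming inductively that for every $k < n$, the coefficient $c_{i,k}$ is constant as $i$ ranges over any single set $S_a \cup T_a$, the value $Q_n(c_{i,1}, \dots, c_{i,n-1})$ is likewise constant on each such set. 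Confluence forces $\sum_{i \in S_a \cup T_a} x_i = \sum_{i \in S_a} x_i^+ - \sum_{j \in T_a} x_j^- = 0$ for every $a$, so the contribution $\sum_i x_i\, Q_n(c_{i,1}, \dots, c_{i,n-1})$ groups by set and vanishes (terms with $x_i = 0$ drop out). The identity at $p^n$ therefore reduces to $\sum_i x_i c_{i,n} = 0$, and the preliminary applied to $c_{i,n} \in \{0,1\}$ closes the inductive step.

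Iterating this induction gives $f_i = f_{i'}$ for all $i, i'$ sharing some $S_a \cup T_a$ under the normalization, which translates to proportionality in the original scaling via the shift factors $p^{m_i}$. I expect the main technical care to lie in correctly isolating the top-degree contribution $c_{i,n}$ in the logarithmic expansion and verifying that $Q_n$ depends only on the lower-order coefficients; once that separation is made, the argument pivots on two facts that must line up precisely: lower-order coefficients are block-constant by the inductive hypothesis, and confluence exactly annihilates block-constant vectors when paired against $\mathbf{x}$. The $\{0,1\}$-rigidity of the $c_{i,n}$ is what upgrades the linear identity $\sum_i x_i c_{i,n} = 0$ to the pointwise block-constancy needed for the conclusion.
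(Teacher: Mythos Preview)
Your argument is correct and takes a genuinely different route from the paper. The paper's proof works directly with the polynomial identity $\prod_{i\in I^+} g_i^{x_i}=\prod_{j\in I^-} g_j^{-x_j}$ and runs an induction on a vector of counters $(t_1,\dots,t_r)$, where $t_a$ records how many initial terms of the $g_i$ with $i\in S_a\cup T_a$ have already been matched; at each step it subtracts off the product of truncations and reads the lowest remaining degree, then invokes confluence to see that the set of indices contributing that degree is a union of blocks. Your approach instead linearizes the identity by taking the formal logarithm in $\mathbb{R}[[p]]$, reducing everything to the single equation $\sum_i x_i\log g_i=0$, and then inducts on a single degree $n$ using the decomposition $[p^n]\log g_i=c_{i,n}+Q_n(c_{i,1},\dots,c_{i,n-1})$.

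What each buys: your route is shorter and more conceptual---once one accepts the formal-logarithm isomorphism $1+p\mathbb{R}[[p]]\to p\mathbb{R}[[p]]$, the induction step reduces cleanly to the single linear constraint $\sum_i x_i c_{i,n}=0$, after which the $\{0,1\}$ nature of the $c_{i,n}$ together with your preliminary observation (itself a direct translation of \cref{lemma: confluenceEquivalence}) finishes. The paper's route is more elementary in that it never leaves the polynomial ring, but pays for this with bookkeeping on the counters $t_a$ and a somewhat delicate least-degree analysis in the proof of the inductive claim. Both arguments ultimately exploit confluence in the same way: a relation $\sum_i x_i y_i=0$ with $y_i\in\{0,1\}$ forces $y$ to be constant on each $S_a\cup T_a$.
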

\begin{proof}
    Since $f_{1}, \dots, f_{q}$ are all 0-1
    polynomials with $q$ terms, we may assume that
    $$f_{i}(p) = p^{z_{i, 1}} + \cdots +p^{z_{i, q}}$$
    for integers $0 \leq z_{i, 1} < \cdots < z_{i, q}$
    for all $i \in [q]$.
    We will now let $d_{i} = z_{i, 1}$, and $y_{i, j} = 
    z_{i, j} - d_{i}$ for all $i, j \in [q]$.
    If we now let
    $$g_{i}(p) = p^{y_{i, 1}} + \cdots + p^{y_{i, q}},$$
    we see that $f_{i}(p) = p^{d_{i}}g_{i}(p)$
    for all $i \in [q]$.
    By construction, we also see that $y_{i, 1} = 0$
    for all $i \in [q]$.
    Therefore, for any $i \neq j \in [q]$, we have
    $g_{i} = g_{j}$ if and only if $f_{i}$ and $f_{j}$ are 
    proportional to each other.
    
    Since $\phi_{\mathbf{x}}(f_{1}(p), \dots, f_{q}(p)) = 0$,
    we see that
    $$\prod_{i \in I^{+}}p^{x_{i} \cdot d_{i}}
    \cdot g_{i}(p)^{x_{i}}
    = \prod_{i \in I^{+}}f_{i}(p)^{x_{i}} 
    = \prod_{i \in I^{-}}f_{i}(p)^{-x_{i}}
    = \prod_{i \in I^{-}}p^{-x_{i} \cdot d_{i}}
    \cdot g_{i}(p)^{-x_{i}}.$$
    By our construction of $g_{i}$, we see that
    $p \nmid g_{i}$ for all $i \in [q]$.
    Therefore,
    $$\prod_{i \in I^{+}}p^{x_{i} \cdot d_{i}}
    = \prod_{i \in I^{-}}p^{-x_{i} \cdot d_{i}}, \text{ and }
    \prod_{i \in I^{+}}g_{i}(p)^{x_{i}}
    = \prod_{i \in I^{-}}g_{i}(p)^{-x_{i}}.$$

    For any $i \in [q]$, if  $x_{i} \neq 0$
    then either $i \in I^{+}$ or $i \in I^{-}$.
    For any  $i \in I^{+} \cup I^{-}$, there exists a unique $a \in [r]$, and we will denote it by $\sigma(i)$, 
    such that $i \in S_{a}$ or $i \in T_{a}$, i.e., $i \in S_{\sigma(i)} \cup T_{\sigma(i)}$.

    We will now prove by induction that given any
    $i, j \in S_{a} \cup T_{a}$ for some $a \in [r]$, i.e., $\sigma(i)  = \sigma(j)$,
    we have  $$g_{i} = g_{j}.$$
    For all $i \in [q]$,
    given any $1 \leq t \leq q$, we define a truncated version of $g_i$,
    $$g_{i}|_{t}(p) = p^{y_{i, 1}} + \cdots + p^{y_{i, t}}.$$
    By the construction of $g_{i}$,
    we know that $y_{i, 1} = 0$ for all $i \in [q]$.
    So, $g_{i}|_{1}(p) = p^{y_{i, 1}} = 1$ for all $i \in [q]$.

    Now, for each $r \leq k \leq rq$, we  define the following:
    \begin{align*}
        \mathbf{Statement}\, (k):\ & \text{There exist integers
        $1 \leq t_{1}, \dots, t_{r} \leq q$ such that
        $t_{1} + \cdots + t_{r} = k$, and}\\
        &\text{some polynomials
        $h_{1}, \dots, h_{r}$ such that $g_{i}|_{t_{\sigma(i)}}
        = h_{\sigma(i)}$ for all $i \in  I^{+} \cup I^{-}$.}
    \end{align*}
    
    We will momentarily assume the following claim,
    which we shall prove shortly.

    \begin{claim}\label{claim: StatementKImpliesK'}
        If $\mathbf{Statement}\, (k)$
        is true for some $r \le k < rq$, then there exists some
        $k < k' \leq rq$, such that
        $\mathbf{Statement}\, (k')$ is true.
    \end{claim}

    We have already seen that when
    $t_{1} = \cdots = t_{r} = 1$,
    and $h_{1} = \cdots = h_{r} = 1$,
    $g_{i}|_{t_{\sigma(i)}} = 1 = h_{\sigma(i)}$
    for all $i \in [q]$.
    Therefore, $\mathbf{Statement}\, (r)$ is true.
    So, \cref{claim: StatementKImpliesK'} implies that
    $\mathbf{Statement}\, (rq)$ is true.
    In this case with $k=rq$, based on the bound on the $t_a$'s, all $t_a = q$.
    Then $g_i = g_i|_q = g_{i}|_{t_{\sigma(i)}} =  h_{\sigma(i)}$. Hence if $\sigma(i) = \sigma(j)$
    then $g_i = g_j$.
    From the construction of $g_{i}$, it now follows
    that $f_{i}$ and $f_{j}$ are proportional
    to each other, if $i, j \in S_{a} \cup T_{a}$
    for some $a \in [r]$.
\end{proof}

We will now finish the proof of \cref{lemma: confluent01PolyLemmaSimple} by proving
\cref{claim: StatementKImpliesK'}.

\begin{claimproof}{\cref{claim: StatementKImpliesK'}}
    Let us assume that there exist $1 \leq t_{1}, \dots, t_{r} \leq q$, 
    as well as polynomials $h_{1}, \dots, h_{r}$
    that satisfy $\mathbf{Statement}\, (k)$ for some 
    $r \le k < rq$.
    We will consider the following equation
    \begin{equation}\label{equation: confluent01PolyEqn}
        \prod_{i \in I^{+}}g_{i}(p)^{x_{i}}
        - \prod_{i \in I^{+}}
        \left(g_{i}|_{t_{\sigma(i)}}(p)\right)^{x_{i}}
        = \prod_{i \in I^{-}}g_{i}(p)^{-x_{i}}
        - \prod_{i \in I^{-}}
        \left(g_{i}|_{t_{\sigma(i)}}(p)\right)^{-x_{i}}.
    \end{equation}
    We can see that \cref{equation: confluent01PolyEqn}
    is true since we know that
    $$\prod_{i \in I^{+}}g_{i}(p)^{x_{i}}
    = \prod_{i \in I^{-}}g_{i}(p)^{-x_{i}},$$
    and the induction hypothesis tells us that since
    $\mathbf{x}$ is confluent,
    $$\prod_{i \in I^{+}}
    \left(g_{i}|_{t_{\sigma(i)}}(p)\right)^{x_{i}}
    = \prod_{a \in [r]}h_{a}^{\sum_{i \in S_{a}}x_{i}}
    = \prod_{a \in [r]}h_{a}^{\sum_{i \in T_{a}}(-x_{i})}
    = \prod_{i \in I^{-}}
    \left(g_{i}|_{t_{\sigma(i)}}(p)\right)^{-x_{i}}.$$
    
    Since we have assumed that $t_{1} + \cdots + t_{r}
    = k < rq$, we see that there exists some $a \in [r]$
    such that $t_{a} < q$.
    Since $\sigma : I^{+} \cup I^{-} \rightarrow [r]$ is onto, \cref{equation: confluent01PolyEqn}
    is not the trivial $0 = 0$ equation.
    Now, we will consider the least degree term
    of the LHS and RHS of \cref{equation: confluent01PolyEqn}.
    If we focus on the LHS, we notice that
    $$\prod_{i \in I^{+}}g_{i}(p)^{x_{i}} 
    = \sum_{k_{i, j}: \ (\forall i \in I^{+},  \forall j \in [x_i])  [ 1 \le k_{i, j} \le q]  }
    p^{\sum_{i \in I^{+}}\left( y_{i, k_{i, 1}} + \cdots + y_{i, k_{i, x_i}}\right)}.
    $$
    However, any term $p^{\sum_{i \in I^{+}}\left( y_{i, k_{i, 1}} + \cdots + y_{i, k_{i, x_i}}\right)}$
    for which $k_{i, j} \leq t_{\sigma(i)}$ for all $i \in I^{+}$ and $j \in [x_i]$
    would appear in 
    $$\prod_{i \in I^{+}}
    \left(g_{i}|_{t_{\sigma(i)}}(p)\right)^{x_{i}}$$
    as well, and get cancelled  in the LHS
    of \cref{equation: confluent01PolyEqn}.
    Therefore, to be a candidate for the least degree term remaining,
    it must be obtained by choosing one least degree remaining term $p^{y_{i, t_{\sigma(i)}+1}}$
    from one factor $g_{i}(p)$ and the term $p^{y_{j, 1}} = p^0 =1$ for all other factor polynomials.
    Furthermore, all such product terms from  $\prod_{i \in I^{+}}g_{i}(p)^{x_{i}}$ are not cancelled
    from $\prod_{i \in I^{+}}
    \left(g_{i}|_{t_{\sigma(i)}}(p)\right)^{x_{i}}$ in \cref{equation: confluent01PolyEqn},
    since all terms from this second product have already been used
    to cancel the corresponding terms from the first product.
    Hence, this least degree  term has  degree
    $$d^{+} 
    = \min_{i \in I^{+}}\left(y_{i, t_{\sigma(i)}+1}\right),$$
    and this occurs exactly $x_i$ times for each minimizer $i$ that achieves this minimum.
    Summing up, the least degree term of the LHS of
    \cref{equation: confluent01PolyEqn} is therefore
    $$\sum_{\substack{i \in I^{+}:\\ y_{i, t_{\sigma(i)}+1} = d^{+}}}
    x_{i} \cdot p^{d^{+}}.$$
    Similarly, if we let
    $$d^{-} = \min_{i \in I^{-}}\left(y_{i, t_{\sigma(i)}+1}\right),$$
    we find that the least degree term of the RHS of
    \cref{equation: confluent01PolyEqn} is
    $$\sum_{\substack{i \in I^{-}:\\ y_{i, t_{\sigma(i)}+1} = d^{-}}}
    (-x_{i}) \cdot p^{d^{-}}.$$
    Since the least degree terms of the LHS and the RHS
    have to be identical, this implies that
    $d^{+} = d^{-}$, and also that
    $\sum_{i \in S}x_{i} = \sum_{i \in T}(-x_{i})$,
    where $S = \{i \in I^{+}: y_{i, t_{\sigma(i)}+1} = d^{+}\}$,
    and $T = \{i \in I^{-}: y_{i, t_{\sigma(i)}+1} = d^{-}\}$.
    Since $\mathbf{x}$ is confluent,
    by definition, this means that there exists
    some $P \subseteq [r]$ such that
    $S = \sqcup_{a \in P}S_{a}$, and $T = \sqcup_{a \in P}T_{a}$.
    But this means that $y_{i, t_{\sigma(i)}+1} = d^{+} = d^{-}$
    for all $i \in \cup_{a \in P}(S_{a} \cup T_{a})$.
    
    In other words,
    we have shown that
    there exist integers
    $1 \leq t_{1}', \dots, t_{r}' \leq q$
    such that $t_{a}' = t_{a} + 1$ for
    $a \in P$, and $t_{a}' = t_{a}$ for $a \notin P$,
    as well as polynomials
    $h_{1}', \dots, h_{r}'$ such that
    $g_{i}|_{t'_{\sigma(i)}} = h'_{\sigma(i)}$
    for all $i\in  I^{+} \cup I^{-}$.
    Since $k' = t_{1}' + \cdots + t_{r}' >
    t_{1} + \cdots + t_{r} = k$,
    this finishes the proof of
    \cref{claim: StatementKImpliesK'}.
\end{claimproof}

We will now extend \cref{lemma: confluent01PolyLemmaSimple}
to be applicable for all 0-1 polynomials
$f_{i}: \mathbb{R}^{d} \rightarrow \mathbb{R}$.

\begin{lemma}\label{lemma: confluent01PolyLemma}
    Let $\mathbf{0}  \neq \mathbf{x} \in \chi_{q}$
    be confluent, with the confluence basis
    $((S_{1}, T_{1}), \dots, (S_{r}, T_{r}))$.
    Let $f_{1}, \dots, f_{q}: \mathbb{R}^{d} \rightarrow \mathbb{R}$
    be 0-1 polynomials with $q$ terms such that
    $\phi_{\mathbf{x}}(f_{1}(\mathbf{p}), \dots,
    f_{q}(\mathbf{p})) = 0$
    for all $\mathbf{p} \in \mathbb{R}^{d}$.
    Then, given any $i, j \in [q]$, such that
    $i, j \in S_{a} \cup T_{a}$ for some $a \in [r]$,
    we have that $f_{i}$ and $f_{j}$ are proportional
    to each other.
\end{lemma}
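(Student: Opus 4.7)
The plan is to reduce the multivariate case to the single-variable case of \cref{lemma: confluent01PolyLemmaSimple} by specializing along a curve $p_t = p^{c_t}$ for carefully chosen non-negative integers $\mathbf{c} = (c_1, \ldots, c_d)$. Write each $f_i(\mathbf{p}) = \sum_{k \in [q]} \mathbf{p}^{\mathbf{z}_{i,k}}$ with the multi-indices $\mathbf{z}_{i,k} \in \mathbb{Z}_{\ge 0}^d$ distinct in $k$ for each fixed $i$. The finite set of differences $\mathbf{Z} = \{\mathbf{z}_{i,k} - \mathbf{z}_{i',k'} : i, i' \in [q],\ k, k' \in [q]\} \subset \mathbb{Z}^d$ determines finitely many rational hyperplanes, so I can pick $\mathbf{c}$ whose inner product with every non-zero vector of $\mathbf{Z} - \mathbf{Z}$ is non-zero; concretely, $\mathbf{c} = (1, N, N^2, \ldots, N^{d-1})$ with $N$ larger than twice the largest exponent appearing in any $f_i$ will do.

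With such $\mathbf{c}$, define $\tilde f_i(p) = f_i(p^{c_1}, \ldots, p^{c_d}) = \sum_{k \in [q]} p^{\langle \mathbf{c}, \mathbf{z}_{i,k}\rangle}$. The separation property of $\mathbf{c}$ forces the exponents $\langle \mathbf{c}, \mathbf{z}_{i,k}\rangle$ to be pairwise distinct in $k$ for each fixed $i$, so $\tilde f_i$ is a $0$-$1$ polynomial in one variable with $q$ terms. Substituting $p_t = p^{c_t}$ into the hypothesis $\phi_{\mathbf{x}}(f_1(\mathbf{p}), \ldots, f_q(\mathbf{p})) = 0$ gives $\phi_{\mathbf{x}}(\tilde f_1(p), \ldots, \tilde f_q(p)) = 0$ for all $p$, so \cref{lemma: confluent01PolyLemmaSimple} applies and yields that $\tilde f_i$ and $\tilde f_j$ are proportional whenever $i, j \in S_a \cup T_a$ for some $a \in [r]$.

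It remains to lift this univariate proportionality to the multivariate setting. Proportionality of $\tilde f_i$ and $\tilde f_j$ furnishes an integer $e$ and a bijection $\sigma: [q] \to [q]$ with $\langle \mathbf{c}, \mathbf{z}_{i,k}\rangle = \langle \mathbf{c}, \mathbf{z}_{j,\sigma(k)}\rangle + e$ for every $k \in [q]$. Subtracting the $k = 1$ relation yields $\langle \mathbf{c}, (\mathbf{z}_{i,k} - \mathbf{z}_{i,1}) - (\mathbf{z}_{j,\sigma(k)} - \mathbf{z}_{j,\sigma(1)})\rangle = 0$, and by the choice of $\mathbf{c}$ this forces $\mathbf{z}_{i,k} - \mathbf{z}_{i,1} = \mathbf{z}_{j,\sigma(k)} - \mathbf{z}_{j,\sigma(1)}$ for all $k$. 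Rearranging gives $\mathbf{p}^{\mathbf{z}_{j,\sigma(1)}} f_i(\mathbf{p}) = \mathbf{p}^{\mathbf{z}_{i,1}} f_j(\mathbf{p})$, which is exactly the definition of $f_i$ and $f_j$ being proportional (\cref{definition: proportionalPoly}). The main obstacle is orchestrating a single $\mathbf{c}$ that simultaneously (a) keeps the $q$ terms of each $\tilde f_i$ distinct and (b) makes the univariate proportionality lift back to the multivariate one; both are achieved because they are finite genericity conditions on $\mathbf{c}$, met by the explicit choice $\mathbf{c} = (1, N, \ldots, N^{d-1})$ for sufficiently large $N$.
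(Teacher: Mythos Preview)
Your proof is correct and essentially identical to the paper's: both specialize along a monomial curve $p_t = p^{c_t}$ with $\mathbf{c} = (m, m^2, \ldots, m^d)$ (or your $(1, N, \ldots, N^{d-1})$) chosen large enough to make the exponent-to-integer map injective, apply \cref{lemma: confluent01PolyLemmaSimple}, and then lift univariate proportionality back to the multivariate setting via uniqueness of the balanced base-$N$ representation. The only difference is cosmetic---the paper keeps the common shift $d_{i,j}$ and reads off each coordinate directly, while you subtract the $k=1$ relation first---and your bound ``$N$ larger than twice the largest exponent'' matches the paper's $m > 2\mathfrak{m}$.
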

\begin{proof}
    From \cref{definition: 01Poly}, we may assume that
    $$f_{j}(\mathbf{p}) = 
    \sum_{i \in [q]} \prod_{t \in [d]}p_{t}^{x_{i, t, j}}$$
    for all $j \in [q]$.
    We let $\mathfrak{m} =
    \max_{i \in [q], t \in [d], j \in [q]}(x_{i, j, t})$.
    For all $j \in [q]$, we know from \cref{definition: 01Poly}
    that the following sequences are pairwise distinct
    $$(x_{1, 1, j}, x_{1, 2, j}, \dots, x_{1, d, j})
    \neq (x_{2, 1, j}, x_{2, 2, j}, \dots, x_{2, d, j})
    \neq \cdots \neq 
    (x_{q, 1, j}, x_{q, 2, j}, \dots, x_{q, d, j}).$$
    Therefore for any $m > 2\mathfrak{m}$, if we let
    $$z_{i, j} = \sum_{t \in [d]}m^{t} \cdot x_{i, t, j},$$
    we see that for all $j \in [q]$ they are pairwise distinct,
    $z_{1, j} \neq \cdots \neq z_{q, j}$.
    Moreover, 
    $$f_{j}(p^{m}, p^{m^{2}}, \dots, p^{m^{d}}) = 
    \sum_{i \in [q]}p^{z_{i, j}}$$
    for all $j \in [q]$.
    We can now define $g_{j}: \mathbb{R} \rightarrow \mathbb{R}$
    as
    $$g_{j}(p) = f_{j}(p^{m}, p^{m^{2}}, \dots, p^{m^{d}}).$$
    By our choice of $m > 2\mathfrak{m}$,
    we see that each $g_{j}$ is a
    0-1 polynomial with (exactly) $q$ terms, i.e., no terms get combined.

    By hypothesis,
    $$\phi_{\mathbf{x}}(f_{1}(\mathbf{p}), \dots, f_{q}(\mathbf{p}))
    = 0$$
    for all $\mathbf{p} \in \mathbb{R}^{d}$.
    Therefore,
    $$\phi_{\mathbf{x}}(g_{1}(p), \dots, g_{q}(p)) = 0$$
    for all $p \in \mathbb{R}$.
    So, \cref{lemma: confluent01PolyLemmaSimple} tells us that
    given any $i, j \in [q]$ such that $i, j \in S_{a} \cup T_{a}$
    for some $a \in [r]$, we have that
    $g_{i}$ and $g_{j}$ are proportional to each other.
    But from our construction of $g_{i}$ and $g_{j}$,
    this means that there exists some $d_{i, j} \in \mathbb{Z}$
    such that
    $$\frac{p^{z_{1, i}} + \cdots + p^{z_{q, i}}}
    {p^{z_{1, j}} + \cdots + p^{z_{q, j}}} = 
    \frac{g_{i}(p)}{g_{j}(p)} = 
    p^{d_{i, j}}.$$
    Therefore, there exists some $\sigma \in S_{q}$
    such that
    $$p^{z_{1i} - z_{\sigma(1)j}} = 
    p^{z_{2i} - z_{\sigma(2)j}} =
    \cdots = p^{z_{qi} - z_{\sigma(q)j}}
    = p^{d_{i, j}}.$$
    From the construction of $z_{1i}, \dots, z_{qi}$, and
    $z_{1j}, \dots, z_{qj}$, this implies that
    $$\sum_{t \in [d]}m^{t} \cdot 
    (x_{1, t, i} - x_{\sigma(1), t, j})
    = \sum_{t \in [d]}m^{t} \cdot 
    (x_{2, t, i} - x_{\sigma(2), t, j})
    = \cdots = \sum_{t \in [d]}m^{t} \cdot 
    (x_{q, t, i} - x_{\sigma(q), t, j}).$$
    We note that $-\mathfrak{m} \leq 
    (x_{1ti} - x_{\sigma(1)tj}), \dots,
    (x_{qti} - x_{\sigma(q)tj}) \leq \mathfrak{m}$
    for all $t \in [d]$.
    So, our choice of $m > 2\mathfrak{m}$ implies that
    for all $t \in [d]$ there exists some $r_{t} \in \mathbb{Z}$
    such that,
    $$(x_{1, t, i} - x_{\sigma(1), t, j}) = \cdots =
    (x_{q, t, i} - x_{\sigma(q), t, j}) = r_{t}.$$
    Therefore,
    $$\prod_{t \in [d]} p_{t}^{x_{1, t, i}} = 
    \prod_{t \in [d]} p_{t}^{r_{t}} \cdot
    p_{t}^{x_{\sigma(1), t, j}}, ~~~\dots, ~~~
    \prod_{t \in [d]} p_{t}^{x_{q, t, i}} = 
    \prod_{t \in [d]} p_{t}^{r_{t}} \cdot
    p_{t}^{x_{\sigma(q), t, j}}.$$
    So,
    $$f_{i}(\mathbf{p}) = 
    \prod_{t \in [d]}p_{t}^{x_{1, t, i}} + \cdots + 
    \prod_{t \in [d]}p_{t}^{x_{q, t, i}}
    = \prod_{t \in [d]} p_{t}^{r_{t}} \cdot \left(
    \prod_{t \in [d]}p_{t}^{x_{1, t, j}} + \cdots + 
    \prod_{t \in [d]}p_{t}^{x_{q, t, j}}\right)
    = \prod_{t \in [d]} p_{t}^{r_{t}} \cdot f_{j}(\mathbf{p}).$$
    This proves that $f_{i}$ and $f_{j}$ are 
    proportional to each other.
\end{proof}

We are now finally ready to study how confluent
$\mathbf{x} \in \chi_{q}$ interact with
matrices $M \in \text{Sym}_{q}^{\tt{F}}(\mathbb{R}_{\neq 0})$.
We will initially restrict ourselves to a
very special family of matrices.

\begin{definition}\label{definition: fullMatrix}
    Let $M \in \text{Sym}_{q}(\mathbb{R})$.
    We say that $M$ is a row full matrix if the elements of each row of $M$ are pairwise distinct:
    $$\forall i \in [q],\ 
    M_{i1} \neq M_{i2} \neq \cdots \neq M_{iq}.$$
\end{definition}

\begin{definition}\label{definition: pairwiseIndepMatrix}
    Let $M \in \text{Sym}_{q}(\mathbb{R})$.
    We say that the rows $i \neq j$ of $M$
    are order dependent 
    if there exists some  $\sigma \in S_{q}$ such that 
    $(M_{i1}, M_{i2}, \dots, M_{iq})$ and 
$(M_{j\sigma(1)}, M_{j\sigma(2)}, \dots, M_{j\sigma(q)})$ are linearly dependent.
    We say that $M \in \text{Sym}_{q}(\mathbb{R})$ is
    pairwise order independent (or p.o.\,independent) if 
    no rows $i \neq j$ of $M$ are order dependent.
\end{definition}

\begin{lemma}\label{lemma: fullPairwiseIndepTest}
    There exist $\text{Sym}_{q}(\mathbb{R})$-polynomials
    $\rho_{\tt{full}}$ and $\rho_{\tt{indep}}$ such that given any
    $M \in \text{Sym}_{q}(\mathbb{R})$,
    $\rho_{\tt{full}}(M) \neq 0$ if and only if
    $M$ is row full, and
    $\rho_{\tt{indep}}(M) \neq 0$ if and only if $M$ is
    p.o.\,independent.
\end{lemma}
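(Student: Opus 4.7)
The plan is to construct both polynomials explicitly, as products of simple homogeneous factors indexed by the combinatorial data that defines each property. This reduces the lemma to verifying that the individual factors encode the right local conditions, which in each case comes from a standard rank-one detection.

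For $\rho_{\tt{full}}$, I will take
\[
\rho_{\tt{full}}(M) \;=\; \prod_{i \in [q]} \prod_{1 \leq j < k \leq q}(M_{ij} - M_{ik}),
\]
a product of linear forms in the entries of $M$, hence a homogeneous polynomial and therefore a $\text{Sym}_q(\mathbb{R})$-polynomial. Its nonvanishing at $M$ is equivalent to $M_{ij} \neq M_{ik}$ for every row $i$ and every pair $j < k$, which is exactly the row full condition of \cref{definition: fullMatrix}.

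For $\rho_{\tt{indep}}$, the key observation is that two vectors $\mathbf{u}, \mathbf{v} \in \mathbb{R}^q$ are linearly dependent iff all $2 \times 2$ minors $u_k v_l - u_l v_k$ vanish, iff the sum of squares $\sum_{k < l}(u_k v_l - u_l v_k)^2$ is zero (using that we are over $\mathbb{R}$, so a sum of real squares vanishes iff each square does). So for each pair of rows $i < j$ and each $\sigma \in S_q$ I set
\[
D_{ij\sigma}(M) \;=\; \sum_{1 \leq k < l \leq q}\bigl(M_{ik}M_{j\sigma(l)} - M_{il}M_{j\sigma(k)}\bigr)^2,
\]
a homogeneous polynomial of degree $4$, and then define
\[
\rho_{\tt{indep}}(M) \;=\; \prod_{1 \leq i < j \leq q}\,\prod_{\sigma \in S_q} D_{ij\sigma}(M),
\]
a product of homogeneous polynomials and hence itself a $\text{Sym}_q(\mathbb{R})$-polynomial. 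By the observation, $D_{ij\sigma}(M) = 0$ iff the row vectors $(M_{i1}, \ldots, M_{iq})$ and $(M_{j\sigma(1)}, \ldots, M_{j\sigma(q)})$ are linearly dependent, so $\rho_{\tt{indep}}(M) \neq 0$ iff no pair of rows $i < j$ exhibits order dependence through any $\sigma$.

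There is no serious obstacle; the construction is essentially a translation of the two combinatorial conditions into homogeneous polynomial form. The only mild subtlety is that \cref{definition: pairwiseIndepMatrix} treats the two rows asymmetrically (row $j$ is permuted while row $i$ is not), so I should note that if $(M_{i1}, \ldots, M_{iq})$ and $(M_{j\sigma(1)}, \ldots, M_{j\sigma(q)})$ are linearly dependent then reindexing columns by $\sigma^{-1}$ shows that $(M_{i\sigma^{-1}(1)}, \ldots, M_{i\sigma^{-1}(q)})$ and $(M_{j1}, \ldots, M_{jq})$ are also linearly dependent. Hence the order-dependence relation is symmetric in $i, j$, and restricting the outer product in $\rho_{\tt{indep}}$ to $i < j$ loses no information.
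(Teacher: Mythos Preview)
Your proof is correct and essentially the same as the paper's. For $\rho_{\tt{indep}}$ the paper uses the Gram determinant $\mathfrak{g}(\mathbf{u},\mathbf{v}) = \|\mathbf{u}\|^2\|\mathbf{v}\|^2 - \langle\mathbf{u},\mathbf{v}\rangle^2$ in place of your $D_{ij\sigma}$, but by Lagrange's identity this equals $\sum_{k<l}(u_k v_l - u_l v_k)^2$, so the two constructions are literally the same polynomial; the paper also products over all ordered pairs $i \neq j$ rather than invoking the symmetry argument you give for $i < j$, but this is cosmetic.
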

\begin{proof}
    For all $i \in [q]$, we can define the
    $\text{Sym}_{q}(\mathbb{R})$-polynomial
    $\rho_{{\tt{full}}, i}(M) = \prod_{j_{1} \neq j_{2} \in [q]}
    (M_{ij_{1}} - M_{ij_{2}}).$
    We then define $\rho_{\tt{full}}$ such that
    $$\rho_{\tt{full}}(M) =
    \prod_{i \in [q]}\rho_{{\tt{full}}, i}(M).$$
    We can easily see that for any $M \in \text{Sym}_{q}(\mathbb{R})$,
    $\rho_{\tt{full}}(M) \neq 0$
    if and only if
    $M$ is a row full matrix.
    We can also see that $\rho_{\tt{full}}(M)$ is a homogeneous
    polynomial in the entries of $M$. So, it is a
    $\text{Sym}_{q}(\mathbb{R})$-polynomial.
    
    We know that given any two vectors
    $\mathbf{a}, \mathbf{b} \in \mathbb{R}^{q}$, 
    the Gram determinant $\mathfrak{g}: \mathbb{R}^{n} \times 
    \mathbb{R}^{n} \rightarrow \mathbb{R}$:
    \[\mathfrak{g}(\mathbf{a}, \mathbf{b})  = \det \begin{bmatrix}
\langle \mathbf{a}, \mathbf{a} \rangle  & \langle \mathbf{a}, \mathbf{b} \rangle \\ 
\langle \mathbf{b}, \mathbf{a} \rangle  & \langle \mathbf{b}, \mathbf{b} \rangle 
\end{bmatrix}
    \]
    is a homogeneous polynomial, where $\langle \cdot, \cdot \rangle $ denotes inner product, such that
    $\mathfrak{g}(\mathbf{a}, \mathbf{b}) \not = 0$
    if and only if $\mathbf{a}$ and $\mathbf{b}$
    are linearly independent.
    We will now define the function
    $\rho_{{\tt{indep}}, ij}$ for $i \neq j \in [q]$
    as
    \begin{equation}\label{ij-indep}
    \rho_{{\tt{indep}}, ij}(M) = 
    \prod_{\sigma \in S_{q}}\mathfrak{g}\Big(
    (M_{i1}, \dots, M_{iq}),
    (M_{j\sigma(1)}, \dots, M_{j\sigma(q)})\Big),
      \end{equation}
    and the function $\rho_{\tt{indep}}$ as
     \begin{equation}\label{po-indep}
    \rho_{{\tt{indep}}}(M) =
    \prod_{i \neq j \in [q]} \rho_{{\tt{indep}}, ij}(M).
      \end{equation}

    From the construction, we can easily see that
    $\rho_{\tt{indep}}(M) \neq 0$ if and only if
    $M$ is p.o.\,independent.
    We can also see that $\rho_{\tt{indep}}(M)$ is a homogeneous
    polynomial in the entries of $M$. So, it is a
    $\text{Sym}_{q}(\mathbb{R})$-polynomial.
\end{proof}

We will now study how confluent $\mathbf{x} \in \chi_{q}$
interact with row full, p.o.\,independent matrices in
$\text{Sym}_{q}^{\tt{pd}}(\mathbb{R}_{> 0})$.

\begin{theorem}\label{theorem: confluentReduction}
    Let $M \in \text{Sym}_{q}^{\tt{pd}}(\mathbb{R}_{> 0})$
    be a row full, p.o.\,independent matrix.
    Let $\mathcal{F}$ be a countable set
    of $\text{Sym}_{q}(\mathbb{R})$-polynomials such that
    $F(M) \neq 0$ for all $F \in \mathcal{F}$.
    Given any confluent $\mathbf{0} \neq \mathbf{x} \in \chi_{q}$,
    there exists some row full, p.o.\,independent
    $N \in \mathfrak{R}(M, \mathcal{F} \cup \{\Psi_{\mathbf{x}}\}) \cap
    \text{Sym}_{q}^{\tt{pd}}(\mathbb{R}_{> 0})$.
\end{theorem}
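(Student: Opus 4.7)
The plan is to reduce the statement to \cref{theorem: positiveDefiniteDiagonal} after first arranging, via stretching, that $\Phi_{\mathbf{x}}$ evaluated on the diagonal entries is non-zero. First, enlarge $\mathcal{F}$ to $\mathcal{F}' = \mathcal{F} \cup \{\rho_{\tt{full}}, \rho_{\tt{indep}}\}$; by \cref{lemma: fullPairwiseIndepTest} and the hypotheses, $F(M) \neq 0$ for every $F \in \mathcal{F}'$, and any reduct that keeps these two polynomials non-zero is automatically row full and p.o.\ independent. If $\Phi_{\mathbf{x}}(M_{11}, \ldots, M_{qq}) \neq 0$, apply \cref{theorem: positiveDefiniteDiagonal} to $(M, \mathcal{F}')$ directly: the resulting $N \in \mathfrak{R}(M, \mathcal{F}' \cup \mathcal{F}_M \cup \{\Psi_{\mathbf{x}}\}) \cap \text{Sym}_q^{\tt{pd}}(\mathbb{R}_{>0})$ satisfies all required properties.

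Suppose instead $\Phi_{\mathbf{x}}(M_{11}, \ldots, M_{qq}) = 0$. Since thickening preserves the monomial exponent vectors of the diagonal entries, it cannot repair this degeneracy; stretching must be used. Define the $\text{Sym}_q(\mathbb{R})$-polynomial $\zeta(N) := \Phi_{\mathbf{x}}(N_{11}, \ldots, N_{qq})$. By \cref{corollary: stretchingWorksPositive}, it suffices to exhibit some $\theta_0 \in \mathbb{R}$ with $\zeta(\mathcal{S}_M(\theta_0)) \neq 0$: such a witness yields, via the corollary, a reduct $M' = \mathcal{S}_M(\theta^*) \in \mathfrak{R}(M, \mathcal{F}' \cup \mathcal{F}_M \cup \{\zeta\}) \cap \text{Sym}_q^{\tt{pd}}(\mathbb{R}_{>0})$ with $\Phi_{\mathbf{x}}(M'_{11}, \ldots, M'_{qq}) \neq 0$, and applying \cref{theorem: positiveDefiniteDiagonal} to $(M', \mathcal{F}')$ finishes the proof.

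The main obstacle is showing that the real-analytic map $\theta \mapsto \zeta(\mathcal{S}_M(\theta))$ is not identically zero. Argue by contradiction. Since $\Phi_{\mathbf{x}} = \prod_{\sigma \in S_q} \phi_{\mathbf{x}}$ factors over an integral domain, some permutation $\sigma$ would make the entire factor $\phi_{\mathbf{x}}((\mathcal{S}_M(\theta))_{\sigma(1)\sigma(1)}, \ldots, (\mathcal{S}_M(\theta))_{\sigma(q)\sigma(q)})$ identically zero. Writing $(\mathcal{S}_M(\theta))_{ii} = \sum_k H_{ik}^2 \lambda_k^{\theta}$ and substituting formal variables $y_k = \lambda_k^\theta$, the identity becomes
\[
\prod_{i \in I^+} \Big(\sum_k H_{\sigma(i)k}^2\, y_k\Big)^{x_i} \;=\; \prod_{i \in I^-} \Big(\sum_k H_{\sigma(i)k}^2\, y_k\Big)^{-x_i}
\]
along the one-parameter analytic curve $\{(\lambda_1^\theta, \ldots, \lambda_q^\theta) : \theta \in \mathbb{R}\}$. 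Passing to the Zariski closure of this curve (a subtorus whose dimension is the $\mathbb{Q}$-rank of $\{\log \lambda_k\}$) and after a change of variables to multiplicatively independent coordinates, the relation becomes a genuine polynomial identity among the linear forms $P_i(\mathbf{y}) := \sum_k H_{\sigma(i)k}^2\, y_k$.

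Now invoke \cref{lemma: confluent01PolyLemma} — or, since the $P_i$ are positive linear forms with arbitrary non-negative coefficients rather than 0-1 polynomials with $q$ terms, an adaptation of its proof (whose core mechanism of matching minimal-degree monomials in each step transfers directly once the coefficients are absorbed by a suitable diagonal rescaling of the $y_k$). Confluence of $\mathbf{x}$ with basis $\{(S_1, T_1), \ldots, (S_r, T_r)\}$ then forces $P_{\sigma(i)}$ and $P_{\sigma(j)}$ to be proportional in the sense of \cref{definition: proportionalPoly} whenever $i, j$ lie in a common block $S_a \cup T_a$. For degree-one forms, proportionality through a monomial factor collapses to scalar proportionality of the coefficient vectors $(H_{\sigma(i)k}^2)_k$ and $(H_{\sigma(j)k}^2)_k$. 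Transporting this proportionality back through $M = HDH^{\tt{T}}$ produces two rows of $M$ that are linearly dependent after a suitable permutation, forcing $\rho_{\tt{indep}}(M) = 0$ and contradicting the p.o.\ independence of $M$. The principal technical hurdles are the Zariski-closure step in the presence of a non-trivial eigenvalue lattice, the adaptation of \cref{lemma: confluent01PolyLemma} to positive linear forms, and the conversion of coefficient-vector proportionality into genuine row-proportionality of $M$ while handling the sign ambiguities from extracting square roots of $H_{ik}^2$.
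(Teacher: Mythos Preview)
Your approach has a genuine gap in the last two steps, and it is not merely a technical hurdle.

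First, the proposed adaptation of \cref{lemma: confluent01PolyLemma} does not go through. A diagonal rescaling $y_k \mapsto a_k y_k$ can turn at most one of the forms $P_i(\mathbf{y}) = \sum_k H_{\sigma(i)k}^2\, y_k$ into a 0-1 polynomial (by taking $a_k = H_{\sigma(i)k}^{-2}$), not all of them simultaneously; and the inductive matching-of-least-degree-terms argument in the proof of \cref{lemma: confluent01PolyLemmaSimple} genuinely relies on all leading coefficients being $1$, since it equates $\sum_{i\in S} x_i$ with $\sum_{j\in T}(-x_j)$ to invoke confluence. With arbitrary positive coefficients you instead get $\sum_{i\in S} c_i x_i = \sum_{j\in T} c_j'(-x_j)$, and the confluence hypothesis says nothing about such weighted sums.

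Second, and more seriously, even granting proportionality of the vectors $(H_{\sigma(i)k}^2)_k$ and $(H_{\sigma(j)k}^2)_k$, you cannot deduce that rows $\sigma(i)$ and $\sigma(j)$ of $M = HDH^{\tt T}$ are order dependent. Proportionality of the squared coordinates gives only $H_{\sigma(i)k} = \pm\sqrt{c}\, H_{\sigma(j)k}$ with signs that may vary with $k$, so $M_{\sigma(i)\ell} = \sum_k H_{\sigma(i)k}\lambda_k H_{\ell k}$ need bear no linear relation to any permutation of row $\sigma(j)$. The ``conversion of coefficient-vector proportionality into genuine row-proportionality of $M$'' is not a sign-bookkeeping nuisance; it simply fails.

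The paper avoids both issues by using \emph{thickening} rather than stretching as the first move. Each entry of $\mathcal{T}_M(\mathbf{p})$ is the monomial $\prod_t p_t^{e_{ijt}}$, so $(\mathcal{T}_M(\mathbf{p})^2)_{ii} = \sum_j \prod_t p_t^{2e_{ijt}}$ is literally a 0-1 polynomial with $q$ distinct terms (row-fullness guarantees distinctness). Then \cref{lemma: confluent01PolyLemma} applies verbatim, and the resulting proportionality of these 0-1 polynomials, evaluated at $(g_1,\dots,g_d)$, yields $(M_{i1}^2,\dots,M_{iq}^2) = c\,(M_{j\sigma(1)}^2,\dots,M_{j\sigma(q)}^2)$; positivity of the entries lets you take square roots and directly contradict p.o.\ independence. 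Only \emph{after} this thickening step does the paper stretch (via \cref{corollary: stretchingWorksPositive}) to pass from $\Phi_{\mathbf{x}}$ on the diagonal of $N^2$ to the diagonal of $N$, and then invoke \cref{theorem: positiveDefiniteDiagonal}.
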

\begin{proof}
    We may assume that the entries of $M$ are generated
    by $\{g_{t}\}_{t \in [d]}$.
    We recall that we may replace $M$ with some
    $c M$ as in \cref{lemma: MequivalentCM}.
    So, we know that for $i, j \in [q]$, there exist
    unique integers $e_{ij0} \in \{0, 1\}$,
    and $e_{ij1}, \dots, e_{ijd} \geq 0$, such that
    $$M_{ij} = (-1)^{e_{ij0}} \cdot g_{1}^{e_{ij1}} \cdots g_{d}^{e_{ijd}}.$$
    Since, $M_{ij} > 0$ for all $i, j \in [q]$ by our choice
    of $M$, we see that $e_{ij0} = 0$ for all $i, j \in [q]$.
    So, from \cref{definition: mathcalT}, we see that
    $$\mathcal{T}_{M}(\mathbf{p})_{ij} =
    p_{1}^{e_{ij1}} \cdots p_{d}^{e_{ijd}}$$
    is a polynomial in $\mathbf{p}$ for all $i, j \in [q]$.
    So, we see that for all $i \in [q]$,
    $$\left(\mathcal{T}_{M}(\mathbf{p})^{2}\right)_{ii}
    = \sum_{j \in [q]}
    \left(\mathcal{T}_{M}(\mathbf{p})_{ij} \right)^{2}
    = \sum_{j \in [q]}\prod_{t \in [d]}p_{t}^{2e_{ijt}}.$$

    Since $M$ is a row full matrix, we know that for all $i \in [q]$,
    $M_{i1} \neq M_{i2} \neq \cdots \neq M_{iq}$
    are pairwise distinct.
    Since $\{g_{t}\}_{t \in [d]}$ is a generating set,
    this implies that for each $i \in [q]$,
    $(e_{i11}, e_{i12}, \dots, e_{i1d}) \neq 
    (e_{i21}, e_{i22}, \dots, e_{i2d}) \neq \cdots \neq
    (e_{iq1}, e_{iq2}, \dots, e_{iqd})$
    are also pairwise distinct.
    Therefore, for each $i \in [q]$, we see that
    $\left(\mathcal{T}_{M}(\mathbf{p})^{2}\right)_{ii}$
    is a 0-1 polynomial with exactly $q$ terms.
    Since $\mathbf{x}$ is confluent, we know
    that it has a confluence basis $((S_{1}, T_{1}),
    \dots, (S_{r}, T_{r}))$.
    Since $r \geq 1$, and $S_{1}, T_{1} \neq \emptyset$,
    we may pick some $i \in S_{1}$, and $j \in T_{1}$.
    We will now assume that
    $\phi_{\mathbf{x}}\left((\mathcal{T}_{M}(\mathbf{p})^{2})_{11},
    \dots, (\mathcal{T}_{M}(\mathbf{p})^{2})_{qq}\right) = 0$
    for all $\mathbf{p} \in \mathbb{R}^{d}$.
    \cref{lemma: confluent01PolyLemma} now implies
    that by our choice of $i, j \in [q]$,
    $(\mathcal{T}_{M}(\mathbf{p})^{2})_{ii}$ and
    $(\mathcal{T}_{M}(\mathbf{p})^{2})_{jj}$ are
    proportional to each other.
    But this means that there exists some
    $c_{1}, \dots, c_{d} \in \mathbb{Z}$ such that
    $$\frac{\prod_{t \in [d]} p_{t}^{2e_{i1t}} + \cdots 
    + \prod_{t \in [d]} p_{t}^{2e_{iqt}}}
    {\prod_{t \in [d]} p_{t}^{2e_{j1t}} + \cdots 
    + \prod_{t \in [d]} p_{t}^{2e_{jqt}}}
    = \prod_{t \in [d]}p_{t}^{c_{t}}.$$
    Therefore, there exists some $\sigma \in S_{q}$
    such that
    $$\prod_{t \in [d]}p_{t}^{2e_{i1t}}
    = \prod_{t \in [d]}p_{t}^{c_{t}} \cdot
    \prod_{t \in [d]}p_{t}^{2e_{j\sigma(1)t}}, ~~~\dots,~~~
    \prod_{t \in [d]}p_{t}^{2e_{iqt}}
    = \prod_{t \in [d]}p_{t}^{c_{t}} \cdot
    \prod_{t \in [d]}p_{t}^{2e_{j\sigma(q)t}}$$
    as an identity in $\mathbf{p}$.
    In particular, when we evaluate these
    expressions at $(g_{1}, \dots, g_{d})$,
    we see that
    there exists a constant $c =
    \prod_{t \in [d]}g_{t}^{c_{t}} \in \mathbb{R}$
    such that
    $$\Big((M_{i1})^{2}, \dots, (M_{iq})^{2} \Big) = c \cdot
    \Big((M_{j\sigma(1)})^{2}, \dots, (M_{j\sigma(q)})^{2} \Big).$$
    Since $M_{ij} > 0$ for all $i, j \in [q]$ by
    our choice of $M$, this implies that in fact
    $$\Big(M_{i1}, \dots, M_{iq} \Big) = \sqrt{c} \cdot
    \Big(M_{j\sigma(1)}, \dots, M_{j\sigma(q)} \Big).$$
    This contradicts our assumption that $M$ is
    p.o.\,independent, since $i \ne j$.
    Therefore, our assumption that
    $\phi_{\mathbf{x}}\left(\mathcal{T}_{M}(\mathbf{p})^{2})_{11},
    \dots, (\mathcal{T}_{M}(\mathbf{p})^{2})_{qq}\right) = 0$
    must be false.

    Since our only restriction on $\mathbf{x}$
    was that it be confluent,
    $\phi_{\mathbf{y}}((\mathcal{T}_{M}(\mathbf{p})^{2})_{11},
    \dots, (\mathcal{T}_{M}(\mathbf{p})^{2})_{qq})$
    must be a non-zero polynomial in $\mathbf{p}$
    for all confluent $\mathbf{y} \in \chi_{q}$.
    In particular, given any $\sigma \in S_{q}$,
    we can construct $\mathbf{y} \in \chi_{q}$ such that
    $y_{i} = x_{\sigma(i)}$ for all $i \in [q]$.
    Since $\mathbf{x}$ is confluent, it is trivial
    to see that $\mathbf{y}$ is also confluent.
    Therefore, it follows from the construction of
    $\Phi_{\mathbf{x}}$ that
    $$\Phi_{\mathbf{x}}((\mathcal{T}_{M}(\mathbf{p})^{2})_{11},
    \dots, (\mathcal{T}_{M}(\mathbf{p})^{2})_{qq}) = 
    \prod_{\substack{\mathbf{y} \in \chi_{q}:
    \exists\ \sigma \in S_{q}\\
    \forall\ i \in [q], y_{i} = x_{\sigma(i)}}}
    \phi_{\mathbf{y}}((\mathcal{T}_{M}(\mathbf{p})^{2})_{11},
    \dots, (\mathcal{T}_{M}(\mathbf{p})^{2})_{qq})$$
    is a product of non-zero polynomials, and is therefore,
    a non-zero polynomial.
    So, there exists some $\mathbf{p}_{\mathbf{x}}
    \in \mathbb{R}^{d}$ such that
    $\Phi_{\mathbf{x}}(
    (\mathcal{T}_{M}(\mathbf{p}_{\mathbf{x}})^{2})_{11}, \dots, 
    (\mathcal{T}_{M}(\mathbf{p}_{\mathbf{x}})^{2})_{qq}) \neq 0$.

    We will now define $\zeta: N \mapsto \Phi_{\mathbf{x}}
    ((N^{2})_{11}, \dots, (N^{2})_{qq})$
    for all $N \in \text{Sym}_{q}(\mathbb{R})$.
    We can see that since the entries of the matrix $N^{2}$
    are homogeneous polynomials in the entries of the matrix $N$,
    $\zeta$ is in fact a $\text{Sym}_{q}(\mathbb{R})$-polynomial.
    From \cref{lemma: fullPairwiseIndepTest}, we also know that
    there exist $\text{Sym}_{q}(\mathbb{R})$-polynomials
    $\rho_{\tt{full}}$, and $\rho_{\tt{indep}}$.
    We will now let
    $\mathcal{F}' = \mathcal{F} \cup  \{\zeta, \rho_{\tt{full}},
    \rho_{\tt{indep}}\}$.
    We know that  for every $F \in \mathcal{F}$,
    $F(\mathcal{T}_{M}(g_{1}, \dots, g_{d})) = F(M) \neq 0$.
    We have also seen that
    $\zeta(\mathcal{T}_{M}(\mathbf{p}_{\mathbf{x}})) \neq 0$.
    Since $M$ is row full and p.o.\,independent,
    we also see that $\rho_{\tt{full}}(\mathcal{T}_{M}(g_{1}, \dots, g_{d}))
    = \rho_{\tt{full}}(M) \neq 0$, and    
    $\rho_{\tt{indep}}(\mathcal{T}_{M}(g_{1}, \dots, g_{d}))
    = \rho_{\tt{indep}}(M) \neq 0$.
    Therefore, \cref{lemma: thickeningWorks}
    tells us that there exists some
    $M' \in \mathfrak{R}
    (M, \mathcal{F}') \cap \text{Sym}_{q}^{\tt{pd}}
    (\mathbb{R}_{> 0})$.

    Therefore, $F(M') \neq 0$ for all
    $F \in \mathcal{F}$,
    and $\Phi_{\mathbf{x}}
    (((M')^{2})_{11}, \dots, ((M')^{2})_{qq}) \neq 0$.
    We will now define $\xi: N \mapsto
    \Phi_{\mathbf{x}}(N_{11}, \dots, N_{qq})$
    for all $N \in \text{Sym}_{q}(\mathbb{R})$,
    and let $\mathcal{F}'' = \mathcal{F}
    \cup \{\rho_{\tt{full}}, \rho_{\tt{indep}}, \xi\}$.
    From our construction of $M'$, we know that
    $F(M') \neq 0$ for all $F \in \mathcal{F} \cup 
    \{\rho_{\tt{full}}, \rho_{\tt{indep}}\}$,
    and $\xi((M')^{2}) = \zeta(M') \neq 0$.
    (We note that as $M' \in \text{Sym}_{q}^{\tt{pd}}
    (\mathbb{R}_{> 0})$, we have $(M')^2 = \mathcal{S}_{M'}(2)$.)
    Therefore, \cref{corollary: stretchingWorksPositive}
    implies that there exists some 
    $M'' \in \mathfrak{R}(M', \mathcal{F''})
    \cap \text{Sym}_{q}^{\tt{pd}}(\mathbb{R}_{> 0})$.
    Since $\PlEVAL(M') \leq \PlEVAL(M)$, this means that
    $M'' \in \mathfrak{R}(M, \mathcal{F}'')$.
    Finally, since
    $\Phi_{\mathbf{x}}((M'')_{11}, \dots, (M'')_{qq}) =\xi(M'') \neq 0$,
    \cref{theorem: positiveDefiniteDiagonal}
    implies that there exists some $N \in
    \mathfrak{R}(M'', \mathcal{F} \cup
    \{\rho_{\tt{full}}, \rho_{\tt{indep}}, \Psi_{\mathbf{x}}\})
    \cap \text{Sym}_{q}^{\tt{pd}}(\mathbb{R}_{> 0})$.
    This $N$ is our  required matrix
    \[N \in \mathfrak{R}(M, \mathcal{F}
    \cup \{\rho_{\rho_{\tt{full}}, \tt{indep}}, \Psi_{\mathbf{x}}\}) \cap
    \text{Sym}_{q}^{\tt{pd}}(\mathbb{R}_{> 0}).\]
\end{proof}
\section{Diagonal  and Pairwise Order Distinctness}\label{sec: distinct_matrices}

We will now try to understand the requirement that
$M$ be a p.o.\,independent matrix.

\begin{definition}\label{definition: pairwiseDistinct}
    Let $M \in \text{Sym}_{q}(\mathbb{R})$.
    We say that the rows $i \neq j$ of the matrix $M$
    are order identical, if there exists
    some $\sigma \in S_{q}$ such that
    $$(M_{i1}, \dots, M_{iq}) =
    (M_{j\sigma(1)}, \dots, M_{j\sigma(q)}).$$
    We say that $M \in \text{Sym}_{q}(\mathbb{R})$ is
    pairwise order distinct (or p.o.\,distinct) if 
    no two rows $i \neq j$ of $M$ are order identical.
\end{definition}

\begin{lemma}\label{lemma: fullPairwiseDistTest}
    There exists a $\text{Sym}_{q}(\mathbb{R})$-polynomial
    $\rho_{\tt{dist}}$, such that for any
    $M \in \text{Sym}_{q}(\mathbb{R})$,
    $\rho_{\tt{dist}}(M) \neq 0$ if and only if
    $M$ is p.o.\,distinct.
\end{lemma}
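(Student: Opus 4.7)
The plan is to mimic the construction of $\rho_{\tt{indep}}$ from \cref{lemma: fullPairwiseIndepTest} but with a simpler gadget: instead of a Gram determinant (which tests linear dependence of two row vectors), I would use a sum of squares (which over $\mathbb{R}$ tests exact equality of two vectors).

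Concretely, for each ordered pair $i \neq j \in [q]$ and each permutation $\sigma \in S_{q}$, I would introduce the quadratic polynomial
$$\rho_{\tt{dist}, ij, \sigma}(M) = \sum_{k \in [q]}(M_{ik} - M_{j\sigma(k)})^{2}.$$
Since the entries of $M$ are real, $\rho_{\tt{dist}, ij, \sigma}(M) = 0$ holds iff $(M_{i1}, \dots, M_{iq}) = (M_{j\sigma(1)}, \dots, M_{j\sigma(q)})$, i.e., iff rows $i$ and $j$ are order identical via $\sigma$ in the sense of \cref{definition: pairwiseDistinct}. I would then set, in analogy with \cref{ij-indep,po-indep},
$$\rho_{\tt{dist}, ij}(M) = \prod_{\sigma \in S_{q}}\rho_{\tt{dist}, ij, \sigma}(M) \qquad \text{and} \qquad \rho_{\tt{dist}}(M) = \prod_{i \neq j \in [q]}\rho_{\tt{dist}, ij}(M).$$

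Each factor $\rho_{\tt{dist}, ij, \sigma}$ is homogeneous of degree $2$ in the entries of $M$, so the full product is a homogeneous polynomial in the entries, and is therefore a $\text{Sym}_{q}(\mathbb{R})$-polynomial in the sense of \cref{definition: matrixPoly}. Because a product of real numbers vanishes iff at least one factor vanishes, $\rho_{\tt{dist}}(M) \neq 0$ iff no pair of distinct rows is order identical under any permutation $\sigma \in S_{q}$, which is precisely the condition that $M$ is p.o.\,distinct.

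I do not anticipate any genuine obstacle in this argument: the sum-of-squares characterization of vector equality over $\mathbb{R}$ is even more elementary than the Gram determinant characterization of linear independence already used in \cref{ij-indep}. The only point to double-check is homogeneity, which is automatic since each factor $\rho_{\tt{dist}, ij, \sigma}$ is itself homogeneous of the same fixed degree $2$, so the product is homogeneous of degree $2\cdot q(q-1)\cdot q!$.
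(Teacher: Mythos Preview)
Your proposal is correct and is essentially identical to the paper's own proof: the paper defines $\rho_{{\tt{dist}}, ij}(M) = \prod_{\sigma \in S_{q}}\sum_{k \in [q]}(M_{ik} - M_{j\sigma(k)})^{2}$ and $\rho_{{\tt{dist}}}(M) = \prod_{i \neq j \in [q]} \rho_{{\tt{dist}}, ij}(M)$, which is exactly your construction with the intermediate factor $\rho_{\tt{dist}, ij, \sigma}$ absorbed into the product over $\sigma$.
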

\begin{proof}    
    We know that given any two vectors
    $\mathbf{a}, \mathbf{b} \in \mathbb{R}^{q}$,
    $$\mathbf{a} = \mathbf{b} \iff
    \sum_{k \in [q]}(a_{k} - b_{k})^{2} = 0.$$
    So, we can define the function
    $\rho_{{\tt{dist}}, ij}$ for $i \neq j \in [q]$
    as
    $$\rho_{{\tt{dist}}, ij}(M) = 
    \prod_{\sigma \in S_{q}}\sum_{k \in [q]}
    (M_{ik} - M_{j\sigma(k)})^{2},$$
    and the function $\rho_{\tt{dist}}$ as
    $$\rho_{{\tt{dist}}}(M) = 
    \prod_{i \neq j \in [q]} \rho_{{\tt{dist}}, ij}(M).$$

    From the construction, we can easily see that
    $\rho_{\tt{dist}}(M) \neq 0$ if and only if
    $M$ is p.o.\,distinct.
    We can also see that $\rho_{\tt{dist}}(M)$ is a homogeneous
    polynomial in the entries of $M$. So, it is a 
    $\text{Sym}_{q}(\mathbb{R})$-polynomial.
\end{proof}

\begin{lemma}\label{lemma: DistImpliesIndep}
    Let $M \in \text{Sym}_{q}^{\tt{pd}}({\mathbb{R}_{> 0}})$
    be a row full, p.o.\,distinct matrix.
    Let $\mathcal{F}$ be a countable set of
    $\text{Sym}_{q}(\mathbb{R})$-polynomials
    such that $F(M) \neq 0$ for all
    $F \in \mathcal{F}$.
    Given any $i \neq j \in [q]$,
    there exists some row full, p.o.\,distinct
    $N \in \mathfrak{R}(M, \mathcal{F} \cup \{\rho_{{\tt{indep}}, ij}\})
    \cap \text{Sym}_{q}^{\tt{pd}}(\mathbb{R}_{> 0})$,
    where $\rho_{{\tt{indep}}, ij}$
    is as defined in \cref{ij-indep} in the proof of \cref{lemma: fullPairwiseIndepTest}.
\end{lemma}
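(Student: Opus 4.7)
The plan is to apply \cref{corollary: stretchingWorksPositive} with the countable set $\mathcal{F}' = \mathcal{F} \cup \{\rho_{\tt{full}}, \rho_{\tt{dist}}, \rho_{{\tt{indep}}, ij}\}$ of $\text{Sym}_q(\mathbb{R})$-polynomials. Since $M$ is row full ($\rho_{\tt{full}}(M) \neq 0$), p.o.\,distinct ($\rho_{\tt{dist}}(M) \neq 0$), and satisfies $F(M) \neq 0$ for every $F \in \mathcal{F}$, the only polynomial in $\mathcal{F}'$ that might vanish at $M$ is $\rho_{{\tt{indep}}, ij}$. To invoke the corollary it suffices to produce, for each $F \in \mathcal{F}'$, some $\theta_F \in \mathbb{R}$ with $F(\mathcal{S}_M(\theta_F)) \neq 0$; for every $F \neq \rho_{{\tt{indep}}, ij}$ I take $\theta_F = 1$, so the entire proof reduces to showing that the real analytic function $\theta \mapsto \rho_{{\tt{indep}}, ij}(\mathcal{S}_M(\theta))$ is not identically zero.

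Non-vanishing is the main obstacle. I argue by contradiction: suppose the function is identically zero. Then for every $\theta$ some $\sigma \in S_q$ forces the Gram determinant of rows $i$ and $j\sigma$ of $\mathcal{S}_M(\theta)$ to vanish. Since $S_q$ is finite and each such Gram determinant is a real analytic function of $\theta$, pigeonhole combined with the Identity Theorem for real analytic functions (used in the proof of \cref{lemma: stretchingWorks}) produces a single $\sigma \in S_q$ such that
\[\mathcal{S}_M(\theta)_{ik} \;=\; c(\theta)\,\mathcal{S}_M(\theta)_{j\sigma(k)} \qquad \text{for all } k \in [q] \text{ and all } \theta \in \mathbb{R},\]
with some real analytic $c(\theta)$. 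Evaluating at $\theta = 0$ (where $\mathcal{S}_M(0) = I$) forces $\sigma(i) = j$ and $c(0) = 1$; evaluating at $\theta = 1$ gives $M_{ik} = c(1)\,M_{j\sigma(k)}$, so p.o.\,distinctness of $M$ forces $c(1) \neq 1$.

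Specializing to $k = i$ expresses $c(\theta) = \mathcal{S}_M(\theta)_{ii}/\mathcal{S}_M(\theta)_{jj}$, so equating with the general-$k$ equation yields the functional identity $\mathcal{S}_M(\theta)_{ii}\,\mathcal{S}_M(\theta)_{j\sigma(k)} = \mathcal{S}_M(\theta)_{jj}\,\mathcal{S}_M(\theta)_{ik}$ for all $k$ and $\theta$. Expanding via the spectral decomposition $\mathcal{S}_M(\theta)_{ab} = \sum_l H_{al}H_{bl}\lambda_l^\theta$ and collecting coefficients of $(\lambda_{l_1}\lambda_{l_2})^\theta$ grouped by distinct products, the diagonal contributions ($l_1 = l_2 = l$) give $H_{il}H_{\sigma(k)l} = H_{jl}H_{kl}$ whenever $H_{il}H_{jl} \neq 0$. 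Iterating along the orbit of $\sigma$ (which is a finite-order permutation) forces $H_{jl}/H_{il} = \pm 1$, hence $H_{il}^2 = H_{jl}^2$ on the common support, and orthonormality of the rows of $H$ then upgrades this to $H_{il}^2 = H_{jl}^2$ for all $l$. Consequently $\mathcal{S}_M(\theta)_{ii} \equiv \mathcal{S}_M(\theta)_{jj}$ and $c(\theta) \equiv 1$, contradicting $c(1) \neq 1$. The principal technical subtlety lies in handling the case of coincident eigenvalues or coincident pairwise products $\lambda_{l_1}\lambda_{l_2}$, which is handled by grouping the spectral terms by equality class before extracting the algebraic constraints on $H$.

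Once this non-vanishing is established, \cref{corollary: stretchingWorksPositive} yields $\theta^*$ and $N = \mathcal{S}_M(\theta^*) \in \mathfrak{R}(M, \mathcal{F}') \cap \text{Sym}_q^{\tt{pd}}(\mathbb{R}_{>0})$. The conditions $\rho_{\tt{full}}(N) \neq 0$ and $\rho_{\tt{dist}}(N) \neq 0$ ensure that $N$ is row full and p.o.\,distinct, while $\rho_{{\tt{indep}}, ij}(N) \neq 0$ ensures that rows $i$ and $j$ of $N$ are order independent, as required.
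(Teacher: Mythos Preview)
Your overall strategy—showing directly that $\theta \mapsto \rho_{{\tt{indep}},ij}(\mathcal{S}_M(\theta))$ is not identically zero and then applying \cref{corollary: stretchingWorksPositive}—is sound and considerably shorter than the paper's route. However, the specific spectral argument you give has a real gap. When you ``collect coefficients of $(\lambda_{l_1}\lambda_{l_2})^\theta$'' and isolate the diagonal contribution $H_{il}H_{\sigma(k)l} = H_{jl}H_{kl}$, this is only valid if $\lambda_l^2$ is distinct from every other product $\lambda_{l_1}\lambda_{l_2}$; in general the eigenvalues of $M$ satisfy multiplicative relations (much of the paper is about exactly such matrices), and your remark about ``grouping by equality class'' does not explain how to recover pointwise constraints on individual $H_{il}$'s from the resulting grouped sums. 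The orthonormality step at the end does not repair this.

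A clean fix exists along your line. Squaring $\mathcal{S}_M(\theta)_{ik} = c(\theta)\mathcal{S}_M(\theta)_{j\sigma(k)}$ and summing over $k$, together with $\mathcal{S}_M(\theta)^2 = \mathcal{S}_M(2\theta)$, yields $c(2\theta) = c(\theta)^2$; with $c(0)=1$ and real analyticity this forces $c(\theta)=e^{a\theta}$, and $c(1)\neq 1$ gives $a\neq 0$. Then $\sum_l H_{il}^2\lambda_l^\theta = \sum_l H_{jl}^2(e^a\lambda_l)^\theta$, and Perron--Frobenius (the top eigenvalue of the positive matrix $M$ is simple with a strictly positive eigenvector, so $H_{il^*},H_{jl^*}\neq 0$) makes the dominant exponentials on the two sides differ, a contradiction. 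The paper takes a completely different and longer path: it explicitly \emph{assumes} $\rho_{{\tt{indep}},12}(\mathcal{S}_M(\theta))\equiv 0$, passes to a thickening construction $\mathcal{T}_M^*(p)$, and uses a derivative computation at $p=1$ together with the cycle structure of $\sigma$ to contradict p.o.\ distinctness.
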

\begin{proof}
    Without loss of generality, we may assume that
    $i = 1$, and $j = 2$.
    We may also assume that $M = HDH^{\tt{T}}$.
    We will consider the function
    $\mathcal{S}_{M}$ defined in \cref{definition: mathcalS}.
    If there exists some $\theta$ such that
    $\rho_{{\tt{indep}}, 12}(\mathcal{S}_{M}(\theta)) 
    \neq 0$,
    then we can immediately use
    \cref{corollary: stretchingWorksPositive} to find some
    $N \in \mathfrak{R}(M, \mathcal{F} \cup
    \{\rho_{\tt{full}}, \rho_{{\tt{indep}}, 12}\}) \cap
    \text{Sym}_{q}^{\tt{pd}}(\mathbb{R}_{> 0})$,
    and we would be done.
    Let us now assume that
    $\rho_{{\tt{indep}}, 12}(\mathcal{S}_{M}(\theta)) = 0$
    for all $\theta \in \mathbb{R}$.
    
    Since $M$ is row full and p.o.\,distinct,
    it follows that
    $\rho_{\tt{full}}(\mathcal{S}_{M}(1))
    = \rho_{\tt{full}}(M) \neq 0$, and
    $\rho_{\tt{dist}}(\mathcal{S}_{M}(1))
    = \rho_{\tt{dist}}(M) \neq 0$.
    We will now define
    $\xi: N \mapsto \rho_{\tt{full}}(T_{2}N)$.
    Since the entries of the matrix $T_{2}N$ are 
    homogeneous polynomials in the entries of $N$,
    $\xi$ is clearly a $\text{Sym}_{q}(\mathbb{R})$-polynomial.
    Since $M \in \text{Sym}_{q}(\mathbb{R}_{> 0})$
    is a row full matrix, we note $T_{2}M$
    is also a row full matrix.
    So, $\xi(\mathcal{S}_{M}(1))
    = \rho_{\tt{full}}(T_{2}M) \neq 0$.
   
    Note that $\mathcal{S}_{M}(0) = I$.
    Since $\mathcal{S}_{M}(\theta)_{ij}$ is a
    continuous function of $\theta$,
    we know that there exists some $\delta > 0$
    such that $|\mathcal{S}_{M}(\theta)_{ij} - I_{ij}|
    < \frac{1}{3}$ for all $0 < \theta < \delta$,
    for all $i, j \in [q]$.
    With this choice of $\delta$, 
    \cref{lemma: stretchingWorks} allows us to find some
    $M' = HD^{\theta'}H^{\tt{T}} = \mathcal{S}_{M}(\theta') \in
    \mathfrak{R}(M, \mathcal{F} \cup \{\rho_{\tt{full}},
    \rho_{\tt{dist}}, \xi\}) \cap
    \text{Sym}_{q}^{\tt{pd}}(\mathbb{R}_{\neq 0})$,
    such that $0 < \theta' < \delta$.
    We should stress here that $M'$ may have negative entries.

    From the definition of $\rho_{{\tt{indep}}, 12}$,
    we see that
    $$\rho_{{\tt{indep}}, 12}(\mathcal{S}_{M}(\theta)) = 
    \prod_{\sigma \in S_{q}}\mathfrak{g}\Big(
    (\mathcal{S}_{M}(\theta)_{11},
    \dots, \mathcal{S}_{M}(\theta)_{1q}),
    (\mathcal{S}_{M}(\theta)_{2\sigma(1)}, 
    \dots, \mathcal{S}_{M}(\theta)_{2\sigma(q)})\Big).$$
    Since $\rho_{{\tt{indep}}, 12}
    (\mathcal{S}_{M}(\theta))$ is identically 0, and is the product
    of real analytic functions, it follows that there
    must be some $\sigma \in S_{q}$ such that
    $\mathfrak{g}\Big(
    (\mathcal{S}_{M}(\theta)_{11}, \dots, 
    \mathcal{S}_{M}(\theta)_{1q}),
    (\mathcal{S}_{M}(\theta)_{2\sigma(1)}, \dots, 
    \mathcal{S}_{M}(\theta)_{2\sigma(q)})\Big) = 0$
     for all $\theta$.
    This implies that for all $i, j \in [q]$,
    $M'_{1i}\cdot M'_{2\sigma(j)} = M'_{1j} \cdot M'_{2\sigma(i)}$.
    
    From our construction of $M'$, we have
    $M'_{11}, M'_{22} > \nicefrac{2}{3}$, and
    $|M'_{ij}| < \nicefrac{1}{3}$ for all $i \neq j$.
    Therefore, given any $t_{1} \neq 1, 
    t_{2} \neq 2 \in [q]$, we see that
    $M'_{11} > |M'_{1t_{1}}|$ and 
    $M'_{22} > |M'_{2t_{2}}|$.
    Therefore, $|M'_{11}| \cdot |M'_{22}| >
    |M'_{1t_{1}}| \cdot |M'_{2t_{2}}|$.
    But we know that for all $j \in [q]$,
    $|M'_{11}| \cdot |M'_{2\sigma(j)}| =
    |M'_{1j}| \cdot |M'_{2\sigma(1)}|$.
    In particular, it must also be true
    for $j = \sigma^{-1}(2)$.
    This is only possible if $j = 1$ or $\sigma(1) = 2$.
    These are equivalent, namely $\sigma(1) = 2$.
    Now, let us assume that
    there exists some arbitrary $\sigma' \in S_{q}$, 
    such that
    $\mathfrak{g}\Big(
    (M'_{11}, \dots, M'_{1q}),
    (M'_{2\sigma'(1)}, \dots, M'_{2\sigma'(q)})\Big) = 0$.
    By the same argument, we have $\sigma'(1) = 2$.
    But this implies that, for all $t \in [q]$,
    $$\frac{|M'_{1t}|}{|M'_{2\sigma'(t)}|}
    = \frac{|M'_{11}|}{|M'_{22}|}
    = \frac{|M'_{1t}|}{|M'_{2\sigma(t)}|}.$$
    (The ratios are well defined as $M'\in
    \text{Sym}_{q}^{\tt{pd}}(\mathbb{R}_{\neq 0})$.)
    By construction, $M'$ satisfies the property
    that $\xi(M') = \rho_{\tt{full}}(T_{2}M') \neq 0$.
    Therefore, $T_2 M'$  is row full, and thus 
    $|M'_{2\sigma'(t)}|
    = |M'_{2\sigma(t)}|$ implies that  $\sigma'(t) = \sigma(t)$, for all $t \in [q]$.
    Therefore, we  conclude that
    $$\left( (\forall\ \theta \in \mathbb{R}) \left[ \mathfrak{g}\Big(
    (\mathcal{S}_{M}(\theta)_{11}, \dots, 
    \mathcal{S}_{M}(\theta)_{1q}),
    (\mathcal{S}_{M}(\theta)_{2\sigma'(1)}, \dots, 
    \mathcal{S}_{M}(\theta)_{2\sigma'(q)})\Big) = 0
    \  \right] \right) \implies 
    \sigma' = \sigma.$$
    Therefore, we can define
    $\zeta_{\tau}: N \mapsto 
    \mathfrak{g}((N_{11}, \dots, N_{1q}),
    (N_{2\tau(1)}, \dots, N_{2\tau(q)}))$
    for all $\tau \in S_{q}$.
    We can now use
    \cref{corollary: stretchingWorksPositive} to find
    $M'' \in \mathfrak{R}(M, \mathcal{F} \cup
    \{\rho_{\tt{full}}, \rho_{\tt{dist}}\} \cup
    \{\zeta_{\tau}: \tau \neq \sigma \in S_{q}\})
    \cap \text{Sym}_{q}^{\tt{pd}}(\mathbb{R}_{> 0})$.

    Now, let the entries of this matrix $M''$
    be generated by some $\{g_{1}, \dots, g_{d}\}$.
    Recall that we may assume by replacing
    $M''$ with some $c M''$, that there
    exist unique integers $e_{ijt}$ for that
    for all $i, j \in [q]$,
    $$(M'')_{ij} = (-1)^{e_{ij0}}g_{1}^{e_{ij1}}
    \cdots g_{d}^{e_{ijd}},$$
    where $e_{ij0} \in \{0, 1\}$ and
    $e_{ijt} \geq 0$ for all $t \in [d]$,
    for all $i, j \in [q]$.
    Since $M'' \in \text{Sym}_{q}(\mathbb{R}_{> 0})$,
    we see that $e_{ij0} = 0$ for all $i, j \in [q]$.
    Moreover, $M''$ is a row full matrix by construction.
    Therefore, for each $i \in [q]$, we have pairwise distinct tuples
    $$(e_{i11}, e_{i12}, \ldots, e_{i1d}) \neq
    (e_{i21}, e_{i22}, \ldots, e_{i2d}) \neq \cdots \neq 
    (e_{iq1}, e_{iq2}, \ldots, e_{iqd}).$$
    Let $\mathfrak{m} =
    \max_{i, j \in [q], t \in [d]}(e_{ijt})$.
    Pick any integer $m  > \mathfrak{m}$, and
    let
    $$z_{ij} = \sum_{t \in [d]}m^{t} \cdot e_{ijt}.$$
    We see that because of our choice of $m$,
    $z_{i1} \neq z_{i2} \neq \cdots \neq
    z_{iq}$ are pairwise distinct for all $i \in [q]$.
    We also observe that for all $i, j \in [q]$,
    $$\mathcal{T}_{M''}(p^{m}, \dots, p^{m^{d}})_{ij}
    = p^{z_{ij}}.$$
    Moreover, we may assume that $\zeta_{\sigma}
    (\mathcal{T}_{M''}(p^{m}, \dots, p^{m^{d}})) = 0$
    for all $p \in \mathbb{R}$, since
    otherwise, we would be able to
    use \cref{lemma: thickeningWorks} to find the required
    $N \in \mathfrak{R}(M'', \mathcal{F} \cup
    \{\rho_{\tt{full}}, \rho_{\tt{dist}}\} 
    \cup \{\zeta_{\tau}: \tau \in S_{q}\}) 
    \cap \text{Sym}_{q}^{\tt{pd}}(\mathbb{R}_{> 0}) = 
    \mathfrak{R}(M'', \mathcal{F} \cup
    \{\rho_{\tt{full}}, \rho_{\tt{dist}}, \rho_{{\tt{indep}}, 12}\}) 
    \cap \text{Sym}_{q}^{\tt{pd}}(\mathbb{R}_{> 0})$.
    Therefore, the two row vectors
    $$(p^{z_{11}}, \ldots,  p^{z_{1q}}) ~~\mbox{and}~~
    (p^{z_{2\sigma(1)}},  \ldots,  p^{z_{2\sigma(q)}})$$
    are linearly
    dependent for all $p$, and so one is a scalar multiple of the other
    for every $p$. This  scalar multiple is a function of $p$,
    and is obtainable as a ratio of two monomials, and thus is itself a power of $p$.
    Therefore, 
    there exists some integer $c \in \mathbb{Z}$
    such that $z_{11} - z_{2\sigma(1)} = \cdots =
    z_{1q} - z_{2\sigma(q)} = c$.
    We will now define one more function
    $\mathcal{T}_{M}^{*}: \mathbb{R} \rightarrow 
    \text{Sym}_{q}(\mathbb{R})$ such that
    $$\mathcal{T}_{M}^{*}(p)
    = (\mathcal{T}_{M''}(p^{m}, \dots, p^{m^{d}}))^{2}.$$

    Let us first assume that
    $\zeta_{\sigma}(\mathcal{T}_{M}^{*}(p)) = 0$ for all
    $p \in \mathbb{R}$.
    This implies that
    $$(\mathcal{T}_{M}^{*}(p)^{2})_{11} \cdot 
    (\mathcal{T}_{M}^{*}(p)^{2})_{2 \sigma (t)}
    = (\mathcal{T}_{M}^{*}(p)^{2})_{2 \sigma(1)} \cdot
    (\mathcal{T}_{M}^{*}(p)^{2})_{1t}$$
    for all $t \in [q]$.
    We recall that $\sigma(1) = 2$.
    Therefore, for all $t \in [q]$, we will define
    $f_{t}: \mathbb{R} \rightarrow \mathbb{R}$ as
    $f_{t}(p) = (\mathcal{T}_{M}^{*}(p)^{2})_{11} \cdot 
    (\mathcal{T}_{M}^{*}(p)^{2})_{2 \sigma (t)}
    - (\mathcal{T}_{M}^{*}(p)^{2})_{22} \cdot
    (\mathcal{T}_{M}^{*}(p)^{2})_{1t}$,
    and we find that
    \begin{align*}
        f_{t}(p)
        &= \left(\sum_{i \in [q]}p^{2z_{i1}}\right) \cdot
        \left(\sum_{i \in [q]}p^{z_{i2} + z_{i\sigma(t)}}\right) -
        \left(\sum_{i \in [q]}p^{2z_{i2}}\right) \cdot
        \left(\sum_{i \in [q]}p^{z_{i1} + z_{it}}\right)\\
        &= \left(\sum_{i, j \in [q]}
        p^{2z_{i1} + z_{j2} + z_{j\sigma(t)}} \right) - 
        \left(\sum_{i, j \in [q]}
        p^{2z_{i2} + z_{j1} + z_{jt}} \right).
    \end{align*}
    Since we know that $f_{t}(p) = 0$ for all
    $p \in \mathbb{R}$,
    it follows that $\frac{df_{t}}{dp}(p) = 0$
    for all $p \in \mathbb{R}$ as well.
    Specifically, we note that
    \begin{align*}
        \frac{df_{t}}{dp}(1)
        &= \left(\sum_{i, j \in [q]}
        2z_{i1} + z_{j2} + z_{j\sigma(t)} \right) - 
        \left(\sum_{i, j \in [q]}
        2z_{i2} + z_{j1} + z_{jt} \right)\\
        &= q\left(2\sum_{i \in [q]}(z_{i1}) +
        \sum_{i \in [q]}(z_{i2}) +
        \sum_{i \in [q]}(z_{i\sigma(t)}) \right) -
        q\left(2\sum_{i \in [q]}(z_{i2}) +
        \sum_{i \in [q]}(z_{i1}) +
        \sum_{i \in [q]}(z_{it}) \right)\\
        &= q\left(\sum_{i \in [q]}(z_{i1}) -
        \sum_{i \in [q]}(z_{i2}) + 
        \sum_{i \in [q]}(z_{i\sigma(t)}) - 
        \sum_{i \in [q]}(z_{it})\right).
    \end{align*}

    We recall 
    that $z_{11} - z_{2\sigma(1)} = 
    \cdots = z_{1q} - z_{2\sigma(q)} = c$
    for some integer $c \in \mathbb{Z}$.
    This means that $\sum_{i \in [q]}(z_{i1}) -
    \sum_{i \in [q]}(z_{i2}) = cq$.
    Therefore, we have found that
    $$\sum_{i \in [q]}(z_{it}) - 
    \sum_{i \in [q]}(z_{i\sigma(t)}) = cq.$$
    Since this is true for all $t \in [q]$,
    by the same argument, we also get that
    $$\sum_{i \in [q]}(z_{i\sigma(t)}) - 
    \sum_{i \in [q]}(z_{i\sigma(\sigma(t))}) = cq.$$
    This can be repeated. Since $\sigma \in S_{q}$, we know that
    there exists some $2 \leq n \leq q$ such that
    $\sigma^{n}(1) = 1$.
    So, we have shown that
    $$0 = \sum_{k = 1}^{n}\left(
    \sum_{i \in [q]}(z_{i\sigma^{k-1}(1)}) - 
    \sum_{i \in [q]}(z_{i\sigma^{k}(1)}) \right) = cnq.$$
    Since $n > 0$ and $q > 0$, this implies that $c = 0$.
    But from our construction of $z_{ij}$, this implies
    that $e_{1jt} = e_{2 \sigma(j) t}$, for all $j \in [q]$ and $t \in [d]$.  This implies
    that $M''_{1j} = M''_{2\sigma(j)}$ for all
    $j \in [q]$.
    This contradicts our assumption that
    $\rho_{\tt{dist}}(M'') \neq 0$.
    Therefore, our assumption that
    $\zeta_{\sigma}(\mathcal{T}_{M}^{*}(p)) = 0$
    for all $p \in \mathbb{R}$ must be false.
    We can therefore use \cref{lemma: thickeningWorks}
    to find $M''' \in \mathfrak{R}(M'', \mathcal{F}
    \cup \{\zeta_{\tau}: \tau \neq \sigma\}
    \cup \{\rho_{\tt{full}}, \rho_{\tt{dist}}, \zeta^{*}\}) \cap
    \text{Sym}_{q}^{\tt{pd}}(\mathbb{R}_{> 0})$,
    where $\zeta^{*}: N \mapsto \zeta_{\sigma}(N^{2})$.
    Since $\zeta_{\tau}(M''') \neq 0$,
    for all $\tau \neq \sigma$,
    and $\zeta_{\sigma}((M''')^{2}) \neq 0$
    we can then use
    \cref{corollary: stretchingWorksPositive}
    to find the required
    $N \in \mathfrak{R}(M''', \mathcal{F}
    \cup \{\rho_{\tt{full}}, \rho_{\tt{dist}}, \rho_{{\tt{indep}}, 12}\})
    \cap \text{Sym}_{q}^{\tt{pd}}(\mathbb{R}_{> 0})$.
\end{proof}

\begin{corollary}\label{corollary: pairwiseDistImpliesIndep}
    Let $M \in \text{Sym}_{q}^{\tt{pd}}(\mathbb{R}_{> 0})$
    be a row full, p.o.\,distinct matrix.
    Let $\mathcal{F}$ be a countable set of $\text{Sym}_{q}(\mathbb{R})$-polynomials such that
    $F(M) \neq 0$ for all $F \in \mathcal{F}$.
    There exists some row full, p.o.\,independent
    $N \in \mathfrak{R}(M, \mathcal{F}) \cap
    \text{Sym}_{q}^{\tt{pd}}(\mathbb{R}_{> 0})$.
\end{corollary}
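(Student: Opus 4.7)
The approach is a straightforward finite induction over the pairs $i \neq j \in [q]$, each iteration applying \cref{lemma: DistImpliesIndep} to force one more factor $\rho_{{\tt{indep}}, ij}$ to be non-zero while preserving row fullness and pairwise order distinctness. By the construction of $\rho_{\tt{indep}}$ in \cref{po-indep} of \cref{lemma: fullPairwiseIndepTest}, we have $\rho_{\tt{indep}}(N) = \prod_{i \neq j \in [q]} \rho_{{\tt{indep}}, ij}(N)$, so having all $\binom{q}{2}$-many (really $q(q-1)$-many as ordered pairs) of these factors non-zero on the final matrix is precisely p.o.\,independence.

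Concretely, enumerate the ordered pairs $\{(i_k, j_k)\}_{k=1}^{K}$ with $i_k \neq j_k$, where $K = q(q-1)$. Set $M_0 = M$ and $\mathcal{F}_0 = \mathcal{F}$. Inductively suppose that $M_{k-1} \in \text{Sym}_q^{\tt{pd}}(\mathbb{R}_{>0})$ is row full and p.o.\,distinct, that $\mathcal{F}_{k-1}$ is a countable set of $\text{Sym}_q(\mathbb{R})$-polynomials with $F(M_{k-1}) \neq 0$ for every $F \in \mathcal{F}_{k-1}$, and that $\PlEVAL(M_{k-1}) \leq \PlEVAL(M)$. Applying \cref{lemma: DistImpliesIndep} with input $(M_{k-1}, \mathcal{F}_{k-1})$ and the pair $(i_k, j_k)$ produces a row full, p.o.\,distinct
\[M_k \in \mathfrak{R}(M_{k-1}, \mathcal{F}_{k-1} \cup \{\rho_{{\tt{indep}}, i_k j_k}\}) \cap \text{Sym}_q^{\tt{pd}}(\mathbb{R}_{>0}).\]
Set $\mathcal{F}_k = \mathcal{F}_{k-1} \cup \{\rho_{{\tt{indep}}, i_k j_k}\}$. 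Transitivity of polynomial-time reductions gives $\PlEVAL(M_k) \leq \PlEVAL(M_{k-1}) \leq \PlEVAL(M)$, and membership in $\mathfrak{R}(M_{k-1}, \mathcal{F}_k)$ preserves the non-vanishing of every polynomial added at any previous step. Hence the inductive invariants all carry over to step $k$.

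After $K$ iterations, $M_K$ is row full, p.o.\,distinct, lies in $\text{Sym}_q^{\tt{pd}}(\mathbb{R}_{>0})$, satisfies $F(M_K) \neq 0$ for every $F \in \mathcal{F} \subseteq \mathcal{F}_K$, and has $\rho_{{\tt{indep}}, ij}(M_K) \neq 0$ for every ordered pair $i \neq j$. The latter property is equivalent to $\rho_{\tt{indep}}(M_K) \neq 0$, which by \cref{lemma: fullPairwiseIndepTest} means $M_K$ is p.o.\,independent. Setting $N = M_K$ then yields the required reduct $N \in \mathfrak{R}(M, \mathcal{F}) \cap \text{Sym}_q^{\tt{pd}}(\mathbb{R}_{>0})$ that is row full and p.o.\,independent.

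There is no genuine obstacle once \cref{lemma: DistImpliesIndep} is available: all the real analytic, polynomial-manipulation, and interpolation work has already been absorbed into its proof. The only matter to track is the inductive invariant, namely that at each stage $M_k$ remains row full and p.o.\,distinct (guaranteed by the lemma's conclusion) and that $\mathcal{F}_k$ remains countable (ensured since we add one polynomial per step to a countable starting family).
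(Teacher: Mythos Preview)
Your proof is correct and follows essentially the same approach as the paper: iteratively apply \cref{lemma: DistImpliesIndep} over all pairs $i \neq j$, accumulating the polynomials $\rho_{{\tt{indep}}, ij}$ in the working family while the lemma's conclusion maintains row fullness and p.o.\,distinctness at each step. The paper phrases the count as ``up to $\binom{q}{2}$ times'' (since $\rho_{{\tt{indep}}, ij}$ and $\rho_{{\tt{indep}}, ji}$ test the same symmetric condition), but your use of all $q(q-1)$ ordered pairs is harmless.
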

\begin{proof}
    Since $M$ is row full, and p.o.\,distinct,
    \cref{lemma: DistImpliesIndep} implies that
    there exists some $M^{1, 2} \in \mathfrak{R}
    (M, \mathcal{F} \cup \{\rho_{\tt{full}},
    \rho_{\tt{dist}}, \rho_{{\tt{indep}}, 12}\}) \cap
    \text{Sym}_{q}^{\tt{pd}}(\mathbb{R}_{> 0})$.
    We can now use \cref{lemma: DistImpliesIndep} again to find
    $M^{1, 3} \in \mathfrak{R}
    (M^{1, 2}, \mathcal{F} \cup \{\rho_{\tt{full}},
    \rho_{\tt{dist}}, \rho_{{\tt{indep}}, 12},
    \rho_{{\tt{indep}}, 13}\}) \cap
    \text{Sym}_{q}^{\tt{pd}}(\mathbb{R}_{> 0})$ which is contained in
    \[
    \mathfrak{R} (M, \mathcal{F} \cup \{\rho_{\tt{full}},
    \rho_{\tt{dist}}, \rho_{{\tt{indep}}, 12},
    \rho_{{\tt{indep}}, 13}\}) \cap
    \text{Sym}_{q}^{\tt{pd}}(\mathbb{R}_{> 0}).\]
    By repeating this up to $\binom{q}{2}$ times,
    we can find the required 
    \begin{eqnarray*} 
    N & \in  & \mathfrak{R}
    (M, \mathcal{F} \cup \{\rho_{\tt{full}},
    \rho_{\tt{dist}}\} \cup
    \{\rho_{{\tt{indep}}, ij}: i \neq j \in [q]\}) \cap
    \text{Sym}_{q}^{\tt{pd}}(\mathbb{R}_{> 0}) = \\
    &  &
    \mathfrak{R}(M, \mathcal{F} \cup \{\rho_{\tt{full}},
    \rho_{\tt{dist}}, \rho_{\tt{indep}}\}) \cap
    \text{Sym}_{q}^{\tt{pd}}(\mathbb{R}_{> 0}),
    \end{eqnarray*}
    where $\rho_{\tt{indep}}$ is defined in \cref{po-indep}.
\end{proof}

\begin{lemma}\label{lemma: pairwiseDistHardness}
    Let $M \in \text{Sym}_{q}^{\tt{pd}}(\mathbb{R}_{> 0})$
    be a row full, p.o.\,distinct matrix.
    If $\mathbf{x}$ is confluent for all
    $\mathbf{0} \neq \mathbf{x} \in \chi_{q}$ such that
    $\Psi_{\mathbf{x}}(M) = 0$, then
    $\PlEVAL(M)$ is $\#$P-hard.
\end{lemma}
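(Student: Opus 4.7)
The plan is to iterate the reduction machinery built in Sections 4--7, driving down the lattice dimension of the eigenvalues until we reach the base case handled by \cref{theorem: latticeHardness}. Let $M = HDH^{\tt{T}}$ have eigenvalues $(\lambda_1, \ldots, \lambda_q)$. If every lattice basis $\mathcal{B}$ of $(\lambda_1, \ldots, \lambda_q)$ satisfies $\mathcal{B} \subseteq \mathcal{D}$, then \cref{theorem: latticeHardness} immediately gives that $\PlEVAL(M)$ is $\#$P-hard, and we are done. Otherwise, fix some lattice basis $\mathcal{B}$ and some $\mathbf{x} \in \mathcal{B} \setminus \mathcal{D}$. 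Since $\mathbf{x} \in \mathcal{L}(\lambda_1, \ldots, \lambda_q)$, \cref{lemma: PsiExists} yields $\Psi_{\mathbf{x}}(M) = 0$, so the hypothesis of the lemma forces $\mathbf{x}$ to be confluent.

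Next, I invoke \cref{corollary: pairwiseDistImpliesIndep} with $\mathcal{F} = \mathcal{F}_M$ (defined in \cref{equation: FM}) to produce a row full, p.o.\,distinct matrix $M' \in \mathfrak{R}(M, \mathcal{F}_M \cup \{\rho_{\tt{full}}, \rho_{\tt{dist}}\}) \cap \text{Sym}_q^{\tt{pd}}(\mathbb{R}_{>0})$ that is also p.o.\,independent. Because $\mathbf{x}$ is confluent, I may then apply \cref{theorem: confluentReduction} to $M'$ with the countable family $\mathcal{F}_M \cup \{\rho_{\tt{full}}, \rho_{\tt{dist}}, \rho_{\tt{indep}}\}$ to obtain a row full, p.o.\,independent matrix
\[
N \in \mathfrak{R}\big(M', \mathcal{F}_M \cup \{\rho_{\tt{full}}, \rho_{\tt{dist}}, \rho_{\tt{indep}}, \Psi_{\mathbf{x}}\}\big) \cap \text{Sym}_q^{\tt{pd}}(\mathbb{R}_{>0}),
\]
which by transitivity is also a reduct of $M$. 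In particular $\rho_{\tt{dist}}(N) \neq 0$, so $N$ is again row full and p.o.\,distinct, and every $\Psi_{\mathbf{y}} \in \mathcal{F}_M$ still satisfies $\Psi_{\mathbf{y}}(N) \neq 0$.

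Let $(\mu_1, \ldots, \mu_q)$ denote the eigenvalues of $N$. \cref{lemma: PsiExists} combined with the preservation $\mathcal{F}_M \subseteq \mathcal{F}_N$ gives $\overline{\mathcal{L}(\mu_1, \ldots, \mu_q)} \subseteq \overline{\mathcal{L}(\lambda_1, \ldots, \lambda_q)}$; this containment is strict because $\mathbf{x}$ lies in the right-hand side but, since $\Psi_{\mathbf{x}}(N) \neq 0$, not in the left-hand side. \cref{lemma: dimReduction} then yields $\dim \mathcal{L}(\mu_1, \ldots, \mu_q) < \dim \mathcal{L}(\lambda_1, \ldots, \lambda_q)$. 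Crucially, $N$ still satisfies the hypothesis of the present lemma: for any $\mathbf{0} \neq \mathbf{y} \in \chi_q$ with $\Psi_{\mathbf{y}}(N) = 0$, we must have $\Psi_{\mathbf{y}}(M) = 0$ (otherwise $\Psi_{\mathbf{y}} \in \mathcal{F}_M$ would give $\Psi_{\mathbf{y}}(N) \neq 0$), and the hypothesis on $M$ then says $\mathbf{y}$ is confluent. I may therefore repeat the argument with $N$ in place of $M$. Since the lattice dimension is a non-negative integer strictly decreasing at each step, after finitely many iterations we arrive at a matrix whose eigenvalues admit a lattice basis contained in $\mathcal{D}$; \cref{theorem: latticeHardness} gives $\#$P-hardness for that matrix, and the chain of reductions propagates hardness back to $M$. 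The only subtle point is the bookkeeping showing that confluence of the relevant $\mathbf{y}$ is preserved across iterations, which is handled precisely by carrying $\mathcal{F}_M$ forward into each subsequent reduct.
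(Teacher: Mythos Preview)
Your proof is correct and takes essentially the same approach as the paper: iterate \cref{corollary: pairwiseDistImpliesIndep} together with \cref{theorem: confluentReduction} to strictly decrease the lattice dimension, carrying $\mathcal{F}_M$ forward so that the confluence hypothesis persists, until \cref{theorem: latticeHardness} applies. The only cosmetic difference is that the paper applies \cref{corollary: pairwiseDistImpliesIndep} once up front and then iterates purely with \cref{theorem: confluentReduction} on p.o.\,independent matrices, whereas you fold both steps into each iteration.
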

\begin{proof}
    Let $\mathcal{F}_{M}$ be as defined in \cref{equation: FM}.
    Let $M' \in \mathfrak{R}(M, \mathcal{F}_{M}) \cap
    \text{Sym}_{q}^{\tt{pd}}(\mathbb{R}_{> 0})$
    be the row full, p.o.\,independent matrix
    whose existence is guaranteed by
    \cref{corollary: pairwiseDistImpliesIndep}.
    
    Let $\mathcal{L}_{M}$ be the lattice formed by the
    eigenvalues of $M$, and let $\mathcal{L}_{M'}$ be the
    lattice formed by the eigenvalues of $M'$.
    By our assumption about $M$, we know that all
    $\mathbf{0} \neq \mathbf{x} \in \mathcal{L}_{M}$
    are confluent.
    From the definition of $\overline{\mathcal{L}_{M}}$,
    we see that this immediately implies that all
    $\mathbf{0} \neq \mathbf{x} \in \overline{\mathcal{L}_{M}}$
    are confluent.
    By our construction of $M'$, and our 
    choice of $\mathcal{F}_{M}$ as defined in 
    \cref{equation: FM}, 
    we see that
    $\overline{\mathcal{L}_{M'}} \subseteq \overline{\mathcal{L}_{M}}$.
    So, all $\mathbf{0} \neq \mathbf{x} \in
    \overline{\mathcal{L}_{M'}}$ are confluent.

    Let $d = \dim \mathcal{L}_{M'}$, the lattice dimension of $\mathcal{L}_{M'}$.
    If $d > 0$, we can pick some
    $\mathbf{x}_{1} \neq \mathbf{0}$ from this lattice.
    We know that $\mathbf{x}_{1}$ is confluent.
    Since $M'$ is row full and p.o.\,independent, 
    we can use \cref{theorem: confluentReduction} to
    find some $N_{1} \in \mathfrak{R}
    (M', \mathcal{F}_{M'} \cup \{\Psi_{\mathbf{x}_{1}}\}) 
    \cap \text{Sym}_{q}^{\tt{pd}}(\mathbb{R}_{> 0})$.
    By \cref{lemma: PsiExists}, $\mathbf{x}_{1} \not \in \overline{\mathcal{L}_{N_{1}}}$,
    where $\mathcal{L}_{N_{1}}$ is the
    lattice formed by the eigenvalues of $N_{1}$.
    We also have $\overline{\mathcal{L}_{N_{1}}} \subsetneq
    \overline{\mathcal{L}_{M'}}$. 
   Hence, by \cref{lemma: dimReduction}, $\dim \mathcal{L}_{N_{1}} < d$.
    If the lattice dimension of $\mathcal{L}_{N_{1}}$
    is non-zero, we can repeat this process by picking
    some $\mathbf{0} \neq \mathbf{x}_{2} \in \mathcal{L}_{N_{1}}$,
    and then using \cref{theorem: confluentReduction} once again
    to find some $N_{2} \in \mathfrak{R}
    (N_{1}, \mathcal{F}_{N_1} \cup \{\Psi_{\mathbf{x}_{1}},
    \Psi_{\mathbf{x}_{2}}\})
    \cap \text{Sym}_{q}^{\tt{pd}}(\mathbb{R}_{> 0}) \subseteq
    \mathfrak{R}(M', \mathcal{F}_{N_1} \cup \{\Psi_{\mathbf{x}_{1}},
    \Psi_{\mathbf{x}_{2}}\}) \cap 
    \text{Sym}_{q}^{\tt{pd}}(\mathbb{R}_{> 0})$.
    We can therefore repeat this process
    up to $d$ times to find some matrix $N
    \in \text{Sym}_{q}^{\tt{F}}(\mathbb{R}_{> 0})$
    such that $\PlEVAL(N) \leq \PlEVAL(M') \leq \PlEVAL(M)$, and
    the lattice dimension of $\mathcal{L}_{N}$ is $0$.
    \cref{theorem: latticeHardness} then proves
    that $\PlEVAL(N)$ is $\#$P-hard, 
    which implies that $\PlEVAL(M)$ is also
    $\#$P-hard.
\end{proof}

We will now prove that the requirement that the matrix
$M$ be p.o.\,distinct is sufficient, and
that the row fullness requirement is redundant.
In order to do that, we will make use
of an intermediary condition.

\begin{definition}\label{definition: diagonalDistinct}
    Let $M \in \text{Sym}_{q}(\mathbb{R})$.
    We say that it is diagonal distinct, if
    $M_{11} \neq M_{22} \neq \cdots \neq M_{qq}$
    are pairwise distinct.
\end{definition}

\begin{lemma}\label{lemma: DiagTest}
    There exists a $\text{Sym}_{q}(\mathbb{R})$-polynomial
    $\rho_{\tt{diag}}$ such that for any
    $M \in \text{Sym}_{q}(\mathbb{R})$,
    $\rho_{\tt{diag}}(M) \neq 0$ if and only if
    $M$ is diagonal distinct.
\end{lemma}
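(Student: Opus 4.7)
My plan is to mirror the constructions already used for $\rho_{\tt{full}}$, $\rho_{\tt{indep}}$, and $\rho_{\tt{dist}}$ in \cref{lemma: fullPairwiseIndepTest} and \cref{lemma: fullPairwiseDistTest}. The diagonal distinctness condition
$$M_{11} \neq M_{22} \neq \cdots \neq M_{qq}$$
is equivalent to saying that for every pair $i \neq j \in [q]$, the linear form $M_{ii} - M_{jj}$ in the entries of $M$ is nonzero. So I will simply take
$$\rho_{\tt{diag}}(M) = \prod_{i < j \in [q]} \bigl(M_{ii} - M_{jj}\bigr).$$

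To finish, I note that each factor $M_{ii} - M_{jj}$ is a homogeneous polynomial of degree $1$ in the entries of $M$, so the product is a homogeneous polynomial of degree $\binom{q}{2}$ in those entries. Thus $\rho_{\tt{diag}}$ is a $\text{Sym}_q(\mathbb{R})$-polynomial in the sense of \cref{definition: matrixPoly}. Moreover, $\rho_{\tt{diag}}(M) = 0$ iff at least one factor vanishes, iff there exist $i < j$ with $M_{ii} = M_{jj}$, iff $M$ fails to be diagonal distinct per \cref{definition: diagonalDistinct}. This immediately gives the required equivalence $\rho_{\tt{diag}}(M) \neq 0 \iff M$ is diagonal distinct.

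There is really no obstacle here; the only thing to verify is that the product is nontrivially homogeneous, which follows because every factor has the same degree in the entries. This is by design the cleanest of the three polynomial tests constructed in this family of lemmas (simpler than $\rho_{\tt{indep}}$ which requires a Gram determinant, and simpler than $\rho_{\tt{dist}}$ which requires a sum of squares and a product over $S_q$), because the condition only involves the diagonal entries and there is no permutation symmetry to account for.
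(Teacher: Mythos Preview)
Your proposal is correct and essentially identical to the paper's proof: the paper defines $\rho_{\tt{diag}}(M) = \prod_{i \neq j \in [q]}(M_{ii} - M_{jj})$, which is just the square of your product (up to sign), and the verification is the same.
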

\begin{proof}
    We will define $\rho_{\tt{diag}}:
    \text{Sym}_{q}(\mathbb{R}) \rightarrow \mathbb{R}$ such that
    $$\rho_{\tt{diag}}(N) = \prod_{i \neq j \in [q]}
    (N_{ii} - N_{jj}).$$
    It is easily seen that $\rho_{\tt{diag}}(N)$
    is a $\text{Sym}_{q}(\mathbb{R})$-polynomial such that
    $\rho_{\tt{diag}}(N) \neq 0$ if and only if 
    $N$ is  diagonal distinct.
\end{proof}

\begin{lemma}\label{lemma: pairwiseDistinctImpliesDiagonal}
    Let $M \in \text{Sym}_{q}^{\tt{pd}}(\mathbb{R}_{> 0})$
    be a row full, p.o.\,distinct matrix.
    Let $\mathcal{F}$ be a family of
    $\text{Sym}_{q}(\mathbb{R})$-polynomial,
    such that $F(M) \neq 0$ for all
    $F \in \mathcal{F}$.
    There exists some
     diagonal distinct $N \in \mathfrak{R}
    (M, \mathcal{F}) \cap 
    \text{Sym}_{q}^{\tt{pd}}(\mathbb{R}_{> 0})$.
\end{lemma}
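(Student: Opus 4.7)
First, if $M$ is already diagonal distinct then we may simply take $N = M$, since $M \in \mathfrak{R}(M, \mathcal{F}) \cap \text{Sym}_q^{\tt{pd}}(\mathbb{R}_{>0})$ trivially and $\rho_{\tt{diag}}(M) \ne 0$. So assume $M_{ii} = M_{jj}$ for some $i \ne j$. Notice that a naive application of stretching alone cannot break this tie: the diagonals $\mathcal{S}_M(\theta)_{ii} = \sum_k H_{ik}^2 \lambda_k^{\theta}$ depend only on row $i$ of $H$ and on the eigenvalues, so if $H_{ik}^2 = H_{jk}^2$ for all $k$ (grouped by eigenvalue class) this equality persists for every $\theta$. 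The plan is instead to \emph{thicken first and then square}, and only at the very end perform a mild stretching to recover the $F \in \mathcal{F}$ conditions.

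The key technical claim will be that, setting $\rho^{*}(N) := \rho_{\tt{diag}}(N^{2})$, the function $\rho^{*}(\mathcal{T}_{M}(\mathbf{p}))$ is a non-zero polynomial in $\mathbf{p}$. To see this, use that $M \in \mathbb{R}_{>0}$ forces $e_{ij0} = 0$ for all $i,j$, so each $\mathcal{T}_{M}(\mathbf{p})_{il} = \prod_{t} p_{t}^{e_{ilt}}$ is a pure positive monomial, and
$$(\mathcal{T}_{M}(\mathbf{p})^{2})_{ii} \;=\; \sum_{l \in [q]} \prod_{t \in [d]} p_{t}^{\,2 e_{ilt}}.$$
Row fullness of $M$ makes the $q$ tuples $(e_{il1}, \dots, e_{ild})_{l \in [q]}$ pairwise distinct, so this polynomial has exactly $q$ monomials, each with coefficient $1$. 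Pairwise order distinctness then says that for $i \ne j$ the row multisets $\{M_{il}\}_{l}$ and $\{M_{jl}\}_{l}$ differ; by uniqueness of the generating-set representation of entries this is equivalent to $\{(e_{il1}, \dots, e_{ild})\}_{l} \ne \{(e_{jl1}, \dots, e_{jld})\}_{l}$ as multisets, and hence $(\mathcal{T}_{M}(\mathbf{p})^{2})_{ii}$ and $(\mathcal{T}_{M}(\mathbf{p})^{2})_{jj}$ are different polynomials in $\mathbf{p}$. Their product, $\rho^{*}(\mathcal{T}_{M}(\mathbf{p})) = \prod_{i \ne j}\bigl((\mathcal{T}_{M}(\mathbf{p})^{2})_{ii} - (\mathcal{T}_{M}(\mathbf{p})^{2})_{jj}\bigr)$, is therefore a product of non-zero polynomials, hence non-zero.

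With the claim in hand, apply \cref{lemma: thickeningWorks} in its $\text{Sym}_{q}^{\tt{pd}}(\mathbb{R}_{>0})$ form to the countable family $\mathcal{F} \cup \{\rho^{*}\}$: each $F \in \mathcal{F}$ is witnessed at $\mathbf{p} = (g_{1}, \dots, g_{d})$ by $F(\mathcal{T}_{M}(g_{1}, \dots, g_{d})) = F(M) \ne 0$, while $\rho^{*}$ is witnessed by the non-vanishing polynomial above. This produces $N_{1} = \mathcal{T}_{M}(\mathbf{p}^{*}) \in \mathfrak{R}(M, \mathcal{F} \cup \{\rho^{*}\}) \cap \text{Sym}_{q}^{\tt{pd}}(\mathbb{R}_{>0})$. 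Set $N_{2} := N_{1}^{2}$; since $N_{2} = S_{2} N_{1}$ it is a reduct of $N_{1}$ (and hence of $M$), it remains in $\text{Sym}_{q}^{\tt{pd}}(\mathbb{R}_{>0})$, and it is diagonal distinct because $\rho_{\tt{diag}}(N_{2}) = \rho^{*}(N_{1}) \ne 0$.

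The final step is needed because $F(N_{2})$ might vanish for some $F \in \mathcal{F}$ even though $F(N_{1}) \ne 0$. Apply \cref{corollary: stretchingWorksPositive} to $N_{2}$ with the auxiliary family $\mathcal{F} \cup \{\rho_{\tt{diag}}\}$: each $F \in \mathcal{F}$ is witnessed at $\theta = 1/2$, since $\mathcal{S}_{N_{2}}(1/2) = N_{1}$ and $F(N_{1}) \ne 0$, while $\rho_{\tt{diag}}$ is witnessed at $\theta = 1$, since $\mathcal{S}_{N_{2}}(1) = N_{2}$ is diagonal distinct. The corollary outputs $N = \mathcal{S}_{N_{2}}(\theta^{*}) \in \mathfrak{R}(N_{2}, \mathcal{F} \cup \{\rho_{\tt{diag}}\}) \cap \text{Sym}_{q}^{\tt{pd}}(\mathbb{R}_{>0}) \subseteq \mathfrak{R}(M, \mathcal{F}) \cap \text{Sym}_{q}^{\tt{pd}}(\mathbb{R}_{>0})$, diagonal distinct as required. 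The only genuinely delicate step in this plan is the monomial-uniqueness argument that gives the non-vanishing of $\rho^{*} \circ \mathcal{T}_{M}$; the rest is a clean three-stage composition (thicken, then square, then stretch) of the interpolation tools already established.
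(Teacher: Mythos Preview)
Your proof is correct and follows essentially the same three-stage plan as the paper: show that $\rho_{\tt diag}(\mathcal{T}_M(\mathbf{p})^2)$ is a non-zero polynomial in $\mathbf{p}$, apply \cref{lemma: thickeningWorks}, then stretch. The only differences are cosmetic. You argue non-vanishing directly at the multivariate level (row fullness gives $q$ distinct monomials in each $(\mathcal{T}_M(\mathbf{p})^2)_{ii}$, p.o.\ distinctness gives different monomial sets for $i\ne j$), whereas the paper first specializes to the one-variable curve $\mathbf{p}=(p^{m},\dots,p^{m^d})$ and then reads off the same multiset inequality from the single-variable identity $\sum_l p^{2z_{il}}=\sum_l p^{2z_{jl}}$; your route is a touch cleaner. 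For the final step the paper stretches $N_1$ itself, using $\theta=2$ as the witness for $\rho_{\tt diag}$ (since $\mathcal{S}_{N_1}(2)=N_1^2$) and $\theta=1$ for $\mathcal{F}$, while you first pass to $N_2=N_1^2$ and stretch from there with witnesses $\theta=1$ and $\theta=1/2$ respectively---these are the same family of matrices reparametrized.
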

\begin{proof} 
    If $M$ were itself  diagonal distinct,
    we would be done, so let us assume otherwise.
    We may assume that $M_{ii} = M_{jj}$ for
    some $i \neq j$.
    We assume that the entries of $M$ are generated
    by some $\{g_{t}\}_{t \in [d]}$.
    As before, we may use \cref{lemma: MequivalentCM}
    to replace $M$ with some $c M$,
    and consider the function $\mathcal{T}_{M}$
    as defined in \cref{definition: mathcalT}.
    Since $M \in \text{Sym}_{q}(\mathbb{R}_{> 0})$,
    we see that $e_{ij0} = 0$ for all $i, j \in [q]$.
    
    We will first assume that $\rho_{\tt{diag}}
    (\mathcal{T}_{M}(\mathbf{p})^{2}) = 0$
    for all $\mathbf{p} \in \mathbb{R}^{d}$.
    Now, we let $\mathfrak{m} =
    \max_{i, j \in [q], t \in [d]} (e_{ijt})$, 
    pick some $m > \mathfrak{m}$, and
    define $z_{ij} = \sum_{t \in [d]}m^{t} \cdot e_{ijt}$
    for all $i, j \in [q]$.
    We note that
    $$\mathcal{T}_{M}(p^{m}, p^{m^{2}}, \dots, p^{m^{d}})_{ij}
    = p^{z_{ij}}$$
    for all $i, j \in [q]$,
    and because of our choice of $m$,
    $z_{ij} = z_{i'j'}$ if and only if
    $(e_{ij1}, e_{ij2}, \dots, e_{ijd}) = 
    (e_{i'j'1}, e_{i'j'2}, \dots, e_{i'j'd})$.
    We recall that by our assumption,
    $\rho_{\tt{diag}}(\mathcal{T}_{M}
    (p^{m}, \dots, p^{m^{d}})^{2}) = 0$,
    for all $p \in \mathbb{R}$.
    But that implies that for some $i \neq j \in [q]$,
    $(\mathcal{T}_{M}(p^{m}, \dots, p^{m^{d}})^{2})_{ii} = 
    (\mathcal{T}_{M}(p^{m}, \dots, p^{m^{d}})^{2})_{jj}$.
    This implies that
    $$p^{2z_{i1}} + p^{2z_{i2}} + \cdots + p^{2z_{iq}}
    = p^{2z_{j1}} + p^{2z_{j2}} + \cdots + p^{2z_{jq}}$$
    for all $p \in \mathbb{R}$.
    This implies that there exists some 
    $\sigma \in S_{q}$ such that
    $z_{it} = z_{j\sigma(t)}$ for all $t \in [q]$.
    But from our construction of $z_{ij}$,
    this implies that
    $M_{it} = M_{j\sigma(t)}$ for all $t \in [q]$.
    This contradicts our assumption that
    $M$ is p.o.\,distinct.
    Therefore, our assumption that $\rho_{\tt{diag}}
    (\mathcal{T}_{M}(\mathbf{p})^{2}) = 0$
    for all $\mathbf{p} \in \mathbb{R}^{d}$
    must be false.

    So, $\rho_{\tt{diag}}(\mathcal{T}_{M}(\mathbf{p}^{*})^{2})
    \neq 0$ for some $\mathbf{p}^{*} \in \mathbb{R}^{d}$.
    But then, \cref{lemma: thickeningWorks} allows us to 
    first find $M' \in \text{Sym}_{q}^{\tt{pd}}(\mathbb{R}_{> 0})$
    such that $F(M') \neq 0$ for all $F \in \mathcal{F}$,
    and $\rho_{\tt{diag}}((M')^2) \neq 0$.
    Then, \cref{corollary: stretchingWorksPositive}
    allows us to find the required $N
    \in \mathfrak{R}(M, \mathcal{F} \cup \{\rho_{\tt{diag}} \}) \cap 
    \text{Sym}_{q}^{\tt{pd}}(\mathbb{R}_{> 0})$.
\end{proof}

Somewhat informally, we have proved that
$$\text{row full } + \text{ p.o.\,distinct} \implies
\text{diagonal distinct}.$$
We will prove this is actually an equivalence.
To prove the other direction of this equivalence will require
us to set up some more machinery.
Specifically, we need to make use of a new edge gadget.

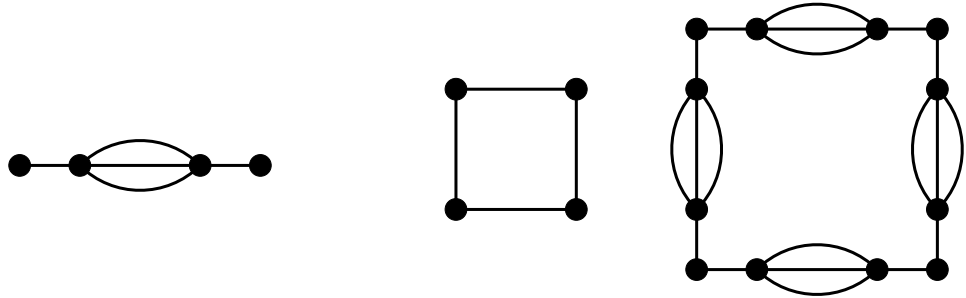
\begin{figure}[b]
    \centering
    \begin{subfigure}{0.45\textwidth}
        \centering
        \raisebox{1.5\height}{
            \scalebox{0.8}{\begin{tikzpicture}[line join=miter, draw opacity=1]

\node[circle, draw=black, fill=black] (c) at (-2, 0){};
\node[circle, draw=black, fill=black] (c) at (-1, 0){};
\node[circle, draw=black, fill=black] (c) at (1, 0){};
\node[circle, draw=black, fill=black] (c) at (2, 0){};

\draw[line width=0.5mm, black] (-2, 0) -- (2, 0);
\draw[line width=0.5mm, black] (-1, 0) to [in=135, out=45] (1, 0);
\draw[line width=0.5mm, black] (-1, 0) to [in=225, out=-45] (1, 0);

\end{tikzpicture}}
        }
	\caption{The edge gadget $R_{3}$.}
	\label{fig: midStretchGadget}
    \end{subfigure}
    \begin{subfigure}{0.45\textwidth}
        \centering
        \scalebox{0.8}{\begin{tikzpicture}[line join=miter, draw opacity=1]

\node[circle, draw=black, fill=black] (c) at (-1, -1){};
\node[circle, draw=black, fill=black] (c) at (-1, 1){};
\node[circle, draw=black, fill=black] (c) at (1, 1){};
\node[circle, draw=black, fill=black] (c) at (1, -1){};

\draw[line width=0.5mm, black] (-1, -1) -- (-1, 1);
\draw[line width=0.5mm, black] (-1, 1) -- (1, 1);
\draw[line width=0.5mm, black] (1, 1) -- (1, -1);
\draw[line width=0.5mm, black] (1, -1) -- (-1, -1);

\begin{scope}[xshift = 5cm]

\node[circle, draw=black, fill=black] (c) at (-2, -2){};
\node[circle, draw=black, fill=black] (c) at (-2, 2){};
\node[circle, draw=black, fill=black] (c) at (2, 2){};
\node[circle, draw=black, fill=black] (c) at (2, -2){};

\node[circle, draw=black, fill=black] (c) at (-2, -1){};
\node[circle, draw=black, fill=black] (c) at (-2, 1){};
\node[circle, draw=black, fill=black] (c) at (-1, 2){};
\node[circle, draw=black, fill=black] (c) at (1, 2){};
\node[circle, draw=black, fill=black] (c) at (2, -1){};
\node[circle, draw=black, fill=black] (c) at (2, 1){};
\node[circle, draw=black, fill=black] (c) at (-1, -2){};
\node[circle, draw=black, fill=black] (c) at (1, -2){};

\draw[line width=0.5mm, black] (-2, -2) -- (-2, 2);
\draw[line width=0.5mm, black] (-2, 2) -- (2, 2);
\draw[line width=0.5mm, black] (2, 2) -- (2, -2);
\draw[line width=0.5mm, black] (2, -2) -- (-2, -2);

\draw[line width=0.5mm, black] (-2, -1) to [in=225, out=135] (-2, 1);
\draw[line width=0.5mm, black] (-2, -1) to [in=-45, out=45] (-2, 1);

\draw[line width=0.5mm, black] (-1, 2) to [in=135, out=45] (1, 2);
\draw[line width=0.5mm, black] (-1, 2) to [in=225, out=-45] (1, 2);

\draw[line width=0.5mm, black] (2, 1) to [in=45, out=-45] (2, -1);
\draw[line width=0.5mm, black] (2, 1) to [in=135, out=225] (2, -1);

\draw[line width=0.5mm, black] (1, -2) to [in=45, out=135] (-1, -2);
\draw[line width=0.5mm, black] (1, -2) to [in=-45, out=225] (-1, -2);

\end{scope}

\end{tikzpicture}}
	\caption{An example graph $G$,
                    and the graph $R_{3}G$.}
	\label{fig: midStretchGadgetExample}
    \end{subfigure}
    \caption{The edge gadget $R_{n}$}
    \label{fig: midStretchGadgetAll}
\end{figure}

Given any graph $G = (V, E)$, and any $n \in \mathbb{Z}_{> 0}$,
we will construct the graph $R_{n}G$
by replacing each edge of $G$ with a path of length
$3$, and then replacing the middle edge of each such
path with $n$ parallel edges.
Clearly, this gadget preserves planarity.
Moreover, we note that for any $M \in
\text{Sym}_{q}(\mathbb{R})$,
given any graph $G = (V, E)$,
$$Z_{M}(R_{n}G) = \sum_{\sigma: V \rightarrow [q]}
\prod_{(u, v) \in E}
\sum_{a, b \in [q]}M_{\sigma(u)a}(M_{ab})^{n}M_{\sigma(v)b}
= \sum_{\sigma: V \rightarrow [q]} \prod_{(u, v) \in E}
(R_{n}M)_{\sigma(u)\sigma(v)},$$
where $R_{n}(M) \in \text{Sym}_{q}(\mathbb{R})$ such that
for all $i, j \in [q]$,
$$(R_{n}M)_{ij} = \sum_{a, b \in [q]}
(M_{ab})^{n} \cdot M_{ia}M_{jb}.$$

\begin{definition}\label{definition: RInterpolation}
    Let $M \in \text{Sym}_{q}(\mathbb{R}_{> 0})$.
    We can now define a function $\mathcal{R}_{M}:
    \mathbb{R} \rightarrow \text{Sym}_{q}(\mathbb{R}_{> 0})$,
    such that for all $i, j \in [q]$,
    $$\mathcal{R}_{M}(\theta)_{ij} = 
    \sum_{a, b \in [q]}(M_{ab})^{\theta} \cdot 
    M_{ia}M_{jb}.$$
\end{definition}
In particular, for any  $M \in \text{Sym}_{q}(\mathbb{R}_{> 0})$ and integer $n \ge 1$,
\begin{equation}\label{Rn-and-cal-RM}
\mathcal{R}_{M}(n) = R_{n}M.
\end{equation}
\begin{lemma}\label{lemma: RInterpolationReduction}
    Let $M \in \text{Sym}_{q}(\mathbb{R}_{> 0})$.
    Then, $\PlEVAL(\mathcal{R}_{M}(\theta)) \leq 
    \PlEVAL(M)$ for all $\theta \in \mathbb{R}$.
\end{lemma}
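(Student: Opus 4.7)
}
The plan is to mimic the Vandermonde interpolation argument used in \cref{lemma: full_count} and \cref{lemma: thickeningInterpolation}, now applied to the edge gadget $R_n$. The starting observation is that $Z_{M}(R_n G) = Z_{R_n M}(G) = Z_{\mathcal{R}_{M}(n)}(G)$ for every positive integer $n$, by \cref{Rn-and-cal-RM}. So for integer $n$ we have direct oracle access (through $\PlEVAL(M)$) to the values of the function $\theta \mapsto Z_{\mathcal{R}_{M}(\theta)}(G)$ at $\theta = 1, 2, 3, \ldots$; the task is to use polynomially many such evaluations to reconstruct the value at an arbitrary real $\theta$.

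First I would unfold the definition of $\mathcal{R}_M(\theta)$ to express $Z_{\mathcal{R}_M(\theta)}(G)$ in the canonical Vandermonde form. Expanding each edge factor gives
\[
Z_{\mathcal{R}_M(\theta)}(G) = \sum_{\sigma: V \to [q]} \prod_{(u,v) \in E} \sum_{a, b \in [q]} M_{\sigma(u)a}\, (M_{ab})^{\theta}\, M_{\sigma(v)b}.
\]
Expanding the product over edges, each term corresponds to choosing a pair $(a_{uv}, b_{uv}) \in [q]^2$ for every edge $(u,v) \in E$. The resulting summand splits as a product $A(\sigma, \{a_{uv}, b_{uv}\}) \cdot \prod_{(u,v)} (M_{a_{uv} b_{uv}})^{\theta}$, where $A(\sigma, \{a_{uv}, b_{uv}\}) = \prod_{(u,v)} M_{\sigma(u) a_{uv}} M_{\sigma(v) b_{uv}}$ does not depend on $\theta$. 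Collecting terms by the value $x = \prod_{(u,v)} M_{a_{uv} b_{uv}}$, which lies in the set $X(G)$ defined in \cref{equation: full_thickening-X(G)-set}, we obtain
\[
Z_{\mathcal{R}_M(\theta)}(G) = \sum_{x \in X(G)} c_M(G, x) \cdot x^{\theta},
\]
where $c_M(G, x)$ depends only on $G$ and $M$, and not on $\theta$. Since $M$ has entirely positive entries, every $x \in X(G)$ is a positive real, so $x^{\theta}$ is well-defined for all real $\theta$, and the elements of $X(G)$ are pairwise distinct by definition of the set.

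Next I would invoke the standard polynomial-size bound on $X(G)$: as in the preliminaries, $|X(G)| \le |E|^{O(1)}$ for fixed $q$, and $X(G)$ can be enumerated in polynomial time. Using oracle access to $\PlEVAL(M)$, compute $Z_M(R_n G) = \sum_{x \in X(G)} c_M(G,x)\, x^n$ for $n = 1, 2, \ldots, |X(G)|$. Because the distinct positive reals $x \in X(G)$ give a full rank Vandermonde system, the coefficients $c_M(G, x)$ can be recovered in polynomial time. Finally, given the target real $\theta$, compute
\[
Z_{\mathcal{R}_M(\theta)}(G) = \sum_{x \in X(G)} c_M(G, x) \cdot x^{\theta}
\]
directly, which yields $\PlEVAL(\mathcal{R}_M(\theta)) \le \PlEVAL(M)$.

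The steps are all routine once the expansion is set up correctly, and there is no real obstacle: the critical points are (i) planarity of $R_n G$, which is immediate from the construction, (ii) positivity of $M$ to make $x^{\theta}$ meaningful, and (iii) the observation that the $c_M(G, x)$ are independent of $\theta$, which is why the Vandermonde extraction at integer values of $n$ transfers to arbitrary real $\theta$.
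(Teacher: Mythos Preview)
Your proposal is correct and follows essentially the same approach as the paper: both expand $Z_{\mathcal{R}_M(\theta)}(G)$ into the form $\sum_{x \in X(G)} c_M(G,x)\,x^{\theta}$ with $\theta$-independent coefficients, use the Vandermonde system at integer values $n = 1,\ldots,|X(G)|$ (accessible via $Z_M(R_nG)$) to recover the $c_M(G,x)$, and then evaluate at the given real $\theta$ using positivity of the entries of $M$. The paper's proof merely introduces a bit more intermediate notation (the sets $Y_M(\mathbf{k})$) to organize the same expansion you carry out directly.
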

\begin{proof}
    We know that for any $n \geq 1$, and given any
    planar graph $G = (V, E)$,
    $$Z_{M}(R_{n}G) = 
    \sum_{\sigma: V \rightarrow [q]} \prod_{(u, v) \in E}
    \sum_{a, b \in [q]}M_{\sigma(u)a}(M_{ab})^{n}M_{\sigma(v)b}.$$
    We recall that
    $$\mathcal{P}_{m}(n) = \left\{\mathbf{x} = (x_{1}, \dots, x_{m})
    \in \mathbb{Z}^{m} \hspace{0.08cm}\Big|\hspace{0.1cm} 
    (\forall\ i \in [m])\ [x_{i} \geq 0],~~\mbox{and}~~
    x_{1} + \cdots + x_{m} = n\right\}.$$
    We  define the \emph{set} (with no duplicates)
    $$X_{M}(G) = \left\{ \prod_{i, j \in [q]}
    M_{ij}^{k_{ij}} \Big|\hspace{0.1cm}
    \mathbf{k} \in \mathcal{P}_{q^{2}}(|E|) \right\}.$$
    For each $\mathbf{k} \in \mathcal{P}_{q^{2}}(|E|)$,
    we will define $x_{M}(\mathbf{k}) = 
    \prod_{i, j \in [q]}M_{ij}^{k_{ij}}$, and    
    $$Y_{M}(\mathbf{k}) = 
    \left\{(\mathbf{a}, \mathbf{b}) \in [q]^{|E|}
    \hspace{0.08cm}\Big|\hspace{0.1cm} 
    (\forall\ 
    i, j \in [q])~\left[\ k_{ij} = |\{t: a_{t} = i, b_{t} = j\}| \ \right]\  \right\}.$$
    Then, for each $x \in X_{M}(G)$, we define
    $$c_{M}(G, x) = \sum_{\sigma: V \rightarrow [q]}\left(
    \sum_{\substack{\mathbf{k} \in \mathcal{P}_{q^{2}}(|E|):\\
    x_{M}(\mathbf{k}) = x}}\left(
    \sum_{\mathbf{a}, \mathbf{b} \in Y_{M}(\mathbf{k})}\left(
    \prod_{(u, v) \in E} M_{\sigma(u)a_{u}}M_{\sigma(v)b_{v}}
    \right)\right)\right).$$
    Finally, we note that
    $$Z_{M}(R_{n}G) = \sum_{x \in X_{M}(G)}x^{n} \cdot
    c_{M}(G, x).$$
    Since each $x \in X_{M}(G)$ is distinct, and
    since $|X_{M}(G)| \leq |E|^{O(1)}$, we can use
    oracle access to $\PlEVAL(M)$ to compute
    $Z_{M}(R_{n}G)$ for $n \in [|X_{M}(G)|]$.
    We will then have a full rank Vandermonde system of
    linear equations, which can be solved in
    polynomial time to find
    $c_{M}(G, x)$ for all $x \in X_{M}(G)$.

    We will now note that each $x \in X_{M}(G)$ is the product
    of entries of the matrix $M$.
    Since $M \in \text{Sym}_{q}(\mathbb{R}_{> 0})$,
    this implies that every element of $X_{M}(G)$
    is positive.
    Moreover, we see that for every $\mathbf{k} \in
    \mathcal{P}_{q^{2}}(|E|)$, and for every
    $\theta \in \mathbb{R}$,
    $$(x_{M}(\mathbf{k}))^{\theta}
    = \prod_{i, j \in [q]}(M_{ij}^{k_{ij}})^{\theta}
    = \prod_{i, j \in [q]}(M_{ij}^{\theta})^{k_{ij}}$$
    is well defined.
    Therefore, for any $\theta \in \mathbb{R}$, 
    we can compute the quantity
    $$\sum_{x \in X_{M}(G)} x^{\theta} \cdot c_{M}(G, x)
    = \sum_{\sigma: V \rightarrow [q]} \prod_{(u, v) \in E}
    \sum_{a, b \in [q]}M_{\sigma(u)a}(M_{ab})^{\theta}M_{\sigma(v)b}
    = \sum_{\sigma: V \rightarrow [q]} \prod_{(u, v) \in E}
    \mathcal{R}_{M}(\theta)_{\sigma(u)\sigma(v)}$$
    in polynomial time. This proves that
    $\PlEVAL(\mathcal{R}_{M}(\theta)) \leq 
    \PlEVAL(M)$ for all $\theta \in \mathbb{R}$.
\end{proof}

\begin{lemma}\label{lemma: RInterpolationWorks}
    Let $M \in \text{Sym}_{q}^{\tt{pd}}(\mathbb{R}_{> 0})$.
    Let $\mathcal{F}$ be a countable set of
    $\text{Sym}_{q}(\mathbb{R})$-polynomial, such that
    for all $F \in \mathcal{F}$, there exists some
    $\theta_{F} \in \mathbb{R}$ such that
    $F(\mathcal{R}_{M}(\theta_{F})) \neq 0$.
    Then, there exists some $N = \mathcal{R}_{M}(\theta^{*})
    \in \mathfrak{R}(M, \mathcal{F}) \cap
    \text{Sym}_{q}^{\tt{pd}}(\mathbb{R}_{> 0})$.
\end{lemma}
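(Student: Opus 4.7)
The plan is to mirror the proof of \cref{lemma: stretchingWorks} (and \cref{corollary: stretchingWorksPositive}) with $\mathcal{R}_{M}$ in place of $\mathcal{S}_{M}$. The key observation enabling the analytic machinery is that because $M \in \text{Sym}_{q}(\mathbb{R}_{>0})$, each $M_{ab} > 0$, so $M_{ab}^{\theta} = e^{\theta \ln M_{ab}}$ is an entire analytic function of $\theta$. Therefore each entry $\mathcal{R}_{M}(\theta)_{ij} = \sum_{a, b \in [q]} (M_{ab})^{\theta} M_{ia}M_{jb}$ is a real analytic function of $\theta$, and consequently so is $F(\mathcal{R}_{M}(\theta))$ for every $F \in \mathcal{F}$, since $F$ is polynomial in the entries of its argument.

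Next I would locate an open interval on which $\mathcal{R}_{M}(\theta)$ stays in $\text{Sym}_{q}^{\tt{pd}}(\mathbb{R}_{>0})$. A direct computation gives $\mathcal{R}_{M}(1)_{ij} = \sum_{a, b}M_{ab}M_{ia}M_{jb} = (M^{3})_{ij}$, and $M^{3}$ is positive definite because $M$ is (and consistent with \cref{Rn-and-cal-RM}, since $R_{1}G = S_{3}G$). The eigenvalues of $\mathcal{R}_{M}(\theta)$ are continuous functions of $\theta$ by continuity of the roots of the characteristic polynomial in its coefficients, so there exists an open interval $(a, b) \ni 1$ on which $\mathcal{R}_{M}(\theta)$ remains positive definite. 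Moreover, because $M_{ab}, M_{ia}, M_{jb} > 0$ and $(M_{ab})^{\theta} > 0$ for every real $\theta$, each entry of $\mathcal{R}_{M}(\theta)$ is manifestly positive for all $\theta \in \mathbb{R}$; in particular $\mathcal{R}_{M}(\theta) \in \text{Sym}_{q}^{\tt{pd}}(\mathbb{R}_{>0})$ for all $\theta \in (a,b)$.

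Now for each $F \in \mathcal{F}$, the hypothesis guarantees some $\theta_{F}$ with $F(\mathcal{R}_{M}(\theta_{F})) \neq 0$, so $F(\mathcal{R}_{M}(\cdot))$ is a non-zero real analytic function. By the Identity Theorem for real analytic functions (\cite{krantz2002primer}, Corollary 1.2.7), as invoked in the proof of \cref{lemma: stretchingWorks}, its zero set has no accumulation point and is thus at most countable within $(a, b)$. The countable union $\bigcup_{F \in \mathcal{F}}\{\theta \in (a, b) : F(\mathcal{R}_{M}(\theta)) = 0\}$ is therefore a countable subset of the uncountable interval $(a, b)$, so there exists some $\theta^{*} \in (a, b)$ for which $F(\mathcal{R}_{M}(\theta^{*})) \neq 0$ for every $F \in \mathcal{F}$. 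Finally, \cref{lemma: RInterpolationReduction} gives $\PlEVAL(\mathcal{R}_{M}(\theta^{*})) \leq \PlEVAL(M)$, so $N = \mathcal{R}_{M}(\theta^{*}) \in \mathfrak{R}(M, \mathcal{F}) \cap \text{Sym}_{q}^{\tt{pd}}(\mathbb{R}_{>0})$ is the required matrix.

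There is no single hard step: the proof is essentially routine once one notices that the $R_{n}$ gadget admits the same analytic-continuation treatment as the stretching gadget. The only subtlety worth flagging is that whereas $\mathcal{S}_{M}(\theta) = HD^{\theta}H^{\tt{T}}$ required diagonalizing $M$ to make sense of non-integer powers, $\mathcal{R}_{M}(\theta)$ uses only scalar powers $M_{ab}^{\theta}$ of the positive entries of $M$, so positivity of the entries of $M$ (not merely positive definiteness) is what licenses the analytic continuation; this is why the statement is restricted to $M \in \text{Sym}_{q}^{\tt{pd}}(\mathbb{R}_{>0})$ and why positivity of the entries of $\mathcal{R}_{M}(\theta)$ is automatic rather than requiring a continuity argument.
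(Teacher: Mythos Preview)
Your proposal is correct and follows essentially the same approach as the paper's proof: both establish analyticity of the entries of $\mathcal{R}_{M}(\theta)$ from the positivity of $M_{ab}$, use $\mathcal{R}_{M}(1)=M^{3}$ together with continuity of eigenvalues to obtain an interval of positive definiteness, and then apply the Identity Theorem plus a countability argument to avoid the zeros of all $F\in\mathcal{F}$. Your explicit invocation of \cref{lemma: RInterpolationReduction} for the reduction $\PlEVAL(N)\le\PlEVAL(M)$ is a small clarification the paper leaves implicit.
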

\begin{proof}
    Since each $F \in \mathcal{F}$ is polynomial in the
    entries of the input matrix, and since
    each entry of the matrix
    $\mathcal{R}_{M}(\theta)$ is a real analytic function of
    $\theta$, we see that
    $F(\mathcal{R}_{M}(\theta))$ is a real analytic function
    of $\theta$ for all $F \in \mathcal{F}$.
    Moreover, since $F(\mathcal{R}_{M}(\theta_{F})) \neq 0$
    for all $F \in \mathcal{F}$,
    we see that these are non-zero analytic functions.

    From the definition of $\mathcal{R}_{M}(\theta)$,
    we can see that for all $i, j \in [q]$,
    $\mathcal{R}_{M}(\theta)_{ij} > 0$.
    We also note that $\mathcal{R}_{M}(1) = M^{3}$, and since
    $M$ is positive definite, it follows that
    $M^{3}$ is also positive definite.
    We note that the eigenvalues of the matrix
    $\mathcal{R}_{M}(\theta)$ are continuous functions of
    the entries of the matrix, and consequently, 
    are continuous functions of $\theta$.
    This implies that there exists some interval
    $(a, b) \subset \mathbb{R}$ such that
    $1 \in (a, b)$, and $\mathcal{R}_{M}(\theta)
    \in \text{Sym}_{q}^{\tt{pd}}(\mathbb{R}_{> 0})$.

    Now, since we know that for each $F \in \mathcal{F}$,
    $F(\mathcal{R}_{M}(\theta))$ is a non-zero real analytic
    function, we know that for each $F \in \mathcal{F}$,
    there are only finitely many zeros within the interval
    $(a, b)$ (\cite{krantz2002primer}, Corollary 1.2.7).
    Since a countable union of finite sets is countable,
    there must exist some $\theta^{*} \in (a, b)$
    such that $F(\mathcal{R}_{M}(\theta^{*})) \neq 0$.
    From our choice of $(a, b)$,
    we also see that $\mathcal{R}_{M}(\theta^{*}) \in
    \text{Sym}_{q}^{\tt{pd}}(\mathbb{R}_{> 0})$.
    So, it is the required matrix.
\end{proof}

We can now use this gadget to first show that if $M \in 
\text{Sym}_{q}^{\tt{pd}}(\mathbb{R}_{> 0})$ is 
diagonal distinct, we can find a  \emph{row full} and diagonal distinct  
$N \in \text{Sym}_{q}^{\tt{pd}}(\mathbb{R}_{> 0})$
such that $\PlEVAL(N) \leq \PlEVAL(M)$.
Later, we will show that if $M \in 
\text{Sym}_{q}^{\tt{pd}}(\mathbb{R}_{> 0})$
is both  row full and diagonal distinct, then we can in fact find a row full,
p.o.\,distinct $N \in 
\text{Sym}_{q}^{\tt{pd}}(\mathbb{R}_{> 0})$.

\begin{lemma}\label{lemma: diagDistinctImpliesRowFull}
    Let $M \in \text{Sym}_{q}^{\tt{pd}}(\mathbb{R}_{> 0})$
    be a  diagonal distinct matrix.
    Let $\mathcal{F}$ be a countable set of
    $\text{Sym}_{q}(\mathbb{R})$-polynomials such that
    $F(M) \neq 0$ for all $F \in \mathcal{F}$.
    Then there exists a row full,  diagonal distinct
    $N \in \mathfrak{R}(M, \mathcal{F})
    \cap \text{Sym}_{q}^{\tt{pd}}(\mathbb{R}_{> 0})$.
\end{lemma}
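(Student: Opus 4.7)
The plan is to use a composite gadget combining stretching with the edge-gadget $\mathcal{R}$: define $\mathcal{W}(\theta, \phi) := \mathcal{R}_{\mathcal{S}_M(\theta)}(\phi)$. The crucial identity is $\mathcal{W}(1/3, 1) = \mathcal{S}_M(1/3)^3 = \mathcal{S}_M(1) = M$, so the composite gadget recovers $M$ at the reference parameter $(\theta, \phi) = (1/3, 1)$. Since $\mathcal{S}_M(1) = M$ has positive entries, by continuity there is an open interval $A \ni 1$ on which $\mathcal{S}_M(\theta) \in \text{Sym}_{q}(\mathbb{R}_{>0})$; on the 2D region $A \times \mathbb{R}$ the map $\mathcal{W}$ is real analytic, and chaining \cref{lemma: RInterpolationReduction} with \cref{lemma: stretchingBasic} yields $\PlEVAL(\mathcal{W}(\theta, \phi)) \le \PlEVAL(M)$. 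Since $\mathcal{W}(1, 1) = M^3 \in \text{Sym}_{q}^{\tt{pd}}(\mathbb{R}_{>0})$, I would pick an open neighborhood $V \subseteq A \times \mathbb{R}$ of $(1, 1)$ on which $\mathcal{W}(\theta, \phi)$ remains positive definite with positive entries.

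With this setup, the goal is to find $(\theta_0, \phi_0) \in V$ so that $N := \mathcal{W}(\theta_0, \phi_0)$ simultaneously avoids the zero sets of the countably many analytic functions $F \circ \mathcal{W}$ for $F \in \mathcal{F} \cup \{\rho_{\tt{full}}, \rho_{\tt{diag}}\}$. For each $F \in \mathcal{F} \cup \{\rho_{\tt{diag}}\}$, non-triviality on $A \times \mathbb{R}$ follows from the slice $\phi = 1$: since $F(\mathcal{W}(\theta, 1)) = F(\mathcal{S}_M(3\theta))$ is real analytic on $\mathbb{R}$ and takes the nonzero value $F(M)$ (respectively $\rho_{\tt{diag}}(M)$) at $\theta = 1/3$, it is non-identically-zero on $\mathbb{R}$ and hence also on the open set $A$.

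The main obstacle will be to show that $\rho_{\tt{full}}(\mathcal{R}_M(\phi)) = \rho_{\tt{full}}(\mathcal{W}(1, \phi))$ is not identically zero as a function of $\phi$. Suppose otherwise; then some factor $\mathcal{R}_M(\phi)_{i, j_1} - \mathcal{R}_M(\phi)_{i, j_2}$ vanishes identically, so writing $f_b := M_{j_1 b} - M_{j_2 b}$ we have $\sum_{a, b} M_{ia} M_{ab}^\phi f_b \equiv 0$. Grouping $(a, b)$-pairs by the common value of $M_{ab}$ and using linear independence of distinct exponentials yields $\sum_{(a, b) : M_{ab} = v} M_{ia} f_b = 0$ for every distinct value $v$. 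Positive definiteness of $M$ gives $M_{ab}^2 < M_{aa} M_{bb}$ for $a \ne b$, which together with diagonal-distinctness makes the maximum of $\{M_{ab}\}$ be uniquely the entry $M_{kk}$ with $k = \arg\max_{\ell} M_{\ell\ell}$; the constraint at $v = M_{kk}$ therefore collapses to $M_{ik} f_k = 0$, forcing $f_k = 0$ by positivity of $M_{ik}$. Processing the diagonal entries in decreasing order of their values, each step extracts $f_\ell = 0$ from the constraint at $v = M_{\ell\ell}$ by using the previously established vanishings together with the strict positivity of the coefficient of $f_\ell$ (the positive-entry hypothesis makes this coefficient a positive sum of $M_{ia}$ terms). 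Once $f_b = 0$ for every $b$, rows $j_1$ and $j_2$ of $M$ coincide, contradicting the linear independence of rows guaranteed by positive definiteness.

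Finally, with every $F \circ \mathcal{W}$ for $F \in \mathcal{F} \cup \{\rho_{\tt{full}}, \rho_{\tt{diag}}\}$ a non-identically-zero real analytic function on $A \times \mathbb{R}$, the zero set of each inside $V$ has Lebesgue measure zero (as in the arguments of \cref{lemma: thickeningWorks} and \cref{lemma: stretchingWorks}), and a countable union of such sets is still measure zero. Selecting $(\theta_0, \phi_0) \in V$ outside every zero set, the matrix $N := \mathcal{W}(\theta_0, \phi_0)$ lies in $\mathfrak{R}(M, \mathcal{F}) \cap \text{Sym}_{q}^{\tt{pd}}(\mathbb{R}_{>0})$ and is simultaneously row full ($\rho_{\tt{full}}(N) \ne 0$) and diagonal distinct ($\rho_{\tt{diag}}(N) \ne 0$), as required.
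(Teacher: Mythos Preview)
Your overall framework—using the two–parameter family $\mathcal{W}(\theta,\phi)=\mathcal{R}_{\mathcal{S}_M(\theta)}(\phi)$ and a measure–zero avoidance argument—is sound, and the reductions $\PlEVAL(\mathcal{W}(\theta,\phi))\le\PlEVAL(M)$ are correctly justified. The gap is in the inductive step you use to show that $\rho_{\tt full}(\mathcal{R}_M(\phi))$ is not identically zero. Your first step is fine: positive definiteness indeed makes the maximal entry uniquely $M_{kk}$, so the constraint at $v=M_{kk}$ gives $f_k=0$. But at the next diagonal value $M_{\ell\ell}$ the constraint need not reduce to a single unknown. An off–diagonal pair $(a,b)$ with $M_{ab}=M_{\ell\ell}$ only forces $M_{aa}M_{bb}>M_{\ell\ell}^2$, so one of $M_{aa},M_{bb}$ exceeds $M_{\ell\ell}$—but it could be $M_{aa}$, leaving $b$ with $M_{bb}<M_{\ell\ell}$ still unprocessed. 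Concretely, take $q=3$ with $M_{11}=20$, $M_{22}=4$, $M_{33}=1$, and $M_{13}=4=M_{22}$ (the $2\times2$ minor condition $16<20$ holds). At $v=4$ the surviving constraint is $g_2f_2+g_1f_3=0$, one equation in the two unprocessed unknowns $f_2,f_3$; you cannot extract $f_2=0$ at this stage as claimed. Thus ``processing diagonal entries in decreasing order'' does not, as written, yield $f_\ell=0$ at each step.

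The paper sidesteps exactly this difficulty: it first stretches to $M'=\mathcal{S}_M(\theta^*)$ with $\theta^*$ near $0$, so $M'$ is close to $I$ and each diagonal entry $M'_{jj}$ is the \emph{unique} entry with that value. Then the constraint at $v=M'_{jj}$ involves only the pair $(j,j)$, giving $M'_{ij}(M'_{jj}-M'_{jj'})=0$ directly—no induction needed. In your framework this corresponds to checking non-vanishing of $\rho_{\tt full}\circ\mathcal{W}$ on a slice with $\theta$ near $0$ rather than $\theta=1$; but note that $\mathcal{S}_M(\theta)$ near $\theta=0$ may have negative off-diagonal entries (so $\mathcal{R}_{\mathcal{S}_M(\theta)}(\phi)$ is only defined at integer $\phi$), which is why the paper works with $R_n$ there and then transfers back via several interpolation steps rather than a single two-dimensional measure argument.
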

\begin{proof}
    Since $\mathcal{S}_{M}(0) = I$, 
    there exists some $\delta > 0$
    such that $|\mathcal{S}_{M}(\theta)_{ij} - I_{ij}|
    \leq \frac{1}{3}$ for all $i, j \in [q]$,
    and $0 < \theta < \delta$.
    Therefore, we can use
    \cref{lemma: stretchingWorks} to find
    $M' = \mathcal{S}_{M}(\theta^{*}) \in \mathfrak{R}(M, \{\rho_{\tt{diag}}\}) \cap
    \text{Sym}_{q}^{\tt{pd}}(\mathbb{R}_{\neq 0})$,
    for some $\theta^{*} \in \mathbb{R}$,
    such that $|(M')_{ij} - I_{ij}| \leq \frac{1}{3}$
    for all $i, j \in [q]$.

    We can now consider $R_{n}(M')$ for $n \geq 1$.
    From the definition, we see that
    $$R_{n}(M')_{ij} = \sum_{a, b \in [q]}
    (M'_{ab})^{n} \cdot (M')_{ia}(M')_{jb}.$$
    We may now let $X = \{(M')_{ab}: a, b \in [q]\}$,
    and for each $x \in X$, define
    $$c_{ij}(x) = \sum_{a, b \in [q]: (M')_{ab} = x}
    (M')_{ia}(M')_{jb}.$$
    So, we can express $R_{n}(M')_{ij}$ as
    $$R_{n}(M')_{ij} = \sum_{x \in X}x^{n} \cdot c_{ij}(x)$$
    for all $i, j \in [q]$.
    Therefore, for any $i \in [q]$, and $j \neq j' \in [q]$,
    we see that
    $$R_{n}(M')_{ij} - R_{n}(M')_{ij'}
    = \sum_{x \in X}
    x^{n} \cdot (c_{ij}(x) - c_{ij'}(x)).$$
    Since each $x \in X$ is distinct, and $|X| = O(1)$,
    this forms a full rank Vandermonde system of
    linear equations.
    If $R_{n}(M')_{ij} - R_{n}(M')_{ij'} = 0$ for all
    $n \geq 1$, that would imply
    that $c_{ij}(x) - c_{ij'}(x) = 0$ for all
    $x \in X$.

    In particular, we note that by our choice of $M'$,
    $\rho_{\tt{diag}}(M') \neq 0$.
    So, $(M')_{jj}$ is not equal to any other diagonal entry
    of $M'$. By our choice of $M'$, we also know that
    the diagonal entries of $M'$ have a higher absolute value
    than any non-diagonal entry.
    Therefore, $(M')_{ab} = (M')_{jj}$ implies that
    $(a, b) = (j, j)$.
    So,
    $$c_{ij}((M')_{jj}) - c_{ij'}((M')_{jj})
    = (M')_{ij}(M')_{jj} - (M')_{ij}(M')_{jj'} = 0.$$
    Since $M' \in \text{Sym}_{q}(\mathbb{R}_{\neq 0})$,
    we know that $(M')_{ij} \neq 0$.
    So, we have shown that $(M')_{jj} = (M')_{jj'}$,
    which contradicts our assumption that
    the diagonal entries of $M'$ have a higher absolute value
    than any non-diagonal entry.
    This implies that it cannot be
    possible that $R_{n}(M')_{ij} - R_{n}(M')_{ij'} = 0$
    for all $n \geq 1$.

    Therefore, given any $i, j \neq j' \in [q]$, there
    exists some $n_{i, j, j'} \in \mathbb{Z}_{> 0}$
    such that $(R_{n_{i, j, j'}}(M'))_{ij} -
    (R_{n_{i, j, j'}}(M'))_{ij'} \neq 0$.
    We can now define the function
   $\zeta_{i, j, j'}: N \mapsto (R_{n_{i, j, j'}}N)_{ij} -
   (R_{n_{i, j, j'}}N)_{ij'}$.
   From the definition of $R_{n_{i, j, j'}}$
    in \cref{definition: RInterpolation},
    we can see that $\zeta_{i, j, j'}$ is a 
    $\text{Sym}_{q}(\mathbb{R})$-polynomial.

    Recall that by construction, $M' = \mathcal{S}_{M}(\theta^{*})$
    for some $\theta^{*} \in \mathbb{R}$.
    So, we have seen that for all $i, j \neq j' \in [q]$,
    $\zeta_{i, j, j'}(\mathcal{S}_{M}(\theta^{*}))
    \neq 0$.
    We can now let
    $$\mathcal{F}' = \left\{F': N \mapsto F(N^{3})
    \hspace{0.08cm}\Big|\hspace{0.1cm}
    F \in \mathcal{F} \cup \{\rho_{\tt{diag}} \}\right\}
    \cup \{\zeta_{i, j, j'} \}_{i, j \neq j' \in [q]}.$$
    We note that $(\mathcal{S}_{M}(\frac{1}{3}))^{3}
    = M$, and $F(M) \neq 0$ for all $F \in \mathcal{F}
    \cup \{\rho_{\tt{diag}} \}$,
    and $\mathcal{S}_{M}(\theta^{*}) = M'$.
    Therefore, \cref{corollary: stretchingWorksPositive}
    allows us to find
    $M'' \in \mathfrak{R}(M, \mathcal{F}') \cap 
    \text{Sym}_{q}^{\tt{pd}}(\mathbb{R}_{> 0})$.
    Since $M'' \in \text{Sym}_{q}(\mathbb{R}_{> 0})$,
    we can now consider the function
    $\mathcal{R}_{M''}$.
    We see that by construction of $M''$,
    $\mathcal{R}_{M''}(1) = (M'')^{3}$.
    So, for all $F \in \mathcal{F} \cup \{\rho_{\tt{diag}}\}$,
    we see that $F(\mathcal{R}_{M''}(1)) = F((M'')^{3}) \neq 0$ 
    by the fact that $M'' \in \mathfrak{R}(M, \mathcal{F}')$.
    
    Finally, consider the functions $\xi_{i, j, j'} : N \mapsto N_{ij} -  N_{ij'}$.
    Clearly $\xi_{i, j, j'}$ are
    $\text{Sym}_{q}(\mathbb{R})$-polynomials.
    From our choice of $\mathcal{F}'$, and $\mathcal{R}_{M''}(n_{i, j, j'}) = R_{n_{i, j, j'}} (M'')$ 
    by \cref{Rn-and-cal-RM},
    we know that for every $i, j \neq j' \in [q]$, there exists $n_{i, j, j'}$, such that
$$  \xi_{i, j, j'}(\mathcal{R}_{M''}(n_{i, j, j'})) =  \xi_{i, j, j'}(R_{n_{i, j, j'}} (M''))
= (R_{n_{i, j, j'}} (M''))_{ij} -
    (R_{n_{i, j, j'}} (M''))_{ij'} = \zeta_{i, j, j'}(M'')
    \neq 0.$$
    Therefore, we apply \cref{lemma: RInterpolationWorks} to the function set
    $\mathcal{F} \cup \{\rho_{\tt{diag}} \} \cup \{\xi_{i, j, j'}: i \in [q], j \neq j' \in [q]\}$, to get
   a row full,  diagonal distinct
    $N \in \mathfrak{R}(M'', \mathcal{F})
    \cap \text{Sym}_{q}^{\tt{pd}}(\mathbb{R}_{> 0})$.
    Note that $\rho_{\tt{full}} = \prod_{i \in [q]} \prod_{j \ne j'  \in [q]} \xi_{i, j, j'} $
    Since $\PlEVAL(M'') \leq \PlEVAL(M)$, this is the
    required matrix.
\end{proof}

\begin{lemma}\label{lemma: diagDistinctImpliesOrderDist}
    Let $M \in \text{Sym}_{q}^{\tt{pd}}(\mathbb{R}_{> 0})$
    be a  diagonal distinct matrix.
    Let $\mathcal{F}$ be a countable set of 
    $\text{Sym}_{q}(\mathbb{R})$-polynomials such that
    $F(M) \neq 0$ for all $M \in \mathcal{F}$.
    There exists some row full, p.o.\,distinct
    $N \in \mathfrak{R}(M, \mathcal{F})
    \cap \text{Sym}_{q}^{\tt{pd}}(\mathbb{R}_{> 0})$.
\end{lemma}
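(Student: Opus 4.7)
The plan is to adapt the scaffolding of \cref{lemma: diagDistinctImpliesRowFull} by introducing $R_n$-based test polynomials for every permutation $\sigma \in S_q$, not just the identity. First, by \cref{lemma: diagDistinctImpliesRowFull} I may replace $M$ with a row full, diagonal distinct $M^* \in \mathfrak{R}(M, \mathcal{F}) \cap \text{Sym}_{q}^{\tt{pd}}(\mathbb{R}_{> 0})$, and I may assume $M^*$ is not already p.o.\,distinct. Using $\mathcal{S}_{M^*}(0) = I$ together with continuity, \cref{lemma: stretchingWorks} on a small interval $(0, \delta)$ then produces $M' = \mathcal{S}_{M^*}(\theta^*) \in \text{Sym}_{q}^{\tt{pd}}(\mathbb{R}_{\neq 0})$ with $\rho_{\tt diag}(M') \neq 0$ and $|M'_{ab} - I_{ab}| < 1/3$ for all $a, b$, while $\PlEVAL(M') \leq \PlEVAL(M^*)$.

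The central claim is that for every $i \neq i' \in [q]$ and every $\sigma \in S_q$, there exists $n = n(i, i', \sigma) \in \mathbb{Z}_{> 0}$ such that
\[
\zeta_{i, i', \sigma}^{(n)}(M') \;:=\; \sum_{k \in [q]} \bigl((R_n M')_{ik} - (R_n M')_{i'\sigma(k)}\bigr)^{2} \;\neq\; 0.
\]
To prove it, suppose the contrary for every $n \geq 1$ and every $k$. Writing $(R_n M')_{ij} = \sum_{x \in X} x^n c_{ij}(x)$ with $X = \{M'_{ab}\}$ and $c_{ij}(x) = \sum_{(a,b) : M'_{ab}=x} M'_{ia} M'_{jb}$, a Vandermonde argument on the finitely many distinct positive reals in $X$ forces $c_{ik}(x) = c_{i'\sigma(k)}(x)$ for all $x \in X$ and all $k$. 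Setting $x = M'_{\ell\ell}$, the magnitude bounds $M'_{\ell\ell} > 2/3$ and $|M'_{ab}| < 1/3$ for $a \neq b$ together with diagonal distinctness force $M'_{ab} = M'_{\ell\ell}$ to imply $(a, b) = (\ell, \ell)$, hence
\[
M'_{i\ell}\, M'_{k\ell} \;=\; M'_{i'\ell}\, M'_{\sigma(k)\ell} \qquad \forall\ k, \ell \in [q].
\]
If $\sigma = \mathrm{id}$, dividing by $M'_{k\ell}$ yields $M'_{i\ell} = M'_{i'\ell}$ for every $\ell$; taking $\ell = i$ and using symmetry gives $M'_{ii} = M'_{ii'}$, contradicting $M'_{ii} > 2/3$ versus $|M'_{ii'}| < 1/3$. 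Otherwise choose $\ell_0$ with $\sigma(\ell_0) \neq \ell_0$; substituting $k = \ell_0$ and $k = \sigma^{-1}(\ell_0)$ respectively and dividing the two identities yields $(M'_{\ell_0\ell_0})^{2} = M'_{\sigma(\ell_0)\ell_0} \cdot M'_{\sigma^{-1}(\ell_0)\ell_0}$, whose left side exceeds $4/9$ while its right side is strictly bounded in absolute value by $(1/3)(4/3) = 4/9$, a contradiction. Either way, the claim holds, and each $\zeta_{i, i', \sigma}^{(n(i,i',\sigma))}$ is visibly a $\text{Sym}_{q}(\mathbb{R})$-polynomial.

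The remainder mirrors the template of \cref{lemma: diagDistinctImpliesRowFull}. Let
\[
\mathcal{F}' \;=\; \{F' : N \mapsto F(N^{3}) \mid F \in \mathcal{F} \cup \{\rho_{\tt full}, \rho_{\tt diag}\}\} \;\cup\; \{\zeta_{i, i', \sigma}^{(n(i,i',\sigma))}\}_{i \neq i',\, \sigma \in S_q}.
\]
Each $F'$ in the first family is nonzero at $\mathcal{S}_{M^*}(1/3)$ since $(\mathcal{S}_{M^*}(1/3))^{3} = M^*$ and $F(M^*) \neq 0$; each $\zeta$ is nonzero at $\mathcal{S}_{M^*}(\theta^*) = M'$ by the claim. \cref{corollary: stretchingWorksPositive} then produces some $M'' \in \mathfrak{R}(M^*, \mathcal{F}') \cap \text{Sym}_{q}^{\tt{pd}}(\mathbb{R}_{> 0})$. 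Finally, defining $\xi_{i, i', \sigma}(N) = \sum_k (N_{ik} - N_{i' \sigma(k)})^{2}$ and noting that $F(\mathcal{R}_{M''}(1)) = F((M'')^{3}) \neq 0$ for $F \in \mathcal{F} \cup \{\rho_{\tt full}, \rho_{\tt diag}\}$, while $\xi_{i, i', \sigma}(\mathcal{R}_{M''}(n(i, i', \sigma))) = \zeta_{i, i', \sigma}^{(n(i,i',\sigma))}(M'') \neq 0$, \cref{lemma: RInterpolationWorks} delivers
\[
N \in \mathfrak{R}(M'', \mathcal{F} \cup \{\rho_{\tt full}, \rho_{\tt diag}\} \cup \{\xi_{i, i', \sigma}\}) \cap \text{Sym}_{q}^{\tt{pd}}(\mathbb{R}_{> 0}),
\]
which is row full (since $\rho_{\tt full}(N) \neq 0$) and p.o.\,distinct (since every $\xi_{i, i', \sigma}(N) \neq 0$), with $\PlEVAL(N) \leq \PlEVAL(M)$ by composition.

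The main obstacle is the ratio-identity argument within the central claim: one must convert the hypothetical equality of all $R_n$-values into pointwise equalities $c_{ik}(x) = c_{i'\sigma(k)}(x)$ via Vandermonde, isolate the scalar identity $(M'_{\ell_0\ell_0})^{2} = M'_{\sigma(\ell_0)\ell_0} M'_{\sigma^{-1}(\ell_0)\ell_0}$ by choosing $x = M'_{\ell\ell}$ together with two strategically chosen values of $k$, and then read off the contradiction from the magnitude estimates guaranteed by $M'$ being within $1/3$ of $I$. Everything else is a straightforward reassembly of the polynomial-avoidance scaffolding developed earlier in this section.
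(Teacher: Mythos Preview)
Your argument is correct, but it is substantially more elaborate than the paper's. After obtaining the row full, diagonal distinct $M^* \in \text{Sym}_q^{\tt pd}(\mathbb{R}_{>0})$, the paper also stretches toward the identity to produce $M'' = \mathcal{S}_{M^*}(\theta^*)$ with $|M''_{ab} - I_{ab}| < \tfrac{1}{3}$ and $\rho_{\tt diag}(M'') \neq 0$; but at that point it simply \emph{observes directly} that $M''$ is already p.o.\,distinct. The reason is one line: for $i \neq j$, the entry $M''_{ii}$ lies in row $i$, it strictly exceeds every off-diagonal entry of $M''$ (hence every $M''_{jk}$ with $k \neq j$), and it differs from $M''_{jj}$ by diagonal distinctness, so $M''_{ii}$ appears in row $i$ but nowhere in row $j$, ruling out any $\sigma$ matching the two rows. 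Thus $\rho_{\tt dist}(\mathcal{S}_{M^*}(\theta^*)) \neq 0$, and one invocation of \cref{corollary: stretchingWorksPositive} with $\mathcal{F} \cup \{\rho_{\tt full}, \rho_{\tt dist}\}$ finishes the proof. No $R_n$ gadget, no per-$(i,i',\sigma)$ test polynomials, no Vandermonde/ratio contradiction, and no \cref{lemma: RInterpolationWorks} are needed.

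Your route works too, and mirrors the heavier template of \cref{lemma: diagDistinctImpliesRowFull}; the gain of the paper's argument is that the single global polynomial $\rho_{\tt dist}$ is already witnessed nonzero at one point of the $\mathcal{S}_{M^*}$-curve, which is all that \cref{corollary: stretchingWorksPositive} requires. Two minor slips in your write-up, neither fatal: the set $X = \{M'_{ab}\}$ need not consist of positive reals (off-diagonal entries of $M'$ may be negative), though the Vandermonde argument only uses distinctness; and in your bound on $M'_{\sigma(\ell_0)\ell_0}\cdot M'_{\sigma^{-1}(\ell_0)\ell_0}$, both factors are off-diagonal (since $\sigma(\ell_0)\neq\ell_0$ forces $\sigma^{-1}(\ell_0)\neq\ell_0$), so the product is bounded by $(1/3)^2 = 1/9$ rather than $(1/3)(4/3)$ --- the contradiction with $(M'_{\ell_0\ell_0})^2 > 4/9$ is of course only strengthened.
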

\begin{proof}
    We will first use \cref{lemma: diagDistinctImpliesRowFull}
    to find a row full,  diagonal distinct matrix 
    $$M' \in 
    \mathfrak{R}(M, \mathcal{F} \cup \{\rho_{\tt{full}}, \rho_{\tt{diag}}\})
    \cap \text{Sym}_{q}^{\tt{pd}}(\mathbb{R}_{> 0}).$$

    We may assume that $M' = HDH^{\tt{T}}$.
    Since $\mathcal{S}_{M'}(0) = I$,
    we know that there exists some $\delta > 0$
    such that $|\mathcal{S}_{M'}(\theta)_{ij} - I_{ij}|
    < \frac{1}{3}$ for all $i, j \in [q]$,
    for all $0 < \theta < \delta$.
    We can now use
    \cref{lemma: stretchingWorks} to find
    some $M'' = HD^{\theta^{*}}H^{\tt{T}} \in 
    \mathfrak{R}(M', \mathcal{F} \cup 
    \{\rho_{\tt{diag}},\rho_{\tt{full}}\})$,
    where $0 < \theta^{*} < \delta$.
    From our choice of $\theta^{*}$,
    we can see that $(M'')_{ii} > (M'')_{jk}$
    for all $i \in [q]$, and $j \neq k \in [q]$.
    Moreover, since $(M'')$ is
     diagonal distinct, we also see that
    $(M'')_{ii} \neq (M'')_{jj}$ for any $i \neq j$.
    Therefore, given any $i \neq j$, we see that
    $(M'')_{ii} \neq (M'')_{jk}$ for all $k \in [q]$.
    In particular, this implies that there does
    not exist any $\sigma \in S_{q}$ such that
    $((M'')_{i1}, \dots, (M'')_{iq}) =
    ((M'')_{j\sigma(1)}, \dots, (M'')_{j\sigma(q)})$.
    This proves that $\rho_{\tt{dist}}
    (HD^{\theta^{*}}H^{\tt{T}}) \neq 0$.
    Now, \cref{corollary: stretchingWorksPositive}
    allows us to find the required
    $N \in \mathfrak{R}(M', \mathcal{F}\cup \{\rho_{\tt{full}},\rho_{\tt{dist}}\})
    \cap \text{Sym}_{q}^{\tt{pd}}(\mathbb{R}_{> 0}) 
    \subseteq \mathfrak{R}(M, \mathcal{F}\cup \{\rho_{\tt{full}},\rho_{\tt{dist}}\})
    \cap \text{Sym}_{q}^{\tt{pd}}(\mathbb{R}_{> 0})$.
\end{proof}

Informally,  \cref{lemma: diagDistinctImpliesRowFull}  and
\cref{lemma: diagDistinctImpliesOrderDist} together
imply the following:
$$\text{diagonal distinct} \implies \text{row full } + 
\text{ p.o.\,distinct}.$$
Together with \cref{lemma: pairwiseDistinctImpliesDiagonal},
this means that 
$$\text{diagonal distinct} \iff \text{row full } + 
\text{ p.o.\,distinct}.$$
\cref{lemma: pairwiseDistHardness} now lets us
immediately prove the following.

\begin{corollary}\label{corollary: diagDistHardness}
    Let $M \in \text{Sym}_{q}^{\tt{pd}}(\mathbb{R}_{> 0})$
    be a  diagonal distinct matrix.
    If $\mathbf{x}$ is confluent for all
    $\mathbf{0} \neq \mathbf{x} \in \chi_{q}$ such that
    $\Psi_{\mathbf{x}}(M) = 0$, then
    $\PlEVAL(M)$ is $\#$P-hard.
\end{corollary}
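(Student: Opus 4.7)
The plan is to pipeline the two principal tools already developed: \cref{lemma: diagDistinctImpliesOrderDist}, which converts a diagonal distinct $M$ into a row full and p.o.~distinct matrix in $\mathfrak{R}(M, \mathcal{F})$, and \cref{lemma: pairwiseDistHardness}, which establishes $\#$P-hardness for row full p.o.~distinct matrices whose zero set of $\Psi_{\mathbf{x}}$'s is confined to confluent $\mathbf{x}$. The only nontrivial bookkeeping is to make sure the confluence hypothesis is preserved across the reduction from $M$ to the new matrix.

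First, I would form the set $\mathcal{F}_{M}$ as in \cref{equation: FM}, consisting of all $\Psi_{\mathbf{y}}$ with $\mathbf{0} \neq \mathbf{y} \in \chi_{q}$ and $\Psi_{\mathbf{y}}(M) \neq 0$. Since $\chi_{q}$ is countable, $\mathcal{F}_{M}$ is a countable family of $\text{Sym}_{q}(\mathbb{R})$-polynomials, and by construction every $F \in \mathcal{F}_{M}$ satisfies $F(M) \neq 0$. Then I would apply \cref{lemma: diagDistinctImpliesOrderDist} to $M$ with this family $\mathcal{F}_{M}$ to obtain a row full, p.o.~distinct matrix
\[
N \in \mathfrak{R}(M, \mathcal{F}_{M}) \cap \text{Sym}_{q}^{\tt{pd}}(\mathbb{R}_{> 0}).
\]

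Next I would verify that $N$ inherits the confluence hypothesis from $M$. Consider any $\mathbf{0} \neq \mathbf{y} \in \chi_{q}$ with $\Psi_{\mathbf{y}}(N) = 0$. By \cref{definition: reduct}, $F(N) \neq 0$ for every $F \in \mathcal{F}_{M}$, so $\Psi_{\mathbf{y}} \notin \mathcal{F}_{M}$. By the definition of $\mathcal{F}_{M}$ this forces $\Psi_{\mathbf{y}}(M) = 0$, and then the hypothesis of the corollary yields that $\mathbf{y}$ is confluent. Thus every $\mathbf{0} \neq \mathbf{y} \in \chi_{q}$ annihilating $\Psi_{\mathbf{y}}$ at $N$ is confluent.

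Finally, I would invoke \cref{lemma: pairwiseDistHardness} on $N$: since $N$ is row full and p.o.~distinct, lies in $\text{Sym}_{q}^{\tt{pd}}(\mathbb{R}_{> 0})$, and satisfies the confluence condition on its $\Psi$-zeros, we conclude that $\PlEVAL(N)$ is $\#$P-hard. Composing with the reduction $\PlEVAL(N) \leq \PlEVAL(M)$ furnished by $N \in \mathfrak{R}(M, \mathcal{F}_{M})$ gives $\#$P-hardness of $\PlEVAL(M)$. There is no real obstacle here beyond the careful choice of $\mathcal{F}_{M}$ at the start; its purpose is precisely to guarantee that passing to a reduct cannot create new zeros of $\Psi_{\mathbf{y}}$, which is what keeps the confluence hypothesis intact along the chain of reductions.
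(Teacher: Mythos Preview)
Your proof is correct and follows essentially the same approach as the paper, which leaves this corollary's proof implicit after the informal discussion preceding it. You make explicit the key step of carrying $\mathcal{F}_{M}$ through \cref{lemma: diagDistinctImpliesOrderDist} so that the confluence hypothesis transfers from $M$ to the row full, p.o.\,distinct reduct $N$, after which \cref{lemma: pairwiseDistHardness} applies directly.
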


Our next goal will be to prove the following:
$$\text{diagonal distinct} \iff \text{ p.o.\,distinct}.$$
We note that 
\cref{lemma: diagDistinctImpliesOrderDist} already
implies that
$$\text{diagonal distinct} \implies \text{ p.o.\,distinct}.$$
So, we only need to show that the converse is also true.

\begin{lemma}\label{lemma: OrderedDistImpliesDiagDist}
    Fix any $i \ne j \in [q]$. Let $M \in \text{Sym}_{q}^{\tt{pd}}(\mathbb{R}_{> 0})$
    be a matrix such that rows $i$ and $j$ are
    not order identical.
    Let $\mathcal{F}$ be a countable set of
    $\text{Sym}_{q}(\mathbb{R})$-polynomials
    such that $F(M) \neq 0$ for all
    $F \in \mathcal{F}$.
    Then there exists some $N \in 
    \mathfrak{R}(M, \mathcal{F}) \cap
    \text{Sym}_{q}^{\tt{pd}}(\mathbb{R}_{> 0})$,
    such that $N_{ii} \neq N_{jj}$.
\end{lemma}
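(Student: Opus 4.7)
The plan is to split into two cases according to whether $M_{ii}=M_{jj}$. If $M_{ii}\neq M_{jj}$, then $M$ itself already lies in $\mathfrak{R}(M,\mathcal{F})\cap\text{Sym}_{q}^{\tt{pd}}(\mathbb{R}_{> 0})$ and satisfies $N_{ii}\neq N_{jj}$, so I focus on the case $M_{ii}=M_{jj}$. Since rows $i$ and $j$ of $M$ are not order identical and $M$ has strictly positive entries, the multisets $\{M_{ia}\}_{a\in[q]}$ and $\{M_{ja}\}_{a\in[q]}$ of positive reals are distinct; because positive multisets are determined by their even power sums, some integer $k\geq 1$ satisfies $\sum_{a}M_{ia}^{2k}\neq\sum_{a}M_{ja}^{2k}$. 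Fixing such a $k$, I introduce the $\text{Sym}_{q}(\mathbb{R})$-polynomial
$$
\zeta_{k}(N)\;:=\;\sum_{a\in[q]}N_{ia}^{2k}-\sum_{a\in[q]}N_{ja}^{2k},
$$
which satisfies $\zeta_{k}(M)\neq 0$ together with the identity $\zeta_{k}(N)=\bigl((T_{k}N)^{2}\bigr)_{ii}-\bigl((T_{k}N)^{2}\bigr)_{jj}$. Thus $(T_{k}M)^{2}$ is precisely the gadget whose $(ii)$- and $(jj)$-diagonal entries disagree.

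Next, I will apply \cref{lemma: thickeningWorks} to $M$ with the enlarged countable family
$$
\mathcal{F}'\;:=\;\mathcal{F}\;\cup\;\{\zeta_{k}\}\;\cup\;\{F\circ T_{k}:F\in\mathcal{F}\},
$$
where $F\circ T_{k}$ denotes the $\text{Sym}_{q}(\mathbb{R})$-polynomial $N\mapsto F(T_{k}N)$; the entry-wise $k$-th power is a polynomial map, so $F\circ T_{k}$ is a homogeneous polynomial in the entries of $N$. To verify the hypothesis of \cref{lemma: thickeningWorks}, I use the identity $T_{k}\mathcal{T}_{M}(\mathbf{p})=\mathcal{T}_{M}(\mathbf{p}^{k})$ with $\mathbf{p}^{k}=(p_{1}^{k},\ldots,p_{d}^{k})$, so that $F(T_{k}\mathcal{T}_{M}(\mathbf{p}))=F(\mathcal{T}_{M}(\mathbf{p}^{k}))$; since $F(\mathcal{T}_{M}(\mathbf{q}))$ is a nonzero polynomial (it equals $F(M)\neq 0$ at $\mathbf{q}=(g_{1},\ldots,g_{d})$) and the componentwise $k$-th power map surjects $\mathbb{R}_{>0}^{d}$ onto itself, the composed polynomial is nonidentically zero. \cref{lemma: thickeningWorks} then yields $M'\in\mathfrak{R}(M,\mathcal{F}')\cap\text{Sym}_{q}^{\tt{pd}}(\mathbb{R}_{> 0})$ with $F(M')\neq 0$, $F(T_{k}M')\neq 0$, and $\zeta_{k}(M')\neq 0$ for every $F\in\mathcal{F}$.

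Finally, the Schur product theorem gives $T_{k}M'\in\text{Sym}_{q}^{\tt{pd}}(\mathbb{R}_{> 0})$, so I may apply \cref{corollary: stretchingWorksPositive} to $T_{k}M'$ with the family $\mathcal{F}\cup\{\rho\}$, where $\rho(L):=L_{ii}-L_{jj}$. The hypotheses follow from two evaluations along the stretching curve: $\mathcal{S}_{T_{k}M'}(1)=T_{k}M'$ handles each $F\in\mathcal{F}$ via $F(\mathcal{S}_{T_{k}M'}(1))=F(T_{k}M')\neq 0$, while $\mathcal{S}_{T_{k}M'}(2)=(T_{k}M')^{2}$ handles $\rho$ via $\rho(\mathcal{S}_{T_{k}M'}(2))=\zeta_{k}(M')\neq 0$. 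The resulting $N=\mathcal{S}_{T_{k}M'}(\theta^{*})$ satisfies $F(N)\neq 0$ for $F\in\mathcal{F}$, $\rho(N)\neq 0$ (i.e., $N_{ii}\neq N_{jj}$), and $N\in\text{Sym}_{q}^{\tt{pd}}(\mathbb{R}_{> 0})$; the chain $\PlEVAL(N)\leq\PlEVAL(T_{k}M')\leq\PlEVAL(M')\leq\PlEVAL(M)$ places $N$ in $\mathfrak{R}(M,\mathcal{F})$, completing the argument. The main obstacle is engineering the thickening step so that $T_{k}M'$ itself retains $F(T_{k}M')\neq 0$ for every $F\in\mathcal{F}$: without this the subsequent stretching of $T_{k}M'$ cannot be anchored at $\theta=1$, and the two-step plan collapses; including $F\circ T_{k}$ in $\mathcal{F}'$ and verifying via $T_{k}\mathcal{T}_{M}(\mathbf{p})=\mathcal{T}_{M}(\mathbf{p}^{k})$ that this composite polynomial is nonidentically zero on the thickening curve is the key technical move.
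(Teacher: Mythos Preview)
Your proof is correct and takes a genuinely different, considerably more elementary route than the paper's.

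The paper's argument is substantially longer: it first disposes of the case where $\xi(\mathcal{S}_M(\theta))\neq 0$ for some $\theta$ via \cref{corollary: stretchingWorksPositive}, then assumes $\xi(\mathcal{S}_M(\theta))=0$ identically and passes to an $M'$ that is $\frac{1}{3}$-close to the identity. From there it invokes the $R_n$ gadget (\cref{definition: RInterpolation}) together with thickening, studying $\xi(R_n(T_m M'))$ over all $n,m\geq 1$; an equivalence-class decomposition of $[q]$ and a two-parameter Vandermonde argument over odd $m$ eventually force a contradiction with $\rho_{{\tt dist},12}(T_4 M')\neq 0$. Only then can it thread the resulting $n^*,m^*$ back through \cref{corollary: stretchingWorksPositive}, \cref{lemma: thickeningWorks}, and \cref{lemma: RInterpolationWorks} in sequence.

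Your argument replaces all of this with a single clean observation: ``rows $i$ and $j$ not order identical'' means the multisets $\{M_{ia}\}_a$ and $\{M_{ja}\}_a$ of positive reals differ, hence some even power sum distinguishes them; and that even power sum is exactly $\big((T_kN)^2\big)_{ii}-\big((T_kN)^2\big)_{jj}$. The rest is one application of \cref{lemma: thickeningWorks} (cleverly augmented with the $F\circ T_k$'s, verified via the identity $T_k\mathcal{T}_M(\mathbf{p})=\mathcal{T}_M(\mathbf{p}^k)$, valid since $e_{ij0}=0$ for positive $M$) followed by one application of \cref{corollary: stretchingWorksPositive}. You introduce one external tool not used in the paper---the Schur product theorem, which guarantees $T_kM'\in\text{Sym}_q^{\tt pd}(\mathbb{R}_{>0})$---but this is standard. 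Your approach avoids the $R_n$ gadget and the close-to-identity stretching entirely; the paper's route, while longer here, keeps the argument within the gadget toolkit it has already built and reuses for several neighboring lemmas.
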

\begin{proof}
    We can assume without loss of generality that
    $i = 1$ and $j = 2$.
    We will define $\xi: N \mapsto N_{11} - N_{22}$.
    If there is some $\theta$ such that
    $\xi(\mathcal{S}_{M}(\theta)) \neq 0$,
    then we are done by \cref{corollary: stretchingWorksPositive}.
    So we may assume that
    $\xi(\mathcal{S}_{M}(\theta)) = 0$
    for all $\theta \in \mathbb{R}$.
    We also recall from \cref{lemma: fullPairwiseDistTest}
    that there exists a $\text{Sym}_{q}(\mathbb{R})$-polynomial
    $\rho_{{\tt{dist}}, 12}$
    such that $\rho_{{\tt{dist}}, 12}(N) \neq 0$ on any matrix $N$
    if and only if rows $1$ and $2$ of $N$ are not
    order identical.    
    We will now define $\zeta: N \mapsto
    \rho_{{\tt{dist}}, 12}(T_{4}N)$.
    Since $\rho_{{\tt{dist}}, 12}(M) \neq 0$,
    and $M \in \text{Sym}_{q}(\mathbb{R}_{> 0})$,
    it follows that $\zeta(M) =
    \rho_{{\tt{dist}}, 12}(T_{4}M) \neq 0$ as well.
    
    We will now let $\mathcal{F}' =
    \{F': N \mapsto F(N^{3})\ |\ F \in \mathcal{F}\}$.
    We see that $F'(\mathcal{S}_{M}(\frac{1}{3})) = F(\mathcal{S}_{M}(1)) = F(M)
    \neq 0$ for all $F' \in \mathcal{F}'$.
    We also know that $\zeta(\mathcal{S}_{M}(1))
    = \zeta(M) \neq 0$.
    Since $\mathcal{S}_{M}(0) = I$, we
    know that there exists some $\delta > 0$
    such that $|\mathcal{S}_{M}(\theta)_{ij} - I_{ij}|
    < \frac{1}{3}$ for all $0 < \theta < \delta$,
    for all $i, j \in [q]$.
    So, we can use \cref{lemma: stretchingWorks}
    to find
    $M' = \mathcal{S}_{M}(\theta^{*}) \in
    \mathfrak{R}(M, \mathcal{F}' \cup \{\zeta\})
    \cap \text{Sym}_{q}^{\tt{pd}}(\mathbb{R}_{\neq 0})$
    where $0 < \theta^{*} < \delta$.

    We know that $\xi(M') = 0$.
    So, we see that $(M')_{11} = (M')_{22}$.
    We will now consider $R_{n}(T_{m}M')$ for
    $n \geq  1$ and $m \geq 1$.
    Let us first assume that $\xi(R_{n}(T_{m}M')) = 0$
    for all $n, m \geq 1$.
    We will now define an equivalence relation
    on $[q]$ such that $i \sim j$ if and only if
    $(R_{n}(T_{m}M'))_{ii} = (R_{n}(T_{m}M'))_{jj}$
    for all $n, m \geq 1$.
    By our assumption, we see that $1 \sim 2$.
    This equivalence relation defines a partition
    $[q] = S_{1} \sqcup \cdots \sqcup S_{r}$
    for some $r < q$.
    We may assume without loss of generality that
    $1, 2 \in S_{1}$.
    We will now fix some \emph{odd} $m \geq 1$, and consider
    $\xi(R_{n}(T_{m}M'))$ for all $n \geq 1$.
    We know that
    $$\xi(R_{n}(T_{m}M')) =
    \sum_{a, b \in [q]}(M'_{ab})^{nm} \cdot 
    ((M'_{1a})^{m}(M'_{1b})^{m} - 
    (M'_{2a})^{m}(M'_{2b})^{m}).$$
    We now let $X = \{(M'_{ab}): a, b \in [q]\}$,
    and
    $$c_{m, 11}(x) = \sum_{a, b \in [q]: M'_{ab} = x}
    (M'_{1a}M'_{1b})^{m}, ~~\text{ and }~~
    c_{m, 22}(x) = \sum_{a, b \in [q]: M'_{ab} = x}
    (M'_{2a}M'_{2b})^{m}.$$
    So, we see that
    $$\xi(R_{n}(T_{m}M')) = 
    \sum_{x \in X}x^{mn} \cdot (c_{m, 11}(x)
    - c_{m, 22}(x)).$$
    Since $m$ is odd by our choice, we see that
    $x^{m} = (x')^{m}$ for $x, x' \in X$ implies that
    $x = x'$.
    So, if $\xi(R_{n}(T_{m}M')) = 0$ for all $n \geq 1$,
    this forms a full rank Vandermonde system
    of linear equations of size $O(1)$.
    This implies that $c_{m, 11}(x)
    - c_{m, 22}(x) = 0$ for all $x \in X$.
    
    Now, we note that by our construction of $M'$,
    we ensured that every diagonal entry is 
    greater than the absolute values of all non-diagonal entries.
    Now, given any $i \in [q]$, we know that
    since $[q] = S_{1} \sqcup \cdots S_{r}$,
    there exists some $t \in [r]$ such that
    $i \in S_{t}$.
    We now see that
    $$c_{m, 11}((M')_{ii}) - c_{m, 22}((M')_{ii}) = 
    \sum_{j \in S_{t}} \left((M'_{1j})^{2m} -
    (M'_{2j})^{2m} \right).$$
    Since we can do this for all $i \in [q]$,
    we see that for all $t \in [r]$,
    $$\sum_{j \in S_{t}} (M'_{1j})^{2m} 
    = \sum_{j \in S_{t}} (M'_{2j})^{2m}.$$
    So,
    $$\sum_{j \in [q]} (M'_{1j})^{2m} = 
    \sum_{t \in [r]}\sum_{j \in S_{t}}(M'_{1j})^{2m}
    = \sum_{t \in [r]}\sum_{j \in S_{t}}(M'_{2j})^{2m}
    = \sum_{j \in [q]} (M'_{2j})^{2m}.$$

    Since this is true for all odd $m = 2k-1$ with $k  \geq 1$, we have
    a Vandermonde system of equations of the form
    $$\begin{pmatrix}
        (M'_{11})^{2} & (M'_{12})^{2} & \cdots & (M'_{1q})^{2}
        & (M'_{21})^{2} & (M'_{22})^{2} & \cdots & (M'_{2q})^{2}\\
        (M'_{11})^{6} & (M'_{12})^{6} & \cdots & (M'_{1q})^{6}
        & (M'_{12})^{6} & (M'_{22})^{6} & \cdots & (M'_{2q})^{6}\\
        \vdots & \vdots & \ddots &\vdots
        & \vdots & \vdots & \ddots & \vdots\\
        (M'_{11})^{4k + 2} & (M'_{12})^{4k + 2} & \cdots & (M'_{1q})^{4k + 2}
        & (M'_{12})^{4k + 2} & (M'_{22})^{4k + 2} & \cdots & (M'_{2q})^{4k + 2}\\
    \end{pmatrix} \times \begin{pmatrix}
        1\\
        1\\
        \vdots\\
        1\\
        -1\\
        -1\\
        \vdots\\
        -1
    \end{pmatrix} = \begin{pmatrix}
        0\\
        \vdots\\
        0
    \end{pmatrix}.$$
    This is only possible if the following condition is satisfied:
    For any $v$, let $n_v = | \{i\ | \ |M'_{1i}| =v \}|$ 
    and $n'_v =  | \{i\ | \ |M'_{2i}| =v \}|$. Then $n_v = n'_v$ for all $v$.  
    This can be seen by first 
    ordering the elements of the following multisets by magnitude 
    \[\{|M'_{11}|,  |M'_{12}|,  \ldots  |M'_{1q}|\}, ~~\mbox{ and }~~
    \{|M'_{21}|,  |M'_{22}|,  \ldots  |M'_{2q}|\},\]
    and then taking a sufficiently large $k$.
    Thus  the entries of $|M'_{1i}|$ and $|M'_{2i}|$ 
    can be matched in a 1-1 correspondence.
    Hence, there  exists some
    $\sigma \in S_{q}$ such that
    $(M'_{1i})^{4} = (M'_{2\sigma(i)})^{4}$ for
    all $i \in [q]$.
    But from our construction of $M'$, we ensured
    that $\zeta(M') \neq 0$,
    where $\zeta$ was defined such that
    $\zeta: N \mapsto \rho_{{\tt{dist}}, 12}(T_{4}N)$.
    This contradiction therefore implies that
    it is not possible that
    $\xi(R_{n}(T_{m}(M'))) = 0$ for all $n, m \geq 1$.

    So, we may assume that there exists 
    some $n \geq 1$ and $m \geq 1$
    such that $\xi(R_{n}(T_{m}M')) \neq 0$.
    We can now define $\xi': N \mapsto 
    \xi(R_{n}(T_{m}N))$,
    and we see that
    $\xi'(\mathcal{S}_{M}(\theta^{*})) \neq 0$.
    So, \cref{corollary: stretchingWorksPositive}
    allows us to find some
    $M'' \in \mathfrak{R}(M, \mathcal{F}'
    \cup \{\xi'\}) \cap
    \text{Sym}_{q}^{\tt{pd}}(\mathbb{R}_{> 0})$.
    We can assume that the entries of $M''$
    are generated by some $\{g''_{t}\}_{t \in [d]}$.
    We also know that using \cref{lemma: MequivalentCM},
    we can replace $M''$ with some $c  M''$.
    If we now let $\xi'': N \mapsto
    (R_{n}(N))_{11} - (R_{n}(N))_{22}$,
    we can see that since $M'' \in
    \text{Sym}_{q}(\mathbb{R}_{> 0})$,
    $\xi''(\mathcal{T}_{M''}((g''_{1})^{m},
    \dots, (g''_{d})^{m})) = \xi''(T_{m}M'') 
    = \xi'(M'') \neq 0$.
    So, \cref{lemma: thickeningWorks} allows
    us to find some
    $M''' \in \mathfrak{R}(M'', \mathcal{F}'
    \cup \{\xi''\}) \cap
    \text{Sym}_{q}^{\tt{pd}}(\mathbb{R}_{> 0})$.
    We note that since $M''' \in 
    \text{Sym}_{q}(\mathbb{R}_{> 0})$,
    $\xi(\mathcal{R}_{M'''}(n)) = 
    \xi(R_{n}(M''')) = \xi''(M''') \neq 0$.
    Also, $\mathcal{R}_{M'''}(1) = (M''')^3$, and so, for all $F \in \mathcal{F}$,
    $F(\mathcal{R}_{M'''}(1)) = F((M''')^3) = F'(M''') \ne 0$, where $F' \in \mathcal{F}'$
    is the corresponding function to $F$.
    Finally, \cref{lemma: RInterpolationWorks}
    allows us to find the required $N \in 
    \mathfrak{R}(M''', \mathcal{F} \cup 
    \{\xi\}) \cap
    \text{Sym}_{q}^{\tt{pd}}(\mathbb{R}_{> 0})$.
\end{proof}

We can repeatedly apply \cref{lemma: OrderedDistImpliesDiagDist} to all distinct
pairs $i \ne j \in [q]$, while each time incorporating an additional polynomial $\xi$
in the set $\mathcal{F}$ that represents the last pair that is ensured to be not order identical. 
More formally, we can prove the following corollary.

\begin{corollary}\label{corollary: ordDistDiagDistReduct}
    Let $M \in \text{Sym}_{q}^{\tt{pd}}(\mathbb{R}_{> 0})$
    be a p.o.\,distinct matrix, that
    is not necessarily row full.
    Let $\mathcal{F}$ be a countable set of
    $\text{Sym}_{q}(\mathbb{R})$-polynomials
    such that $F(M) \neq 0$ for all $F \in \mathcal{F}$.
    There exists some  diagonal distinct
    $N \in \mathfrak{R}(M, \mathcal{F}) \cap
    \text{Sym}_{q}^{\tt{pd}}(\mathbb{R}_{> 0})$.
\end{corollary}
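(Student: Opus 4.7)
The plan is to iterate \cref{lemma: OrderedDistImpliesDiagDist} once per unordered pair $\{i,j\} \subset [q]$, accumulating at each step both the polynomial $\xi_{ij} : N \mapsto N_{ii}-N_{jj}$ that certifies $N_{ii}\neq N_{jj}$ and the polynomial $\rho_{\tt{dist}}$ from \cref{lemma: fullPairwiseDistTest} that certifies p.o.\,distinctness of the current matrix. This mirrors the proof of \cref{corollary: pairwiseDistImpliesIndep}, where \cref{lemma: DistImpliesIndep} was iterated over the $\binom{q}{2}$ pairs.

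Concretely, enumerate the pairs as $(i_1,j_1),\dots,(i_m,j_m)$ with $m=\binom{q}{2}$. For each pair, $\xi_{ij}$ is a homogeneous polynomial of degree $1$ in the entries of the matrix, hence a $\text{Sym}_q(\mathbb{R})$-polynomial. Setting $\mathcal{F}_0 = \mathcal{F}\cup\{\rho_{\tt{dist}}\}$, note $F(M)\neq 0$ for all $F\in\mathcal{F}_0$ because $M$ is p.o.\,distinct by hypothesis. I would then construct by induction a sequence $M_0 = M, M_1, \dots, M_m$ together with $\mathcal{F}_k := \mathcal{F}_0 \cup \{\xi_{i_l j_l} : l \le k\}$, such that
\[
M_k \in \mathfrak{R}(M,\mathcal{F}_k) \cap \text{Sym}_q^{\tt{pd}}(\mathbb{R}_{>0}).
\]

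For the inductive step, assume $M_k$ has been constructed. Since $\rho_{\tt{dist}}\in\mathcal{F}_k$ and $\rho_{\tt{dist}}(M_k)\neq 0$, the matrix $M_k$ is p.o.\,distinct, so in particular rows $i_{k+1}$ and $j_{k+1}$ of $M_k$ are not order identical. Applying \cref{lemma: OrderedDistImpliesDiagDist} to $M_k$ with the pair $(i_{k+1},j_{k+1})$ and the countable function family $\mathcal{F}_k$ produces
\[
M_{k+1} \in \mathfrak{R}(M_k,\mathcal{F}_k)\cap \text{Sym}_q^{\tt{pd}}(\mathbb{R}_{>0})
\]
with $(M_{k+1})_{i_{k+1}i_{k+1}} \neq (M_{k+1})_{j_{k+1}j_{k+1}}$, i.e.\ $\xi_{i_{k+1}j_{k+1}}(M_{k+1}) \neq 0$. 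Composition of polynomial-time reductions $\PlEVAL(M_{k+1}) \le \PlEVAL(M_k) \le \PlEVAL(M)$ and the fact that $F(M_{k+1})\neq 0$ for all $F \in \mathcal{F}_k$ together give $M_{k+1} \in \mathfrak{R}(M,\mathcal{F}_{k+1})$, completing the induction.

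Setting $N := M_m$, every $\xi_{i_l j_l}$ with $l\le m$ is nonzero on $N$, so $N_{11},\dots,N_{qq}$ are pairwise distinct, i.e.\ $N$ is diagonal distinct, and $N \in \mathfrak{R}(M,\mathcal{F}_m) \subseteq \mathfrak{R}(M,\mathcal{F}) \cap \text{Sym}_q^{\tt{pd}}(\mathbb{R}_{>0})$, as required. The only subtle point in the argument is ensuring that the hypothesis of \cref{lemma: OrderedDistImpliesDiagDist} (that the selected pair of rows is not order identical) remains valid at each iteration; this is handled uniformly by keeping $\rho_{\tt{dist}}$ in the function set throughout, since by definition of reducts every $M_k$ satisfies $\rho_{\tt{dist}}(M_k)\neq 0$.
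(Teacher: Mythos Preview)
Your proof is correct and follows essentially the same approach as the paper: both iterate \cref{lemma: OrderedDistImpliesDiagDist} over all $\binom{q}{2}$ pairs $(i,j)$, carrying along $\rho_{\tt{dist}}$ to preserve p.o.\,distinctness and accumulating the polynomials $N\mapsto N_{ii}-N_{jj}$ (the paper calls them $\rho_{{\tt diag},ij}$) so that the final matrix is diagonal distinct. Your write-up is in fact slightly more explicit about the induction than the paper's.
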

\begin{proof}
    We can define the $\text{Sym}_{q}(\mathbb{R})$-polynomial 
    $\rho_{{\tt{diag}}, ij}$ such that
    $\rho_{{\tt{diag}}, ij}(N) : N \mapsto N_{ii} - N_{jj}$.
    Since $M$ is p.o.\,distinct,
    \cref{lemma: OrderedDistImpliesDiagDist}
    allows us to find
    $M^{(12)} \in \mathfrak{R}(M, \mathcal{F} \cup
    \{\rho_{\tt{dist}}, \rho_{{\tt{diag}}, 12}\}) \cap
    \text{Sym}_{q}^{\tt{pd}}(\mathbb{R}_{> 0})$.
    We can repeat this process again with $M^{(12)}$
    to find $M^{(13)} \in \mathfrak{R}(M^{(12)}, \mathcal{F}
    \cup \{\rho_{\tt{dist}}, \rho_{{\tt{diag}}, 12},
    \rho_{{\tt{diag}}, 13}\}) \cap
    \text{Sym}_{q}^{\tt{pd}}(\mathbb{R}_{> 0})$.

    After repeating this for all $i \neq j \in [q]$,
    we obtain $N \in \mathfrak{R}(M, \mathcal{F} \cup
    \{\rho_{\tt{dist}}\} \cup
    \{\rho_{{\tt{diag}}, ij}: i \neq j \in [q] \}) \cap
    \text{Sym}_{q}^{\tt{pd}}(\mathbb{R}_{> 0})$.
    So, we see that
    $N \in \mathfrak{R}(M, \mathcal{F}_{M} \cup
    \{\rho_{\tt{diag}}\}) \cap
    \text{Sym}_{q}^{\tt{pd}}(\mathbb{R}_{> 0})$
    is the required matrix.
\end{proof}

This finishes our proof that
$$\text{row full } + \text{ p.o.\,distinct} 
\iff \text{diagonal distinct} 
\iff \text{p.o.\,distinct}.$$
We can therefore also prove the following
corollary.

\begin{corollary}\label{corollary: ordDistHardness}
    Let $M \in \text{Sym}_{q}^{\tt{pd}}(\mathbb{R}_{> 0})$
    be a p.o.\,distinct matrix.
    If $\mathbf{x}$ is confluent for all
    $\mathbf{0} \neq \mathbf{x} \in \chi_{q}$ such that
    $\Psi_{\mathbf{x}}(M) = 0$, then
    $\PlEVAL(M)$ is $\#$P-hard.
\end{corollary}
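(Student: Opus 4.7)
The plan is to leverage the chain of equivalences the paper has already established, namely that \emph{row full $+$ p.o.\,distinct $\iff$ diagonal distinct $\iff$ p.o.\,distinct}, in order to reduce to an already-proved hardness result. The key observation is that \cref{corollary: ordDistDiagDistReduct} produces a diagonal distinct reduct of any p.o.\,distinct matrix in $\text{Sym}_{q}^{\tt{pd}}(\mathbb{R}_{>0})$, while \cref{corollary: diagDistHardness} establishes hardness for diagonal distinct matrices under the exact confluence hypothesis we have.

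First, I would form the set $\mathcal{F}_{M}$ as in \cref{equation: FM}, which consists of precisely those $\Psi_{\mathbf{y}}$ with $\Psi_{\mathbf{y}}(M)\neq 0$ (equivalently, those $\mathbf{y} \in \chi_{q}\setminus\overline{\mathcal{L}(\lambda_{1},\dots,\lambda_{q})}$, where the $\lambda_{i}$ are the eigenvalues of $M$). This is a countable set of $\text{Sym}_{q}(\mathbb{R})$-polynomials, each nonvanishing on $M$, so the hypothesis of \cref{corollary: ordDistDiagDistReduct} is satisfied. Applying that corollary, I obtain a diagonal distinct matrix
$$N \in \mathfrak{R}(M,\mathcal{F}_{M}) \cap \text{Sym}_{q}^{\tt{pd}}(\mathbb{R}_{>0}).$$

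Next, I verify that $N$ inherits the confluence hypothesis from $M$. By the definition of a reduct (\cref{definition: reduct}), $F(N)\neq 0$ for every $F\in\mathcal{F}_{M}$. Contrapositively, whenever $\Psi_{\mathbf{y}}(N)=0$ for some $\mathbf{0}\neq\mathbf{y}\in\chi_{q}$, we must have $\Psi_{\mathbf{y}}\notin\mathcal{F}_{M}$, i.e.\ $\Psi_{\mathbf{y}}(M)=0$. By the hypothesis placed on $M$, this forces $\mathbf{y}$ to be confluent. Thus every $\mathbf{0}\neq\mathbf{y}\in\chi_{q}$ with $\Psi_{\mathbf{y}}(N)=0$ is confluent, and $N$ itself satisfies the hypothesis of \cref{corollary: diagDistHardness}.

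Since $N$ is diagonal distinct, lies in $\text{Sym}_{q}^{\tt{pd}}(\mathbb{R}_{>0})$, and all $\mathbf{0}\neq\mathbf{y}$ annihilating $\Psi_{\mathbf{y}}(N)$ are confluent, \cref{corollary: diagDistHardness} yields that $\PlEVAL(N)$ is $\#$P-hard. Finally, $N$ being a reduct of $M$ means $\PlEVAL(N)\leq\PlEVAL(M)$, so $\PlEVAL(M)$ is also $\#$P-hard. There is essentially no obstacle here beyond chaining the two previously established results together through the translation via $\mathcal{F}_{M}$; all the hard technical work (the row-fullness reduction in \cref{lemma: diagDistinctImpliesRowFull}, the $R_{n}$ gadget in \cref{lemma: OrderedDistImpliesDiagDist}, and the confluence machinery in \cref{theorem: confluentReduction}) has already been done in the preceding sections.
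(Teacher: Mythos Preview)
Your proof is correct and follows exactly the same approach as the paper: apply \cref{corollary: ordDistDiagDistReduct} with $\mathcal{F}=\mathcal{F}_{M}$ to obtain a diagonal distinct reduct $N$, then invoke \cref{corollary: diagDistHardness} on $N$. Your explicit verification that $N$ inherits the confluence hypothesis (via the contrapositive argument through $\mathcal{F}_{M}$) is a step the paper leaves implicit, but the two proofs are otherwise identical.
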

\begin{proof}
    \cref{corollary: ordDistDiagDistReduct} implies that
    there exists some  diagonal distinct
    $N \in \mathfrak{R}(M, \mathcal{F}_{M})
    \cap \text{Sym}_{q}^{\tt{pd}}(\mathbb{R}_{> 0})$.
    \cref{corollary: diagDistHardness} then proves that
    $\PlEVAL(N)$ is $\#$P-hard, which then implies that
    $\PlEVAL(M)$ is also $\#$P-hard.
\end{proof}
\section[Hardness for Diagonal Distinct 4 x 4 matrices]{Hardness for Diagonal Distinct $4 \times 4$ matrices}\label{sec: 4x4DiagDist}

We are now ready to use all this quite elaborate
machinery that we have built to prove a complexity
dichotomy in the case where $q = 4$.

\begin{lemma}\label{lemma: almostAllConfluent}
    Let $\lambda_{1}, \dots, \lambda_{4} > 0$.
    If $\overline{\mathcal{L}(\lambda_{1}, \dots, \lambda_{4})}$
    contains any $ \mathbf{x}  \neq \mathbf{0} $ that is non-confluent, then
    such an $ \mathbf{x} $ must satisfy  $|x_{1}| = |x_{2}| = |x_{3}| = |x_{4}|$,
    and in this case,  $(1, 1, -1, -1) \in
    \overline{\mathcal{L}(\lambda_{1}, \dots, \lambda_{4})}$.
\end{lemma}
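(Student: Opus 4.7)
The statement breaks cleanly into two parts: a purely combinatorial claim characterising non-confluent vectors in $\chi_4$, and a lattice consequence derived from it.

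\emph{Part 1 (combinatorial characterisation).} I will show that every non-zero non-confluent $\mathbf{x} \in \chi_4$ has $|x_1|=|x_2|=|x_3|=|x_4|$, independently of any lattice. The first step is a reduction: if either $|\mathbf{x}^+| \le 1$ or $|\mathbf{x}^-| \le 1$, then $(\mathbf{x}^+,\mathbf{x}^-)$ is automatically a confluence, since in the singleton case the only subset-sum equalities are $(\emptyset,\emptyset)$ and $(\{1\},J)$, which trivially verify condition 2 of \cref{definition: confluence}. Because $\mathbf{x}\neq\mathbf{0}$ forces both parts to be non-empty, non-confluence demands $|\mathbf{x}^+|\ge 2$ and $|\mathbf{x}^-|\ge 2$; with only four coordinates this forces exactly $|\mathbf{x}^+|=|\mathbf{x}^-|=2$ and no zero entries.

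Writing $\mathbf{x}^+=(a,b)$ and $\mathbf{x}^-=(c,d)$ with $a,b,c,d\in\mathbb{Z}_{>0}$ and $a+b=c+d$, the next step is to enumerate which partial sum-equalities can hold. Any equality involving $\emptyset$ on one side and a non-empty subset on the other would demand a positive integer equal $0$, which is impossible; similarly, a doubleton-vs-singleton equality would force one of the summands to be $0$. Hence the only non-trivial partial equalities come from singleton-vs-singleton comparisons: $a=c$, $a=d$, $b=c$, $b=d$, and the constraint $a+b=c+d$ pairs them as $a=c\Longleftrightarrow b=d$ and $a=d\Longleftrightarrow b=c$. A short enumeration then settles each scenario: when no partial equality holds, $\{(\{1,2\},\{1,2\})\}$ is a confluence basis; when exactly one coupled pair holds (say $a=c$, $b=d$, $a\neq d$), the basis $\{(\{1\},\{1\}),(\{2\},\{2\})\}$ witnesses confluence and condition~2 can be verified by direct check; only when both coupled pairs hold do we get non-confluence, which forces $a=b=c=d$. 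The witness of failure in that final case is the pair $(S_1,T_1)=(\{1\},\{1\})$ and $(S_2,T_2)=(\{1\},\{2\})$, whose intersections $(\{1\},\emptyset)$ would require $a=0$. This establishes $|x_1|=|x_2|=|x_3|=|x_4|$.

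\emph{Part 2 (from $\mathbf{x}$ to $(1,1,-1,-1)$).} Suppose now that a non-confluent $\mathbf{x}\in\overline{\mathcal{L}(\lambda_1,\ldots,\lambda_4)}$ exists, with common absolute value $a$. By definition of the overline, there is a permutation $\tau\in S_4$ with $\mathbf{y}:=(x_{\tau(1)},\ldots,x_{\tau(4)})\in\mathcal{L}(\lambda_1,\ldots,\lambda_4)$. The vector $\mathbf{y}$ still has exactly two $+a$ entries, say in positions $p,q$, and two $-a$ entries, in positions $r,s$, with $\{p,q,r,s\}=[4]$. The lattice relation $\prod_i\lambda_i^{y_i}=1$ then reads $(\lambda_p\lambda_q)^a=(\lambda_r\lambda_s)^a$, and positivity of the $\lambda_i$ gives $\lambda_p\lambda_q=\lambda_r\lambda_s$. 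Defining $\mathbf{z}\in\chi_4$ by $z_p=z_q=1$, $z_r=z_s=-1$ places $\mathbf{z}\in\mathcal{L}(\lambda_1,\ldots,\lambda_4)$, and since $\mathbf{z}$ is a coordinate permutation of $(1,1,-1,-1)$, we conclude $(1,1,-1,-1)\in\overline{\mathcal{L}(\lambda_1,\ldots,\lambda_4)}$.

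\emph{Expected obstacle.} The whole difficulty is concentrated in the subset enumeration of Part~1: one must be patient enough to rule out partial equalities involving $\emptyset$ or doubletons, and then to check that the three "generic" cases (no equality, one coupled pair, both coupled pairs) really do collapse the non-confluent possibilities to the all-equal case. Part~2 is essentially automatic once the combinatorial structure of Part~1 is in hand, relying only on the positivity of the $\lambda_i$ to take $a$-th roots.
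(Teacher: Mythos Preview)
Your proposal is correct and follows essentially the same route as the paper: reduce to the case $|I^+|=|I^-|=2$ (since a singleton side forces confluence), then case-analyse the possible singleton-versus-singleton sum equalities to show that non-confluence forces all four absolute values equal, and finally use positivity of the $\lambda_i$ to take an $a$-th root and extract $(1,1,-1,-1)\in\overline{\mathcal{L}}$. The only cosmetic difference is organisational: you first classify all admissible partial equalities and then invoke \cref{lemma: confluenceEquivalence} to certify confluence in the tame cases, whereas the paper works directly from the failure of condition~2 of \cref{definition: confluence} to locate the violating pair $(S_1,T_1),(S_2,T_2)$ and read off the equalities from there.
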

\begin{proof}
    By definition $\mathbf{0} \neq \mathbf{x} 
    \in \chi_{q}$, and $\sum_{i} x_i =0$.
    Let $I^{+} = \{i \in [4]: x_{i} > 0\}$,
    and $I^{-} = \{i \in [4]: x_{i} < 0 \}$.
    Since $\mathbf{x} \neq \mathbf{0}$, we know
    that both $I^{+}, I^{-} \neq \emptyset$.
    If $|I^{+}| = 1$ \emph{or} $|I^{-}| = 1$, $\mathbf{x}$
    must be confluent, by  \cref{lemma: confluenceEquivalence}, since the only choice for nonempty
    $S \subseteq I^{+}$ and $T \subseteq I^{-}$ such that
    $\sum_{i \in S}x_{i} = \sum_{j \in T}(-x_{j})$
    is in fact $S = I^{+}$ and $T = I^{-}$.
    This leaves us to consider as the only non-trivial
    case, the scenario where $|I^{+}| = |I^{-}| = 2$.
    We note that if $\mathbf{x} \in 
    \overline{\mathcal{L}
    (\lambda_{1}, \dots, \lambda_{4})}$,
    then $\mathbf{y} \in \overline{\mathcal{L}
    (\lambda_{1}, \dots, \lambda_{4})}$ as well
    for any $\mathbf{y}$ such that $y_{i} = x_{\sigma(i)}$
    for some $\sigma \in S_{4}$.
    So, we may assume without loss of generality
    that $I^{+} = \{1, 4\}$, and $I^{-} = \{2, 3\}$.
    
    Since $\mathbf{x}$ is not confluent, there must be
    $S_{1}, S_{2} \subseteq \{1, 4\}$, and
    $T_{1}, T_{2} \subseteq \{2, 3\}$ such that
    $$\sum_{i \in S_{1}}x_{i} = \sum_{j \in T_{1}}(-x_{j})
    ~\text{ and }~
    \sum_{i \in S_{2}}x_{i} = \sum_{j \in T_{2}}(-x_{j})
    ~~\text{ but }~~
    \sum_{i \in S_{1} \cap S_{2}}x_{i} \neq 
    \sum_{j \in T_{1} \cap T_{2}}(-x_{j}).$$
    Since all $x_i \ne 0$, if one of $S_1, S_2, T_1$ or $ T_2 = \emptyset$,
    so is the corresponding set, violating the last inequality.
    So, we may rule out any empty sets, and also the case where $S_{1} = S_{2} = \{1, 4\}$
    and the case where  $T_{1} = T_{2} = \{2, 3\}$.
    Without loss of generality, 
    $S_{1} \neq \{1, 4\}$,
    and we may assume that
    $S_{1} = \{1\}$.
    This implies that $T_{1} \neq \{2, 3\}$ since
    we know that $x_{1} < x_{1} + x_{4} = (-x_{2}) + (-x_{3})$.
    Once again, without loss of generality, we may
    assume that $T_{1} = \{2\}$.
    
    Now, if  $S_{2} = \{1, 4\}$,
    that would force $T_{2} = \{2, 3\}$
    since $x_{1} + x_{4} > (-x_{2})$ and
    $x_{1} + x_{4} > (-x_{3})$.
    But then, $S_{1} \cap S_{2} = S_{1}$ and 
    $T_{1} \cap T_{2} = T_{1}$, which implies that
    $\sum_{i \in S_{1} \cap S_{2}}x_{i} =
    \sum_{j \in T_{1} \cap T_{2}}(-x_{j})$, a contradiction.
    Therefore, 
    $S_{2} \not = \{1, 4\}$. So the only possibilities are $S_2 = \{1\}$ or  $S_2 = \{4\}$.
    By symmetry, the only  possibilities for $T_2$ are $T_2 = \{2\}$ or  $T_2 = \{3\}$.

    If $S_2 = \{1\}$, then we claim $T_2 = \{3\}$, for otherwise  $T_2 = \{2\}$ and that
    would lead to an equality $\sum_{i \in S_{1} \cap S_{2}}x_{i} = x_1 = -x_2 =
    \sum_{j \in T_{1} \cap T_{2}}(-x_{j})$, a contradiction. 
    Then we have $x_1 = -x_2$ and $x_1 = -x_3$ which leads to $x_4 = (x_1 + x_4) - x_1
    = (-x_2 + -x_3) - x_1 = x_1$.
    This proves that $|x_{1}| = |x_{2}| = |x_{3}| = |x_{4}|$,
    and also that
    $(x, -x, -x, x) \in \mathcal{L}(\lambda_{1}, \dots, 
    \lambda_{4})^{\sigma}$ for some
    $x \in \mathbb{Z}_{> 0}$, and $\sigma \in S_{4}$.
    This means that
    $(\lambda_{\sigma(1)})^{x}(\lambda_{\sigma(2)})^{-x}
    (\lambda_{\sigma(3)})^{-x}(\lambda_{\sigma(4)})^{x} = 1$.
    Since $\lambda_{1}, \dots, \lambda_{4} > 0$,
    this implies that $(\lambda_{\sigma(1)})(\lambda_{\sigma(2)})^{-1}
    (\lambda_{\sigma(3)})^{-1}(\lambda_{\sigma(4)}) = 1$.
    Therefore, $(1, -1, -1, 1) 
    \in \overline{\mathcal{L}(\lambda_{1}, \dots, \lambda_{4})}$.

    Finally, if $S_2 = \{4\}$, then $S_1 \cap S_2 = \emptyset$.
    Then $T_2 = \{2\}$, for otherwise $T_2 = \{3\}$, would give  $T_1 \cap T_2 = \emptyset$
    and then we would have an equality $\sum_{i \in S_{1} \cap S_{2}}x_{i} = 0 =
    \sum_{j \in T_{1} \cap T_{2}}(-x_{j})$, a contradiction. 
    Then we have  $x_1 = -x_2$ and $x_4 = -x_2$, which also leads to $-x_3 =   x_1$ as above.
    Hence, even in this case, we find that
    $|x_{1}| = |x_{2}| = |x_{3}| = |x_{4}|$, and that
    $(x, -x, -x, x)$ in $\overline{\mathcal{L}(\lambda_{1}, \dots, \lambda_{4})}$, for some $x \in \mathbb{Z}_{> 0}$.
    This once again implies that $(1, -1, -1, 1) 
    \in \overline{\mathcal{L}(\lambda_{1}, \dots, \lambda_{4})}$. 
\end{proof}

Our goal will be to now show that given a
diagonal distinct matrix
$M \in \text{Sym}_{4}^{\tt{pd}}(\mathbb{R}_{> 0})$,
we can find some $N \in \mathfrak{R}(M, \mathcal{F}_{M})$
with eigenvalues $(\mu_{1}, \dots, \mu_{4})$
such that $(1, -1, -1, 1) \notin
\overline{\mathcal{L}(\mu_{1}, \dots, \mu_{4})}$.
\cref{corollary: diagDistHardness} then
implies that $\PlEVAL(N)$ is $\#$P-hard,
which means that $\PlEVAL(M)$ is also
$\#$P-hard.
On the other hand, if we are unable to find
such a matrix $N$, then we will show that
$M$ is isomorphic to $A \otimes B$
for some $A, B \in \text{Sym}_{2}(\mathbb{R}_{> 0})$.
Here for any  $\sigma \in S_{4}$  we define the matrix
$M^{\sigma}$ to be 
    such that $(M^{\sigma})_{ij} = M_{\sigma(i)\sigma(j)}$.
    We say that  $M^{\sigma}$ is isomorphic to $M$.

\begin{lemma}\label{lemma: tensorPolynomial}
    There exist
    $\text{Sym}_{4}(\mathbb{R})$-polynomials,
    $\varrho_{\tt{tensor}}$ and
    $\rho_{\tt{tensor}}$ such that given any
    $M \in \text{Sym}_{4}(\mathbb{R}_{\neq 0})$,
    \begin{itemize}
    \item
    $\varrho_{\tt{tensor}}(M) = 0$ if and only if
    $M = A \otimes B$ for some $A, B \in
    \text{Sym}_{2}(\mathbb{R})$, and
    \item
    $\rho_{\tt{tensor}}(M) = 0$ if and only if
    $M$ is isomorphic 
    to $A \otimes B$ for some
    $A, B \in \text{Sym}_{2}(\mathbb{R})$.
    \end{itemize}
\end{lemma}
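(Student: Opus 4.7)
The plan is to encode the tensor product structure via a reshaping trick that converts it into a rank-at-most-one condition, which can then be tested by the vanishing of $2\times 2$ minors.

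Identify $[4]$ with $[2]\times[2]$ by the natural bijection $1\leftrightarrow(1,1),\ 2\leftrightarrow(1,2),\ 3\leftrightarrow(2,1),\ 4\leftrightarrow(2,2)$. For any $M \in \text{Sym}_4(\mathbb{R})$ indexed this way, define the reshape $\widetilde M \in \mathbb{R}^{4\times 4}$ by $\widetilde M_{(i_1,i_2),(j_1,j_2)} = M_{(i_1,j_1),(i_2,j_2)}$. A direct check shows that $M = A\otimes B$ iff $\widetilde M = uv^{\tt T}$ for some $u, v \in \mathbb{R}^4$ (the flattenings of $A$ and $B$), equivalently $\widetilde M$ has rank at most $1$. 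For $M \in \text{Sym}_4(\mathbb{R}_{\neq 0})$, $\widetilde M$ is nonzero, so this is equivalent to the vanishing of every $2\times 2$ minor of $\widetilde M$. I would then define
\[\varrho_{\tt tensor}(M) = \sum_{\substack{I, J \subseteq [4] \\ |I| = |J| = 2}} \bigl(\det \widetilde M_{I,J}\bigr)^{2},\]
which is a sum of squares of degree-$2$ homogeneous polynomials in the entries of $M$, hence a homogeneous degree-$4$ $\text{Sym}_4(\mathbb{R})$-polynomial. Over $\mathbb{R}$, this sum vanishes iff every minor vanishes, giving $M = A\otimes B$ for some (not necessarily symmetric) $A, B \in \mathbb{R}^{2\times 2}$.

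The one detail to verify is that when $M \in \text{Sym}_4(\mathbb{R}_{\neq 0})$ admits such a factorization, one can take $A, B$ symmetric. Each entry of $M$ factors as $M_{(i_1,j_1),(i_2,j_2)} = A_{i_1 i_2}\,B_{j_1 j_2}$, so nonzero-ness of every entry of $M$ forces all entries of $A$ and $B$ to be nonzero. Symmetry of $M$ then gives $A_{i_1 i_2}B_{j_1 j_2} = A_{i_2 i_1}B_{j_2 j_1}$ for all indices; specializing to $i_1 = i_2$ (with $A_{i_1 i_1} \neq 0$) yields $B_{j_1 j_2} = B_{j_2 j_1}$, so $B \in \text{Sym}_2(\mathbb{R})$, and the analogous argument gives $A \in \text{Sym}_2(\mathbb{R})$. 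This establishes the required property of $\varrho_{\tt tensor}$.

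For the isomorphism version, I would simply take
\[\rho_{\tt tensor}(M) = \prod_{\sigma \in S_4} \varrho_{\tt tensor}(M^\sigma),\]
where $(M^\sigma)_{ij} = M_{\sigma(i)\sigma(j)}$ and $M^\sigma$ is symmetric whenever $M$ is. Since the entries of $M^\sigma$ are just a relabeling of those of $M$, each factor $\varrho_{\tt tensor}(M^\sigma)$ is a homogeneous polynomial of degree $4$ in the entries of $M$, so the product is a $\text{Sym}_4(\mathbb{R})$-polynomial of degree $96$. By construction $\rho_{\tt tensor}(M) = 0$ iff $\varrho_{\tt tensor}(M^\sigma) = 0$ for some $\sigma \in S_4$, iff $M^\sigma = A\otimes B$ for some $A, B \in \text{Sym}_2(\mathbb{R})$ and some $\sigma$, which is exactly the statement that $M$ is isomorphic to a tensor product. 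I do not anticipate any genuine obstacle beyond carrying out the reshaping bijection cleanly and verifying the symmetric-factor argument.
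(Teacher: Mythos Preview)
Your proposal is correct and complete. The reshaping trick is the standard way to linearize a tensor-product test, and your verification that the factors $A,B$ inherit symmetry from $M$ when all entries are nonzero is clean and correct.

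The paper takes a different, more hand-crafted route. Rather than reshape and test rank, it writes down an explicit small sum
\[
\varrho_{\tt tensor}(N) = (N_{14}-N_{23})^4 + (N_{11}N_{44}-N_{22}N_{33})^2 + \sum_{i\in[4]}(N_{i1}N_{i4}-N_{i2}N_{i3})^2,
\]
checks by direct substitution that these six relations force $M$ into the form $\left(\begin{smallmatrix}1&x\\x&y\end{smallmatrix}\right)\otimes\left(\begin{smallmatrix}M_{11}&M_{12}\\M_{12}&M_{22}\end{smallmatrix}\right)$, and verifies the converse by plugging in. The construction of $\rho_{\tt tensor}$ as $\prod_{\sigma\in S_4}\varrho_{\tt tensor}(M^\sigma)$ is identical to yours. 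Your approach is more conceptual and immediately generalizes (to larger domains, higher-order tensors, or non-symmetric settings), at the cost of a larger polynomial ($36$ squared minors versus the paper's six terms). The paper's approach gives a more compact witness and builds in the symmetry of the factors from the start (via the linear relation $N_{14}=N_{23}$), but requires bespoke entry-chasing to verify sufficiency. Either is perfectly adequate here since the lemma is only used as a black-box test later in the paper.
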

\begin{proof}
    We will first define $\varrho_{\tt{tensor}}$ as
    $$\varrho_{\tt{tensor}}(N) = (N_{14} - N_{23})^{4} +
    (N_{11}N_{44} - N_{22}N_{33})^{2} +
    \sum_{i \in [4]}(N_{i1}N_{i4} - N_{i2}N_{i3})^{2}.$$
    We can see that $\varrho_{\tt{tensor}}$ is a
    homogeneous polynomial in the entries of the matrix.
    So, it is a $\text{Sym}_{4}(\mathbb{R})$-polynomial.
    Note that if we index the rows and columns not from $\{1, 2, 3, 4\}$ but rather
    from $D= \{00, 01, 10, 11\}$, the polynomial $\varrho_{\tt{tensor}}(N)$ takes the form
 {\small   
 $$ \varrho_{\tt{tensor}}(N) = (N_{00, 11} - N_{01, 10})^{4} +
    (N_{00, 00}N_{11, 11} - N_{01, 01}N_{10, 10})^{2} +
    \sum_{ab \in D}(N_{ab,00}N_{ab,11} - N_{ab,01}N_{ab,10})^{2}. $$
    }
    From this form, it is clear that $\varrho_{\tt{tensor}}(N)$ is invariant if we flip the bit
    $a$ or $b$ in the matrix $N$.
    
    We see that $\varrho_{\tt{tensor}}(M) = 0$
    if and only if $M_{14} = M_{23}$, 
    $M_{11}M_{44} - M_{22}M_{33}$, and
    $$M_{11}M_{14} = M_{12}M_{13},\quad 
    M_{12}M_{24} = M_{22}M_{23},\quad 
    M_{13}M_{34} = M_{23}M_{33},\quad 
    M_{14}M_{44} = M_{24}M_{34}.$$

    Let $M \in \text{Sym}_{4}(\mathbb{R}_{\neq 0})$.
    We will now let $x = \frac{M_{13}}{M_{11}}$,
    and $y = \frac{M_{33}}{M_{11}}$. (As we assume the entries of $M$ are non-zero, the divisions are well defined.)
    Since $M_{11}M_{14} = M_{12}M_{13}$,
    this means that $M_{14} = x \cdot M_{12}$.
    Similarly, since $M_{11}M_{44} = M_{22}M_{33}$,
    we see that $M_{44} = y \cdot M_{22}$.
    Since (by symmetry) $M_{23} = M_{14} = x \cdot M_{12}$,
    and $M_{12}M_{24} = M_{22}M_{23}$,
    we see that $M_{24} = x \cdot M_{22}$.
    Since $M_{33} = y \cdot M_{11}$, and
    $M_{13}M_{34} = M_{23}M_{33}$,
    we see that $M_{34} =
    \frac{xy \cdot M_{11}M_{12}}{x \cdot M_{11}}
    = y \cdot M_{12}$.
    Putting everything together, we see that
    $$M = \begin{pmatrix}
        M_{11} & M_{12} & xM_{11} & xM_{12}\\
        M_{12} & M_{22} & xM_{12} & xM_{22}\\
        xM_{11} & xM_{12} & yM_{11} & yM_{12}\\
        xM_{12} & xM_{22} & yM_{12} & yM_{22}\\
    \end{pmatrix} = \begin{pmatrix}
        1 & x\\
        x & y\\
    \end{pmatrix} \otimes \begin{pmatrix}
        M_{11} & M_{12}\\
        M_{12} & M_{22}\\
    \end{pmatrix}.$$
    So, we see that $\varrho_{\tt{tensor}}(M) = 0$
    implies that $M = A \otimes B$ for some
    $A, B \in \text{Sym}_{2}(\mathbb{R})$.

    Now, let us assume that $M = A \otimes B$ for some
    $A, B \in \text{Sym}_{2}(\mathbb{R})$.
    So, we see that
    $$M = \begin{pmatrix}
        A_{11}B_{11} & A_{11}B_{12} &
        A_{12}B_{11} & A_{12}B_{12}\\
        A_{11}B_{12} & A_{11}B_{22} &
        A_{12}B_{12} & A_{12}B_{22}\\
        A_{12}B_{11} & A_{12}B_{12} & 
        A_{22}B_{11} & A_{22}B_{12}\\
        A_{12}B_{12} & A_{12}B_{22} &
        A_{22}B_{12} & A_{22}B_{22}\\
    \end{pmatrix}$$
    But now we can verify that
    $\varrho_{\tt{tensor}}(M) = 0$.
    This proves that $\varrho_{\tt{tensor}}(M) = 0$
    if and only if $M = A \otimes B$ for some
    $A, B \in \text{Sym}_{2}(\mathbb{R})$.

    Given any $M \in \text{Sym}_{4}(\mathbb{R}_{\neq 0})$,
    and $\sigma\in S_{4}$,
    we can define $M^{\sigma}$ to be the matrix
    such that $(M^{\sigma})_{ij} = M_{\sigma(i)\sigma(j)}$.
    Matrices isomorphic to $M$ take the form $M^{\sigma}$ under a simultaneous row
    and column permutation by some $\sigma$.
    Now, we can define
    $$\rho_{\tt{tensor}}(M) = \prod_{\sigma \in S_{4}}
    \varrho_{\tt{tensor}}(M^{\sigma}).$$
    We can see that for each $\sigma \in S_{4}$,
    $\varrho_{\tt{tensor}}(M^{\sigma})$ is a 
    homogeneous polynomial in the entries of $M$.
    So, $\rho_{\tt{tensor}}$ is a 
    $\text{Sym}_{q}(\mathbb{R})$-polynomial,
    such that $\rho_{\tt{tensor}}(M) = 0$ implies
    that $M$ is isomorphic to $A \otimes B$
    for some $A, B \in \text{Sym}_{2}(\mathbb{R})$.
    Similarly, if $M$ is isomorphic to some
    $A \otimes B$, it follows that there exists some
    $\sigma \in S_{4}$ such that $M^{\sigma} = A \otimes B$.
    This implies that $\rho_{\tt{tensor}}(M) = 0$.
\end{proof}

\begin{remark*}
    For  $M \in \text{Sym}_{4}(\mathbb{R}_{> 0})$,
    being expressible as $M = A \otimes B$ for some $A, B \in
    \text{Sym}_{2}(\mathbb{R})$ is equivalent to  being 
    expressible as $M = A \otimes B$ for some $A, B \in
    \text{Sym}_{2}(\mathbb{R}_{> 0})$. This is because if
    $A$ or $B$ have a zero entry, there would be a zero entry in $M$
    as well, and if $A$ and $B$ have any entries $a$ and $b$ of
    the opposite signs, then $ab < 0$ would be an entry of $M$. Then all entries of $A$
    have the same sign, as well as that of $B$, and their signs are the same. Finally, if they
    are both $-$, then replace $A$ and $B$ by $-A$ and $-B$.
\end{remark*}

We will now show that given a diagonal distinct matrix
$M \in \text{Sym}_{4}^{\tt{pd}}(\mathbb{R}_{> 0})$,
if we cannot find some $N \in \mathfrak{R}(M, \mathcal{F}_{M})$
such that $\Psi_{(1, -1, -1, 1)}(N) \neq 0$,
then $M$ needs to satisfy more and more
conditions, until we are able to prove
that $\rho_{\tt{tensor}}(M) = 0$.
We will start with a lemma that is applicable for all
$M \in \text{Sym}_{4}^{\tt{pd}}(\mathbb{R}_{> 0})$,
and not just diagonally distinct matrices.

\begin{lemma}\label{lemma: notConfluentImpliesM14M23}
    Let $M \in \text{Sym}_{4}^{\tt{pd}}(\mathbb{R}_{> 0})$.
    Let $\mathcal{F}$ be a countable set of 
    $\text{Sym}_{4}(\mathbb{R})$-polynomials
    such that $F(M) \neq 0$ for all
    $F \in \mathcal{F}$.
    Let $\xi$ be the $\text{Sym}_{4}(\mathbb{R})$-polynomial
    such that $\xi: N \mapsto \phi_{(1, -1, -1, 1)}
    (N_{11}, \dots, N_{44})$.
    Then there exists some $N \in \mathfrak{R}(M,
    \mathcal{F} \cup \{\xi\}) \cap
    \text{Sym}_{4}^{\tt{pd}}(\mathbb{R}_{> 0})$,
    unless $M_{11} \cdot M_{44} = M_{22} \cdot M_{33}$,
    and $M_{14} = M_{23}$.
\end{lemma}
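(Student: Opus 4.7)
My plan splits on whether $\xi(M) = 0$.

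If $\xi(M) \ne 0$, that is, $M_{11}M_{44} \ne M_{22}M_{33}$, then $M$ itself lies in $\mathfrak{R}(M, \mathcal{F} \cup \{\xi\}) \cap \text{Sym}_4^{\tt{pd}}(\mathbb{R}_{>0})$: all $F \in \mathcal{F}$ are non-vanishing at $M$ by hypothesis, $\xi(M) \ne 0$ in this case, and $\PlEVAL(M) \le \PlEVAL(M)$ trivially.

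So assume $\xi(M) = 0$, i.e.\ $M_{11}M_{44} = M_{22}M_{33}$, and write $M = HDH^{\tt{T}}$ with $D = \text{diag}(\lambda_1, \dots, \lambda_4)$. Consider the function
$$g(\theta) \ :=\ \xi(\mathcal{S}_M(\theta)) \ =\ (\mathcal{S}_M(\theta))_{11}(\mathcal{S}_M(\theta))_{44} - (\mathcal{S}_M(\theta))_{22}(\mathcal{S}_M(\theta))_{33}.$$
Each entry $(\mathcal{S}_M(\theta))_{ij} = \sum_k H_{ik}H_{jk}\lambda_k^\theta$ is real analytic in $\theta$, so $g$ is a real analytic function on $\mathbb{R}$. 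If $g \not\equiv 0$, I would apply \cref{corollary: stretchingWorksPositive} to the countable family $\mathcal{F} \cup \{\xi\}$: every $F \in \mathcal{F}$ satisfies $F(\mathcal{S}_M(1)) = F(M) \ne 0$, and $\xi(\mathcal{S}_M(\theta_\xi)) \ne 0$ at some $\theta_\xi$. The corollary then produces a reduct $N = \mathcal{S}_M(\theta^*) \in \mathfrak{R}(M, \mathcal{F} \cup \{\xi\}) \cap \text{Sym}_4^{\tt{pd}}(\mathbb{R}_{>0})$, as required.

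The main obstacle is the subcase $g \equiv 0$, where I must conclude $M_{14} = M_{23}$. Expanding,
$$g(\theta) = \sum_{k,l \in [4]} \bigl(H_{1k}^2 H_{4l}^2 - H_{2k}^2 H_{3l}^2\bigr) (\lambda_k\lambda_l)^\theta.$$
Since the exponentials $\mu^\theta$ for distinct positive $\mu$ are linearly independent as analytic functions, $g \equiv 0$ forces, for every distinct value $\mu$ in the multiset $\{\lambda_k\lambda_l\}$,
$$\sum_{(k,l):\,\lambda_k\lambda_l = \mu} \bigl(H_{1k}^2 H_{4l}^2 - H_{2k}^2 H_{3l}^2\bigr) = 0.$$
In particular, the diagonal contributions force $(H_{1k}H_{4k})^2 = (H_{2k}H_{3k})^2$ when the $\lambda_k^2$ are generic, giving $H_{1k}H_{4k} = \pm H_{2k}H_{3k}$. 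The hard part is resolving these sign ambiguities to conclude $M_{14} = \sum_k H_{1k}H_{4k}\lambda_k = \sum_k H_{2k}H_{3k}\lambda_k = M_{23}$. I expect to combine three ingredients: (i) the orthonormality of $H$, which yields $\sum_k H_{1k}H_{4k} = \sum_k H_{2k}H_{3k} = 0$; (ii) the cross-term identities above for $k \ne l$, which constrain the squares $(H_{1k}H_{4l})^2 + (H_{1l}H_{4k})^2$; and (iii) the positivity of every $M_{ij}$, together with the evaluation $g(1) = M_{11}M_{44} - M_{22}M_{33} = 0$. A careful case analysis on how the products $\lambda_k\lambda_l$ coincide should then force the signs in $H_{1k}H_{4k} = \pm H_{2k}H_{3k}$ to align in such a way that the weighted sum $\sum_k (H_{1k}H_{4k} - H_{2k}H_{3k})\lambda_k$ vanishes, yielding $M_{14} = M_{23}$.
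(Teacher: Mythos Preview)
Your handling of the easy cases is fine: if $\xi(M)\ne 0$ take $N=M$, and if $g(\theta)=\xi(\mathcal{S}_M(\theta))$ is not identically zero then \cref{corollary: stretchingWorksPositive} applies exactly as you say.

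The gap is in the $g\equiv 0$ case. The identity $g\equiv 0$ constrains only the quantities $H_{ik}^2H_{jl}^2$, so at best you learn $|H_{1k}H_{4k}|=|H_{2k}H_{3k}|$; but $M_{14}-M_{23}=\sum_k(H_{1k}H_{4k}-H_{2k}H_{3k})\lambda_k$ depends on the signs of $H_{1k}H_{4k}$ and $H_{2k}H_{3k}$, which your constraint cannot see. The three ingredients you list---orthonormality, the cross-term identities, and positivity of the $M_{ij}$---are all symmetric under flipping those signs simultaneously on a given $k$, so no ``careful case analysis'' of them alone will pin the signs down. You have not produced an argument, only a hope, and the information available after stretching does not appear sufficient to realize it.

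The paper takes a different route that avoids the eigenvector decomposition entirely. Instead of stretching $\mathcal{S}_M$, it uses thickening: encode the entries via a generating set so that $\mathcal{T}_M^*(p)_{ij}=p^{z_{ij}}$, where the integers $z_{ij}$ determine $M_{ij}$. Then it tests two polynomials, $\zeta(p)=\xi(\mathcal{T}_M^*(p))$ and $\zeta_2(p)=\xi((\mathcal{T}_M^*(p))^2)$. If either is not identically zero, \cref{lemma: thickeningWorks} (followed by \cref{corollary: stretchingWorksPositive} in the second case) produces the required $N$. If both vanish identically, differentiating at $p=1$ gives $z_{11}+z_{44}=z_{22}+z_{33}$ from $\zeta'(1)=0$, and then $\zeta_2'(1)=0$ reduces to $4(z_{14}-z_{23})=0$. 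Hence $z_{14}=z_{23}$, which forces $M_{14}=M_{23}$ directly. The point is that thickening keeps track of the actual log-entries $z_{ij}$, including off-diagonal ones, whereas stretching only sees the squares $H_{ik}^2$; the extra probe $\xi_2$ applied to the thickened family is what pulls $z_{14}$ and $z_{23}$ into the picture.
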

\begin{proof}
    If $\xi(M) \neq 0$, then we are done already,
    so we may assume that $\xi(M)= 0$.
    This immediately implies that
    $M_{11} \cdot M_{44} = M_{22} \cdot M_{33}$.
    So, our goal now is just to show that if
    no $N \in \mathfrak{R}(M,
    \mathcal{F} \cup \{\xi\}) \cap
    \text{Sym}_{4}^{\tt{pd}}(\mathbb{R}_{> 0})$
    can be found, then $M_{14} = M_{23}$.
    
    Let the entries of $M$ be generated by some
    $\{g_{t}\}_{t \in [d]}$.
    We know from \cref{lemma: MequivalentCM} that
    we can replace $M$ with some $c \cdot M$
    such that the entries of $M$ are generated
    by $\{g_{t}\}_{t \in [d]}$ with $e_{ijt} \geq 0$
    for all $i, j \in [4]$, and $t \in [d]$.
    Since $M \in \text{Sym}_{4}(\mathbb{R}_{> 0})$,
    we also know that $e_{ij0} = 0$ for all
    $i, j \in [4]$.
    We will now let $\mathfrak{m} = \max_{i, j \in [4],
    t \in [d]}(e_{ijt})$, and pick some
    $m > \mathfrak{m}$.
    We then let
    $$z_{ij} = \sum_{t \in [d]}m^{t}e_{ijt}$$
    for all $i, j \in [4]$.
    We will now define
    $\mathcal{T}_{M}^{*}: \mathbb{R} \rightarrow 
    \text{Sym}_{4}(\mathbb{R})$ such that
    $$\mathcal{T}_{M}^{*}(p)_{ij}
    = \mathcal{T}_{M}(p^{m}, \dots, p^{m^{d}})_{ij}
    = p^{z_{ij}}$$
    for all $i, j \in [4]$.
    We will now define the 
    $\text{Sym}_{4}(\mathbb{R})$-polynomial $\xi_{2}$
    such that $\xi_{2}: N \mapsto \xi(N^{2})$.
    
    If $\xi(\mathcal{T}_{M}^{*}(p)) \neq 0$
    for some $p \in \mathbb{R}$, then
    \cref{lemma: thickeningWorks} allows
    us to find some $N \in \mathfrak{R}(
    M, \mathcal{F} \cup \{\xi\}) \cap
    \text{Sym}_{4}^{\tt{pd}}(\mathbb{R}_{> 0})$,
    and we will be done.
    Similarly, if $\xi_{2}(\mathcal{T}_{M}^{*}(p)) \neq 0$
    for some $p \in \mathbb{R}$, we can first use
    \cref{lemma: thickeningWorks} allows
    us to find some $M' \in \mathfrak{R}(M,
    \mathcal{F} \cup \{\xi_{2}\}) \cap
    \text{Sym}_{4}^{\tt{pd}}(\mathbb{R}_{> 0})$.
    Then, since $F(M') \neq 0$ for all $F \in \mathcal{F}$,
    and $\xi((M')^{2}) \neq 0$,
    \cref{corollary: stretchingWorksPositive}
    allows us to once again find the required $N \in \mathfrak{R}(
    M'', \mathcal{F} \cup \{\xi\}) \cap
    \text{Sym}_{4}^{\tt{pd}}(\mathbb{R}_{> 0})$.

    On the other hand, let us assume that
    $\xi(\mathcal{T}_{M}^{*}(p)) = 0$, and
    $\xi_{2}(\mathcal{T}_{M}^{*}(p)) = 0$
    for all $p \in \mathbb{R}$.
    We will now define
    $\zeta, \zeta_{2}: \mathbb{R} \rightarrow \mathbb{R}$
    such that $\zeta(p) = \xi(\mathcal{T}_{M}^{*}(p))$, and
    $\zeta_{2}(p) = \xi_{2}(\mathcal{T}_{M}^{*}(p))$ for all $p \in \mathbb{R}$.
    We note that
    $$\zeta(p) = \mathcal{T}_{M}^{*}(p)_{11} \cdot
    \mathcal{T}_{M}^{*}(p)_{44} - \mathcal{T}_{M}^{*}(p)_{22}
    \cdot \mathcal{T}_{M}^{*}(p)_{33}
    = p^{z_{11} + z_{44} - z_{22} - z_{33}}, \text{ and}$$
    \begin{align*}
        \zeta_{2}(p)
        &= (\mathcal{T}_{M}^{*}(p)^{2})_{11} \cdot
        (\mathcal{T}_{M}^{*}(p)^{2})_{44} - 
        (\mathcal{T}_{M}^{*}(p)^{2})_{22} \cdot
        (\mathcal{T}_{M}^{*}(p)^{2})_{33}\\
        &= \big(\sum_{i \in [4]}p^{2z_{1i}}\big) \cdot
        \big(\sum_{j \in [4]}p^{2z_{4j}}\big) - 
        \big(\sum_{i \in [4]}p^{2z_{2i}}\big) \cdot
        \big(\sum_{j \in [4]}p^{2z_{3j}}\big)\\
        &= \sum_{i, j \in [4]}\left(p^{2z_{1i} + 2z_{4j}} - 
        p^{2z_{2i} + 2z_{3j}}\right).
    \end{align*}
    From our assumptions above, we know that $\zeta(p) = 0$,
    and $\zeta_{2}(p) = 0$ for all $p \in \mathbb{R}$.
    Therefore $\frac{d}{dp}(\zeta(p)) = 0$, and
    $\frac{d}{dp}(\zeta_{2}(p)) = 0$ for all
    $p \in \mathbb{R}$.
    Specifically, when evaluated at $p = 1$, we see that
    $$\frac{d\zeta}{dp}(1) =
    z_{11} + z_{44} - z_{22} - z_{33} = 0.$$
    This implies that $z_{11} + z_{44} = z_{22} + z_{33}$.
    Similarly,
    \begin{align*}
        \frac{d\zeta_{2}}{dp}(1)
        &= \sum_{i, j \in [4]}(2z_{i1} + 2z_{4j}
        - 2z_{2i} - 2z_{3j})\\
        &= (2z_{11} + 2z_{12} + 2z_{13} + 2z_{14})
        + (2z_{14} + 2z_{24} + 2z_{34} + 2z_{44})\\
        & \quad - (2z_{12} + 2z_{22} + 2z_{23} + 2z_{24})
        - (2z_{13} + 2z_{23} + 2z_{33} + 2z_{34})\\
        &= 2z_{11} + 4z_{14} + 2z_{44} - 2z_{22} - 4z_{23} - 2z_{33}
        = 0.
    \end{align*}
    Therefore, $2(z_{11} + z_{44} - z_{22} - z_{33})
    + 4(z_{14} - z_{23}) = 0$.
    Since we already saw that $z_{11} + z_{44} = z_{22} + z_{33}$,
    this implies that $z_{14} = z_{23}$.

    We recall that by our choice of $m > \mathfrak{m}$,
    and construction of $z_{ij}$,
    $z_{14} = z_{23}$ implies that
    $(e_{141}, \dots, e_{14d}) = (e_{231}, \dots, e_{23d})$.
    But this means that $M_{14} = M_{23}$,
    which finishes the proof.
\end{proof}

\begin{lemma}\label{lemma: notConfluentImpliesTensor}
    Let $M \in \text{Sym}_{4}^{\tt{pd}}(\mathbb{R}_{> 0})$
    be a diagonal distinct matrix.
    Let $\mathcal{F}$ be a countable set of
    $\text{Sym}_{4}(\mathbb{R})$-polynomials
    such that $F(M) \neq 0$ for all
    $F \in \mathcal{F}$.
    Let $\xi$ be the $\text{Sym}_{4}(\mathbb{R})$-polynomial
    such that $\xi: N \mapsto \phi_{(1, -1, -1, 1)}
    (N_{11}, \dots, N_{44})$.
    Then there exists some diagonal distinct
    $N \in \mathfrak{R}(M, \mathcal{F}
    \cup \{\xi\}) \cap
    \text{Sym}_{4}^{\tt{pd}}(\mathbb{R}_{> 0})$,
    unless $M = A \otimes B$ for some $A, B \in
    \text{Sym}_{2}(\mathbb{R}_{> 0})$.
\end{lemma}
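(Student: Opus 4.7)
The plan is to build on the preceding Lemma~\ref{lemma: notConfluentImpliesM14M23} by applying it with an augmented function family to preserve diagonal distinctness, and then to extract the remaining structural constraints by running analogous arguments on further gadget-composed test polynomials.

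First I would apply Lemma~\ref{lemma: notConfluentImpliesM14M23} with $\mathcal{F}' = \mathcal{F} \cup \{\rho_{\tt{diag}}\}$ in place of $\mathcal{F}$. Any reduct it produces is automatically diagonal distinct (since $\rho_{\tt{diag}}(N) \neq 0$), which would finish the proof. Otherwise we are forced into $M_{14} = M_{23}$ and $M_{11}M_{44} = M_{22}M_{33}$, which are two of the six identities characterizing $\varrho_{\tt{tensor}} = 0$ in Lemma~\ref{lemma: tensorPolynomial}. The remaining task is therefore to derive the four row identities $M_{i1}M_{i4} = M_{i2}M_{i3}$ for $i \in [4]$; together with the two already established, these imply $M = A \otimes B$, and the positivity $A, B \in \text{Sym}_{2}(\mathbb{R}_{>0})$ then follows from the remark after Lemma~\ref{lemma: tensorPolynomial}.

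To obtain the row identities I would introduce auxiliary $\text{Sym}_{4}(\mathbb{R})$-polynomials of the form $\eta = \xi \circ G$, where $G$ is a gadget transformation---natural candidates are matrix powering $N \mapsto N^{k}$, thickening $N \mapsto T_{k}N$, and the edge gadget $N \mapsto R_{n}N$. The last is especially promising since $(R_{n}N)_{ii} = \sum_{a,b} N_{ia} N_{ib} N_{ab}^{n}$ already has coefficients $N_{ia}N_{ib}$ that couple directly to the row-$i$ products appearing in the tensor identities. For each such $\eta$, the same reasoning as in the proof of Lemma~\ref{lemma: notConfluentImpliesM14M23} forces $\eta(\mathcal{T}_{M}^{*}(p))$ to vanish identically in $p$ whenever no reduct exists, where $\mathcal{T}_{M}^{*}(p) = \mathcal{T}_{M}(p^{m}, p^{m^{2}}, \dots, p^{m^{d}})$ for $m$ larger than every exponent $e_{ijt}$. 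Successive Taylor coefficients at $p = 1$ then convert these functional identities into linear relations among the integers $z_{ij} = \sum_{t} m^{t} e_{ijt}$, and because $m$ is large each such relation lifts to a componentwise relation on $e_{ijt}$ and hence to a multiplicative identity on the entries of $M$. Diagonal distinctness of $M$ is essential here: it removes degenerate monomial coincidences so that the derivative identities lift cleanly to identities on the $M_{ij}$ rather than to averaged row sums.

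The main obstacle I foresee is the combinatorial bookkeeping: identifying a minimal collection of gadget-composed tests whose vanishing carves out exactly the four row identities in the lattice of relations on $\{z_{ij}\}$, without introducing redundant or spurious constraints. Unlike the two baseline identities---which arise naturally from first-order Taylor coefficients of $\xi$ and $\xi_{2}$ at $p = 1$---the row identities involve products $M_{ia}M_{ib}$ that appear only mixed with other terms in the gadget expansions, so extracting them cleanly will likely require combining Taylor coefficients of several tests at once and a careful linear-independence argument. Once the full set of six identities of $\varrho_{\tt{tensor}}$ is in hand, $M = A \otimes B$ follows directly from Lemma~\ref{lemma: tensorPolynomial}.
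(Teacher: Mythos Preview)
Your overall direction is sound, but the extraction mechanism you propose—Taylor coefficients at $p=1$ of $\xi \circ G$ applied to $\mathcal{T}_{M}^{*}(p)$—does not cleanly yield the four row identities $M_{i1}M_{i4}=M_{i2}M_{i3}$. Concretely, for $\eta = \xi \circ R_n$, the first derivative at $p=1$ gives only $\sum_a (z_{1a}+z_{4a}-z_{2a}-z_{3a})=0$, which is already a consequence of $z_{14}=z_{23}$ and $z_{11}+z_{44}=z_{22}+z_{33}$; higher derivatives entangle the row products with many cross terms and there is no evident reason the system isolates each $z_{i1}+z_{i4}=z_{i2}+z_{i3}$ individually. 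Diagonal distinctness of $M$ translates only to the $z_{ii}$ being pairwise distinct, which does not by itself separate the monomials appearing in $\xi(R_n \mathcal{T}_M^*(p))$. So the ``combinatorial bookkeeping'' you flag is not just bookkeeping: it is the whole difficulty, and your proposal does not indicate how to resolve it.

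The paper avoids this by two shifts you did not make. First, rather than trying to pin the identities on $M$ itself, it assumes $M\ne A\otimes B$, so $\varrho_{\tt tensor}(M)\ne 0$, and carries $\varrho_{\tt tensor}$ in the preserved family. Second, it stretches to an $M'=\mathcal{S}_M(\theta^*)$ with $\theta^*$ small, so that $M'$ is close to $I$: every diagonal entry $M'_{ii}$ is then a value attained \emph{uniquely} at $(i,i)$ among all entries of $M'$. Now one looks not at $\xi\circ R_n$ but at the linear quantity $(R_nM')_{14}-(R_nM')_{23}=\sum_{x\in X}x^n\big(c_{14}(x)-c_{23}(x)\big)$. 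If this vanishes for all $n$, Vandermonde forces $c_{14}(M'_{ii})=c_{23}(M'_{ii})$ for each $i$, and the uniqueness of $M'_{ii}$ in $X$ makes this exactly $M'_{1i}M'_{4i}=M'_{2i}M'_{3i}$. Together with $M'_{14}=M'_{23}$ and $M'_{11}M'_{44}=M'_{22}M'_{33}$ (which one may assume, else done via Lemma~\ref{lemma: notConfluentImpliesM14M23}), this gives $\varrho_{\tt tensor}(M')=0$, contradicting the carried constraint. Hence some $n^*$ has $(R_{n^*}M')_{14}\ne(R_{n^*}M')_{23}$, and a stretching/$\mathcal{R}$-interpolation step produces the desired diagonal-distinct reduct $N$ with $\xi(N)\ne 0$. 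The two key ideas you are missing are thus: work on a near-identity reduct carrying $\varrho_{\tt tensor}$ rather than on $\mathcal{T}_M^*$, and test the off-diagonal difference $(R_n\,\cdot\,)_{14}-(R_n\,\cdot\,)_{23}$ rather than the composite $\xi\circ R_n$.
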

\begin{proof}
    We will assume that $M \neq A \otimes B$
    for any $A, B \in \text{Sym}_{2}(\mathbb{R}_{> 0})$.
    From \cref{lemma: tensorPolynomial}, we know that
    $\varrho_{\tt{tensor}}(M) \neq 0$.
    So we can add $\rho_{\tt{diag}}$ and $\varrho_{\tt{tensor}}$ to $\mathcal{F}$ assumed for $M$.
    Our goal will be to show that we can construct
    some diagonal distinct
    $N' \in \mathfrak{R}(M, \mathcal{F}) \cap
    \text{Sym}_{4}^{\tt{pd}}(\mathbb{R}_{> 0})$
    such that $N'_{11} \cdot N'_{44} \neq N'_{22} \cdot N'_{33}$
    or $N'_{14} \neq N'_{23}$.
    Then, \cref{lemma: notConfluentImpliesM14M23}
    would allow us to find the required
    $N \in \mathfrak{R}(N', \mathcal{F}
    \cup \{\rho_{\tt{diag}}, \xi\}) \cap
    \text{Sym}_{4}^{\tt{pd}}(\mathbb{R}_{> 0})$.

    We recall that since $\mathcal{S}_{M}(0) = I$,
    there exists some $\delta > 0$ such that
    $|\mathcal{S}_{M}(\theta)_{ij} - I_{ij}| < \frac{1}{3}$
    for all $0 < \theta < \delta$ for all $i, j \in [4]$.
    We will now let
    $$\mathcal{F}' = \{F': N \mapsto F(N^{3})
    \hspace{0.08cm}\Big|\hspace{0.1cm}
    F \in \mathcal{F}\}.$$
    We note that $F'(\mathcal{S}_{M}(\frac{1}{3}))
    \neq 0$ for all $F' \in \mathcal{F}'$, and
    $\rho_{\tt{diag}}(\mathcal{S}_{M}(1))
    = \rho_{\tt{diag}}(M) \neq 0$,
    and $\varrho_{\tt{tensor}}(\mathcal{S}_{M}(1)) =
    \varrho_{\tt{tensor}}(M) \neq 0$.
    We can therefore use \cref{lemma: stretchingWorks}
    to find $M' = \mathcal{S}_{M}(\theta^{*}) \in
    \mathfrak{R}(M, \mathcal{F}' \cup 
    \{\rho_{\tt{diag}}, \varrho_{\tt{tensor}}\}) \cap
    \text{Sym}_{4}^{\tt{pd}}(\mathbb{R}_{\neq 0})$
    for some $0 < \theta^{*} < \delta$.

    If $(M')_{14} \neq (M')_{23}$ or 
    $M'_{11}M'_{44} \neq M'_{22}M'_{33}$, we will
    be done. So, we may assume otherwise.
    We will first assume that
    $(R_{n}(M'))_{14} - (R_{n}(M'))_{23} = 0$
    for all $n \geq 1$.
    We note that
    $$R_{n}(M')_{14} - R_{n}(M')_{23}
    = \sum_{a, b \in [4]}(M'_{ab})^{n} \cdot
    (M'_{1a}M'_{4b} - M'_{2a}M'_{3b}).$$
    We may now let $X = \{(M')_{ab}: a, b \in [4]\}$,
    and for each $x \in X$, define
    $$c_{14}(x) = \sum_{a, b \in [4]: (M')_{ab} = x}
    (M')_{1a}(M')_{4b}, ~~\text{ and }~~
    c_{23}(x) = \sum_{a, b \in [4]: (M')_{ab} = x}
    (M')_{2a}(M')_{3b}.$$
    So, we see that
    $$R_{n}(M')_{14} - R_{n}(M')_{23}
    = \sum_{x \in X}x^{n} \cdot
    (c_{14}(x) - c_{23}(x)).$$
    Since each $x \in X$ is distinct, and $|X| \leq O(1)$,
    this forms a full rank Vandermonde system of
    linear equations.
    Since we have assumed that 
    $(R_{n}(M'))_{14} - (R_{n}(M'))_{23} = 0$
    for all $n \geq 1$, this implies that
    $c_{14}(x) - c_{23}(x) = 0$ for all $x \in X$.

    Now, by our choice of $M'$, we know that all the
    diagonal terms $M'$ of are distinct. Also any diagonal element of  $M' = \mathcal{S}_{M}(\theta^{*})$
    is greater than the absolute value of any off diagonal element, by our choice of
     $0 < \theta^{*} < \delta$.
    Therefore, $(M')_{ii} = (M')_{ab}$ implies
    that $(a, b) = (i, i)$ for all $i \in [4]$.
    This implies that
    $$c_{14}((M')_{ii}) - c_{23}((M')_{ii}) = 
    (M')_{1i}(M')_{4i} - (M')_{2i}(M')_{3i} = 0$$
    for all $i \in [4]$.
    We also already know that
    $(M')_{14} = (M')_{23}$, and
    $M'_{11}M'_{44} = M'_{22}M'_{33}$.
    But this implies that $\varrho_{\tt{tensor}}(M') = 0$,
    which is a contradiction to  $M' \in
    \mathfrak{R}(M, \mathcal{F}' \cup 
    \{\rho_{\tt{diag}}, \varrho_{\tt{tensor}}\})$.

    This means that there exists some $n^{*} \geq 1$,
    such that $(R_{n^{*}}(M'))_{14} -
    R_{n^{*}}(M'))_{23} \neq 0$.
    This implies that
    we can define the $\text{Sym}_{4}(\mathbb{R})$-polynomial
    $\xi_{n^{*}}: N \mapsto
    (R_{n^{*}}(N))_{14} - (R_{n^{*}}(N))_{23}$,
    and we see that 
    $\xi_{n^{*}}(\mathcal{S}_{M}(\theta^{*}))
    = \xi_{n^{*}}(M') \neq 0$.
    So, using $\mathcal{S}_{M}(\frac{1}{3})$, \cref{corollary: stretchingWorksPositive}
    allows us to first find some
    $M'' \in \mathfrak{R}(M, \mathcal{F}'
    \cup \{\rho_{\tt{diag}}^{*},  \xi_{n^{*}}\}) 
    \cap \text{Sym}_{4}^{\tt{pd}}(\mathbb{R}_{> 0})$,
    where $\rho_{\tt{diag}}^{*}(N) = \rho_{\tt{diag}}(N^{3})$.
    By construction, we see that
    $F(\mathcal{R}_{M''}(1)) \neq 0$ for all
    $F \in \mathcal{F}$, and
    $\rho_{\tt{diag}}(\mathcal{R}_{M''}(1)) \neq 0$, as $\mathcal{R}_{M''}(1) = (M'')^3$.
    Moreover, we see that
    $(\mathcal{R}_{M''}(n^{*}))_{14} - 
    (\mathcal{R}_{M''}(n^{*}))_{23} \neq 0$.
    Therefore, \cref{lemma: RInterpolationWorks}
    allows us to find some
    $N' \in \mathfrak{R}(M'', \mathcal{F} \cup
    \{\rho_{\tt{diag}}\}) \cap 
    \text{Sym}_{4}^{\tt{pd}}(\mathbb{R}_{> 0})$,
    such that $N'_{14} \neq N'_{23}$.
    \cref{lemma: notConfluentImpliesM14M23}
    then allows us to find the required
    $N \in \mathfrak{R}(N', \mathcal{F}
    \cup \{\rho_{\tt{diag}}, \xi\}) \cap
    \text{Sym}_{4}^{\tt{pd}}(\mathbb{R}_{> 0})$.
\end{proof}

We are now ready to state a hardness criterion
that applies to all $\PlEVAL(M)$ where
$M$ is diagonal distinct.

\begin{theorem}\label{theorem: diagDistHardness}
    Let $M \in \text{Sym}_{4}^{\tt{pd}}(\mathbb{R}_{> 0})$
    be a diagonal distinct matrix.
    Then $\PlEVAL(M)$ is $\#$P-hard unless
    $M$ is isomorphic to $A \otimes B$ for some
    $A, B \in \text{Sym}_{2}(\mathbb{R}_{> 0})$.
\end{theorem}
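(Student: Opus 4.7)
Assume first that $M$ is not isomorphic to $A \otimes B$ for any $A,B \in \text{Sym}_{2}(\mathbb{R}_{>0})$; otherwise the theorem is vacuous. The plan is to build a diagonal distinct reduct $N \in \mathfrak{R}(M,\mathcal{F}_M) \cap \text{Sym}_{4}^{\tt{pd}}(\mathbb{R}_{>0})$ satisfying $\Psi_{(1,-1,-1,1)}(N) \neq 0$, and then invoke \cref{corollary: diagDistHardness} to obtain $\#$P-hardness of $\PlEVAL(N)$, which by $\PlEVAL(N) \leq \PlEVAL(M)$ yields $\#$P-hardness of $\PlEVAL(M)$. The reason the hypothesis of \cref{corollary: diagDistHardness} will be met is \cref{lemma: almostAllConfluent} at $q=4$: by $S_4$-invariance of $\overline{\mathcal{L}(\mu_1,\dots,\mu_4)}$, any non-confluent nonzero vector in this closure would force $(1,-1,-1,1) \in \overline{\mathcal{L}(\mu_1,\dots,\mu_4)}$, contradicting $\Psi_{(1,-1,-1,1)}(N) \neq 0$ via \cref{lemma: PsiExists}.

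To produce $N$, I will apply \cref{theorem: positiveDefiniteDiagonal} with $\mathbf{x} = (1,-1,-1,1)$ to a prepared matrix $M^*$. Its hypothesis on $M^*$ is $\Phi_{(1,-1,-1,1)}(M^*_{11},\dots,M^*_{44}) \neq 0$; unpacking the definition of $\Phi$ and counting the contributions of the $24$ permutations across the three bipartitions of $\{1,2,3,4\}$ gives
\[
\Phi_{(1,-1,-1,1)}(a_1,\dots,a_4) \;=\; \xi_1(a)^{8}\,\xi_2(a)^{8}\,\xi_3(a)^{8},
\]
where $\xi_1(a) = a_1 a_4 - a_2 a_3$, $\xi_2(a) = a_1 a_3 - a_2 a_4$, and $\xi_3(a) = a_1 a_2 - a_3 a_4$ are the three \emph{pairing polynomials}. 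So the task reduces to constructing a diagonal distinct $M^* \in \mathfrak{R}(M,\mathcal{F}_M) \cap \text{Sym}_{4}^{\tt{pd}}(\mathbb{R}_{>0})$ on which all three $\xi_p$ are nonzero, while maintaining $\rho_{\tt{tensor}}(M^*) \neq 0$.

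The core of the construction, and the main technical obstacle, is an iterated application of \cref{lemma: notConfluentImpliesTensor} and its two pairing-indexed analogs. \cref{lemma: notConfluentImpliesTensor} as stated is tuned to the pairing $\{\{1,4\},\{2,3\}\}$ and produces a reduct with $\xi_1 \neq 0$ assuming the input is not literally $A \otimes B$. For $p \in \{2,3\}$ the same proof template goes through after re-indexing the key calculation in \cref{lemma: notConfluentImpliesM14M23}: the derivatives at $p = 1$ of $\xi_p(\mathcal{T}_M^*(p))$ and $\xi_p((\mathcal{T}_M^*(p))^{2})$ deliver the analogous pair of identities $M_{ii}M_{jj} = M_{kk}M_{ll}$ and $M_{ij} = M_{kl}$ for the pairing $\{\{i,j\},\{k,l\}\}$ in question, and the edge-gadget $\mathcal{R}_M$ argument then closes out the analog of \cref{lemma: notConfluentImpliesTensor}. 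Since $\rho_{\tt{tensor}}(M) = \prod_{\sigma \in S_4} \varrho_{\tt{tensor}}(M^\sigma) \neq 0$ under our standing assumption, every $\varrho_{\tt{tensor}}(M^\sigma)$ is nonzero, so each pairing-analog is applicable to $M$ and to any subsequent reduct that keeps $\rho_{\tt{tensor}}$ nonzero. Apply the three analogs successively for $p = 1, 2, 3$, at each step adjoining $\rho_{\tt{diag}},\rho_{\tt{tensor}}$ and the already-achieved $\xi_{p'}$'s to the invariant family used in the next call; this produces the required $M^*$.

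Finally, \cref{theorem: positiveDefiniteDiagonal} applied to $M^*$ with $\mathbf{x} = (1,-1,-1,1)$ and $\mathcal{F} = \mathcal{F}_{M^*} \cup \{\rho_{\tt{diag}}\}$ yields $N \in \mathfrak{R}(M^*,\mathcal{F}_{M^*} \cup \{\rho_{\tt{diag}},\Psi_{(1,-1,-1,1)}\}) \cap \text{Sym}_{4}^{\tt{pd}}(\mathbb{R}_{>0})$: diagonal distinct because $\rho_{\tt{diag}}(N) \neq 0$, and satisfying $\Psi_{(1,-1,-1,1)}(N) \neq 0$ by construction. The reduction chain $\PlEVAL(N) \leq \PlEVAL(M^*) \leq \PlEVAL(M)$ together with the argument in the first paragraph completes the proof.
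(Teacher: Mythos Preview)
Your plan is correct and follows essentially the same route as the paper's proof. The paper also assumes $\rho_{\tt{tensor}}(M)\neq 0$, iteratively applies \cref{lemma: notConfluentImpliesTensor} (to the permuted matrices $M^{\sigma}$, which is exactly your ``re-indexed analog'') to make $\Phi_{(1,-1,-1,1)}$ nonzero on the diagonals, then invokes \cref{theorem: positiveDefiniteDiagonal}, \cref{lemma: almostAllConfluent}, and \cref{corollary: diagDistHardness}. Your only deviation is the observation that $\Phi_{(1,-1,-1,1)}(a)=\xi_1(a)^8\xi_2(a)^8\xi_3(a)^8$, so three applications (one per pairing) suffice, whereas the paper runs the loop over all $24$ permutations in $S_4$; this is a harmless streamlining, since the $24$ factors $\xi_\sigma$ collapse (up to sign) into the three $\xi_p$.
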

\begin{proof}
    Let us assume that $M$ is not isomorphic to
    any $A \otimes B$ for any $A, B \in 
    \text{Sym}_{2}(\mathbb{R})$.
    (Note that for $M \in \text{Sym}_{4}(\mathbb{R}_{> 0})$,
    $M$ is isomorphic to $A \otimes B$ for some
    $A, B \in \text{Sym}_{2}(\mathbb{R})$ is equivalent to  being isomorphic to $A \otimes B$ for some
    $A, B \in \text{Sym}_{2}(\mathbb{R}_{> 0})$.)
    So, \cref{lemma: tensorPolynomial} tells us
    that $\rho_{\tt{tensor}}(M) \neq 0$.
    We will now construct
    $$\mathcal{F} = \mathcal{F}_{M} \cup \{
    \rho_{\tt{diag}}, \rho_{\tt{tensor}}\},$$
    where $\mathcal{F}_{M}$ is as defined
    in \cref{equation: FM}.
    We can see that $F(M) \neq 0$ for all
    $F \in \mathcal{F}$.
    We recall from \cref{equation: phiX} that
    $$\Phi_{(1, -1, -1, 1)}(N_{11}, \dots, N_{44})
    = \prod_{\sigma \in S_{4}}\phi_{(1, -1, -1, 1)}
    (N_{\sigma(1)\sigma(1)}, \dots, N_{\sigma(4)\sigma(4)}).$$
    So, we will define the
    $\text{Sym}_{4}(\mathbb{R})$-polynomials
    $\xi_{\sigma}$ for $\sigma \in S_{4}$ as
    $$\xi_{\sigma}(N) = \phi_{(1, -1, -1, 1)}
    (N_{\sigma(1)\sigma(1)}, \dots, N_{\sigma(4)\sigma(4)}),$$
    and the $\text{Sym}_{4}(\mathbb{R})$-polynomial $\xi$
    such that
    $\xi: N \mapsto \prod_{\sigma \in S_{4}}\xi_{\sigma}(N)$.

    For each $\sigma \in S_{4}$, 
    we may define $M^{\sigma} \in
    \text{Sym}_{4}^{\tt{pd}}(\mathbb{R}_{> 0})$
    as in \cref{lemma: tensorPolynomial} to be
    such that $(M^{\sigma})_{ij} = M_{\sigma(i)\sigma(j)}$.
    We will also let
    $S_{4} = \{\sigma_{1}, \dots, \sigma_{24}\}$.
    Now, \cref{lemma: notConfluentImpliesTensor} implies
    that since $\varrho_{\tt{tensor}}(M^{\sigma_{1}}) \neq 0$,
    we can find some
    $M_{1} \in \mathfrak{R}(M, \mathcal{F} \cup
    \{\xi_{\sigma_{1}} \}) \cap
    \text{Sym}_{4}^{\tt{pd}}(\mathbb{R}_{> 0})$.
    Since $F(M_{1}) \neq 0$ for all $F \in \mathcal{F}$,
    starting with $M_{1}$ instead of $M$, we can repeat this
    to find $M_{2} \in \mathfrak{R}(M_{1}, \mathcal{F} \cup
    \{\xi_{\sigma_{1}}, \xi_{\sigma_{2}} \}) \cap
    \text{Sym}_{4}^{\tt{pd}}(\mathbb{R}_{> 0})$.

    After repeating this process for all
    $\sigma \in S_{4}$, we end up with some
    $$M' = M_{24} \in \mathfrak{R}(M, \mathcal{F} \cup
    \{\xi\}) \cap \text{Sym}_{4}^{\tt{pd}}(\mathbb{R}_{> 0}).$$
    
    Since $\Phi_{(1, -1, -1, 1)}(M'_{11}, \dots, M'_{44})
    \neq 0$, \cref{theorem: positiveDefiniteDiagonal}
    allows us to find some
    $N \in \mathfrak{R}(M', \mathcal{F}_{M}
    \cup \{\rho_{\tt{diag}}, \Psi_{(1, -1, -1, 1)}\})
    \cap \text{Sym}_{4}^{\tt{pd}}(\mathbb{R}_{> 0})$.

    Let $(\mu_{1}, \dots, \mu_{4})$ be
    the eigenvalues of $N$.
    Let $\mathcal{L}_{N}$ be the lattice 
    formed by these eigenvalues.
    By construction, we see that $(1, -1, -1, 1)
    \notin \overline{\mathcal{L}_{N}}$.
    So, from \cref{lemma: almostAllConfluent},
    we see that all $\mathbf{0} \neq \mathbf{x} \in
    \overline{\mathcal{L}_{N}}$ are confluent.
    Since $N$ is diagonal distinct,
    \cref{corollary: diagDistHardness} then
    proves that $\PlEVAL(N)$ is $\#$P-hard.
    Since $\PlEVAL(N) \leq \PlEVAL(M)$,
    we see that $\PlEVAL(M)$ is also
    $\#$P-hard.
\end{proof}

\section[Hardness for non-Diagonal Distinct 4 x 4 matrices]
{Hardness for non-Diagonal Distinct 
$4 \times 4$ matrices}\label{sec: 4x4nonDiagDist}

In this section, we will deal with the matrices $M \in
\text{Sym}_{4}^{\tt{pd}}(\mathbb{R}_{> 0})$
that are not diagonal distinct.
We will first show that in this case, there
are only finitely many forms that the matrix $M$
must take.
Then, we will show that for each such form,
$\PlEVAL(M)$ is $\#$P-hard, unless $M$ 
is isomorphic to $A \otimes B$ for some
$A, B \in  \text{Sym}_{2}(\mathbb{R}_{\neq 0})$.

\begin{lemma}\label{lemma: order211Forms}
    Let $M \in \text{Sym}_{4}^{\tt{pd}}(\mathbb{R}_{> 0})$
    such that $M_{11} = M_{22}$, $M_{22} \neq M_{33} \neq M_{44}$
    are pairwise distinct, and
    $\rho_{\tt{tensor}}(M) \neq 0$.
    Unless $M$ is of one of the two forms below (Forms (I) or (II)),
    $\PlEVAL(M)$ is $\#$P-hard.
    \begin{figure*}[ht]
    \centering
    \begin{subfigure}{0.45\textwidth}
        \centering
        \[
        \begin{pmatrix}
        M_{11} & M_{12} & M_{13} & M_{14}\\
        M_{12} & M_{11} & M_{13} & M_{14}\\
        M_{13} & M_{13} & M_{33} & M_{34}\\
        M_{14} & M_{14} & M_{34} & M_{44}\\
        \end{pmatrix} = \begin{pmatrix}
            a & x & y & z\\
            x & a & y & z\\
            y & y & b & t\\
            z & z & t & c
        \end{pmatrix}
        \]
        \caption*{Form (I)}
    \end{subfigure}
    \hfill
    \begin{subfigure}{0.45\textwidth}
        \centering
        \[
        \begin{pmatrix}
        M_{11} & M_{12} & M_{13} & M_{14}\\
        M_{12} & M_{11} & M_{14} & M_{13}\\
        M_{13} & M_{14} & M_{33} & M_{34}\\
        M_{14} & M_{13} & M_{34} & M_{44}\\
        \end{pmatrix} = \begin{pmatrix}
            a & x & y & z\\
            x & a & z & y\\
            y & z & b & t\\
            z & y & t & c
        \end{pmatrix}
        \]
        \caption*{Form (II)}
    \end{subfigure}
\end{figure*}
\end{lemma}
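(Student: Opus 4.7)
The plan is to reduce the lemma to Theorem \ref{theorem: diagDistHardness} via the dichotomy \emph{p.o.\,distinct vs.\,not p.o.\,distinct}. I will show that, unless $M$ is in Form (I) or Form (II), $M$ must in fact be p.o.\,distinct; the p.o.\,distinct case then yields hardness by combining Corollary \ref{corollary: ordDistDiagDistReduct} with Theorem \ref{theorem: diagDistHardness}.

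Set $\mathcal{F} = \mathcal{F}_M \cup \{\rho_{\tt{tensor}}\}$; by hypothesis $F(M) \neq 0$ for every $F \in \mathcal{F}$. If $M$ is p.o.\,distinct, Corollary \ref{corollary: ordDistDiagDistReduct} applied to $(M, \mathcal{F})$ produces a diagonal distinct $N \in \mathfrak{R}(M, \mathcal{F}) \cap \text{Sym}_4^{\tt{pd}}(\mathbb{R}_{>0})$. Since $\rho_{\tt{tensor}}(N) \neq 0$, $N$ is not isomorphic to any $A \otimes B$, and Theorem \ref{theorem: diagDistHardness} applied to $N$ gives that $\PlEVAL(N)$, and hence $\PlEVAL(M)$, is $\#$P-hard.

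It thus remains to prove the contrapositive: if some pair of rows $i \neq j$ of $M$ is order identical via a permutation $\sigma \in S_4$ (so $M_{ik} = M_{j\sigma(k)}$ for all $k \in [4]$), then $M$ lies in Form (I) or Form (II). The argument is a case analysis on $(i, j)$ driven by the diagonal multiset $\{M_{11}, M_{11}, M_{33}, M_{44}\}$, whose three distinct values have multiplicities $2, 1, 1$. The pair $(3, 4)$ is excluded immediately, since multiset matching pairs the diagonal singletons $M_{33}$ and $M_{44}$, forcing $M_{33} = M_{44}$. For the pair $(1, 2)$, the uniqueness of the doubly-occurring value $M_{11} = M_{22}$ together with that of the singletons $M_{33}, M_{44}$ forces $\sigma(\{1, 2\}) = \{1, 2\}$ and $\sigma(\{3, 4\}) = \{3, 4\}$. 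The four induced choices in $S_2 \times S_2$ give precisely Form (I) (for $\sigma = (1\,2)$ with identity on $\{3, 4\}$, producing $M_{13} = M_{23}$, $M_{14} = M_{24}$) or Form (II) (for $\sigma = (1\,2)(3\,4)$, producing $M_{13} = M_{24}$, $M_{14} = M_{23}$), with the remaining two options (identity on $[4]$, and $(3\,4)$ alone) giving degenerate instances with $M_{11} = M_{12}$ that fit into Form (I) or Form (II) respectively. For the four cross pairs $(1, 3), (1, 4), (2, 3), (2, 4)$, multiset matching forces the unique diagonal $M_{jj} \in \{M_{33}, M_{44}\}$ of row $j$ to appear off-diagonally in row $i$, and the doubly-occurring value $M_{11} = M_{22}$ to appear off-diagonally in row $j$; propagating these equalities through the symmetry $M_{ij} = M_{ji}$ and enumerating the $\sigma \in S_4$ compatible with the multiset constraints shows that in every sub-case $M$ is isomorphic, via some $\tau \in S_4$, to a tensor product $A \otimes B$ with $A, B \in \text{Sym}_2(\mathbb{R}_{>0})$, contradicting $\rho_{\tt{tensor}}(M) \neq 0$.

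The main obstacle is the last step, the exhaustive enumeration for the cross pairs: for each such $(i, j)$ and each $\sigma \in S_4$ compatible with the multiset-matching constraint, one must track the full chain of entry equalities forced by $M_{ik} = M_{j\sigma(k)}$ together with symmetry, then verify that $\varrho_{\tt{tensor}}(M^\tau) = 0$ for an appropriate relabeling $\tau \in S_4$. This is a finite but combinatorially delicate case analysis; all other parts of the argument follow directly from multiset matching and the diagonal hypothesis $M_{11} = M_{22} \neq M_{33} \neq M_{44}$ pairwise distinct.
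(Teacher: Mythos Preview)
Your route through full p.o.\,distinctness has a genuine gap. You invoke Corollary \ref{corollary: ordDistDiagDistReduct}, which requires \emph{every} pair of rows to be order non-identical, and then try to show that the failure of this for \emph{any} pair forces Form (I) or (II) (or a tensor contradiction). That stronger contrapositive is false. Concretely, your claim that the pair $(3,4)$ is ``excluded immediately'' because multiset matching forces $M_{33}=M_{44}$ does not hold: nothing prevents $M_{33}$ from appearing as an off-diagonal entry of row $4$. For example,
\[
M=\begin{pmatrix}100&0.2&2&3\\0.2&100&0.1&0.1\\2&0.1&3&0.1\\3&0.1&0.1&2\end{pmatrix}
\]
is positive definite, has $M_{11}=M_{22}$ with $M_{22},M_{33},M_{44}$ pairwise distinct, satisfies $\rho_{\tt tensor}(M)\neq 0$ (none of the three pairings $M_{12}\!\stackrel{?}{=}\!M_{34}$, $M_{13}\!\stackrel{?}{=}\!M_{24}$, $M_{14}\!\stackrel{?}{=}\!M_{23}$ holds), has rows $3$ and $4$ order identical (both have multiset $\{0.1,0.1,2,3\}$), yet is in neither Form (I) nor Form (II) since rows $1$ and $2$ have different multisets. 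So $M$ is not p.o.\,distinct, not of either form, and not a tensor product --- your dichotomy collapses on this example. The cross-pair claim is likewise unjustified; you describe it as a ``combinatorially delicate case analysis'' but do not carry it out, and there is no reason to expect it to always produce a tensor structure.

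The paper sidesteps all of this by using the single-pair Lemma \ref{lemma: OrderedDistImpliesDiagDist} directly on $(i,j)=(1,2)$ rather than the full-strength Corollary \ref{corollary: ordDistDiagDistReduct}. The point is that the distinctness of $M_{33},M_{44}$ from each other and from $M_{11},M_{22}$ can be encoded as a polynomial $\zeta$ and placed in $\mathcal{F}$, so it survives the reduction; the only missing diagonal inequality is $N_{11}\neq N_{22}$, and that is exactly what Lemma \ref{lemma: OrderedDistImpliesDiagDist} provides when rows $1$ and $2$ are not order identical. Thus only the pair $(1,2)$ needs to be analyzed in the order-identical case, and that analysis (which you do correctly) yields Forms (I) and (II). Replace your appeal to Corollary \ref{corollary: ordDistDiagDistReduct} with a direct appeal to Lemma \ref{lemma: OrderedDistImpliesDiagDist} on the pair $(1,2)$, add $\zeta(N)=(N_{11}-N_{33})(N_{11}-N_{44})(N_{22}-N_{33})(N_{22}-N_{44})(N_{33}-N_{44})$ to $\mathcal{F}$, and the cross-pair and $(3,4)$ cases simply disappear.
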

\begin{proof}
    We apply  \cref{lemma: OrderedDistImpliesDiagDist} to $M$.
    Suppose rows $1$ and $2$ are not
    order identical.
    The pairwise distinctness of  $\{M_{22}, M_{33}, M_{44}\}$
    and of $\{M_{11}, M_{33}, M_{44}\}$ are conditions expressible as nonvanishing 
    of $\text{Sym}_{q}(\mathbb{R})$-polynomials
    and thus maintained by  \cref{lemma: OrderedDistImpliesDiagDist},
    and hence its conclusion $N_{11} \ne N_{22}$ in addition to the above  
    pairwise distinctness conditions
    gives diagonal distinctness. 
    Therefore, from \cref{lemma: OrderedDistImpliesDiagDist},
    unless rows $1$ and $2$ are
    order identical, we get
    some diagonal distinct
    $N \in \mathfrak{R}(M, \mathcal{F}_{M}
    \cup \{\rho_{\tt{tensor}} \}) \cap
    \text{Sym}_{4}^{\tt{pd}}(\mathbb{R}_{> 0})$.
    \cref{theorem: diagDistHardness} then
    proves that $\PlEVAL(N)$, and consequently,
    $\PlEVAL(M)$ are $\#$P-hard.

    So, we may assume  rows $1$ and $2$ are 
    order identical. There  exists some $\sigma \in S_{4}$
    such that $M_{1i} = M_{2\sigma(i)}$ for all
    $i \in [4]$. We are given 
    $M_{11} = M_{22}$. We also have   $M_{12} = M_{21}$ by $M$ being symmetric. 
    From the identical multisets $\{M_{11}, M_{12}, M_{13}, M_{14}\}$ and 
        $\{M_{21}, M_{22},  M_{23}, M_{24}\}$ if we remove the element pairs 
        $M_{11} = M_{22}$ and  $M_{12} = M_{21}$, we still have an equal multiset
        $\{M_{13}, M_{14}\}=\{M_{23}, M_{24}\}$. 
    So, we may assume that
    ($\sigma(1) = 2$, $\sigma(2) = 1$, and)
    $\sigma(3) = 3$ or
    $\sigma(3) = 4$.
    Now, if $\sigma(3) = 3$, that means that
    $\sigma(4) = 4$, which forces
    $M_{13} = M_{23}$, and $M_{14} = M_{24}$.
    In that case, we see that $M$ is of 
    Form (I) above.
    On the other hand, if $\sigma(3) = 4$, that
    means $\sigma(4) = 3$, which forces $M$
    to be of Form (II) above.
\end{proof}

\begin{remark*}
    When we say that a matrix
    is of Form (I), we do not require that
    $M_{13} \neq M_{14}$ for example.
    In general, for all the Forms that we will describe,
    we allow the possibility that some of the values denoted by distinct symbols
    may be equal to each other.
    All we require is that  
    entries denoted by the same symbol are equal.
    So, it may be possible that a matrix is of more than one Form.
\end{remark*}

\begin{lemma}\label{lemma: order22Forms}
    Let $M \in \text{Sym}_{4}^{\tt{pd}}(\mathbb{R}_{> 0})$
    such that $M_{11} = M_{22} \neq M_{33} = M_{44}$, and
    $\rho_{\tt{tensor}}(M) \neq 0$.
    Unless $M$ is isomorphic to a matrix of the Form (I) or Form (II), or
    is of the form below (Form (III)),
    $\PlEVAL(M)$ is $\#$P-hard.
    \begin{figure*}[ht]
    \centering
    \begin{subfigure}{0.45\textwidth}
        \centering
        \[
        \begin{pmatrix}
        M_{11} & M_{12} & M_{13} & M_{14}\\
        M_{12} & M_{11} & M_{14} & M_{13}\\
        M_{13} & M_{14} & M_{33} & M_{34}\\
        M_{14} & M_{13} & M_{34} & M_{33}\\
        \end{pmatrix} = \begin{pmatrix}
            a & x & y & z\\
            x & a & z & y\\
            y & z & b & t\\
            z & y & t & b
        \end{pmatrix}
        \]
        \caption*{Form (III)}
    \end{subfigure}
\end{figure*}
\end{lemma}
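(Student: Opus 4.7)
The plan mirrors that of \cref{lemma: order211Forms}, the twist being that $M$ now has two equal-diagonal pairs to handle. The case split is on whether each of the two pairs of rows $(1, 2)$ and $(3, 4)$ of $M$ is order identical. If neither pair is order identical, I will iterate \cref{lemma: OrderedDistImpliesDiagDist} twice to reach a fully diagonal distinct reduct and then invoke \cref{theorem: diagDistHardness}. Otherwise at least one pair is order identical, and a direct structural analysis of the forced equalities among the entries of $M$ pins $M$ down to one of the named Forms.

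For the hardness direction, assume rows $1, 2$ and rows $3, 4$ of $M$ are both not order identical, so $\rho_{{\tt{dist}}, 12}(M) \neq 0$ and $\rho_{{\tt{dist}}, 34}(M) \neq 0$. I assemble
\[
\mathcal{F} = \mathcal{F}_M \cup \{\rho_{\tt{tensor}}, \rho_{{\tt{dist}}, 12}, \rho_{{\tt{dist}}, 34}\} \cup \{\rho_{{\tt{diag}}, ij} : (i,j) \in \{(1,3),(1,4),(2,3),(2,4)\}\},
\]
every polynomial of which is nonzero on $M$ by the hypothesis $M_{11} = M_{22} \neq M_{33} = M_{44}$ and $\rho_{\tt{tensor}}(M) \neq 0$. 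Applying \cref{lemma: OrderedDistImpliesDiagDist} to the pair $(1, 2)$ produces $N \in \mathfrak{R}(M, \mathcal{F}) \cap \text{Sym}_4^{\tt{pd}}(\mathbb{R}_{> 0})$ with $N_{11} \neq N_{22}$ and every $F \in \mathcal{F}$ still nonzero on $N$. I then augment $\mathcal{F}$ with the polynomial $\rho_{{\tt{diag}}, 12}: L \mapsto L_{11} - L_{22}$ (now nonzero on $N$) and apply \cref{lemma: OrderedDistImpliesDiagDist} to the pair $(3, 4)$ of $N$, which is legal because $\rho_{{\tt{dist}}, 34}(N) \neq 0$. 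The outcome $N'$ is fully diagonal distinct and satisfies $\rho_{\tt{tensor}}(N') \neq 0$, so \cref{theorem: diagDistHardness} delivers $\#$P-hardness of $\PlEVAL(N') \leq \PlEVAL(M)$.

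For the structural direction, suppose rows $1, 2$ are order identical, so $M_{1i} = M_{2\sigma(i)}$ for some $\sigma \in S_4$. Since $M_{11} = M_{22}$ and $M_{12} = M_{21}$, matching multisets forces $\sigma(1) = 2$, $\sigma(2) = 1$, and $\sigma$ to restrict to a permutation of $\{3, 4\}$. If $\sigma$ fixes $\{3, 4\}$, then $M_{13} = M_{23}$ and $M_{14} = M_{24}$, placing $M$ into Form (I) (with $c = b$); if $\sigma$ swaps $3$ and $4$, then $M_{13} = M_{24}$ and $M_{14} = M_{23}$, placing $M$ into Form (II) with $c = b$, which coincides with Form (III). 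The symmetric case where rows $3, 4$ are order identical is handled by applying the same analysis after relabeling by $\pi = (1\,3)(2\,4)$, yielding $M$ isomorphic to Form (I) when $\sigma_2$ restricted to $\{1,2\}$ is trivial; the alternative $\sigma_2 = (1\,2)(3\,4)$ automatically forces rows $1, 2$ to be order identical as well and so is covered above.

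The main obstacle is the bookkeeping: the polynomials $L \mapsto L_{11} - L_{22}$ and $L \mapsto L_{33} - L_{44}$ both vanish on $M$, so they cannot be inserted into $\mathcal{F}$ at the outset, and the two applications of \cref{lemma: OrderedDistImpliesDiagDist} must be carefully sequenced, adding each diagonal-distinctness polynomial only after the corresponding pair has been separated. A secondary subtlety is verifying in the structural analysis that the equalities forced by an order-identical pair are consistent with the order-identical status (or not) of the other pair; this consistency check is what confines the non-trivial Forms to those stated in the lemma and rules out any spurious residual sub-case.
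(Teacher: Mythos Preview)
Your proof is correct, but it takes a different route from the paper's. The key divergence is in the case split and how hardness is reached. The paper splits on whether \emph{both} pairs $(1,2)$ and $(3,4)$ are order identical: if not both (say rows $3,4$ are not), it applies \cref{lemma: OrderedDistImpliesDiagDist} \emph{once} to separate $N_{33}$ from $N_{44}$, carries polynomials $\zeta_1,\zeta_2$ encoding ``not of Form (I)/(II)'', and then invokes the already-established \cref{lemma: order211Forms} on the resulting $N$ (which now has three distinct diagonal entries); only the ``both order identical'' case gets a direct structural analysis, yielding Form (III). You instead split on whether \emph{neither} pair is order identical: in that case you iterate \cref{lemma: OrderedDistImpliesDiagDist} twice to reach full diagonal distinctness and invoke \cref{theorem: diagDistHardness} directly, bypassing \cref{lemma: order211Forms} entirely; your structural branch then handles the ``at least one pair order identical'' case, which is broader than the paper's and requires the extra observation that when exactly one pair is order identical the matrix is already isomorphic to Form (I). Your approach is more self-contained (it does not lean on \cref{lemma: order211Forms}) at the cost of a second application of \cref{lemma: OrderedDistImpliesDiagDist} and the sequencing bookkeeping you flag; the paper's approach is more modular, reducing immediately to the previously handled $2{+}1{+}1$ diagonal pattern.
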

\begin{proof}
    Let us first assume that it is not the case that rows $1$ and $2$
    are order identical, and that rows $3$ and $4$ are order identical.
    By symmetry, we may assume that rows $3$ and $4$ are not
    order identical.
    We will also assume that $M$ is not isomorphic
    to a matrix of Form (I) or of Form (II).
    So, if we define $\zeta: N \mapsto (N_{11} - N_{33})
    (N_{11} - N_{44})(N_{22} - N_{33})(N_{22} - N_{44})$,    
    $\zeta_{1}: N \mapsto (N_{13} - N_{23})^{2}
    + (N_{14} - N_{24})^{2}$, and
    $\zeta_{2}: N \mapsto (N_{13} - N_{24})^{2} +
    (N_{14} - N_{23})^{2}$, we see that
    $\zeta(M) \neq 0$, $\zeta_{1}(M) \neq 0$, and $\zeta_{2}(M) \neq 0$.    
    From \cref{lemma: OrderedDistImpliesDiagDist},
    we know that since rows $3$ and $4$ are not
    order identical, we can construct
    some $N \in \mathfrak{R}(M, \mathcal{F}_{M}
    \cup \{\rho_{\tt{tensor}}, \zeta, \zeta_{1}, \zeta_{2}\}) \cap
    \text{Sym}_{4}^{\tt{pd}}(\mathbb{R}_{> 0})$
    such that $N_{22} \neq N_{33} \neq N_{44}$, and
    $N_{11} \neq N_{33} \neq N_{44}$.
    If $N_{11} \neq N_{22}$, then $N$ is diagonal distinct,
    and \cref{theorem: diagDistHardness} implies that
    $\PlEVAL(N) \leq \PlEVAL(M)$ is $\#$P-hard.
    Moreover, by our construction of $N$,
    we ensured that even if $N_{11} = N_{22}$,
    $N$ is not of Form (I) or Form (II).
    So, we can see from
    \cref{lemma: order211Forms} that $\PlEVAL(N) \leq \PlEVAL(M)$
    must be $\#$P-hard.
    
    So, we may assume that rows $1$ and $2$ are
    order identical, and that rows
    $3$ and $4$ are also order identical.
    So, there must exist some $\sigma_{1},
    \sigma_{2} \in S_{4}$ such that 
    $M_{1i} = M_{2\sigma_{1}(i)}$, and
    $M_{3i} = M_{4\sigma_{2}(i)}$ for all
    $i \in [4]$. We already know that
    $M_{11} = M_{22}$, and that $M_{12} = M_{21}$.
    So, we may assume that 
    $\sigma_{1}(1) = 2$, and $\sigma_{1}(2) = 1$.
    Similarly, we may assume that $\sigma_{2}(3) = 4$, and
    $\sigma_{2}(4) = 3$.
    Now, if $\sigma_{1}(3) = 3$, that means that
    $\sigma_{1}(4) = 4$, which forces
    $M_{13} = M_{23}$, and $M_{14} = M_{24}$.
    But then irrespective of whether
    $\sigma_{2}(1) = 1$ or $\sigma_{2}(1) = 2$,
    we see that $M_{31} = M_{4\sigma_{2}(1)}$ which
    implies that $M_{13} = M_{14} = M_{23} = M_{24}$.
    In that case, we see that $M$ is of 
    Form (III) above (with $M_{13} = M_{14}$).
    On the other hand, if $\sigma_{1}(3) = 4$, that
    means $\sigma_{1}(4) = 3$, which forces $M$
    to be of Form (III) above.
\end{proof}

\begin{lemma}\label{lemma: order31Forms}
    Let $M \in \text{Sym}_{4}^{\tt{pd}}(\mathbb{R}_{> 0})$
    such that $M_{11} = M_{22} = M_{33} \neq M_{44}$,
    and $\rho_{\tt{tensor}}(M) \neq 0$.
    Unless $M$ is isomorphic to a matrix of Form (I) or Form (II),
    or is of the two forms below,
    $\PlEVAL(M)$ is $\#$P-hard.
    \begin{figure*}[ht]
    \centering
    \begin{subfigure}{0.45\textwidth}
        \centering
        \[
        \begin{pmatrix}
        M_{11} & M_{13} & M_{13} & M_{14}\\
        M_{13} & M_{11} & M_{13} & M_{14}\\
        M_{13} & M_{13} & M_{11} & M_{14}\\
        M_{14} & M_{14} & M_{14} & M_{44}\\
        \end{pmatrix} = \begin{pmatrix}
            a & x & x & z\\
            x & a & x & z\\
            x & x & a & z\\
            z & z & z & b
        \end{pmatrix}
        \]
        \caption*{Form (IV)}
    \end{subfigure}
    \hfill
    \begin{subfigure}{0.45\textwidth}
        \centering
        \[
        \begin{pmatrix}
        M_{11} & M_{12} & M_{13} & M_{14}\\
        M_{12} & M_{11} & M_{14} & M_{13}\\
        M_{13} & M_{14} & M_{11} & M_{12}\\
        M_{14} & M_{13} & M_{12} & M_{44}\\
        \end{pmatrix} = \begin{pmatrix}
            a & x & y & z\\
            x & a & z & y\\
            y & z & a & x\\
            z & y & x & b
        \end{pmatrix}
        \]
        \caption*{Form (V)}
    \end{subfigure}
\end{figure*}
\end{lemma}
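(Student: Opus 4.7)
The plan mirrors the proofs of the two preceding lemmas. I would begin by reducing to the situation in which all three pairs of rows in $\{1,2,3\}$ are order identical. Concretely, if some pair, say rows $i$ and $j$ with $i,j \in \{1,2,3\}$, is not order identical, I would invoke \cref{lemma: OrderedDistImpliesDiagDist} with a carefully enlarged $\mathcal{F}$ to produce some $N \in \mathfrak{R}(M,\mathcal{F}) \cap \text{Sym}_4^{\tt{pd}}(\mathbb{R}_{>0})$ with $N_{ii}\ne N_{jj}$. The additional $\text{Sym}_4(\mathbb{R})$-polynomials placed in $\mathcal{F}$ would be chosen so that $N$ is not isomorphic to Form (I), (II), (III), (IV), or (V). Such polynomials can be built as products over $\sigma \in S_4$ of the structural vanishing conditions of each Form, analogous to the construction of $\rho_{\tt{tensor}}$ from $\varrho_{\tt{tensor}}$ in \cref{lemma: tensorPolynomial}, and to the choice of $\zeta,\zeta_1,\zeta_2$ in the proof of \cref{lemma: order22Forms}. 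The standing hypotheses that $\rho_{\tt{tensor}}(M)\ne 0$ and that $M$ is not isomorphic to Form (I) or (II) guarantee these auxiliary polynomials are nonzero on $M$, so their inclusion is legitimate. Because $N$ then has strictly fewer coinciding diagonal entries than $M$ and avoids all the excluded forms, $N$ must be either diagonal distinct, of order $(2,1,1)$ type, or of order $(2,2)$ type, and in each subcase \cref{theorem: diagDistHardness}, \cref{lemma: order211Forms}, or \cref{lemma: order22Forms} yields $\#$P-hardness of $\PlEVAL(N) \le \PlEVAL(M)$.

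Hence we may assume all three pairs among rows $1,2,3$ are order identical. Set $a = M_{11}=M_{22}=M_{33}$ and write $x_{ij}=M_{ij}$ for $i<j$. Comparing multisets row-by-row, and cancelling the diagonal $a$ and the shared off-diagonal $x_{12}$ (or $x_{13}$, $x_{23}$) as appropriate, the three order-identity constraints collapse to
\[
\{x_{13},x_{14}\}=\{x_{23},x_{24}\},\quad
\{x_{12},x_{14}\}=\{x_{23},x_{34}\},\quad
\{x_{12},x_{24}\}=\{x_{13},x_{34}\}.
\]
I would then split according to how the first matching is realized. In Case A, $x_{13}=x_{23}$ and $x_{14}=x_{24}$; the other two equations both reduce to $\{x_{12},x_{14}\}=\{x_{13},x_{34}\}$, whose two resolutions give Sub-case A.1 ($x_{12}=x_{13}=x_{23}$ and $x_{14}=x_{24}=x_{34}$), producing exactly Form (IV); and Sub-case A.2 ($x_{12}=x_{34}$ and $x_{13}=x_{14}=x_{23}=x_{24}$), which after inspection is Form (V) with $y=z$. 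In Case B, $x_{13}=x_{24}$ and $x_{14}=x_{23}$; substituting into the remaining equations forces $x_{12}=x_{34}$, which upon renaming is Form (V) in its general shape.

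The main obstacle is not the combinatorial case analysis, which is a short finite check, but the polynomial bookkeeping in the first step: one must exhibit explicit $\text{Sym}_4(\mathbb{R})$-polynomials detecting isomorphism to each of Forms (I)--(V), and verify that each is nonzero on $M$ under the stated hypotheses. In particular, the three Forms (III), (IV), (V) all involve strictly more equalities among off-diagonal entries than just the ones implied by the diagonal pattern $a=a=a\ne b$, so their vanishing polynomials are nontrivial on generic $M$ with this diagonal signature. Once these polynomials are correctly chosen and added to $\mathcal{F}$, both the reduction and the case analysis proceed along the same assembly pattern as in \cref{lemma: order22Forms}, completing the proof.
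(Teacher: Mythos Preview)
Your plan tracks the paper's proof closely, and the combinatorial case analysis on the multiset equations is correct and matches the paper. However, there is one genuine gap in the first step.

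You claim that after invoking \cref{lemma: OrderedDistImpliesDiagDist} to obtain $N$ with $N_{ii}\ne N_{jj}$ (for some $i,j\in\{1,2,3\}$), the matrix $N$ ``has strictly fewer coinciding diagonal entries than $M$'' and hence is diagonal distinct, of $(2,1,1)$ type, or of $(2,2)$ type. That inference is not justified: \cref{lemma: OrderedDistImpliesDiagDist} only separates $N_{ii}$ from $N_{jj}$; it does nothing to prevent $N_{44}$ from merging with one of $N_{11},N_{22},N_{33}$. If that happens, $N$ again has a $(3,1)$ diagonal pattern (in a permuted arrangement), and your appeal to earlier lemmas becomes circular --- you would need precisely the lemma you are proving. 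Adding isomorphism-detecting polynomials for Forms (IV) and (V) does not help here, since $N$ could have $(3,1)$ signature without being of those forms.

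The paper closes this gap by including the polynomial $\zeta\colon N\mapsto (N_{11}-N_{44})(N_{22}-N_{44})(N_{33}-N_{44})$ in $\mathcal{F}$; this is nonzero on $M$ by hypothesis, so it legitimately passes to $N$ and pins $N_{44}$ away from the other three. With $\zeta(N)\ne 0$ and $N_{ii}\ne N_{jj}$, the only possible patterns are diagonal distinct or $(2,1,1)$ --- the $(2,2)$ case you listed cannot occur --- and then \cref{theorem: diagDistHardness} or \cref{lemma: order211Forms} (together with the Form-(I)/(II) obstructions you already planned to include) finishes that branch. Once you add $\zeta$, your argument is complete and essentially identical to the paper's.
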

\begin{proof}
    We will first suppose rows $1$, $2$ and $3$
    are not all order identical to each other. By the symmetry of our assumptions so far among $\{1, 2, 3\}$ 
    in this lemma, without loss of generality suppose rows $2$ and $3$ are not order identical.
    We can also assume that $M$ is not isomorphic
    to a matrix of Form (I) or Form (II).
    We can now define $\zeta: N \mapsto (N_{11} - N_{44})
    (N_{22} - N_{44})(N_{33} - N_{44})$,    
    $\zeta_{1}: N \mapsto (N_{13} - N_{23})^{2}
    + (N_{14} - N_{24})^{2}$,
    $\zeta_{2}: N \mapsto (N_{13} - N_{24})^{2} +
    (N_{14} - N_{23})^{2}$,
    $\zeta_{3}: N \mapsto (N_{12} - N_{23})^{2}
    + (N_{14} - N_{34})^{2}$, and
    $\zeta_{4}: N \mapsto (N_{12} - N_{34})^{2} +
    (N_{14} - N_{23})^{2}$.
    (Here, given $M_{11}=M_{22}$, being not in Form (I) or Form (II)
    implies that $\zeta_{1}(M) \ne 0$  and $\zeta_{2}(M) \ne 0$.
    Note that $\zeta_{3}$ and $\zeta_{4}$ are obtained from $\zeta_{1}$  and $\zeta_{2}$, respectively,
    by exchanging rows and columns indexed by $2$ and $3$.)
    We see that
    $\zeta(M) \neq 0$, and $\zeta_{i}(M) \neq 0$
    for all $i \in [4]$.
    By \cref{lemma: OrderedDistImpliesDiagDist}, we can obtain some $N \in \mathfrak{R}(M, \mathcal{F}_{M} \cup
    \{\rho_{\tt{tensor}}, \zeta, \zeta_{1}, \zeta_{2},
    \zeta_{3}, \zeta_{4} \}) \cap
    \text{Sym}_{4}^{\tt{pd}}(\mathbb{R}_{> 0})$
    satisfying $N_{22} \ne N_{33}$.
    If $N_{11} \neq N_{22} \neq N_{33} \neq N_{44}$ are all
    pairwise distinct, then \cref{theorem: diagDistHardness}
    implies that $\PlEVAL(N) \leq \PlEVAL(M)$ is $\#$P-hard.
    If however, $N_{11} = N_{22}$, since $\zeta(N) \neq 0$,
    it must be the case that $N_{11} = N_{22}$, 
    but $N_{22} \neq N_{33} \neq N_{44}$ are pairwise distinct.
    So, \cref{lemma: order211Forms} implies that
    $\PlEVAL(N) \leq \PlEVAL(M)$ is $\#$P-hard unless
    $N$ is of Form (I) or (II).
    But, by construction, $\zeta_{1}(N) \neq 0$,
    and $\zeta_{2}(N) \neq 0$.
    So, $N$ is not of Form (I) or (II), which proves that
    $\PlEVAL(N) \leq \PlEVAL(M)$ is $\#$P-hard.
    Similarly, if $N_{11} = N_{33}$, then since
    $\zeta(N) \neq 0$, we see that $N_{22} \neq N_{33} \neq N_{44}$
    are pairwise distinct.
    Moreover, since $\zeta_{3}(N) \neq 0$, and
    $\zeta_{4}(N) \neq 0$, we see that if
    we switch rows and columns $2 \leftrightarrow 3$ in $N$,
    we have a matrix $N'$ that is isomorphic to $N$ that
    satisfies the condition that
    $(N')_{11} = (N')_{22}$, but 
    $(N')_{22} \neq (N')_{33} \neq (N')_{44}$
    are pairwise distinct, and the condition that 
    $\zeta_{3}(N) \neq 0$ and $\zeta_{4}(N) \neq 0$ translates to
    $N'$ is not of Form (I) or (II).
    So, \cref{lemma: order211Forms} implies that 
    $\PlEVAL(N') \equiv \PlEVAL(N) \leq \PlEVAL(M)$ is $\#$P-hard.
    
    So, rows $1$, $2$ and $3$
    are  all order identical to each other, and  there  exist some $\sigma_{1}, \sigma_{2}
    \in S_{4}$ such that
    $M_{1i} = M_{2\sigma_{1}(i)}$, and
    $M_{1i} = M_{3\sigma_{2}(i)}$ for all
    $i \in [4]$.
    We already know that
    $M_{11} = M_{22}$, and that $M_{12} = M_{21}$.
    So, we may assume that $\sigma_{1}(1) = 2$,
    and $\sigma_{1}(2) = 1$.
    Similarly, we may assume that $\sigma_{2}(1) = 3$, and
    $\sigma_{2}(3) = 1$.
So far we have the following 
    \[
        \begin{pmatrix}
            a & x & y & *\\
            x & a & * & *\\
            y & * & a & *\\
            * & * & * & *
        \end{pmatrix}
        \]
There are two possibilities:   $\sigma_{1}(3) = 3$  or  $\sigma_{1}(3) = 4$.  
Suppose $\sigma_{1}(3) = 3$, then
    $\sigma_{1}(4) = 4$, which forces
    $M_{13} = M_{23}$, and $M_{14} = M_{24}$, and we have the following
    \[
        \begin{pmatrix}
            a & x & y & z\\
            x & a & y & z\\
            y & y & a & *\\
            z & z & * & *
        \end{pmatrix}
        \]
Since rows 1 and 3 are order identical,  the $y$ entry at $M_{31}$ is either $y=x$ or $y=z$.
The two cases are listed in Form (IV) and Form (V) (with $y=z$) respectively.

Now we may assume $\sigma_{1}(3) = 4$. Then we have the following
    \[
        \begin{pmatrix}
            a & x & y & z\\
            x & a & z & y\\
            y & z & a & *\\
            z & y & * & *
        \end{pmatrix}
        \]
Again since row 1 and 3 are order identical, the $M_{34}$ entry must be $x$, this gives Form (V).
%
%
\end{proof}

\begin{lemma}\label{lemma: order4Forms}
    Let $M \in \text{Sym}_{4}^{\tt{pd}}(\mathbb{R}_{> 0})$
    such that $M_{11} = M_{22} = M_{33} = M_{44}$, and
    $\rho_{\tt{tensor}}(M) \neq 0$.
    Unless $M$ is isomorphic to a matrix of Forms (I) - (V), 
    or is of the form below,
    $\PlEVAL(M)$ is $\#$P-hard.
    \begin{figure*}[ht]
    \centering
    \begin{subfigure}{0.45\textwidth}
        \centering
        \[
        \begin{pmatrix}
        M_{11} & M_{12} & M_{13} & M_{14}\\
        M_{12} & M_{11} & M_{14} & M_{13}\\
        M_{13} & M_{14} & M_{11} & M_{12}\\
        M_{14} & M_{13} & M_{12} & M_{11}\\
        \end{pmatrix} = \begin{pmatrix}
            a & x & y & z\\
            x & a & z & y\\
            y & z & a & x\\
            z & y & x & a
        \end{pmatrix}
        \]
        \caption*{Form (VI)}
    \end{subfigure}
\end{figure*}
\end{lemma}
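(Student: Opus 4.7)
The plan splits into two main cases according to whether all pairs of rows of $M$ are order identical in the sense of \cref{definition: pairwiseDistinct}.

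\textbf{Case 1: some pair of rows of $M$ is not order identical.} After permuting indices, assume rows $1$ and $2$ are not order identical, so $\rho_{{\tt{dist}}, 12}(M) \neq 0$. For each $X \in \{(\text{I}), (\text{II}), (\text{III}), (\text{IV}), (\text{V})\}$, I introduce a $\text{Sym}_{4}(\mathbb{R})$-polynomial $\tilde{\rho}_{X}(M) = \prod_{\sigma \in S_{4}} \rho_{X}(M^{\sigma})$, where $\rho_{X}$ is a homogeneous sum of squares of the linear relations defining Form $(X)$ (e.g.\ $\rho_{\mathrm{(I)}}(M) = (M_{11} - M_{22})^{2} + (M_{13} - M_{23})^{2} + (M_{14} - M_{24})^{2}$) and $M^{\sigma}$ is the simultaneous row/column permutation. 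Then $\tilde{\rho}_{X}(M) = 0$ iff $M$ is isomorphic to Form $(X)$, so each $\tilde{\rho}_{X}(M) \neq 0$ by hypothesis. Invoking \cref{lemma: OrderedDistImpliesDiagDist} with $\mathcal{F} = \mathcal{F}_{M} \cup \{\rho_{\tt{tensor}}, \rho_{{\tt{dist}}, 12}\} \cup \{\tilde{\rho}_{\mathrm{(I)}}, \ldots, \tilde{\rho}_{\mathrm{(V)}}\}$ produces $N \in \mathfrak{R}(M, \mathcal{F}) \cap \text{Sym}_{4}^{\tt{pd}}(\mathbb{R}_{>0})$ with $N_{11} \neq N_{22}$, $\rho_{\tt{tensor}}(N) \neq 0$, and $N$ not isomorphic to any of Forms (I)-(V).

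Hardness of $\PlEVAL(N) \leq \PlEVAL(M)$ then follows by a short case analysis on the diagonal-equality pattern of $(N_{11}, N_{22}, N_{33}, N_{44})$. If all four are distinct, \cref{theorem: diagDistHardness} applies. If $N$ has exactly one pair of equal diagonals (necessarily not $\{1, 2\}$), a permutation of indices moves the pair to $\{1, 2\}$ and brings $N$ under the hypothesis of \cref{lemma: order211Forms}, with the non-isomorphism to Forms (I), (II) supplying the required hardness. If $N$ has two disjoint pairs of equal diagonals, the analogous permutation and \cref{lemma: order22Forms} (using non-Form (I), (II), (III)) yield hardness. If $N$ has three equal diagonals, \cref{lemma: order31Forms} (using non-Form (I), (II), (IV), (V)) yields hardness. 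The all-four-equal case is excluded by $N_{11} \neq N_{22}$.

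\textbf{Case 2: all pairs of rows of $M$ are order identical.} I classify $M$ combinatorially by reading the six off-diagonal entries as a labeling of the edges of the complete graph $K_{4}$. Order identicality of rows $i$ and $j$ is equivalent to equality of the multiset of edge labels incident to vertex $i$ and vertex $j$. Since this holds for every pair, every vertex sees the same multiset; equivalently, for each distinct label $v$, the subgraph of $v$-labeled edges is regular in $K_{4}$. Since $K_{4}$ is $3$-regular and every present label contributes degree at least $1$ at every vertex, at most three distinct labels appear, and the only decompositions of the degree $3$ into positive parts are $(3)$, $(1,2)$, and $(1,1,1)$. These correspond respectively to (a) all six edges carrying one label, (b) a perfect matching together with its complementary Hamilton cycle, and (c) the unique $1$-factorization of $K_{4}$ into three perfect matchings. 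A direct inspection verifies that (a) yields a degenerate instance of Form (IV) (with $x = z$, $b = a$), (b) yields, after a row/column permutation sending the matching to $\{12, 34\}$, a degenerate instance of Form (III) (with $b = a$, $y = z$, $t = x$), and (c) is exactly Form (VI).

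The main technical obstacle is the sub-case analysis in Case 1: I must construct the isomorphism-detecting $\text{Sym}_{4}(\mathbb{R})$-polynomials $\tilde{\rho}_{X}$ so that they are genuinely homogeneous and non-vanishing on $M$, and then verify that each possible diagonal-equality pattern of $N$ (after a permutation of indices) fits the exact hypothesis of one of \cref{lemma: order211Forms}, \cref{lemma: order22Forms}, \cref{lemma: order31Forms}, or \cref{theorem: diagDistHardness}, with all the non-isomorphism witnesses transferred correctly through \cref{lemma: OrderedDistImpliesDiagDist}.
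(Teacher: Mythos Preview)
Your proposal is correct and follows essentially the same two-case split as the paper: (Case 1) if some pair of rows is not order identical, reduce via \cref{lemma: OrderedDistImpliesDiagDist} while carrying along polynomials witnessing non-isomorphism to Forms (I)--(V), then dispatch $N$ by its diagonal-equality pattern using \cref{theorem: diagDistHardness} and \cref{lemma: order211Forms}--\cref{lemma: order31Forms}; (Case 2) if all rows are order identical, classify $M$ combinatorially.

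The only notable difference is in the packaging of Case 2. The paper does a direct permutation chase (fixing $\sigma_1(1)=2$, $\sigma_1(2)=1$, then branching on $\sigma_1(3)\in\{3,4\}$) and concludes that $M$ is always of Form (VI), possibly with coincident off-diagonal labels. Your regular-edge-labeling argument on $K_4$ is a cleaner way to see the same combinatorics: it yields the three outcomes $(3)$, $(1,2)$, $(1,1,1)$, which you identify as degenerate Form (IV), degenerate Form (III), and Form (VI). These degenerate cases are in fact already special cases of Form (VI) (all off-diagonals equal is Form (VI) with $x=y=z$; matching-plus-cycle is Form (VI) with $y=z$), so the paper's conclusion is slightly tighter, but both versions suffice for the lemma as stated.
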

\begin{proof}
    We first suppose rows $1$, $2$, $3$ and $4$
    are not all order identical to each other. By  symmetry, 
    without loss of generality suppose rows $3$ and $4$ are not  order identical.
    We may also assume that $M$ is not of Forms (I) - (V).
    We can now define the
    $\text{Sym}_{4}(\mathbb{R})$-polynomial
    $\zeta_{1}$ such that
    $\zeta_{1}(N) = 0$ if and only if
    $N$ is isomorphic to a matrix of Form (I).
    Similarly, we can define $\zeta_{2}$ such that
    $\zeta_{2}(N) = 0$ if and only if
    $N$ is isomorphic to a matrix of Form (II).
    We can also define $\zeta_{3}, \zeta_{4}, \zeta_{5}$
    similarly for Forms (III) - (V).
    By our assumption, $\zeta_{i}(M) \neq 0$ for all $i \in [5]$.
    
    Since rows $3$ and $4$ are not order identical,
    by \cref{lemma: OrderedDistImpliesDiagDist}, we can obtain some $N \in \mathfrak{R}(M, \mathcal{F}_{M} \cup
    \{\rho_{\tt{tensor}}, \zeta_{1}, \dots, \zeta_{5}\}) \cap
    \text{Sym}_{4}^{\tt{pd}}(\mathbb{R}_{> 0})$ satisfying $N_{33} \ne N_{44}$.
    By our construction, we also know that $N$ is not
    isomorphic to any matrix of Forms (I) - (V).
    Now, the set
    of diagonal elements of $N$ (after removal of duplicates) has cardinality 2 or 3 or 4.
   The cardinality  4 case
        is diagonal distinct, and then
    \cref{theorem: diagDistHardness} allows us to prove
    that $\PlEVAL(M)$ is $\#$P-hard.
    The cases of cardinality 2 or 3 fall into the cases of \cref{lemma: order211Forms}, \cref{lemma: order22Forms} or \cref{lemma: order31Forms},
    and in either case, we see that $\PlEVAL(N) \leq \PlEVAL(M)$
    is $\#$P-hard, since $N$ is not isomorphic to any matrix
    of Form (I) - (V).

    So suppose rows $1$, $2$, $3$ and $4$
    are  all order identical to each other. 
    We have the  following setting
    \[
        \begin{pmatrix}
            a & x & y & z\\
            x & a & * & *\\
            y & * & a & *\\
            z & * & * & a
        \end{pmatrix}
        \]
    As in the proof of  \cref{lemma: order31Forms}, there are two possibilities:   $\sigma_{1}(3) = 3$  or  $\sigma_{1}(3) = 4$.  
    Suppose $\sigma_{1}(3) = 3$, then
    $\sigma_{1}(4) = 4$, which forces
    $M_{13} = M_{23}$, and $M_{14} = M_{24}$, and we have the following
    \[
        \begin{pmatrix}
            a & x & y & z\\
            x & a & y & z\\
            y & y & a & *\\
            z & z & * & a
        \end{pmatrix}
        \]
    Then $y$ from the 3rd row must match either $x$ or $z$.
    If $y=x$, then the 3rd row is $(y,y,a,z)$ and then the 4th row is $(z,z,z,a)$, and thus
    $x=y=z$, and we have Form (VI) (with $x=y=z$).
    If $y=z$ then we have 
    \[
        \begin{pmatrix}
            a & x & y & y\\
            x & a & y & y\\
            y & y & a & *\\
            y & y & * & a
        \end{pmatrix}
        \]
    Then $M_{34} = x$ and we have  Form (VI) (with $y=z$).
    Finally if $\sigma_{1}(3) = 4$, we also have  Form (VI). 
\end{proof}

Now that we have listed all the different Forms
that the matrices must take, we will show
one by one that for each of the Forms,
matrices $M$ belonging to that Form must either
be isomorphic to some $A \otimes B$, or that
$\PlEVAL(M)$ is $\#$P-hard.
We will first show that Form (II) and Form (V)
can actually be reduced to one of the other forms.

\begin{lemma}\label{lemma: formIIReduction}
    Let $M \in \text{Sym}_{4}^{\tt{pd}}(\mathbb{R}_{> 0})$
    be of Form (II), such that
    $M_{11} = M_{22}$ and $M_{22} \neq M_{33} \neq M_{44}$ are pairwise distinct, and
    $\rho_{\tt{tensor}}(M) \neq 0$.
    Then, either $\PlEVAL(M)$ is $\#$P-hard, or
    $M$ is also isomorphic to a matrix
    of  Form (I).
\end{lemma}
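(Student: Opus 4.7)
The plan is as follows. When $y = z$, the matrix $M$ in Form (II) coincides verbatim with a matrix of Form (I), so the lemma is immediate in that case. The substance of the proof is the case $y \ne z$, for which we will establish $\#$P-hardness. The obstruction to applying \cref{lemma: OrderedDistImpliesDiagDist} directly to $M$ is that rows $1$ and $2$ of $M$ are order identical via the permutation $\pi = (1\ 2)(3\ 4)$ (as are rows $3$ and $4$), so we must first break this symmetry by stretching.

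The key (and only non-routine) computation will be the identity
$$(M^3)_{11} - (M^3)_{22} \;=\; (b - c)(y - z)(y + z),$$
which is non-zero under the hypotheses $b \ne c$, $y \ne z$, and $y, z > 0$. Structurally this holds because conjugation by $\pi$ carries $M$ to the matrix obtained from $M$ by swapping $b$ and $c$ while fixing the other entries, whence $(M^n)_{22} = ((M^{\pi})^n)_{11}$; a direct expansion of $(M^3)_{11}$ and $(M^3)_{22}$ then isolates precisely the $b, c$ terms appearing above. The hardest part of the proof amounts to carrying out this expansion cleanly; everything else is assembled from the real-analytic machinery already developed.

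Setting $\eta(\theta) = \mathcal{S}_M(\theta)_{11} - \mathcal{S}_M(\theta)_{22}$, which is real analytic in $\theta$ through the eigendecomposition, the identity above gives $\eta(3) \ne 0$, so $\eta$ is a non-zero analytic function. The analogous functions $\mathcal{S}_M(\theta)_{ii} - \mathcal{S}_M(\theta)_{jj}$ for every other pair $(i, j) \in \{(1,3),(1,4),(2,3),(2,4),(3,4)\}$ are likewise non-zero analytic, each being non-zero at $\theta = 1$ by the diagonal distinctness hypotheses of the lemma; and $\rho_{\tt{tensor}}(\mathcal{S}_M(\theta))$ is non-zero analytic because $\rho_{\tt{tensor}}(M) \ne 0$ by assumption. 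By \cref{corollary: stretchingWorksPositive} applied to the countable set $\mathcal{F}_M \cup \{\rho_{\tt{tensor}}\}$ augmented by the polynomials $N \mapsto N_{ii} - N_{jj}$ for $i \ne j$, there exists $N = \mathcal{S}_M(\theta^*) \in \mathfrak{R}(M, \mathcal{F}_M \cup \{\rho_{\tt{tensor}}\}) \cap \text{Sym}_4^{\tt{pd}}(\mathbb{R}_{> 0})$ that is diagonal distinct and satisfies $\rho_{\tt{tensor}}(N) \ne 0$.

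Finally, \cref{theorem: diagDistHardness} asserts that $\PlEVAL(N)$ is $\#$P-hard, and since $\PlEVAL(N) \le \PlEVAL(M)$ this proves $\PlEVAL(M)$ is $\#$P-hard. The one subtlety to watch for is that $\theta^*$ must be chosen near $1$ so that entries of $\mathcal{S}_M(\theta^*)$ remain strictly positive, which is exactly the positivity neighborhood built into \cref{corollary: stretchingWorksPositive}; everything else is standard real-analytic interpolation as already used in \cref{lemma: diagDistinctImpliesRowFull} and \cref{lemma: diagDistinctImpliesOrderDist}.
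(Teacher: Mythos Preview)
Your proof is correct and takes a genuinely different, more direct route than the paper's.

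The paper attacks Form~(II) by trying to break the \emph{off-diagonal} symmetry $M_{14}=M_{23}$, $M_{13}=M_{24}$ (encoded by $\xi(N)=(N_{14}-N_{23})^2+(N_{13}-N_{24})^2$). When stretching alone fails to achieve $\xi(\mathcal{S}_M(\theta))\ne 0$, the paper falls back to the $R_n$ gadget, deriving from the Vandermonde system $\xi(R_n(M'))=0$ (all $n$) that $M'_{33}=M'_{44}$, a contradiction. The resulting $N$ is then pushed through \cref{lemma: order211Forms}. By contrast, you go straight for the \emph{diagonal} symmetry: the clean identity $(M^3)_{11}-(M^3)_{22}=(b-c)(y-z)(y+z)$ shows $\rho_{{\tt diag},12}(\mathcal{S}_M(3))\ne 0$, so a single application of \cref{corollary: stretchingWorksPositive} produces a diagonal distinct $N$ with $\rho_{\tt tensor}(N)\ne 0$, and \cref{theorem: diagDistHardness} finishes. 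This avoids the $R_n$ gadget entirely and lands on diagonal distinctness directly rather than via ``not Form~(II)''. Your identity follows transparently from the observation that conjugation by $(1\,2)(3\,4)$ swaps $b\leftrightarrow c$ in $M$, so $(M^n)_{11}-(M^n)_{22}$ is always divisible by $b-c$; at $n=3$ the cofactor is exactly $y^2-z^2$.

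Two cosmetic remarks: your parenthetical ``(as are rows $3$ and $4$)'' is false in general (rows $3,4$ are order identical only when $b=c$), but this plays no role in your argument. Also, including $\mathcal{F}_M$ in the call to \cref{corollary: stretchingWorksPositive} is harmless but unnecessary here, since you only need $\PlEVAL(N)\le\PlEVAL(M)$, $\rho_{\tt diag}(N)\ne 0$, and $\rho_{\tt tensor}(N)\ne 0$ to invoke \cref{theorem: diagDistHardness}.
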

\begin{proof}
    We first let $\zeta: N \mapsto (N_{11} - N_{33})(N_{11} - N_{44})
    (N_{22} - N_{33})(N_{22} - N_{44})(N_{33} - N_{44})$.
    We note that $\zeta(M) \neq 0$ by our choice of $M$.
    Moreover, we note that if $\zeta(N) \neq 0$, then
    $N$ cannot be isomorphic to any matrix of 
    Form (III), (IV), (V), or (VI),
    as there are at least three distinct diagonal elements. 
    We have already seen that there exists some 
    $\text{Sym}_{4}(\mathbb{R})$-polynomial
    $\zeta_{1}$ such that $\zeta_{1}(N) = 0$
    if and only if $N$ is isomorphic to a matrix of Form (I).
    Similarly, there exist $\text{Sym}_{4}(\mathbb{R})$-polynomials $\zeta_{2}, \dots, \zeta_{6}$
    for Forms (II) - (VI), respectively.
    By our assumption about $M$, we have 
    $\zeta_{i}(M) \neq 0$ for all $i \in \{3, 4, 5, 6\}$.
    We are given that $M$ has Form (II), thus $\zeta_{2}(M) = 0$.
    We will assume $\zeta_{1}(M) \neq 0$, and show  that $\PlEVAL(M)$ is $\#$P-hard.

    We will now define $\xi: N \mapsto (N_{14} - N_{23})^{2}
    + (N_{13} - N_{24})^{2}$.
    Since $M$ has Form (II), we have $\xi(M) = 0$.
    Moreover, for any $N$,  $\zeta(N) \neq 0$ and 
    $\xi(N) \neq 0$ implies that
    $N$ is not isomorphic to any matrix of Form (II).
    Indeed, to be isomorphic to a matrix of Form (II) and having the pairwise distinctness
    of $\{M_{11}, M_{33}, M_{44}\}$ and of $\{M_{22}, M_{33}, M_{44}\}$ given by  $\zeta(N) \neq 0$
    we only need to consider $N$ under possibly the permutations $1 \leftrightarrow 2$
    and $3 \leftrightarrow 4$; but these permutations do not change $\xi(N) \neq 0$. 
    In other words, $[\zeta(N) \neq 0]\wedge [\xi(N) \neq 0]$ implies that
    $\zeta_{2}(N) \neq 0$.
    So, if there exists some $\theta \in \mathbb{R}$
    such that $\xi(\mathcal{S}_{M}(\theta)) \neq 0$,
    we will be able to immediately find $N \in
    \mathfrak{R}(M, \{\rho_{\tt{tensor}},
    \zeta, \zeta_{1}, \xi\}) \cap
    \text{Sym}_{4}^{\tt{pd}}(\mathbb{R}_{> 0}) \subseteq
    \mathfrak{R}(M, \{\rho_{\tt{tensor}},
    \zeta, \zeta_{1}, \zeta_{2}\}) \cap
    \text{Sym}_{4}^{\tt{pd}}(\mathbb{R}_{> 0})$,
    by \cref{corollary: stretchingWorksPositive}.
    Since $\zeta(N) \neq 0$, we see that $N$ must either
    be diagonal distinct, or $N_{11} = N_{22}
    \neq N_{33} \neq N_{44}$ pairwise distinct.
    But since $\zeta_{i}(N) \neq 0$ for $i \in \{1, 2\}$,
    we see from \cref{theorem: diagDistHardness},
    and \cref{lemma: order211Forms}, that
    $\PlEVAL(N) \leq \PlEVAL(M)$ is $\#$P-hard.
    So, we may assume $\xi(\mathcal{S}_{M}(\theta)) = 0$ 
    for all $\theta \in \mathbb{R}$.

    We will now let
    $\mathcal{F}' = \{F': N \mapsto F(N^{3}) \ |\
    F \in \{\rho_{\tt{tensor}}, \zeta, \zeta_{1}\} \}$.
    Since $\mathcal{S}_{M}(0) = I$, we know that there
    exists some $\delta > 0$ such that
    $|\mathcal{S}_{M}(\theta)_{ij} - I_{ij}| < \frac{1}{3}$
    for all $i, j \in [4]$, for all
    $0 < \theta < \delta$.
    We can now use
    \cref{lemma: stretchingWorks} to find some
    $M' = \mathcal{S}_{M}(\theta^{*})
    \in \mathfrak{R}(M, \mathcal{F}' \cup 
    \{\rho_{\tt{tensor}}, \zeta, \zeta_{1}\})
    \cap \text{Sym}_{4}^{\tt{pd}}(\mathbb{R}_{\neq 0})$
    such that $0 < \theta^{*} < \delta$.
    We note that by our assumption,
    $\xi(M') = 0$.

    Now, since $\zeta(M') \neq 0$, we see that
    $(M')_{11} \neq (M')_{33} \neq (M')_{44}$ and $(M')_{22} \neq (M')_{33} \neq (M')_{44}$
    are both pairwise distinct.
    If $(M')_{11} \neq (M')_{22}$, $M'$ would be
    diagonal distinct, and
    $\PlEVAL(M') \leq \PlEVAL(M)$ is $\#$P-hard,
    due to \cref{theorem: diagDistHardness}.
    On the other hand, suppose $(M')_{11} = (M')_{22}$.
    If $M'$ is not of Form (II), we see that
    since $\zeta_{1}(M') \neq 0$, $\PlEVAL(M') \leq 
    \PlEVAL(M)$ is $\#$P-hard, due to
    \cref{lemma: order211Forms}.
    So, we may assume that $M'$ is of Form (II), such that
    $(M')_{11} = (M')_{22} \neq (M')_{33} \neq (M')_{44}$
    are pairwise distinct, and $\xi(M') = 0$.

    We will now consider $\mathcal{R}_{M'}(n)$
    for all $n \geq 1$.
    Let us first assume that $\xi(R_{n}(M')) = 0$
    for all $n \geq 1$.
    We note that
    $$(R_{n}(M'))_{14} - (R_{n}(M'))_{23}
    = \sum_{a, b \in [4]}(M'_{ab})^{n} \cdot
    (M'_{1a}M'_{4b} - M'_{2a}M'_{3b}).$$
    We will let $X = \{M'_{ab}: a, b \in [4]\}$, and define
    $c_{ij}(x) = \sum_{a, b \in [4]: M'_{ab} = x}
    (M'_{ia}M'_{jb})$ for all $i, j \in [4]$.
    So, we see that
    $$(R_{n}(M'))_{14} - (R_{n}(M'))_{23} = 
    \sum_{x \in X} x^{n} \cdot
    (c_{14}(x) - c_{23}(x)).$$
    Since each $x \in X$ is distinct, the equations
    $\xi(R_{n}(M')) = 0$ form a full rank
    Vandermonde system of size $O(1)$.
    This implies that $c_{14}(x) - c_{23}(x) = 0$
    for all $x \in X$.
    But since $\zeta(M') \neq 0$, and $0 < \theta^{*} < \delta$ in the definition of $M'$, we ensured
    that $M'_{44} = M'_{ab}$ if and only if
    $(a, b) = (4, 4)$, and
    we have also ensured that $M'_{33} = M'_{ab}$
    if and only if $(a, b) = (3, 3)$.
    Now, we see that
    $$c_{14}(M'_{44}) - c_{23}(M'_{44}) = 
    M'_{14}M'_{44} - M'_{24}M'_{34}.$$
    $$c_{14}(M'_{33}) - c_{23}(M'_{33}) = 
    M'_{13}M'_{34} - M'_{23}M'_{33}.$$
    Since $\xi(M') = \xi(\mathcal{S}_{M}(\theta^{*})) = 0$, we also know that
    $M'_{24} = M'_{13}$ and $M'_{23} =M'_{14} $.
    So, we also have
    $$c_{14}(M'_{44}) - c_{23}(M'_{44})  = M'_{14}M'_{44} - M'_{13}M'_{34}.$$
    $$c_{14}(M'_{33}) - c_{23}(M'_{33}) = 
    M'_{13}M'_{34} - M'_{14}M'_{33}.$$
    Then, it follows that 
    $$M'_{14}M'_{44} = M'_{13}M'_{34} = M'_{14}M'_{33},$$
    which implies that $M'_{33} = M'_{44}$, since $M'_{14} \ne 0$.
    But by our construction of $M'$, we have $\zeta(M') \ne 0$.
    This is a contradiction.

    Therefore, our assumption that $\xi(R_{n}(M')) = 0$
    for all $n \geq 1$ must be false.
    Let $\xi \circ R_{n}$ denote the composition of  $\xi$ and  $R_{n}$.
    There exists some $n^{*} \geq 1$ such that
    $(\xi \circ R_{n^{*}})(M') = 
    (\xi \circ R_{n^{*}})(\mathcal{S}_{M}(\theta^{*})) \neq 0$.
    So, \cref{corollary: stretchingWorksPositive} allows
    us to find some $M'' \in \mathfrak{R}(M,
    \mathcal{F}' \cup \{(\xi \circ R_{n^{*}})\}) \cap
    \text{Sym}_{4}^{\tt{pd}}(\mathbb{R}_{> 0})$.
    Since $F(\mathcal{R}_{M''}(1))
    = F((M'')^{3}) \neq 0$ for all
    $F \in \{\rho_{\tt{tensor}}, \zeta, \zeta_{1}\}$,
    and $\xi(\mathcal{R}_{M''}(n^{*})) \neq 0$,
    \cref{lemma: RInterpolationWorks} allows
    us to find some $N \in \mathfrak{R}(M, \{\rho_{\tt{tensor}},
    \zeta, \zeta_{1}, \xi\}) \cap
    \text{Sym}_{4}^{\tt{pd}}(\mathbb{R}_{> 0}) \subseteq
    \mathfrak{R}(M, \{\rho_{\tt{tensor}},
    \zeta, \zeta_{1}, \zeta_{2}\}) \cap
    \text{Sym}_{4}^{\tt{pd}}(\mathbb{R}_{> 0})$.

    Since $\zeta(N) \neq 0$, we see that $N$ must either be
    diagonal distinct, or $N_{11} = N_{22} \neq N_{33} \neq N_{44}$ pairwise distinct.
    If $N$ is diagonal distinct, \cref{theorem: diagDistHardness}
    implies that $\PlEVAL(N) \leq \PlEVAL(M)$ is
    $\#$P-hard.
    Otherwise,  since $\zeta_{i}(N) \neq 0$ for all
    $i \in \{1, 2\}$, \cref{lemma: order211Forms} implies
    that $\PlEVAL(N) \leq \PlEVAL(M)$ is $\#$P-hard.
\end{proof}

\begin{lemma}\label{lemma: formVReduction}
    Let $M \in \text{Sym}_{4}^{\tt{pd}}(\mathbb{R}_{> 0})$
    be of Form (V), such that
    $M_{11} = M_{22} = M_{33} \neq M_{44}$, and
    $\rho_{\tt{tensor}}(M) \neq 0$.
    Then, either $\PlEVAL(M)$ is $\#$P-hard, or
    $M$ is also isomorphic to a matrix
    of Form (I), or (IV).
\end{lemma}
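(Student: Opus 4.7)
The plan follows the template of \cref{lemma: formIIReduction}. Introduce the $\text{Sym}_{4}(\mathbb{R})$-polynomial
\[
    \xi(N) = (N_{12}-N_{34})^{2} + (N_{13}-N_{24})^{2} + (N_{14}-N_{23})^{2},
\]
which tests the three off-diagonal pair-equalities characterizing Form (V). Because every $\sigma \in S_{4}$ merely permutes the three pair-partitions $\{\{1,2\},\{3,4\}\}$, $\{\{1,3\},\{2,4\}\}$, $\{\{1,4\},\{2,3\}\}$ among themselves, a short check shows $\xi(N^{\sigma})=\xi(N)$, so $\xi(N)\neq 0$ certifies that $N$ is not isomorphic to any matrix of Form (V). Construct, in the manner of \cref{lemma: tensorPolynomial}, $\text{Sym}_{4}(\mathbb{R})$-polynomials $\zeta_{j}$ with $\zeta_{j}(N) = 0$ iff $N$ is isomorphic to a matrix of Form ($j$). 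By hypothesis $M$ is isomorphic to neither Form (I) nor Form (IV), while its diagonal pattern $(a,a,a,b)$ rules out the diagonal patterns of Forms (II), (III), (VI) by cardinality, so $\zeta_{j}(M) \neq 0$ for every $j \in \{1,2,3,4,6\}$; we also have $\rho_{\tt{tensor}}(M) \neq 0$ by hypothesis.

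The key computation is direct. For $M$ in the Form (V) shape with parameters $a,x,y,z,b$,
\[
    (M^{2})_{12}-(M^{2})_{34} = (a-b)x,\quad (M^{2})_{13}-(M^{2})_{24} = (a-b)y,\quad (M^{2})_{14}-(M^{2})_{23} = (b-a)z,
\]
so $\xi(M^{2}) = (a-b)^{2}(x^{2}+y^{2}+z^{2}) > 0$ since $a \neq b$ and $x,y,z > 0$. In particular $\xi(\mathcal{S}_{M}(2)) \neq 0$. Applying \cref{corollary: stretchingWorksPositive} to the countable family
\[
    \mathcal{F}' = \mathcal{F}_{M} \cup \{\rho_{\tt{tensor}},\zeta_{1},\zeta_{2},\zeta_{3},\zeta_{4},\zeta_{6},\xi\},
\]
with witnesses $\theta_{F}=1$ for every $F \in \mathcal{F}' \setminus \{\xi\}$ (where $F(M)\neq 0$) and $\theta_{\xi}=2$, produces a matrix $N \in \mathfrak{R}(M,\mathcal{F}') \cap \text{Sym}_{4}^{\tt{pd}}(\mathbb{R}_{>0})$.

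This $N$ satisfies $\xi(N)\neq 0$ (so it is not isomorphic to any Form (V) matrix), $\zeta_{j}(N)\neq 0$ for $j \in \{1,2,3,4,6\}$ (so it is not isomorphic to any of Forms (I), (II), (III), (IV), (VI)), and $\rho_{\tt{tensor}}(N) \neq 0$. Depending on the diagonal multiplicity profile of $N$, we finish by invoking the appropriate earlier hardness result: \cref{theorem: diagDistHardness} if $N$ is diagonal distinct, else one of \cref{lemma: order211Forms}, \cref{lemma: order22Forms}, \cref{lemma: order31Forms}, \cref{lemma: order4Forms} for the $(2,1,1)$, $(2,2)$, $(3,1)$, $(4)$ diagonal patterns respectively. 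In every case the exceptional Forms appearing in the hypothesis of the applicable lemma are already excluded for $N$, so $\PlEVAL(N)$ is $\#$P-hard, hence so is $\PlEVAL(M)$.

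The main obstacles to pin down are the $S_{4}$-invariance of $\xi$ and the construction of the polynomials $\zeta_{j}$; once the identity $\xi(M^{2}) = (a-b)^{2}(x^{2}+y^{2}+z^{2})$ is in hand, the remainder is a bookkeeping exercise routing the output of \cref{corollary: stretchingWorksPositive} through the correct earlier lemma. In contrast to the proof of \cref{lemma: formIIReduction}, no $R_{n}$-gadget or Vandermonde-style contradiction is required here, because stretching alone already breaks the Form (V) off-diagonal structure.
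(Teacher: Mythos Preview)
Your central observation is correct and is a genuine simplification over the paper: the direct computation $\xi(M^2) = (a-b)^2(x^2+y^2+z^2) > 0$ shows $\xi(\mathcal{S}_M(2)) \neq 0$, so the paper's entire detour through the $R_n$-gadget (building $M'$ near the identity, then running a Vandermonde argument on $(R_n M')_{14}-(R_n M')_{23}$ to force a contradiction with $M'_{33}=M'_{44}$) is unnecessary. The $S_4$-invariance of $\xi$ is also correct, since $S_4$ permutes the three perfect matchings of $K_4$.

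However, there is a real error in your bookkeeping. You assert that the diagonal pattern $(a,a,a,b)$ ``rules out the diagonal patterns of Forms (II), (III), (VI) by cardinality'', hence $\zeta_2(M)\neq 0$. This is false for Form~(II): its template only imposes $N_{11}=N_{22}$ with $N_{33},N_{44}$ unconstrained, and Form~(V) is literally a specialization of Form~(II) (set the Form~(II) parameters to $b=a$ and $t=x$). Since $M$ is of Form~(V), we have $\zeta_2(M)=0$, so $\zeta_2$ cannot sit in $\mathcal{F}'$ with witness $\theta=1$. Consequently your final claim ``the exceptional Forms \ldots\ are already excluded for $N$'' breaks down wherever Form~(II) appears among the exceptions (namely in \cref{lemma: order211Forms}, \cref{lemma: order22Forms}, \cref{lemma: order31Forms}, \cref{lemma: order4Forms}).

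The repair is to follow the paper's bookkeeping at this step: drop $\zeta_2$ from $\mathcal{F}'$ and add instead $\zeta(N)=(N_{11}-N_{44})(N_{22}-N_{44})(N_{33}-N_{44})$, which is nonzero at $M$. Then $\zeta(N)\neq 0$ restricts the diagonal pattern of $N$ to one of $(1,1,1,1)$, $(2,1,1)$, $(3,1)$, eliminating the $(2,2)$ and $(4)$ cases from your list. In the surviving cases, when $N$ happens to match Form~(II) one invokes \cref{lemma: formIIReduction} together with $\zeta_1(N)\neq 0$, exactly as the paper does in its ``easy'' branch. With that correction, your stretching-only argument goes through and is strictly shorter than the paper's proof.
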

\begin{proof}
    We will first let $\zeta: N \mapsto (N_{11} - N_{44})
    (N_{22} - N_{44})(N_{33} - N_{44})$.
    By our choice of $M$, we see that $\zeta(M) \neq 0$.
    Moreover, we note that $\zeta(N) \neq 0$ implies that
    $N$ cannot be isomorphic to any matrix of Form (III) or (VI),
    since in these two Forms, every diagonal element coincides with another diagonal element.
    We have already seen that there exist  
    $\text{Sym}_{4}(\mathbb{R})$-polynomials
    $\zeta_{1},  \dots, \zeta_{6}$ such that $\zeta_{i}(N) = 0$, for $i\in [6]$,
    if and only if $N$ is isomorphic to a matrix of the Form (I) - (VI), respectively.
    From our choice of $M$, we note that since
    $M_{11} = M_{22} = M_{33} \neq M_{44}$, it immediately
    follows that $\zeta_{3}(M) \neq 0$, and
    $\zeta_{6}(M) \neq 0$.
    We are given that $M$ is in Form (V) and hence $\zeta_{5}(M) = 0$.
    We will assume $\zeta_{i}(M) \neq 0$ for all
    $i \in \{1, 4\}$, and show that 
    $\PlEVAL(M)$ is $\#$P-hard.

    We will now define $\xi: N \mapsto (N_{14} - N_{23})^{2}
    + (N_{13} - N_{24})^{2} + (N_{12} - N_{34})^{2}$.
    We see that since $M$ is of Form (V), $\xi(M) = 0$.
    Moreover, if $\zeta(N) \neq 0$, and 
    $\xi(N) \neq 0$, that implies that
    $N$ is not isomorphic to any matrix of Form (V).
    Indeed, since  $\zeta(N) \neq 0$, $N_{44}$ is distinct from the other
    diagonal elements, and so the only permutations that could be  isomorphisms of $N$
    with a matrix of Form (V) must fix $4$. However the group $S_3$ of permutations on
    $\{1,2, 3\}$  is generated by $1 \leftrightarrow 2$
    and $1 \leftrightarrow 3$, both of which keep $\xi(N)$ invariant. 
    In other words, $[\zeta(N) \neq 0] \wedge [ \xi(N) \neq 0]$ implies that
    $\zeta_{5}(N) \neq 0$.
    So, if there exists some $\theta \in \mathbb{R}$
    such that $\xi(\mathcal{S}_{M}(\theta)) \neq 0$,
    we will be able to immediately find $N \in
    \mathfrak{R}(M, \{\rho_{\tt{tensor}},
    \zeta, \zeta_{1}, \zeta_{4}, \xi\}) \cap
    \text{Sym}_{4}^{\tt{pd}}(\mathbb{R}_{> 0}) \subseteq
    \mathfrak{R}(M, \{\rho_{\tt{tensor}},
    \zeta, \zeta_{1}, \zeta_{4}, \zeta_{5}\}) \cap
    \text{Sym}_{4}^{\tt{pd}}(\mathbb{R}_{> 0})$,
    by \cref{corollary: stretchingWorksPositive}.
    Since $\zeta(N) \neq 0$, we see that 
    $N_{11}, N_{22}, N_{33} \neq N_{44}$.
    If $N$ is diagonal distinct, then
    \cref{theorem: diagDistHardness} implies that
    $\PlEVAL(N) \leq \PlEVAL(M)$ is $\#$P-hard.
    Otherwise, if $N_{11}, N_{22}, N_{33}$ are not all identical,
    then from \cref{lemma: order211Forms}, we see that
unless $N$ is isomorphic to a matrix in  Form (I) or (II),
    $\PlEVAL(N) \leq \PlEVAL(M)$ is $\#$P-hard.
    But, we know that $\zeta_{1}(N) \neq 0$.
    So, $N$ cannot be  isomorphic to a matrix in Form (I).
    Moreover, from \cref{lemma: formIIReduction},
    we know that even if $N$ is isomorphic to a matrix in Form (II), since
    it is not isomorphic to any matrix of Form (I),
    $\PlEVAL(N) \leq \PlEVAL(M)$ is $\#$P-hard.
    This leaves only one possibility that
    $N_{11} = N_{22} = N_{33} \neq N_{44}$.
    Again, \cref{lemma: order31Forms} implies that
    unless $N$ is  isomorphic to a matrix in  Form (I), (II), (IV), or (V),
    $\PlEVAL(N) \leq \PlEVAL(M)$ is $\#$P-hard.
    But, by our construction of $N$, $\zeta_{i}(N) \neq 0$ for
    all $i \in \{1, 4, 5\}$.
    Similarly, even if $\zeta_{2}(N) = 0$, since
    $\zeta_{1}(N) \neq 0$,
    \cref{lemma: formIIReduction} implies that
    $\PlEVAL(N) \leq \PlEVAL(M)$ is $\#$P-hard.
    Therefore, we see that when 
    $\xi(\mathcal{S}_{M}(\theta)) \neq 0$ for  some $\theta \in \mathbb{R}$,
    $\PlEVAL(M)$ is $\#$P-hard.
    So, we may assume $\xi(\mathcal{S}_{M}(\theta)) = 0$ 
    for all $\theta \in \mathbb{R}$.

    We will now let
    $\mathcal{F}' = \{F': N \mapsto F(N^{3}) \ |\
    F \in \{\rho_{\tt{tensor}}, \zeta, \zeta_{1},
    \zeta_{4}\} \}$.
    Since $\mathcal{S}_{M}(0) = I$, we know that there
    exists some $\delta > 0$ such that
    $|\mathcal{S}_{M}(\theta)_{ij} - I_{ij}| < \frac{1}{3}$
    for all $i, j \in [4]$, for all
    $0 < \theta < \delta$.
    We can now use
    \cref{lemma: stretchingWorks} to find some
    $M' = \mathcal{S}_{M}(\theta^{*})
    \in \mathfrak{R}(M, \mathcal{F}' \cup 
    \{\rho_{\tt{tensor}}, \zeta, \zeta_{1}, \zeta_{4}\})
    \cap \text{Sym}_{4}^{\tt{pd}}(\mathbb{R}_{\neq 0})$
    such that $0 < \theta^{*} < \delta$.
    We note that by our assumption,
    $\xi(M') = 0$.

    Now, since $\zeta(M') \neq 0$, we see that
    $(M')_{44}$ is distinct from each of $(M')_{11}, (M')_{22}, (M')_{33}$.
    If $(M')_{11} \neq (M')_{22} \neq (M')_{33}$ are pairwise distinct, $M'$ would be
    diagonal distinct, and
    $\PlEVAL(M') \leq \PlEVAL(M)$ is $\#$P-hard,
    due to \cref{theorem: diagDistHardness}.
    On the other hand, if the \emph{set} $\{(M')_{11}, (M')_{22}, (M')_{33}\}$
    has cardinality 2, we know from
    \cref{lemma: order211Forms} that 
    $\PlEVAL(M') \leq \PlEVAL(M)$ is $\#$P-hard,
    unless $M'$ is isomorphic to a matrix in Form (I) or (II).
    But since $\zeta_{1}(M') \neq 0$, we know that
    $M'$ is not  isomorphic to a matrix in Form (I).
    Moreover, from \cref{lemma: formIIReduction},
    we also know that even if $\zeta_{2}(M') = 0$,
    since $\zeta_{1}(M') \neq 0$,
    $\PlEVAL(M') \leq \PlEVAL(M)$ is $\#$P-hard.
    Finally, we consider the case that
    $(M')_{11} = (M')_{22} = (M')_{33}$.
    From \cref{lemma: order31Forms}, we see that
    $\PlEVAL(M') \leq \PlEVAL(M)$ is $\#$P-hard,
    unless $M'$ is  isomorphic to a matrix in  Form (I), (II), (IV), or (V).
    Since $\zeta_{i}(M') \neq 0$ for $i \in \{1, 4\}$,
    we see that  Forms (I), or (IV) are ruled out for $M'$.
    Moreover, since $\zeta_{1}(M') \neq 0$,
    we see from \cref{lemma: formIIReduction} that
    even if $\zeta_{2}(M') = 0$,
    $\PlEVAL(M') \leq \PlEVAL(M)$ is $\#$P-hard.
    So, we may assume that $M'$ is of Form (V), such that
    $(M')_{11} = (M')_{22} = (M')_{33} \neq (M')_{44}$, 
    and $\xi(M') = 0$.
    
    We will now consider $\mathcal{R}_{M'}(n)$
    for all $n \geq 1$.
    Let us first assume that $\xi(R_{n}(M')) = 0$
    for all $n \geq 1$.
    We note that
    $$(R_{n}(M'))_{14} - (R_{n}(M'))_{23}
    = \sum_{a, b \in [4]}(M'_{ab})^{n} \cdot
    (M'_{1a}M'_{4b} - M'_{2a}M'_{3b}).$$
    We will let $X = \{M'_{ab}: a, b \in [4]\}$, and define
    $c_{ij}(x) = \sum_{a, b \in [4]: M'_{ab} = x}
    (M'_{ia}M'_{jb})$ for all $i, j \in [4]$.
    So, we see that
    $$(R_{n}(M'))_{14} - (R_{n}(M'))_{23} = 
    \sum_{x \in X} x^{n} \cdot
    (c_{14}(x) - c_{23}(x)).$$
    Since each $x \in X$ is distinct, the equations
    $\xi(R_{n}(M')) = 0$ form a full rank
    Vandermonde system of size $O(1)$.
    This implies that $c_{14}(x) - c_{23}(x) = 0$
    for all $x \in X$.
    But since $\zeta(M') \neq 0$, and $0 < \theta^{*} < \delta$ in the definition of $M'$, we ensured
    that $M'_{44} = M'_{ab}$ if and only if
    $(a, b) = (4, 4)$.
    Moreover, 
    from our construction of $M'$,
    we have also ensured that $\{(a, b): M'_{ab} = M'_{33}\} =
    \{(1, 1), (2, 2), (3, 3)\}$.
    Now, we see that
    $$c_{14}(M'_{44}) - c_{23}(M'_{44}) = 
    M'_{14}M'_{44} - M'_{24}M'_{34}.$$
    Since $\xi(M') = \xi(\mathcal{S}_{M}(\theta^{*})) = 0$, we also know that
    $M'_{24} = M'_{13}$, $M'_{23} = M'_{14}$,
    and $M'_{12} = M'_{34}$.
    So, we also have
    $$c_{14}(M'_{44}) - c_{23}(M'_{44})  = M'_{14}M'_{44} - M'_{13}M'_{34}.$$
    Moreover, we also see that
    $$c_{14}(M'_{33}) - c_{23}(M'_{33}) = 
    M'_{11}M'_{14} + M'_{12}M'_{24} + M'_{13}M'_{34} - 
    M'_{12}M'_{13} - M'_{22}M'_{23} - M'_{23}M'_{33}
    = M'_{13}M'_{34} - M'_{23}M'_{33},$$
    since $M'_{22}M'_{23} = M'_{11}M'_{14}$, and
    $M'_{12}M'_{13} = M'_{12}M'_{24}$.
    Then, it follows that 
    $$M'_{14}M'_{44} = M'_{13}M'_{34} = M'_{14}M'_{33},$$
    which implies that $M'_{33} = M'_{44}$, since $M'_{14} \ne 0$.
    But by our construction of $M'$, we have $\zeta(M') \ne 0$.
    This is a contradiction.

    Therefore, our assumption that $\xi(R_{n}(M')) = 0$
    for all $n \geq 1$ must be false.
    Let $\xi \circ R_{n}$ denote the composition of  $\xi$ and  $R_{n}$.
    There exists some $n^{*} \geq 1$ such that
    $(\xi \circ R_{n^{*}})(M') = 
    (\xi \circ R_{n^{*}})(\mathcal{S}_{M}(\theta^{*})) \neq 0$.
    So, \cref{corollary: stretchingWorksPositive} allows
    us to find some $M'' \in \mathfrak{R}(M,
    \mathcal{F}' \cup \{(\xi \circ R_{n^{*}})\}) \cap
    \text{Sym}_{4}^{\tt{pd}}(\mathbb{R}_{> 0})$.
    Since $F(\mathcal{R}_{M''}(1))
    = F((M'')^{3}) \neq 0$ for all
    $F \in \{\rho_{\tt{tensor}}, \zeta, \zeta_{1}, \zeta_{4}\}$,
    and $\xi(\mathcal{R}_{M''}(n^{*})) \neq 0$,
    \cref{lemma: RInterpolationWorks} allows
    us to find some $N \in \mathfrak{R}(M, \{\rho_{\tt{tensor}},
    \zeta, \zeta_{1}, \zeta_{4}, \xi\}) \cap
    \text{Sym}_{4}^{\tt{pd}}(\mathbb{R}_{> 0}) \subseteq
    \mathfrak{R}(M, \{\rho_{\tt{tensor}},
    \zeta, \zeta_{1}, \zeta_{4}, \zeta_{5}\}) \cap
    \text{Sym}_{4}^{\tt{pd}}(\mathbb{R}_{> 0})$.

    Once again, in view of $\zeta(N) \neq 0$, we consider the 
    cardinality of the \emph{set} $\{N_{11}, N_{22}, N_{33}\}$.
    If the cardinality is 3, then $N$ is diagonal distinct, and 
    \cref{theorem: diagDistHardness}
    implies that $\PlEVAL(N) \leq \PlEVAL(M)$ is
    $\#$P-hard.
    If the cardinality is 2, then since $\zeta_1(N) \ne 0$,  by  \cref{lemma: order211Forms} 
    and \cref{lemma: formIIReduction}, $\PlEVAL(N) \leq \PlEVAL(M)$ is
    $\#$P-hard.  If the cardinality is 1, then since $\zeta_4(N) \ne 0$ and $\zeta_5(N) \ne 0$,
    and also  $\zeta_1(N) \ne 0$, by \cref{lemma: order31Forms} and by
    \cref{lemma: formIIReduction},  $\PlEVAL(N) \leq \PlEVAL(M)$ is $\#$P-hard.
\end{proof}

\cref{lemma: order211Forms}, \cref{lemma: order22Forms},
\cref{lemma: order31Forms}, \cref{lemma: order4Forms}, 
\cref{lemma: formIIReduction}, and
\cref{lemma: formVReduction}
together imply the following theorem.

\begin{theorem}\label{theorem: nonDiagAllForms}
    Let $M \in \text{Sym}_{4}^{\tt{pd}}(\mathbb{R}_{> 0})$
    be a non-diagonally distinct matrix such that
    $\rho_{\tt{tensor}}(M) \neq 0$.
    Let $\mathcal{F}$ be a countable set of
    $\text{Sym}_{4}(\mathbb{R})$-polynomials
    such that $F(M) \neq 0$ for all $F \in \mathcal{F}$.
    Unless $M$ is isomorphic to a matrix
    of Form (I), (III), (IV), or (VI), 
    $\PlEVAL(M)$ is $\#$P-hard.
    Moreover, 
    \begin{enumerate}
        \item If $M$ is isomorphic to a matrix of Form (I), but not
        isomorphic to any matrix of Form (III), (IV), or (VI),
        then either $\PlEVAL(M)$ is $\#$P-hard, or
        there exists some $N \in \mathfrak{R}(M, \mathcal{F}) \cap
        \text{Sym}_{4}^{\tt{pd}}(\mathbb{R}_{> 0})$ such that
        $N$ is isomorphic to a matrix of Form (I), and
        $N_{11} = N_{22}$, but $N_{22} \neq N_{33} \neq N_{44}$
        are pairwise distinct.

        \item If $M$ is isomorphic to a matrix of Form (III), but not
        isomorphic to any matrix of Form (VI),
        then either $\PlEVAL(M)$ is $\#$P-hard, or
        there exists some $N \in \mathfrak{R}(M, \mathcal{F}) \cap
        \text{Sym}_{4}^{\tt{pd}}(\mathbb{R}_{> 0})$ such that
        $N$ is isomorphic to a matrix of Form (III), and
        $N_{11} = N_{22} \neq N_{33} = N_{44}$.

        \item If $M$ is isomorphic to a matrix of Form (IV), but not
        isomorphic to any matrix of Form (VI),
        then either $\PlEVAL(M)$ is $\#$P-hard, or
        there exists some $N \in \mathfrak{R}(M, \mathcal{F}) \cap
        \text{Sym}_{4}^{\tt{pd}}(\mathbb{R}_{> 0})$ such that
        $N$ is isomorphic to a matrix of Form (IV), and
        $N_{11} = N_{22} = N_{33} \neq N_{44}$.
    \end{enumerate}
\end{theorem}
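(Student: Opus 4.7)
The first assertion — that $\PlEVAL(M)$ is $\#$P-hard unless $M$ is isomorphic to a matrix of Form (I), (III), (IV), or (VI) — will be a direct case analysis on the diagonal multiplicity pattern of $M$. Since $M$ is non-diagonally distinct, this pattern is one of $(2,1,1)$, $(2,2)$, $(3,1)$, or $(4)$. I will apply \cref{lemma: order211Forms}, \cref{lemma: order22Forms}, \cref{lemma: order31Forms}, and \cref{lemma: order4Forms} in these four cases respectively, each of which narrows $M$ down (up to isomorphism) to one of Forms (I)--(VI) or else declares $\PlEVAL(M)$ hard. Then \cref{lemma: formIIReduction} eliminates Form (II) by reducing to Form (I), and \cref{lemma: formVReduction} eliminates Form (V) by reducing to Form (I) or Form (IV), leaving exactly the four surviving Forms.

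For the three numbered case conclusions, my plan is to construct each reduct $N$ as a stretching $N = \mathcal{S}_{M}(\theta^{*})$ of $M$ for a suitable $\theta^{*}$ close to $1$. The key structural observation I will exploit is that each of Forms (I), (III), and (IV) is exactly the set of symmetric $4\times 4$ matrices that commute (by simultaneous conjugation) with a specific subgroup $\Gamma$ of $4\times 4$ permutation matrices: $\Gamma = \langle P_{(1\,2)}\rangle$ for Form (I), $\Gamma = \langle P_{(1\,2)(3\,4)}\rangle$ for Form (III), and $\Gamma$ equal to the image of $S_{3}$ acting on indices $\{1,2,3\}$ for Form (IV). Since $\mathcal{S}_{M}(\theta) = HD^{\theta}H^{\tt{T}}$ is a function of $M$ (namely $M^{\theta}$), it commutes with every matrix that $M$ commutes with, so stretching preserves Form membership automatically. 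I will then invoke \cref{lemma: stretchingWorks} with $\mathcal{F}$ augmented by the difference-of-diagonal polynomials $\zeta_{ij}(N) = N_{ii} - N_{jj}$ for each pair $(i,j)$ that must be distinct in the target diagonal pattern — for example, $(1,3)$, $(1,4)$, $(3,4)$ in Case~1, the single pair $(1,3)$ in Case~2, and the single pair $(1,4)$ in Case~3. Positivity of entries and positive definiteness will be retained by restricting $\theta^{*}$ to a small neighborhood of $\theta = 1$, exactly as in the proof of \cref{corollary: stretchingWorksPositive}.

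The hard part will be establishing the non-vanishing hypothesis required by \cref{lemma: stretchingWorks}: that each $\zeta_{ij}(\mathcal{S}_{M}(\theta))$ is not the identically zero analytic function of $\theta$. Writing $\mathcal{S}_{M}(\theta)_{ii} = \sum_{k} H_{ik}^{2}\lambda_{k}^{\theta}$, identical vanishing of $\zeta_{ij}(\mathcal{S}_{M}(\theta))$ would force $\sum_{k\,:\,\lambda_{k}=\mu}(H_{ik}^{2} - H_{jk}^{2}) = 0$ for every eigenvalue $\mu$ of $M$. Combining this new squared-entry relation with the $\Gamma$-invariance already imposed on $H$ will, I expect, force $M$ to commute with a strictly larger permutation subgroup than $\Gamma$, and therefore to be isomorphic to one of the more symmetric Forms (III), (IV), or (VI), depending on which pair $(i,j)$ triggers the vanishing and which Form we started from. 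In the extreme case where the added relations drive all $|H_{ik}|$ to a single constant $1/2$, $H$ becomes a scaled Hadamard matrix, and a direct Sylvester--Hadamard calculation identifies $M$ with Form (VI). Each such conclusion contradicts the case hypothesis, establishing the required non-vanishing and hence the existence of $N$.
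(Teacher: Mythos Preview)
Your proof of the first assertion is correct and matches the paper exactly: the paper simply states that \cref{lemma: order211Forms}, \cref{lemma: order22Forms}, \cref{lemma: order31Forms}, \cref{lemma: order4Forms}, \cref{lemma: formIIReduction}, and \cref{lemma: formVReduction} together imply the theorem.

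For items 1--3, your stretching approach is clean and the form-preservation observation (that Forms (I), (III), (IV) are exactly the centralizers of the respective permutation subgroups, hence preserved by $\mathcal{S}_{M}$) is correct and elegant. However, the non-vanishing step has a genuine gap. Your heuristic --- that identical vanishing of $\zeta_{ij}(\mathcal{S}_{M}(\theta))$ forces $M$ to commute with a \emph{larger} permutation subgroup --- is not correct as stated. The condition $\sum_{k:\lambda_{k}=\mu}(H_{ik}^{2}-H_{jk}^{2})=0$ only constrains the \emph{squares} of eigenvector entries; to get a permutation symmetry swapping $i\leftrightarrow j$ you would need the signs to line up consistently across all eigenvectors, and they need not. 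Concretely, for $M$ of Form (I), the $V_{-}$-eigenvector $(1,-1,0,0)/\sqrt{2}$ contributes $H_{1k}^{2}=\tfrac{1}{2}\neq 0=H_{3k}^{2}$, so $\zeta_{13}\equiv 0$ would force an eigenvalue \emph{coincidence} between $a-x$ and some eigenvalue of $M|_{V_{+}}$, together with delicate norm conditions on the $V_{+}$-eigenvectors --- not a permutation symmetry. Showing that this combination forces $M$ into Form (III), (IV), or (VI) requires a careful spectral case analysis you have not supplied; your ``I expect'' is exactly where the work lies.

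The paper's (implicit) route for items 1--3 is different: rather than a direct stretching argument, it re-runs the \emph{proofs} of \cref{lemma: order22Forms}, \cref{lemma: order31Forms}, \cref{lemma: order4Forms} with the given $\mathcal{F}$ carried along. Those proofs branch on whether the relevant rows are order identical. When they are not, \cref{lemma: OrderedDistImpliesDiagDist} (which uses the richer $R_{n}(T_{m}M)$ gadget, not just stretching) produces a reduct with the needed diagonal separation; when they are, the proof pins $M$ itself into one of the more symmetric Forms, which the hypothesis of the item then excludes. For instance, in item~1 with $b=c$: Form (I) with $b=c$ is Form (III) exactly when $y=z$, and rows $3,4$ are order identical exactly when $y=z$; so ``not Form (III)'' gives rows $3,4$ not order identical, and \cref{lemma: OrderedDistImpliesDiagDist} applies. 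This row-order-identical dichotomy is what substitutes for your unproved eigenvector analysis.
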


\begin{remark*}
    In the statement of \cref{theorem: nonDiagAllForms}, a 4th
     item not listed explicitly but is logically implied is that
    $M$ can be isomorphic to a matrix of Form (VI).
    The enumeration is in a reverse order of the last appearance of
    (I), (III), (IV), or (VI),
    when $M$ is isomorphic to a matrix
    of Form (I), (III), (IV), or (VI). 
    After the implicit 4th
     item  of Form (VI), item 3 is when $M$ is in Form (IV)
    but not  (VI), item 2 is  when $M$ is in Form (III) but not (IV) or (VI),
    and item 1 is  when $M$ is in Form (I) but not (III) or (IV) or (VI).
    We note that if a matrix $M$ is both isomorphic to a matrix of
    Form (III), and also isomorphic to a matrix of Form (IV), then
    in fact all diagonal entries are equal, and all off diagonal entries are equal.
    To see this, in Form (IV) there are three equal diagonal entries and in Form (III)
    they come in two equal pairs $(a,a)$ and $(b,b)$. Thus $a=b$ and 
    all diagonal entries are equal. For the off diagonal entries, define two
    graphs $K$ and $K'$, both a copy of  $K_4$, but with
    labeled edges  according to Forms (III) and  (IV)
    respectively. In $K$ the edges will be labeled with $t$, $x$, $y$ and $z$,
    and the list of incident edges for the 4 vertices is
    $(tyz, tyz, xyz, xyz)$.
    In $K'$ we will label them $x'$ and $z'$, and the list of incident edges
    is $(x'x'z', x'x'z', x'x'z', z'z'z')$. Being isomorphic, from $z'z'z'$
    we have $y=z$ and either $t=y$ or $x=y$. Hence there are two equal triples
    in $K$, which implies that $x'=z'$ in $K'$, and hence all labels are equal.
%
Thus, such a matrix $M$ is in the Potts model, which is a special case of Form (VI).
    Consequently, in the 2nd (respectively, the 3rd)  item,
    when we assume that $M$ is isomorphic to a matrix of Form (III)
    (respectively, Form (IV)), but not isomorphic to any matrix
    of Form (VI), it follows that $M$ is 
    not isomorphic to any matrix of Form (IV) (respectively, Form (III))
    either.  (In particular, in item 2, it is not explicitly stated that
    $M$ is not in Form (IV), as being in  Form (III) but not Form (VI) already implies this.)
\end{remark*}
\section{Forms (I), (III), (IV) and (VI)}\label{sec: all_forms}

Following \cref{theorem: nonDiagAllForms}, we now only
need to deal with the matrices of the Forms (I), (III),
(IV), and (VI), as they are in \cref{theorem: nonDiagAllForms}.
We will deal with these Forms, one by one.

\begin{remark*}
    Due to \cref{theorem: nonDiagAllForms}, from this point onwards,
    when we refer to a matrix $M$ of Form (I), we may assume that
    $M_{11} = M_{22}$, but $M_{22} \neq M_{33} \neq M_{44}$ are
    pairwise distinct.
    Similarly, if we refer to a matrix of Form (III), we may
    assume that $M_{11} = M_{22} \neq M_{33} = M_{44}$, and if we refer to
     a matrix of Form (IV) we may also assume that it  satisfies
    $M_{11} = M_{22} = M_{33} \neq M_{44}$.
\end{remark*}

\subsection{Form (I)}\label{sec: form_I}

We will now deal with matrices of Form (I)
such that $M_{11} = M_{22}$, and 
$M_{22} \neq M_{33} \neq M_{44}$ are pairwise distinct.
Our strategy will be to show that,
if the matrix
$M \in \text{Sym}_{4}^{\tt{pd}}(\mathbb{R}_{> 0})$
is of Form (I) such that
$\rho_{\tt{tensor}}(M) \neq 0$, then
for any $\mathbf{0} \neq \mathbf{x} \in \chi_{4}$, we have
$\Psi_{\mathbf{x}}(M) \neq 0$.
Before we can jump into it, we will prove that
we may assume that the matrix $M$ of Form (I)
has some additional structural properties.
\begin{figure}[ht]
\[
\begin{pmatrix}
            a & x & y & z\\
            x & a & y & z\\
            y & y & b & t\\
            z & z & t & c
        \end{pmatrix}
        \]
        \caption{\label{Form-I-in-abc} Form (I)}
\end{figure}
\begin{lemma}\label{lemma: formIDistinct}
    Let $M \in \text{Sym}_{4}^{\tt{pd}}(\mathbb{R}_{> 0})$
    be of Form (I) such that $\rho_{\tt{tensor}}(M) \neq 0$,
    where  $M_{11} = M_{22}$, but $M_{22} \neq M_{33} \neq M_{44}$ are
    pairwise distinct.
    Let $\mathcal{F}$ be a countable set of 
    $\text{Sym}_{4}(\mathbb{R})$-polynomials such that
    $F(M) \neq 0$ for all $F \in \mathcal{F}$.
    Then, either $\PlEVAL(M)$ is $\#$P-hard, or
    there exists some $N \in
    \mathfrak{R}(M, \mathcal{F} \cup \{\rho_{\tt{tensor}} \}) \cap
    \text{Sym}_{4}^{\tt{pd}}(\mathbb{R}_{> 0})$ that is
    of Form (I), such that
    $N_{11} \neq N_{33} \neq N_{44}$ and
    $N_{22} \neq N_{33} \neq N_{44}$ are both pairwise distinct,
    $N_{12} \neq N_{13} \neq N_{14}$ are pairwise distinct,
    $N_{34} \neq N_{13}, N_{14}$, and
    $N_{ii} \neq N_{jk}$ for all $i \in [4]$ and for all $j \neq k \in [4]$.
\end{lemma}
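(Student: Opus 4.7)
The plan is to obtain $N$ by repeatedly applying the deformation gadgets $\mathcal{T}_M$ (\cref{definition: mathcalT}), $\mathcal{S}_M$ (\cref{definition: mathcalS}), and $\mathcal{R}_M$ (\cref{definition: RInterpolation}) to $M$, each time invoking \cref{lemma: thickeningWorks}, \cref{corollary: stretchingWorksPositive}, or \cref{lemma: RInterpolationWorks} to stay in $\mathfrak{R}(M, \mathcal{F}\cup\{\rho_{\tt{tensor}}\})\cap \text{Sym}_{4}^{\tt{pd}}(\mathbb{R}_{>0})$ while ensuring that additional $\text{Sym}_4(\mathbb{R})$-polynomials become nonzero. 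The target inequalities are encoded by a finite family of polynomials: $N_{11}-N_{33}$, $N_{11}-N_{44}$, $N_{22}-N_{33}$, $N_{22}-N_{44}$, $N_{33}-N_{44}$; the pairwise differences among $N_{12}, N_{13}, N_{14}$; $N_{34}-N_{13}$ and $N_{34}-N_{14}$; and $N_{ii}-N_{jk}$ for every $i\in[4]$, $j\neq k\in[4]$. The plan is to drive all of these simultaneously away from $0$ while staying of Form (I).

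The key structural observation is that Form (I) is exactly the set of symmetric matrices invariant under the simultaneous row-column transposition $\sigma=(1\ 2)$. Both gadgets preserve this symmetry automatically: for $\mathcal{T}_M$, uniqueness of the generating-set exponents (as in \cref{equation: generatingM}) forces $e_{ij t} = e_{\sigma(i)\sigma(j) t}$ whenever $M_{ij}=M_{\sigma(i)\sigma(j)}$, so $\mathcal{T}_M(\mathbf{p})$ is $\sigma$-invariant; for $\mathcal{S}_M$, since $P_\sigma M P_\sigma^{\tt T}=M$ the spectral projectors of $M$ commute with $P_\sigma$, and we may choose $H$ respecting the $\pm 1$-eigenspace decomposition of $P_\sigma$ so that $P_\sigma\mathcal{S}_M(\theta)P_\sigma^{\tt T}=\mathcal{S}_M(\theta)$; the same argument applies to $\mathcal{R}_M$. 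Thus every intermediate matrix produced by the chain of reductions will already be of Form (I), and we need only worry about achieving the listed nonequalities.

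The diagonal-diagonal inequalities $N_{ii}\neq N_{jj}$ for the prescribed pairs follow at once because $a,b,c$ are pairwise distinct in $M$, so the corresponding monomials in $\mathcal{T}_M(\mathbf{p})$ are distinct and the differences are nonzero polynomials in $\mathbf{p}$; likewise, the condition $N_{ii}\neq N_{jk}$ whenever $M_{ii}\neq M_{jk}$ is preserved generically by $\mathcal{T}_M$. The only genuinely problematic cases are conditions $\zeta_s$ that already fail on $M$, for example $M_{12}=M_{13}$ ($x=y$) or $M_{34}=M_{14}$ ($t=z$) or $M_{11}=M_{12}$ ($a=x$). For each such $\zeta_s$, I run the dichotomy used in \cref{lemma: formIIReduction} and \cref{lemma: formVReduction}: if $\zeta_s(\mathcal{S}_M(\theta))$ is not identically $0$, then \cref{corollary: stretchingWorksPositive} yields a reduct making it nonzero while preserving all previously achieved polynomial conditions; if it vanishes identically, I pass to $\mathcal{R}_M$ and look at $\zeta_s(\mathcal{R}_M(n))$ for integer $n$. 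The main obstacle, which I expect to occupy most of the proof, is to show that if $\zeta_s$ vanishes identically on all three gadget families $\mathcal{T}_M, \mathcal{S}_M, \mathcal{R}_M$, then the resulting algebraic identities (expanded via the spectral decomposition $M=HDH^{\tt T}$ and matched Vandermonde-style against powers, as in the proof of \cref{lemma: formIIReduction}) force either a contradiction with $\rho_{\tt{tensor}}(M)\neq 0$, or a symmetry forcing $M$ to be isomorphic to a matrix of Form (III), (IV), or (VI); by \cref{theorem: nonDiagAllForms} and the reduction lemmas of \cref{sec: 4x4nonDiagDist}, $\PlEVAL(M)$ is then $\#$P-hard, which is the alternative conclusion of the lemma. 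Once every $\zeta_s$ has been made nonzero without losing the earlier ones, the composite reduct $N$ is of Form (I), lies in $\text{Sym}_{4}^{\tt{pd}}(\mathbb{R}_{>0})$, and satisfies all the required distinctness conditions.
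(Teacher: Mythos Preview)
Your high-level plan is sound, and the observation that Form~(I) is exactly the $(1\,2)$-fixed locus, hence automatically preserved by $\mathcal{T}_M$, $\mathcal{S}_M$, and $\mathcal{R}_M$, is correct and actually cleaner than what the paper does at that step (the paper instead checks at the end whether the output has left Form~(I) and falls back to the hardness alternative via \cref{theorem: diagDistHardness}, \cref{lemma: order211Forms}, and \cref{lemma: formIIReduction}).

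However, you explicitly flag the ``main obstacle'' as unresolved, and that is where the proof actually lives. The paper's key idea, which your proposal does not contain, is to \emph{first} pass to $M'=\mathcal{S}_M(\theta^*)$ with $\theta^*$ close to $0$, so that $M'$ is entrywise close to the identity. This guarantees two things at once: (i) every diagonal entry dominates every off-diagonal one in absolute value, so the condition $N_{ii}\neq N_{jk}$ comes for free (this is how the paper handles $\zeta'$, noting $\zeta'(I)\neq 0$); and (ii) the diagonal entries $M'_{33}$ and $M'_{44}$ are each achieved \emph{uniquely} in the set $\{M'_{ab}:a,b\in[4]\}$. Consequently, in the Vandermonde expansion $(R_nM')_{1i}-(R_nM')_{1j}=\sum_{x}x^n(c_{1i}(x)-c_{1j}(x))$, the coefficients at $x=M'_{33}$ and $x=M'_{44}$ each collapse to a single product, e.g.\ $M'_{13}(M'_{i3}-M'_{j3})$, which can vanish only if a diagonal entry equals an off-diagonal one---impossible near the identity. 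The same argument handles $(R_nM')_{1i}-(R_nM')_{34}$. Thus some $n$ separates every required off-diagonal pair, with no case analysis against $\rho_{\tt tensor}$ and no recursion to other Forms.

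Your proposed fallback---assume all of $\mathcal{T}_M$, $\mathcal{S}_M$, $\mathcal{R}_M$ fail to separate a given pair and then derive either a tensor contradiction or a reduction to Forms~(III), (IV), (VI)---would require a separate algebraic argument for each of the five off-diagonal conditions, and there is no a priori reason each should force a tensor structure (that mechanism is specific to the single condition $N_{14}=N_{23}$ treated in \cref{lemma: formIIReduction}). Moreover, invoking hardness of Forms~(III), (IV), (VI) here would be circular: those lemmas are proved \emph{after} \cref{lemma: formIDistinct} and rely on \cref{lemma: formIHardness}, which in turn uses the present lemma. The near-identity trick avoids all of this.
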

\begin{remark*}
    The matrix $N$ of  Form (I) from the conclusion of \cref{lemma: formIDistinct} 
    stipulates that all diagonal elements are distinct except $M_{11} =M_{22}$, and are distinct
    from all off diagonal elements. And all  off diagonal elements denoted by distinct letters 
    in \cref{Form-I-in-abc} are distinct except possibly $x=t$.
\end{remark*}
\begin{proof}
    We will first define $\zeta: N \mapsto (N_{11} - N_{33})
    (N_{11} - N_{44})(N_{22} - N_{33})
    (N_{22} - N_{44})(N_{33} - N_{44})$, and
    $\zeta': N \mapsto \prod_{i \in [4], \ j \neq k \in [4]}(N_{ii} - N_{jk})$.
    We note that by our choice of $M$, $\zeta(M) \neq 0$.
    While we cannot claim that $\zeta'(M) \neq 0$,
    we note that $\zeta'(I) \neq 0$.
    We will let $\mathcal{F}' = \{F': N \mapsto F(N^{3})\ |\
    F \in \mathcal{F} \cup \{\rho_{\tt{tensor}}, \zeta,
    \zeta'\}\}$.
    We will now consider $\mathcal{S}_{M}(\theta)$.
    We know that $F'(\mathcal{S}_{M}(\theta)) \neq 0$
    for all $F' \in \mathcal{F}'$ for some
    $\theta \in \{\frac{1}{3}, 0\}$.
    We also know that there exists some $\delta > 0$
    such that $|\mathcal{S}_{M}(\theta)_{ij} - I_{ij}|
    < \frac{1}{3}$ for all $i, j \in [4]$, and
    $0 < \theta < \delta$.
    We can therefore use \cref{lemma: stretchingWorks} to find
    some $M' = \mathcal{S}_{M}(\theta^{*})
    \in \mathfrak{R}(M, \mathcal{F}' \cup \{\zeta, \zeta'\})
    \cap \text{Sym}_{4}^{\tt{pd}}(\mathbb{R}_{\neq 0})$
    for some $0 < \theta^{*} < \delta$.

    We will now consider $R_{n}(M')$ for all $n \geq 1$.
    We will let $X = \{M'_{ab}: a, b \in [4]\}$,
    and for each $x \in X$, and $i, j \in [4]$, define
    $c_{ij}(x) = \sum_{a, b \in [4]: M'_{ab} = x}
    (M'_{ia}M'_{jb})$.
    Then for any $i \neq j \in \{2, 3, 4\}$, we see that
    $$(R_{n}(M'))_{1i} - (R_{n}(M'))_{1j}
    = \sum_{x \in X}x^{n} \cdot 
    (c_{1i}(x) - c_{1j}(x)).$$
    Let us assume that for some $i \neq j \in \{2, 3, 4\}$,
    $(R_{n}(M'))_{1i} - (R_{n}(M'))_{1j} = 0$
    for all $n \geq 1$.
    In that case, we see that we have a full rank
    Vandermonde system of linear equations, which implies
    that $c_{1i}(x) - c_{1j}(x) = 0$
    for all $x \in X$.

    From our construction of $M'$, we know that
    $M'_{ab} = M'_{33}$ implies that $(a, b) = (3, 3)$, and that
    $M'_{ab} = M'_{44}$ implies that $(a, b) = (4, 4)$.  This is because by choosing
    $0 < \theta^{*} < \delta$ our $M'$ is close to $I$, and thus diagonal elements are all distint
    from off diagonal elements, and $\zeta(M') \ne 0$ separates $M'_{33}$ and $M'_{44}$ from each other and
    also from the other diagonal
    elements. 
    So, we see that
    $$c_{1i}(M'_{33}) - c_{1j}(M'_{33}) = 
    M'_{13}M'_{i3} - M'_{13}M'_{j3}, ~~\text{ and }~~
    c_{1i}(M'_{44}) - c_{1j}(M'_{44}) = 
    M'_{14}M'_{i4} - M'_{14}M'_{j4}.$$
    We note that by our choice of $i \neq j \in \{2, 3, 4\}$,
    at least one of $i, j \in \{3, 4\}$.
    Without loss of generality, we may assume that it is $i$.
    If $i = 3$, that implies that
    $M'_{13}M'_{33} = M'_{13}M'_{j3}$ for some
    $j \neq 3$, and if $i = 4$, that implies that
    $M'_{14}M'_{44} = M'_{14}M'_{j4}$ for some
    $j \neq 4$.
    As $M'_{13}, M'_{14} \ne 0$,
    in either case, we get a contradiction, since
    by our choice of $M'$, no diagonal entry can
    be equal to a non-diagonal entry.
    Therefore, for each $i \neq j \in \{2, 3, 4\}$,
    there exists some $n_{ij} \geq 1$ such that
    $(R_{n_{ij}}(M'))_{1i} - (R_{n_{ij}}(M'))_{1j} \neq 0$.

    Similarly, for $i \in \{3, 4\}$, we see that
    for all $n \geq 1$,
    $$(R_{n}(M'))_{1i} - (R_{n}(M'))_{34}
    = \sum_{x \in X}x^{n} \cdot 
    (c_{1i}(x) - c_{34}(x)).$$
    Let us assume that for some $i \in \{3, 4\}$,
    $(R_{n}(M'))_{1i} - (R_{n}(M'))_{34} = 0$
    for all $n \geq 1$.
    Once again, since this forms a full rank
    Vandermonde system of linear equations, we see that
    $c_{1i}(x) - c_{34}(x) = 0$ for all $x \in X$.
    If $i = 3$, we will consider
    $c_{13}(M'_{44}) - c_{34}(M'_{44})
    = M'_{14}M'_{34} - M'_{34}M'_{44} = 0$,
    which implies that $M'_{14} = M'_{44}$, and if $i = 4$,
    we consider $c_{14}(M'_{33}) - c_{34}(M'_{33})
    = M'_{13}M'_{34} - M'_{33}M'_{34} = 0$,
    which implies that $M'_{13} = M'_{33}$.
    Since neither of these are possible, we
    conclude that for each $i \in \{3, 4\}$,
    there exists some $n'_{i} \geq 1$ such that
    $(R_{n'_{i}}(M'))_{1i} - (R_{n'_{i}}(M'))_{34} \neq 0$.

    For each $i \neq j \in \{2, 3, 4\}$, we can now
    construct a $\text{Sym}_{4}(\mathbb{R})$-polynomial
    $\xi_{ij}: N \mapsto
    (R_{n_{ij}}(N))_{1i} - (R_{n_{ij}}(N))_{1j}$.
    For each $i \in \{3, 4\}$, we can also construct
    $\xi'_{i}: N \mapsto
    (R_{n'_{i}}(N))_{1i} - (R_{n'_{i}}(N))_{34}$.
    We have seen that
    $\xi_{ij}(\mathcal{S}_{M}(\theta^{*})) \neq 0$
    for all $i \neq j \in \{2, 3, 4\}$,
    and $\xi'_{i}(\mathcal{S}_{M}(\theta^{*})) \neq 0$,
    for all $i \in\{3, 4\}$.
    So, we can use \cref{corollary: stretchingWorksPositive}
    to find some $M'' \in \mathfrak{R}(M,
    \mathcal{F}' \cup \{\xi_{23}, \xi_{24}, \xi_{34},
    \xi'_{3}, \xi'_{4}\}) \cap
    \text{Sym}_{4}^{\tt{pd}}(\mathbb{R}_{> 0})$.
    But then, we see that
    $F(\mathcal{R}_{M''}(1)) \neq 0$ for all
    $F \in \mathcal{F} \cup \{\rho_{\tt{tensor}}, \zeta, \zeta'\}$.
    Furthermore, let ${\xi}^o_{ij}: N \mapsto N_{1i} - N_{1j}$
    for all $i \neq j \in \{2, 3, 4\}$, and let
    ${\xi'}^o_{i}: N \mapsto N_{1i} - N_{34}$ for all $i \in \{3, 4\}$.
    Then
    ${\xi}^o_{ij}(\mathcal{R}_{M''}(n_{ij})) \neq 0$
    for all $i \neq j \in \{2, 3, 4\}$, and
    ${\xi'}^o_{i}(\mathcal{R}_{M''}(n'_{i})) \neq 0$
    for all $i \in \{3, 4\}$.  
    So, \cref{lemma: RInterpolationWorks} allows us
    to find the required
    $N \in \mathfrak{R}(M, \mathcal{F} \cup
    \{\rho_{\tt{tensor}}, \zeta, \zeta', \xi^o_{23}, \xi^o_{24}, \xi^o_{34},
    \xi'^o_{3}, \xi'^o_{4}\})
    \cap \text{Sym}_{4}^{\tt{pd}}(\mathbb{R}_{> 0})$.
    
    Since $\zeta(N) \neq 0$, we see that
    $N_{11} \neq N_{33} \neq N_{44}$, and
    $N_{22} \neq N_{33} \neq N_{44}$ are both pairwise distinct.
    If $N_{11} \neq N_{22}$, then $N$
    is diagonal distinct, and since
    $\rho_{\tt{tensor}}(N) \neq 0$,
    \cref{theorem: diagDistHardness}  allows
    us to conclude that $\PlEVAL(M)$ is
    $\#$P-hard.
    Otherwise, if $N$ is not of Form (I), then
    \cref{lemma: order211Forms} and \cref{lemma: formIIReduction}
    tell us once
    again that $\PlEVAL(N)$ is $\#$P-hard.
    So, the only other possibility is that $N$ is of Form (I)
    such that $N_{12} \neq N_{13} \neq N_{14}$ are pairwise distinct (because
    $\xi^o_{23}(N) \ne 0, \xi^o_{24}(N) \ne 0, \xi^o_{34}(N) \ne 0$),
    and
    $N_{34} \neq N_{13}, N_{14}$ (because 
    $\xi'^o_{3}(N) \ne 0, \xi'^o_{4}(N) \ne 0$),
    and $N_{ii} \neq N_{jk}$ for all $i \in [4]$ and for all $j \neq k \in [4]$
    (because $\zeta'(N) \ne 0$).
\end{proof}

\cref{lemma: formIDistinct} allows us to claim that
for any $M \in \text{Sym}_{4}^{\tt{pd}}(\mathbb{R}_{> 0})$
 of Form (I) (such that $\rho_{\tt{tensor}}(M) \neq 0$),
we may assume that $M_{12} \neq M_{13} \neq M_{14}$ are
pairwise distinct, and that $M_{34} \neq M_{13}, M_{14}$.
Ideally, we would like to be able to say that we can assume
that $M_{12} \neq M_{34}$ as well.
That is however, not true in general.
As it turns out, it is possible that $M_{12} = M_{34}$,
but in that case, we will be able to show that
$\rho_{\tt{tensor}}(M) = 0$.

\begin{lemma}\label{lemma: formINotTensor}
    Let $M \in \text{Sym}_{4}^{\tt{pd}}(\mathbb{R}_{> 0})$
    be of Form (I) such that $\rho_{\tt{tensor}}(M) \neq 0$.
    Let $\mathcal{F}$ be a countable set of 
    $\text{Sym}_{4}(\mathbb{R})$-polynomials such that
    $F(M) \neq 0$ for all $F \in \mathcal{F}$.
    Then, either $\PlEVAL(M)$ is $\#$P-hard, or
    there exists some $N \in
    \mathfrak{R}(M, \mathcal{F} \cup \{\rho_{\tt{tensor}} \}) \cap
    \text{Sym}_{4}^{\tt{pd}}(\mathbb{R}_{> 0})$ that is
    of Form (I), such that
    either $N_{12} \neq N_{34}$, or
    $(N_{11})^{2} \neq N_{33}N_{44}$.
\end{lemma}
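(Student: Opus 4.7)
The plan is as follows. If $M_{12} \neq M_{34}$ or $(M_{11})^2 \neq M_{33} M_{44}$, take $N = M$. Otherwise, in the Form (I) notation $a = M_{11} = M_{22}$, $b = M_{33}$, $c = M_{44}$, $x = M_{12}$, $y = M_{13} = M_{23}$, $z = M_{14} = M_{24}$, $t = M_{34}$, we have $x = t$ and $a^2 = bc$. Applying \cref{lemma: formIDistinct} with $\mathcal{F} \cup \{\rho_{\tt{tensor}}\}$ either yields $\#$P-hardness of $\PlEVAL(M)$ (and we are done) or produces a reduct $N' \in \mathfrak{R}(M, \mathcal{F} \cup \{\rho_{\tt{tensor}}\}) \cap \text{Sym}_4^{\tt{pd}}(\mathbb{R}_{>0})$ of Form (I) with all the distinctness conclusions of that lemma---in particular, $a = N'_{11} = N'_{22}$, $b = N'_{33}$, $c = N'_{44}$ are pairwise distinct and each is distinct from every off-diagonal entry. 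If $N'$ already violates one of our target equalities, set $N = N'$; otherwise $N'$ also satisfies $x = t$ and $a^2 = bc$ (reusing the same letters for $N'$'s entries).

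Define the $\text{Sym}_4(\mathbb{R})$-polynomial $\xi : N \mapsto N_{12} - N_{34}$. The central claim is that $\xi(R_n N') \ne 0$ for some positive integer $n$. Suppose not. Expanding
\begin{equation*}
\xi(R_n N') \;=\; \sum_{u,v \in [4]} (N'_{uv})^n \bigl(N'_{1u} N'_{2v} - N'_{3u} N'_{4v}\bigr)
\end{equation*}
and grouping terms by the distinct values taken by $N'_{uv}$, the vanishing for all $n$ forces each grouped coefficient to vanish, by the standard full-rank Vandermonde argument used throughout the paper. Focusing on the three values $a$, $b$, $c$---each of which (by the distinctness from \cref{lemma: formIDistinct}) occurs in $N'$ only at the specified diagonal positions, namely $(1,1)$ and $(2,2)$ for $a$, $(3,3)$ for $b$, $(4,4)$ for $c$---yields respectively
\begin{equation*}
2ax = 2yz, \qquad y^2 = bt, \qquad z^2 = ct.
\end{equation*}
Combined with the assumed $t = x$ and $a^2 = bc$, these are exactly the identities needed to verify that the permuted matrix $(N')^{(13)}$ (rows and columns transposed by $(1,3)$) satisfies $\varrho_{\tt{tensor}}((N')^{(13)}) = 0$: the condition $(N')^{(13)}_{14} = (N')^{(13)}_{23}$ reduces to $x = t$; the condition $(N')^{(13)}_{11}(N')^{(13)}_{44} = (N')^{(13)}_{22}(N')^{(13)}_{33}$ reduces to $bc = a^2$; and the four row conditions $(N')^{(13)}_{i1}(N')^{(13)}_{i4} = (N')^{(13)}_{i2}(N')^{(13)}_{i3}$ for $i \in [4]$ reduce to $bt = y^2$, $yz = ax$, $yz = ax$, and $tc = z^2$. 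Hence $\rho_{\tt{tensor}}(N') = 0$, contradicting $N' \in \mathfrak{R}(M, \mathcal{F} \cup \{\rho_{\tt{tensor}}\})$.

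With $n^* \ge 1$ such that $\xi(R_{n^*}N') \ne 0$ in hand, the remainder follows the stretching-and-$\mathcal{R}$-interpolation template used in the last paragraph of the proof of \cref{lemma: formIIReduction}. Set $\mathcal{F}' = \{F' : N \mapsto F(N^3) \mid F \in \mathcal{F} \cup \{\rho_{\tt{tensor}}\}\}$; since $F'(\mathcal{S}_{N'}(1/3)) = F(N') \ne 0$ for each $F'$ and $(\xi \circ R_{n^*})(\mathcal{S}_{N'}(1)) = \xi(R_{n^*}N') \ne 0$, \cref{corollary: stretchingWorksPositive} produces some $M'' \in \mathfrak{R}(N', \mathcal{F}' \cup \{\xi \circ R_{n^*}\}) \cap \text{Sym}_4^{\tt{pd}}(\mathbb{R}_{>0})$. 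Applying \cref{lemma: RInterpolationWorks} to $M''$ with the set $\mathcal{F} \cup \{\rho_{\tt{tensor}}, \xi\}$, using the witnesses $\theta = 1$ (giving $F(\mathcal{R}_{M''}(1)) = F((M'')^3) = F'(M'') \ne 0$) and $\theta = n^*$ (giving $\xi(\mathcal{R}_{M''}(n^*)) = \xi(R_{n^*}M'') \ne 0$), yields the required $N \in \mathfrak{R}(M, \mathcal{F} \cup \{\rho_{\tt{tensor}}, \xi\}) \cap \text{Sym}_4^{\tt{pd}}(\mathbb{R}_{>0})$. This $N$ is of Form (I) because $\mathcal{S}$, $\mathcal{R}$, $T_n$, and $\mathcal{T}$ all preserve Form (I) (each commutes with the $(1,2)$-transposition symmetry defining it) and satisfies $\xi(N) \ne 0$, that is, $N_{12} \ne N_{34}$. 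The main technical obstacle is the Vandermonde coefficient extraction in the second paragraph---without the distinctness guarantees of \cref{lemma: formIDistinct}, the set of pairs $(u,v)$ contributing to each value $N'_{uv}$ in the grouping would be ambiguous and the three clean identities that cascade into $\rho_{\tt{tensor}}(N') = 0$ would not be extractable.
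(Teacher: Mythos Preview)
Your argument tracks the paper's closely: the same Vandermonde extraction on $(R_n\,\cdot)_{12} - (R_n\,\cdot)_{34}$ yields $ax = yz$, $y^2 = bt$, $z^2 = ct$, which together with $x=t$ and $a^2=bc$ force a tensor factorization and hence a contradiction, after which the $\mathcal{S}$-then-$\mathcal{R}$ interpolation template finishes. Two pleasant variations: you obtain the value-separation needed for the Vandermonde step by invoking \cref{lemma: formIDistinct} (the paper instead stretches $M$ close to $I$), and you verify the tensor contradiction via the permutation $(1\,3)$ rather than the paper's $(2\,4)$; both work.

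There is one small gap. Per the remark opening \S\ref{sec: all_forms}, ``Form~(I)'' in this section carries the diagonal condition $N_{22}\ne N_{33}\ne N_{44}$ pairwise, not merely the $(1,2)$-symmetry; the downstream use in \cref{lemma: formIReduction} relies on $\zeta(N)\ne 0$. The paper enforces this by threading $\zeta(N)=(N_{11}-N_{33})(N_{11}-N_{44})(N_{22}-N_{33})(N_{22}-N_{44})(N_{33}-N_{44})$ through its function sets and then, at the end, disposing of any non-Form~(I) outcome via \cref{theorem: diagDistHardness}, \cref{lemma: order211Forms}, and \cref{lemma: formIIReduction}. Your symmetry-preservation argument is correct for the \emph{structural} pattern but does not by itself rule out $N_{33}=N_{44}$, since $\zeta$ is absent from your $\mathcal{F}'$ and from the set handed to \cref{lemma: RInterpolationWorks}. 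The fix is routine: add $N\mapsto \zeta(N^3)$ to $\mathcal{F}'$ (witness $\theta=1/3$, using $\zeta(N')\ne 0$ from \cref{lemma: formIDistinct}) and $\zeta$ itself to the final set (witness $\theta=1$).
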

\begin{proof}
    We will first define $\zeta: N \mapsto (N_{11} - N_{33})
    (N_{11} - N_{44})(N_{22} - N_{33})
    (N_{22} - N_{44})(N_{33} - N_{44})$ as
    in the proof of \cref{lemma: formIDistinct}.
    From our choice of $M$, we note that $\zeta(M) \neq 0$.
    We will now let $\mathcal{F}' = \{F': N \mapsto F(N^{3})\ |\
    F \in \mathcal{F} \cup \{\rho_{\tt{tensor}}, \zeta \}\}$.
    We note that $F'(\mathcal{S}_{M}(\frac{1}{3})) \neq 0$
    for all $F' \in \mathcal{F}'$.
    We also know that there exists some $\delta > 0$ such that
    for all $0 < \theta < \delta$,
    $|\mathcal{S}_{M}(\theta)_{ij} - I_{ij}| < \frac{1}{3}$.
    We can now use
    \cref{lemma: stretchingWorks} to find
    $M' = \mathcal{S}_{M}(\theta^{*}) \in 
    \mathfrak{R}(M, \mathcal{F}' \cup \{\zeta \}) \cap
    \text{Sym}_{4}^{\tt{pd}}(\mathbb{R}_{\neq 0})$,
    such that $0 < \theta^{*} < \delta$.
    Let us assume that $M'$ is not of Form (I).
    Since $\zeta(M') \neq 0$, we know that
    $(M')_{11} \neq (M')_{33} \neq (M')_{44}$ are
    pairwise distinct, and $(M')_{22} \neq (M')_{33} \neq (M')_{44}$
    are also pairwise distinct.
    If $(M')_{11} \neq (M')_{22}$, then
    $M'$ is diagonal distinct, and \cref{theorem: diagDistHardness}
    implies that $\PlEVAL(M') \leq \PlEVAL(M)$ is
    $\#$P-hard.
    Otherwise, \cref{lemma: order211Forms} implies that
    unless $M'$ is of Form (II),
    $\PlEVAL(M') \leq \PlEVAL(M)$ is $\#$P-hard.
    Finally, if $M'$ has  Form (II), and does not have Form (I), then \cref{lemma: formIIReduction} implies that 
    $\PlEVAL(M') \leq \PlEVAL(M)$ is $\#$P-hard.
    So, we may assume that in fact,
    $M'$ is of Form (I).
    We also note that if $M'_{12} \neq M'_{34}$, or
    if $(M'_{11})^{2} \neq M'_{33}M'_{44}$,
    we can immediately use \cref{corollary: stretchingWorksPositive}
    to find the required $N$.
    So, we may assume otherwise: $M'_{12} = M'_{34}$ and
     $(M'_{11})^{2} = M'_{33}M'_{44}$. In terms of notations in \cref{Form-I-in-abc},
     we have $x=t$ and $a^2 = bc$.

    We will now consider $R_{n}(M')$ for all $n \geq 1$.
    We will let $X = \{(M')_{ab}: a, b \in [4] \}$, and
    define $c_{ij}(x) = \sum_{a, b \in [4]: (M')_{ab} = x}
    M'_{ia}M'_{jb}$.
    We note that
    $$R_{n}(M')_{12} - R_{n}(M')_{34} = \sum_{x \ in X}
    x^{n} \cdot (c_{12}(x) - c_{34}(x)).$$
    If $R_{n}(M')_{12} - R_{n}(M')_{34} = 0$
    for all $n \geq 1$, then we have a Vandermonde
    system of linear equations of size $O(1)$.
    That implies that $c_{12}(x) - c_{34}(x) = 0$
    for all $x \in X$.
    
    From our construction of $M'$, we know that
    $M'_{ab} = M'_{33}$ implies that $(a, b) = (3, 3)$,
    $M'_{ab} = M'_{44}$ implies that $(a, b) = (4, 4)$,
    and that $M'_{ab} = M'_{11}$ implies that
    $(a, b) \in \{(1, 1), (2, 2)\}$.
    So, we see that
    \begin{align*}
        c_{12}(M'_{33}) - c_{34}(M'_{33})
        &= M'_{13}M'_{13} - M'_{33}M'_{34},\\
        c_{12}(M'_{44}) - c_{34}(M'_{44})
        &= M'_{14}M'_{14} - M'_{34}M'_{44},\\
        c_{12}(M'_{11}) - c_{34}(M'_{11})
        &= M'_{11}M'_{12} + M'_{12}M'_{11}
        - M'_{13}M'_{14}  -M'_{13}M'_{14} \\
        & =  2 M_{11}M'_{12} - 2 M'_{13}M'_{14},
    \end{align*}
    and they all equal to 0.
    (Here we used the fact that $M'_{23} = M'_{13}, M'_{24} = M'_{14}, M'_{11} = M'_{22}$ in Form (I).)
    In terms of the  notations in \cref{Form-I-in-abc} (with
    $x = t$, and $a^{2} = bc$),
     we have $y^2 = bx$, $z^2= cx$, and $ax = yz$.
Together with $x=t$ and $a^2 = bc$, we can verify that $\rho_{\tt{tensor}}(M') = 0$, where
    $\rho_{\tt{tensor}}$ is defined in \cref{lemma: tensorPolynomial}.
    (We use the switching $2 \leftrightarrow 4$ in  $\varrho_{\tt{tensor}}$ to $\rho_{\tt{tensor}}$.)

    But by construction of $M'$, we ensured that
    $\rho_{\tt{tensor}}(M') \neq 0$.
    Therefore, there must exist some $n \geq 1$ such that
    $R_{n}(M')_{12} - R_{n}(M')_{34} \neq 0$.
    So, if we define $\xi: N \mapsto R_{n}(N)_{12} - R_{n}(N)_{34}$,
    we see that $\xi(\mathcal{S}_{M}(\theta^{*}))
    = \xi(M') \neq 0$.
    Now, we can use \cref{corollary: stretchingWorksPositive}
    to find $M'' \in \mathfrak{R}(M, \mathcal{F}'
    \cup \{\xi\}) \cap
    \text{Sym}_{4}^{\tt{pd}}(\mathbb{R}_{> 0})$.
    Finally, \cref{lemma: RInterpolationWorks} allows us
    to find $N \in \mathfrak{R}(M, \mathcal{F} \cup
    \{\rho_{\tt{tensor}}, \zeta, \xi^{o} \}) \cap
    \text{Sym}_{4}^{\tt{pd}}(\mathbb{R}_{> 0})$, where
    $\xi^{o}: N \mapsto N_{12} - N_{34}$.
    Since $\zeta(N) \neq 0$, $N$ may either be
    diagonal distinct, or of the form
    $N_{11} = N_{22}$ with $N_{22} \neq N_{33} \neq N_{44}$
    pairwise distinct.
    If $N$ is diagonal distinct, then \cref{theorem: diagDistHardness}
    implies that $\PlEVAL(N) \leq \PlEVAL(M)$ is
    $\#$P-hard.
    Otherwise, \cref{lemma: order211Forms} implies that
    unless $N$ is either of Form (I) or of Form (II),
    $\PlEVAL(N) \leq \PlEVAL(M)$ is $\#$P-hard.
    Finally, \cref{lemma: formIIReduction} implies that
    unless $N$ is of Form (I), $\PlEVAL(N) \leq \PlEVAL(M)$
    is $\#$P-hard.
    Since $\xi^{o}(N) \neq 0$, we know that
    $N_{12} \neq N_{34}$.
    So, $N$ is the required matrix.
\end{proof}

\begin{remark*}
    We note that in the statement of \cref{lemma: formINotTensor},
    we claim to be able to find some $N$ such that
    $N_{12} \neq N_{34}$, or $(N_{11})^{2} \neq N_{33}N_{44}$,
    however, at the end of the proof, we were able to find
    some $N$ such that $N_{12} \neq N_{34}$.
    So, it appears as if we could eliminate the option that
    $(N_{11})^{2} \neq N_{33}N_{44}$ from the statement of the lemma.
    However, we cannot do that.
    We note that in the proof of \cref{lemma: formINotTensor},
    when we constructed $M'$, if $M'_{12} = M'_{34}$, but
    $(M'_{11})^{2} \neq M'_{33}M'_{44}$, 
    we could then have produced some $N$ such that 
    $N_{12} = N_{34}$, but 
    $(N_{11})^{2} \neq N_{33}N_{44}$.
\end{remark*}

We will now show that given a matrix
$M \in \text{Sym}_{4}^{\tt{pd}}(\mathbb{R}_{> 0})$
of Form (I), and any $\mathbf{0} \neq \mathbf{x}
\in \chi_{4}$ of support size $> 2$,
we can find some $N \in \mathfrak{R}(M, \mathcal{F}_{M}
\cup \{\phi_{\mathbf{x}}\})
\cap\text{Sym}_{4}^{\tt{pd}}(\mathbb{R}_{> 0})$.
We note that if $\mathbf{0} \neq \mathbf{x}
\in \chi_{4}$ has support size $\leq 2$, it must be
of the form $\mathbf{x} = c \cdot \delta_{ij}$
for some $\delta_{ij} \in \mathcal{D}$
(as defined in \cref{{equation: mathcalD}}).
We will show later that our ability to use
\cref{theorem: latticeHardness}
is not hindered by such $\mathbf{x} \in \chi_{4}$,
so we do not have to worry about them.

\begin{lemma}\label{lemma: formIReduction}
    Let $M \in \text{Sym}_{4}^{\tt{pd}}(\mathbb{R}_{> 0})$
    be of Form (I) such that $\rho_{\tt{tensor}}(M) \neq 0$.
    Let $\mathcal{F}$ be a countable set of 
    $\text{Sym}_{4}(\mathbb{R})$-polynomials such that
    $F(M) \neq 0$ for all $F \in \mathcal{F}$.
    Let $\xi_{\mathbf{x}}$ be a $\text{Sym}_{4}(\mathbb{R})$-polynomial
    such that $\xi_{\mathbf{x}}: N \mapsto \phi_{\mathbf{x}}
    (N_{11}, \dots, N_{44})$ for any
    $\mathbf{0} \neq \mathbf{x} \in \chi_{4}$ of 
    support size greater than $2$.
    Then, either $\PlEVAL(M)$ is $\#$P-hard, or
    there exists some $N \in \mathfrak{R}
    (M, \mathcal{F} \cup \{\rho_{\tt{tensor}}, \xi_{\mathbf{x}}\}) \cap
    \text{Sym}_{4}^{\tt{pd}}(\mathbb{R}_{> 0})$
    of Form (I).
\end{lemma}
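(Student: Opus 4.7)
To prove the lemma, I plan to first invoke \cref{lemma: formIDistinct} and \cref{lemma: formINotTensor} to replace $M$ with an intermediate reduct $M' \in \mathfrak{R}(M, \mathcal{F} \cup \{\rho_{\tt{tensor}}\}) \cap \text{Sym}_4^{\tt{pd}}(\mathbb{R}_{>0})$ of Form (I) satisfying the strong distinctness conclusions of both lemmas. Writing $a = M'_{11} = M'_{22}$, $b = M'_{33}$, $c = M'_{44}$, this means $a, b, c$ are pairwise distinct and distinct from every off-diagonal entry, and moreover either $M'_{12} \neq M'_{34}$ or $a^2 \neq bc$. If $\xi_{\mathbf{x}}(M') = \phi_{\mathbf{x}}(a, a, b, c) \neq 0$, then $N = M'$ is already the required matrix.

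So assume $\phi_{\mathbf{x}}(a, a, b, c) = 0$; since $a, b, c > 0$ this is equivalent to the multiplicative relation $a^{x_1+x_2} b^{x_3} c^{x_4} = 1$. A short case analysis rules out $x_1 + x_2 = 0$ with $x_1, x_2 \neq 0$: under $\sum_i x_i = 0$ this would force $x_3 + x_4 = 0$, and the support-size condition would force both $x_3, x_4$ nonzero, reducing the relation to $b^{x_3} = c^{x_3}$ and contradicting $b \neq c$. Hence $x_1 + x_2 \neq 0$.

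Next, I consider the stretching family $\mathcal{S}_{M'}(\theta) = (M')^\theta$. Since $M'$ commutes with the transposition $\sigma = (1\,2)$ (by the Form~(I) symmetry), so does $(M')^\theta$, and hence $\mathcal{S}_{M'}(\theta)$ stays in Form (I) for every $\theta$. Its diagonal entries $a(\theta), a(\theta), b(\theta), c(\theta)$ are positive real analytic in $\theta$, so
\[
F(\theta) := a(\theta)^{x_1+x_2}\, b(\theta)^{x_3}\, c(\theta)^{x_4}
\]
is a positive real analytic function of $\theta$ with $F(0) = F(1) = 1$. If $F \not\equiv 1$, by the Identity Theorem I can pick $\theta^*$ near $1$ with $F(\theta^*) \neq 1$ while avoiding the zero sets of the finitely many other polynomial obstructions used in the reduction; then \cref{corollary: stretchingWorksPositive} combined with a further application of \cref{lemma: formIDistinct} and \cref{lemma: formINotTensor} produces the required $N$ of Form~(I) satisfying $\xi_{\mathbf{x}}(N) \neq 0$.

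The main obstacle is the residual case $F \equiv 1$. To handle it, I plan to mimic the $R_n$-gadget contradiction arguments in the proofs of \cref{lemma: formIIReduction} and \cref{lemma: formVReduction}. The operator $R_n$ preserves Form~(I) (again by invariance under $\sigma$), so $R_n M'$ stays in Form~(I) for every $n \geq 1$. After a preliminary stretching that makes diagonal entries of $M'$ dominate off-diagonal ones in absolute value, I would expand $\xi_{\mathbf{x}}(R_n M') = \sum_{v} v^n \, C(v)$ as a Vandermonde sum over the distinct values $v \in \{M'_{ab} : a, b \in [4]\}$. If this sum is not identically zero in $n$, then \cref{lemma: RInterpolationWorks} together with \cref{lemma: formIDistinct} and \cref{lemma: formINotTensor} yields the required $N$. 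If it is identically zero, the Vandermonde system forces each $C(v)$ to vanish; reading off the coefficients at $v \in \{a, b, c\}$ (each contributed by a single matrix entry by diagonal dominance) and exploiting $x_1 + x_2 \neq 0$ yields polynomial identities among $a, b, c$ and the off-diagonals of $M'$ which, together with the disjunctive hypothesis ($M'_{12} \neq M'_{34}$ or $a^2 \neq bc$), force $\varrho_{\tt{tensor}}(M') = 0$ and contradict $\rho_{\tt{tensor}}(M') \neq 0$. The delicate coefficient bookkeeping is the technical crux of the argument and closely parallels the existing reductions for the other Forms.
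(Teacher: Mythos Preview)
Your proposal has a genuine gap in the $R_n$ backup argument, and the overall route diverges significantly from the paper's.

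The central problem is your claim that if $\xi_{\mathbf{x}}(R_n M') = 0$ for all $n$, then ``reading off the coefficients at $v \in \{a,b,c\}$'' yields identities forcing $\varrho_{\tt tensor}(M')=0$. In the existing $R_n$ reductions you cite (\cref{lemma: formIIReduction}, \cref{lemma: formVReduction}, \cref{lemma: formINotTensor}), the quantity analyzed is always a \emph{linear} difference of entries of $R_n M'$, so the Vandermonde system is indexed by single values $v \in X = \{M'_{ab}\}$ and the resulting constraints are clean relations among the $c_{ij}(v)$. But $\xi_{\mathbf{x}}(R_n M') = \phi_{\mathbf{x}}\bigl((R_nM')_{11},\dots,(R_nM')_{44}\bigr)$ is a polynomial of degree $d = \sum_{x_i>0} x_i$ in the diagonal entries of $R_n M'$. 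Its expansion as $\sum_w w^n C(w)$ is indexed by $d$-fold \emph{products} $w = v_1\cdots v_d$ with $v_i \in X$, not by single values. Even with diagonal dominance, the coefficients $C(w)$ at the top products $a^d, b^d, c^d$ involve $c_{ii}(a)^{x_1+x_2}, c_{33}(b)^{x_3}$, etc., raised to the various $|x_i|$; setting these to zero gives high-degree relations like $(a^2+x^2)^{x_1+x_2}(2y^2)^{x_3} = (2z^2)^{x_1+x_2+x_3}$, not the linear and quadratic identities that define $\varrho_{\tt tensor}$. There is no evident way to extract the tensor conclusion from these, and no argument is offered.

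The paper's proof is quite different. After invoking \cref{lemma: formINotTensor}, it applies the permutation $2\leftrightarrow 4$ and \cref{lemma: notConfluentImpliesM14M23} to first secure $\xi_{(1,1,-1,-1)}(M)\neq 0$; this step is essential because it is exactly what rescues the later argument in the one exceptional case. Then, rather than stretching or $R_n$, it uses \emph{thickening}: writing entries via a generating set and defining $\mathcal{T}_M^*(p)$, it analyzes $\xi_{\mathbf{x}}\bigl((\mathcal{T}_M^*(p))^2\bigr)$ as an explicit polynomial in $p$. Because each diagonal of $(\mathcal{T}_M^*(p))^2$ is a $0$--$1$ polynomial with exactly four (or, for $i=3,4$, three with a coefficient $2$) distinct terms, two combinatorial claims comparing leading-term coefficients (\cref{claim: formICase1Impossible}, \cref{claim: formICase2Impossible}) show this polynomial is not identically zero --- except precisely when $\mathbf{x}$ has the shape $(x_1,x_2,-c,-c)$ with $x_1+x_2=2c$, which reduces to $\xi_{(1,1,-1,-1)}(M)\neq 0$, already secured. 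Your stretching function $F(\theta)$ lacks the $0$--$1$ structure that makes this coefficient-counting feasible, and your $R_n$ fallback does not supply it.
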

\begin{proof}
    We will first use \cref{lemma: formINotTensor} to find
    $M' \in \mathfrak{R}(M, \mathcal{F} \cup \{\rho_{\tt{tensor}}\})
    \cap \text{Sym}_{4}^{\tt{pd}}(\mathbb{R}_{> 0})$ that is
    of Form (I), such that
    either $(M')_{12} \neq (M')_{34}$, or
    $((M')_{11})^{2} \neq (M')_{33}(M')_{44}$.  (In  \cref{Form-I-in-abc}, this is $x \ne t$ or  $a^2 \ne bc$.)
    We now define $\zeta: N \mapsto
    (N_{11} - N_{33})(N_{11} - N_{44})(N_{22} - N_{33})(N_{22} - N_{44})
    (N_{33} - N_{44})$.
    We can now construct a matrix $(M')^{\sigma}$ by permuting
    both the rows and columns of $M$ by some $\sigma \in S_{4}$
    (that switches $2 \leftrightarrow 4$). Then we see that the
    condition that either $(M')_{12} \neq (M')_{34}$, or
    $((M')_{11})^{2} \neq (M')_{33}(M')_{44}$ means exactly that
    either $(M')^{\sigma}_{14} \neq (M')^{\sigma}_{23}$, or 
    $(M')^{\sigma}_{11}(M')^{\sigma}_{44} \neq (M')^{\sigma}_{22}
    (M')^{\sigma}_{33}$. (By Form (I), $M'_{11} = M'_{22}$, which gives $(M')^{\sigma}_{11} = (M')^{\sigma}_{44}$.)
    This is precisely the condition for \cref{lemma: notConfluentImpliesM14M23}
    that allows us to find some
    $(M'')^{\sigma} \in \mathfrak{R}((M')^{\sigma},
    \mathcal{F}^{\sigma} \cup
    \{\xi_{(1, -1, -1, 1)}\})
    \cap \text{Sym}_{4}^{\tt{pd}}(\mathbb{R}_{> 0})$,
    where
    $$\mathcal{F}^{\sigma} = \left\{F^{\sigma}: N \mapsto
    F(N^{\sigma^{-1}})\ |\ F \in \mathcal{F} \cup 
    \{\rho_{\tt{tensor}}, \zeta\}\right\}.$$
    If we now look at $(M'')$ (by switching back $2 \leftrightarrow 4$),
    we see that $\xi_{(1, 1, -1, -1)}(M'') \neq 0$, and that
    $F(M'') \neq 0$ for all $F \in \mathcal{F} \cup
    \{\rho_{\tt{tensor}}, \zeta\}$. 
    From the construction of $\zeta$, we can see that
    if $M''_{11} \neq M''_{22}$, then $M''$ is diagonal
    distinct, and \cref{theorem: diagDistHardness} implies that
    $\PlEVAL(M'') \leq \PlEVAL(M)$ is $\#$P-hard.
    On the other hand, if $M''_{11} = M''_{22}$,
    then \cref{lemma: order211Forms},
    and \cref{lemma: formIIReduction} together imply that
    unless $M''$ is of Form (I), then
    $\PlEVAL(M'') \leq \PlEVAL(M)$ is $\#$P-hard.
    So, we may assume that $M''$ is in fact, of Form (I).
    
    Finally, we can use \cref{lemma: formIDistinct},
    to find $M''' \in \mathfrak{R}(M'',
    \mathcal{F} \cup \{\rho_{\tt{tensor}}, \xi_{(1, 1, -1, -1)}, \zeta\})
    \cap \text{Sym}_{4}^{\tt{pd}}(\mathbb{R}_{> 0})$ such that
    $M'''_{12} \neq M'''_{13} \neq M'''_{14}$ are pairwise distinct,
    $M'''_{34} \neq M'''_{13}, M'''_{14}$, and
    that $M'''_{ii} \neq M'''_{jk}$ for all $i \in [4]$, 
    and $j \neq k \in [4]$.
    We note that since $\zeta(M''') \neq 0$, if 
    $M'''$ is not of Form (I), then
    \cref{theorem: nonDiagAllForms} implies that
    $\PlEVAL(M)$ is $\#$P-hard.
    So, we may assume that $M'''$ is of Form (I).
    For convenience, we will rename this $M'''$ as $M$.
    
    Let us consider any $\mathbf{0} \neq  \mathbf{x} =
    (x_{1}, x_{2}, x_{3}, x_{4})  \in \chi_{4}$.
    If $\xi_{\mathbf{x}}(M) = \phi_{\mathbf{x}}(M_{11}, \dots, M_{44}) \neq 0$, then
    we are already done. So we may assume that  $\xi_{\mathbf{x}}(M) =  0$.
    Let us identify those $\mathbf{x}$
    for which $\xi_{\mathbf{x}}(M) = 0$.
    We already know that $\xi_{(1, 1, -1, -1)}(M) \neq 0$.
    For any $\mathbf{x}$
    such that $x_{1} + x_{2} = 2c$, and $x_{3} = x_{4} = -c$, for any non-zero $c \in \mathbb{Z}$,
    since we have $M_{11} = M_{22}$, this implies that
    $\xi_{\mathbf{x}}(M) = \left( \xi_{(1, 1, -1, -1)}(M)\right)^c \neq 0$.
%
    We will now make use of the following claim,
    which we shall prove shortly.
    
    \begin{claim}\label{claim: killedX}
        Let $\alpha_{1}, \dots, \alpha_{4} \in \mathbb{R}$,
        such that $\alpha_{1} = \alpha_{2}$, and
        $|\alpha_{2}| \neq |\alpha_{3}| \neq |\alpha_{4}|$
        are pairwise distinct.
        If $\phi_{\mathbf{x}}(\alpha_{1}, \dots, \alpha_{4}) = 0$
        for some $\mathbf{x} \in \chi_{4}$ with support size
        greater than $2$,
        then $x_{3} \neq 0$, $x_{4} \neq 0$, and
        $x_{1} + x_{2} \neq 0$.
    \end{claim}

    Since $M \in \text{Sym}_{4}(\mathbb{R}_{> 0})$ and is
    of Form (I), we see that
    $M_{11} = M_{22}$, and $|M_{22}| \neq |M_{33}| \neq |M_{44}|$
    are pairwise distinct. So, \cref{claim: killedX}
    lets us conclude that $\xi_{\mathbf{x}}(M) = 0$
implies $x_{3}, x_{4} \neq 0$, and
    $x_{1} + x_{2} \neq 0$.

    We will now let the entries of $M$ be
    generated by some $\{g_{t}\}_{t \in [d]}$.
    We may assume that $M$ is replaced with some
    $c M$ as in \cref{lemma: MequivalentCM} such that
    $e_{ijt} \geq 0$ for all $i, j \in [4]$, and $t \in [d]$.
    Now, we will let $\mathfrak{m} = 
    \max_{i, j \in [4], t \in [d]}(e_{ijt})$, pick
    some $m > \mathfrak{m}$, and define
    $z_{ij} = \sum_{t \in [d]}m^{t} \cdot e_{ijt}$
    for all $i, j \in [4]$.
    We will now define $\mathcal{T}_{M}^{*}:
    \mathbb{R} \rightarrow \text{Sym}_{4}(\mathbb{R})$ 
    such that
    $$\mathcal{T}_{M}^{*}(p)_{ij} =
    \mathcal{T}_{M}(p^{m}, p^{m^{2}}, \dots, p^{m^{d}})_{ij}
    = p^{z_{ij}}.$$
    From our choice of $m$, we see that
    $z_{ij} = z_{i'j'}$ if and only if
    $(e_{ij1}, \dots, e_{ijd}) =
    (e_{i'j'1}, \dots, e_{i'j'd})$.
    So, from our choice of $M$, we see that in fact,
    $z_{12} \neq z_{13} \neq z_{14}$,
    $z_{34} \neq z_{13}, z_{14}$, and
    $z_{ii} \neq z_{ij}$ for all $i \neq j \in [4]$.

    We will now consider $\xi_{\mathbf{x}}
    ((\mathcal{T}_{M}^{*}(p))^{2})$ for all $p \in \mathbb{R}$.
    Since
    $x_{1} + x_{2} \neq 0$,
    we may assume without loss of generality that
    $x_{1} + x_{2} > 0$.
    We may also assume by symmetry that $x_{3} \geq x_{4}$.
    There are now two possibilities we have to deal with:
    $x_{3}$ may be positive, in which case $x_{4}$ must
    be negative (since $x_{1} + \cdots + x_{4} = 0$),
    or both $x_{3}$ and $x_{4}$ may be negative.
    (By  \cref{claim: killedX},
     $x_{3}, x_{4} \neq 0$.)
    
    We will first deal with the case where $x_{3} > 0$.
    In this case, we see that
    \begin{multline*}
        \xi_{\mathbf{x}}(\mathcal{T}_{M}^{*}(p)^{2}) = 
        (p^{2z_{11}} + p^{2z_{12}} + 
        p^{2z_{13}} + p^{2z_{14}})^{x_{1} + x_{2}} \cdot
        (2p^{2z_{13}} + p^{2z_{33}} + p^{2z_{34}})^{x_{3}}\\
        - (2p^{2z_{14}} + p^{2z_{34}} +
        p^{2z_{44}})^{x_{1} + x_{2} + x_{3}}.
    \end{multline*}
    Here we used the fact that  $\mathcal{T}_{M}^{*}(p)$ has Form (I)
    (because $M$ does)
    and thus $(\mathcal{T}_{M}^{*}(p)^2)_{11} = (\mathcal{T}_{M}^{*}(p)^2)_{22}$. We note that by our choice of $z_{ij}$,
    the exponents $2z_{11} \neq 2z_{12} \neq 2z_{13} \neq 2z_{14}$ are
    pairwise distinct,
    $2z_{13} \neq 2z_{33} \neq 2z_{34}$ are pairwise
    distinct, and
    $2z_{14} \neq 2z_{34} \neq 2z_{44}$ are pairwise
    distinct.
    So, the following claim (which we shall prove soon)
    allows us to claim that
    $\xi_{\mathbf{x}}(\mathcal{T}_{M}^{*}(p)^{2})$ is
    not the zero polynomial.
    
    \begin{claim}\label{claim: formICase1Impossible}
        Let $n, m \in  \mathbb{Z}_{> 0}$, and $y_{11} < y_{12} < y_{13} < y_{14}$,
        $y_{21} < y_{22} < y_{23}$,
        and $y_{31} < y_{32} < y_{33} \in \mathbb{Z}_{> 0}$.
        Assume the multiset $\{\alpha_{1}, \alpha_{2}, \alpha_{3}\} =
        \{\beta_{1}, \beta_{2}, \beta_{3}\} = \{2, 1, 1\}$.
       Define the polynomial
        $f: \mathbb{R} \rightarrow \mathbb{R}$ such that
        $$f(p) = (p^{y_{11}} + p^{y_{12}} +
        p^{y_{13}} + p^{y_{14}})^{n} \cdot 
        (\alpha_{1}p^{y_{21}} + \alpha_{2}p^{y_{22}} 
        + \alpha_{3}p^{y_{23}})^{m} - 
        (\beta_{1}p^{y_{31}} + \beta_{2}p^{y_{32}} 
        + \beta_{3}p^{y_{33}})^{n+m},$$
        then $f(p)$ is not the zero polynomial.    
    \end{claim}

    On the other hand, if $x_{3} < 0$, we see that
    \begin{multline*}
        \xi_{\mathbf{x}}(\mathcal{T}_{M}^{*}(p)^{2}) = 
        (p^{2z_{11}} + p^{2z_{12}} + 
        p^{2z_{13}} + p^{2z_{14}})^{x_{1} + x_{2}} \\-
        (2p^{2z_{13}} + p^{2z_{33}} + p^{2z_{34}})^{-x_{3}}
        \cdot (2p^{2z_{14}} + p^{2z_{34}} +
        p^{2z_{44}})^{x_{1} + x_{2} + x_{3}}.
    \end{multline*}
    
    We will now make use of the following claim, which
    we shall prove shortly.
    \begin{claim}\label{claim: formICase2Impossible}
        Let $n, m \in  \mathbb{Z}_{> 0}$, and  $y_{11} < y_{12} < y_{13} < y_{14}$,
        $y_{21} < y_{22} < y_{23}$,
        and $y_{31} < y_{32} < y_{33} \in \mathbb{Z}_{> 0}$.
        Assume the multiset $\{\alpha_{1}, \alpha_{2}, \alpha_{3}\} =
        \{\beta_{1}, \beta_{2}, \beta_{3}\} = \{2, 1, 1\}$.
        Define the polynomial
        $f: \mathbb{R} \rightarrow \mathbb{R}$ such that
        $$f(p) = (p^{y_{11}} + p^{y_{12}} +
        p^{y_{13}} + p^{y_{14}})^{n+m} - 
        (\alpha_{1}p^{y_{21}} + \alpha_{2}p^{y_{22}} 
        + \alpha_{3}p^{y_{23}})^{n} \cdot 
        (\beta_{1}p^{y_{31}} + \beta_{2}p^{y_{32}} 
        + \beta_{3}p^{y_{33}})^{m},$$
        then $f(p)$ is not the zero polynomial,
        unless $n=m$.
    \end{claim}

    \cref{claim: formICase2Impossible} implies that
    if $\xi_{\mathbf{x}}(\mathcal{T}_{M}^{*}(p)^{2})$ is the zero
    polynomial only if $-x_{3} = x_{1} + x_{2} + x_{3}$.
    This implies that if $x_{3} = -c$, then $x_{1} + x_{2} = 2c$,
    and $x_{4} = -c$.
    In other words, $\mathbf{x} = (x_{1}, x_{2}, -c, -c)$ for
    some $c \in \mathbb{Z}_{\neq 0}$, where
    $x_{1} + x_{2} = 2c$.
    But by our construction of $M$, we already ensured that
    for such $\mathbf{x} \in \chi_{4}$,
    $\xi_{\mathbf{x}}(M) \neq 0$.
    So, we see that if $\xi_{\mathbf{x}}(M) = 0$, then
    $\xi_{\mathbf{x}}(\mathcal{T}_{M}^{*}(p)^{2})$ is not
    the zero polynomial.

    So, we see that in either case,
    when $\mathbf{x} \in \chi_{4}$ has a 
    support size greater than $2$,
    if $\xi_{\mathbf{x}}(M) = 0$,
    there exists some
    $p^{*} \in \mathbb{R}$ such that
    $\xi_{\mathbf{x}}((\mathcal{T}_{M}^{*}(p^{*}))^{2})
    \neq 0$.
    We will now define $\xi_{2, \mathbf{x}}: 
    N \mapsto \phi_{\mathbf{x}}(N^{2})$, and
    $\zeta: N \mapsto (N_{11} - N_{33})(N_{11} - N_{44})
    (N_{22} - N_{33})(N_{22} - N_{44})(N_{33} - N_{44})$.
    So, we can use \cref{lemma: thickeningWorks} to
    find some $M' \in \mathfrak{R}(M,
    \mathcal{F} \cup \{\rho_{\tt{tensor}},
    \xi_{2, \mathbf{x}}, \zeta\}) \cap
    \text{Sym}_{4}^{\tt{pd}}(\mathbb{R}_{> 0})$.
    Since $\xi_{2, \mathbf{x}}(M') = \xi_{\mathbf{x}}((M')^{2}) \neq 0$,
    \cref{corollary: stretchingWorksPositive}
    allows us to find the required
    $N \in \mathfrak{R}(M', \mathcal{F} \cup 
    \{\rho_{\tt{tensor}}, \xi_{\mathbf{x}}, \zeta \})
    \cap \text{Sym}_{4}^{\tt{pd}}(\mathbb{R}_{> 0})$.
    If $N$ is diagonal distinct,
    \cref{theorem: diagDistHardness} proves that
    $\PlEVAL(M)$ is $\#$P-hard.
    Otherwise, \cref{theorem: nonDiagAllForms} implies that
    either $\PlEVAL(N) \leq \PlEVAL(M)$ is $\#$P-hard, or
    $N$ is of Form (I) as required.
\end{proof}

We will now finish the proof of \cref{lemma: formIReduction} by proving
\cref{claim: killedX},
\cref{claim: formICase1Impossible} and
\cref{claim: formICase2Impossible}.

\begin{claimproof}{\cref{claim: killedX}}
For a contradiction, suppose $x_{3} = 0$ or $x_{4} = 0$ or $x_{1} + x_{2} = 0$.
    First suppose $x_{3} = 0$. If $x_{1} + x_{2} = 0$,
    that would imply that $x_{4} = 0$ as well,
    which is not possible, since the support size
    of $\mathbf{x}$ is greater than $2$ by assumption.
    So, without loss of generality,
    we may assume that $x_{1} + x_{2} > 0$, by replacing $\mathbf{x}$ with $-\mathbf{x}$.
    Now,
    $$\phi_{\mathbf{x}}(\alpha_{1}, \dots, \alpha_{4})
    = (\alpha_{1})^{x_{1} + x_{2}} - (\alpha_{4})^{x_{1} + x_{2}}.$$
    Since $|\alpha_{1}| \neq |\alpha_{4}|$, we see that
    it is not possible that $\phi_{\mathbf{x}}
    (\alpha_{1}, \dots, \alpha_{4}) = 0$.
    So $x_{3} = 0$ is impossible.
    By symmetry, we see that $x_{4} = 0$ is also impossible.
    
    Now, let $x_{1} + x_{2} = 0$.
    This implies that $x_{3} + x_{4} = 0$ as well.
    We may assume without loss of generality that $x_{3} > 0$,
    since we already know that $x_{3} \neq 0$.
    Once again, we see that $\phi_{\mathbf{x}}
    (\alpha_{1}, \dots, \alpha_{4}) = 0$ would imply that
    $(\alpha_{3})^{x_{3}} = (\alpha_{4})^{x_{3}}$,
    which is also known to be not possible
    since $|\alpha_{3}| \neq |\alpha_{4}|$.
    This finishes the proof of \cref{claim: killedX}.
\end{claimproof}

\begin{claimproof}{\cref{claim: formICase1Impossible}}
    For convenience, we will define the polynomials
    $t_{1}, t_{2}, t_{3}$ such that
\begin{eqnarray*}
        t_{1}(p) &=& (p^{y_{11}} + p^{y_{12}} +
        p^{y_{13}} + p^{y_{14}}),\\
        t_{2}(p) &=&  (\alpha_{1}p^{y_{21}} + \alpha_{2}p^{y_{22}} 
        + \alpha_{3}p^{y_{23}}), \text{ and }\\
        t_{3}(p) &=&  (\beta_{1}p^{y_{31}} + \beta_{2}p^{y_{32}} 
        + \beta_{3}p^{y_{33}}).
\end{eqnarray*}
    Then, we see that
    $f(p) = t_{1}(p)^{n} \cdot t_{2}(p)^{m}
    - t_{3}(p)^{n+m}$.

    Let us assume that $f(p) = 0$ for all $p \in \mathbb{R}$.
    This means that 
    \[t_{1}(p)^{n} \cdot t_{2}(p)^{m}
    = t_{3}(p)^{n+m}\]
    for all $p \in \mathbb{R}$.
    Now, if $\beta_{1} = 2$, that means that the
    least degree term of the RHS
    has a coefficient of $2^{n+m}$.
    But the least degree term of the LHS
    can only be $\alpha_{1}^{m} \in \{1, 2^{m}\}$.
    We are given $n >0$.
    In either case,  it is not equal
    to $2^{n+m}$, and we get a contradiction.
    So, we find that $\beta_{1} = 1$.
    Now, if $\alpha_{1} = 2$, we get a similar
    contradiction (using $m > 0$), as the least degree term
    of the RHS will be $1$, and not equal
    to $2^{m}$. So, we see that
    $\alpha_{1} = \beta_{1} = 1$.
    By reasoning about the highest degree term
    instead of the least degree term, we can also
    see that $\alpha_{3} = \beta_{3} = 1$.
    This means that $\alpha_{2} = \beta_{2} = 2$.
    
    Now, $p^{ny_{11} + my_{21}}$, and
    $p^{(n+m)y_{31}}$ are
    the least degree terms of the LHS and RHS
    respectively.
    So, they must be equal.
    This means that
    $ny_{11} +my_{21} = (n+m)y_{31}$.
    It also means that
    \begin{equation}\label{eqn: formICase1}
        t_{1}(p)^{n} \cdot t_{2}(p)^{m}
        - p^{ny_{11} + my_{21}} =
        t_{3}(p)^{n+m} - p^{(n+m)y_{31}}.
    \end{equation}
    The least degree term of the RHS of
    \cref{eqn: formICase1} will now be
    $2(n+m)p^{(n+m - 1)y_{31} + y_{32}}$
    (here the coefficient $2$ is $\beta_2$),
    while the least degree term of the LHS may be either
    $np^{(n - 1)y_{11} + y_{12} + my_{21}}$ or
    $2mp^{ny_{11} + (m - 1)y_{21} + y_{22}}$,
    or their sum (if the degrees are the same).
    However, in either case, we find that the coefficient
    of the least degree term of the LHS is
    $\leq n + 2m < 2(n+m)$.
    This implies that \cref{eqn: formICase1}
    is not true.
    So, our assumption that
    $f(p) = 0$ for all $p \in \mathbb{R}$ must be false.
\end{claimproof}

\begin{claimproof}{\cref{claim: formICase2Impossible}}
    For convenience, we will define the polynomials
    $t_{1}, t_{2}, t_{3}$ such that
\begin{eqnarray*}
        t_{1}(p) &=& (p^{y_{11}} + p^{y_{12}} +
        p^{y_{13}} + p^{y_{14}}),\\
        t_{2}(p) &=&  (\alpha_{1}p^{y_{21}} + \alpha_{2}p^{y_{22}} 
        + \alpha_{3}p^{y_{23}}), \text{ and }\\
        t_{3}(p) &=&  (\beta_{1}p^{y_{31}} + \beta_{2}p^{y_{32}} 
        + \beta_{3}p^{y_{33}}).
\end{eqnarray*}
    Then, we see that
    $f(p) = t_{1}(p)^{n+m} - t_{2}(p)^{n}
    \cdot t_{3}(p)^{m}$.

    Let us assume that $f(p) = 0$ for all $p \in \mathbb{R}$.
    This means that 
    \[t_{1}(p)^{n+m} =
    t_{2}(p)^{n} \cdot t_{3}(p)^{m}\]
    for all $p \in \mathbb{R}$.
    We see that the coefficient of the least
    degree term of the LHS is $1$.
    However, if $\alpha_{1} = 2$, or $\beta_{1} = 2$,
    the coefficient of the least degree term
    of the LHS would be either $2^{n}$ or
    $2^{m}$ or $2^{n+m}$.
    Since this is not possible, as $n, m >0$,
    we may conclude that $\alpha_{1} = \beta_{1} = 1$.
    Similarly, by reasoning about the highest
    degree terms of the LHS and the RHS, we can
    see that $\alpha_{3} = \beta_{3} = 1$.
    This means that $\alpha_{2} = \beta_{2} = 2$.

    We note that the least degree term of the LHS
    is $p^{(n+m)y_{11}}$, and the
    least degree term of the RHS is
    $p^{ny_{21} + my_{31}}$.
    These terms must therefore be equal.
    This also means that
    \begin{equation}\label{eqn: formICase2}
        t_{1}(p)^{n+m} - p^{(n+m)y_{11}}
        = t_{2}(p)^{n}t_{3}(p)^{m}
        - p^{n y_{21} + my_{31}}.
    \end{equation}
    The least degree term of the LHS of
    \cref{eqn: formICase2} will now be
    $(n+m)p^{(n+m - 1)y_{11} + y_{12}}$,
    while the least degree term of the RHS will be
    either $2np^{(n - 1)y_{21} + y_{22} + my_{31}}$
    or $2mp^{ny_{21} + (m - 1)y_{31} + y_{32}}$,
    or their sum (if the degrees are the same).
    We observe the coefficients cannot be the same if the
    degrees are the same, since
    $2n + 2m > n+m$.
    So, the only remaining possibility is that
    $2n = n+m$, or
    $2m = n+m$.
    In either case, we see that unless $n=m$,
    it is not possible that $f(p) = 0$ for all
    $p \in \mathbb{R}$.
\end{claimproof}

We are finally ready to prove that if $M$ is of Form (I) and not isomorphic
to a tensor product
then $\PlEVAL(M)$ is $\#$P-hard.

\begin{lemma}\label{lemma: formIHardness}
    Let $M \in \text{Sym}_{4}^{\tt{pd}}(\mathbb{R}_{> 0})$
    be of Form (I) such that $\rho_{\tt{tensor}}(M) \neq 0$.
    Then, $\PlEVAL(M)$ is $\#$P-hard.
\end{lemma}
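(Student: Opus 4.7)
The plan is to argue by induction on $d = \dim \mathcal{L}(\lambda_{1}, \lambda_{2}, \lambda_{3}, \lambda_{4})$, the lattice dimension of the eigenvalues of $M$. In the base case $d = 0$, \cref{theorem: latticeHardnessSimple} will give the conclusion immediately. For the inductive step, assume $d \ge 1$; I will first dispose of the case in which every nonzero vector of $\mathcal{L}(\lambda_{1},\ldots,\lambda_{4})$ has support size at most two. Because $M$ is positive definite with positive entries, all $\lambda_{i} > 0$, so any such $\mathbf{x}$ must be an integer multiple $c\,\delta_{ij}$ of some $\delta_{ij} \in \mathcal{D}$; from $(\lambda_{i}/\lambda_{j})^{c} = 1$ and positivity I deduce $\lambda_{i} = \lambda_{j}$, hence $\delta_{ij} \in \mathcal{L}(\lambda_{1},\ldots,\lambda_{4})$. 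Thus the lattice admits a basis $\mathcal{B} \subseteq \mathcal{D}$ and \cref{theorem: latticeHardness} will deliver $\#$P-hardness.

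Otherwise there is some $\mathbf{x} \in \mathcal{L}(\lambda_{1},\ldots,\lambda_{4})$ of support size greater than two. Let $\mathbf{y}_{1}, \ldots, \mathbf{y}_{k}$ be the distinct permutations of $\mathbf{x}$ (each still of support $> 2$) and write $\xi_{\mathbf{y}_{j}}(N) = \phi_{\mathbf{y}_{j}}(N_{11},\ldots,N_{44})$. Starting from $M$, I will apply \cref{lemma: formIReduction} iteratively, each time appending the next $\xi_{\mathbf{y}_{j}}$ to the polynomial family $\mathcal{F}_{M} \cup \{\rho_{\tt{tensor}}\}$: at any step where the lemma already concludes $\#$P-hardness of the current matrix we are done, otherwise after $k$ applications we will obtain $M^{\sharp} \in \mathfrak{R}(M, \mathcal{F}_{M} \cup \{\rho_{\tt{tensor}},\xi_{\mathbf{y}_{1}},\ldots,\xi_{\mathbf{y}_{k}}\}) \cap \text{Sym}_{4}^{\tt{pd}}(\mathbb{R}_{>0})$ that is itself of Form (I). By \cref{lemma: permutatePhi}, the simultaneous nonvanishing of every $\xi_{\mathbf{y}_{j}}(M^{\sharp})$ is exactly the statement $\Phi_{\mathbf{x}}(M^{\sharp}_{11},\ldots,M^{\sharp}_{44}) \neq 0$.

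I will then apply \cref{theorem: positiveDefiniteDiagonal} to $M^{\sharp}$ with the polynomial family $\mathcal{F}_{M} \cup \{\rho_{\tt{tensor}}\}$ and this same $\mathbf{x}$, producing $N \in \mathfrak{R}(M^{\sharp}, \mathcal{F}_{M} \cup \{\rho_{\tt{tensor}}\} \cup \mathcal{F}_{M^{\sharp}} \cup \{\Psi_{\mathbf{x}}\}) \cap \text{Sym}_{4}^{\tt{pd}}(\mathbb{R}_{>0})$. Writing $(\mu_{1},\ldots,\mu_{4})$ for the eigenvalues of $N$, \cref{lemma: PsiExists} will give $\mathbf{x} \notin \overline{\mathcal{L}(\mu_{1},\ldots,\mu_{4})}$, while the nonvanishing of every $\Psi_{\mathbf{y}} \in \mathcal{F}_{M}$ on $N$ gives $\overline{\mathcal{L}(\mu_{1},\ldots,\mu_{4})} \subseteq \overline{\mathcal{L}(\lambda_{1},\ldots,\lambda_{4})}$. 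Because $\mathbf{x} \in \mathcal{L}(\lambda_{1},\ldots,\lambda_{4})$, the inclusion is strict, so \cref{lemma: dimReduction} yields $\dim \mathcal{L}(\mu_{1},\ldots,\mu_{4}) < d$. Also $\rho_{\tt{tensor}}(N) \neq 0$. Once I know $N$ is still of Form (I), the inductive hypothesis will close the argument: $\PlEVAL(N)$ will be $\#$P-hard and hence so will $\PlEVAL(M) \ge \PlEVAL(N)$.

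The main obstacle is verifying that Form (I) is preserved inside the proof of \cref{theorem: positiveDefiniteDiagonal}. The key point is that Form (I) is exactly the invariance $PMP^{\tt T} = M$ for the permutation matrix $P$ corresponding to the transposition $(1\,2)$. Both operations used in that proof, thickening $T_{n}$ (which is entrywise and therefore trivially commutes with $P$-conjugation) and the stretching family $\mathcal{S}_{M^{\sharp}}(\theta) = H D^{\theta} H^{\tt T}$, respect this invariance: since $M^{\sharp}$ commutes with $P$, each eigenspace of $M^{\sharp}$ is $P$-invariant, hence the orthogonal diagonaliser $H$ can be chosen with columns that are $\pm 1$ eigenvectors of $P$, and this forces $H D^{\theta} H^{\tt T}$ to commute with $P$ for every real $\theta$. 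Consequently every intermediate matrix constructed there, and in particular $N$, will inherit Form (I), closing the induction.
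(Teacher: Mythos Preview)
Your approach matches the paper's: pick a large-support $\mathbf{x}$ from the eigenvalue lattice, iterate \cref{lemma: formIReduction} over all its permutations to force $\Phi_{\mathbf{x}}\neq 0$ on the diagonal, invoke \cref{theorem: positiveDefiniteDiagonal} to obtain $N$ with $\Psi_{\mathbf{x}}(N)\neq 0$ and strictly smaller lattice dimension (via \cref{lemma: dimReduction}), and repeat. The paper phrases this as repetition, you as induction on $\dim\mathcal{L}$; these are the same. Your explicit check that the structural invariance $PMP^{\tt T}=M$ (with $P$ the permutation matrix for $(1\,2)$) survives both $\mathcal{S}_M$ and the entrywise interpolation $\mathcal{T}_M$ is a detail the paper leaves implicit.

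One small gap remains. Per the remark opening \cref{sec: all_forms}, ``Form (I)'' in this section carries the additional convention that $M_{22}\neq M_{33}\neq M_{44}$ are pairwise distinct, and \cref{lemma: formIReduction} (through \cref{lemma: formIDistinct} and \cref{lemma: formINotTensor}) genuinely uses this diagonal-distinctness. Your preservation argument establishes only the structural symmetry $PNP^{\tt T}=N$, not that $N_{33}\neq N_{44}$ etc. The fix is immediate: carry the polynomial
\[
\zeta(N)=(N_{11}-N_{33})(N_{11}-N_{44})(N_{22}-N_{33})(N_{22}-N_{44})(N_{33}-N_{44})
\]
inside $\mathcal{F}$ at every step. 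It is nonzero on the initial $M$ by hypothesis, and every reduct lemma you invoke then guarantees $\zeta(N)\neq 0$. Combined with your structural argument, $N$ is of Form (I) in the full sense required, and the induction closes.
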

\begin{proof}
    Let $(\lambda_{1}, \dots, \lambda_{4})$ be the
    eigenvalues of $M$.
    Let $\mathcal{B}$ be a lattice basis
    of the lattice $\mathcal{L}(\lambda_{1}, \dots, \lambda_{4})$.
    Let us assume that $c \cdot \delta_{ij} \in
    \mathcal{B}$ for some $c \in \mathbb{Z}$,
    and some $i \neq j \in [4]$.  Being  part of a basis, $c \ne 0$.
    This implies that
    $(\lambda_{i})^{c}(\lambda_{j})^{-c} = 1$.
    Since $(\lambda_{1}, \dots, \lambda_{4})$ are all positive,
    this implies that $\lambda_{i}\lambda_{j}^{-1} = 1$.
    In other words, we see that
    $\delta_{ij} \in \mathcal{L}(\lambda_{1}, \dots,
    \lambda_{4})$.
Since $\mathcal{B}$ is a lattice basis, we have $c = \pm 1$.   So, we may replace all such $c \cdot \delta_{ij}$ with
    $\delta_{ij}$ in $\mathcal{B}$ and still
    have a lattice basis.

    Now, let us assume that there exists some
    $\mathbf{x} \in \mathcal{B} \setminus \mathcal{D}$.
    (Recall that $ \mathcal{D}$ is defined in \cref{equation: mathcalD}.)
    As  $\mathbf{0} \neq
    \mathbf{x} \in \chi_{4}$, we know that the support of this $\mathbf{x}$
    must be greater than $2$.
    Now, for each $\sigma \in S_{4}$, we can construct
    $\mathbf{x}^{\sigma}$ such that
    $(\mathbf{x}^{\sigma})_{i} = x_{\sigma(i)}$.
    We will let $S_{4} = \{\sigma_{1}, \dots, \sigma_{24}\}$,
    and define $\xi_{i}: N \mapsto 
    \phi_{\mathbf{x}^{\sigma_{i}}}(N_{11}, \dots, N_{44})$
    for $i \in [24]$.
    We can now use \cref{lemma: formIReduction} to
    find $M_{1} \in \mathfrak{R}(M, \mathcal{F}_{M}
    \cup \{\rho_{\tt{tensor}}, \xi_{1}\}) \cap
    \text{Sym}_{4}^{\tt{pd}}(\mathbb{R}_{> 0})$
    where $\mathcal{F}_{M}$ is as defined in
    \cref{equation: FM}.
    We can now repeat this process with $M_{1}$ in place
    of $M$ to find
    $M_{2} \in \mathfrak{R}(M_{1}, \mathcal{F}_{M}
    \cup \{\rho_{\tt{tensor}}, \xi_{1}, \xi_{2}\} ) \cap
    \text{Sym}_{4}^{\tt{pd}}(\mathbb{R}_{> 0})$.
    
    After repeating this for all $i \in [24]$,
    we can find 
    $M' = M_{24} \in \mathfrak{R}(M, \mathcal{F}_{M}
    \cup \{\rho_{\tt{tensor}}, \xi_{1}, \dots, \xi_{24}\})
    \cap \text{Sym}_{4}^{\tt{pd}}(\mathbb{R}_{> 0})$.
    So, we see that
    $M' \in \mathfrak{R}(M, \mathcal{F}_{M}
    \cup \{\rho_{\tt{tensor}}, \xi\})\cap
    \text{Sym}_{4}^{\tt{pd}}(\mathbb{R}_{> 0})$,
    where $\xi: N \mapsto \Phi_{\mathbf{x}}(N_{11}, \dots, N_{44})$.
    Now, \cref{theorem: positiveDefiniteDiagonal}
    implies that we can find some
    $N_{1} \in \mathfrak{R}(M, \mathcal{F}_{M} \cup
    \{\rho_{\tt{tensor}}, \Psi_{\mathbf{x}}\} ) \cap
    \text{Sym}_{4}^{\tt{pd}}(\mathbb{R}_{> 0})$.

    If the eigenvalues of $N_{1}$ have no lattice basis
    $\mathcal{B} \subseteq \mathcal{D}$,
    we can now repeat this whole process with $N_{1}$
    instead of $M$.
    From \cref{lemma: dimReduction}, we know that
    after repeating this process at most 4 times, 
    we will have some $N \in
    \text{Sym}_{4}^{\tt{pd}}(\mathbb{R}_{> 0})$,
    such that $\PlEVAL(N) \leq \PlEVAL(M)$, and
    the eigenvalues of $N$ have a lattice basis
    $\mathcal{B} \subseteq \mathcal{D}$.
    Now, \cref{theorem: latticeHardness}
    proves that $\PlEVAL(N) \leq \PlEVAL(M)$ is $\#$P-hard.
\end{proof}
\subsection{Form (IV)}\label{sec: form_IV}

We will postpone our treatment of Form (III) for a while, and
deal with matrices of Form (IV) first.

\begin{figure}[ht]
\[
\begin{pmatrix}
            a & x & x & z\\
            x & a & x & z\\
            x & x & a & z\\
            z & z & z & b
        \end{pmatrix}
        \]
        \caption{\label{Form-IV-in-abc} Form (IV)}
\end{figure}

\begin{lemma}\label{lemma: formIVReduction}
    Let $M \in \text{Sym}_{4}^{\tt{pd}}(\mathbb{R}_{> 0})$
    be of Form (IV) such that $\rho_{\tt{tensor}}(M) \neq 0$.
    Let $\mathcal{F}$ be a countable set of 
    $\text{Sym}_{4}(\mathbb{R})$-polynomials such that
    $F(M) \neq 0$ for all $F \in \mathcal{F}$.
    Let $\mathbf{0} \neq \mathbf{x} \in \chi_{4}$ have 
    support size greater than $2$.
    Then, either $\PlEVAL(M)$ is $\#$P-hard, or
    there exists some $N \in \mathfrak{R}
    (M, \mathcal{F} \cup \{\rho_{\tt{tensor}}, \Psi_{\mathbf{x}}\}) \cap
    \text{Sym}_{4}^{\tt{pd}}(\mathbb{R}_{> 0})$
    of Form (IV).
\end{lemma}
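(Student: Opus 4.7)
The proof plan mirrors the strategy of \cref{lemma: formIReduction}. I first plan to establish a Form (IV) distinctness reduction in the spirit of \cref{lemma: formIDistinct}: starting from $M$ of Form (IV) with $\rho_{\tt{tensor}}(M) \ne 0$, either $\PlEVAL(M)$ is already $\#$P-hard or I can pass to a reduct $M' \in \mathfrak{R}(M, \mathcal{F} \cup \{\rho_{\tt{tensor}}\}) \cap \text{Sym}_4^{\tt{pd}}(\mathbb{R}_{> 0})$ that is still of Form (IV) and whose four parameters $a, b, x, z$ are pairwise distinct with diagonal entries separated from all off-diagonal entries. The tools are $\mathcal{S}_{M}(\theta)$ for small $\theta$, the edge gadget $R_{n}$ with a full-rank Vandermonde argument, and \cref{corollary: stretchingWorksPositive}, while \cref{theorem: nonDiagAllForms} together with \cref{lemma: formIIReduction} and \cref{lemma: formVReduction} rule out collapses into other forms (any non-Form-(IV) reduct is already $\#$P-hard under $\rho_{\tt{tensor}} \ne 0$).

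For the main claim I split on the support size of $\mathbf{x}$. When $\mathbf{x}$ has support $4$, I plan to evaluate $\Phi_{\mathbf{x}}$ directly on the diagonal $(a,a,a,b)$. Since all three of $1,2,3$ carry the same eigenvalue $a$ of $\text{diag}(M)$, a brief computation shows
\[
\Phi_{\mathbf{x}}(a,a,a,b) \;=\; \prod_{i=1}^{4} \phi_{\mathbf{x},i}(a,b)^{6},
\]
where $\phi_{\mathbf{x},i}(a,b)$ denotes the value of $\phi_{\mathbf{x}}$ with $\alpha_{i}=b$ and the other three arguments equal to $a$. Using $\sum_{j} x_{j}=0$ and $a \ne b$, each factor is nonzero exactly when $x_{i}\ne 0$, hence $\Phi_{\mathbf{x}}(a,a,a,b) \ne 0$. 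Then \cref{theorem: positiveDefiniteDiagonal} applied to the distinctified $M'$ produces $N$ with $\Psi_{\mathbf{x}}(N) \ne 0$, and \cref{theorem: nonDiagAllForms} (together with \cref{lemma: formIIReduction} and \cref{lemma: formVReduction}) confirms $N$ is of Form (IV) or $\PlEVAL(M)$ is $\#$P-hard.

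The harder case is support $3$, for then exactly one $x_{i}=0$ and $\Phi_{\mathbf{x}}(a,a,a,b) = 0$ identically, so I cannot invoke \cref{theorem: positiveDefiniteDiagonal} on $M'$ directly. I plan to thicken: set $\mathcal{T}_{M'}^{*}(p) = \mathcal{T}_{M'}(p^{m}, p^{m^{2}}, \dots, p^{m^{d}})$ with $m$ larger than the maximum exponent $e_{ijt}$, so that the entries of $\mathcal{T}_{M'}^{*}(p)$ are distinct monomials in $p$, and then analyze $\xi_{\mathbf{x}}\bigl(\mathcal{T}_{M'}^{*}(p)^{2}\bigr)$ as a one-variable polynomial. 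By the $S_{3}$-symmetry of Form (IV), I may assume the vanishing coordinate of $\mathbf{x}$ is either index $3$ or index $4$. In each subcase the polynomial splits as a difference of two products of trinomial/quadrinomial sums, and I will rule out the identically-zero scenario by comparing the leading and trailing monomials (and, when their degrees coincide, comparing coefficients or derivatives at $p=1$) in the manner of \cref{claim: formICase1Impossible} and \cref{claim: formICase2Impossible}, exploiting the distinctness $a \ne x$, $a \ne z$, $x \ne z$, $a\ne b$, $b\ne x$, $b\ne z$ secured in the first step. Once non-vanishing is established, \cref{lemma: thickeningWorks} yields a positive-definite reduct with $\xi_{\mathbf{x}}(N^{2}) \ne 0$, \cref{corollary: stretchingWorksPositive} promotes this to $\xi_{\mathbf{x}}(N) \ne 0$, and \cref{theorem: positiveDefiniteDiagonal} then delivers the required $\Psi_{\mathbf{x}}(N) \ne 0$.

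The main obstacle will be the support-$3$ subcase where $\mathbf{x}$ is supported on indices $\{1,2,3\}$: because rows $1,2,3$ of $M$ are interchangeable and carry identical diagonal entries, the polynomial $\xi_{\mathbf{x}}(\mathcal{T}_{M'}^{*}(p)^{2})$ has a built-in $S_{3}$ symmetry that makes the leading-monomial comparison more delicate than in the Form (I) analysis. I expect the resolution to require a finer enumeration of the monomial exponents $z_{ij}$ arising in $(\mathcal{T}_{M'}^{*}(p)^{2})_{ii}$, possibly supplemented by a derivative computation at $p=1$ in the vein of the trick used in \cref{lemma: notConfluentImpliesM14M23}, to rule out accidental coefficient cancellations. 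All iterations finally preserve Form (IV) through the same hardness dichotomy invocations, completing the reduction.
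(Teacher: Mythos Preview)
Your support-4 argument is sound, but the support-3 plan has a fatal structural obstruction that no amount of leading-monomial bookkeeping can overcome. Every matrix of Form (IV) --- and this includes $\mathcal{T}_{M'}^{*}(p)$, $\mathcal{T}_{M'}^{*}(p)^{2}$, and every $\mathcal{S}$-stretched or $\mathcal{T}$-thickened descendant you produce --- has diagonal of the shape $(a',a',a',b')$. For a support-$3$ vector $\mathbf{x}$ with $x_{k}=0$, the factor of $\Phi_{\mathbf{x}}(a',a',a',b')$ corresponding to any $\sigma\in S_{4}$ with $\sigma(k)=4$ is $\phi_{\mathbf{x}}$ evaluated at a tuple whose only distinguished entry $b'$ sits in coordinate $k$; since $x_{k}=0$ that factor equals $(a')^{s^{+}}-(a')^{s^{-}}=0$. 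Hence $\Phi_{\mathbf{x}}(a',a',a',b')=0$ identically --- it is not ``more delicate'', it is forced. Consequently \cref{theorem: positiveDefiniteDiagonal} is inapplicable on any Form-(IV) reduct when $|\operatorname{supp}(\mathbf{x})|=3$, and your pipeline $\xi_{\mathbf{x}}(N^{2})\neq 0 \Rightarrow \xi_{\mathbf{x}}(N)\neq 0 \Rightarrow \Phi_{\mathbf{x}}(N_{11},\dots,N_{44})\neq 0$ breaks at the last arrow.

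The paper sidesteps the diagonal entirely. A Form (IV) matrix has the closed-form eigenvalue list
\[
\Bigl(a-x,\ a-x,\ \tfrac{1}{2}\bigl[(a{+}2x{+}b)\pm\sqrt{(a{+}2x{-}b)^{2}+12z^{2}}\,\bigr]\Bigr),
\]
so after passing to $M'=\mathcal{S}_{M}(\theta^{*})$ close to $I$, the eigenvalues of $T_{2n}M'$ are written down explicitly (\cref{eqn: formIVEigenvalues}) and $\varphi_{\mathbf{x}}(\lambda_{1},\dots,\lambda_{4})$ is analyzed asymptotically as $n\to\infty$. \cref{claim: killedX} disposes of the degenerate directions, and \cref{claim: formIVLimit} shows $\varphi_{\mathbf{x}}\to 0$ or $\infty$ otherwise, giving $\Psi_{\mathbf{x}}(T_{2n^{*}}M')\neq 0$ for some $n^{*}$; then \cref{corollary: thickeningSufficient} (not \cref{theorem: positiveDefiniteDiagonal}) finishes. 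The key contrast is that the eigenvalue repeat $\lambda_{1}=\lambda_{2}$ is a single coincidence, whereas the diagonal repeat $M_{11}=M_{22}=M_{33}$ is a triple one; support-$3$ lattice vectors survive the former but never the latter.
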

\begin{proof}
    We will first define $\zeta: N \mapsto
    (N_{11} - N_{44})(N_{22} - N_{44})(N_{33} - N_{44})$.
    By our choice of $M$, we know that
    $\zeta(M) \neq 0$.
    We will now consider $\mathcal{S}_{M}(\theta)$.
    We know that there exists some $\delta > 0$
    such that $|\mathcal{S}_{M}(\theta)_{ij} - I_{ij}|
    < \frac{1}{3}$ for all $i, j \in [4]$,
    for all $0 < \theta < \delta$.
    We can now use \cref{lemma: stretchingWorks}
    to find some $M' = \mathcal{S}_{M}(\theta^{*})
    \in \mathfrak{R}(M, \mathcal{F} \cup
    \{\rho_{\tt{tensor}}, \zeta \}) \cap
    \text{Sym}_{4}^{\tt{pd}}(\mathbb{R}_{\ne 0})$,
    such that $0 < \theta^{*} < \delta$.

    By construction of $M'$, we know that
    $M'_{44} \neq M'_{11}, M'_{22}, M'_{33}$.
    If it is not the case that
    $M'_{11} = M'_{22} = M'_{33}$,
    then using \cref{lemma: order211Forms},
    we can find some $M'' \in
    \mathfrak{R}(M, \mathcal{F} \cup \{\rho_{\tt{tensor}} \}
    \cap \text{Sym}_{4}^{\tt{pd}})(\mathbb{R}_{> 0})$
    that is diagonal distinct, or isomorphic to a
    matrix of Form (I).
    If $M''$ is diagonal distinct, then
    \cref{theorem: diagDistHardness} tells us that
    $\PlEVAL(M'')$ and therefore $\PlEVAL(M)$
    are $\#$P-hard,
    If it is isomorphic to a matrix of Form (I),
    then \cref{lemma: formIHardness} similarly
    implies that $\PlEVAL(M)$ is $\#$P-hard.
    So, we may assume that $(M')_{11} = (M')_{22}
    = (M')_{33} \neq (M')_{44}$.
    \cref{theorem: nonDiagAllForms} now tells us
    that unless $M'$ is of Form (IV),
    $\PlEVAL(M)$ is again $\#$P-hard.
    So, we may assume that $M'$ is of Form (IV).

    We will now consider $T_{2n}(M')$ for all
    $n \geq 1$.
    We let
    $$T_{2n}(M') = \begin{pmatrix}
        (z_{11})^{n} & (z_{12})^{n} & (z_{12})^{n} & (z_{14})^{n}\\
        (z_{12})^{n} & (z_{11})^{n} & (z_{12})^{n} & (z_{14})^{n}\\
        (z_{12})^{n} & (z_{12})^{n} & (z_{11})^{n} & (z_{14})^{n}\\
        (z_{14})^{n} & (z_{14})^{n} & (z_{14})^{n} & (z_{44})^{n}\\
    \end{pmatrix}$$
    where $z_{11} = (M'_{11})^{2}$, $z_{12} = (M'_{12})^{2}$,
    $z_{14} = (M'_{14})^{2}$, and $z_{44} = (M'_{44})^{2}$.
    As squares of non-zero numbers they are all positive.
    It can be verified that the eigenvalues of $T_{2n}(M')$ are
    (with multiplicity):
    \begin{equation}\label{eqn: formIVEigenvalues}
    \begin{aligned}
        \lambda_{1}(T_{2n}(M'))
            &= (z_{11})^{n} - (z_{12})^{n},\\
        \lambda_{2}(T_{2n}(M'))
            &= (z_{11})^{n} - (z_{12})^{n},\\
        \lambda_{3}(T_{2n}(M'))
            &= \frac{1}{2}
            \left((z_{11})^{n} + 2(z_{12})^{n} + (z_{44})^{n}
            - \sqrt{((z_{11})^{n} + 2(z_{12})^{n} - (z_{44})^{n})^{2} 
            + 12((z_{14})^{n})^{2}}\right),\\
        \lambda_{4}(T_{2n}(M'))
            &= \frac{1}{2}
            \left((z_{11})^{n} + 2(z_{12})^{n} + (z_{44})^{n}
            + \sqrt{((z_{11})^{n} + 2(z_{12})^{n} - (z_{44})^{n})^{2} 
            + 12((z_{14})^{n})^{2}}\right).
    \end{aligned}
    \end{equation}

    From our construction of $M'$, we know that
    $z_{11} > z_{12}$. 
    So, we can see that
    $$\lim\limits_{n \rightarrow \infty}
    \frac{\lambda_{1}(T_{2n}M')}{(z_{11})^{n}} = 
    \lim\limits_{n \rightarrow \infty}
    \frac{\lambda_{2}(T_{2n}M')}{(z_{11})^{n}} = 
    \lim\limits_{n \rightarrow \infty}
    \frac{(z_{11})^{n} - (z_{12})^{n}}{(z_{11})^{n}} = 1.$$
    We also note that by our construction of $M'$,
    $z_{11} \neq z_{44}$.
    We shall prove the following claim shortly:
    
    \begin{claim}\label{claim: formIVLimit}
        If $z_{11} > z_{44}$, then
        $$\lim\limits_{n \rightarrow \infty}
        \frac{\lambda_{3}(T_{2n}M')}{(z_{44})^{n}} = 
        \lim\limits_{n \rightarrow \infty}
        \frac{\lambda_{4}(T_{2n}M')}{(z_{11})^{n}} = 1.$$
        If $z_{11}  < z_{44}$, then
        $$\lim\limits_{n \rightarrow \infty}
        \frac{\lambda_{3}(T_{2n}M')}{(z_{11})^{n}} = 
        \lim\limits_{n \rightarrow \infty}
        \frac{\lambda_{4}(T_{2n}M')}{(z_{44})^{n}} = 1.$$
    \end{claim}

    So, we see that for large enough values of $n$,
    $\lambda_{i}(T_{2n}M') > 0$ for all $i \in [4]$.
    We can now define the function
    $\varphi_{\mathbf{x}}:
    (\mathbb{R}_{\neq 0})^{4} \rightarrow \mathbb{R}$,
    such that
    $\varphi_{\mathbf{x}}(\alpha_{1}, \dots, \alpha_{4})
    = \prod_{i \in [4]}(\alpha_{i})^{x_{i}}$.
    For large enough values of $n$, we see that
    $\varphi_{\mathbf{x}}(\lambda_{1}(T_{2n}M'), \dots, 
    \lambda_{4}(T_{2n}M'))$ is well-defined.
    Moreover, we note that
    $$\phi_{\mathbf{x}}(\lambda_{1}(T_{2n}M'), \dots, 
    \lambda_{4}(T_{2n}M')) = 0 \iff
    \varphi_{\mathbf{x}}(\lambda_{1}(T_{2n}M'), \dots, 
    \lambda_{4}(T_{2n}M')) = 1.$$
    We will now study the behavior of the function
    $\phi_{\mathbf{x}}(\lambda_{1}(T_{2n}M'), \dots, 
    \lambda_{4}(T_{2n}M'))$, by studying the function
    $\varphi_{\mathbf{x}}(\lambda_{1}(T_{2n}M'), \dots, 
    \lambda_{4}(T_{2n}M'))$.

    We first note from \cref{eqn: formIVEigenvalues}
    that $\lambda_{1}(T_{2n}M') = \lambda_{2}(T_{2n}M')$
    for all $n \geq 1$.
    We also note that $\lambda_{4}(T_{2n}M') > 
    \lambda_{1}(T_{2n}M'), \lambda_{3}(T_{2n}M')$
    for large enough $n$.
    If $\lambda_{1}(T_{2n}M') = \lambda_{3}(T_{2n}M')$,
    that implies that
    $$(z_{44})^{n} - (z_{11})^{n} + 4(z_{12})^{n}
    = \sqrt{((z_{11})^{n} + 2(z_{12})^{n} - (z_{44})^{n})^{2} 
    + 12((z_{14})^{n})^{2}}.$$
    Squaring both sides, we find that
    \begin{multline*}
        ((z_{44})^{n} - (z_{11})^{n})^{2} + 16(z_{12})^{2n}
    + 8(z_{12})^{n}((z_{44})^{n} - (z_{11})^{n}) =
    ((z_{44})^{n} - (z_{11})^{n})^{2} + 4(z_{12})^{2n}\\
    - 4(z_{12})^{n}((z_{44})^{n} - (z_{11})^{n}) + 12(z_{14})^{2n}.
    \end{multline*}
    On rearranging terms, we find that
    $$12(z_{12})^{2n} + 12(z_{12}z_{44})^{n}
    =  12(z_{14})^{2n} + 12(z_{12}z_{11})^{n}.$$
    If we write this as $A^n + B^n = C^n + D^n$, where $A = (z_{12})^{2}$,
    $B= z_{12}z_{44}$, $C= (z_{14})^{2}$ and $D = z_{12}z_{11}$, then we know that
    $A < D$ and $B \ne D$.
    (1) If $C \ge D$, then the RHS has order $C^n$. 
    As $A<D \le C$, to match the leading order, $B=C$ (in this case, $C=D$ is impossible). 
    But then $A^n = D^n$, 
    contradicting $A<D$. (2) If $C<D$, then the RHS has order $D^n$. 
    But this cannot be since $A < D$ and $B \ne D$.
    So, this equation cannot be true for large enough
    values of $n$.
    So, we see that
    $\lambda_{1}(T_{2n}M') = \lambda_{2}(T_{2n}M')$
    and $|\lambda_{1}(T_{2n}M')| \neq |\lambda_{3}(T_{2n}M')|
    \neq |\lambda_{4}(T_{2n}M')|$ are pairwise distinct for large 
    values of $n$.
    So, \cref{claim: killedX} immediately
    implies that if $x_{3} = 0$, or $x_{4} = 0$, or $x_{1} + x_{2} = 0$, then
    $\phi_{\mathbf{x}}(\lambda_{1}(T_{2n}M'), \dots, 
    \lambda_{4}(T_{2n}M')) \neq 0$
    for large enough $n$.

    We will now assume that $x_{3} \neq 0$, and $x_{4} \neq 0$.
    Now, we find that when $z_{11} > z_{44}$,
    \begin{multline*}
        \varphi_{\mathbf{x}}(\lambda_{1}(T_{2n}M'),
        \dots, \lambda_{4}(T_{2n}M'))
        = \left((z_{11})^{n(x_{1} + x_{2} + x_{4})}(z_{44})^{n(x_{3})}\right)
        \cdot \left(\frac{\lambda_{1}(T_{2n}M')}
        {(z_{11})^{n}} \right)^{x_{1}} \cdot
        \left(\frac{\lambda_{2}(T_{2n}M')}
        {(z_{11})^{n}} \right)^{x_{2}} \\ \cdot
        \left(\frac{\lambda_{3}(T_{2n}M')}
        {(z_{44})^{n}} \right)^{x_{3}} \cdot
        \left(\frac{\lambda_{4}(T_{2n}M')}
        {(z_{11})^{n}} \right)^{x_{4}}
    \end{multline*}
    So, we see that
    $$\lim\limits_{n \rightarrow \infty}
    \varphi_{\mathbf{x}}(\lambda_{1}(T_{2n}M'), \dots, 
    \lambda_{4}(T_{2n}M')) = 
    \lim\limits_{n \rightarrow  \infty}
    \left(\frac{z_{44}}{z_{11}}\right)^{nx_{3}},$$
    which is either $0$ or $\infty$ depending
    on whether $x_{3} > 0$, or $x_{3} < 0$.
    Similarly, when $z_{11} < z_{44}$, we find that
    \begin{multline*}
        \varphi_{\mathbf{x}}(\lambda_{1}(T_{2n}M'),
        \dots, \lambda_{4}(T_{2n}M'))
        = \left((z_{11})^{n(x_{1} + x_{2} + x_{3})}(z_{44})^{n(x_{4})}\right)
        \cdot \left(\frac{\lambda_{1}(T_{2n}M')}
        {(z_{11})^{n}} \right)^{x_{1}} \cdot
        \left(\frac{\lambda_{2}(T_{2n}M')}
        {(z_{11})^{n}} \right)^{x_{2}} \\ \cdot
        \left(\frac{\lambda_{3}(T_{2n}M')}
        {(z_{11})^{n}} \right)^{x_{3}} \cdot
        \left(\frac{\lambda_{4}(T_{2n}M')}
        {(z_{44})^{n}} \right)^{x_{4}}
    \end{multline*}
    So, we see that
    $$\lim\limits_{n \rightarrow \infty}
    \varphi_{\mathbf{x}}(\lambda_{1}(T_{2n}M'), \dots, 
    \lambda_{4}(T_{2n}M')) = 
    \lim\limits_{n \rightarrow}
    \left(\frac{z_{44}}{z_{11}}\right)^{nx_{4}},$$
    which is either $0$ or $\infty$ depending
    on whether $x_{4} < 0$, or $x_{4} > 0$.

    So, we see that in either case,
    $\lim\limits_{n \rightarrow \infty}
    \varphi_{\mathbf{x}}(\lambda_{1}(T_{2n}M'), \dots, 
    \lambda_{4}(T_{2n}M'))$ is either $0$ or $\infty$, and so it is away from $1$.   
    This proves that for all $\mathbf{x} \in \chi_{4}$ of
    support size greater than $2$,
    for large enough $n \geq 1$,
    $\phi_{\mathbf{x}}(\lambda_{1}(T_{2n}M'),
    \dots, \lambda_{4}(T_{2n}M')) \neq 0$.
    In particular, given any $\sigma \in S_{4}$, we can see that
    $\phi_{\mathbf{y}}(\lambda_{1}(T_{2n}M'),
    \dots, \lambda_{4}(T_{2n}M')) \neq 0$
    for all $\mathbf{y}$ such that
    $y_{\sigma(i)} = x_{i}$, for some large enough value of $n$.
    This proves there exists some $n^{*}$, such that
    $$\Phi_{\mathbf{x}}(\lambda_{1}(T_{2n^{*}}M'),
    \dots, \lambda_{4}(T_{2n^{*}}M')) =
    \Psi_{\mathbf{x}}(T_{2n^{*}}M') \neq 0.$$

    We will now define the $\text{Sym}_{4}(\mathbb{R})$-polynomial
    $\xi: N \mapsto \Psi_{\mathbf{x}}(T_{2n^{*}}N)$.
    Since $M' = \mathcal{S}_{M}(\theta^{*})$ by construction,
    we see that $\xi(\mathcal{S}_{M}(\theta^{*})) \neq 0$.
    So, we can use \cref{corollary: stretchingWorksPositive} to
    find some $M'' \in \mathfrak{R}(M, \mathcal{F} 
    \cup \{\rho_{\tt{tensor}}, \zeta, \xi \})
    \cap \text{Sym}_{4}^{\tt{pd}}(\mathbb{R}_{> 0})$.
 We now see that $\Psi_{\mathbf{x}}(T_{2n^{*}}M'') = \xi(M'') \neq 0$.
    So, \cref{corollary: thickeningSufficient} allows
    us to find the required
    $N \in \mathfrak{R}(M, \mathcal{F} 
    \cup \{\rho_{\tt{tensor}}, \zeta, \Psi_{\mathbf{x}} \})
    \cap \text{Sym}_{4}^{\tt{pd}}(\mathbb{R}_{> 0})$.
\end{proof}

\begin{claimproof}{\cref{claim: formIVLimit}}
    First, we note that $\lambda_{3}(T_{2n}M')$ and
    $\lambda_{4}(T_{2n}M')$ are symmetric if we exchange
    $z_{11}$ and $z_{44}$.
    So, it will be sufficient for us to prove this claim
    when $z_{11} > z_{44}$.
    We note that
    $$(z_{11})^{n} + 2(z_{12})^{n} + (z_{44})^{n} = 
    (z_{11})^{n} \left(1 + 2\left(\frac{z_{12}}{z_{11}}\right)^{n} 
    + \left(\frac{z_{44}}{z_{11}}\right)^{n}\right).$$
    So, 
    $$\lim\limits_{n \rightarrow \infty}
    \frac{(z_{11})^{n} + 2(z_{12})^{n} + (z_{44})^{n}}{(z_{11})^{n}} = 1.$$
    We also note that
    $$\sqrt{((z_{11})^{n} + 2(z_{12})^{n} - (z_{44})^{n})^{2} 
    + 12((z_{14})^{n})^{2}}
    = (z_{11})^{n} \cdot
    \sqrt{\left(1 + 2\left(\frac{z_{12}}{z_{11}}\right)^{n}
    - \left(\frac{z_{44}}{z_{11}}\right)^{n}\right)^{2} 
    + 12\left(\left(\frac{z_{14}}{z_{11}}\right)^{n}\right)^{2}}.$$
    So, 
    $$\lim\limits_{n \rightarrow \infty}
    \frac{\sqrt{((z_{11})^{n} + 2(z_{12})^{n} - (z_{44})^{n})^{2} 
    + 12((z_{14})^{n})^{2}}}{(z_{11})^{n}} = 1.$$
    This implies that
    $$\lim\limits_{n \rightarrow \infty}
    \frac{\lambda_{4}(T_{2}M')}{(z_{11})^{n}} = 
    1.$$

    We now note that
    \begin{align*}
        \lambda_{3}(T_{2n}M')
        &= \frac{1}{2}\left(\frac
        {\left((z_{11})^{n} + 2(z_{12})^{n} + (z_{44})^{n} \right)^{2}
        - ((z_{11})^{n} + 2(z_{12})^{n} - (z_{44})^{n})^{2} 
        - 12((z_{14})^{n})^{2}}{(z_{11})^{n} + 2(z_{12})^{n} + (z_{44})^{n}
        + \sqrt{((z_{11})^{n} + 2(z_{12})^{n} - (z_{44})^{n})^{2} 
        + 12((z_{14})^{n})^{2}}}\right)\\
        &= \frac{1}{2}\left(\frac
        {4(z_{44})^{n}((z_{11})^{n} + (2z_{12})^{n}) - 12(z_{14})^{2n}}{(z_{11})^{n} + 2(z_{12})^{n} + (z_{44})^{n}
        + \sqrt{((z_{11})^{n} + 2(z_{12})^{n} - (z_{44})^{n})^{2} 
        + 12((z_{14})^{n})^{2}}}\right)\\
        &= \frac{(z_{44})^{n}}{2}\left(\frac
        {4 + 8\left(\frac{z_{12}}{z_{11}}\right)^{n} - 
        12\left(\frac{(z_{14})^{2}}{z_{44}z_{11}}\right)^{n}}
        {1 + 2\left(\frac{z_{12}}{z_{11}}\right)^{n} 
        + \left(\frac{z_{44}}{z_{11}}\right)^{n} + 
        \sqrt{\left(1 + 2\left(\frac{z_{12}}{z_{11}}\right)^{n}
        - \left(\frac{z_{44}}{z_{11}}\right)^{n}\right)^{2} 
        + 12\left(\left(\frac{z_{14}}{z_{11}}\right)^{n}\right)^{2}}}
        \right).
    \end{align*}
    
    We already know that
    $$\lim\limits_{n \rightarrow \infty} \left(
    1 + 2\left(\frac{z_{12}}{z_{11}}\right)^{n} 
    + \left(\frac{z_{44}}{z_{11}}\right)^{n} + 
    \sqrt{\left(1 + 2\left(\frac{z_{12}}{z_{11}}\right)^{n}
    - \left(\frac{z_{44}}{z_{11}}\right)^{n}\right)^{2} 
    + 12\left(\left(\frac{z_{14}}{z_{11}}\right)^{n}\right)^{2}}
    \right) = 2.$$
    So,
    $$\lim\limits_{n \rightarrow \infty} \frac
    {\lambda_{3}(T_{2n}M')}{(z_{44})^{n}} = 
    \frac{1}{2}\left(\frac{4}{2}\right) = 1.$$
\end{claimproof}

We can now prove that $\PlEVAL(M)$, when $M$ is of Form (IV) that is not 
isomorphic to a tensor product,
must also be $\#$P-hard.

\begin{lemma}\label{lemma: formIVHardness}
    Let $M \in \text{Sym}_{4}^{\tt{pd}}(\mathbb{R}_{> 0})$
    be of Form (IV) such that $\rho_{\tt{tensor}}(M) \neq 0$.
    Then, $\PlEVAL(M)$ is $\#$P-hard.
\end{lemma}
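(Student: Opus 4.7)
The plan is to mirror the argument of \cref{lemma: formIHardness}, but using the stronger reduction lemma \cref{lemma: formIVReduction} which already produces $\Psi_{\mathbf{x}}$ directly (rather than just $\phi_{\mathbf{x}}$), so that an intermediate appeal to \cref{theorem: positiveDefiniteDiagonal} is unnecessary.

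First I would let $(\lambda_1, \dots, \lambda_4)$ denote the eigenvalues of $M$, which are all positive since $M \in \text{Sym}_{4}^{\tt{pd}}(\mathbb{R}_{> 0})$. Fix a lattice basis $\mathcal{B}$ of $\mathcal{L}(\lambda_1, \dots, \lambda_4)$. I would clean up $\mathcal{B}$ exactly as in the proof of \cref{lemma: formIHardness}: if any $c \cdot \delta_{ij} \in \mathcal{B}$ (with $c \ne 0$), then $(\lambda_i)^c(\lambda_j)^{-c} = 1$, and positivity of the eigenvalues forces $\lambda_i\lambda_j^{-1} = 1$, so $\delta_{ij} \in \mathcal{L}(\lambda_1,\dots,\lambda_4)$; being part of a basis forces $c = \pm 1$, and we may replace such basis elements by $\delta_{ij}$.

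Next I would distinguish two cases. If $\mathcal{B} \subseteq \mathcal{D}$, \cref{theorem: latticeHardness} immediately yields that $\PlEVAL(M)$ is $\#$P-hard. Otherwise, pick some $\mathbf{x} \in \mathcal{B} \setminus \mathcal{D}$. Since $\mathbf{x} \in \chi_4 \setminus \mathcal{D}$, its support must have size strictly greater than $2$ (any $\mathbf{0}\ne \mathbf{x}\in\chi_4$ with support $\le 2$ is a nonzero integer multiple of some $\delta_{ij}$, which after the cleanup step above would already lie in $\mathcal{D}$). So \cref{lemma: formIVReduction} applies: either $\PlEVAL(M)$ is already $\#$P-hard (and we are done), or we obtain $N_1 \in \mathfrak{R}(M, \mathcal{F}_M \cup \{\rho_{\tt{tensor}}, \Psi_{\mathbf{x}}\}) \cap \text{Sym}_{4}^{\tt{pd}}(\mathbb{R}_{> 0})$ which is again of Form (IV) and satisfies $\rho_{\tt{tensor}}(N_1) \ne 0$.

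Finally I would iterate. Let $(\mu_1,\dots,\mu_4)$ be the eigenvalues of $N_1$. By the definition of $\mathcal{F}_M$ in \cref{equation: FM} and \cref{lemma: PsiExists}, $F(N_1) \ne 0$ for every $F \in \mathcal{F}_M$ translates to $\overline{\mathcal{L}(\mu_1,\dots,\mu_4)} \subseteq \overline{\mathcal{L}(\lambda_1,\dots,\lambda_4)}$; the extra requirement $\Psi_{\mathbf{x}}(N_1) \ne 0$ combined with $\mathbf{x} \in \mathcal{B} \subseteq \overline{\mathcal{L}(\lambda_1,\dots,\lambda_4)}$ forces the containment to be strict, and then \cref{lemma: dimReduction} yields $\dim \mathcal{L}(\mu_1,\dots,\mu_4) < \dim \mathcal{L}(\lambda_1,\dots,\lambda_4)$. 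Repeating the argument with $N_1$ in place of $M$ (using again that $N_1$ is of Form (IV) with nonzero $\rho_{\tt{tensor}}$, which is what lets \cref{lemma: formIVReduction} apply), after at most $4$ steps we reach a matrix $N \in \text{Sym}_{4}^{\tt{pd}}(\mathbb{R}_{> 0})$ with $\PlEVAL(N) \le \PlEVAL(M)$ whose eigenvalue lattice admits a basis contained in $\mathcal{D}$. \cref{theorem: latticeHardness} then yields $\#$P-hardness of $\PlEVAL(N)$, hence of $\PlEVAL(M)$. No genuinely new analytic obstacle arises here; the heavy lifting is packaged inside \cref{lemma: formIVReduction} (which produces the explicit closed-form eigenvalues of $T_{2n}M'$ and carries out the asymptotic analysis in $n$), so this final step is essentially a bookkeeping iteration.
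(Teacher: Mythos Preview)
Your proposal is correct and matches the paper's own proof, which simply says the argument is ``very similar to the proof of \cref{lemma: formIHardness}'' with \cref{lemma: formIVReduction} substituted for \cref{lemma: formIReduction}. You have also correctly identified and made explicit the simplification this substitution affords: since \cref{lemma: formIVReduction} outputs $\Psi_{\mathbf{x}}$ directly (rather than just $\phi_{\mathbf{x}}$ on the diagonal), the 24-fold iteration over $S_4$ and the appeal to \cref{theorem: positiveDefiniteDiagonal} from the Form~(I) proof are no longer needed.
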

\begin{proof}
    The proof is very similar to the proof of
    \cref{lemma: formIHardness}.
    Let $(\lambda_{1}, \dots, \lambda_{4})$ be the
    eigenvalues of $M$.
    Let $\mathcal{B}$ be a lattice basis
    of the lattice $\mathcal{L}(\lambda_{1}, \dots, \lambda_{4})$.
    If there is some $c \cdot \delta_{ij} \in
    \mathcal{B}$ for some non-zero $c \in \mathbb{Z}$, 
    we may replace all such $c \cdot \delta_{ij}$ with
    $\delta_{ij}$ in $\mathcal{B}$ and still
    have a lattice basis.

    Now, let us assume that there exists some
    $\mathbf{x} \in \mathcal{B} \setminus \mathcal{D}$.
    We know that the support of this $\mathbf{x}$
    must be greater than $2$.
    Unless $\PlEVAL(M)$ is $\#$P-hard, we can now use
    \cref{lemma: formIVReduction} to find some
    $N_{1} \in \mathfrak{R}(M, \mathcal{F}_{M} \cup
    \{\rho_{\tt{tensor}}, \Psi_{\mathbf{x}}\} \cap
    \text{Sym}_{4}^{\tt{pd}}(\mathbb{R}_{> 0})$ of
    Form (IV).

    If the eigenvalues of $N_{1}$ have no lattice basis
    $\mathcal{B} \subseteq \mathcal{D}$,
    we can now repeat this whole process with $N_{1}$
    instead of $M$.
    From \cref{lemma: dimReduction}, we know that
    after repeating this process finitely many times,
    we will have some $N \in
    \text{Sym}_{4}^{\tt{pd}}(\mathbb{R}_{> 0})$,
    such that $\PlEVAL(N) \leq \PlEVAL(M)$, and
    the eigenvalues of $N$ have a lattice basis
    $\mathcal{B} \subseteq \mathcal{D}$.
    Now, \cref{theorem: latticeHardness}
    proves that $\PlEVAL(N) \leq \PlEVAL(M)$ is $\#$P-hard.
\end{proof}
\subsection{Form (III)}\label{sec: form_III}

We will now deal with matrices of Form (III).
We will once again need some setup, by proving that
the matrix $M$ of Form (III) may be assumed to have
some additional structure.
Note that by a simultaneous permutation we can permute the rows and columns of $M$ 
such that $M_{11} = M_{22} > M_{33} = M_{44}$, and
that $M_{13} \geq M_{14}$.
For the rest of this section, we will assume this is the case.

\begin{figure}[ht]
\[
\begin{pmatrix}
            a & x & y & z\\
            x & a & z & y\\
            y & z & b & t\\
            z & y & t & b
        \end{pmatrix}
        \]
        \caption{\label{Form-III-in-abc} Form (III)}
\end{figure}

\begin{lemma}\label{lemma: DiagLarge}
    Let $M \in \text{Sym}_{4}^{\tt{pd}}(\mathbb{R}_{> 0})$.
    Let $\mathcal{F}$ be a countable set of 
    $\text{Sym}_{4}(\mathbb{R})$-polynomials such that
    $F(M) \neq 0$ for all $F \in \mathcal{F}$.
    Then there exists some $N \in \mathfrak{R}
    (M, \mathcal{F}\}) \cap
    \text{Sym}_{4}^{\tt{pd}}(\mathbb{R}_{> 0})$, such that
    $N_{ii} > N_{jk}$ for all $i \in [4]$, and
    $j \neq k \in [4]$.
\end{lemma}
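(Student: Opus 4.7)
The plan is to apply the stretching operator $\mathcal{S}_M$ with a small positive parameter in order to bring $M$ close to the identity matrix, where diagonal dominance in absolute value holds trivially, and then to apply the thickening operator $T_2$ to convert any non-zero (possibly negative) off-diagonal entries into strictly positive ones while preserving positive definiteness and upgrading the diagonal dominance to the usual sense.

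First I would use that $\mathcal{S}_M(0) = I$, together with the analyticity of $\mathcal{S}_M(\theta)_{ij}$ in $\theta$, to find some $\delta > 0$ such that for every $\theta \in (0, \delta)$ the matrix $\mathcal{S}_M(\theta)$ is positive definite, $|\mathcal{S}_M(\theta)_{ii} - 1| < 1/3$ for every $i$, and $|\mathcal{S}_M(\theta)_{jk}| < 1/3$ for every $j \neq k$. This guarantees the absolute-value diagonal dominance $|\mathcal{S}_M(\theta)_{ii}| > |\mathcal{S}_M(\theta)_{jk}|$ throughout the interval.

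Next I would form the countable family of $\text{Sym}_4(\mathbb{R})$-polynomials $\mathcal{F}' = \mathcal{F} \cup \{F \circ T_2 : F \in \mathcal{F}\}$, where $(F \circ T_2)(N') := F(T_2 N')$ is itself a homogeneous polynomial in the entries of $N'$ (the entries of $T_2 N'$ are monomials in the entries of $N'$, so the composition remains a $\text{Sym}_4(\mathbb{R})$-polynomial). Applying \cref{lemma: stretchingWorks} to $(M, \mathcal{F}')$ on the interval $(0, \delta)$ would then produce some $\theta^* \in (0, \delta)$ and a matrix $M' := \mathcal{S}_M(\theta^*) \in \mathfrak{R}(M, \mathcal{F}') \cap \text{Sym}_4^{\tt{pd}}(\mathbb{R}_{\neq 0})$, so that every $M'_{ij}$ is non-zero, $F(M') \neq 0$ for all $F \in \mathcal{F}$, and crucially $F(T_2 M') \neq 0$ for all $F \in \mathcal{F}$, alongside the absolute-value diagonal dominance inherited from the interval.

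Finally, setting $N := T_2 M'$, the entries $N_{ij} = (M'_{ij})^2$ are strictly positive; positive definiteness follows from the Schur product theorem applied to $M' \circ M'$ (since $M'$ is positive definite with positive diagonal); the reduction chain $\PlEVAL(N) \le \PlEVAL(M') \le \PlEVAL(M)$ comes from thickening and stretching; $F(N) = F(T_2 M') \neq 0$ for every $F \in \mathcal{F}$ by construction; and the required strict diagonal dominance $N_{ii} = (M'_{ii})^2 > (M'_{jk})^2 = N_{jk}$ follows from the absolute-value dominance of $M'$. The main obstacle will be verifying the hypothesis of \cref{lemma: stretchingWorks} for the added polynomials $F \circ T_2$, namely that $\theta \mapsto F(T_2 \mathcal{S}_M(\theta))$ is not identically zero for each $F \in \mathcal{F}$. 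The composite $F \circ T_2$ is not the zero polynomial because the substitution $N_{ij} \mapsto N_{ij}^2$ preserves non-vanishing; its non-vanishing on the analytic curve $\theta \mapsto \mathcal{S}_M(\theta)$ then follows either by examining the dominant exponential term as $\theta \to \infty$ (where $\mathcal{S}_M(\theta)$ concentrates on the Perron rank-one contribution, giving a specific non-degenerate limit matrix at which $F \circ T_2$ can be shown to be non-zero) or by the Vandermonde-style analytic reasoning employed throughout the earlier sections of the paper.
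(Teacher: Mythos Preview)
Your overall shape—stretch near the identity, then apply $T_2$—matches the paper's, and your use of the Schur product theorem for positive definiteness (in place of the paper's Gershgorin argument) is fine. But there is a genuine gap at the point you yourself flag: verifying the hypothesis of \cref{lemma: stretchingWorks} for the composed polynomials $F \circ T_2$.

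Neither of your proposed fixes works. The $\theta \to \infty$ argument fails because after normalization $\mathcal{S}_M(\theta)$ tends to the rank-one Perron matrix $v_1 v_1^{\tt T}$, and $T_2$ of a rank-one matrix is again rank-one; any $F$ that factors through a $2 \times 2$ minor (and many $F$'s that arise in the paper, e.g.\ $\rho_{\tt tensor}$, $\det$, Gram determinants) vanishes there, so the leading exponential term has coefficient zero and you learn nothing. The ``Vandermonde-style'' suggestion is not an argument: knowing that $F \circ T_2$ is a nonzero polynomial on all of $\text{Sym}_4(\mathbb{R})$ does not prevent it from vanishing identically on the one-dimensional analytic curve $\theta \mapsto \mathcal{S}_M(\theta)$, which lies inside the low-dimensional variety of matrices sharing $M$'s eigenvectors. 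You would need a specific $\theta$ with $F(T_2 \mathcal{S}_M(\theta)) \neq 0$, and there is no obvious one: $\theta = 1$ gives $F(T_2 M)$, which is not assumed nonzero, and no $\theta$ makes $T_2 \mathcal{S}_M(\theta) = M$ since entrywise squaring is not matrix squaring.

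The paper closes this gap with a preliminary \emph{thickening} step before stretching. Writing $M_{ij} = \prod_t g_t^{e_{ijt}}$ via the generating set, one has entrywise
\[
T_2\bigl(\mathcal{T}_M(g_1^{1/2}, \dots, g_d^{1/2})\bigr)_{ij}
= \Bigl(\prod_t (g_t^{1/2})^{e_{ijt}}\Bigr)^2 = M_{ij},
\]
so $(F \circ T_2)\bigl(\mathcal{T}_M(g_1^{1/2}, \dots, g_d^{1/2})\bigr) = F(M) \neq 0$ furnishes the witness $\mathbf{p}_F$ that \cref{lemma: thickeningWorks} needs. This yields an $M' \in \mathfrak{R}(M, \{F \circ T_2 : F \in \mathcal{F}\}) \cap \text{Sym}_4^{\tt pd}(\mathbb{R}_{>0})$ with $(F \circ T_2)(M') \neq 0$ for every $F$. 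Only then does the paper stretch $M'$ toward $I$; now \cref{lemma: stretchingWorks} applies trivially because the witness $\theta = 1$ gives $(F \circ T_2)(\mathcal{S}_{M'}(1)) = (F \circ T_2)(M') \neq 0$. You should insert this thickening-with-square-root step before invoking \cref{lemma: stretchingWorks}.
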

\begin{proof}
    We may assume that the entries of $M$ are generated
    by some $\{g_{t}\}_{t \in [d]}$.
    We can use \cref{lemma: MequivalentCM} to replace
    $M$ with some other $c M$ such that
    $e_{ijt} \geq 0$ for all $i, j \in [4]$, $t \in [d]$.
    We will now let $\mathcal{F}' = \{F': N \mapsto F(T_{2}N):
    F \in \mathcal{F}\}$.
    We note that since $F(M) \neq 0$ for all $F \in 
    \mathcal{F}$,
    $F'(\mathcal{T}_{M}(g_{1}^{\nicefrac{1}{2}}, \dots,
    g_{d}^{\nicefrac{1}{2}})) \neq 0$ for all
    $F' \in \mathcal{F}'$.
    So, \cref{lemma: thickeningWorks} allows us to
    pick some $M' \in \mathfrak{R}(M, \mathcal{F}') \cap
    \text{Sym}_{4}^{\tt{pd}}(\mathbb{R}_{> 0})$.
    
    We will now consider $\mathcal{S}_{M'}(\theta)$.
    We know that there exists some $\delta > 0$ such that
    $|\mathcal{S}_{M'}(\theta)_{ij} - I_{ij}| < \frac{1}{3}$
    for all $i, j \in [4]$, for all $0 < \theta < \delta$.
    So, we can use \cref{lemma: stretchingWorks} to find
    some $M'' = \mathcal{S}_{M'}(\theta^{*}) \in \mathfrak{R}
    (M', \mathcal{F}') \cap
    \text{Sym}_{4}^{\tt{pd}}(\mathbb{R}_{\neq 0})$, such that
    $0 < \theta^{*} < \delta$.

    We will now let $N = T_{2}(M'')$.
    Clearly, $N \in \text{Sym}_{4}(\mathbb{R}_{> 0})$.
    By our choice of $M''$, we know that $N_{ii} > \frac{4}{9}$, and
    $N_{jk} < \frac{1}{9}$ for all $i \in [4]$, and $j \neq k \in [4]$.
    By the \emph{\bf Gershgorin Circle Theorem} \cite{gershgorin1931uber},
    we know that if we let $r_{i} = \sum_{j \in [4]\setminus \{i\}}|N_{ij}|$,
    then all the eigenvalues of $N$ lie within
    one of the intervals $[N_{ii} - r_{i}, N_{ii} + r_{i}]$.
    But as we have seen, $N_{ii} - r_{i} >
    \left(\frac{4}{9} - 3 \cdot\frac{1}{9}\right) > 0$
    for all $i \in [4]$.
    This implies that all eigenvalues of $N$ are positive.
    So, $N \in \text{Sym}_{4}^{\tt{pd}}(\mathbb{R}_{> 0})$.
    Moreover, by our choice of $\mathcal{F}'$, we know that
    since $F'(M'') \neq 0$ for all $F' \in \mathcal{F}'$,
    $F(N) \neq 0$ for all $F \in \mathcal{F}$.
    So, $N \in \mathfrak{R}(M, \mathcal{F}) \cap
    \text{Sym}_{4}^{\tt{pd}}(\mathbb{R}_{> 0})$
    is the required matrix.
\end{proof}

As it turns out, it is possible for matrices of Form (III)
to be isomorphic to $A \otimes B$ for some
$A, B \in \text{Sym}_{2}(\mathbb{R})$.
We will now show that if $\rho_{\tt{tensor}}(M) \neq 0$,
then $M$ has sufficient structure, that
we can prove that $\PlEVAL(M)$ is $\#$P-hard.

\begin{lemma}\label{lemma: FormIIIM12M34}
    Let $M \in \text{Sym}_{4}^{\tt{pd}}(\mathbb{R}_{> 0})$
    be of Form (III) such that $\rho_{\tt{tensor}}(M) \neq 0$.
    Let $\mathcal{F}$ be a countable set of 
    $\text{Sym}_{4}(\mathbb{R})$-polynomials such that
    $F(M) \neq 0$ for all $F \in \mathcal{F}$.
    Then, either $\PlEVAL(M)$ is $\#$P-hard, or
    there exists some $N \in \mathfrak{R}
    (M, \mathcal{F} \cup \{\rho_{\tt{tensor}}\}) \cap
    \text{Sym}_{4}^{\tt{pd}}(\mathbb{R}_{> 0})$
    of Form (III), such that
    $N_{12} \neq N_{34}$.
\end{lemma}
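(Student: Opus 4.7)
My plan is to mimic the strategy of \cref{lemma: formINotTensor}: first stretch $M$ via $\mathcal{S}_M$ to sit close to the identity, and then, if the two entries of interest coincide, separate them using the edge gadget $R_n$. The crucial (and pleasant) observation is that every operation I will need -- $\mathcal{S}_M$, $T_n$, and $\mathcal{R}_M$ -- preserves the permutation symmetry that defines Form (III). Indeed, if $P$ denotes the permutation matrix swapping $1\leftrightarrow 2$ and $3\leftrightarrow 4$, then Form (III) is exactly the condition $PMP^T = M$; hence $P$ commutes with $M$ and so with any $M^\theta = \mathcal{S}_M(\theta)$, with $T_n M$ (entrywise), and with $\mathcal{R}_M(\theta)$ (by a direct substitution $a \mapsto \sigma(a), b \mapsto \sigma(b)$ in its defining sum). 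Consequently, every intermediate matrix produced below automatically satisfies $N_{11}=N_{22}$, $N_{33}=N_{44}$, $N_{13}=N_{24}$, $N_{14}=N_{23}$, so it is Form (III) as soon as its two diagonal pairs are distinct.

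I introduce the $\text{Sym}_4(\mathbb{R})$-polynomial $\zeta : N \mapsto (N_{11}-N_{33})(N_{11}-N_{44})(N_{22}-N_{33})(N_{22}-N_{44})$ (nonzero on $M$), and set $\mathcal{F}' = \{F' : N\mapsto F(N^3) \mid F\in\mathcal{F}\cup\{\rho_{\tt{tensor}}\}\}$. Picking $\delta > 0$ with $|\mathcal{S}_M(\theta)_{ij} - I_{ij}| < 1/3$ for $0<\theta<\delta$, \cref{lemma: stretchingWorks} gives $M' = \mathcal{S}_M(\theta^*) \in \mathfrak{R}(M, \mathcal{F}'\cup\{\zeta, \rho_{\tt{tensor}}\})\cap\text{Sym}_4^{\tt{pd}}(\mathbb{R}_{\neq 0})$ for some $0 < \theta^* < \delta$; by the symmetry observation and $\zeta(M')\neq 0$, $M'$ is of Form (III) (with entries possibly negative). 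If $M'_{12}\neq M'_{34}$, the polynomial $\xi^o : N\mapsto N_{12}-N_{34}$ satisfies $\xi^o(\mathcal{S}_M(\theta^*))\neq 0$, so \cref{corollary: stretchingWorksPositive} produces $N\in\mathfrak{R}(M, \mathcal{F}\cup\{\rho_{\tt{tensor}}, \zeta, \xi^o\})\cap\text{Sym}_4^{\tt{pd}}(\mathbb{R}_{>0})$; symmetry preservation gives the Form (III) equalities on $N$, while $\zeta(N)\neq 0$ rules out Forms (IV), (VI) and diagonal distinctness, and any remaining possibility leads to $\#$P-hardness by \cref{theorem: nonDiagAllForms}, \cref{theorem: diagDistHardness}, \cref{lemma: formIHardness}, and \cref{lemma: formIVHardness}. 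So either $\PlEVAL(M)$ is $\#$P-hard or $N$ is the required matrix with $N_{12}\neq N_{34}$.

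The main obstacle is the case $M'_{12}=M'_{34}$. Writing $\alpha'=M'_{11}$, $\beta'=M'_{33}$, $x_0=M'_{12}=M'_{34}$, $y_0=M'_{13}=M'_{24}$, $z_0=M'_{14}=M'_{23}$, I expand
\[
(R_n(M'))_{12} - (R_n(M'))_{34} = \sum_{v\in X} v^n\bigl(c_{12}(v) - c_{34}(v)\bigr),
\]
where $X=\{M'_{ab}\}$ and $c_{ij}(v)=\sum_{(a,b):M'_{ab}=v}M'_{ia}M'_{jb}$, and argue by contradiction. If this quantity vanishes for every $n\geq 1$, the full-rank Vandermonde system forces $c_{12}(v)=c_{34}(v)$ for all $v\in X$; since $M'$ is within $1/3$ of $I$ and $\zeta(M')\neq 0$, the preimages of $\alpha'$ and $\beta'$ are exactly $\{(1,1),(2,2)\}$ and $\{(3,3),(4,4)\}$, so evaluation at $v=\alpha'$ gives $2\alpha' x_0 = 2 y_0 z_0$ and at $v=\beta'$ gives $2 y_0 z_0 = 2 \beta' x_0$, forcing $\alpha'=\beta'$ (as $x_0>0$) in contradiction with $\zeta(M')\neq 0$.

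Hence some $n^*\geq 1$ gives $(R_{n^*}(M'))_{12}\neq (R_{n^*}(M'))_{34}$. Setting $\xi : N\mapsto (R_{n^*}N)_{12}-(R_{n^*}N)_{34}$, we have $\xi(\mathcal{S}_M(\theta^*))\neq 0$, and \cref{corollary: stretchingWorksPositive} delivers $M''\in\mathfrak{R}(M, \mathcal{F}'\cup\{\rho_{\tt{tensor}}, \zeta, \xi\})\cap\text{Sym}_4^{\tt{pd}}(\mathbb{R}_{>0})$. Since $M''$ has positive entries, $\mathcal{R}_{M''}$ is well-defined, $\mathcal{R}_{M''}(1)=(M'')^3$ gives $F(\mathcal{R}_{M''}(1))\neq 0$ for $F\in\mathcal{F}\cup\{\rho_{\tt{tensor}}, \zeta\}$, and $\xi^o(\mathcal{R}_{M''}(n^*))=\xi(M'')\neq 0$ by \cref{Rn-and-cal-RM}; so \cref{lemma: RInterpolationWorks} yields $N\in\mathfrak{R}(M, \mathcal{F}\cup\{\rho_{\tt{tensor}}, \zeta, \xi^o\})\cap\text{Sym}_4^{\tt{pd}}(\mathbb{R}_{>0})$ with $N_{12}\neq N_{34}$. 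Applying the same Form-cascade as in the second paragraph to this $N$ (whose $P$-symmetry is inherited from $M''$ via $\mathcal{R}_{M''}$) concludes either $\#$P-hardness of $\PlEVAL(M)$ or that $N$ is the required Form (III) matrix.
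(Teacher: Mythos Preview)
Your approach mirrors the paper's: stretch toward $I$, separate $N_{12}$ from $N_{34}$ via the $R_n$--Vandermonde argument, and finish with $\mathcal{R}_{M''}$. The $P$-symmetry observation (with $P$ the permutation swapping $1\leftrightarrow 2$ and $3\leftrightarrow 4$) is a clean addition not made explicit in the paper; it replaces the paper's case-by-case ``if $M'$ is not of Form (III) then $\#$P-hard via earlier lemmas'' with a direct guarantee that every intermediate matrix retains the Form~(III) equality pattern.

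There is one bookkeeping gap. Your $\mathcal{F}' = \{N \mapsto F(N^3) : F \in \mathcal{F} \cup \{\rho_{\tt tensor}\}\}$ omits $\zeta$, so $M'' \in \mathfrak{R}(M, \mathcal{F}' \cup \{\rho_{\tt tensor}, \zeta, \xi\})$ yields $\zeta(M'') \neq 0$ but \emph{not} $\zeta((M'')^3) \neq 0$. Your claim that $F(\mathcal{R}_{M''}(1)) \neq 0$ for $F = \zeta$ is therefore unjustified, and without $\zeta(N) \neq 0$ you cannot rule out $N_{11} = N_{33}$ (all four diagonals equal); since Form~(VI) hardness is proved only later, the argument stalls there. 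The paper avoids this by setting $\mathcal{F}' = \{N \mapsto F(N^3) : F \in \mathcal{F} \cup \{\rho_{\tt tensor}, \zeta\}\}$, so that $\zeta'(\mathcal{S}_M(1/3)) = \zeta(M) \neq 0$ gives $\zeta((M'')^3) \neq 0$ directly. Also, in your contradiction step ``as $x_0 > 0$'' should read ``as $x_0 \neq 0$'': you correctly noted $M'$ may have negative off-diagonal entries, and $x_0 \neq 0$ (guaranteed by $M' \in \text{Sym}_4^{\tt pd}(\mathbb{R}_{\neq 0})$) is all you need to cancel.
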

\begin{proof}
    If $M_{12} \neq M_{34}$, we are already done,
    so we may assume otherwise.
    We will let $\zeta: N \mapsto (N_{11} - N_{33})(N_{11} - N_{44})
    (N_{22} - N_{33})(N_{22} - N_{44})$.
    We know that $\zeta(M) \neq 0$.
    We will let $\mathcal{F}'
    = \{F': N \mapsto F(N^{3})\ |\ F \in \mathcal{F} \cup
    \{\rho_{\tt{tensor}}, \zeta\}\}$.
    We also know that there exists some $\delta > 0$
    such that for all $0 < \theta < \delta$,
    $|\mathcal{S}_{M}(\theta)_{ij} - I_{ij}| < \frac{1}{3}$.
    We can now use \cref{lemma: stretchingWorks} to find
    $M' = \mathcal{S}_{M}(\theta^{*})
    \in \mathfrak{R}(M, \mathcal{F} \cup \mathcal{F}' \cup \{\zeta,
    \rho_{\tt{tensor}}\}) \cap 
    \text{Sym}_{4}^{\tt{pd}}(\mathbb{R}_{> 0})$, for some
    $0 < \theta^{*} < \delta$.
    If $M'$ is not of Form (III), that would imply that
    it is either diagonal distinct, or
    isomorphic to a matrix of Form (I) or Form (II), or
    satisfies $M'_{11} = M'_{22} \neq M'_{33} = M'_{44}$, but is
    not of Form (III).
    In either case, 
    since $\rho_{\tt{tensor}}(M') \neq 0$,
    we see that $\PlEVAL(M') \leq \PlEVAL(M)$ is
    $\#$P-hard, due to either
    \cref{theorem: diagDistHardness},
    or \cref{lemma: formIIReduction} and
    \cref{lemma: formIHardness}, or
    \cref{lemma: order22Forms}.
    So, we may assume that $M'$ is of Form (III).

    We will now consider $(R_{n}M')_{12} - (R_{n}M')_{34}$
    for all $n \geq 1$.
    We let $X = \{(M')_{ab}: a, b \in [4]\}$, and
    $c_{ij}(x) = \sum_{a, b \in [4]: M'_{ab} = x}(M')_{ia}(M')_{jb}$.
    We see that
    $$(R_{n}M')_{12} - (R_{n}M')_{34} = \sum_{x \in X}x^{n}
    \cdot (c_{12}(x) - c_{34}(x))$$
    for all $n \geq 1$.
    So, the equations $(R_{n}M')_{12} - (R_{n}M')_{34} = 0$
    form a full rank Vandermonde system of equations
    of size $O(1)$.
    This implies that $c_{12}(x) - c_{34}(x) = 0$
    for all $x \in X$.
    By construction of $M'$, we know that
    $(M')_{ab} = (M')_{11}$ implies that $(a, b) \in \{(1, 1), (2, 2)\}$,
    and that $(M')_{ab} = (M'')_{33}$ implies that
    $(a, b) = \{(3, 3), (4, 4)\}$.
    So, we see that (using the fact that $M'$ has Form (III))
    $$c_{12}((M')_{11}) - c_{34}((M')_{11}) 
    = 2 (M')_{11}(M')_{12} - 2 (M')_{13}(M')_{14}  = 0, \text{ and }$$
    $$c_{12}((M')_{33}) - c_{34}((M')_{33}) 
    = 2 (M')_{13}(M')_{14}  - 2 (M')_{33}(M')_{34}  = 0.$$
    Hence, $(M')_{11}(M')_{12} = (M')_{33}(M')_{34}$.
    Since we know that $(M')_{11} \neq (M')_{33}$,
    this implies that $(M')_{12} \neq (M')_{34}$.

    So, if $(R_{n}M')_{12} - (R_{n}M')_{34} = 0$
    for all $n \geq 1$, then $M' \in \mathfrak{R}(M,
    \mathcal{F} \cup \{\rho_{\tt{tensor}}\}) \cap
    \text{Sym}_{4}^{\tt{pd}}(\mathbb{R}_{> 0})$ is the
    required matrix.
    On the other hand, suppose there exists some $n \geq 1$
    such that $(R_{n}M')_{12} - (R_{n}M')_{34} \neq 0$.
    In that case, we will construct
    $\xi': N \mapsto (R_{n}N)_{12} - (R_{n}N)_{34}$.
    We see that $\xi'(\mathcal{S}_{M}(\theta^{*})) \neq 0$.
    So, \cref{corollary: stretchingWorksPositive} lets us
    find some $M'' \in \mathfrak{R}(M, \mathcal{F}' \cup \{\xi'\})
    \cap \text{Sym}_{4}^{\tt{pd}}(\mathbb{R}_{> 0})$.
    Now, if we let $\xi: N \mapsto N_{12} - N_{34}$,
    we see that $\xi(\mathcal{R}_{M''}(n)) \neq 0$.
    We also see that $F(\mathcal{R}_{M''}(1)) = F((M'')^3) \neq 0$
    for all $F \in \mathcal{F} \cup \{\rho_{\tt{tensor}}, \zeta\}$.
    So, we can use \cref{lemma: RInterpolationWorks} to find
    some $N \in \mathfrak{R}(M, \mathcal{F} \cup
    \{\rho_{\tt{tensor}}, \zeta, \xi\})
    \cap \text{Sym}_{4}^{\tt{pd}}(\mathbb{R}_{> 0})$.

    Since $\zeta(N) \neq 0$, if $N$ is not of Form (III), then
    it must either be diagonal distinct, or isomorphic to 
    a matrix of Form (I) or Form (II), or satisfies
    $N_{11} = N_{22} \neq N_{33} = N_{44}$, but
    is not of Form (III).
    In any case, since $\rho_{\tt{tensor}}(N) \neq 0$,
    we see that $\PlEVAL(N) \leq \PlEVAL(M)$ is
    $\#$P-hard, due to either
    \cref{theorem: diagDistHardness},
    or \cref{lemma: formIIReduction} and
    \cref{lemma: formIHardness}, or
    \cref{lemma: order22Forms}.
    On the other hand, if $N$ is of Form (III), it is the
    required matrix.
\end{proof}

\begin{lemma}\label{lemma: formIIICase2Impossible}
    Let $M \in \text{Sym}_{4}^{\tt{pd}}(\mathbb{R}_{> 0})$
    be of Form (III) such that $\rho_{\tt{tensor}}(M) \neq 0$.
    Let $\mathcal{F}$ be a countable set of 
    $\text{Sym}_{4}(\mathbb{R})$-polynomials such that
    $F(M) \neq 0$ for all $F \in \mathcal{F}$.
    Then, either $\PlEVAL(M)$ is $\#$P-hard, or
    there exists some $N \in \mathfrak{R}
    (M, \mathcal{F} \cup \{\rho_{\tt{tensor}}\}) \cap
    \text{Sym}_{4}^{\tt{pd}}(\mathbb{R}_{> 0})$
    of Form (III), such that
    $N_{11}N_{33}N_{12}N_{34} - (N_{13}N_{14})^{2} \neq 0$.
\end{lemma}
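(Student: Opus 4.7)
Define the $\text{Sym}_4(\mathbb{R})$-polynomial $\xi\colon N\mapsto N_{11}N_{33}N_{12}N_{34} - (N_{13}N_{14})^2$. If $\xi(M)\ne 0$, we are already done with $N=M$. So the plan is to assume $\xi(M)=0$ and show that in this case $\PlEVAL(M)$ is $\#$P-hard, unless we can produce the required $N$ by suitable gadget constructions, following the template of \cref{lemma: FormIIIM12M34} and \cref{lemma: formINotTensor}. First, by \cref{lemma: FormIIIM12M34}, we may replace $M$ with a reduct still of Form (III), still satisfying $\rho_{\tt{tensor}}\ne 0$, and additionally satisfying $M_{12}\ne M_{34}$; unless $\PlEVAL(M)$ is already $\#$P-hard, this replacement preserves everything we need about $M$.

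Next, let $\zeta\colon N\mapsto (N_{11}-N_{33})(N_{11}-N_{44})(N_{22}-N_{33})(N_{22}-N_{44})$, and let $\mathcal{F}' = \{F'\colon N\mapsto F(N^{3})\mid F\in \mathcal{F}\cup\{\rho_{\tt{tensor}},\zeta,\xi^{\circ R}\}\}$ for a suitable composition. Using $\mathcal{S}_M(\theta)\to I$ as $\theta\to 0$, apply \cref{lemma: stretchingWorks} to obtain $M'=\mathcal{S}_M(\theta^*)$ with $0<\theta^*<\delta$ chosen small enough that all diagonal entries of $M'$ exceed all off-diagonal entries in absolute value, and such that $F(M')\ne 0$ for every $F\in\mathcal{F}\cup\{\rho_{\tt{tensor}},\zeta\}$. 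Using \cref{lemma: DiagLarge} as a further refinement if needed, we can assume $M'$ retains Form (III)---otherwise $M'$ falls into the hypotheses of \cref{theorem: diagDistHardness}, \cref{lemma: order22Forms}, \cref{lemma: formIIReduction}, or \cref{lemma: formIHardness}, giving $\#$P-hardness for $\PlEVAL(M)$.

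The main step is to analyze $\xi(R_n M')$ for all $n\ge 1$. Writing $X=\{M'_{ab}\colon a,b\in[4]\}$ and expanding $(R_n M')_{ij}=\sum_{x\in X}x^n c_{ij}(x)$, the expression $\xi(R_n M')$ becomes a polynomial in $\{x^n\}_{x\in X}$ of degree $4$. Assume $\xi(R_n M')=0$ for every $n\ge 1$. The set of distinct monomials $\prod x_k^n$ appearing is polynomial in $|X|$, so we again obtain a full-rank Vandermonde system, forcing every coefficient to vanish. Because $M'$ is close to $I$, the only pairs $(a,b)$ with $M'_{ab}=M'_{11}$ are $(1,1),(2,2)$, and those with $M'_{ab}=M'_{33}$ are $(3,3),(4,4)$, while each off-diagonal entry of $M'$ lies in its own distinct value class. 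Reading off the vanishing of the coefficient of the \emph{leading} monomial $(M'_{11}M'_{33})^{2n}$ (coming from choosing one diagonal factor in each of the four matrix entries involved in $\xi$) yields $M'_{11}M'_{33}M'_{12}M'_{34}=(M'_{13}M'_{14})^2$, i.e.\ $\xi(M')=0$ as expected. Reading off the vanishing of the remaining coefficients, in combination with $M'_{12}\ne M'_{34}$, forces the two tensor-product identities $M'_{11}M'_{14}=M'_{12}M'_{13}$ and $M'_{33}M'_{14}=M'_{34}M'_{13}$ for the rows of $M'$ (using the Form (III) symmetries $M'_{13}=M'_{24}$, $M'_{14}=M'_{23}$ to identify the other two row-identities automatically). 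Together with $M'_{14}=M'_{23}$ and $M'_{11}M'_{44}=M'_{22}M'_{33}$, which hold by Form (III), these are precisely the conditions $\varrho_{\tt{tensor}}(M')=0$ of \cref{lemma: tensorPolynomial}, so $\rho_{\tt{tensor}}(M')=0$, a contradiction.

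The hardest part, as anticipated, is this last combinatorial bookkeeping: identifying which off-diagonal $M'_{ab}$ cluster together under the value classes, and which choice of product of four entries produces each coefficient $c_{ij}(x)$, so that the vanishing of the coefficients of $\xi(R_n M')$ translates cleanly into the row tensor identities. Once this contradiction is established, we know there exists some $n^*\ge 1$ with $\xi(R_{n^*}M')\ne 0$. Defining the $\text{Sym}_4(\mathbb{R})$-polynomial $\eta\colon N\mapsto \xi(R_{n^*}N)$ and invoking \cref{corollary: stretchingWorksPositive} on $\mathcal{F}\cup\{\rho_{\tt{tensor}},\zeta,\eta\}$ yields $M''\in\mathfrak{R}(M,\ldots)\cap \text{Sym}_4^{\tt{pd}}(\mathbb{R}_{>0})$ with $\eta(M'')\ne 0$, hence $\xi(\mathcal{R}_{M''}(n^*))\ne 0$. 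A final application of \cref{lemma: RInterpolationWorks} to $\mathcal{F}\cup\{\rho_{\tt{tensor}},\zeta,\xi\}$ produces the required $N\in\mathfrak{R}(M,\mathcal{F}\cup\{\rho_{\tt{tensor}}\})\cap\text{Sym}_4^{\tt{pd}}(\mathbb{R}_{>0})$ with $\xi(N)\ne 0$; the $\zeta$-condition together with $\xi(N)\ne 0$ and $\rho_{\tt{tensor}}(N)\ne 0$ ensures (via \cref{theorem: diagDistHardness}, \cref{lemma: formIHardness}, \cref{lemma: formIIReduction}, \cref{lemma: order22Forms}) that if $N$ is not of Form (III) then $\PlEVAL(M)$ is $\#$P-hard and we are done, while otherwise $N$ is the desired matrix.
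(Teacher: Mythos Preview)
Your outer scaffolding—the preliminary pass through \cref{lemma: FormIIIM12M34} and \cref{lemma: DiagLarge}, and the closing via \cref{corollary: stretchingWorksPositive} and \cref{lemma: RInterpolationWorks}—follows the paper's template. The central step, however, takes a different route and as written contains a real gap. The paper does \emph{not} use the $R_n$-gadget here: instead it passes to the one-variable thickening $\mathcal{T}_M^*(p)$ of \cref{equation: MStar}, first observes that $\xi(\mathcal{T}_M^*(p))\equiv 0$ forces the arithmetic-mean identity \cref{formIIIequality} on the integer exponents $z_{ij}$, and then shows via a five-step leading/trailing-degree case analysis that $\xi\bigl((\mathcal{T}_M^*(p))^2\bigr)$ cannot vanish identically; the conclusion then follows from \cref{lemma: thickeningWorks} and \cref{corollary: stretchingWorksPositive}.

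In your $R_n$-approach the specific claims are incorrect. Since $M'_{11}=M'_{22}$ is the unique largest entry of $M'$, every $(R_nM')_{ij}$ has its dominant term proportional to $(M'_{11})^n$, so the top monomial of $\xi(R_nM')$ is $(M'_{11})^{4n}$, not $(M'_{11}M'_{33})^{2n}$. Writing $M'$ in the notation of \cref{Form-III-in-abc}, the coefficient of $(M'_{11})^{4n}$ works out to $-\bigl[(ay-xz)(az-xy)\bigr]^2$, and that of $(M'_{33})^{4n}$ to $-\bigl[(by-tz)(bz-ty)\bigr]^2$; their vanishing gives only a \emph{disjunction} of row identities and hence a four-way case split. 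Two of the cases do yield $\rho_{\tt{tensor}}(M')=0$, while the other two collapse to $ab=xt$, which can be excluded by the diagonal dominance from \cref{lemma: DiagLarge}—but to even isolate the coefficient at $(M'_{33})^{4n}$ you must first secure that $(M'_{33})^4$ does not coincide with any other four-fold product of entries of $M'$, a genericity condition you have not imposed. So the route may be completable, but what you call ``bookkeeping'' is in fact the entire argument, and none of it is carried out in the proposal.
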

\begin{proof}
    We let $\xi: N \mapsto (N_{11}N_{33}N_{12}N_{34})
    - (N_{13}N_{14})^{2}$.
    If $\xi(M) \neq 0$, we are already done, so we may assume otherwise.
    We also let $\zeta: N \mapsto (N_{11} - N_{33})(N_{11} - N_{44})
    (N_{22} - N_{33})(N_{22} -N_{44})$.
    We note that $\zeta(M) \neq 0$.
    We will first use \cref{lemma: FormIIIM12M34} to find
    $M' \in \mathfrak{R}(M, \mathcal{F} 
    \cup \{\rho_{\tt{tensor}}, \zeta\}) \cap
    \text{Sym}_{4}^{\tt{pd}}(\mathbb{R}_{> 0})$
    of Form (III)
    such that $(M')_{12} \neq (M')_{34}$.
    We can now use \cref{lemma: DiagLarge} to
    find $M'' \in \mathfrak{R}(M', \mathcal{F} 
    \cup \{\rho_{\tt{tensor}}, \zeta\}) \cap
    \text{Sym}_{4}^{\tt{pd}}(\mathbb{R}_{> 0})$, such that
    $(M'')_{ii} > (M'')_{jk}$ for all $i \in [4]$, and
    $j \neq k \in [4]$.
    Since $\zeta(M'') \neq 0$, if $M''$ is not of Form (III), then
    it must either be diagonal distinct, or isomorphic to 
    a matrix of Form (I) or Form (II), or satisfies
    $(M'')_{11} = (M'')_{22} \neq (M'')_{33} = (M'')_{44}$, but
    is not of Form (III).
    In any case, since $\rho_{\tt{tensor}}(M'') \neq 0$,
    we see that $\PlEVAL(N) \leq \PlEVAL(M)$ is
    $\#$P-hard, due to either
    \cref{theorem: diagDistHardness},
    or \cref{lemma: formIIReduction} and
    \cref{lemma: formIHardness}, or
    \cref{lemma: order22Forms}.
    On the other hand,
    we may assume that $M''$ is    
    of Form (III).
    For convenience, we may rename this $M''$ as $M$.
    We note that we can also simultaneously permute the rows and columns of $M$ 
    (first by a possible switch $\{1,2\} \leftrightarrow \{3,4\}$ and then a possible flip $3 \leftrightarrow 4$)
    such that $M_{11} > M_{33}$, and
    $M_{13} \geq M_{14}$ without loss of generality.  
    
    We may assume that the entries of $M$ are generated
    by some $\{g_{t}\}_{t \in [d]}$.
    We can also replace this $M$ with some $c M$
    as guaranteed by \cref{lemma: MequivalentCM}
    such that $e_{ijt} \geq 0$ for all $i, j \in [4]$, and
    $t \in [d]$.
    We will now define a function $\widehat{\mathcal{T}_{M}}:
    \mathbb{R}^{d + 1} \rightarrow  \text{Sym}_{4}(\mathbb{R})$
    such that
    \begin{equation}\label{equation: wideHatM}
        \widehat{\mathcal{T}_{M}}(p, z_{1}, \dots, z_{d})_{ij}
        = \mathcal{T}_{M}(p^{z_{1}}, \dots, p^{z_{d}})_{ij}
    \end{equation}
    for all $i, j \in [4]$.
    By our choice of $M$, we know that
    $\widehat{\mathcal{T}_{M}}(e, \log(g_{1}), \dots, \log(g_{d}))
    = M$.
    We note that $\widehat{\mathcal{T}_{M}}(p, z_{1}, \dots, z_{d})_{ij}$
    is continuous as a function of all its variables,
    for all $i, j \in [4]$.
    So, there exist non-empty intervals $I_{1}, \dots, I_{d} \subseteq
    \mathbb{R}_{> 0}$
    such that for all $(z_{1}, \dots, z_{d}) \in
    I_{1} \times \cdots \times I_{d}$,
    $\widehat{\mathcal{T}_{M}}(e, z_{1}, \dots, z_{d})$
    also satisfies the properties of $M$ that
    $$M_{11} > M_{33}, M_{13} \geq M_{14}, M_{12} \neq M_{34},
    \text{ and } M_{ii} > M_{jk},$$
    for all $i \in [4]$, for $j \neq k \in [4]$.
    But now, we can pick some $(z_{1}^{*}, \dots, z_{d}^{*})
    \in \mathbb{Q}^{d} \cap (I_{1} \times \cdots \times I_{d})$.
    Since $z_{t}^{*} > 0$ for all $t \in [d]$, there exists
    some $Z^{*} \in \mathbb{Z}_{> 0}$, such that
    $Z \cdot z_{t}^{*} \in \mathbb{Z}_{> 0}$ for all $t \in [d]$.
    We will now define $\mathcal{T}_{M}^{*}: \mathbb{R} \rightarrow
    \text{Sym}_{4}(\mathbb{R})$ such that
    \begin{equation}\label{equation: MStar}
        \mathcal{T}_{M}^{*}(p)_{ij} =
        \widehat{\mathcal{T}_{M}}(p, Z^{*} \cdot z_{i}^{*}, \dots,
        Z^{*} \cdot z_{d}^{*})_{ij} = p^{z_{ij}}
    \end{equation}
    for some $z_{ij} = Z^{*} \sum_{t \in [d]}e_{ijt}z_{t}^{*}
    \in \mathbb{Z}_{\geq 0}$ for all $i, j \in [4]$.
    We also see that since $\mathcal{T}_{M}^{*}(e)_{ii} >
    \mathcal{T}_{M}^{*}(e)_{jk}$ for all $i \in [4]$, and
    $j \neq k \in [4]$,
    it must be the case that $z_{ii} > z_{jk}$ for all 
    $i \in [4]$, and $j \neq k \in [4]$.
    If $\xi(\mathcal{T}_{M}^{*}(p)) \neq 0$ 
    for some $p \in \mathbb{R}$, 
    then from the construction of $\mathcal{T}_{M}^{*}$,
    we see that there must exist some
    $\mathbf{p} \in \mathbb{R}^{d}$ such that
    $\xi(\mathcal{T}_{M}(\mathbf{p})) = 
    \xi(\mathcal{T}_{M}^{*}(p)) \neq 0$.
    Therefore, \cref{lemma: thickeningWorks} allows
    us to find some $N \in \mathfrak{R}(M, \mathcal{F} \cup
    \{\rho_{\tt{tensor}}, \zeta, \xi \}) \cap
    \text{Sym}_{4}^{\tt{pd}}(\mathbb{R}_{> 0})$.
    Since $\zeta(N) \neq 0$, if $N$ is not of
    Form (III), it must either be diagonal distinct, or
    isomorphic to a matrix of Form (I) or (II), or it
    should satisfy the equation $N_{11} = N_{22} 
    \neq N_{33} = N_{44}$, but not be of Form (III).
    In any of these cases, since $\rho_{\tt{tensor}}(N) \neq 0$,
    it follows from \cref{theorem: diagDistHardness},
    or \cref{lemma: formIIReduction} and
    \cref{lemma: formIHardness}, or
    \cref{lemma: order22Forms}, that
    $\PlEVAL(N) \leq \PlEVAL(M)$ is $\#$P-hard.
    On the other hand, if $N$ is of Form (III), then
    we note that it is the required matrix,
    and we are done.

    We may now assume that $\xi(\mathcal{T}_{M}^{*}(p)) = 0$
    for all $p \in \mathbb{R}$.
    This means that
    $p^{z_{11} + z_{33} + z_{12} + z_{34}} = p^{2z_{13} + 2z_{14}}$ for all $p \in \mathbb{R}$.
    So, we see that
    \begin{equation}\label{formIIIequality}
    z_{13} + z_{14} = \frac{(z_{11}  + z_{12})  +(z_{33}  + z_{34})}{2},
    \end{equation}    
    i.e., $z_{13} + z_{14}$ is the average of the other two sums.
    In particular, we have three possibilities,
    \begin{description}
        \item{Case 0.}
        $z_{11}  + z_{12} = z_{13} + z_{14} = z_{33}  + z_{34}$; or
        \item{Case 1.} 
        $z_{11}  + z_{12} > z_{13} + z_{14} > z_{33}  + z_{34}$; or
        \item{Case 2.} 
        $z_{11}  + z_{12} < z_{13} + z_{14} < z_{33}  + z_{34}$.
    \end{description}

    We will now consider
    $$\xi((\mathcal{T}_{M}^{*}(p))^{2}) = 
    ((\mathcal{T}_{M}^{*}(p))^{2})_{11}
    ((\mathcal{T}_{M}^{*}(p))^{2})_{33}
    ((\mathcal{T}_{M}^{*}(p))^{2})_{12}
    ((\mathcal{T}_{M}^{*}(p))^{2})_{34} - 
    \left(((\mathcal{T}_{M}^{*}(p))^{2})_{13}
    ((\mathcal{T}_{M}^{*}(p))^{2})_{14}\right)^{2}.$$
    We will show that this function in $p$ is not
    identically 0. For  a contradiction, we assume that
    $\xi(((\mathcal{T}_{M}^{*}(p))^{2})) = 0$ for all
    $p \in \mathbb{R}$.
    Thus, we have for    all $p \in \mathbb{R}$
    \begin{equation}\label{formIIIMainEqn}
        ((\mathcal{T}_{M}^{*}(p))^{2})_{11}
        ((\mathcal{T}_{M}^{*}(p))^{2})_{33}
        ((\mathcal{T}_{M}^{*}(p))^{2})_{12}
        ((\mathcal{T}_{M}^{*}(p))^{2})_{34} = 
        \left(((\mathcal{T}_{M}^{*}(p))^{2})_{13}
        ((\mathcal{T}_{M}^{*}(p))^{2})_{14}\right)^{2}
    \end{equation}
    where (using the Form (III))
    \begin{align*}
        ((\mathcal{T}_{M}^{*}(p))^{2})_{11} &= 
        p^{2z_{11}} + p^{2z_{12}} + p^{2z_{13}}+ p^{2z_{14}},\\
        ((\mathcal{T}_{M}^{*}(p))^{2})_{33} &= 
        p^{2z_{13}} + p^{2z_{14}} + p^{2z_{33}}+ p^{2z_{34}},\\
        ((\mathcal{T}_{M}^{*}(p))^{2})_{12} &= 
        2p^{z_{11} + z_{12}} + 2p^{z_{13} + z_{14}},\\
        ((\mathcal{T}_{M}^{*}(p))^{2})_{34} &= 
        2p^{z_{13} + z_{14}} + 2p^{z_{33} + z_{34}},\\
        ((\mathcal{T}_{M}^{*}(p))^{2})_{13} &= 
        p^{z_{11} + z_{13}} + p^{z_{12} + z_{14}} +
        p^{z_{13} + z_{33}}+ p^{z_{14} + z_{34}},\\
        ((\mathcal{T}_{M}^{*}(p))^{2})_{14} &= 
        p^{z_{11} + z_{14}} + p^{z_{12} + z_{13}} +
        p^{z_{13} + z_{34}}+ p^{z_{14} + z_{33}}.
    \end{align*}
\begin{itemize}
    \item 
    Step 1. \emph{The leading degree term of the LHS of
    \cref{formIIIMainEqn} is either 
     $4p^{3z_{11} + 2z_{33} + z_{12} + z_{13} + z_{14}}$
      (in the enumerated Case 1. of \cref{formIIIequality}) or 
    $4p^{2z_{11} + 3z_{33} + z_{34} + z_{13} + z_{14}}$ (in Case 2.) or    has coefficient $16$
      (in Case 0).
    }

    By our choice of $z_{ij}$, we know that $z_{11} > z_{33}
    > z_{ij}$ for all $i \neq j \in [4]$.
    So, $p^{2z_{11}}$ is the
    leading degree term of
    $$((\mathcal{T}_{M}^{*}(p))^{2})_{11} = 
    p^{2z_{11}} + p^{2z_{12}} + p^{2z_{13}}+ p^{2z_{14}},$$
    and $p^{2z_{33}}$ is the leading degree term of
    $$((\mathcal{T}_{M}^{*}(p))^{2})_{33} = 
    p^{2z_{13}} + p^{2z_{14}} + p^{2z_{33}}+ p^{2z_{34}}.$$
    Now, if the two terms of $((\mathcal{T}_{M}^{*}(p))^{2})_{12}$
    collapse into one term, that implies that
    $z_{11} + z_{12} = z_{13} + z_{14}$.
    But then, \cref{formIIIequality} implies 
    that $z_{13} + z_{14} = z_{33} + z_{34}$ as well.
    This would imply that
    $((\mathcal{T}_{M}^{*}(p))^{2})_{12} = 
    4p^{z_{11} + z_{12}} = ((\mathcal{T}_{M}^{*}(p))^{2})_{34}$.
    This means that the leading degree term of the LHS of
    \cref{formIIIMainEqn} will be
    $$16p^{4z_{11} + 2z_{33} + 2z_{12}}.$$
    On the other hand,
    if the two terms of $((\mathcal{T}_{M}^{*}(p))^{2})_{12}$
    do not collapse into each other, that means that
    $z_{11} + z_{12} \neq z_{13} + z_{14}$.
    From \cref{formIIIequality}, this implies that in fact,
    $z_{11} + z_{12} \neq z_{13} + z_{14} \neq z_{33} + z_{34}$
    are pairwise distinct.
    So, it must either be the case that
    $z_{11} + z_{12} > z_{13} + z_{14} > z_{33} + z_{34}$,
    or it must be the case that
    $z_{33} + z_{34} > z_{13} + z_{14} > z_{11} + z_{12}$.
    Depending on which of the two cases occur,
    the leading degree term of the LHS of
    \cref{formIIIMainEqn} will be, respectively,  either
    $$4p^{3z_{11} + 2z_{33} + z_{12} + z_{13} + z_{14}} ~~\text{ or }~~
    4p^{2z_{11} + 3z_{33} + z_{34} + z_{13} + z_{14}}.$$

\item 
Step 2. \emph{
Case 1. or Case 2. hold in \cref{formIIIequality}.
Case 0. does not.
    The  leading degree term of the RHS of
    \cref{formIIIMainEqn} is 
    $4p^{4z_{11} + 2z_{13} + 2z_{14}}.$
Furthermore, we have 
    \begin{description}
    \item{(A)} $z_{11} + z_{14} = z_{12} + z_{13} > z_{13} + z_{34}$, or
    \item{(B)} $z_{11} + z_{14} = z_{13} + z_{34} > z_{12} + z_{13}$.
    \end{description}
    and     the leading degree term of 
    $((\mathcal{T}_{M}^{*}(p))^{2})_{14}$ is obtained by combining two terms
   $p^{z_{11} + z_{14}} + p^{z_{12} + z_{13}}$ in case (A), or
   $p^{z_{11} + z_{14}} + p^{z_{13} + z_{34}}$ in case (B).
    }
    
    We will now analyze the leading degree term of the 
    RHS of \cref{formIIIMainEqn}.
    Let us first focus on the terms of
    $$((\mathcal{T}_{M}^{*}(p))^{2})_{13} = 
    p^{z_{11} + z_{13}} + p^{z_{12} + z_{14}} + p^{z_{13} + z_{33}}
    + p^{z_{14} + z_{34}}.$$
    We note that $z_{11} + z_{13} > z_{12} + z_{14}$,
    since $z_{11} > z_{12}$, and $z_{13} \geq z_{14}$.
    Similarly, $z_{11} + z_{13} > z_{13} + z_{33}$,
    since $z_{11} > z_{33}$.
    Finally, $z_{11} + z_{13} > z_{14} + z_{34}$.
    So, the leading degree term of 
    $((\mathcal{T}_{M}^{*}(p))^{2})_{13}$ is
    $p^{z_{11} + z_{13}}$ with coefficient 1.
    Now, we focus on the terms of
    $$((\mathcal{T}_{M}^{*}(p))^{2})_{14} = 
    p^{z_{11} + z_{14}} + p^{z_{12} + z_{13}} + p^{z_{13} + z_{34}}
    + p^{z_{14} + z_{33}}.$$
    Since $z_{11} > z_{33}$,
    it follows that $z_{11} + z_{14} > z_{14} + z_{33}$.
    So, the leading term coefficient of
    $((\mathcal{T}_{M}^{*}(p))^{2})_{14}$ cannot be 4.
    Since the square of
    this coefficient must be 4 or 16
    to match that of the LHS  of \cref{formIIIMainEqn},
    we see that this coefficient must be 2, and the 
    leading term coefficient of the LHS of \cref{formIIIMainEqn}
    must be 4 (not 16).
    In particular, this also rules out the possibility that
    $z_{11} + z_{12} = z_{13} + z_{14} = z_{33} + z_{34}$,
    i.e., Case 0. in   \cref{formIIIequality} does not hold, and
    Case 1. or Case 2. hold in \cref{formIIIequality}.
    

    Now, in $((\mathcal{T}_{M}^{*}(p))^{2})_{14}$, we have seen that
    the leading degree term cannot be $p^{z_{14} + z_{33}}$.
    Similarly, it is not possible that $z_{12} + z_{13} = 
    z_{13} + z_{34}$, since $z_{12} \neq z_{34}$.
    So, the leading degree term of 
    $((\mathcal{T}_{M}^{*}(p))^{2})_{14}$ must either be
    $p^{z_{11} + z_{14}} + p^{z_{12} + z_{13}}$, or
    $p^{z_{11} + z_{14}} + p^{z_{13} + z_{34}}$.
    Thus, either (A) or (B) hold.
    In either of these cases, we find that the leading
    degree term of the RHS of \cref{formIIIMainEqn} is
    $$4p^{4z_{11} + 2z_{13} + 2z_{14}}.$$

\item
Step 3.
\emph{Case 1.  does not hold in \cref{formIIIequality}, 
which implies that only Case 2 is viable.
 The  leading degree term of the LHS of
    \cref{formIIIMainEqn} is 
       $4p^{2z_{11} + 3z_{33} + z_{34} + z_{13} + z_{14}}$.
}

    By \cref{formIIIequality}, $z_{13} + z_{14} $
    is the average of $z_{11} + z_{12}$ and $z_{33} + z_{34}$.
    Assume  Case 1.  holds, i.e., 
    $z_{11} + z_{12} > z_{13} + z_{14} > z_{33} + z_{34}$. In this case, by Step 1.
    the leading degree term of the first term of $\xi((\mathcal{T}_{M}^{*}(p))^{2})$ is
    $4p^{3z_{11} + 2z_{33} + z_{12} + z_{13} + z_{14}}$.
    Since it has to equal the leading degree term of the second term,
    this means that
    $3z_{11} + 2z_{33} + z_{12} + z_{13} + z_{14} =
    4z_{11} + 2z_{13} + 2z_{14}$.
    Simplifying, we see that
    \begin{equation}\label{formIIISecondEqualityCase1}
        2z_{33} + z_{12} = z_{11} + z_{13} + z_{14}.
    \end{equation}
    But also, for 
    $$((\mathcal{T}_{M}^{*}(p))^{2})_{14} = 
    p^{z_{11} + z_{14}} + p^{z_{12} + z_{13}} + p^{z_{13} + z_{34}}
    + p^{z_{14} + z_{33}}$$
    in case (A), $z_{11} + z_{14} = z_{12} + z_{13}$, the
    leading degree term of 
    $((\mathcal{T}_{M}^{*}(p))^{2})_{14}$ is
    $p^{z_{11} + z_{14}} + p^{z_{12} + z_{13}}$.
    Together with \cref{formIIISecondEqualityCase1},
    this implies that $2z_{13} = 2z_{33}$, which is not possible
    since $z_{33} > z_{13}$.
    The other possibility is  case (B), $z_{11} + z_{14} = z_{13} + z_{34}$,
    the leading degree term of 
    $((\mathcal{T}_{M}^{*}(p))^{2})_{14}$
    is $p^{z_{11} + z_{14}} + p^{z_{13} + z_{34}}$.
    \cref{formIIISecondEqualityCase1} implies that
    $2z_{33} + z_{12} = 2z_{13} + z_{34}$.
    Since $p^{z_{13} + z_{34}}$ has the highest degree in
    $((\mathcal{T}_{M}^{*}(p))^{2})_{14}$, that also implies that  (note that  
    $z_{11} = z_{13} -z_{14} + z_{34}$ from Case (B))
    $$z_{13} + z_{34} = z_{13} - z_{14}  + z_{34} + z_{14} = z_{11} + z_{14}  > z_{14} + z_{33}.$$
    So, $2z_{33} + z_{12} = 2 z_{13}  + z_{34} > z_{13} + (z_{14} + z_{33})$,
    which means that $z_{33} + z_{12} > z_{13} + z_{14}$.
    Since we have assumed that we are in Case 1., $z_{11} + z_{12} >
    z_{13} + z_{14} > z_{33} + z_{34}$,
    we see that $z_{33} + z_{12} >  z_{13} + z_{14} > z_{33} + z_{34}$.
    So, $z_{12} > z_{34}$.
    However, since $p^{z_{13} + z_{34}}$  has the highest degree in
    $((\mathcal{T}_{M}^{*}(p))^{2})_{14}$,
    we also know that
    $z_{13} + z_{34} > z_{12} + z_{13}$.
    But this implies that $z_{34} > z_{12}$,
    which is a contradiction.
    So, it is not possible that
    $z_{11} + z_{12} > z_{13} + z_{14} > z_{33} + z_{34}$.
    We  conclude that in fact, Case 1. is impossible,  only Case 2. remains, namely
    \begin{equation}\label{formIIIInequality}
        z_{33} + z_{34} > z_{13} + z_{14} > z_{11} + z_{12},
    \end{equation}
    and the leading degree term of the LHS of \cref{formIIIMainEqn} is
    $4p^{2z_{11} + 3z_{33} + z_{34} + z_{13} + z_{14}}$.

\item 
Step 4.
\emph{Case (A) is impossible, which implies that only Case (B) is viable.}

    Since the leading degree terms of the LHS and the RHS of
    \cref{formIIIMainEqn} are equal, this means that
    $2z_{11} + 3z_{33} + z_{34} + z_{13} + z_{14} =
    4z_{11} + 2z_{13} + 2z_{14}$.
    From \cref{formIIIequality}, we know that
    $2z_{13} + 2z_{14} = z_{11} + z_{33} + z_{12} + z_{34}$.
    Substituting into the equation above,
    we get that
    $2z_{11} + 3z_{33} + z_{34} + z_{13} + z_{14} =
    5z_{11} + z_{33} + z_{34} + z_{12}$.
    Simplifying, we get
    \begin{equation}\label{formIIISecondEqualityCase2}
        2z_{33} + z_{13} + (z_{14} + z_{11}) = 4z_{11} + z_{12}.
    \end{equation}
    Assume Case (A), $z_{11} + z_{14} = z_{12} + z_{13}$.
    Together with \cref{formIIISecondEqualityCase2},
    this implies that
    $$2z_{33} + 2z_{13} = 4z_{11},$$
    which is not possible since $z_{11} > z_{33}, z_{13}$.
    So, we conclude that  Case (A) is impossible, and only
    Case (B) is viable, i.e., the leading degree term of
    $((\mathcal{T}_{M}^{*}(p))^{2})_{14}$ is
    $p^{z_{11} + z_{14}} + p^{z_{13} + z_{34}}$, and we have
    \begin{equation}\label{formIIIThirdEqualityCase2}
        z_{11} + z_{14} = z_{13} + z_{34} > z_{12} + z_{13}.
    \end{equation}
The only case remaining is Case 2. in combination of Case (B).

\item 
Step 5. 
\emph{Case 2. in combination of Case (B) is impossible.}

    We will now consider the least degree term of the LHS and RHS
    of \cref{formIIIMainEqn}.
    From Case 2. \cref{formIIIInequality},
    we note that the least degree term of
    $$((\mathcal{T}_{M}^{*}(p))^{2})_{12} = 2p^{z_{11} + z_{12}}
    + 2p^{z_{13} + z_{14}}$$
    is $2p^{z_{11} + z_{12}}$,
    and the least degree term of
    $$((\mathcal{T}_{M}^{*}(p))^{2})_{34} = 2p^{z_{13} + z_{14}}
    + 2p^{z_{33} + _{34}}$$
    is $2p^{z_{13} + z_{14}}$.
    Since $z_{11} + z_{12} < z_{33} + z_{34}$, and
    $z_{11} > z_{33}$, we conclude that $z_{12} < z_{34}$.
    Similarly, since $z_{11} + z_{12} < z_{13} + z_{14}$, we also
    conclude that $z_{12} < z_{13}, z_{14}$.
    So, the least degree term of
    $$((\mathcal{T}_{M}^{*}(p))^{2})_{11} = p^{2z_{11}}+  p^{2z_{12}}
    + p^{2z_{13}} + p^{2z_{14}}$$
    is $p^{2z_{12}}$.
    From case (B) \cref{formIIIThirdEqualityCase2}, we see that
    $z_{11} + z_{14} = z_{13} + z_{34}$.
    Since $z_{11} > z_{34}, z_{13}$, we see that
    $z_{14} < z_{13}$, and $z_{14} < z_{34}$.
    So, the least degree term of 
    $$((\mathcal{T}_{M}^{*}(p))^{2})_{33} = p^{2z_{13}} + p^{2z_{14}}
    + p^{2z_{33}} + p^{2z_{34}}$$
    is $p^{2z_{14}}$.
    Summing up, the least degree term of the LHS of
    \cref{formIIIMainEqn} is precisely
    $$4p^{z_{11} + 3z_{12} + z_{13} + 3z_{14}}.$$

    As for the RHS of \cref{formIIIMainEqn},
    we note that since $z_{14} < z_{13}$, and
    $z_{12} < z_{11}, z_{33}, z_{34}$, the least degree term of
    $$((\mathcal{T}_{M}^{*}(p))^{2})_{13} = p^{z_{11} + z_{13}}
    + p^{z_{12} + z_{14}} + p^{z_{13} + z_{33}} + p^{z_{14} + z_{34}}$$
    is $p^{z_{12} + z_{14}}$.
    This means that the least degree term of
    $((\mathcal{T}_{M}^{*}(p))^{2})_{14}$ must have a coefficient of
    exactly $2$ for the coefficients of the LHS and RHS of
    \cref{formIIIMainEqn} to be equal.
    But we are in case (B), and so
    $p^{z_{11} + z_{14}} + p^{z_{13} + z_{34}}$ is the leading 
    degree term of
    $$((\mathcal{T}_{M}^{*}(p))^{2})_{14} = p^{z_{11} + z_{14}}
    + p^{z_{12} + z_{13}} + p^{z_{13} + z_{34}} + p^{z_{14} + z_{33}}.$$
    So, the least degree term must be the combined term from
    $p^{z_{12} + z_{13}} + p^{z_{14} + z_{33}}$.
    In this case, the least degree term of the RHS of 
    \cref{formIIIMainEqn} is
    $$4p^{2z_{12} + 2z_{14} + 2z_{12} + 2z_{13}}.$$
    Since this must be equal to the least degree term of the
    LHS of \cref{formIIIMainEqn}, we find that
    $z_{11} + 3z_{12} + z_{13} + 3z_{14} =
    4z_{12} + 2z_{14} + 2z_{13}$.
    Simplifying, we get
    $$z_{11} + z_{14} = z_{12} + z_{13},$$
    which contradicts case (B) \cref{formIIIThirdEqualityCase2}.    
    So, we see that in fact, 
    Case 2. in combination of case (B)  is impossible.
    This implies that our original assumption that
    $\xi(((\mathcal{T}_{M}^{*}(p))^{2})) = 0$ for all
    $p \in \mathbb{R}$ must be false.
\end{itemize}
    Now, if we let $\xi_{2}: N \mapsto \xi(N^{2})$, we see that
    $\xi_{2}(\mathcal{T}_{M}^{*}(p))$ is not the zero function.
    From the construction of $\mathcal{T}_{M}^{*}$, this implies that
    $\xi_{2}(\mathcal{T}_{M}(\mathbf{p}))$ is
    not the zero function either.
    So, we can use \cref{lemma: thickeningWorks} to find
    $M' \in \mathfrak{R}(M, \mathcal{F} \cup
    \{\rho_{\tt{tensor}}, \zeta, \xi_{2} \}) \cap
    \text{Sym}_{4}^{\tt{pd}}(\mathbb{R}_{> 0})$.
    Since $F(M') \neq 0$ for all $F \in \mathcal{F} \cup
    \{\rho_{\tt{tensor}}, \zeta \}$, and $\xi((M')^{2}) \neq 0$,
    we can use \cref{corollary: stretchingWorksPositive}
    to find some $N \in \mathfrak{R}(M', \mathcal{F} \cup
    \{\rho_{\tt{tensor}}, \zeta, \xi \}) \cap
    \text{Sym}_{4}^{\tt{pd}}(\mathbb{R}_{> 0})$.
    Since $\zeta(N) \neq 0$, if $N$ is not of Form (III), then
    it must be either  diagonal distinct, or isomorphic to 
    a matrix of Form (I) or Form (II), or satisfies
    $N_{11} = N_{22} \neq N_{33} = N_{44}$, but
    is not of Form (III).
    In any case, since $\rho_{\tt{tensor}}(N) \neq 0$,
    we see that $\PlEVAL(N) \leq \PlEVAL(M)$ is
    $\#$P-hard, due to either
    \cref{theorem: diagDistHardness},
    or \cref{lemma: formIIReduction} and
    \cref{lemma: formIHardness}, or
    \cref{lemma: order22Forms}.
    On the other hand, if $N$ is of Form (III), it is the
    required matrix (satisfying $\xi(N) \ne 0$).
\end{proof}

We will now need a few more technical lemmas that show that
since $\rho_{\tt{tensor}}(M) \neq 0$,
$M$ may be assumed to have some more structure.

\begin{lemma}\label{lemma: formIIILogRatioRational}
    Let $M \in \text{Sym}_{4}^{\tt{pd}}(\mathbb{R}_{> 0})$
    be of Form (III) such that $\rho_{\tt{tensor}}(M) \neq 0$.
    Let $\mathcal{F}$ be a countable set of 
    $\text{Sym}_{4}(\mathbb{R})$-polynomials such that
    $F(M) \neq 0$ for all $F \in \mathcal{F}$.
    Then, either $\PlEVAL(M)$ is $\#$P-hard, or
    there exists some $N \in \mathfrak{R}
    (M, \mathcal{F} \cup \mathcal{F}' \cup 
    \{\rho_{\tt{tensor}}\}) \cap
    \text{Sym}_{4}^{\tt{pd}}(\mathbb{R}_{> 0})$
    of Form (III), where
    $\mathcal{F}' = \{\xi_{c_{1}, c_{2}}:
    c_{1}, c_{2} \in (\mathbb{Z}_{> 0})\}$ for
    the $\text{Sym}_{4}(\mathbb{R})$-polynomials
    $$\xi_{c_{1}, c_{2}}: N \mapsto 
    (N_{13}^{c_{1}}N_{12}^{c_{2}} -
    N_{14}^{c_{1}}N_{11}^{c_{2}})^{2} + 
    (N_{33}^{c_{1}}N_{14}^{c_{2}} -
    N_{34}^{c_{1}}N_{13}^{c_{2}})^{2}.$$
\end{lemma}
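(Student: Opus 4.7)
The plan is as follows. First, I would invoke \cref{lemma: FormIIIM12M34} and \cref{lemma: formIIICase2Impossible} in succession, so that we may assume, without loss of generality, that $M$ is of Form (III) with $M_{12} \neq M_{34}$ and $M_{11}M_{33}M_{12}M_{34} \neq (M_{13}M_{14})^{2}$ (otherwise $\PlEVAL(M)$ is already $\#$P-hard). I would also enlarge $\mathcal{F}$ with the $\text{Sym}_{4}(\mathbb{R})$-polynomials testing these nondegeneracies and with $\zeta: N \mapsto (N_{11}-N_{33})(N_{11}-N_{44})(N_{22}-N_{33})(N_{22}-N_{44})$, so every subsequent reduct is forced to remain in Form (III) (else $\#$P-hardness would follow from \cref{theorem: diagDistHardness}, \cref{lemma: order22Forms}, and \cref{lemma: formIIReduction,lemma: formIHardness}).

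The key observation is an identity-versus-vanishing equivalence. With respect to a multiplicative generating set $\{g_{t}\}_{t \in [d]}$ of the entries of $M$ and the associated exponents $e_{ijt} \geq 0$ as in \cref{definition: mathcalT}, the value $\xi_{c_{1},c_{2}}(\mathcal{T}_{M}(\mathbf{p}))$ is a sum of two squared monomial-differences in $\mathbf{p}$, so it is identically zero in $\mathbf{p}$ iff for every $t \in [d]$, $c_{1}(e_{13t}-e_{14t}) = c_{2}(e_{11t}-e_{12t})$ and $c_{1}(e_{33t}-e_{34t}) = c_{2}(e_{13t}-e_{14t})$. By the uniqueness of exponents in a generating set, this term-wise identity is equivalent to $M_{13}^{c_{1}}M_{12}^{c_{2}} = M_{14}^{c_{1}}M_{11}^{c_{2}}$ and $M_{33}^{c_{1}}M_{14}^{c_{2}} = M_{34}^{c_{1}}M_{13}^{c_{2}}$, i.e.\ to $\xi_{c_{1},c_{2}}(M) = 0$. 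Hence, if $\xi_{c_{1},c_{2}}(M) \neq 0$ for every $(c_{1},c_{2}) \in \mathbb{Z}_{>0}^{2}$, every polynomial in the countable family $\mathcal{F} \cup \mathcal{F}' \cup \{\rho_{\tt{tensor}}, \zeta\}$ is non-identically-zero on $\mathcal{T}_{M}$, and \cref{lemma: thickeningWorks} produces the required $N$ directly.

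Suppose instead that $\xi_{c_{1},c_{2}}(M) = 0$ for some $(c_{1}, c_{2})$. Setting $\alpha = M_{11}/M_{12}$, $\beta = M_{13}/M_{14}$, $\gamma = M_{33}/M_{34}$, the two conditions read $\alpha^{c_{2}} = \beta^{c_{1}}$ and $\beta^{c_{2}} = \gamma^{c_{1}}$. If $\alpha = 1$, positivity and Form (III) force $\beta = \gamma = 1$, whence $M = \bigl(\begin{smallmatrix}1&1\\1&1\end{smallmatrix}\bigr) \otimes \bigl(\begin{smallmatrix}M_{11}&M_{13}\\M_{13}&M_{33}\end{smallmatrix}\bigr)$, contradicting $\rho_{\tt{tensor}}(M) \neq 0$; so $\alpha \neq 1$, and the offending pairs are constrained to a single rational ratio $c_{2}/c_{1} = \log\beta/\log\alpha$ with the further rigidity $\log\gamma = (\log\beta/\log\alpha)\log\beta$. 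I would then turn to stretching: the entries $\mathcal{S}_{M}(\theta)_{ij}$ are nonzero real-analytic functions of $\theta$, and so are the corresponding log-ratios $\alpha(\theta), \beta(\theta), \gamma(\theta)$. If $\xi_{c_{1},c_{2}}(\mathcal{S}_{M}(\theta)) \equiv 0$ as a function of $\theta$, then $\alpha(\theta)^{c_{2}} = \beta(\theta)^{c_{1}}$ and $\beta(\theta)^{c_{2}} = \gamma(\theta)^{c_{1}}$ hold identically, which -- by a leading-order analysis as $\theta \to \infty$ (in the spirit of \cref{claim: formIVLimit} and Steps 1--5 of \cref{lemma: formIIICase2Impossible}) on the eigenvalue-power expansions of the entries of $\mathcal{S}_{M}(\theta)$ -- would force the tensor-product identities $\varrho_{\tt{tensor}}(M^{\sigma}) = 0$ for some $\sigma \in S_{4}$, contradicting $\rho_{\tt{tensor}}(M) \neq 0$. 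Thus each $\xi_{c_{1},c_{2}}(\mathcal{S}_{M}(\theta))$ is a nonzero real-analytic function of $\theta$; \cref{lemma: stretchingWorks} applied to the augmented countable family yields a $\theta^{*}$ avoiding all these zero sets, and a final application of \cref{lemma: thickeningWorks} (to secure positivity and preservation of Form (III)) produces the required $N$.

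The main obstacle is precisely the leading-order identity analysis in the bad case: showing that $\xi_{c_{1},c_{2}}(\mathcal{S}_{M}(\theta)) \equiv 0$ in $\theta$ forces $M$ to be isomorphic to a tensor product. This requires careful bookkeeping of which eigenvalue-power terms dominate in each monomial product $\mathcal{S}_{M}(\theta)_{ij}^{c}$, analogous to the intricate five-step leading-term argument in the proof of \cref{lemma: formIIICase2Impossible}, and must be coordinated with Form (III) preservation and the nonvanishing of $\rho_{\tt{tensor}}$ along the chosen sequence of reductions.
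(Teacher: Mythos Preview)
Your reduction to the case where $\xi_{c_{1},c_{2}}(M)\neq 0$ for all pairs is correct and clean: since the entries of $\mathcal{T}_{M}(\mathbf{p})$ are monomials, each term of $\xi_{c_{1},c_{2}}(\mathcal{T}_{M}(\mathbf{p}))$ is a squared difference of monomials, and vanishes identically in $\mathbf{p}$ iff it vanishes at $\mathbf{p}=(g_{1},\dots,g_{d})$, i.e.\ iff $\xi_{c_{1},c_{2}}(M)=0$. So \cref{lemma: thickeningWorks} handles all pairs with $\xi_{c_{1},c_{2}}(M)\neq 0$ at once. The preliminary appeals to \cref{lemma: FormIIIM12M34} and \cref{lemma: formIIICase2Impossible} are harmless but unnecessary here.

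The gap is exactly where you locate it. For pairs with $\xi_{c_{1},c_{2}}(M)=0$, you propose showing that $\xi_{c_{1},c_{2}}(\mathcal{S}_{M}(\theta))$ is not identically zero via a leading-order expansion as $\theta\to\infty$, but the entries $\mathcal{S}_{M}(\theta)_{ij}=\sum_{k}H_{ik}H_{jk}\lambda_{k}^{\theta}$ are exponential sums whose leading terms all share the same Perron eigenvalue $\lambda_{1}^{\theta}$; the first-order ratios $\alpha(\theta),\beta(\theta),\gamma(\theta)$ tend to constants depending only on the Perron eigenvector, and extracting a contradiction from the subleading terms is neither the five-step polynomial argument of \cref{lemma: formIIICase2Impossible} nor the limit computation of \cref{claim: formIVLimit}. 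You have not supplied this argument, and it is not clear that it goes through in the form you sketch.

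The paper avoids this difficulty entirely by working with $\xi_{c_{1},c_{2}}\bigl((\mathcal{T}_{M}^{*}(p))^{2}\bigr)$ rather than $\xi_{c_{1},c_{2}}(\mathcal{S}_{M}(\theta))$: thickening composed with a single $S_{2}$ keeps everything polynomial in $p$. The point is a coefficient mismatch. After arranging $z_{ii}>z_{jk}$ (via \cref{lemma: DiagLarge}), the leading term of $((\mathcal{T}_{M}^{*}(p))^{2})_{11}$ is $p^{2z_{11}}$ with coefficient $1$, while $((\mathcal{T}_{M}^{*}(p))^{2})_{12}=2p^{z_{11}+z_{12}}+2p^{z_{13}+z_{14}}$ has all coefficients equal to $2$. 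If the first square in $\xi_{c_{1},c_{2}}$ vanishes identically on $(\mathcal{T}_{M}^{*}(p))^{2}$, matching leading coefficients forces $2^{c_{2}}\mid 2^{c_{1}}$, i.e.\ $c_{1}\geq c_{2}$; the second square symmetrically gives $c_{2}\geq c_{1}$. Hence identical vanishing forces $c_{1}=c_{2}$, and for that diagonal case $\xi_{c,c}(M)=0$ says $M_{11}M_{14}=M_{12}M_{13}$ and $M_{14}M_{33}=M_{13}M_{34}$, which together with Form~(III) give $\varrho_{\tt{tensor}}(M)=0$, contradicting $\rho_{\tt{tensor}}(M)\neq 0$. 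Thus for every $(c_{1},c_{2})$ either $\xi_{c_{1},c_{2}}(M)\neq 0$ or $\xi_{c_{1},c_{2}}((\mathcal{T}_{M}^{*}(p))^{2})\not\equiv 0$; one application of \cref{lemma: thickeningWorks} followed by one application of \cref{corollary: stretchingWorksPositive} (to pass from $\xi_{c_{1},c_{2}}(N^{2})\neq 0$ to $\xi_{c_{1},c_{2}}(N)\neq 0$) finishes the proof.
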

\begin{proof}
    We let $\zeta: N \mapsto (N_{11} - N_{33})(N_{11} - N_{44})
    (N_{22} - N_{33})(N_{22} -N_{44})$.
    We note that $\zeta(M) \neq 0$.
    Without loss of generality, we may first
    replace $M$ with the matrix $M' \in \mathfrak{R}(M, 
    \mathcal{F} \cup \{\rho_{\tt{tensor}}, \zeta\}) \cap
    \text{Sym}_{4}^{\tt{pd}}(\mathbb{R}_{> 0})$
    whose existence is guaranteed by
    \cref{lemma: DiagLarge}.
    So, we may assume that $M_{ii} > M_{jk}$
    for all $i \in [4]$, $j \neq k \in [4]$.
    Now, we let the entries of $M$ be generated by some
    $\{g_{t}\}_{t \in [d]}$.
    We may replace $M$ with some $c M$, as guaranteed
    by \cref{lemma: MequivalentCM} such that
    $e_{ijt} \geq 0$ for all $i, j \in [4]$, $t \in [d]$.
    We also note that $e_{ij0} = 0$ for all $i, j \in [4]$.
    We will now define the function $\widehat{\mathcal{T}_{M}}:
    \mathbb{R}^{d + 1} \rightarrow \text{Sym}_{4}(\mathbb{R})$
    just as we did in \cref{equation: wideHatM}.
    Then, we similarly define $\mathcal{T}_{M}^{*}: \mathbb{R}
    \rightarrow \text{Sym}_{4}(\mathbb{R})$ as in
    \cref{equation: MStar},
    such that
    $$\mathcal{T}_{M}^{*}(p)_{ij} = p^{z_{ij}},$$
    for some integers $z_{ij}$.
    We may assume that $z_{11} \neq z_{33}$, and that
    $z_{ii} > z_{jk}$ for all $i \in [4]$, 
    and $j \neq k \in [4]$.

    We will now consider $\xi_{c_{1}, c_{2}}((\mathcal{T}_{M}^{*}(p))^{2})$.
    First we want to show that if $\xi_{c_{1}, c_{2}}((\mathcal{T}_{M}^{*}(p))^{2}) = 0$,
    for all $p \in \mathbb{R}$, then $c_1=c_2$.
    We note that 
    \begin{multline*}
        ((\mathcal{T}_{M}^{*}(p))^{2}_{13})^{c_{1}}
        ((\mathcal{T}_{M}^{*}(p))^{2}_{12})^{c_{2}}
        = \left(p^{z_{11} + z_{13}} + p^{z_{12} + z_{14}} + 
        p^{z_{13} + z_{33}} + p^{z_{14} + z_{34}}\right)^{c_{1}}\\
        \cdot \left(2p^{z_{11} + z_{12}} + 2p^{z_{13}
        + z_{14}}\right)^{c_{2}},
        \text{ and }
    \end{multline*}
    \begin{multline*}
        ((\mathcal{T}_{M}^{*}(p))^{2}_{14})^{c_{1}}
        ((\mathcal{T}_{M}^{*}(p))^{2}_{11})^{c_{2}}
        = \left(p^{z_{11} + z_{14}} + p^{z_{12} + z_{13}} + 
        p^{z_{13} + z_{34}} + p^{z_{14} + z_{33}}\right)^{c_{1}}\\
        \cdot \left(p^{2z_{11}} + p^{2z_{12}} + 
        p^{2z_{13}} + p^{2z_{14}}\right)^{c_{2}}.
    \end{multline*}
     The identity $\xi_{c_{1}, c_{2}}((\mathcal{T}_{M}^{*}(p))^{2}) = 0$  for all $p \in \mathbb{R}$ implies that
     both terms in $\xi_{c_{1}, c_{2}}$ are 0.  In particular,
     $((\mathcal{T}_{M}^{*}(p))^{2}_{13})^{c_{1}}
    ((\mathcal{T}_{M}^{*}(p))^{2}_{12})^{c_{2}} = 
    ((\mathcal{T}_{M}^{*}(p))^{2}_{14})^{c_{1}}
    ((\mathcal{T}_{M}^{*}(p))^{2}_{11})^{c_{2}}$,
    for all $p \in \mathbb{R}$,
    we see that the coefficient of the leading term of the LHS here
    is a (possibly non-trivial) multiple of
    $2^{c_{2}}$.
    As for the RHS, by construction, the leading term of
    $$(\mathcal{T}_{M}^{*}(p))^{2}_{11} = p^{2z_{11}} + p^{2z_{12}} 
    + p^{2z_{13}} + p^{2z_{14}}$$
    will be $p^{2z_{11}}$. So, the leading term of 
    $((\mathcal{T}_{M}^{*}(p))^{2}_{11})^{c_{2}}$
    will be
    $p^{2c_{2} z_{11}}$, with a coefficient of $1$.
    So, this means that some terms within 
    $((\mathcal{T}_{M}^{*}(p))^{2}_{14})^{c_{1}}$ must be
    equal to each other.
    In fact, since the leading term coefficient of this term
    must be a multiple of $2^{c_{2}}$, it must either be the case
    that the leading term of $(\mathcal{T}_{M}^{*}(p))^{2}_{14}$
    must have either $2$ or all $4$ of the terms have the
    same degree.
    If all four terms have the same degree, that would imply that
    $z_{11} + z_{14} = z_{33} + z_{14}$, which is not
    true since $z_{11} \neq z_{33}$.
    So, this means that the leading term coefficient of the RHS
    is precisely $2^{c_{1}}$.
    Since this is a multiple of $2^{c_{1}}$, this implies that
    $c_{1} \geq c_{2}$.

    On the other hand, we can apply the same argument on  the second term of
    the identity $\xi_{c_{1}, c_{2}}((\mathcal{T}_{M}^{*}(p))^{2}) = 0$, which is
    $((\mathcal{T}_{M}^{*}(p))^{2}_{33})^{c_{1}}
    ((\mathcal{T}_{M}^{*}(p))^{2}_{14})^{c_{2}} = 
    ((\mathcal{T}_{M}^{*}(p))^{2}_{34})^{c_{1}}
    ((\mathcal{T}_{M}^{*}(p))^{2}_{13})^{c_{2}}$.
    Then we  find that $c_{2} \geq c_{1}$,
    which implies that in fact, $c_{1} = c_{2}$.
    
    Now, if $c_{1} = c_{2}$, let us consider
    the matrix $M$ itself.
    If $(M_{13}M_{12})^{c_{1}} = (M_{11}M_{14})^{c_{1}}$, and
    $(M_{33}M_{14})^{c_{1}} = (M_{13}M_{34})^{c_{1}}$,
    since $M \in \text{Sym}_{4}(\mathbb{R}_{> 0})$, that
    implies that $M_{11}M_{14} = M_{12}M_{13}$, and
    $M_{14}M_{33} = M_{13}M_{34}$.
    We also note that $M_{14} = M_{23}$, and
    $M_{11}M_{44} = M_{22}M_{33}$, since $M$ is
    of Form (III).
    So, this implies that $\rho_{\tt{tensor}}(M) = 0$,
    which contradicts our assumption about $M$.

    So, we see that for all $\xi_{c_{1}, c_{2}}$, either
    $\xi_{c_{1}, c_{2}}(M) \neq 0$, or there
    exists some $p \in \mathbb{R}$ such that
    $\xi_{c_{1}, c_{2}}((\mathcal{T}_{M}^{*}(p))^{2}) \neq 0$.
    We will now let $\mathcal{F}'' = \{F'': N \mapsto
    \xi_{c_{1}, c_{2}}(N^{2})\ |\ c_{1} \neq c_{2} \in
    \mathbb{Z}_{> 0} \}$, and
    $\mathcal{F}''' = \{\xi_{c, c}: c \in \mathbb{Z}_{> 0} \}$.
    We can now use
    \cref{lemma: thickeningWorks} to find $M'' \in \mathfrak{R}
    (M, \mathcal{F} \cup \mathcal{F}'' \cup \mathcal{F}''' \cup
    \{\rho_{\tt{tensor}}, \zeta\}) \cap
    \text{Sym}_{4}^{\tt{pd}}(\mathbb{R}_{> 0})$.
    Since $F(M'') \neq 0$ for all $F \in \mathcal{F} \cup
    \mathcal{F}''' \cup \{\rho_{\tt{tensor}}, \zeta\}$,
 and $\xi_{c_{1}, c_{2}}(M^{2}) \neq 0$ for all  $c_{1} \neq c_{2} \in    \mathbb{Z}_{> 0}$,
    we can then use \cref{corollary: stretchingWorksPositive}
    to find some $N \in \mathfrak{R}
    (M, \mathcal{F} \cup \mathcal{F}' \cup
    \{\rho_{\tt{tensor}}, \zeta\}) \cap
    \text{Sym}_{4}^{\tt{pd}}(\mathbb{R}_{> 0})$.
    Since $\zeta(N) \neq 0$, if $N$ is not of Form (III), then
    it must either be diagonal distinct, or isomorphic to 
    a matrix of Form (I) or Form (II), or satisfies
    $N_{11} = N_{22} \neq N_{33} = N_{44}$, but
    is not of Form (III).
    In any case, since $\rho_{\tt{tensor}}(N) \neq 0$,
    we see that $\PlEVAL(N) \leq \PlEVAL(M)$ is
    $\#$P-hard, due to either
    \cref{theorem: diagDistHardness},
    or \cref{lemma: formIIReduction} and
    \cref{lemma: formIHardness}, or
    \cref{lemma: order22Forms}.
    On the other hand, if $N$ is of Form (III), it is the
    required matrix.
\end{proof}

\begin{lemma}\label{lemma: formIIILogRatio}
    Let $M \in \text{Sym}_{4}^{\tt{pd}}(\mathbb{R}_{> 0})$
    be of Form (III) such that $\rho_{\tt{tensor}}(M) \neq 0$.
    Let $\mathcal{F}$ be a countable set of 
    $\text{Sym}_{4}(\mathbb{R})$-polynomials such that
    $F(M) \neq 0$ for all $F \in \mathcal{F}$.
    Then, either $\PlEVAL(M)$ is $\#$P-hard, or
    there exists some $N \in \mathfrak{R}
    (M, \mathcal{F} \cup \{\rho_{\tt{tensor}}\}) \cap
    \text{Sym}_{4}^{\tt{pd}}(\mathbb{R}_{> 0})$
    of Form (III), such that
    $$\left(\log\left(\frac{N_{13}}{N_{14}}\right)\right)^{2} \neq
    \log\left(\frac{N_{11}}{N_{12}}\right) \cdot
    \log\left(\frac{N_{33}}{N_{34}}\right),$$
    and $N_{ii} > N_{jk}$ for all $i \in [4]$,
    $j \neq k \in [4]$, and
    $N_{11}N_{33}N_{12}N_{34} \neq (N_{13}N_{14})^{2}$.
\end{lemma}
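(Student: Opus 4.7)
The plan is to first secure all the polynomial conditions by iterating the previously established reduction lemmas, and then to obtain the remaining logarithmic inequality by a real analytic deformation argument along the stretching family $\mathcal{S}_{M'}$.

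First I would apply \cref{lemma: formIIICase2Impossible}, \cref{lemma: DiagLarge}, and \cref{lemma: formIIILogRatioRational} in sequence to $(M, \mathcal{F}\cup\{\rho_{\tt{tensor}}\})$. Unless one of these reductions already produces hardness of $\PlEVAL(M)$, this yields some $M' \in \mathfrak{R}(M, \mathcal{F} \cup \{\rho_{\tt{tensor}}\}) \cap \text{Sym}_{4}^{\tt{pd}}(\mathbb{R}_{>0})$ of Form (III) satisfying $\rho_{\tt{tensor}}(M') \neq 0$, $M'_{ii} > M'_{jk}$ for every $i \in [4]$ and $j \neq k \in [4]$, $M'_{11}M'_{33}M'_{12}M'_{34} \neq (M'_{13}M'_{14})^{2}$, and $\xi_{c_{1},c_{2}}(M') \neq 0$ for every pair $c_{1},c_{2} \in \mathbb{Z}_{>0}$. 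If this $M'$ already satisfies the logarithmic inequality of the conclusion, I take $N = M'$ and finish.

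Otherwise I would pass to the real analytic $1$-parameter family $\mathcal{S}_{M'}(\theta) = HD^{\theta}H^{\tt T}$. Because $M'$ commutes with the permutation matrix $P$ for the double transposition $(1\,2)(3\,4)$, functional calculus ensures that $\mathcal{S}_{M'}(\theta)$ commutes with $P$ and stays in Form (III) for every $\theta \in \mathbb{R}$. On some open neighborhood $U$ of $\theta = 1$ all entries of $\mathcal{S}_{M'}(\theta)$ are positive and every polynomial condition secured in Step 1 continues to hold (the relevant polynomials are real analytic in $\theta$ and non-zero at $\theta = 1$, hence fail only on a discrete set). On $U$ define the real analytic function
\[
\eta(\theta) \;=\; \Bigl(\log\tfrac{\mathcal{S}_{M'}(\theta)_{13}}{\mathcal{S}_{M'}(\theta)_{14}}\Bigr)^{\!2} \;-\; \log\tfrac{\mathcal{S}_{M'}(\theta)_{11}}{\mathcal{S}_{M'}(\theta)_{12}} \cdot \log\tfrac{\mathcal{S}_{M'}(\theta)_{33}}{\mathcal{S}_{M'}(\theta)_{34}}.
\]
If $\eta \not\equiv 0$ on $U$, its zero set is discrete, so I can pick $\theta^{*} \in U$ avoiding every discrete bad set simultaneously. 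Setting $N = \mathcal{S}_{M'}(\theta^{*})$, the eigenvalue lattices $\mathcal{L}(\lambda_{1}, \ldots, \lambda_{4})$ and $\mathcal{L}(\lambda_{1}^{\theta^{*}}, \ldots, \lambda_{4}^{\theta^{*}})$ coincide because $\theta^{*} > 0$ and every $\lambda_{i} > 0$, so \cref{lemma: stretchingBasic} delivers $\PlEVAL(N) \leq \PlEVAL(M') \leq \PlEVAL(M)$, and this $N$ meets every requirement.

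The main obstacle is proving $\eta \not\equiv 0$. The plan is to use the spectral decomposition $\mathcal{S}_{M'}(\theta) = \sum_{k=1}^{4}\lambda_{k}^{\theta} w_{k} w_{k}^{\tt T}$, where the $P$-invariance forces each eigenvector $w_{k}$ into either the $P$-symmetric subspace (so $(w_{k})_{1} = (w_{k})_{2}$ and $(w_{k})_{3} = (w_{k})_{4}$) or the $P$-antisymmetric subspace (so $(w_{k})_{1} = -(w_{k})_{2}$ and $(w_{k})_{3} = -(w_{k})_{4}$). By Perron--Frobenius the top eigenvector $w_{1}$ is strictly positive and hence symmetric. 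Expanding the six entries appearing in $\eta(\theta)$ as $\theta \to \infty$ and assuming $\eta \equiv 0$, the resulting order-by-order identities in $(\lambda_{k}/\lambda_{1})^{\theta}$ reduce after finitely many algebraic manipulations to one of the situations excluded in Step 1, namely $\rho_{\tt{tensor}}(M') = 0$, or $M'_{11}M'_{33}M'_{12}M'_{34} = (M'_{13}M'_{14})^{2}$, or $\xi_{c_{1},c_{2}}(M') = 0$ for some positive integers $c_{1}, c_{2}$. The trickiest sub-case arises when the second largest eigenvalue belongs to the $P$-antisymmetric subspace: there the leading-order coefficient of $\eta$ cancels automatically, and the argument has to be pushed to the next term in the expansion, which is where most of the careful case analysis will live.
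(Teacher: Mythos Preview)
Your overall plan—first secure the polynomial side-conditions, then deform to obtain the logarithmic inequality—matches the paper. The key divergence is your choice of deformation: you use the stretching family $\mathcal{S}_{M'}(\theta)$, whereas the paper uses the thickening family $\mathcal{T}_{M'}(\mathbf{p})$.

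This choice is the source of your gap. Under thickening, $\log\bigl(\mathcal{T}_{M'}(\mathbf{p})_{ij}\bigr) = \sum_t e_{ijt}\log p_t$ is \emph{linear} in the variables $x_t = \log p_t$, so the quantity
\[
\Omega(\mathbf{p}) \;=\; \Bigl(\log\tfrac{\mathcal{T}_{M'}(\mathbf{p})_{13}}{\mathcal{T}_{M'}(\mathbf{p})_{14}}\Bigr)^{2}
\;-\; \log\tfrac{\mathcal{T}_{M'}(\mathbf{p})_{11}}{\mathcal{T}_{M'}(\mathbf{p})_{12}}\cdot
\log\tfrac{\mathcal{T}_{M'}(\mathbf{p})_{33}}{\mathcal{T}_{M'}(\mathbf{p})_{34}}
\]
is literally $\langle a,x\rangle^{2}-\langle b,x\rangle\langle c,x\rangle$ for integer vectors $a,b,c$. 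If this vanishes identically, an elementary claim (the zero loci of the two sides are hyperplanes that must coincide, forcing $b$ and $c$ proportional to $a$ by a \emph{rational} constant) yields $(M'_{11}/M'_{12})^{c_{2}} = (M'_{13}/M'_{14})^{c_{1}}$ and $(M'_{13}/M'_{14})^{c_{2}} = (M'_{33}/M'_{34})^{c_{1}}$ for some $c_{1},c_{2}\in\mathbb{Z}_{>0}$, directly contradicting $\xi_{c_{1},c_{2}}(M')\neq 0$. This is precisely why the paper invests in \cref{lemma: formIIILogRatioRational}: it is tailored to feed this contradiction.

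Under stretching, the entries $\mathcal{S}_{M'}(\theta)_{ij}=\sum_{k}\lambda_{k}^{\theta}(w_{k})_{i}(w_{k})_{j}$ are sums of exponentials, and their logarithms have no comparable algebraic structure. Your asymptotic scheme does not get off the ground cheaply: since the Perron eigenvector is $P$-symmetric, one checks $(w_{1})_{1}=(w_{1})_{2}$ and $(w_{1})_{3}=(w_{1})_{4}$, so all three log-ratios in $\eta(\theta)$ tend to $0$ as $\theta\to\infty$, and the leading term of $\eta$ vanishes automatically, carrying no information. The next-order contributions mix the second $P$-symmetric eigenpair with the leading $P$-antisymmetric one in a genuinely entangled way, and you have given no argument that the resulting identities force one of the excluded configurations from Step~1; the phrase ``reduce after finitely many algebraic manipulations'' is doing all the work. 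As written this is a real gap. Switching to the thickening deformation removes the difficulty entirely and makes the role of $\xi_{c_{1},c_{2}}$ transparent.
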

\begin{proof}
    We will let $\zeta: N \mapsto (N_{11} - N_{33})(N_{11} - N_{44})
    (N_{22} - N_{33})(N_{22} -N_{44})$, and
    $\zeta': N \mapsto N_{11}N_{33}N_{12}N_{34} - (N_{13}N_{14})^{2}$.
    We note that $\zeta(M) \neq 0$.
    We will also let $\mathcal{F}' = \{\xi_{c_{1}, c_{2}}: c_{1}, c_{2}
    \in (\mathbb{Z}_{> 0})\}$ as defined in 
    \cref{lemma: formIIILogRatioRational}, such that
    .$$\xi_{c_{1}, c_{2}}: N \mapsto 
    (N_{13}^{c_{1}}N_{12}^{c_{2}} -
     N_{14}^{c_{1}}N_{11}^{c_{2}})^{2} + 
    (N_{33}^{c_{1}}N_{14}^{c_{2}} -
     N_{34}^{c_{1}}N_{13}^{c_{2}})^{2}.$$
    We can first find $M' \in \mathfrak{R}(M, \mathcal{F} \cup
    \mathcal{F}' \cup \{\rho_{\tt{tensor}}, \zeta\}) \cap
    \text{Sym}_{q}^{\tt{pd}}(\mathbb{R}_{> 0})$
    using \cref{lemma: formIIILogRatioRational}.
    We can now use \cref{lemma: formIIICase2Impossible}
    to find $M'' \in \mathfrak{R}(M, \mathcal{F} \cup
    \mathcal{F}' \cup \{\rho_{\tt{tensor}}, \zeta, \zeta'\}) \cap
    \text{Sym}_{q}^{\tt{pd}}(\mathbb{R}_{> 0})$.
    Finally, we can use \cref{lemma: DiagLarge} to find
    $M''' \in \mathfrak{R}(M, \mathcal{F} \cup
    \mathcal{F}' \cup \{\rho_{\tt{tensor}}, \zeta, \zeta'\}) \cap
    \text{Sym}_{q}^{\tt{pd}}(\mathbb{R}_{> 0})$
    such that $(M''')_{ii} > (M''')_{jk}$ for all
    $i \in [4]$, $j \neq k \in [4]$.
    Since $\zeta(M''') \neq 0$, if $M'''$ is not of Form (III), then
    it must either be diagonal distinct, or isomorphic to 
    a matrix of Form (I) or Form (II), or satisfies
    $(M''')_{11} = (M''')_{22} \neq (M''')_{33} = (M''')_{44}$, but
    is not of Form (III).
    In any case, since $\rho_{\tt{tensor}}(M''') \neq 0$,
    we see that $\PlEVAL(M''') \leq \PlEVAL(M)$ is
    $\#$P-hard, due to either
    \cref{theorem: diagDistHardness},
    or \cref{lemma: formIIReduction} and
    \cref{lemma: formIHardness}, or
    \cref{lemma: order22Forms}.
    We may now assume that $M'''$ is of Form (III).

    For convenience, we can rename this $M'''$ as $M$.
    We also note that if $M_{13} = M_{14}$, we are already done,
    since 
    $\nicefrac{M_{11}}{M_{12}} > 1, \nicefrac{M_{33}}{M_{34}} > 1$,
    which implies that
    $$\log\left(\frac{M_{11}}{M_{12}}\right) \cdot
    \log\left(\frac{M_{33}}{M_{34}}\right) > 0
    = \left(\log\left(\frac{M_{13}}{M_{14}}\right) \right)^{2}.$$
    So, we may assume that $M_{13} \neq M_{14}$.
    Now, we can assume without loss of generality
    that $M_{11} > M_{33}$, and that $M_{13} > M_{14}$
    by permuting the rows and columns of $M$
    (first by a possible switch $\{1,2\} \leftrightarrow \{3,4\}$ and then a possible flip $3 \leftrightarrow 4$).
    
    Now, we may let the entries of $M$ be generated by some
    $\{g_{t} \}_{t \in [d]}$, such that
    $e_{ijt} \geq 0$ for all $i, j \in [4]$, $t \in [d]$.    
    So, for all $\mathbf{p} \in (\mathbb{R}_{> 0})^{d}$,
    $$\left(\log\left(\frac{\mathcal{T}_{M}(\mathbf{p})_{13}}
    {\mathcal{T}_{M}(\mathbf{p})_{14}}\right)\right)^{2}
    = \left(\sum_{t \in [d]}(e_{13t} - e_{14t})\log(p_{t})\right)^{2},
    \text{ and }$$
    $$\left(\log\left(\frac{\mathcal{T}_{M}(\mathbf{p})_{33}}
    {\mathcal{T}_{M}(\mathbf{p})_{34}}\right) \right)
    \left(\log\left(\frac{\mathcal{T}_{M}(\mathbf{p})_{11}}
    {\mathcal{T}_{M}(\mathbf{p})_{12}}\right) \right)
    = \left(\sum_{t \in [d]}(e_{33t} - e_{34t})\log(p_{t})\right) 
    \cdot \left(\sum_{t \in [d]}(e_{11t} - e_{12t})\log(p_{t})\right).$$
    We note that none of these three functions are the constant zero function.
    For example, since $M_{11} > M_{12}$, we have 
    $(e_{111}, \dots, e_{11d}) \ne (e_{121}, \dots, e_{12d})$ and thus
    $\sum_{t \in [d]}(e_{11t} - e_{12t})\log(p_{t})$ is not constant zero.
    The other two being not constant zero follow similarly from
    $M_{13} > M_{14}$  and
    $M_{33} > M_{44}$.
    We will now make use of the following claim,
    which we shall prove shortly.
    \begin{claim}\label{claim: formIIIRationalOnly}
        Let $\mathbf{0} \neq (a_{1}, \dots, a_{n}),
        (b_{1}, \dots, b_{n}), 
        (c_{1}, \dots, c_{n}) \in \mathbb{Z}^{n}$ such that
        $$\left(\sum_{i \in [n]}a_{i}x_{i}\right)^{2}
        = \left(\sum_{i \in [n]}b_{i}x_{i} \right)
        \left(\sum_{i \in [n]}c_{i}x_{i}\right),$$
        for all $(x_{1}, \dots, x_{n}) \in
        \mathbb{R}^{n}$, then
        there exists a rational constant $\kappa$ such that
        $$(c_{1}, \dots, c_{n}) = \kappa \cdot (a_{1}, \dots, a_{n}) 
        = \kappa^{2} \cdot (b_{1}, \dots, b_{n}).$$
    \end{claim}

    We will now define the function $\Omega: (\mathbb{R}_{> 0})^{d}
    \rightarrow \mathbb{R}$ such that
    $$\Omega(\mathbf{p}) =
    \left(\log\left(\frac{\mathcal{T}_{M}(\mathbf{p})_{13}}
    {\mathcal{T}_{M}(\mathbf{p})_{14}}\right)\right)^{2} - 
    \left(\log\left(\frac{\mathcal{T}_{M}(\mathbf{p})_{33}}
    {\mathcal{T}_{M}(\mathbf{p})_{34}}\right) \right) \cdot
    \left(\log\left(\frac{\mathcal{T}_{M}(\mathbf{p})_{11}}
    {\mathcal{T}_{M}(\mathbf{p})_{12}}\right) \right).$$
    \cref{claim: formIIIRationalOnly} tells us that
    if $\Omega(\mathbf{p}) = 0$ for all
    $\mathbf{p} \in (\mathbb{R}_{> 0})^{d}$,
    then there exists some rational $\kappa \in \mathbb{Q}$ such that
    $$(e_{111} - e_{121}, \dots, e_{11d} - e_{12d}) =
    \kappa \cdot (e_{131} - e_{141}, \dots, e_{13d} - e_{14d}) 
    = \kappa^{2} \cdot (e_{331} - e_{341}, \dots, e_{33d} - e_{34d}).$$
    
    This implies that
    $$\frac{\left(\log\left(\nicefrac{M_{11}}
    {M_{12}}\right)\right)}
    {\left(\log\left(\nicefrac{M_{13}}{M_{14}}\right)\right)}
    =  \frac{(e_{111} - e_{121})\log(g_{1}) + \cdots +
    (e_{11d} - e_{12d})\log(g_{d})}
    {(e_{131} - e_{141})\log(g_{1}) + \cdots +
    (e_{13d} - e_{14d})\log(g_{d})}
    = \kappa = \frac{c_{1}}{c_{2}}, \text{ and }$$
    $$\frac{\left(\log\left(\nicefrac{M_{13}}
    {M_{14}}\right)\right)}
    {\left(\log\left(\nicefrac{M_{33}}{M_{34}}\right)\right)}
    = \frac{(e_{131} - e_{141})\log(g_{1}) + \cdots +
    (e_{13d} - e_{14d})\log(g_{d})}
    {(e_{331} - e_{341})\log(g_{1}) + \cdots +
    (e_{33d} - e_{34d})\log(g_{d})}
    = \kappa = \frac{c_{1}}{c_{2}},$$
    for some positive integers $c_{1}, c_{2}$.
    But this means that
    $$\left(\frac{M_{11}}{M_{12}}\right)^{c_{2}} = 
    \left(\frac{M_{13}}{M_{14}}\right)^{c_{1}}, \text{ and }
    \left(\frac{M_{13}}
    {M_{14}}\right)^{c_{2}} = 
    \left(\frac{M_{33}}{M_{34}}\right)^{c_{1}}.$$
    In particular, we see that
    $$((M)_{13})^{c_{1}}((M)_{12})^{c_{2}} = 
    ((M)_{11})^{c_{2}}((M)_{14})^{c_{1}}, \text{ and }
    ((M)_{13})^{c_{2}}((M)_{34})^{c_{1}} = 
    ((M)_{33})^{c_{1}}((M)_{14})^{c_{2}}.$$
    
    This contradicts our assumption that
    $\xi_{c_{1}, c_{2}}(M) \ne 0$.
    So, our assumption that
    $\Omega(\mathbf{p}) = 0$ for all $\mathbf{p} \in (\mathbb{R}_{> 0})^{d}$
    must be false.
    Let $\mathbf{p}' = (p'_{1}, \dots, p'_{d})
    \in (\mathbb{R}_{> 0})^{d}$ such that
    $\Omega(\mathbf{p}') \neq 0$.
    We can now define $\mathbf{p}: \mathbb{R} \rightarrow \mathbb{R}^{d}$,
    such that
    $$\mathbf{p}(t) = t \cdot \mathbf{p}' + (1 - t) \cdot
    (g_{1}, \dots, g_{d}).$$
    We note that $\mathbf{p}(0) = (g_{1}, \dots, g_{d}) \in
    (\mathbb{R}_{> 0})^{d}$, and $\mathbf{p}(1) = \mathbf{p}' \in
    (\mathbb{R}_{> 0})^{d}$.
    Therefore, for all $t \in [0, 1]$, and for all $i \in [d]$,
    $\mathbf{p}(t)_{i} = t \cdot p'_{i} + (1 - t)g_{i} > 0$.
    Moreover, since $\mathbf{p}(t)$ is continuous as a function
    of $t$, we see that there exists some $\delta_{1}, \delta_{2} > 0$
    such that $\mathbf{p}(t) \in (\mathbb{R}_{> 0})^{d}$ for all
    $t \in (-\delta_{1}, \delta_{1})$, and
    $t \in (1 - \delta_{2}, 1 + \delta_{2})$.
    We will now let $U = (-\delta_{1}, 1 + \delta_{2}) \subset
    \mathbb{R}$ be an open set, such that
    $\mathbf{p}(t) \in (\mathbb{R}_{> 0})^{d}$ for all
    $t \in U$.
    We now define $\omega: U \rightarrow \mathbb{R}$ such that
    $$\omega(t) = \Omega(\mathbf{p}(t)).$$
    Clearly, $\omega$ is a real valued analytic function on $U$.
    Moreover, by our choice of $\mathbf{p}'$, we know that
    $\omega(1) = \Omega(\mathbf{p}') \neq 0$.
    So, $\omega$ is a non-zero real analytic function on the open set $U$.
    So, we know (\cite{krantz2002primer}, Corollary 1.2.7) that
    the set of zeros of $\omega$ within $[0, 1]$ cannot have an
    accumulation point.
    In particular, there can only be finitely many zeros of $\omega$
    within $[0, 1]$.
    We can denote the set of zeros as $\emptyset_{\omega}
    = \{t \in [0, 1]: \omega(t) = 0\}$.
    
    Since $\mathbf{p}(0) = (g_{1}, \dots, g_{d})$, we see that
    $\mathcal{T}_{M}(\mathbf{p}(0)) = M$.
    So, there exists some $0< \delta <1$ such that
    $\mathcal{T}_{M}(\mathbf{p}(t))_{ii} >
    \mathcal{T}_{M}(\mathbf{p}(t))_{jk} > 0$,
    for all $i \in [4], j \neq k \in [4]$, and $t \in (0, \delta)$.
    We also note that the eigenvalues of $\mathcal{T}_{M}(\mathbf{p}(t))$
    are continuous as functions of $t$.
    Since $\mathcal{T}_{M}(\mathbf{p}(0)) = M$ is positive definite,
    this implies that there exists some $0 < \delta' < 1$ such that
    $\mathcal{T}_{M}(\mathbf{p}(t))$ is positive definite
    for all $t \in (0, \delta')$.
    We will now let $\mathcal{F}'' = \mathcal{F} \cup \mathcal{F}'
    \cup \{\rho_{\tt{tensor}}, \zeta, \zeta'\}$.
    We note that
    $\mathcal{F}''$ is a countable set of
    $\text{Sym}_{4}(\mathbb{R})$-polynomials.
    So, $F(\mathcal{T}_{M}(\mathbf{p}(t)))$ is a polynomial in $t$ for all
    $F \in \mathcal{F}''$.
    Since $F(M) \neq 0$ for all
    $F \in \mathcal{F}''$, by our choice of $M$,
    it follows that $F(\mathcal{T}_{M}(\mathbf{p}(t)))$ is a non-zero
    polynomial in $t$ for all $F \in \mathcal{F}''$.
    We let $\emptyset_{F} = \{t \in (0, 1):
    F(\mathcal{T}_{M}(\mathbf{p}(t))) = 0\}$ for all
    $F \in \mathcal{F}''$.
    Each of these is a finite set, and therefore,
    $\cup_{F \in \mathcal{F}''}\emptyset_{F}$ is a countable set.

    We can therefore find $\mathbf{p}^{*} = \mathbf{p}(t^{*})$ for
    some $t^{*} \in \left((0, \delta)
    \cap (0, \delta')\right) \setminus
    \left(\cup_{F \in \mathcal{F}''}\emptyset_{F} \cup
    \emptyset_{\omega}\right)$.
    By our choice of $t^{*}$, we see that
    $F(\mathcal{T}_{M}(\mathbf{p}^{*})) \neq 0$ for all
    $F \in \mathcal{F} \cup \mathcal{F}' \cup
    \{\rho_{\tt{tensor}}, \zeta, \zeta' \}$, and
    $\mathcal{T}_{M}(\mathbf{p}^{*}) \in 
    \text{Sym}_{4}^{\tt{pd}}(\mathbb{R}_{> 0})$.
    Moreover, we also see that
    $\Omega(\mathbf{p}^{*}) \neq 0$.
    So, $N = \mathcal{T}_{M}(\mathbf{p}^{*}) \in \mathfrak{R}(M,
    \mathcal{F} \cup \mathcal{F}' \cup
    \{\rho_{\tt{tensor}}, \zeta, \zeta' \}) \cap
    \text{Sym}_{4}^{\tt{pd}}(\mathbb{R}_{> 0})$
    such that $\Omega(N) \neq 0$.
    Since $\zeta(N) \neq 0$, if $N$ is not of Form (III), then
    it must either be diagonal distinct, or isomorphic to 
    a matrix of Form (I) or Form (II), or satisfies
    $N_{11} = N_{22} \neq N_{33} = N_{44}$, but
    is not of Form (III).
    In any case, since $\rho_{\tt{tensor}}(N) \neq 0$,
    we see that $\PlEVAL(N) \leq \PlEVAL(M)$ is
    $\#$P-hard, due to either
    \cref{theorem: diagDistHardness},
    or \cref{lemma: formIIReduction} and
    \cref{lemma: formIHardness}, or
    \cref{lemma: order22Forms}.
    On the other hand, if $N$ is of Form (III), it is the
    required matrix.
\end{proof}

We will now prove  \cref{claim: formIIIRationalOnly}.
\begin{claimproof}{\cref{claim: formIIIRationalOnly}}
The zero sets defined by the LHS and the RHS are respectively the
hyperplane $\sum_{i \in [n]}a_{i}x_{i}=0$, and the union of the hyperplanes
      $\sum_{i \in [n]}b_{i}x_{i} =0$ and $\sum_{i \in [n]}c_{i}x_{i} =0$.
      Thus they must all coincide. Their (non-zero) normal vectors must be proportional. Therefore
        there exist  constants $\kappa, \kappa' \in \mathbb{Q}_{\ne 0}$ such that
        $(c_{1}, \dots, c_{n}) = \kappa \cdot (a_{1}, \dots, a_{n})$
        and $(b_{1}, \dots, b_{n})  = \kappa' \cdot (a_{1}, \dots, a_{n})$.
        Then the given equality in the lemma statement implies that $\kappa' = \kappa^{-1}$.
\end{claimproof}

\begin{lemma}\label{lemma: formIIIReduction}
    Let $M \in \text{Sym}_{4}^{\tt{pd}}(\mathbb{R}_{> 0})$
    be of Form (III) such that $\rho_{\tt{tensor}}(M) \neq 0$.
    Let $\mathcal{F}$ be a countable set of 
    $\text{Sym}_{4}(\mathbb{R})$-polynomials such that
    $F(M) \neq 0$ for all $F \in \mathcal{F}$.
    Let $\mathbf{0} \neq \mathbf{x} \in \chi_{4}$ of 
    support size greater than $2$.
    Then, either $\PlEVAL(M)$ is $\#$P-hard, or
    there exists some $N \in \mathfrak{R}
    (M, \mathcal{F} \cup \{\Psi_{\mathbf{x}},
    \rho_{\tt{tensor}}\}) \cap
    \text{Sym}_{4}^{\tt{pd}}(\mathbb{R}_{> 0})$
    of Form (III).
\end{lemma}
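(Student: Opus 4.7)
The plan mirrors the structure of \cref{lemma: formIReduction} and \cref{lemma: formIVReduction}: first invoke the prior Form (III) preparatory lemmas to replace $M$ by a reduct with many non-vanishing side conditions, then produce an integer $n^{*}$ with $\Psi_{\mathbf{x}}(T_{2n^{*}}M)\neq 0$, and finally combine \cref{corollary: thickeningSufficient} with \cref{theorem: nonDiagAllForms}, \cref{lemma: formIHardness}, \cref{lemma: order22Forms} and \cref{lemma: formIIReduction} to produce the required $N$ of Form (III) (or else conclude $\#$P-hardness). Concretely, I would first apply \cref{lemma: formIIILogRatio} with $\mathcal{F}$ augmented by $\zeta\colon N\mapsto \prod_{i\in\{1,2\},\ j\in\{3,4\}}(N_{ii}-N_{jj})$ and the full family $\{\xi_{c_{1},c_{2}}\}$, which replaces $M$ by a matrix still of Form (III) satisfying $N_{ii}>N_{jk}$ for all $i\in[4]$ and $j\neq k$; $N_{12}\neq N_{34}$; $N_{11}N_{33}N_{12}N_{34}\neq (N_{13}N_{14})^{2}$; $(\log(N_{13}/N_{14}))^{2}\neq \log(N_{11}/N_{12})\log(N_{33}/N_{34})$; and $N_{13}^{c_{1}}N_{12}^{c_{2}}\neq N_{14}^{c_{1}}N_{11}^{c_{2}}$ for every positive integer pair $c_{1},c_{2}$. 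After a simultaneous row/column permutation I may also assume $N_{11}=N_{22}>N_{33}=N_{44}$ and $N_{13}\geq N_{14}$.

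The key structural observation is that Form (III) matrices admit a simultaneous block diagonalization. Let $P=\tfrac{1}{\sqrt{2}}\begin{pmatrix}1 & 1\\ 1 & -1\end{pmatrix}$. Then conjugation by $I_{2}\otimes P$ followed by the permutation $(1,2,3,4)\mapsto(1,3,2,4)$ sends any Form (III) matrix with parameters $(a,b,x,y,z,t)$ as in \cref{Form-III-in-abc} to the block-diagonal matrix $\mathrm{diag}(M_{+},M_{-})$, where
\[
M_{+}=\begin{pmatrix} a+x & y+z \\ y+z & b+t\end{pmatrix},\qquad M_{-}=\begin{pmatrix} a-x & y-z\\ y-z & b-t\end{pmatrix}.
\]
Since thickening preserves Form (III), $T_{2n}M$ block-diagonalizes in the same way with each symbol $s\in\{a,b,x,y,z,t\}$ replaced by $s^{2n}$. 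Its four eigenvalues are therefore the two eigenvalues of $(T_{2n}M)_{+}$ together with the two of $(T_{2n}M)_{-}$, each given by an explicit $2\times 2$ closed-form expression. Because $a$ is the unique maximum among the entries, the two large eigenvalues $\alpha_{+}^{(n)},\beta_{+}^{(n)}$ both grow like $a^{2n}$ while the two small eigenvalues $\alpha_{-}^{(n)},\beta_{-}^{(n)}$ decay like $\det((T_{2n}M)_{\pm})/a^{2n}$, whose leading behavior is controlled by the competition between $(ab)^{n}$ and $y^{2n}$ (respectively $z^{2n}$).

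For $\mathbf{0}\neq\mathbf{x}\in\chi_{4}$ of support size at least $3$, the plan is to study $\varphi_{\mathbf{x}}(\lambda_{1},\dots,\lambda_{4})=\prod_{i}\lambda_{i}^{x_{i}}$ evaluated on the eigenvalues of $T_{2n}M$ and show, using the running hypotheses on $M$, that its limit as $n\to\infty$ is $0$ or $\infty$, and in particular is not $1$ for some large $n^{*}$. This gives $\phi_{\mathbf{x}}(\lambda_{1}(T_{2n^{*}}M),\dots,\lambda_{4}(T_{2n^{*}}M))\neq 0$. Running the same argument over all permutations $\mathbf{x}^{\sigma}$ simultaneously (and taking the maximum threshold) and using \cref{lemma: permutatePhi} yields $\Psi_{\mathbf{x}}(T_{2n^{*}}M)\neq 0$. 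Then, as in \cref{lemma: formIVReduction}, \cref{corollary: thickeningSufficient} produces $N\in\mathfrak{R}(M,\mathcal{F}\cup\mathcal{F}_{M}\cup\{\Psi_{\mathbf{x}},\rho_{\tt{tensor}},\zeta\})\cap\text{Sym}_{4}^{\tt{pd}}(\mathbb{R}_{>0})$, and the usual form-exclusion lemmas ensure that either $\PlEVAL(N)\leq \PlEVAL(M)$ is $\#$P-hard or $N$ is of Form (III), closing the argument.

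The main obstacle will be the asymptotic analysis of $\varphi_{\mathbf{x}}$ in the borderline regime where $(ab)^{n}$ and $y^{2n}$ are comparable in scale, or where the analogous quantities for $M_{-}$ tie; there the two small eigenvalues $\alpha_{-}^{(n)},\beta_{-}^{(n)}$ fail to separate cleanly. This is precisely why \cref{lemma: formIIILogRatio} combines both the log-ratio condition and the discrete $\xi_{c_{1},c_{2}}$ family: the former rules out the continuous coincidence $\log(ab)=2\log y$ paired with the parallel relation for $M_{-}$, while the latter rules out the discrete multiplicative coincidences that would make $\alpha_{-}^{(n)}$ and $\beta_{-}^{(n)}$ algebraically dependent over $\mathbb{Q}$ through rational power relations. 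A careful case analysis, separated by the signs of the $x_{i}$'s and by whether $ab$ exceeds, equals, or is less than $y^{2}$, should show $\varphi_{\mathbf{x}}\to 0$ or $\infty$ in every case, with the only subtle exclusion being $\mathbf{x}$ proportional (up to permutation) to $(1,1,-1,-1)$, which is always confluent and is absorbed by the lattice-dimension descent step in the final $\#$P-hardness theorem that invokes this lemma.
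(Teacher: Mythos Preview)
Your block-diagonalization of Form~(III) via $I_{2}\otimes P$ is correct and indeed reproduces the eigenvalue formulas the paper derives in \cref{eqn: formIIIEigenvalues}.  The overall shape of your plan (apply \cref{lemma: formIIILogRatio}, study eigenvalue asymptotics, then feed into \cref{corollary: thickeningSufficient} and \cref{theorem: nonDiagAllForms}) is also right.  But there is a real gap in the asymptotic step.

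You analyze only the regime $n\to\infty$ for $T_{2n}M$, which is the discrete analogue of the paper's $p\to\infty$ limit for $\mathcal{T}_{M}^{*}(p)$.  In that regime the two ``large'' eigenvalues both behave like $a^{2n}$ and the two ``small'' ones like $b^{2n}$, so $\varphi_{\mathbf{x}}\sim (a/b)^{2n(x_{1}+x_{3})}$ (after matching permutations to the ordered eigenvalues).  When a permutation $\sigma$ of $\mathbf{x}$ places two entries summing to zero on the large-eigenvalue positions, the leading exponential cancels and $\varphi_{\mathbf{x}^{\sigma}}\to 1$; your large-$n$ argument yields nothing there.  This happens whenever $\mathbf{x}$ has \emph{any} pair of opposite entries, not only for permutations of $(1,1,-1,-1)$: for instance $\mathbf{x}=(2,1,-2,-1)$ is confluent (so it is not ``absorbed'' elsewhere), has support~$4$, and already at $\sigma=\mathrm{id}$ gives $x_{1}+x_{3}=0$.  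Since $\Psi_{\mathbf{x}}$ is a product over all $24$ permutations, one such failing factor is enough to block the conclusion $\Psi_{\mathbf{x}}(T_{2n^{*}}M)\neq 0$.

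The paper closes this gap by working with the continuous one-parameter family $\mathcal{T}_{M}^{*}(p)$ and invoking a \emph{second} limiting regime, $p=e^{\delta}$ with $\delta\to 0$ (\cref{claim: formIIIeigenRatios}): there one eigenvalue tends to $4$ while the other three vanish like~$\delta$ with explicit first-order coefficients, and the two nondegeneracy conditions produced by \cref{lemma: formIIICase2Impossible} and \cref{lemma: formIIILogRatio} (namely $z_{11}+z_{33}+z_{12}+z_{34}\neq 2(z_{13}+z_{14})$ and $(z_{13}-z_{14})^{2}\neq (z_{11}-z_{12})(z_{33}-z_{34})$) are exactly what makes that regime succeed.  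Your proposal has no analogue of this small-parameter limit; integer thickening cannot reach it.  Relatedly, your diagnosis of the ``main obstacle'' as the competition between $(ab)^{n}$ and $y^{2n}$ is off target: under the hypothesis $M_{ii}>M_{jk}$ already secured by \cref{lemma: DiagLarge}, one always has $ab>y^{2}$ and $ab>z^{2}$, so that competition never occurs.  The genuine obstacle is the $x_{1}+x_{3}=0$ degeneracy, and the correct remedy is the $p\to 1$ Taylor analysis followed by \cref{lemma: thickeningWorks} (not \cref{corollary: thickeningSufficient}, since the witnessing $p^{*}$ need not come from an integer thickening).
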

\begin{proof}
    We can replace $M$ with the matrix $M'$ that
    is obtained from \cref{lemma: formIIILogRatio}.
    We may also assume without loss of generality that
    $M_{11} > M_{33}$, and that $M_{13} \geq M_{14}$
    by permuting the rows and columns of $M$
    (first by a possible switch $\{1,2\} \leftrightarrow \{3,4\}$ and then a possible flip $3 \leftrightarrow 4$).
    We will now let the entries of $M$
    be generated by some $\{g_{t}\}_{t \in [d]}$.
    Using \cref{lemma: MequivalentCM}, we may
    replace $M$ with some $c M$ such that
    $e_{ijt} \geq 0$ for all $i, j \in [4]$, $t \in [d]$.
    Since $M \in \text{Sym}_{4}(\mathbb{R}_{> 0})$,
    we note that $e_{ij0} = 0$ for all $i, j \in [4]$.
    
    We will now define the function $\widehat{\mathcal{T}_{M}}:
    \mathbb{R}^{d + 1} \rightarrow  \text{Sym}_{4}(\mathbb{R})$,
    just as in \cref{equation: wideHatM},
    such that
    $$\widehat{\mathcal{T}_{M}}(p, z_{1}, \dots, z_{d})_{ij}
    = \mathcal{T}_{M}(p^{z_{1}}, \dots, p^{z_{d}}).$$
    Following the same argument as in the paragraph after
    \cref{equation: wideHatM},
    we can pick some rational $z_{1}^{*}, \dots, z_{d}^{*}$ such that
    $\widehat{\mathcal{T}_{M}}(e, z_{1}^{*}, \dots, 
    z_{d}^{*})$ satisfies the properties of $M$,
    that
    $$M_{11}M_{33}M_{12}M_{34} \neq (M_{13}M_{14})^{2},~~~~
    \left(\log\left(\frac{M_{13}}{M_{14}}\right)\right)^{2}
    \neq \left(\log\left(\frac{M_{11}}{M_{12}}\right)\right) \cdot
    \left(\log\left(\frac{M_{33}}{M_{34}}\right)\right),$$
    $M_{11} > M_{33}$, $M_{13} \geq M_{14}$, and $M_{ii} > M_{jk}$
    for all $i \in [4]$, $j \neq k \in [4]$, by permuting the rows and columns of $M$
    (first by a possible switch $\{1,2\} \leftrightarrow \{3,4\}$ and then a possible flip $3 \leftrightarrow 4$).
    We then let $Z^{*} \in \mathbb{Z}_{> 0}$ such that
    $Z^{*}z_{t}^{*} \in \mathbb{Z}_{> 0}$ for all
    $t \in [d]$.
    We can then define $\mathcal{T}_{M}^{*}: \mathbb{R} \rightarrow
    \text{Sym}_{4}(\mathbb{R})$, as in
    \cref{equation: MStar} such that
    $$\mathcal{T}_{M}^{*}(p)_{ij} =
    \widehat{\mathcal{T}_{M}}(p, Z^{*} \cdot z_{i}^{*}, \dots,
    Z^{*} \cdot z_{d}^{*})_{ij} = p^{z_{ij}},$$
    for some $z_{ij} \in \mathbb{Z}_{\geq 0}$ for all $i, j \in [4]$.
    We may assume from this construction that $z_{ii} > z_{jk}$ for all 
    $i \in [4], j \neq k \in [4]$, $z_{11} > z_{33}$, $z_{13} \geq z_{14}$,
    $z_{11} + z_{33} + z_{12} + z_{34} \neq 2(z_{13} + z_{14})$,
    and that $(z_{13} - z_{14})^{2} \neq
    (z_{11} - z_{12})(z_{33} - z_{34})$.

    We see that
    $$\mathcal{T}_{M}^{*}(p) = \begin{pmatrix}
        p^{z_{11}} & p^{z_{12}} & p^{z_{13}} & p^{z_{14}}\\
        p^{z_{12}} & p^{z_{11}} & p^{z_{14}} & p^{z_{13}}\\
        p^{z_{13}} & p^{z_{14}} & p^{z_{33}} & p^{z_{34}}\\
        p^{z_{14}} & p^{z_{13}} & p^{z_{34}} & p^{z_{33}}\\
    \end{pmatrix}$$
    It can be verified that the eigenvalues of
    $\mathcal{T}_{M}^{*}(p)$ are:
    \begin{equation}\label{eqn: formIIIEigenvalues}
    \begin{aligned}
        \lambda_{1}(\mathcal{T}_{M}^{*}(p))
            &= \frac{1}{2}\left(\mu_{1}(p) + \nu_{1}(p)\right),\\
        \lambda_{2}(\mathcal{T}_{M}^{*}(p))
            &= \frac{1}{2}\left(\mu_{1}(p) - \nu_{1}(p)\right),\\
        \lambda_{3}(\mathcal{T}_{M}^{*}(p))
            &= \frac{1}{2}\left(\mu_{2}(p) + \nu_{2}(p)\right),\\
        \lambda_{4}(\mathcal{T}_{M}^{*}(p))
            &= \frac{1}{2}\left(\mu_{2}(p) - \nu_{2}(p)\right),
    \end{aligned}
    \end{equation}
    where
    \begin{align*}
        \mu_{1}(p) &= p^{z_{11}} + p^{z_{33}} + p^{z_{12}} + p^{z_{34}},\\
        \mu_{2}(p) &= p^{z_{11}} + p^{z_{33}} - p^{z_{12}} - p^{z_{34}},\\
        \nu_{1}(p) &= \sqrt{(p^{z_{11}} - p^{z_{33}}
        + p^{z_{12}} - p^{z_{34}})^{2}
        + 4(p^{z_{13}} + p^{z_{14}})^{2}}, \text{ and }\\
        \nu_{2}(p) &= \sqrt{(p^{z_{11}} - p^{z_{33}}
        - p^{z_{12}} + p^{z_{34}})^{2}
        + 4(p^{z_{13}} - p^{z_{14}})^{2}}.
    \end{align*}

    We will now make use of the following claim, which
    we shall prove shortly.
    \begin{claim}\label{claim: formIIIeigenLimits}
        $$\lim\limits_{p \rightarrow \infty}
        \frac{\lambda_{1}(\mathcal{T}_{M}^{*}(p))}{p^{z_{11}}} = 
        \lim\limits_{p \rightarrow \infty}
        \frac{\lambda_{3}(\mathcal{T}_{M}^{*}(p))}{p^{z_{11}}} = 1,$$
        $$\lim\limits_{p \rightarrow \infty}
        \frac{\lambda_{2}(\mathcal{T}_{M}^{*}(p))}{p^{z_{33}}} = 
        \lim\limits_{p \rightarrow \infty}
        \frac{\lambda_{4}(\mathcal{T}_{M}^{*}(p))}{p^{z_{33}}} = 1.$$
    \end{claim}

    We can see that for large enough values of $p$,
    $\lambda_{i}(\mathcal{T}_{M}^{*}(p)) > 0$
    for all $i \in [4]$.
    Define the function $\varphi_{\mathbf{x}}:
    (\mathbb{R}_{\neq 0})^{4} \rightarrow \mathbb{R}$ as
    $\varphi_{\mathbf{x}}(\alpha_{1}, \dots, \alpha_{4})
    = \prod_{i \in [4]}(\alpha_{i})^{x_{i}}$.
    For large enough values of $p$,
    \cref{claim: formIIIeigenLimits} implies that
    $\varphi_{\mathbf{x}}(\lambda_{1}(\mathcal{T}_{M}^{*}(p)),
    \dots, \lambda_{4}(\mathcal{T}_{M}^{*}(p)))$ is
    well-defined. Moreover, we note that
    $$\phi_{\mathbf{x}}(\lambda_{1}(\mathcal{T}_{M}^{*}(p)),
    \dots, \lambda_{4}(\mathcal{T}_{M}^{*}(p))) = 0 \iff
    \varphi_{\mathbf{x}}(\lambda_{1}(\mathcal{T}_{M}^{*}(p)),
    \dots, \lambda_{4}(\mathcal{T}_{M}^{*}(p))) = 1.$$
    We can analyze
    $\phi_{\mathbf{x}}(\lambda_{1}(\mathcal{T}_{M}^{*}(p)),
    \dots, \lambda_{4}(\mathcal{T}_{M}^{*}(p)))$ by studying
    $\varphi_{\mathbf{x}}(\lambda_{1}(\mathcal{T}_{M}^{*}(p)),
    \dots, \lambda_{4}(\mathcal{T}_{M}^{*}(p)))$.
    We note that
    \begin{multline*}
        \varphi_{\mathbf{x}}(\lambda_{1}(\mathcal{T}_{M}^{*}(p)),
        \dots, \lambda_{4}(\mathcal{T}_{M}^{*}(p))) = 
        (p^{z_{11}})^{x_{1} + x_{3}}(p^{z_{33}})^{x_{2} + x_{4}}
        \left(\frac{\lambda_{1}(\mathcal{T}_{M}^{*}(p))}
        {p^{z_{11}}}\right)^{x_{1}} \cdot
        \left(\frac{\lambda_{2}(\mathcal{T}_{M}^{*}(p))}
        {p^{z_{33}}}\right)^{x_{2}}\\ \cdot
        \left(\frac{\lambda_{3}(\mathcal{T}_{M}^{*}(p))}
        {p^{z_{11}}}\right)^{x_{3}} \cdot
        \left(\frac{\lambda_{4}(\mathcal{T}_{M}^{*}(p))}
        {p^{z_{33}}}\right)^{x_{4}}.
    \end{multline*}

    So,
    $$\lim\limits_{p \rightarrow \infty}
    \varphi_{\mathbf{x}}(\lambda_{1}(\mathcal{T}_{M}^{*}(p)),
    \dots, \lambda_{4}(\mathcal{T}_{M}^{*}(p))) = 
    \lim\limits_{p \rightarrow \infty}p^{(z_{11} - z_{33})(x_{1} + x_{3})}.$$
    Since $z_{11} - z_{33} > 0$ by our assumption, we see that
    $\lim\limits_{p \rightarrow \infty}
    \varphi_{\mathbf{x}}(\lambda_{1}(\mathcal{T}_{M}^{*}(p)),
    \dots, \lambda_{4}(\mathcal{T}_{M}^{*}(p))) \neq 1$ if
    $x_{1} + x_{3} \neq 0$.
    In other words, for large enough $p$,
    $\phi_{\mathbf{x}}(\lambda_{1}(\mathcal{T}_{M}^{*}(p)),
    \dots, \lambda_{4}(\mathcal{T}_{M}^{*}(p))) \neq 0$,
    if $x_{1} + x_{3} \neq 0$.
    So, when $x_{1} + x_{3} \neq 0$, we see from the construction
    of $\mathcal{T}_{M}^{*}$ that there exists some
    $\mathbf{p}^{*} = ((p^{*})^{Z^{*} \cdot z_{1}^{*}}, \dots, 
    (p^{*})^{Z^{*} \cdot z_{d}^{*}})$ such that
    $\phi_{\mathbf{x}}(\mathcal{T}_{M}(\mathbf{p}^{*})) \neq 0$.
    
    We will now assume that $x_{1} + x_{3} = 0$.
    This of course also implies that $x_{2} + x_{4} = 0$, since $\mathbf{x} \in \chi_{4}$.
    If $x_{1} = 0$, or $x_{2} = 0$, then the support
    size of $\mathbf{x}$ would be $\leq 2$.
    We can assume that $x_{1} > 0$, and that
    $x_{2} \neq 0$.
    So, we see that
    $\phi_{\mathbf{x}}(\lambda_{1}(\mathcal{T}_{M}^{*}(p)),
    \dots, \lambda_{4}(\mathcal{T}_{M}^{*}(p))) = 0$
    implies that
    $$\left(\frac{\lambda_{1}(\mathcal{T}_{M}^{*}(p))}
    {\lambda_{3}(\mathcal{T}_{M}^{*}(p))}\right)^{x_{1}}
    = \left(\frac{\lambda_{2}(\mathcal{T}_{M}^{*}(p))}
    {\lambda_{4}(\mathcal{T}_{M}^{*}(p))}\right)^{x_{2}},$$
    for all $p \in \mathbb{R}$ where
    $\lambda_{i}(\mathcal{T}_{M}^{*}(p)) \neq 0$ for all 
    $i \in [4]$.

    We will now make use of the following claim,
    which we shall prove later.
    \begin{claim}\label{claim: formIIIeigenRatios}
        For small $\delta > 0$,
        \begin{align*}
            \lambda_{1}(\mathcal{T}_{M}^{*}(e^{\delta}))
            &= 4 + \frac{\delta}{2}\left(z_{11} + z_{33} + z_{12} + z_{34}
            + 2z_{13} + 2z_{14}\right) + O(\delta^{2}),\\
            \lambda_{2}(\mathcal{T}_{M}^{*}(e^{\delta}))
            &= \frac{\delta}{2}\left(z_{11} + z_{33} + z_{12} + z_{34}
            - 2z_{13} - 2z_{14}\right) + O(\delta^{2}),\\
            \lambda_{3}(\mathcal{T}_{M}^{*}(e^{\delta}))
            &= \frac{\delta}{2}\left((z_{11} + z_{33} - z_{12} - z_{34})
            + \sqrt{(z_{11} - z_{33} - z_{12} + z_{34})^{2} +
            4(z_{13} - z_{14})^{2}}\right) + O(\delta^{2}),\\
            \lambda_{4}(\mathcal{T}_{M}^{*}(e^{\delta}))
            &= \frac{\delta}{2}\left((z_{11} + z_{33} - z_{12} - z_{34})
            - \sqrt{(z_{11} - z_{33} - z_{12} + z_{34})^{2} +
            4(z_{13} - z_{14})^{2}}\right) + O(\delta^{2}).\\
        \end{align*}
    \end{claim}

    We can see that,
    $$\lim\limits_{\delta \rightarrow 0}
    \lambda_{1}(\mathcal{T}_{M}^{*}(e^{\delta})) = 4.$$
    Also $\lim\limits_{\delta \rightarrow 0}
    \lambda_{i}(\mathcal{T}_{M}^{*}(e^{\delta})) = 0$ for $i = 2, 3, 4$.
    From our choice of $z_{ij}$, we know that
    $z_{11}, z_{33} > z_{12}, z_{34}$. So, we know that
    $z_{11} + z_{33} - z_{12} - z_{34} > 0$.
    This implies that
    $$\lim\limits_{\delta \rightarrow 0}
    \frac{\lambda_{3}(\mathcal{T}_{M}^{*}(e^{\delta}))}{\delta}
    = (z_{11} + z_{33} - z_{12} - z_{34}) + 
    \sqrt{(z_{11} - z_{33} - z_{12} + z_{34})^{2} + 4(z_{13} - z_{14})^{2}}
    > 0.$$
    So,
    \begin{equation}\label{limit-la1overla2}
    0 \neq \lim\limits_{\delta \rightarrow 0} \left(\delta^{x_{1}} \cdot 
    \left(\frac{\lambda_{1}(\mathcal{T}_{M}^{*}(e^{\delta}))}
    {\lambda_{3}(\mathcal{T}_{M}^{*}(e^{\delta}))}\right)^{x_{1}}\right)
    = \lim\limits_{\delta \rightarrow 0} \left(\delta^{x_{1}} \cdot 
    \left(\frac{\lambda_{2}(\mathcal{T}_{M}^{*}(e^{\delta}))}
    {\lambda_{4}(\mathcal{T}_{M}^{*}(e^{\delta}))}\right)^{x_{2}}\right).
    \end{equation}

    So, if $x_{2} > 0$, then in the expression in \cref{claim: formIIIeigenRatios},
    if the coefficient of $\delta$ in
    $\lambda_{4}(\mathcal{T}_{M}^{*}(e^{\delta}))$ is non-zero, then
    the ratio $\lambda_{2}/\lambda_{4}$ in \cref{limit-la1overla2} will stay bounded
    as $\delta \rightarrow 0$, leading to a contradiction to 
     \cref{limit-la1overla2} where the limit is non-zero.
    Hence the coefficient of $\delta$ in
    $\lambda_{4}(\mathcal{T}_{M}^{*}(e^{\delta}))$ must be $0$.
    Similarly, if $x_{2} < 0$, the coefficient of $\delta$ in
    $\lambda_{2}(\mathcal{T}_{M}^{*}(e^{\delta}))$ must be $0$.
    Let us first assume that $x_{2} > 0$.
    In that case, we find that
    $$z_{11} + z_{33} - z_{12} - z_{34}
    = \sqrt{(z_{11} - z_{33} - z_{12} + z_{34})^{2} +
    4(z_{13} - z_{14})^{2}}.$$
    On squaring both sides and simplifying, we find that
    \begin{multline*}
        (z_{11} - z_{12})^{2} + (z_{33} - z_{34})^{2} + 
        2(z_{11} - z_{12})(z_{33} - z_{34}) = 
        (z_{11} - z_{12})^{2} + (z_{33} - z_{34})^{2} - 
        2(z_{11} - z_{12})(z_{33} - z_{34})\\ + 
        4(z_{13} - z_{14})^{2}.
    \end{multline*}
    This implies that $(z_{11} - z_{12})(z_{33} - z_{34})
    = (z_{13} - z_{14})^{2}$.
    But by our choice of $z_{ij}$, we ensured
    that $(z_{11} - z_{12})(z_{33} - z_{34})
    \neq (z_{13} - z_{14})^{2}$.
    So, we see that it is not possible that $\phi_{\mathbf{x}}
    (\lambda_{1}(\mathcal{T}_{M}^{*}(e^{\delta})), \dots,
    \lambda_{4}(\mathcal{T}_{M}^{*}(e^{\delta}))) = 0$
    for all $\delta \in \mathbb{R}$, if $x_{2} > 0$.

    On the other hand, if $x_{2} < 0$, then from
    \cref{claim: formIIIeigenRatios},
    we see that the coefficient of $\delta$ in
    $\lambda_{2}(\mathcal{T}_{M}^{*}(e^{\delta}))$ must be $0$.
    This implies that $z_{11} + z_{33} + z_{12} + z_{34}
    = 2z_{13} + 2z_{14}$.
    But once again, from our choice of $z_{ij}$, we ensured
    that $z_{11} + z_{33} + z_{12} + z_{34}
    \neq 2z_{13} + 2z_{14}$.
    So, we see that it is not possible that $\phi_{\mathbf{x}}
    (\lambda_{1}(\mathcal{T}_{M}^{*}(e^{\delta})),  \dots,
    \lambda_{4}(\mathcal{T}_{M}^{*}(e^{\delta}))) = 0$
    for all $\delta \in \mathbb{R}$, if $x_{2} < 0$ either.

    So, we see that for all $\mathbf{x} \in \chi_{4}$ of support
    size greater than $2$, $\phi_{\mathbf{x}}
    (\lambda_{1}(\mathcal{T}_{M}^{*}(e^{\delta})),  \dots,
    \lambda_{4}(\mathcal{T}_{M}^{*}(e^{\delta}))) \neq 0$
    for some $\delta$ that is small enough.
    Moreover, this is true for all
    $\mathbf{y} \in \chi_{q}$ such that
    $y_{\sigma(i)} = x_{i}$, for some $\sigma \in S_{4}$.
    This means that there exists some
    $p^{*} \in \mathbb{R}$ such that
    $$\Psi_{\mathbf{x}}(\mathcal{T}_{M}^{*}(p^{*})) =
    \Phi_{\mathbf{x}}
    (\lambda_{1}(\mathcal{T}_{M}^{*}(p^{*})),  \dots,
    \lambda_{4}(\mathcal{T}_{M}^{*}(p^{*}))) \neq 0.$$
    So, \cref{lemma: thickeningWorks} allows us
    to find the required $N
    \in \mathfrak{R}(M, \mathcal{F} \cup \{\Psi_{\mathbf{x}},
    \rho_{\tt{tensor}}\}) \cap
    \text{Sym}_{4}^{\tt{pd}}(\mathbb{R}_{> 0})$.
\end{proof}

We will now prove \cref{{claim: formIIIeigenLimits}}, and
\cref{claim: formIIIeigenRatios}

\begin{claimproof}{\cref{{claim: formIIIeigenLimits}}}
    We note that
    $$\lim\limits_{p \rightarrow \infty}\frac{\mu_{1}(p)}{p^{z_{11}}}
    = \lim\limits_{p \rightarrow \infty}
    \frac{p^{z_{11}} + p^{z_{33}} + p^{z_{12}} + p^{z_{34}}}
    {p^{z_{11}}} = 1.$$
    Similarly,
    $$
    \lim\limits_{p \rightarrow \infty}\frac{\mu_{2}(p)}{p^{z_{11}}} 
    = \lim\limits_{p \rightarrow \infty}
    \frac{p^{z_{11}} + p^{z_{33}} - p^{z_{12}} - p^{z_{34}}}
    {p^{z_{11}}}= 1.$$

    We can also see that
    $$\lim\limits_{p \rightarrow \infty}\frac{\nu_{1}(p)}{p^{z_{11}}} =
    \lim\limits_{p \rightarrow \infty}\frac{
    \sqrt{(p^{z_{11}} - p^{z_{33}} + p^{z_{12}} - p^{z_{34}})^{2}
    + 4(p^{z_{13}} + p^{z_{14}})^{2}}}{p^{z_{11}}} = 1, \text{ and }$$
    $$\lim\limits_{p \rightarrow \infty}\frac{\nu_{2}(p)}{p^{z_{11}}} = 
    \sqrt{(p^{z_{11}} - p^{z_{33}} - p^{z_{12}} + p^{z_{34}})^{2}
    + 4(p^{z_{13}} - p^{z_{14}})^{2}} = 1.$$
    This immediately proves that
    $$\lim\limits_{p \rightarrow \infty}
    \frac{\lambda_{1}(\mathcal{T}_{M}^{*}(p))}{p^{z_{11}}}
    = \lim\limits_{p \rightarrow \infty}
    \frac{\lambda_{3}(\mathcal{T}_{M}^{*}(p))}{p^{z_{11}}} = 1.$$

    We now note that
    $$\lambda_{2}(\mathcal{T}_{M}^{*}(p)) = \frac{1}{2}
    \left(\frac{(\mu_{1}(p))^{2} - (\nu_{1}(p))^{2}}
    {\mu_{1}(p) + \nu_{1}(p)} \right), ~~\text{ and }~~
    \lambda_{4}(\mathcal{T}_{M}^{*}(p)) = \frac{1}{2}
    \left(\frac{(\mu_{2}(p))^{2} - (\nu_{2}(p))^{2}}
    {\mu_{2}(p) + \nu_{2}(p)} \right).$$
    Now, we can see that
    $$(\mu_{1}(p))^{2} - (\nu_{1}(p))^{2} =
    4(p^{z_{11}} + p^{z_{12}})(p^{z_{33}} + p^{z_{34}})
    - 4(p^{z_{13}} + p^{z_{14}})^{2}, \text{ and }$$
    $$(\mu_{2}(p))^{2} - (\nu_{2}(p))^{2} =
    4(p^{z_{11}} - p^{z_{12}})(p^{z_{33}} - p^{z_{34}})
    - 4(p^{z_{13}} - p^{z_{14}})^{2}.$$
    Since we already know the limiting behaviors of
    $\mu_{1}(p) + \nu(p)$, and $\mu_{2}(p) + \nu_{2}(p)$,
    we can see that
    $$\lim\limits_{p \rightarrow \infty}
    \frac{\lambda_{2}(\mathcal{T}_{M}^{*}(p))}{p^{z_{33}}}
    = \lim\limits_{p \rightarrow \infty}
    \frac{\lambda_{4}(\mathcal{T}_{M}^{*}(p))}{p^{z_{33}}} = 1.$$
\end{claimproof}

\begin{claimproof}{\cref{claim: formIIIeigenRatios}}
    Following the Taylor series expansions of
    $\mu_{1}(e^{\delta}), \mu_{2}(e^{\delta}), (\nu_{1}(e^{\delta}))^{2}$,
    and $(\nu_{2}(e^{\delta}))^{2}$, we see that
    \begin{align*}
        \mu_{1}(e^{\delta}) &= 4 + 
        \delta (z_{11} + z_{33} + z_{12} + z_{34}) + O(\delta^{2}),\\
        \mu_{2}(e^{\delta}) &=
        \delta (z_{11} + z_{33} - z_{12} - z_{34}) + O(\delta^{2}),\\
        (\nu_{1}(e^{\delta}))^{2} &= 16 + 16\delta (z_{13} + z_{14})
        + O(\delta^{2}), \text{ and }\\
        (\nu_{2}(e^{\delta}))^{2} &= \delta^{2}
        ((z_{11} - z_{33} - z_{12} + z_{34})^{2}
        + 4(z_{13} - z_{14})^{2}) + O(\delta^{3}).\\
    \end{align*}
    
    Then, we can use the square root expansion of $\sqrt{1 + x}$ to see that
    \begin{align*}
        \nu_{1}(e^{\delta}) &= 
        4 + 2\delta (z_{13} + z_{14}) + O(\delta^{2}), \text{ and }\\
        \nu_{2}(e^{\delta}) &= 
         \delta \sqrt{(z_{11} - z_{33} - z_{12} + z_{34})^{2}
        + 4(z_{13} - z_{14})^{2}} + O(\delta^{2}).
    \end{align*}
    Putting together these expressions 
    finishes the proof of this claim.
\end{claimproof}

We can now prove that $\PlEVAL(M)$, when $M$ is of Form (III) that is not 
isomorphic to a tensor product,
must also be $\#$P-hard.

\begin{lemma}\label{lemma: formIIIHardness}
    Let $M \in \text{Sym}_{4}^{\tt{pd}}(\mathbb{R}_{> 0})$
    be of Form (III) such that $\rho_{\tt{tensor}}(M) \neq 0$.
    Then, $\PlEVAL(M)$ is $\#$P-hard.
\end{lemma}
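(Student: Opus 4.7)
The plan is to follow exactly the pattern already used in the proofs of \cref{lemma: formIHardness} and \cref{lemma: formIVHardness}, only with \cref{lemma: formIIIReduction} supplying the key reduction step. Let $(\lambda_{1}, \dots, \lambda_{4})$ be the eigenvalues of $M$ and let $\mathcal{B}$ be a lattice basis of $\mathcal{L}(\lambda_{1}, \dots, \lambda_{4})$. As in the earlier proofs, any element of $\mathcal{B}$ of the form $c \cdot \delta_{ij}$ with $c \in \mathbb{Z}_{\neq 0}$ forces $\lambda_{i}^{c}\lambda_{j}^{-c}=1$, and since the $\lambda_{k}$ are positive this already gives $\delta_{ij} \in \mathcal{L}(\lambda_{1}, \dots, \lambda_{4})$; being part of a basis then forces $c = \pm 1$. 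So without loss of generality we may assume every element of $\mathcal{B}$ that lies in $\mathbb{Z} \cdot \mathcal{D}$ actually lies in $\mathcal{D}$.

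Now there are two cases. If $\mathcal{B} \subseteq \mathcal{D}$, then \cref{theorem: latticeHardness} immediately yields that $\PlEVAL(M)$ is $\#$P-hard. Otherwise, pick some $\mathbf{x} \in \mathcal{B} \setminus \mathcal{D}$. Being a non-zero element of $\chi_{4}$ that is not proportional to any $\delta_{ij}$, it must have support size strictly greater than $2$. Apply \cref{lemma: formIIIReduction} with $\mathcal{F} = \mathcal{F}_{M}$: either we conclude $\PlEVAL(M)$ is $\#$P-hard directly, or we obtain
\[
N_{1} \in \mathfrak{R}(M,\ \mathcal{F}_{M} \cup \{\Psi_{\mathbf{x}}, \rho_{\tt{tensor}}\}) \cap \text{Sym}_{4}^{\tt{pd}}(\mathbb{R}_{> 0})
\]
that is still of Form (III). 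By \cref{lemma: PsiExists} together with the definition of $\mathcal{F}_{M}$ in \cref{equation: FM}, the eigenvalue lattice of $N_{1}$ satisfies $\overline{\mathcal{L}_{N_{1}}} \subseteq \overline{\mathcal{L}(\lambda_{1},\dots,\lambda_{4})}$ and $\mathbf{x} \notin \overline{\mathcal{L}_{N_{1}}}$ while $\mathbf{x} \in \overline{\mathcal{L}(\lambda_{1},\dots,\lambda_{4})}$, so \cref{lemma: dimReduction} gives a strict drop in lattice dimension.

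Because $N_{1}$ is again of Form (III) with $\rho_{\tt{tensor}}(N_{1}) \neq 0$, we may iterate the same construction with $N_{1}$ in place of $M$. After at most $\dim \mathcal{L}(\lambda_{1}, \dots, \lambda_{4}) \leq 4$ iterations (each strictly reducing the lattice dimension), either one of the intermediate invocations of \cref{lemma: formIIIReduction} already yields $\#$P-hardness, or we arrive at some $N \in \text{Sym}_{4}^{\tt{pd}}(\mathbb{R}_{> 0})$ with $\PlEVAL(N) \leq \PlEVAL(M)$ whose eigenvalue lattice admits a basis $\mathcal{B}' \subseteq \mathcal{D}$. In that terminal case, \cref{theorem: latticeHardness} gives $\#$P-hardness of $\PlEVAL(N)$, hence of $\PlEVAL(M)$. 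The argument is essentially mechanical once \cref{lemma: formIIIReduction} is in hand; the only point to verify is that Form (III) together with $\rho_{\tt{tensor}} \neq 0$ is preserved at each iteration, which is exactly what is guaranteed by the conclusion of \cref{lemma: formIIIReduction}. There is no real obstacle here beyond this bookkeeping, because all the heavy lifting has been absorbed into \cref{lemma: formIIIReduction}.
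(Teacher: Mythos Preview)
Your proposal is correct and takes essentially the same approach as the paper, which simply notes that the argument is identical to that of \cref{lemma: formIVHardness} with \cref{lemma: formIIIReduction} substituted for \cref{lemma: formIVReduction}. You have in fact written out more detail than the paper does, correctly observing that \cref{lemma: formIIIReduction} already delivers $\Psi_{\mathbf{x}}(N_1)\neq 0$ (so the extra loop over $S_4$ used in \cref{lemma: formIHardness} is unnecessary here) and that Form~(III) together with $\rho_{\tt{tensor}}\neq 0$ is preserved at each step.
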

\begin{proof}
    The proof is very similar to the proof of
    \cref{lemma: formIVHardness}.
    The only difference is that instead of
    \cref{lemma: formIVReduction}, we will have
    to use \cref{lemma: formIIIReduction} to
    find the matrices $N_{i}$.
    The rest of the proof is identical.
\end{proof}
\subsection{Form (VI)}\label{sec: form_VI}

Finally, there is only one more form of matrix for us to deal with.
We will now prove that $\PlEVAL(M)$ is $\#$P-hard, when
$M$ is of Form (VI)  that is not 
isomorphic to a tensor product.

\begin{figure}[ht]
\[
\begin{pmatrix}
            a & x & y & z\\
            x & a & z & y\\
            y & z & a & x\\
            z & y & x & a
        \end{pmatrix}
        \]
        \caption{\label{Form-VI-in-abc} Form (VI)}
\end{figure}

\begin{lemma}\label{lemma: formVIReduction}
    Let $M \in \text{Sym}_{4}^{\tt{pd}}(\mathbb{R}_{> 0})$
    be of Form (VI) such that $\rho_{\tt{tensor}}(M) \neq 0$.
    Let $\mathcal{F}$ be a countable set of 
    $\text{Sym}_{4}(\mathbb{R})$-polynomials such that
    $F(M) \neq 0$ for all $F \in \mathcal{F}$.
    Let $\mathbf{0} \neq \mathbf{x} \in \chi_{4}$ of 
    support size greater than $2$.
    Then, either $\PlEVAL(M)$ is $\#$P-hard, or
    there exists some $N \in \mathfrak{R}
    (M, \mathcal{F} \cup \{\Psi_{\mathbf{x}},
    \rho_{\tt{tensor}}\}) \cap
    \text{Sym}_{4}^{\tt{pd}}(\mathbb{R}_{> 0})$
    of Form (VI).
\end{lemma}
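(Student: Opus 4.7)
The plan is to follow the architecture of \cref{lemma: formIIIReduction} and \cref{lemma: formIVReduction}: combine thickening, stretching, and an asymptotic eigenvalue analysis to produce a reduct $N$ of Form (VI) at which $\Psi_{\mathbf{x}}$ does not vanish. The key structural feature of Form (VI) is that, after identifying $[4]$ with the Klein four-group $\mathbb{Z}_2 \times \mathbb{Z}_2$, the entry $M_{g,h}$ depends only on $g+h$, so $M$ is diagonalized in the character basis with eigenvalues
\[
\mu_c \;=\; a + \epsilon_c^{(1)} x + \epsilon_c^{(2)} y + \epsilon_c^{(3)} z, \qquad c \in \{0,1,2,3\},
\]
where $\epsilon_c^{(i)} \in \{\pm 1\}$ are the character values on the three non-identity elements, satisfying $\epsilon_c^{(1)}\epsilon_c^{(2)}\epsilon_c^{(3)} = 1$. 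Crucially, thickening preserves Form (VI): $T_n M$ again has Form (VI) with parameters $(a^n, x^n, y^n, z^n)$ and eigenvalues $\mu_c^{(n)} = a^n + \epsilon_c^{(1)} x^n + \epsilon_c^{(2)} y^n + \epsilon_c^{(3)} z^n$. The non-tensor hypothesis $\rho_{\tt tensor}(M) \neq 0$ forbids, up to row/column permutation, each of the three multiplicative identities $ax = yz$, $ay = xz$, $az = xy$; these are precisely the conditions for the three non-trivial ``tensor'' vectors in $\mathcal{D}$ to lie in $\mathcal{L}(\mu_0, \mu_1, \mu_2, \mu_3)$.

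First I would invoke \cref{lemma: DiagLarge}, together with a simultaneous row and column permutation, to reduce to the case $a > x \ge y \ge z > 0$, while adding to $\mathcal{F}$ the detector polynomials $\rho_{\tt tensor}$ and the $\text{Sym}_4(\mathbb{R})$-polynomials that characterize Forms (I)--(V). By \cref{theorem: nonDiagAllForms}, \cref{theorem: diagDistHardness}, \cref{lemma: formIHardness}, \cref{lemma: formIIIHardness}, and \cref{lemma: formIVHardness}, any reduct produced later that fails to be of Form (VI) already yields $\#$P-hardness for $\PlEVAL(M)$; hence we may assume each subsequent reduct is itself of Form (VI).

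The heart of the proof is an asymptotic analysis of $\phi_{\mathbf{x}}(\mu_0^{(2n)}, \ldots, \mu_3^{(2n)})$ as $n\to\infty$. Writing $\mu_c^{(2n)} = a^{2n}(1+\delta_c(n))$ where $\delta_c(n) = \epsilon_c^{(1)}(x/a)^{2n} + \epsilon_c^{(2)}(y/a)^{2n} + \epsilon_c^{(3)}(z/a)^{2n} \to 0$, and using $\sum_c x_c = 0$ to cancel the $a^{2n}$ factor, one obtains
\[
\log \prod_c (\mu_c^{(2n)})^{x_c} \;=\; r_x (x/a)^{2n} + r_y (y/a)^{2n} + r_z (z/a)^{2n} + O\!\left((x/a)^{4n}\right),
\]
where $r_t = \sum_c x_c \epsilon_c^{(t)}$ for $t \in \{1,2,3\}$. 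If $x, y, z$ are strictly distinct, these three decay rates differ, and the expression vanishes for infinitely many $n$ only if $r_x = r_y = r_z = 0$; combined with $\sum_c x_c = 0$, this $4 \times 4$ linear system has only the trivial solution $\mathbf{x} = \mathbf{0}$. When exactly two of $x, y, z$ coincide (say $x = y \neq z$), the corresponding constraints reduce to $r_x + r_y = 0$ and $r_z = 0$, which together with $\sum_c x_c = 0$ force $\mathbf{x}$ to have support at most $2$, contradicting the support-size hypothesis. The remaining case $x = y = z$ places $M$ in the Potts model, which is simultaneously of Form (IV), and \cref{lemma: formIVHardness} then directly gives $\#$P-hardness.

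This shows that for every permutation $\mathbf{y}$ of $\mathbf{x}$ there exists some $n = n(\mathbf{y})$ with $\phi_{\mathbf{y}}(\mu_0^{(2n)},\ldots,\mu_3^{(2n)}) \neq 0$, so $\Psi_{\mathbf{x}}(T_{2n^{*}} M) \neq 0$ for a large enough common $n^{*}$. An application of \cref{corollary: thickeningSufficient} and \cref{corollary: stretchingWorksPositive}, together with the Form-detector polynomials, then produces the desired $N \in \mathfrak{R}(M, \mathcal{F} \cup \{\Psi_{\mathbf{x}}, \rho_{\tt tensor}\}) \cap \text{Sym}_4^{\tt pd}(\mathbb{R}_{>0})$ of Form (VI). The main obstacle will be to rigorously handle the coincidence cases of the exponential decay rates, together with the bookkeeping needed to rule out all higher-order cancellations; a cleaner alternative, mirroring \cref{lemma: formIIILogRatio}, is to replace the asymptotic argument by a polynomial identity argument via $\widehat{\mathcal{T}_M}$ with suitably chosen rational exponents that make all exponents in the resulting univariate polynomial distinct integers, so that non-vanishing can be verified term-by-term against the constraints $ax \neq yz$, $ay \neq xz$, $az \neq xy$.
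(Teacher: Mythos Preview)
Your thickening asymptotic ($T_{2n}M$, $n\to\infty$) is a valid alternative to the paper's argument. Both exploit the Klein-group diagonalization $\mu_c = a + \epsilon_c^{(1)}x + \epsilon_c^{(2)}y + \epsilon_c^{(3)}z$, but the paper works in a different limit: it builds the one-parameter family $\mathcal{T}_M^*(p)$ and first sends $p=e^\delta\to 1$, so that $\lambda_1\to 4$ while $\lambda_2,\lambda_3,\lambda_4\to 0$ linearly in $\delta$, which immediately forces $x_1=0$; the residual case $x_1=0$ is then dispatched by comparing the second-highest-degree terms in the polynomial identity $\phi_{\mathbf{x}}(\lambda_1(\mathcal{T}_M^*(p)),\ldots)=0$ (after first reducing any coincidence $z_{12}=z_{13}$ to the Potts model). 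Your single large-$n$ limit is arguably more uniform, and the higher-order cancellation worry you flag dissolves cleanly once you use the group structure: since $\epsilon_c^{(1)}\epsilon_c^{(2)}\epsilon_c^{(3)}=1$, the parity class of each multi-index $(k_1,k_2,k_3)$ in the Taylor expansion of $\sum_c x_c\log(1+\delta_c(n))$ forces its coefficient to be one of $0,r_x,r_y,r_z$, so the entire expression is \emph{linear} in $(r_x,r_y,r_z)$ and the dominant surviving monomial witnesses non-vanishing for all large $n$ whenever $(r_x,r_y,r_z)\neq\mathbf{0}$. You should make this linearity explicit rather than leave it as ``bookkeeping''.

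Two concrete fixes. The Potts case $x=y=z$ is not ``of Form (IV)'' under the paper's standing convention (which requires $M_{44}\neq M_{11}$), so \cref{lemma: formIVHardness} does not apply; cite \cref{corollary: tutteHardness} directly, as the paper does. And you cannot safely add the Form (I)--(V) detector polynomials to $\mathcal{F}$ at the start: a Form (VI) matrix with, say, $y=z$ is simultaneously of Form (I) in the broad sense, so $\zeta_1(M)=0$ and the hypothesis $F(M)\neq 0$ for all $F\in\mathcal{F}$ would fail. The paper avoids this by checking \emph{after} producing $N$ via \cref{lemma: thickeningWorks}: if $N$ is not of Form (VI), then \cref{theorem: diagDistHardness}, \cref{theorem: nonDiagAllForms}, and the already-established \cref{lemma: formIHardness}, \cref{lemma: formIIIHardness}, \cref{lemma: formIVHardness} give $\#$P-hardness outright; otherwise $N$ is the desired reduct.
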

\begin{proof}
    We can use
    \cref{lemma: DiagLarge} to obtain some
    $M' \in \mathfrak{R}
    (M, \mathcal{F} \cup \{\rho_{\tt{tensor}}\}) \cap
    \text{Sym}_{4}^{\tt{pd}}(\mathbb{R}_{> 0})$,
    such that
    $M'_{ii} > M'_{jk}$ for all $i \in [4]$,
    $j \neq k \in [4]$.
    If $M'$ is diagonal distinct, then \cref{theorem: diagDistHardness}
    implies that $\PlEVAL(M') \leq \PlEVAL(M)$ is
    $\#$P-hard.
    Let us assume that $M'$ is not diagonal distinct.
    If $M'$ is not of Form (VI), then \cref{theorem: nonDiagAllForms}
    implies that either $\PlEVAL(M') \leq \PlEVAL(M)$ is
    $\#$P-hard, or
    there exists some $N \in \mathfrak{R}(M', \mathcal{F}
    \cup \{\rho_{\tt{tensor}} \}) \cap
    \text{Sym}_{4}^{\tt{pd}}(\mathbb{R}_{> 0})$ such that
    $N$ is of Form (I), (III), or (IV), and
    the diagonal entries of $N$ are not all identical.
    But then, we note that since $\rho_{\tt{tensor}}(N) \neq 0$,
    \cref{lemma: formIHardness},
    \cref{lemma: formIVHardness}, and \cref{lemma: formIIIHardness}
    imply that $\PlEVAL(N) \leq \PlEVAL(M)$ is $\#$P-hard.
    So, we may now assume that $M'$ is of Form (VI).
    For convenience let us rename the matrix
    $M'$ as $M$.
    If $M_{11}M_{14} = M_{12}M_{13}$,
     let $\kappa = \nicefrac{M_{13}}{M_{11}} = \nicefrac{M_{14}}{M_{12}}$, then
    $$M = \begin{pmatrix}
            M_{11} & M_{12}\\
            M_{12} & M_{11}
        \end{pmatrix}
        \otimes
        \begin{pmatrix}
            1 & \kappa\\
            \kappa & 1
        \end{pmatrix}
        = \begin{pmatrix}
            a & x\\
            x & a
        \end{pmatrix}
        \otimes
        \begin{pmatrix}
            1 & \kappa\\
            \kappa & 1
        \end{pmatrix}. $$
        Similarly, if  $M_{11}M_{12} = M_{13}M_{14}$, or
    $M_{11}M_{13} = M_{12}M_{14}$, then $M$ is isomorphic to a tensor product
    under the flips
        $2 \leftrightarrow 4$, or $3 \leftrightarrow 4$, respectively.
    So, we have $M_{11}M_{12} \not = M_{13}M_{14}$,  $M_{11}M_{13} \not = M_{12}M_{14}$, and
    $M_{11}M_{14}\not  = M_{12}M_{13}$, since  $\rho_{\tt{tensor}}(M) \neq 0$.

    We will now let the entries of $M$
    be generated by some $\{g_{t}\}_{t \in [d]}$.
    Using \cref{lemma: MequivalentCM}, we may
    replace $M$ with some $c M$ such that
    $e_{ijt} \geq 0$ for all $i, j \in [4]$, $t \in [d]$.
    Since $M \in \text{Sym}_{4}(\mathbb{R}_{> 0})$,
    we note that $e_{ij0} = 0$ for all $i, j \in [4]$.
    
    We will now define the function $\widehat{\mathcal{T}_{M}}:
    \mathbb{R}^{d + 1} \rightarrow \text{Sym}_{q}^{\mathbb{R}}$,
    just as in \cref{equation: wideHatM},
    such that
    $$\widehat{\mathcal{T}_{M}}(p, z_{1}, \dots, z_{d})_{ij}
    = \mathcal{T}_{M}(p^{z_{1}}, \dots, p^{z_{d}}).$$
    We can pick some rational $z_{1}^{*}, \dots, z_{d}^{*}$ such that
    $\widehat{\mathcal{T}_{M}}(p, z_{1}^{*}, \dots, 
    z_{d}^{*})$ satisfies the properties of $M$,
    that
    $$M_{11}M_{12} \neq M_{13}M_{14},~~
    M_{11}M_{13} \neq M_{12}M_{14},~~
    M_{11}M_{14} = M_{12}M_{13}, ~~\text{ and }~~
    M_{ii} > M_{jk}$$
    for all $i \in [4]$, $j \neq k \in [4]$.
    We then let $Z^{*} \in \mathbb{Z}_{> 0}$ such that
    $Z^{*}z_{t}^{*} \in \mathbb{Z}_{> 0}$ for all
    $t \in [d]$.
    We can then define $\mathcal{T}_{M}^{*}: \mathbb{R} \rightarrow
    \text{Sym}_{4}(\mathbb{R})$, as in
    \cref{equation: MStar} such that
    $$\mathcal{T}_{M}^{*}(p)_{ij} =
    \widehat{\mathcal{T}_{M}}(p, Z^{*} \cdot z_{i}^{*}, \dots,
    Z^{*} \cdot z_{d}^{*})_{ij} = p^{z_{ij}},$$
    for some $z_{ij} \in \mathbb{Z}_{\geq 0}$ for all $i, j \in [4]$.
    We may assume from this construction that $z_{ii} > z_{jk}$ for all 
    $i \in [4]$, and $j \neq k \in [4]$,
    $z_{11} + z_{12} \neq z_{13} + z_{14}$,
    $z_{11} + z_{13} \neq z_{12} + z_{14}$, and
    $z_{11} + z_{14} \neq z_{12} + z_{13}$.
    
    We see that
    $$\mathcal{T}_{M}^{*}(p) = \begin{pmatrix}
        p^{z_{11}} & p^{z_{12}} & p^{z_{13}} & p^{z_{14}}\\
        p^{z_{12}} & p^{z_{11}} & p^{z_{14}} & p^{z_{13}}\\
        p^{z_{13}} & p^{z_{14}} & p^{z_{11}} & p^{z_{12}}\\
        p^{z_{14}} & p^{z_{13}} & p^{z_{12}} & p^{z_{11}}\\
    \end{pmatrix}$$
    It can be verified that the eigenvalues of
    $\mathcal{T}_{M}^{*}(p)$ are:
    \begin{equation}\label{eqn: formVIEigenvalues}
    \begin{aligned}
        \lambda_{1}(\mathcal{T}_{M}^{*}(p))
            &= p^{z_{11}} + p^{z_{12}} + p^{z_{13}} + p^{z_{14}},\\
        \lambda_{2}(\mathcal{T}_{M}^{*}(p))
            &= p^{z_{11}} + p^{z_{12}} - p^{z_{13}} - p^{z_{14}},\\
        \lambda_{3}(\mathcal{T}_{M}^{*}(p))
            &= p^{z_{11}} - p^{z_{12}} + p^{z_{13}} - p^{z_{14}},\\
        \lambda_{4}(\mathcal{T}_{M}^{*}(p))
            &= p^{z_{11}} - p^{z_{12}} - p^{z_{13}} + p^{z_{14}}.
    \end{aligned}
    \end{equation}
    
    We will now define the function $\varphi_{\mathbf{x}}:
    (\mathbb{R}_{\neq 0})^{4} \rightarrow \mathbb{R}$ such that
    $\varphi_{\mathbf{x}}(\alpha_{1}, \dots, \alpha_{4})
    = \prod_{i \in [4]}(\alpha_{i})^{x_{i}}$.
    We see that
    $$\varphi_{\mathbf{x}}(\lambda_{1}(\mathcal{T}_{M}^{*}(p),
    \dots, \lambda_{4}(\mathcal{T}_{M}^{*}(p)) = 1 \iff
    \phi_{\mathbf{x}}(\lambda_{1}(\mathcal{T}_{M}^{*}(p),
    \dots, \lambda_{4}(\mathcal{T}_{M}^{*}(p)) = 0.$$
    We will understand the behavior of $\phi_{\mathbf{x}}$
    by studying $\varphi_{\mathbf{x}}$.
    
    From the Taylor series expansions of
    $\lambda_{i}(\mathcal{T}_{M}^{*}(p))$ for each $i \in [4]$,
    we can see that for small $\delta > 0$,
    \begin{align*}
        \lambda_{1}(\mathcal{T}_{M}^{*}(e^{\delta}))
        &= 4 + \delta
        \left(z_{11} + z_{12} + z_{13} + z_{14}\right)
        + O(\delta^{2}),\\
        \lambda_{2}(\mathcal{T}_{M}^{*}(e^{\delta}))
        &= \delta\left(z_{11} + z_{12}
        - z_{13} - z_{14}\right) + O(\delta^{2}),\\
        \lambda_{3}(\mathcal{T}_{M}^{*}(e^{\delta}))
        &= \delta\left(z_{11} - z_{12}
        + z_{13} - z_{14}\right) + O(\delta^{2}),\\
        \lambda_{4}(\mathcal{T}_{M}^{*}(e^{\delta}))
        &= \delta\left(z_{11} - z_{12}
        - z_{13} + z_{14}\right) + O(\delta^{2}).\\
    \end{align*}
    
    From our choice of $z_{ij}$, we can see that the
    coefficients of $\delta$ in each of the eigenvalues is
    non-zero.
    So, the following limits all exist and are non-zero constants,
    $$\lim\limits_{\delta \rightarrow 0}
    \frac{\lambda_{i}(\mathcal{T}_{M}^{*}(e^{\delta}))}{\delta} \neq 0,$$
    for all $i \in \{2, 3, 4\}$.
    For the given $\mathbf{x} \in \chi_{4}$,
    \begin{multline*}
        \varphi_{\mathbf{x}}\left(\lambda_{1}(\mathcal{T}_{M}^{*}(e^{\delta})),
        \dots, \lambda_{4}(\mathcal{T}_{M}^{*}(e^{\delta}))\right) = 
        \delta^{x_{2} + x_{3} + x_{4}} \cdot
        \left(\lambda_{1}(\mathcal{T}_{M}^{*}(e^{\delta}))\right)^{x_{1}} \cdot
        \left(\frac{\lambda_{2}(\mathcal{T}_{M}^{*}(e^{\delta}))}
        {\delta} \right)^{x_{2}} \cdot\\
        \left(\frac{\lambda_{3}(\mathcal{T}_{M}^{*}(e^{\delta}))}
        {\delta} \right)^{x_{3}} \cdot
        \left(\frac{\lambda_{4}(\mathcal{T}_{M}^{*}(e^{\delta}))}
        {\delta} \right)^{x_{4}}.
    \end{multline*}
    So, we see that
    $$\lim\limits_{\delta \rightarrow 0}
    \varphi_{\mathbf{x}}\left(\lambda_{1}(\mathcal{T}_{M}^{*}(e^{\delta})),
    \dots, \lambda_{4}(\mathcal{T}_{M}^{*}(e^{\delta}))\right) =
    4^{x_1}
    \lim\limits_{\delta \rightarrow 0}
    \left[
    (\delta^{x_{1} + x_{2} + x_{3}})\prod_{i = 2, 3, 4}\left(
    \frac{\lambda_{i}(\mathcal{T}_{M}^{*}(e^{\delta}))}{\delta}\right)^{x_{i}}\right ].$$
    This implies that unless $x_{2} + x_{3} + x_{4} = (-x_{1}) = 0$,
    $$\lim\limits_{\delta \rightarrow 0}
    \varphi_{\mathbf{x}}\left(\lambda_{1}(\mathcal{T}_{M}^{*}(e^{\delta})),
    \dots, \lambda_{4}(\mathcal{T}_{M}^{*}(e^{\delta}))\right)
    \neq 1,$$
    which in turn implies that
    $$\lim\limits_{\delta \rightarrow 0}
    \phi_{\mathbf{x}}(\lambda_{1}(\mathcal{T}_{M}^{*}(e^{\delta})), 
    \dots, \lambda_{4}(\mathcal{T}_{M}^{*}(e^{\delta}))) \neq 0.$$
    
    So, we may assume that $x_{1} = 0$.
    Since $\mathbf{x}$ has a support size $> 2$, and $x_2 + x_3 + x_4 =0$,
    we now may assume that none of $x_2, x_3, x_4 =0$.
    
    Suppose $z_{12} = z_{13}$, then from
    \cref{eqn: formVIEigenvalues}, we see that
    $$\lambda_{2}(\mathcal{T}_{M}^{*}(p)) 
    = p^{z_{11}} - p^{z_{14}}
    = \lambda_{3}(\mathcal{T}_{M}^{*}(p)).$$
    So,
    $\varphi_{\mathbf{x}}\left(\lambda_{1}(\mathcal{T}_{M}^{*}(p)),
    \dots, \lambda_{4}(\mathcal{T}_{M}^{*}(p))\right) = 1$
    for all $p \in \mathbb{R}$
     implies that
    $\lambda_{2}(\mathcal{T}_{M}^{*}(p))^{x_{2} + x_{3}} = 
    \lambda_{4}(\mathcal{T}_{M}^{*}(p))^{x_{2} + x_{3}}$
    for all $p \in \mathbb{R}$.
    If $x_{2} + x_{3} = 0$, then $x_4 = -(x_{2} + x_{3}) = 0$
    as well, and $\mathbf{x}$ would have
    support size $\leq 2$, so we may assume otherwise.
    Without loss generality, we may assume that
    $x_{2} + x_{3} > 0$.
    So, we see that $\lambda_{2}(\mathcal{T}_{M}^{*}(p)) = 
    \lambda_{4}(\mathcal{T}_{M}^{*}(p))$ for all
    $p \in \mathbb{R}$.
    By \cref{eqn: formVIEigenvalues}, this implies that
    $$p^{z_{11}} - p^{z_{14}} = p^{z_{11}} - 2p^{z_{12}} + p^{z_{14}}.$$
    Simplifying this, we find that $z_{12} = z_{13} = z_{14}$
    as well.

     We also know that $z_{11} \neq z_{12}$.
     So, we see that $\mathcal{T}_{M}^{*}(p)$ is of the form
     $$\mathcal{T}_{M^{*}}(p) = \begin{pmatrix}
         p^{z_{11}} & p^{z_{12}} & p^{z_{12}} & p^{z_{12}}\\
         p^{z_{12}} & p^{z_{11}} & p^{z_{12}} & p^{z_{12}}\\
         p^{z_{12}} & p^{z_{12}} & p^{z_{11}} & p^{z_{12}}\\
         p^{z_{12}} & p^{z_{12}} & p^{z_{12}} & p^{z_{11}}\\
    \end{pmatrix}.$$
    But then, it is already known from 
    \cref{corollary: tutteHardness} that when $p > 1$,
    $\PlEVAL(\mathcal{T}_{M^{*}}(p)) \leq \PlEVAL(M)$
    is $\#$P-hard.
    
    From the above proof we may now assume we 
    are in the case where $z_{12} \neq z_{13}$. By the same reasoning, 
    we may also assume that in fact,
    $z_{12} \neq z_{13} \neq z_{14}$ are pairwise distinct.
    Without loss of generality, we may assume that $x_{2} > 0$.
    We can also assume by symmetry, that
    $x_{3} \geq x_{4}$.
    So, the two cases we have to consider are:
    $x_{3} > 0$ (which would imply that $x_{4} = -(x_2 + x_3)< 0$), and
    $x_{3} < 0$ (which would also imply that $x_{4} \le x_3 < 0$).
    Let us first consider the case where $x_{3} > 0$.
    We see that if
    $\phi_{\mathbf{x}}\left(\lambda_{1}(\mathcal{T}_{M}^{*}(p)),
    \dots, \lambda_{4}(\mathcal{T}_{M}^{*}(p))\right) = 0$, then
    $$\left(p^{z_{11}} + p^{z_{12}} - p^{z_{13}} - p^{z_{14}}\right)^{x_{2}}
    \left(p^{z_{11}} - p^{z_{12}} + p^{z_{13}} - p^{z_{14}}\right)^{x_{3}}
    = \left(p^{z_{11}} - p^{z_{12}} - p^{z_{13}} + p^{z_{14}}
    \right)^{x_{2} + x_{3}}.$$
    The highest degree term of both the LHS and the RHS are both
    equal to $p^{z_{11}(x_{2} + x_{3})}$.
    The second highest degree term will depend on
    which of $z_{12}, z_{13}, z_{14}$ is the largest.
    If $z_{12} > z_{13}, z_{14}$, then the second highest
    degree term of the LHS is
    $(x_{2} - x_{3})p^{z_{11}(x_{2} + x_{3} - 1) + z_{12}}$,
    while the second highest degree term of the RHS is
    $(- x_{2} - x_{3})p^{z_{11}(x_{2} + x_{3} - 1) + z_{12}}$.
    These terms are clearly not equal to each other.
    On the other hand, if $z_{13} > z_{12}, z_{14}$, then
    the second highest degree term of the LHS is
    $(- x_{2} + x_{3})p^{z_{11}(x_{2} + x_{3} - 1) + z_{13}}$,
    while the second highest degree term of the RHS is
    $(- x_{2} - x_{3})p^{z_{11}(x_{2} + x_{3} - 1) + z_{13}}$,
    which are also not equal to each other.
    Finally, if $z_{14} > z_{12}, z_{13}$, the second highest
    degree term of the LHS is
    $(- x_{2} - x_{3})p^{z_{11}(x_{2} + x_{3} - 1) + z_{14}}$,
    while the second highest degree term of the RHS is
    $(x_{2} + x_{3})p^{z_{11}(x_{2} + x_{3} - 1) + z_{14}}$,
    which are not equal to each other either.
    This proves that it is not possible that
    $\phi_{\mathbf{x}}\left(\lambda_{1}(\mathcal{T}_{M}^{*}(p)),
    \dots, \lambda_{4}(\mathcal{T}_{M}^{*}(p))\right) = 0$.
    
    We will now consider the case where $x_{3} < 0$.
    In this case, if
    $\phi_{\mathbf{x}}\left(\lambda_{1}(\mathcal{T}_{M}^{*}(p)),
    \dots, \lambda_{4}(\mathcal{T}_{M}^{*}(p))\right) = 0$, then
    $$\left(p^{z_{11}} + p^{z_{12}} - p^{z_{13}} - p^{z_{14}}
    \right)^{-x_{3} - x_{4}}
    = \left(p^{z_{11}} - p^{z_{12}} + p^{z_{13}} - p^{z_{14}}\right)^{-x_{3}}
    \left(p^{z_{11}} - p^{z_{12}} - p^{z_{13}} + p^{z_{14}}\right)^{-x_{4}}.$$
    The highest degree term of both the LHS and the RHS are both
    equal to $p^{z_{11}(-x_{3} - x_{4})}$.
    The second highest degree term will again depend on
    which of $z_{12}, z_{13}, z_{14}$ is the largest.
    If $z_{12} > z_{13}, z_{14}$, then the second highest
    degree term of the LHS is
    $(-x_{3} - x_{4})p^{z_{11}(-x_{3} - x_{4} - 1) + z_{12}}$,
    while the second highest degree term of the RHS is
    $(x_{3} + x_{4})p^{z_{11}(-x_{3} - x_{4} - 1) + z_{12}}$.
    These terms are clearly not equal to each other.
    On the other hand, if $z_{13} > z_{12}, z_{14}$, then
    the second highest degree term of the LHS is
    $(x_{3} + x_{4})p^{z_{11}(-x_{3} - x_{4} - 1) + z_{13}}$,
    while the second highest degree term of the RHS is
    $(-x_{3} + x_{4})p^{z_{11}(-x_{3} - x_{4} - 1) + z_{13}}$,
    which are also not equal to each other.
    Finally, if $z_{14} > z_{12}, z_{13}$, the second highest
    degree term of the LHS is
    $(x_{3} + x_{4})p^{z_{11}(-x_{3} - x_{4} - 1) + z_{14}}$,
    while the second highest degree term of the RHS is
    $(x_{3} - x_{4})p^{z_{11}(-x_{3} - x_{4} - 1) + z_{14}}$,
    which are not equal to each other either.
    This proves that it is not possible that
    $\phi_{\mathbf{x}}\left(\lambda_{1}(\mathcal{T}_{M}^{*}(p)),
    \dots, \lambda_{4}(\mathcal{T}_{M}^{*}(p))\right) = 0$,
    even in this case.

    So, we see that for all $\mathbf{x} \in \chi_{4}$ of support
    size greater than $2$, $\phi_{\mathbf{x}}
    (\lambda_{1}(\mathcal{T}_{M}^{*}(p),  \dots,
    \lambda_{4}(\mathcal{T}_{M}^{*}(p)))$
    is a non-zero polynomial.
    Moreover, this is true for all
    $\mathbf{y} \in \chi_{q}$ such that
    $y_{\sigma(i)} = x_{i}$, for some $\sigma \in S_{4}$.
    In particular, this means that there exists some
    $p^{*} \in \mathbb{R}$ such that
    $$\Psi_{\mathbf{x}}(\mathcal{T}_{M}^{*}(p^{*})) =
    \Phi_{\mathbf{x}}
    (\lambda_{1}(\mathcal{T}_{M}^{*}(p^{*})),  \dots,
    \lambda_{4}(\mathcal{T}_{M}^{*}(p^{*}))) \neq 0.$$
    So, \cref{lemma: thickeningWorks} allows us
    to find some $N = \mathcal{T}_{M}(\mathbf{p}^{*})
    \in \mathfrak{R}(M, \mathcal{F} \cup \{\Psi_{\mathbf{x}},
    \rho_{\tt{tensor}}\}) \cap
    \text{Sym}_{4}^{\tt{pd}}(\mathbb{R}_{> 0})$.
    If $N$ is diagonal distinct, then \cref{theorem: diagDistHardness}
    implies that $\PlEVAL(N) \leq \PlEVAL(M)$ is
    $\#$P-hard.
    So, we may assume that $N$ is not diagonal distinct.
    
    Let us now assume that $N$ is not of Form (VI).
    Now, from \cref{theorem: nonDiagAllForms},
    we see that either $\PlEVAL(N) \leq \PlEVAL(M)$ is
    $\#$P-had, or there exists
    some $N' \in \mathfrak{R}(M, \mathcal{F}
    \cup \{\Psi_{\mathbf{x}}, \rho_{\tt{tensor}}\}) \cap
    \text{Sym}_{4}^{\tt{pd}}(\mathbb{R}_{> 0})$, that is
    isomorphic to a matrix of Form (I), (III), or (IV),
    and does not have diagonal entries that are all identical
    to each other.
    But in that case,
    we note that since $\rho_{\tt{tensor}}(N') \neq 0$,
    \cref{lemma: formIHardness},
    \cref{lemma: formIVHardness}, and \cref{lemma: formIIIHardness}
    imply that $\PlEVAL(N') \leq \PlEVAL(M)$ is $\#$P-hard.
    Otherwise, $N$ is the required matrix of Form (VI), gaining the crucial property that
    $\Psi_{\mathbf{x}}(N) \ne 0$.
\end{proof}

We can now prove that $\PlEVAL(M)$, when $M$ is of Form (VI) that is not 
isomorphic to a tensor product, 
must also be $\#$P-hard.

\begin{lemma}\label{lemma: formVIHardness}
    Let $M \in \text{Sym}_{4}^{\tt{pd}}(\mathbb{R}_{> 0})$
    be of Form (VI) such that $\rho_{\tt{tensor}}(M) \neq 0$.
    Then, $\PlEVAL(M)$ is $\#$P-hard.
\end{lemma}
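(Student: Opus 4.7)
The plan is to mirror the proofs of \cref{lemma: formIHardness}, \cref{lemma: formIVHardness}, and \cref{lemma: formIIIHardness}, all of which follow the same template of iterated lattice reduction. Concretely, I would let $(\lambda_{1},\ldots,\lambda_{4})$ be the eigenvalues of $M$, and let $\mathcal{B}$ be a lattice basis of $\mathcal{L}(\lambda_{1},\ldots,\lambda_{4})$. Whenever $\mathcal{B}$ contains some nonzero integer multiple $c\cdot\delta_{ij}$, since the $\lambda_{i}$'s are positive, $(\lambda_{i})^{c}(\lambda_{j})^{-c}=1$ forces $\lambda_{i}=\lambda_{j}$, so $\delta_{ij}$ itself lies in $\mathcal{L}(\lambda_{1},\ldots,\lambda_{4})$ and we may replace the basis vector by $\delta_{ij}$. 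After this normalization, either $\mathcal{B}\subseteq\mathcal{D}$, in which case \cref{theorem: latticeHardness} immediately yields $\#$P-hardness, or there exists some $\mathbf{x}\in\mathcal{B}\setminus\mathcal{D}$.

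In the latter case, any such $\mathbf{x}\in\chi_{4}$ must have support size at least $3$ (since otherwise it would be a scalar multiple of some $\delta_{ij}\in\mathcal{D}$). This is precisely the hypothesis of \cref{lemma: formVIReduction}, which is the workhorse doing all the heavy lifting for Form (VI). Applying it with $\mathcal{F}=\mathcal{F}_{M}$, either $\PlEVAL(M)$ is already $\#$P-hard (and we are done), or we obtain some $N_{1}\in\mathfrak{R}(M,\mathcal{F}_{M}\cup\{\Psi_{\mathbf{x}},\rho_{\tt{tensor}}\})\cap\text{Sym}_{4}^{\tt{pd}}(\mathbb{R}_{>0})$ of Form (VI). By the definition of $\mathcal{F}_{M}$ together with \cref{lemma: PsiExists}, the lattice $\overline{\mathcal{L}_{N_{1}}}$ is a proper subset of $\overline{\mathcal{L}_{M}}$, and \cref{lemma: dimReduction} gives $\dim\mathcal{L}_{N_{1}}<\dim\mathcal{L}_{M}$.

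I would then iterate: since $N_{1}$ is again of Form (VI) with $\rho_{\tt{tensor}}(N_{1})\neq 0$, I can repeat the argument on $N_{1}$ in place of $M$. Because lattice dimensions are non-negative integers, after at most $\dim\mathcal{L}_{M}\le 4$ iterations the procedure must terminate with some $N\in\text{Sym}_{4}^{\tt{pd}}(\mathbb{R}_{>0})$ satisfying $\PlEVAL(N)\le\PlEVAL(M)$ whose eigenvalue lattice admits a basis inside $\mathcal{D}$. Then \cref{theorem: latticeHardness} gives $\#$P-hardness of $\PlEVAL(N)$, and hence of $\PlEVAL(M)$.

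There is essentially no obstacle left at this stage: all the genuine work, namely producing, from a confluence-violating $\mathbf{x}$, a Form (VI) reduct on which $\Psi_{\mathbf{x}}$ is nonzero (which uses the eigenvalue asymptotics of Form (VI) matrices, the Taylor expansion at $\delta\to 0$, the distinctness of $z_{12},z_{13},z_{14}$ modulo the tensor exclusions, and the case analysis on the sign pattern of $\mathbf{x}$) has already been carried out inside \cref{lemma: formVIReduction}. The only residual sanity check is to confirm that the iteration is well-founded, which is exactly \cref{lemma: dimReduction}. Thus the proof reduces to a short invocation of \cref{lemma: formVIReduction}, \cref{lemma: dimReduction}, and \cref{theorem: latticeHardness}, in direct parallel with the three earlier hardness lemmas.
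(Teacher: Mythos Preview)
Your proposal is correct and follows essentially the same approach as the paper. The paper's proof is even terser: it simply says the argument is identical to that of \cref{lemma: formIVHardness} (which in turn mirrors \cref{lemma: formIHardness}), with \cref{lemma: formVIReduction} substituted for \cref{lemma: formIVReduction} as the reduction step; your write-up unpacks exactly this template.
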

\begin{proof}
    The proof is very similar to the proof of
    \cref{lemma: formIVHardness}.
    The only difference is that instead of
    \cref{lemma: formIVReduction}, we will have
    to use \cref{lemma: formVIReduction} to
    find the matrices $N_{i}$.
    The rest of the proof is identical.
\end{proof}

We can finally prove the following theorem.

\begin{theorem}\label{theorem: TensorOrHard}
    Let $M \in \text{Sym}_{4}^{\tt{pd}}(\mathbb{R}_{> 0})$
    such that $\rho_{\tt{tensor}}(M) \neq 0$.
    Then, $\PlEVAL(M)$ is $\#$P-hard.    
\end{theorem}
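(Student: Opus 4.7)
The plan is to combine the results already established in the preceding sections into a clean case analysis based on whether $M$ is diagonal distinct. The key observation is that the entire machinery built in Sections 8--11 was designed precisely to handle the two regimes (diagonal distinct vs.\ not), with the non-tensor hypothesis ruling out the one non-hard class of matrices in each regime.

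First I would split on diagonal distinctness of $M$. In the diagonal distinct case, I invoke \cref{theorem: diagDistHardness} directly: it states that $\PlEVAL(M)$ is $\#$P-hard unless $M$ is isomorphic to $A \otimes B$ for some $A, B \in \text{Sym}_2(\mathbb{R}_{>0})$. But by \cref{lemma: tensorPolynomial}, being isomorphic to such a tensor product is exactly the condition $\rho_{\tt{tensor}}(M) = 0$ (together with the remark in that lemma allowing the entries of $A$ and $B$ to be taken positive when $M$ has positive entries). Since we assume $\rho_{\tt{tensor}}(M) \neq 0$, this tensor alternative is excluded, so $\PlEVAL(M)$ is $\#$P-hard.

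In the non-diagonal-distinct case, I apply \cref{theorem: nonDiagAllForms}, which ensures that either $\PlEVAL(M)$ is already $\#$P-hard, or $M$ is isomorphic to a matrix of one of the Forms (I), (III), (IV), or (VI). Since $\rho_{\tt{tensor}}(M) \neq 0$ is preserved under simultaneous row/column permutation (the polynomial $\rho_{\tt{tensor}}$ already takes a product over all $\sigma \in S_4$), the isomorphic copy of $M$ also has $\rho_{\tt{tensor}} \neq 0$. I then appeal to the four form-specific hardness lemmas --- \cref{lemma: formIHardness}, \cref{lemma: formIIIHardness}, \cref{lemma: formIVHardness}, and \cref{lemma: formVIHardness} --- each of which concludes $\#$P-hardness precisely under the hypothesis $\rho_{\tt{tensor}} \neq 0$. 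Combining with $\PlEVAL(M^\sigma) \equiv \PlEVAL(M)$ gives $\#$P-hardness of $\PlEVAL(M)$.

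There is no real obstacle here since all the heavy lifting has been done; the only care needed is a bookkeeping check that all the invoked results correctly share the same hypothesis class (positive definite, strictly positive entries, non-tensor), which they do by design throughout Sections 10--11. The remark following \cref{lemma: tensorPolynomial} is essential so that the dichotomy between ``$\rho_{\tt{tensor}}(M) = 0$'' and ``$M$ isomorphic to $A \otimes B$ with $A, B \in \text{Sym}_2(\mathbb{R}_{>0})$'' is truly exhaustive in our positive entry setting.
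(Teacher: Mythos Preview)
Your approach matches the paper's at the top level (split on diagonal distinctness, then in the non-distinct case funnel through \cref{theorem: nonDiagAllForms} to one of Forms (I), (III), (IV), (VI) and hit each with its form-specific hardness lemma), but there is a small genuine gap in the second case.

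The form-specific lemmas \cref{lemma: formIHardness}, \cref{lemma: formIIIHardness}, \cref{lemma: formIVHardness} are stated under the convention of the Remark opening \cref{sec: all_forms}: ``of Form (I)'' there means not just the entry pattern but also the diagonal condition $M_{11}=M_{22}$ with $M_{22},M_{33},M_{44}$ pairwise distinct (and analogously for (III), (IV)). Their proofs use this --- e.g.\ \cref{lemma: formIDistinct} and \cref{lemma: formINotTensor} both set $\zeta(N)=(N_{11}-N_{33})(N_{11}-N_{44})(N_{22}-N_{33})(N_{22}-N_{44})(N_{33}-N_{44})$ and need $\zeta(M)\neq 0$. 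Merely knowing that $M$ is \emph{isomorphic} to a Form-(I) matrix does not give you this: a matrix with diagonal $(a,a,a,b)$, $a\neq b$, can be isomorphic to Form (I) (take $b'=a$ in the displayed template) while failing the pairwise-distinctness condition and also failing to be isomorphic to Form (IV) (whenever $x\neq y$ or $z\neq t$ in the Form-(I) template). For such an $M$, neither \cref{lemma: formIHardness} nor \cref{lemma: formIVHardness} applies to the isomorphic copy.

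The paper's proof closes exactly this gap by invoking the enumerated ``Moreover'' items of \cref{theorem: nonDiagAllForms} (with $\mathcal{F}=\mathcal{F}_M\cup\{\rho_{\tt tensor}\}$): these supply a reduct $M'\in\mathfrak{R}(M,\{\rho_{\tt tensor}\})\cap\text{Sym}_4^{\tt pd}(\mathbb{R}_{>0})$ that \emph{does} carry the correct diagonal pattern for its form, so the form-specific lemma applies to $M'$ (with $\rho_{\tt tensor}(M')\neq 0$ preserved), and then $\PlEVAL(M')\leq\PlEVAL(M)$ finishes. Your write-up should invoke those items rather than applying the form lemmas directly to $M^\sigma$.
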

\begin{proof}
    If $M$ is diagonal distinct, then
    \cref{theorem: diagDistHardness} implies that
    $\PlEVAL(M)$ is $\#$P-hard.
    Otherwise, from \cref{theorem: nonDiagAllForms},
    we know that either $\PlEVAL(M)$ is $\#$P-hard, or
    there exists some $M' \in \mathfrak{R}(M, \mathcal{F}_{M}
    \cup \{\rho_{\tt{tensor}} \})
    \cap \text{Sym}_{4}^{\tt{pd}}(\mathbb{R}_{> 0})$
    that is of Form (I), (III), (IV), or (VI).
    In any case,
    \cref{lemma: formIHardness}. \cref{lemma: formIIIHardness},
    \cref{lemma: formIVHardness}, or \cref{lemma: formVIHardness}
    allow us to prove that $\PlEVAL(M') \neq \PlEVAL(M)$ is
    $\#$P-hard.
\end{proof}

\section[Dichotomy of 4 x 4 matrices]
{Dichotomy for $4 \times 4$ matrices}\label{sec: 4x4Dichotomy}

We will now first deal with matrices $M \in
\text{Sym}_{4}^{\tt{pd}}(\mathbb{R}_{> 0})$
such that $M$ is isomorphic to $A \otimes B$
for some $A, B \in \text{Sym}_{2}(\mathbb{R})$.
We can assume without loss of generality, that in fact,
$M = A \otimes B$. In this case, we see that
given any graph $G = (V, E)$,
$$Z_{A \otimes B}(G) =
\sum_{\sigma: V \rightarrow [4]}
\prod_{(u, v) \in E}(A \otimes B)_{\sigma(u)\sigma(v)} =
\sum_{\sigma_{1}, \sigma_{2}: V \rightarrow [2]}
\prod_{(u, v) \in E}
\left((A \otimes B)_{\tau\left(
{\scriptscriptstyle{\sigma_{1}(u), \sigma_{2}(u)}}\right)
\tau\left(
{\scriptscriptstyle{\sigma_{1}(v), \sigma_{2}(v)}}\right)}\right),$$
where
$\tau(i, j) = 2(i-1) + j$ for all $i, j \in [2]$.

But then, we see that
\begin{align*}
    Z_{A \otimes B}(G)
    &= \sum_{\sigma_{1}, \sigma_{2}: V \rightarrow [2]}
    \prod_{(u, v) \in E}A_{\sigma_{1}(u)\sigma_{1}(v)}
    B_{\sigma_{2}(u)\sigma_{2}(v)}\\
    &= \left(\sum_{\sigma_{1}: V \rightarrow [2]}
    \prod_{(u, v) \in E}A_{\sigma_{1}(u)\sigma_{1}(v)}\right) \cdot
    \left(\sum_{\sigma_{2}: V \rightarrow [2]}
    \prod_{(u, v) \in E}B_{\sigma_{2}(u)\sigma_{2}(v)}\right)\\
    &= \left(Z_{A}(G) \right) \cdot \left(Z_{B}(G) \right).
\end{align*}

So, we see that when $M = A \otimes B$, it is in some sense,
equivalent to problems of a smaller size.
We shall formalize this notion shortly.
In the meantime, we can now immediately prove the following lemma.

\begin{lemma}\label{lemma: tensorSimple}
    Let $M \in \text{Sym}_{4}^{\tt{pd}}(\mathbb{R}_{> 0})$,
    such that $M = A \otimes B$ for some $A, B \in
    \text{Sym}_{2}(\mathbb{R})$.
    If $\PlEVAL(A)$ is polynomial time tractable, then
    $\PlEVAL(M) \equiv \PlEVAL(B)$.
\end{lemma}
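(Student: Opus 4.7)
The plan is to prove the two directions $\PlEVAL(M) \leq \PlEVAL(B)$ and $\PlEVAL(B) \leq \PlEVAL(M)$ separately, using the multiplicative identity
\[Z_{A \otimes B}(G) = Z_A(G) \cdot Z_B(G)\]
that was derived in the discussion preceding the lemma statement. The crucial preliminary step is to upgrade the hypothesis $A, B \in \text{Sym}_2(\mathbb{R})$ to $A, B \in \text{Sym}_2(\mathbb{R}_{>0})$ by invoking the remark following \cref{lemma: tensorPolynomial}: since $M \in \text{Sym}_4^{\tt{pd}}(\mathbb{R}_{>0})$ has all positive entries and the entries of $A \otimes B$ are products $A_{ij}B_{kl}$, the signs force both $A$ and $B$ to have all entries of a single sign and moreover that sign must be positive (else replace $(A,B)$ by $(-A,-B)$, which leaves $A \otimes B$ unchanged and only rescales $Z_A$ and $Z_B$ by a sign depending on $|E|$, hence does not affect the complexity of $\PlEVAL(A)$ or $\PlEVAL(B)$).

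With $A \in \text{Sym}_2(\mathbb{R}_{>0})$, the sum $Z_A(G) = \sum_{\sigma : V \to [2]} \prod_{(u,v) \in E} A_{\sigma(u)\sigma(v)}$ is a sum of strictly positive terms over at least one configuration $\sigma$, hence $Z_A(G) > 0$ for every graph $G$. This positivity is the single fact that makes division by $Z_A(G)$ safe, and it is what might at first glance look like the obstacle (zero divisors in a multiplicative identity) but is in fact given for free by the positivity hypothesis on $M$.

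For $\PlEVAL(M) \leq \PlEVAL(B)$: given a planar input $G$, I first compute $Z_A(G)$ in polynomial time using the assumed P-time algorithm for $\PlEVAL(A)$, then query the $\PlEVAL(B)$ oracle on the same $G$ to obtain $Z_B(G)$, and output the product $Z_A(G) \cdot Z_B(G) = Z_M(G)$. For $\PlEVAL(B) \leq \PlEVAL(M)$: given a planar input $G$, I query the $\PlEVAL(M)$ oracle to obtain $Z_M(G)$, compute $Z_A(G)$ in polynomial time, and output the quotient $Z_M(G)/Z_A(G)$, which equals $Z_B(G)$ and is well defined thanks to $Z_A(G) > 0$. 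All arithmetic is performed inside the field $\mathbf{F}$ fixed by the formalistic computation model described in \cref{sec: modelComputation}, where field division is available; this is exactly the setting in which $\PlEVAL$ was defined, so no subtlety arises from the quotient. Combining the two reductions gives $\PlEVAL(M) \equiv \PlEVAL(B)$.
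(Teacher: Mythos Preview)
Your proof is correct and follows essentially the same approach as the paper: both first argue that $A, B$ may be taken in $\text{Sym}_2(\mathbb{R}_{>0})$ (you cite the remark after \cref{lemma: tensorPolynomial}, the paper repeats the sign argument inline), then use $Z_A(G) > 0$ to justify the division in the reduction $\PlEVAL(B) \leq \PlEVAL(M)$, with the other direction being the obvious product. Your explicit remark that replacing $(A,B)$ by $(-A,-B)$ preserves the P-time tractability of $\PlEVAL(A)$ is a small clarifying addition the paper leaves implicit.
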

\begin{proof}
    Let $M = A \otimes B$.
    If $A$ or $B$ has any zero entries entries, then
    $M$ would have zero entries, so we see that $A$ and $B$
    cannot have any zero entries.
    Now, if we assume that $A$ and $B$ have entries $a$ and $b$ of
    the opposite signs, then $ab < 0$ would be an entry of $M$.
    So, all the entries of $A$ and $B$ must have the same sign.
    If this sign is $-$, then we may
    replace $A$ and $B$ with $-A$ and $-B$.
    So, we may assume that $A, B \in \text{Sym}_{2}(\mathbb{R}_{> 0})$.
    This implies that for any graph $G = (V, E)$,
    $Z_{A}(G) > 0$, and $Z_{B}(G) > 0$.
    If we now had oracle access to $\PlEVAL(B)$, we can use it
    to compute $Z_{M}(G) = Z_{A}(G) \cdot Z_{B}(G)$ for all
    planar $G = (V, E)$.
    Similarly, if we had oracle access to $\PlEVAL(M)$, we could use
    it to compute $Z_{B}(G) = \frac{Z_{M}(G)}{Z_{A}(G)}$
    for any planar $G = (V, E)$.
\end{proof}

\begin{remark*}
    Let $M \in \text{Sym}_{4}^{\tt{pd}}(\mathbb{R}_{> 0})$,
    such that $M = A \otimes B$ for some $A, B \in
    \text{Sym}_{2}(\mathbb{R})$.
    Then, as we saw in the proof of \cref{lemma: tensorSimple},
    we may assume that in fact, $A, B \in \text{Sym}_{2}(\mathbb{R}_{> 0})$.
    Now, we may let $(\lambda_{1}, \lambda_{2})$ be the eigenvalues
    of $A$, and $(\mu_{1}, \mu_{2})$ be the eigenvalues of $B$.
    This implies that the eigenvalues of $M = A \otimes B$ are:
    $$\lambda_{1}\mu_{1}, ~~~~ \lambda_{1}\mu_{2}, ~~~~
    \lambda_{2}\mu_{1}, ~~~~ \lambda_{2}\mu_{2}.$$
    We may assume without loss of generality that
    $\lambda_{1} \geq \lambda_{2}$, and $\mu_{1} \geq \mu_{2}$.
    Since $A \in \text{Sym}_{2}(\mathbb{R}_{> 0})$, we know
    from the Perron-Frobenius Theorem, that in fact,
    $\lambda_{1} > |\lambda_{2}|$.
    Similarly, since $B \in \text{Sym}_{2}(\mathbb{R}_{> 0})$,
    we see that $\mu_{1} > |\mu_{2}|$.
    Now, if $\lambda_{2} \leq 0$, we note that
    $\lambda_{2}\mu_{1} \leq 0$ would be an eigenvalue of
    $A \otimes B = M$, which contradicts our assumption that
    $M \in \text{Sym}_{4}^{\tt{pd}}(\mathbb{R}_{> 0})$.
    So, in fact, $\lambda_{2} > 0$.
    Similarly, $\mu_{2} > 0$ as well.
    This means that if
    $M \in \text{Sym}_{4}^{\tt{pd}}(\mathbb{R}_{> 0})$ is such that
    $M = A \otimes B$ for some $A, B \in \text{Sym}_{2}(\mathbb{R})$,
    we may assume that
    $A, B \in  \text{Sym}_{2}^{\tt{pd}}(\mathbb{R}_{> 0})$.
    Similarly, if $M \in \text{Sym}_{4}^{\tt{F}}(\mathbb{R}_{> 0})$,
    and $M = A \otimes B$, we may assume $A, B \in \text{Sym}_{2}(\mathbb{R}_{> 0})$.
    If $A$ or $B$ has an eigenvalue 0, so would 
    $M = A \otimes B$.
    So, if $M \in \text{Sym}_{4}^{\tt{F}}(\mathbb{R}_{> 0})$ is such that
    $M = A \otimes B$ for some $A, B \in \text{Sym}_{2}(\mathbb{R})$,
    we may assume that
    $A, B \in  \text{Sym}_{2}^{\tt{F}}(\mathbb{R}_{> 0})$.
\end{remark*}

We will now state the following theorem~\cite{guo2020complexity}:

\begin{theorem}\label{theorem: fullDomain2Hardness}
	The problem $\PlEVAL(M)$ is $\#$P-hard, for
	$M = \left( \begin{smallmatrix}
	x & y\\
	y & z\\
	\end{smallmatrix} \right) \in 
    \text{Sym}_{2}(\mathbb{R}_{\geq 0})$,
    unless $xz = y^{2}$, $y = 0$, or $x = z$, in which case $\PlEVAL(M)$ is in
    polynomial time.
\end{theorem}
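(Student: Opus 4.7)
My plan is to handle the three tractable cases directly and then reduce hardness to \cref{theorem: tutte}.

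\textbf{Tractability.} If $y = 0$, the matrix is diagonal, so a nonzero contribution to $Z_M(G)$ requires every connected component of $G$ to be monochromatic; thus $Z_M(G) = \prod_{C} \bigl(x^{|E(C)|} + z^{|E(C)|}\bigr)$ where $C$ ranges over components. If $xz = y^2$, then $M$ has rank $1$: writing $M_{ij} = v_i v_j$ with $v_1 = \sqrt{x}$, $v_2 = \sqrt{z}$ (if $y > 0$), the sum factorizes as $Z_M(G) = \prod_{u \in V}\bigl(v_1^{\deg(u)} + v_2^{\deg(u)}\bigr)$. If $x = z$, the matrix is the Ising edge signature $\bigl(\begin{smallmatrix} x & y \\ y & x \end{smallmatrix}\bigr)$. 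Apply the holographic basis change $T = \tfrac{1}{\sqrt{2}}\bigl(\begin{smallmatrix} 1 & 1 \\ 1 & -1 \end{smallmatrix}\bigr)$: the edge matrix diagonalizes to $\mathrm{diag}(x+y, x-y)$, while the equality vertex signatures transform into planar-matchgate-realizable signatures. Hence $Z_M(G)$ on planar $G$ is computable in polynomial time by Valiant's holographic algorithm together with FKT.

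\textbf{Hardness setup.} Assume $y \ne 0$, $x \ne z$, and $xz \ne y^2$. Since rescaling a constraint function by $c \ne 0$ gives an equivalent problem (as in \cref{lemma: MequivalentCM}), we may normalize to $y = 1$, so $M = \bigl(\begin{smallmatrix} x & 1 \\ 1 & z \end{smallmatrix}\bigr)$ with $x \ne z$ and $xz \ne 1$. The aim is to identify $Z_M(G)$ as (a positive multiple of) an evaluation of the Tutte polynomial and then invoke \cref{theorem: tutte}. For a $\{0,1\}$-assignment $\sigma \colon V \to \{0,1\}$, grouping configurations by the subset $S = \sigma^{-1}(1)$ gives
$$Z_M(G) = \sum_{S \subseteq V} x^{e(S)} \, z^{e(V \setminus S)},$$
which (via the Fortuin--Kasteleyn random-cluster expansion on $G$) can be rewritten as a sum over edge subsets $A \subseteq E$ of the form $\sum_A q(A)^{c(A)} w(A)^{|A|}$, so that after accounting for the asymmetry between $x$ and $z$ it becomes (a computable prefactor times) $T_G(a,b)$ at a single point $(a,b) = (a(x,z), b(x,z))$.

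\textbf{Completing the reduction and the main obstacle.} Once the parameter map $(x,z) \mapsto (a,b)$ is established, I would verify that the Vertigan exceptional locus pulls back exactly to our three tractable cases: the hyperbola $(a-1)(b-1) = 2$ corresponds to $x = z$ (the Ising curve), the hyperbola $(a-1)(b-1) = 1$ corresponds to $xz = y^2$ (the rank-$1$ locus, up to the normalization $y = 1$), and the isolated exceptional points together with $y = 0$ correspond to the remaining degeneracies (including $(a,b) = (1,1)$, which forces $x = z = 1$, a special case of $x = z$, and the complex roots of unity which are ruled out by our reality and nonnegativity assumptions). Then \cref{theorem: tutte} gives $\#$P-hardness for every remaining $(x,z)$. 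The main obstacle is this parameter-matching step: the asymmetry $x \ne z$ means $Z_M(G)$ is not literally the standard Potts partition function at a single temperature, so the Fortuin--Kasteleyn rewriting requires a symmetrization gadget (for instance, stretching to replace $M$ by $M^2$, whose diagonal becomes $x^2 + 1, 1 + z^2$, and combining with thickening to interpolate the resulting curve) to collapse the model to a single-parameter Tutte evaluation. Carrying out this gadget argument while keeping track of which curve $(a,b)$ lands on --- and showing the curve is not contained entirely in the Vertigan tractable set unless we began in one of the three exceptional cases --- is the technical heart of the proof.
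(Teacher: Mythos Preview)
The paper does not prove this theorem; it is quoted from \cite{guo2020complexity} and used as a black box throughout \cref{sec: 4x4Dichotomy}. So there is no in-paper argument to compare your proposal against.

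Your three tractability arguments are correct and standard.

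The hardness sketch, however, has a real gap that you flag but do not close. The asymmetric partition function $\sum_{S\subseteq V} x^{e(S)} z^{e(V\setminus S)}$ with $x\ne z$ is \emph{not} a single-point evaluation of the Tutte polynomial: the subgraph expansion of $\prod_e\bigl(1+(x-1)[\sigma_u=\sigma_v=0]+(z-1)[\sigma_u=\sigma_v=1]\bigr)$ carries two independent edge-activities, so it does not collapse to the one-parameter random-cluster form that yields $T_G(a,b)$. Consequently the map $(x,z)\mapsto(a,b)$ you posit does not exist, and the asserted matching of Vertigan's exceptional hyperbolas $(a-1)(b-1)\in\{1,2\}$ with the conditions $xz=y^2$ and $x=z$ is unfounded. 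Your proposed symmetrization does not rescue this: $M^2$ has diagonal $(x^2+1,\,z^2+1)$, still unequal when $x\ne z$, and $T_nM$ has diagonal $(x^n,z^n)$. More generally, stretching preserves the eigenbasis of $M$, and a $2\times 2$ real symmetric matrix $HDH^{\tt T}$ has equal diagonal entries iff $H$ is (a sign-variant of) the Hadamard basis $\tfrac{1}{\sqrt 2}(1,\pm 1)^{\tt T}$ or the eigenvalues coincide; for nonnegative $x\ne z$ neither holds for any $M^m$. So no combination of the gadgets you mention ever lands you on the Ising curve, and the reduction to \cref{theorem: tutte} as you describe it cannot be completed. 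The proof in the cited reference proceeds differently, via interpolation in the planar Holant framework rather than identification with a single Tutte point.
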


This theorem now allows us to prove the following
lemma immediately.

\begin{lemma}\label{lemma: tensorTractable}
    Let $M \in \text{Sym}_{4}^{\tt{pd}}(\mathbb{R}_{> 0})$,
    such that $M = A \otimes B$ for some $A, B \in
    \text{Sym}_{2}^{\tt{pd}}(\mathbb{R}_{> 0})$.
    Then $\PlEVAL(M)$ is polynomial time tractable if
    $A_{11} = A_{22}$, and $B_{11} = B_{22}$.
    If $A_{11} = A_{22}$, but $B_{11} \neq B_{22}$,
    or if $B_{11} = B_{22}$, but $A_{11} \neq A_{22}$,
    then $\PlEVAL(M)$ is $\#$P-hard.
\end{lemma}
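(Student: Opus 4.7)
The proof plan is a direct application of the two tools already in hand: the tensor decomposition reduction of \cref{lemma: tensorSimple}, and the $2\times 2$ dichotomy of \cref{theorem: fullDomain2Hardness}. The key observation is that for $A \in \text{Sym}_{2}^{\tt{pd}}(\mathbb{R}_{> 0})$, the condition $A_{11} = A_{22}$ places $A$ exactly in the $x = z$ tractable case of \cref{theorem: fullDomain2Hardness}, while $A_{11} \neq A_{22}$ (combined with positive definiteness and positive entries) rules out all three tractable cases simultaneously.

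For the tractability half, I would assume $A_{11} = A_{22}$ and $B_{11} = B_{22}$. Then $A$ is a $2 \times 2$ symmetric matrix with positive entries satisfying $A_{11} = A_{22}$, so by \cref{theorem: fullDomain2Hardness}, $\PlEVAL(A)$ is polynomial time tractable. Applying \cref{lemma: tensorSimple}, we obtain $\PlEVAL(M) \equiv \PlEVAL(B)$. Since $B$ also satisfies $B_{11} = B_{22}$, a second application of \cref{theorem: fullDomain2Hardness} gives $\PlEVAL(B) \in \text{P}$, and hence $\PlEVAL(M) \in \text{P}$.

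For the hardness half, by the symmetric roles of $A$ and $B$, we may assume without loss of generality that $A_{11} = A_{22}$ but $B_{11} \neq B_{22}$. As above, $\PlEVAL(A)$ is tractable, so \cref{lemma: tensorSimple} yields $\PlEVAL(M) \equiv \PlEVAL(B)$. It remains to check that $B$ falls into none of the three tractable categories of \cref{theorem: fullDomain2Hardness}: we are given $B_{11} \neq B_{22}$; since $B \in \text{Sym}_{2}(\mathbb{R}_{>0})$, we have $B_{12} \neq 0$; and since $B \in \text{Sym}_{2}^{\tt{pd}}(\mathbb{R}_{>0})$, $\det(B) = B_{11}B_{22} - B_{12}^{2} > 0$, so in particular $B_{11}B_{22} \neq B_{12}^{2}$. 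Therefore \cref{theorem: fullDomain2Hardness} asserts that $\PlEVAL(B)$ is $\#$P-hard, and so $\PlEVAL(M)$ is $\#$P-hard.

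There is essentially no obstacle here; the entire content of the lemma is packaging the two-step reduction correctly and verifying the three-way exceptions. The only mild subtlety is the reminder (already noted in the preceding remark) that in decomposing $M = A \otimes B$ with $M \in \text{Sym}_{4}^{\tt{pd}}(\mathbb{R}_{>0})$, we may indeed insist that $A, B \in \text{Sym}_{2}^{\tt{pd}}(\mathbb{R}_{>0})$, which is what the hypothesis of the lemma already provides, so that \cref{theorem: fullDomain2Hardness} applies cleanly to both factors.
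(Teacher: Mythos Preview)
Your proposal is correct and follows essentially the same approach as the paper: apply \cref{theorem: fullDomain2Hardness} to each factor, use \cref{lemma: tensorSimple} to transfer the complexity between $M$ and the nontrivial factor, and verify the three exceptional conditions of \cref{theorem: fullDomain2Hardness} for $B$ using positive definiteness and positivity of entries. The paper is slightly terser on the tractability half (it just cites \cref{theorem: fullDomain2Hardness} directly, implicitly using $Z_M(G) = Z_A(G)\cdot Z_B(G)$), but your two-step reduction via \cref{lemma: tensorSimple} amounts to the same argument.
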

\begin{proof}
The tractability part follows from \cref{theorem: fullDomain2Hardness}.
    For the \#P-hardness part, by symmetry,
    let us assume that $A_{11} = A_{22}$, but $B_{11} \neq B_{22}$.
    From \cref{theorem: fullDomain2Hardness}, we see that
    since $A_{11} = A_{22}$, $\PlEVAL(A)$ is polynomial time tractable.
    On the other hand, we apply  \cref{theorem: fullDomain2Hardness} to
    $\PlEVAL(B)$, and see that (1)  $B_{11}B_{22} \not = (B_{12})^{2}$
    since $B \in
    \text{Sym}_{2}^{\tt{pd}}(\mathbb{R}_{> 0})$ has full rank,
    (2) $B_{12} \not = 0$
    as $B \in
    \text{Sym}_{2}(\mathbb{R}_{> 0})$,  and (3) $B_{11} \neq B_{22}$ as given.
    Hence,
    $\PlEVAL(B)$ is $\#$P-hard.
    Now, since
    $\PlEVAL(A)$ is polynomially tractable, \cref{lemma: tensorSimple} implies that 
    $\PlEVAL(M)$ is $\#$P-hard.
\end{proof}

To deal with the case where $A_{11} \neq A_{22}$ and
$B_{11} \neq B_{22}$ will require just a little bit more work.

\begin{lemma}\label{lemma: TensorDiagDist}
    Let $M \in \text{Sym}_{4}^{\tt{pd}}(\mathbb{R}_{> 0})$,
    such that $M = A \otimes B$ for some $A, B \in
    \text{Sym}_{2}^{\tt{pd}}(\mathbb{R}_{> 0})$, that satisfy
    $A_{11} \neq A_{22}$ and $B_{11} \neq B_{22}$.
    Let $\mathcal{F}$ be a countable set of
    $\text{Sym}_{4}(\mathbb{R})$-polynomials such that
    $F(M) \neq 0$ for all $F \in \mathcal{F}$.
    Then, either $\PlEVAL(M)$ is $\#$P-hard, or there
    exists some diagonal distinct
    $N \in \mathfrak{R}(M, \mathcal{F}) \cap
    \text{Sym}_{4}^{\tt{pd}}(\mathbb{R}_{> 0})$, such that
    $\rho_{\tt{tensor}}(N) = 0$.
\end{lemma}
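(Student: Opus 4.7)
The plan is to interpolate $M$ through the analytic family $\mathcal{S}_M$ of \cref{definition: mathcalS}. Since $A, B \in \text{Sym}_2^{\tt{pd}}(\mathbb{R}_{>0})$, simultaneous diagonalization yields $\mathcal{S}_M(\theta) = A^\theta \otimes B^\theta$ for every $\theta \in \mathbb{R}$, so each member of the family is a tensor product and thus has $\rho_{\tt{tensor}}(\mathcal{S}_M(\theta)) = 0$ by \cref{lemma: tensorPolynomial}. It therefore suffices to locate $\theta^*$ near $\theta = 1$ making $N := \mathcal{S}_M(\theta^*)$ diagonal distinct, positive definite with positive entries, and keeping every $F \in \mathcal{F}$ nonzero---unless some obstruction forces $\#$P-hardness. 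Writing out the six diagonal-pair differences of $\mathcal{S}_M(\theta)$, four of them---the index pairs $(1,2), (1,3), (2,4), (3,4)$---factor as $((A^\theta)_{11} - (A^\theta)_{22})$ or $((B^\theta)_{11} - (B^\theta)_{22})$ times a positive entry of $A^\theta$ or $B^\theta$, and are therefore nontrivial real analytic functions of $\theta$ because $A_{11} \ne A_{22}$ and $B_{11} \ne B_{22}$ make them nonzero at $\theta = 1$. The remaining two are $\Delta_{23}(\theta) = (A^\theta)_{11}(B^\theta)_{22} - (A^\theta)_{22}(B^\theta)_{11}$ and $\Delta_{14}(\theta) = (A^\theta)_{11}(B^\theta)_{11} - (A^\theta)_{22}(B^\theta)_{22}$, and the analysis forks on whether either of these is identically zero.

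If both $\Delta_{14}$ and $\Delta_{23}$ are nontrivial, I apply \cref{corollary: stretchingWorksPositive} to the countable set $\mathcal{F} \cup \{\xi_{ij} : i \ne j \in [4]\}$, where $\xi_{ij} : N \mapsto N_{ii} - N_{jj}$; each $\xi_{ij}(\mathcal{S}_M(\cdot))$ is nontrivial, so the corollary furnishes $\theta^* \in \mathbb{R}$ with $N = \mathcal{S}_M(\theta^*) \in \mathfrak{R}(M, \mathcal{F}) \cap \text{Sym}_4^{\tt{pd}}(\mathbb{R}_{>0})$ diagonal distinct, while $\rho_{\tt{tensor}}(N) = 0$ holds automatically. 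A quick sanity check rules out the possibility that both $\Delta_{14}$ and $\Delta_{23}$ vanish identically: the sum $\Delta_{14} + \Delta_{23}$ factors as $((A^\theta)_{11} - (A^\theta)_{22})((B^\theta)_{11} + (B^\theta)_{22})$, and the second factor is strictly positive, so $\Delta_{14} + \Delta_{23} \equiv 0$ would force $(A^\theta)_{11} \equiv (A^\theta)_{22}$, contradicting the value at $\theta = 1$.

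The main obstacle is the case in which exactly one of $\Delta_{14}, \Delta_{23}$ vanishes identically, and the plan is to show this entails $\#$P-hardness directly. For $\Delta_{23} \equiv 0$: evaluating at $\theta = 1$ gives $A_{11}/A_{22} = B_{11}/B_{22} = r \ne 1$, and evaluating at $\theta = 2$ after substituting this relation cancels the factor $r^2 - 1$ to yield $A_{22}B_{12} = A_{12}B_{22}$; together these equations force $B = \kappa A$ for some $\kappa > 0$. Then $M = \kappa(A \otimes A)$, and for any graph $G$ we get $Z_M(G) = \kappa^{|E|} Z_A(G)^2$; since $Z_A(G) > 0$ (entries of $A$ are positive), $\PlEVAL(A)$ reduces polynomially to $\PlEVAL(M)$. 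Because $A \in \text{Sym}_2^{\tt{pd}}(\mathbb{R}_{>0})$ satisfies $A_{12} \ne 0$, $A_{11}A_{22} \ne A_{12}^2$, and $A_{11} \ne A_{22}$, \cref{theorem: fullDomain2Hardness} yields $\#$P-hardness of $\PlEVAL(A)$, hence of $\PlEVAL(M)$. The case $\Delta_{14} \equiv 0$ is symmetric after conjugating $B$ by the coordinate swap in the second tensor factor: the analogous calculation yields $B = \kappa P A P^T$ where $P$ is the $2 \times 2$ swap, whence $M$ coincides with $\kappa(A \otimes A)$ after the row-column permutation $(12)(34)$ of $[4]$, which preserves $Z_M$, and the same hardness conclusion follows. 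The core technical step throughout is the $\theta = 2$ expansion extracting the secondary proportionality $A_{22}B_{12} = A_{12}B_{22}$; the remaining verifications are routine bookkeeping with \cref{corollary: stretchingWorksPositive}.
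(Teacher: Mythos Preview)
Your proof is correct and takes a genuinely different route from the paper. The paper first normalizes to $A_{11}>A_{22}$, $B_{11}>B_{22}$, observes that row $1$ and row $4$ are not order identical with any other row, and then splits on whether $M_{22}=M_{33}$: if not, $M$ is already diagonal distinct; if so, it further splits on whether rows $2$ and $3$ are order identical. When they are not, the paper invokes the full p.o.\,distinct $\Rightarrow$ diagonal distinct machinery (\cref{corollary: ordDistDiagDistReduct}) to get a diagonal distinct $N$, and since that machinery does \emph{not} preserve tensor structure, it must fall back on the big hammer \cref{theorem: TensorOrHard} when $\rho_{\tt{tensor}}(N)\neq 0$. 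When rows $2,3$ are order identical, a multiset comparison of those two rows yields $B=\kappa A$ and hardness as you do. Your approach bypasses all of this by observing that $\mathcal{S}_M(\theta)=A^\theta\otimes B^\theta$ keeps the tensor structure along the entire analytic family, so \cref{corollary: stretchingWorksPositive} applied to $\mathcal{F}\cup\{\xi_{ij}\}$ directly produces a diagonal distinct $N$ with $\rho_{\tt{tensor}}(N)=0$, without ever needing \cref{corollary: ordDistDiagDistReduct} or \cref{theorem: TensorOrHard}; the only obstruction is one of $\Delta_{14},\Delta_{23}$ vanishing identically, which you reduce (via the $\theta=1,2$ evaluations and the coordinate-swap symmetry) to the same $B=\kappa A$ endgame. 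Your argument is more self-contained; the paper's route is less elementary but reuses existing infrastructure. The paper in fact notes this alternative tensor-preserving route in the Remark following \cref{lemma: TensorDiagDistHard}.
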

\begin{proof}
    We may assume without loss of generality that $A_{11} > A_{22}$, and
    that $B_{11} > B_{22}$, by permuting the rows and columns
    of $A$ and $B$ (and correspondingly, $M$).
    We can see that $M_{11} = A_{11}B_{11} > M_{ij} > M_{44}$ for any $(i,j) \ne (1,1), (4,4)$.
   So, rows $1$ and $i$ are
    not order identical for any $i \ne 1$. 
    Similarly, rows $4$ and $i$ are
    not order identical for any $i \ne 4$.  
    We have $M_{11} > M_{22} > M_{44}$ and  $M_{11} >  M_{33} > M_{44}$. 
    If $M_{22} \neq M_{33}$, then $M$ is already diagonal distinct,
    and we are done.
    So, we may assume that $M_{22} = M_{33}$.
    Let us first assume that rows $2$ and $3$ are not order identical.
    Then $M$ is p.o.\,distinct, and by \cref{corollary: ordDistDiagDistReduct}
    there exists some diagonal distinct
    $N \in \mathfrak{R}(M, \mathcal{F}) \cap
    \text{Sym}_{4}^{\tt{pd}}(\mathbb{R}_{> 0})$.
    If $\rho_{\tt{tensor}}(N) \neq 0$, then
    \cref{theorem: TensorOrHard}
    immediately implies that $\PlEVAL(N) \leq \PlEVAL(M)$
    is $\#$P-hard.
On the other hand, if $\rho_{\tt{tensor}}(N) = 0$ then $N$ is our required matrix.

    Finally, we consider the case where rows $2$ and $3$
    are order identical.
    We see that the multi-set of elements in row $2$ of $M$
    is $\{A_{11}B_{12}, A_{11}B_{22}, A_{12}B_{12}, A_{12}B_{22}\}$,
    and the multi-set of elements in row $3$ of $M$ is
    $\{A_{12}B_{11}, A_{12}B_{12}, A_{22}B_{11}, A_{22}B_{12}\}$.
    We can see that $A_{12}B_{12}$ and $A_{11}B_{22} = A_{22}B_{11}$ (which are 
     $M_{22} = M_{33}$) 
    appear in both row $2$ and row $3$.
    So, it must be the case that either
    $A_{11}B_{12} = A_{22}B_{12}$ and $A_{12}B_{22} = A_{12}B_{11}$, 
    or
    $A_{11}B_{12} = A_{12}B_{11}$ and $A_{12}B_{22} = A_{22}B_{12}$.
    Since we have assumed that $A_{11} > A_{22}$, and that
    $B_{11} > B_{22}$, we can see the first option here
    is not possible.
    So, we see that in fact,
    $$\frac{A_{11}}{A_{12}} = \frac{B_{11}}{B_{12}}, ~~\text{ and }~~
    \frac{A_{22}}{A_{12}} = \frac{B_{22}}{B_{12}}.$$
    But this means that $B = \kappa \cdot A$
    for some constant $\kappa > 0$, and in this case, we see that
    for any graph $G = (V, E)$,
    $$Z_{M}(G) = Z_{A \otimes B}(G) = Z_{A}(G) \cdot Z_{\kappa \cdot A}(G) =
    \kappa^{|E|} \cdot (Z_{A}(G))^{2}.$$
    Since $A \in \text{Sym}_{2}(\mathbb{R}_{> 0})$, 
    we see that for any graph $G = (V, E)$,
    $Z_{A}(G) > 0$.
    So, given oracle access to $\PlEVAL(M)$, we can compute
    $Z_{M}(G)$ for any planar graph $G = (V, E)$, and use it to compute
    $$Z_{A}(G) = \sqrt{\frac{Z_{M}(G)}{\kappa^{|E|}}}.$$
    So, we see that $\PlEVAL(A) \leq \PlEVAL(M)$.
    Since $A \in \text{Sym}_{2}^{\tt{pd}}(\mathbb{R}_{> 0})$, and
    $A_{11} > A_{22}$, \cref{theorem: fullDomain2Hardness}
    implies that $\PlEVAL(A) \leq \PlEVAL(M)$ is
    $\#$P-hard.
\end{proof}

\begin{lemma}\label{lemma: TensorDiagDistHard}
    Let $M \in \text{Sym}_{4}^{\tt{pd}}(\mathbb{R}_{> 0})$ be a diagonal
    distinct matrix,
    such that $M = A \otimes B$ for some $A, B \in
    \text{Sym}_{2}^{\tt{pd}}(\mathbb{R})$.
    Then, $\PlEVAL(M)$ is $\#$P-hard.
\end{lemma}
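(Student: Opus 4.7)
\emph{Proof plan.} The diagonal of $M = A \otimes B$ consists of the four products $A_{ii}B_{jj}$ for $i, j \in [2]$, so $M$ being diagonal distinct forces $A_{11} \neq A_{22}$ and $B_{11} \neq B_{22}$. The matrices $A, B \in \text{Sym}_{2}^{\tt{pd}}(\mathbb{R}_{> 0})$ then have unequal diagonals, positive (hence nonzero) off-diagonals, and full rank ($A_{11}A_{22} > A_{12}^{2}$ and likewise for $B$), so \cref{theorem: fullDomain2Hardness} yields that $\PlEVAL(A)$ is $\#$P-hard. My goal is to reduce $\PlEVAL(A)$ to $\PlEVAL(M)$.

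\emph{Main reduction via stretching.} Diagonalize $A = H_A D_A H_A^{\tt{T}}$ and $B = H_B D_B H_B^{\tt{T}}$, with $D_A = \text{diag}(\lambda_1, \lambda_2)$ and $D_B = \text{diag}(\mu_1, \mu_2)$. Since $A, B$ are $2 \times 2$ positive definite with positive off-diagonals, we have $\lambda_1 > \lambda_2 > 0$ and $\mu_1 > \mu_2 > 0$. Then $M = H D H^{\tt{T}}$ where $H = H_A \otimes H_B$ is orthogonal and $D = D_A \otimes D_B = \text{diag}(\lambda_1\mu_1, \lambda_1\mu_2, \lambda_2\mu_1, \lambda_2\mu_2)$. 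Take $\Delta = D_A \otimes I_2 = \text{diag}(\lambda_1, \lambda_1, \lambda_2, \lambda_2)$, whose entries are all nonzero, and note $H \Delta H^{\tt{T}} = A \otimes I_2$. Since $A \otimes I_2$ forces each connected component's second coordinate to be constant, a direct counting argument gives $Z_{A \otimes I_2}(G) = 2^{c(G)} Z_A(G)$ for every planar $G$ with $c(G)$ connected components, so $\PlEVAL(A \otimes I_2) \equiv \PlEVAL(A)$ is $\#$P-hard. By \cref{lemma: stretchingBasic}, $\PlEVAL(A \otimes I_2) \leq \PlEVAL(M)$ \emph{provided} $\mathcal{L}(\lambda_1\mu_1, \lambda_1\mu_2, \lambda_2\mu_1, \lambda_2\mu_2) \subseteq \mathcal{L}(\lambda_1, \lambda_1, \lambda_2, \lambda_2)$.

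\emph{The lattice condition.} Setting $r = \lambda_1/\lambda_2 > 1$ and $s = \mu_1/\mu_2 > 1$, the inclusion reduces to the implication $r^{a+b} s^{a+c} = 1 \Longrightarrow a+b = 0$, which is equivalent to $\log r$ and $\log s$ being linearly independent over $\mathbb{Q}$. In the generic case this is immediate and the proof is complete. \emph{The main obstacle} is the non-generic case $r^p = s^q$ for some nonzero integers $p, q$, where $\mathcal{L}(D)$ acquires extra relations and $\Delta = D_A \otimes I_2$ becomes inadequate. To circumvent this I would perturb via thickening interpolation (\cref{lemma: thickeningInterpolation}): for any generating set of the entries of $M$, writing each entry of $A$ and of $B$ as a monomial in the generators shows that the map $\mathbf{p} \mapsto \mathcal{T}_M(\mathbf{p})$ preserves the tensor structure, so $\mathcal{T}_M(\mathbf{p}) = A_{\mathbf{p}} \otimes B_{\mathbf{p}}$. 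The eigenvalue ratios $r(\mathbf{p}), s(\mathbf{p})$ depend real-analytically on $\mathbf{p}$, so the locus where they satisfy any fixed nontrivial multiplicative relation is a proper analytic subvariety of the parameter space. A density argument in the spirit of \cref{lemma: thickeningWorks} then yields some $\mathbf{p}^*$ in a neighborhood of the original generators for which $\mathcal{T}_M(\mathbf{p}^*) \in \text{Sym}_{4}^{\tt{pd}}(\mathbb{R}_{> 0})$ is diagonal distinct and $r(\mathbf{p}^*), s(\mathbf{p}^*)$ are multiplicatively independent. Applying the generic argument to $\mathcal{T}_M(\mathbf{p}^*)$ then gives $\#$P-hardness of $\PlEVAL(\mathcal{T}_M(\mathbf{p}^*)) \leq \PlEVAL(M)$, completing the proof.
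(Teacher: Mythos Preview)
Your endgame --- reducing to $\PlEVAL(A)$ via $\PlEVAL(A\otimes I_2)$ using \cref{lemma: stretchingBasic} with $\Delta = D_A\otimes I_2$ --- is exactly the paper's final step, and your generic case (when $\log r$ and $\log s$ are $\mathbb{Q}$-linearly independent) is correct. The gap is in the non-generic case. You assert that for each nonzero $(m,n)$ the locus $\{r(\mathbf{p})^m = s(\mathbf{p})^n\}$ is a \emph{proper} analytic subvariety, but analyticity alone does not give properness: you must rule out that the relation holds identically in $\mathbf{p}$. This is not obvious. The ratios $r,s$ involve square roots, so no polynomial-identity argument applies, and under thickening the ``$A$-part'' and ``$B$-part'' are governed by monomials in $\mathbf p$ whose exponent vectors can be highly dependent --- for instance if all entries of $M$ happen to be integer powers of a single real $g>1$ (so $d=1$), both $r(\mathbf{p})$ and $s(\mathbf{p})$ become functions of a single real parameter, and nothing you have written precludes a functional identity $r(p)^m \equiv s(p)^n$. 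Closing this would require a genuine argument (e.g.\ an asymptotic expansion near the boundary of the positive-definite cone, pushed to enough orders that the diagonal-distinctness hypothesis forces $A$ and $B$ apart), which you have not supplied.

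The paper sidesteps this issue. Rather than perturbing $M$ by thickening to reach multiplicative independence, it applies the confluent-reduction machinery (\cref{lemma: diagDistinctImpliesOrderDist}, \cref{corollary: pairwiseDistImpliesIndep}, \cref{theorem: confluentReduction}) repeatedly to pass from $M$ to a diagonal-distinct $N \in \mathfrak{R}(M,\mathcal{F}_M)\cap\text{Sym}_{4}^{\tt{pd}}(\mathbb{R}_{> 0})$ --- still a tensor product, as noted in the remark following the proof --- whose eigenvalue lattice $\overline{\mathcal{L}_N}$ contains no nonzero confluent vector. Then \cref{lemma: almostAllConfluent} together with a short direct argument forces $\mathcal{L}_N = \{c\,(1,-1,-1,1):c\in\mathbb{Z}\}$, which is precisely the multiplicative-independence condition you were after. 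At that point your stretching step to $A'\otimes I$ applies verbatim to $N$. So your direct route would be more economical if the gap could be closed, but as written it is the paper's heavier machinery that actually secures the needed lattice inclusion.
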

\begin{proof}
    Since $M$ is diagonal distinct, it must be the
    case that $A_{11} \neq A_{22}$, and
    $B_{11} \neq B_{22}$.
    By permuting the rows and columns of $A$ and $B$ (and
    correspondingly, the rows and columns of $M$), we may assume that
    $A_{11} > A_{22}$, and
    that $B_{11} > B_{22}$.

    Let $\mathcal{L}_{M}$ be the lattice of the
    eigenvalues of $M$, and let $\mathcal{B}$ be a lattice
    basis of $\mathcal{L}_{M}$.
    If there is any $\mathbf{x} \in \mathcal{B}$  that  is confluent,
    we can use the combination of
    \cref{lemma: diagDistinctImpliesOrderDist},
    \cref{corollary: pairwiseDistImpliesIndep}, and
    \cref{theorem: confluentReduction}, to find some
    diagonal distinct $N_{1} \in 
    \mathfrak{R}(M, \mathcal{F}_{M} \cup \{\Psi_{\mathbf{x}},
    \rho_{\tt{diag}}\})
    \cap \text{Sym}_{4}^{\tt{pd}}(\mathbb{R}_{> 0})$.
    
    We know from \cref{lemma: dimReduction} that after repeating
    this step at most 4 times, we will have
    some diagonal distinct $N \in \text{Sym}_{4}^{\tt{pd}}(\mathbb{R}_{> 0})$,
    such that $\PlEVAL(N) \leq \PlEVAL(M)$, and
    $$\mathbf{0} \ne \mathbf{x} \in \overline{\mathcal{L}_{N}} \implies 
    \mathbf{x} \text{ is not confluent.}$$
    Once again, we see that if $\rho_{\tt{tensor}}(N) \neq 0$,
    \cref{theorem: TensorOrHard}
    immediately implies that $\PlEVAL(N) \leq \PlEVAL(M)$
    is $\#$P-hard, so let us assume $\rho_{\tt{tensor}}(N) = 0$.
    Now, 
    since any non-zero $\mathbf{x} \in \overline{\mathcal{L}_{N}}$
    is not confluent, \cref{lemma: almostAllConfluent} implies that
    \begin{equation}\label{tensorHardnessEqn}
        \overline{\mathcal{L}_{N}} \subseteq \left\{(x_{1}, x_{2}, x_{3}, x_{4})
        \in \chi_{4}
        \hspace{0.08cm}\Big|\hspace{0.1cm}
        |x_{1}| = |x_{2}| = |x_{3}| = |x_{4}|\right\}.
    \end{equation}
    We can now assume (after some permutation of the rows and columns
    of $N$) that $N = A' \otimes B'$ for
    some $A', B' \in \text{Sym}_{2}^{\tt{pd}}(\mathbb{R}_{> 0})$.
    Since $N$ is diagonal distinct, we see that
    $A'$ and $B'$ are diagonal distinct as well.
    We may let $A' = H_{1}D_{1}H_{1}^{\tt{T}}$, and
    $B' = H_{2}D_{2}H_{2}^{\tt{T}}$, for some orthogonal
    matrices $H_{1}, H_{2}$, and diagonal matrices $D_{1}, D_{2}$.
    Since $N = A' \otimes B'$, this implies that
    $N = (H_{1} \otimes H_{2})(D_{1} \otimes D_{2})
    (H_{1} \otimes H_{2})^{\tt{T}}$.
    We may further assume that the eigenvalues of $A'$ and $B'$
    are $(\lambda_{1}, \lambda_{2})$, and $(\mu_{1}, \mu_{2})$
    respectively.
    This implies that the eigenvalues of $N = A' \otimes B'$ are:
    $$\lambda_{1}\mu_{1}, ~~~~ \lambda_{1}\mu_{2}, ~~~~
    \lambda_{2}\mu_{1}, ~~~~ \lambda_{2}\mu_{2}.$$
    We can clearly see that $(\lambda_{1}\mu_{1})(\lambda_{2}\mu_{2})
    = (\lambda_{1}\mu_{2})(\lambda_{2}\mu_{1})$.
    So, $(1, -1, -1, 1) \in \mathcal{L}_{N}$.
    Let us now consider any $\mathbf{0} \neq \mathbf{x} \in 
    \mathcal{L}_{N}$.
    From \cref{tensorHardnessEqn}, we know that it must be the case that
    $|x_{1}| = |x_{2}| = |x_{3}| = |x_{4}|$.
    We may assume without loss of generality, that
    $x_{1} > 0$.
    If $x_{2} = x_{1}$,  then $\mathbf{x}$ being in $\chi_{4}$  implies that $x_3 = x_4 = -x_1$.
    Then
    $x_1(1, 1, -1, -1) \in \mathcal{L}_{N}$.
    This implies that $x_1(1, -1, -1, 1) + x_1(1, 1, -1, -1) = x_1(2, 0, -2, 0)
    \in \mathcal{L}_{N} \subseteq \overline{\mathcal{L}_{N}}$
    which contradicts \cref{tensorHardnessEqn}.
    So, we conclude that $x_{2} = - x_{1}$. By a symmetric argument $x_{3} = - x_{1}$.
    Since $\mathbf{x} \in \chi_{4}$, this forces
    $x_{4} = x_{1}$.
    Hence $\mathbf{x} = x_{1}(1, -1, -1, 1)$.
    Therefore, we conclude that if
    $\mathbf{x} \in \mathcal{L}_{N}$, then $\mathbf{x}$ must be
    a multiple of $(1, -1, -1, 1)$.
    In other words,
    $$\mathcal{L}_{N} = \left\{ c \cdot (1, -1, -1, 1) \ |\ 
    c \in \mathbb{Z} \right\}.$$
    
    We now define  $N' = A' \otimes I$, where $I$ is the 2 by 2 identity matrix.
    Recall that $N = A' \otimes B' = 
    (H_{1} \otimes H_{2})(D_{1} \otimes D_{2})
    (H_{1} \otimes H_{2})^{\tt{T}}$, and we can now see that
    $N' = A' \otimes I = 
    (H_{1} \otimes H_{2})(D_{1} \otimes I)
    (H_{1} \otimes H_{2})^{\tt{T}}$.
    So, if we let $H = H_{1} \otimes H_{2}$, $D = D_{1} \otimes D_{2}$,
    and $D' = D_{1} \otimes I$, we see that
    $N = HDH^{\tt{T}}$, and $N' = HD'H^{\tt{T}}$.
    Moreover, by construction, we note that the eigenvalues of
    $N'$ are: $(\lambda_{1}, \lambda_{2}, \lambda_{1}, \lambda_{2})$.
    So, $(1, -1, -1, 1) \in \mathcal{L}_{N'}$,
    which implies that
    $$\mathcal{L}_{N} \subseteq \mathcal{L}_{N'}.$$
    But then, \cref{lemma: stretchingBasic} implies that
    $\PlEVAL(N') \leq \PlEVAL(N)$.
    But by construction of $N'$, we see that for any graph
    $G = (V, E)$,
    $$Z_{N'}(G) = Z_{A' \otimes I}(G) = Z_{A'}(G) \cdot Z_{I}(G).$$
    We note that for any connected graph $G = (V, E)$, $Z_{I}(G) = 2$,
    as $I$ is the 2 by 2 identity matrix.
    This means that if we had oracle access to $\PlEVAL(N')$, we could
    use it to compute $Z_{A'}(G)$
    for any connected (and thus all) planar graphs $G = (V, E)$.
    So, we see that
    $\PlEVAL(A') \leq \PlEVAL(N')$.
    Since $(A')_{11} \neq (A')_{22}$ by construction,
    \cref{theorem: fullDomain2Hardness} implies that
    $\PlEVAL(A')$ is $\#$P-hard.
    This proves that $\PlEVAL(A') \leq \PlEVAL(N') \leq 
    \PlEVAL(N) \leq \PlEVAL(M)$ is $\#$P-hard as well.
\end{proof}

\begin{remark*}
    Let $M = A \otimes B$ for any $M \in \text{Sym}_{4}(\mathbb{R}_{> 0})$.
    In the proof above, we made use of the argument that
    if the matrices $N$ we obtained using the
    theorems and lemmas from the previous section satisfy
    the equation $\rho_{\tt{tensor}}(N) \neq 0$,
    then we can use \cref{theorem: TensorOrHard} to prove
    the $\#$P-hardness of $\PlEVAL(N) \leq \PlEVAL(M)$.
There is an alternative route in this proof. This alternative route starts by observing that,
    in fact, if $\rho_{\tt{tensor}}(M) = 0$, then
    any matrix $N$ that we obtain using the techniques
    from the rest of this paper will also satisfy the
    equation $\rho_{\tt{tensor}}(N) = 0$.
    This is because of the following set of properties:
    $T_{n}(M) = T_{n}(A) \otimes T_{n}(B)$,
    $S_{n}(M) = S_{n}(A) \otimes S_{n}(B)$, and
    $R_{n}(M) = R_{n}(A) \otimes R_{n}(B)$,
    for all $n \geq 1$.
    Moreover, if we let the entries of $M$ be generated by
    some $\{g_{t}\}_{t \in [d]}$, this implies that
    $$\rho_{\tt{tensor}}(M) = 0 \implies \rho_{\tt{tensor}}(\mathcal{T}_{M}(\mathbf{p})) = 0$$
    for all $\mathbf{p} \in \mathbb{R}^{d}$.
    Similarly,
    $$\rho_{\tt{tensor}}(M) = 0 \implies 
    \rho_{\tt{tensor}}(\mathcal{S}_{M}(\theta))) = 0, \text{ and }
    \rho_{\tt{tensor}}(\mathcal{R}_{M}(\theta)) = 0,$$
    for all $\theta \in \mathbb{R}$.
\end{remark*}

\cref{lemma: TensorDiagDist} and
\cref{lemma: TensorDiagDistHard} form the last piece of the puzzle for
matricces in $\text{Sym}_{4}^{\tt{pd}}(\mathbb{R}_{> 0})$,
and  now we can prove the following theorem.

\begin{theorem}\label{theorem: posDefPositiveDichotomy}
    Let $M \in \text{Sym}_{4}^{\tt{pd}}(\mathbb{R}_{> 0})$.
    Then $\PlEVAL(M)$ is $\#$P-hard, unless
    $M$ is isomorphic to $A \otimes B$ for some
    $A, B \in \text{Sym}_{2}^{\tt{pd}}(\mathbb{R}_{> 0})$
    such that $A_{11} = A_{22}$, and $B_{11} = B_{22}$,
    in which case, $\PlEVAL(M)$ is polynomial time tractable.
\end{theorem}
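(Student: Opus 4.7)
The plan is to split on whether $\rho_{\tt{tensor}}(M) = 0$ or not. If $\rho_{\tt{tensor}}(M) \neq 0$, then \cref{theorem: TensorOrHard} immediately gives $\#$P-hardness of $\PlEVAL(M)$, and we are done. So the remainder of the argument handles the case when $\rho_{\tt{tensor}}(M) = 0$, i.e., when $M$ is isomorphic to $A \otimes B$ for some $A, B \in \text{Sym}_{2}(\mathbb{R})$. Since permuting the rows and columns of $M$ by the same permutation preserves $\PlEVAL(M)$, we may assume $M = A \otimes B$ outright. Using the remark following \cref{lemma: tensorSimple}, together with $M \in \text{Sym}_{4}^{\tt{pd}}(\mathbb{R}_{>0})$, we may further assume that $A, B \in \text{Sym}_{2}^{\tt{pd}}(\mathbb{R}_{>0})$.

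Next I would perform a case analysis on the two Boolean conditions $[A_{11} = A_{22}]$ and $[B_{11} = B_{22}]$. If both hold, then \cref{theorem: fullDomain2Hardness} puts both $\PlEVAL(A)$ and $\PlEVAL(B)$ in P, so $Z_M(G) = Z_A(G)\cdot Z_B(G)$ is P-time computable on planar $G$ — this is the tractable exception in the theorem statement. If exactly one condition holds, say $A_{11} = A_{22}$ and $B_{11} \neq B_{22}$, then \cref{lemma: tensorTractable} gives $\#$P-hardness directly (and the symmetric case is identical by swapping roles).

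The remaining and hardest case is when $A_{11} \neq A_{22}$ and $B_{11} \neq B_{22}$ simultaneously. Here I would invoke \cref{lemma: TensorDiagDist} with $\mathcal{F} = \emptyset$: it yields either $\#$P-hardness of $\PlEVAL(M)$ directly, or a \emph{diagonal distinct} $N \in \mathfrak{R}(M, \emptyset) \cap \text{Sym}_{4}^{\tt{pd}}(\mathbb{R}_{>0})$ with $\rho_{\tt{tensor}}(N) = 0$. The key point is that the hypothesis $\rho_{\tt{tensor}}(N) = 0$ means $N$ is still of tensor form (up to isomorphism), so $N = A' \otimes B'$ with $A', B' \in \text{Sym}_{2}^{\tt{pd}}(\mathbb{R}_{>0})$; moreover, because $N$ is diagonal distinct, necessarily $(A')_{11} \neq (A')_{22}$ and $(B')_{11} \neq (B')_{22}$. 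Then \cref{lemma: TensorDiagDistHard} finishes the job: $\PlEVAL(N)$ is $\#$P-hard, and since $\PlEVAL(N) \leq \PlEVAL(M)$, so is $\PlEVAL(M)$.

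The main obstacle is ensuring the bookkeeping is tight in the last case. Concretely, I would need to check that the matrix $N$ produced by \cref{lemma: TensorDiagDist} really falls within the hypothesis of \cref{lemma: TensorDiagDistHard} — in particular, that being diagonal distinct together with $\rho_{\tt{tensor}}(N) = 0$ forces a tensor decomposition into $2\times 2$ positive definite factors \emph{both} with unequal diagonals. The argument for this mirrors the reasoning in the proof of \cref{lemma: TensorDiagDist} itself: $\rho_{\tt{tensor}}(N) = 0$ is a closed condition describing tensor structure up to isomorphism, and after the (harmless) isomorphism the two $2\times 2$ factors lie in $\text{Sym}_{2}^{\tt{pd}}(\mathbb{R}_{>0})$ by the same remark used at the outset, while diagonal distinctness of $N$ immediately rules out equality of the diagonal entries of either factor. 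All other steps — the tractable exception and the two intermediate sub-cases — are immediate consequences of already-established lemmas.
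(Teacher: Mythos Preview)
Your proposal is correct and follows essentially the same approach as the paper's proof: split on $\rho_{\tt{tensor}}(M)$, reduce to the tensor case, and then case on the diagonal equalities of the two factors, invoking \cref{lemma: tensorTractable}, \cref{lemma: TensorDiagDist}, and \cref{lemma: TensorDiagDistHard} exactly as the paper does. Your bookkeeping concern about \cref{lemma: TensorDiagDistHard} is slightly overcautious---its hypothesis only requires $N$ to be diagonal distinct and of tensor form (not that the factors have unequal diagonals; that is derived inside the proof)---so once $\rho_{\tt{tensor}}(N)=0$ gives $N\cong A'\otimes B'$ with $A',B'\in\text{Sym}_2^{\tt{pd}}(\mathbb{R}_{>0})$ (via the remark after \cref{lemma: tensorSimple}) and diagonal distinctness is preserved under the isomorphism, the lemma applies directly.
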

\begin{proof}
    We know from \cref{theorem: TensorOrHard} that
    $\PlEVAL(M)$ is $\#$P-hard, unless
    $\rho_{\tt{tensor}}(M) = 0$.
    But, if $\rho_{\tt{tensor}}(M) = 0$
    then (by the defining property of $\rho_{\tt{tensor}}$) $M$ is isomorphic to
    $A \otimes B$ for some $A, B \in
    \text{Sym}_{2}^{\tt{pd}}(\mathbb{R}_{> 0})$.
    By permuting the rows and columns of $M$, we may
    assume that in fact, $M = A \otimes B$.
    
    We will now consider the diagonal entries of $A$ and $B$.
    If $A_{11} = A_{22}$, but $B_{11} \neq B_{22}$, or
    if $B_{11} = B_{22}$, but $A_{11} \neq A_{22}$, 
    \cref{lemma: tensorTractable} implies that
    $\PlEVAL(M)$ is $\#$P-hard.
    If $A_{11} \neq A_{22}$, and $B_{11} \neq B_{22}$,
    \cref{lemma: TensorDiagDist} implies that either
    $\PlEVAL(M)$ is $\#$P-hard, or there
    exists some diagonal distinct $N \in \mathfrak{R}(M,
    \mathcal{F}_{M}) \cap
    \text{Sym}_{4}^{\tt{pd}}(\mathbb{R}_{> 0})$ such that
    $\rho_{\tt{tensor}}(N) = 0$.
    But from \cref{lemma: TensorDiagDistHard}, we see that
    $\PlEVAL(N) \leq \PlEVAL(M)$ is $\#$P-hard.
    So, we see that even if $M$ is isomorphic to some
    $A \otimes B$, unless
    $A_{11} = A_{22}$ and $B_{11} = B_{22}$,
    $\PlEVAL(M)$ is $\#$P-hard.
    Moreover, if $A_{11} = A_{22}$, and $B_{11} = B_{22}$,
    we see from \cref{lemma: tensorTractable} that
    $\PlEVAL(M)$ is polynomially tractable.
\end{proof}

We can extend this dichotomy to all full rank matrices
with some additional effort.
When we consider $M \in \text{Sym}_{4}^{\tt{F}}(\mathbb{R}_{> 0})$,
we come across a family of matrices for which
the gadgets we have constructed so far are insufficient
for proving the $\#$P-hardness of $\PlEVAL(M)$.
So, we will need to introduce a new edge gadget.
Given any graph $G = (V, E)$, we will construct the
graph $BG$ by replacing each  edge of $G$ with the
gadget in \cref{fig: bridgeGadget}.
Clearly, this gadget preserves planarity.
Moreover, we note that for any $M \in \text{Sym}_{q}(\mathbb{R})$,
$$Z_{M}(BG) = \sum_{\sigma: V \rightarrow [q]}\prod_{(u, v) \in E}
\sum_{a, b \in [q]}
M_{\sigma(u)a}M_{\sigma(u)b}M_{\sigma(v)a}M_{\sigma(v)b}M_{ab}
= \sum_{\sigma: V \rightarrow [q]}\prod_{(u, v) \in E}
(BM)_{\sigma(u)\sigma(v)},$$
where $BM \in \text{Sym}_{q}(\mathbb{R})$ such that
for all $i, j \in [q]$,
$$(BM)_{ij} = \sum_{a, b \in [q]}M_{ia}M_{ib}M_{ja}M_{jb}M_{ab}.$$
Therefore, we see that
$\PlEVAL(BM) \leq \PlEVAL(M)$
for all $M \in \text{Sym}_{q}(\mathbb{R})$.
We will first use this gadget to prove the
$\#$P-hardness of one special family of
matrices in $\text{Sym}_{4}^{\tt{F}}(\mathbb{R}_{> 0})$.

\setcounter{figure}{2}
\begin{figure}[ht]
    \centering
    \begin{subfigure}{0.45\textwidth}
        \centering
        \raisebox{0.8\height}{
            \scalebox{0.8}{\begin{tikzpicture}[line join=miter, draw opacity=1]

\node[circle, draw=black, fill=black] (c) at (-2, 0){};
\node[circle, draw=black, fill=black] (c) at (-1, 0){};
\node[circle, draw=black, fill=black] (c) at (1, 0){};
\node[circle, draw=black, fill=black] (c) at (2, 0){};
\node[circle, draw=black, fill=black] (c) at (0, 1){};
\node[circle, draw=black, fill=black] (c) at (0, -1){};

\draw[line width=0.5mm, black] (-2, 0) -- (-1, 0);
\draw[line width=0.5mm, black] (1, 0) -- (2, 0);

\draw[line width=0.5mm, black] (-1, 0) -- (0, 1);
\draw[line width=0.5mm, black] (-1, 0) -- (0, -1);
\draw[line width=0.5mm, black] (1, 0) -- (0, 1);
\draw[line width=0.5mm, black] (1, 0) -- (0, -1);
\draw[line width=0.5mm, black] (0, 1) -- (0, -1);

\end{tikzpicture}}
        }
	\caption{The edge gadget $B$.}
	\label{fig: bridgeGadget}
    \end{subfigure}
    \begin{subfigure}{0.45\textwidth}
        \centering
        \scalebox{0.8}{\begin{tikzpicture}[line join=miter, draw opacity=1]

\node[circle, draw=black, fill=black] (c) at (-1, -1){};
\node[circle, draw=black, fill=black] (c) at (-1, 1){};
\node[circle, draw=black, fill=black] (c) at (1, 1){};
\node[circle, draw=black, fill=black] (c) at (1, -1){};

\draw[line width=0.5mm, black] (-1, -1) -- (-1, 1);
\draw[line width=0.5mm, black] (-1, 1) -- (1, 1);
\draw[line width=0.5mm, black] (1, 1) -- (1, -1);
\draw[line width=0.5mm, black] (1, -1) -- (-1, -1);

\begin{scope}[xshift = 5cm]

\node[circle, draw=black, fill=black] (c) at (-2, -2){};
\node[circle, draw=black, fill=black] (c) at (-2, 2){};
\node[circle, draw=black, fill=black] (c) at (2, 2){};
\node[circle, draw=black, fill=black] (c) at (2, -2){};

\node[circle, draw=black, fill=black] (c) at (-2, -1){};
\node[circle, draw=black, fill=black] (c) at (-2, 1){};
\node[circle, draw=black, fill=black] (c) at (-3, 0){};
\node[circle, draw=black, fill=black] (c) at (-1, 0){};

\node[circle, draw=black, fill=black] (c) at (-1, 2){};
\node[circle, draw=black, fill=black] (c) at (1, 2){};
\node[circle, draw=black, fill=black] (c) at (0, 3){};
\node[circle, draw=black, fill=black] (c) at (0, 1){};

\node[circle, draw=black, fill=black] (c) at (2, -1){};
\node[circle, draw=black, fill=black] (c) at (2, 1){};
\node[circle, draw=black, fill=black] (c) at (3, 0){};
\node[circle, draw=black, fill=black] (c) at (1, 0){};

\node[circle, draw=black, fill=black] (c) at (-1, -2){};
\node[circle, draw=black, fill=black] (c) at (1, -2){};
\node[circle, draw=black, fill=black] (c) at (0, -3){};
\node[circle, draw=black, fill=black] (c) at (0, -1){};

\draw[line width=0.5mm, black] (-2, -2) -- (-2, -1);
\draw[line width=0.5mm, black] (-2, 1) -- (-2, 2);
\draw[line width=0.5mm, black] (-2, -1) -- (-3, 0);
\draw[line width=0.5mm, black] (-2, -1) -- (-1, 0);
\draw[line width=0.5mm, black] (-2, 1) -- (-3, 0);
\draw[line width=0.5mm, black] (-2, 1) -- (-1, 0);
\draw[line width=0.5mm, black] (-3, 0) -- (-1, 0);

\draw[line width=0.5mm, black] (-2, 2) -- (-1, 2);
\draw[line width=0.5mm, black] (1, 2) -- (2, 2);
\draw[line width=0.5mm, black] (-1, 2) -- (0, 3);
\draw[line width=0.5mm, black] (-1, 2) -- (0, 1);
\draw[line width=0.5mm, black] (1, 2) -- (0, 3);
\draw[line width=0.5mm, black] (1, 2) -- (0, 1);
\draw[line width=0.5mm, black] (0, 3) -- (0, 1);

\draw[line width=0.5mm, black] (2, 2) -- (2, 1);
\draw[line width=0.5mm, black] (2, -1) -- (2, -2);
\draw[line width=0.5mm, black] (2, 1) -- (3, 0);
\draw[line width=0.5mm, black] (2, 1) -- (1, 0);
\draw[line width=0.5mm, black] (2, -1) -- (3, 0);
\draw[line width=0.5mm, black] (2, -1) -- (1, 0);
\draw[line width=0.5mm, black] (3, 0) -- (1, 0);

\draw[line width=0.5mm, black] (2, -2) -- (1, -2);
\draw[line width=0.5mm, black] (-1, -2) -- (-2, -2);
\draw[line width=0.5mm, black] (-1, -2) -- (0, -3);
\draw[line width=0.5mm, black] (-1, -2) -- (0, -1);
\draw[line width=0.5mm, black] (1, -2) -- (0, -3);
\draw[line width=0.5mm, black] (1, -2) -- (0, -1);
\draw[line width=0.5mm, black] (0, -3) -- (0, -1);

\end{scope}

\end{tikzpicture}}
	\caption{An example graph $G$,
                    and the graph $BG$.}
	\label{fig: bridgeGadgetExample}
    \end{subfigure}
    \caption{The edge gadget $B$}
    \label{fig: bridgeGadgetAll}
\end{figure}
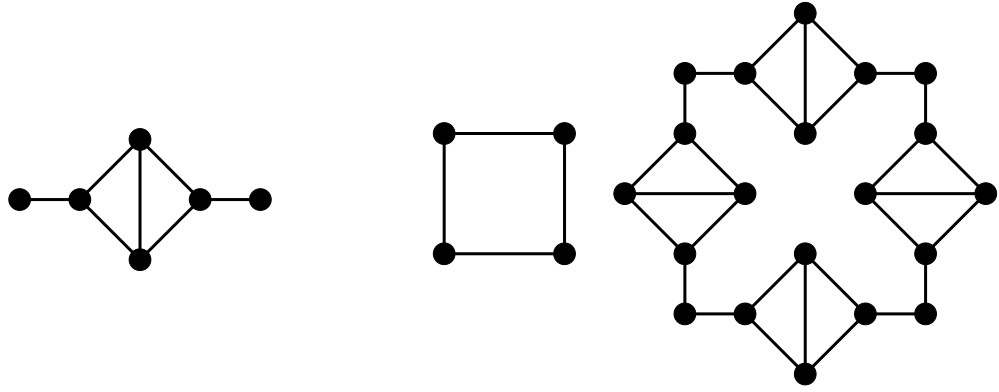

\begin{lemma}\label{lemma: fullRankSpecialHardness}
    Let $M \in \text{Sym}_{4}^{\tt{F}}(\mathbb{R}_{> 0})$
    such that
    $$M = \begin{pmatrix}
        p^{2} & pq & pq & q^{2}\\
        pq & q^{2} & p^{2} & pq\\
        pq & p^{2} & q^{2} & pq\\
        q^{2} & pq & pq & p^{2}\\
    \end{pmatrix}$$
    where $p \neq q \in \mathbb{R}_{> 0}$.
    Then, $\PlEVAL(M)$ is $\#$P-hard.
\end{lemma}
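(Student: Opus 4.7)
My plan is to apply the bridge gadget $B$ to $M$ and show that $BM$, up to a simultaneous row-column permutation, is a positive-definite matrix of Form~(III) with $\rho_{\tt{tensor}} \neq 0$, so that \cref{lemma: formIIIHardness} forces $\PlEVAL(BM)$ to be $\#$P-hard; since $\PlEVAL(BM) \le \PlEVAL(M)$ via the bridge gadget construction, this yields the lemma. The positive-definite machinery of earlier sections cannot be applied to $M$ directly: writing the indices $\{1,2,3,4\}$ as binary pairs $\{00, 01, 10, 11\}$ reveals the twisted-tensor identity $M_{(a_1,a_2),(b_1,b_2)} = A_{a_1 b_2}\, A_{a_2 b_1}$, where $A = \bigl(\begin{smallmatrix} p & q \\ q & p \end{smallmatrix}\bigr)$; this gives $M$ the spectrum $\{(p+q)^2, (p-q)^2, p^2 - q^2, -(p^2-q^2)\}$ and in particular one negative eigenvalue, and it shows that $M$ is not itself isomorphic to any $A' \otimes B'$.

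The first step is to evaluate $BM$ explicitly. The twisted-tensor form of $M$ causes the ten factors of $A$ in $(BM)_{(i_1,i_2),(j_1,j_2)} = \sum_{a, b} M_{ia} M_{ib} M_{ja} M_{jb} M_{ab}$ to decouple into two independent bilinear sums over the pairs $(a_1, b_2)$ and $(a_2, b_1)$, each equal to $F(\alpha,\beta,\gamma,\delta) := v_{\alpha\beta}^T A v_{\gamma\delta}$ for $v_{\alpha\beta} := (A_{\alpha 0}A_{\beta 0}, A_{\alpha 1}A_{\beta 1})^T$. The $x \leftrightarrow y$ symmetry of the defining sum yields $F(\alpha,\beta,\gamma,\delta) = F(\gamma,\delta,\alpha,\beta)$, so $(BM)_{(i_1,i_2),(j_1,j_2)} = F(i_2, j_2, i_1, j_1)^2$. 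Diagonalizing $A$ (eigenvalues $p \pm q$) shows that $F$ attains only four values: $a := p^5 + 2p^2q^3 + pq^4$, $b := p^4q + 2p^3q^2 + q^5$, $c := pq(p+q)(p^2+q^2)$, and $d := 2p^2q^2(p+q)$. Then $BM$ has diagonal $(a^2, b^2, b^2, a^2)$ (in the ordering $00, 01, 10, 11$), entries $d^2$ at the symmetric positions $(1,4)$ and $(2,3)$, and $c^2$ at the remaining eight off-diagonal positions. Swapping rows and columns $2 \leftrightarrow 4$ places this in Form~(III) with diagonal $(a^2, a^2, b^2, b^2)$.

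The next step is to verify $BM \in \text{Sym}_4^{\tt{pd}}(\mathbb{R}_{>0})$. In the orthogonal basis $\bigl\{(1, 0, 0, \pm 1)/\sqrt{2}, (0, 1, \pm 1, 0)/\sqrt{2}\bigr\}$ (the same basis that diagonalizes $M$), the matrix $BM$ decomposes into two $1 \times 1$ blocks with eigenvalues $a^2 - d^2$ and $b^2 - d^2$, and one $2 \times 2$ block $\bigl(\begin{smallmatrix} a^2 + d^2 & 2c^2 \\ 2c^2 & b^2 + d^2 \end{smallmatrix}\bigr)$. The $1 \times 1$ eigenvalues are strictly positive via the identities $a - d = p(p^2 - q^2)^2 > 0$ and $b - d = q(p^2 - q^2)^2 > 0$ for $p \neq q > 0$. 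The remaining condition is $(a^2 + d^2)(b^2 + d^2) > 4 c^4$, and this is the hard part of the proof. Introducing $P := \tfrac{1}{2}(p^2+q^2)^2$, $Q := \tfrac{1}{2}(p^2-q^2)^2$, $R := 2p^2q^2$, and $A_\pm := p \pm q$, the spectral expressions $a = A_+ P + A_- Q$, $b = A_+ P - A_- Q$, $c^2 = 2p^2q^2 A_+^2 P$, $d = A_+ R$ collapse the difference into a sum-of-squares identity
$(a^2 + d^2)(b^2 + d^2) - 4c^4 = \bigl(A_+^2(P^2 - R^2) - A_-^2 Q^2\bigr)^2 = 4 p^2 q^2 Q^2 (p+q)^4 (p^2+q^2)^2$,
which is strictly positive whenever $p \neq q > 0$.

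Finally I will verify $\rho_{\tt{tensor}}(BM) \neq 0$: if $BM$ were isomorphic to some $A' \otimes B'$, then its diagonal multiset $\{a^2, a^2, b^2, b^2\}$ would equal $\{A'_{11}B'_{11}, A'_{11}B'_{22}, A'_{22}B'_{11}, A'_{22}B'_{22}\}$, which (since $a \neq b$) forces either $A'_{11} = A'_{22}$ or $B'_{11} = B'_{22}$. Writing out $A' \otimes B'$ in either case and matching off-diagonal entries against the Form~(III) pattern of $(BM)^{(2\,4)}$ then forces $a = d$, contradicting $a - d = p(p^2 - q^2)^2 > 0$. Thus $\rho_{\tt{tensor}}(BM) \neq 0$, and \cref{lemma: formIIIHardness} delivers the $\#$P-hardness of $\PlEVAL(BM)$, and hence of $\PlEVAL(M)$. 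The principal obstacle throughout is the algebraic sum-of-squares identity underwriting positive-definiteness of $BM$; once that is in hand, everything else is careful bookkeeping.
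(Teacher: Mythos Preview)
Your proof is correct and takes a genuinely different route from the paper. The paper applies the bridge gadget to the thickened matrix $T_n M$ rather than to $M$ itself, then squares to obtain $N_n = (B(T_n M))^2$; it then computes $\det(B(T_nM))$ and the three tensor-obstruction polynomials $\xi_1,\xi_2,\xi_3$ as explicit degree-$36$ and degree-$38$ polynomials in $p^n,q^n$, and argues that these are nonzero for sufficiently large $n$ by inspecting leading terms, finally invoking \cref{theorem: TensorOrHard}. Your approach is considerably cleaner: the twisted-tensor identity $M_{(i_1,i_2),(j_1,j_2)} = A_{i_1 j_2}A_{i_2 j_1}$ collapses $(BM)_{ij}$ to a single square $F(i_2,j_2,i_1,j_1)^2$, the spectral parametrisation $a = A_+P + A_-Q$, $b = A_+P - A_-Q$, $c^2 = 2p^2q^2A_+^2P$, $d = A_+R$ makes positive-definiteness an exact sum-of-squares identity rather than an asymptotic estimate, and no thickening parameter $n$ is needed at all. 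What your argument buys is elegance and the elimination of several pages of brute-force polynomial expansion; what the paper's argument buys is that it never needs to identify the closed form of $BM$ or verify a delicate algebraic identity by hand.

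Two minor remarks. First, your final step showing $\rho_{\tt{tensor}}(BM)\neq 0$ is a little compressed: once you know $A'_{11}=A'_{22}$ (say), the off-diagonal multiset of $A'\otimes B'$ is $\{\alpha B'_{12}\times 4,\ A'_{12}B'_{12}\times 4,\ A'_{12}B'_{11}\times 2,\ A'_{12}B'_{22}\times 2\}$ while that of $BM$ is $\{d^2\times 4,\ c^2\times 8\}$ with $c\neq d$; matching these forces either $B'_{11}=B'_{22}$ (whence $a=b$) or $B'$ has rank~$1$, both contradictions---this is the cleanest way to close that case rather than chasing a specific permutation. Second, since you have already shown $BM\in\text{Sym}_4^{\tt{pd}}(\mathbb{R}_{>0})$ and $\rho_{\tt{tensor}}(BM)\neq 0$, you may invoke \cref{theorem: TensorOrHard} directly and skip the detour through Form~(III) and \cref{lemma: formIIIHardness}.
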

\begin{proof}
    We will consider $B(T_{n}(M))$ for $n \geq 1$.
    From the definitions, we can verify that
    \begin{align*}
        (B(T_{n}M))_{11} = (B(T_{n}M))_{44}
        &= p^{10n} + 4p^{7n}q^{3n} + 2p^{6n}q^{4n} + 4p^{4n}q^{6n} 
        + 4p^{3n}q^{7n} + p^{2n}q^{8n},\\
        (B(T_{n}M))_{22} = (B(T_{n}M))_{33}
        &= p^{8n}q^{2n} + 4p^{7n}q^{3n} + 4p^{6n}q^{4n} + 
        2p^{4n}q^{6n} + 4p^{3n}q^{7n} + q^{10n},\\
        (B(T_{n}M))_{12} = (B(T_{n}M))_{13}
        &= p^{8n}q^{2n} + 2p^{7n}q^{3n} + 3p^{6n}q^{4n} + 4p^{5n}q^{5n} + 
        3p^{4n}q^{6n} + 2p^{3n}q^{7n} + p^{2n}q^{8n},\\
        (B(T_{n}M))_{24} = (B(T_{n}M))_{34}
        &= p^{8n}q^{2n} + 2p^{7n}q^{3n} + 3p^{6n}q^{4n} + 4p^{5n}q^{5n} + 
        3p^{4n}q^{6n} + 2p^{3n}q^{7n} + p^{2n}q^{8n},\\
        (B(T_{n}M))_{14} = (B(T_{n}M))_{23}
        &= 4p^{6n}q^{4n} + 8p^{5n}q^{5n} + 4p^{4n}q^{6n}.
    \end{align*}

    So, it is seen that
    \begin{align*}
        \det(B(T_{n}M)) = \quad &
        p^{36n}q^{4n} + 8p^{35n}q^{5n}  + 20p^{34n}q^{6n} 
        + 8p^{33n}q^{7n} - 32p^{32n}q^{8n} \\
        &- 40p^{31n}q^{9n} - 100p^{30n}q^{10n} - 296p^{29n}q^{11n}
        - 84p^{28n}q^{12n}\\ 
        &+ 840p^{27n}q^{13n} + 1204p^{26n}q^{14n} + 72p^{25n}q^{15n} 
        - 1440p^{24n}q^{16n}\\
        &- 2088p^{23n}q^{17n} - 1124p^{22n}q^{18n} + 1496p^{21n}q^{19n} 
        + 3110p^{20n}q^{20n}\\
        &+ 1496p^{19n}q^{21n} - 1124p^{18n}q^{22n} 
        - 2088p^{17n}q^{23n} - 1440p^{16n}q^{24n} \\
        &+ 72p^{15n}q^{25n} + 1204p^{14n}q^{26n}
        + 840p^{13n}q^{27n} - 84p^{12n}q^{28n}\\
        &- 296p^{11n}q^{29n} - 100p^{10n}q^{30n} 
        - 40p^{9n}q^{31n} - 32p^{8n}q^{32n}\\
        &+ 8p^{7n}q^{33n} + 20p^{6n}q^{34n} 
        + 8p^{5n}q^{35n} + p^{4n}q^{36n}.
    \end{align*}

    We will define the polynomial
    $f: \mathbb{R} \rightarrow \mathbb{R}$ such that
   \begin{align*}
        f(x) = \quad & x^{36} + 8x^{35} + 20x^{34} + 8x^{33} - 32x^{32}
        - 40x^{31} - 100x^{30} - 296x^{29} - 84x^{28}\\
        &+ 840x^{27} + 1204x^{26} + 72x^{25} - 1440x^{24}
        - 2088x^{23} + 1124x^{22} + 1496x^{21} + 3110x^{20}\\
        &+ 1496x^{19} + 1124x^{18} - 2088x^{17} - 1440x^{16}
        + 72x^{15} + 1204x^{14} + 840x^{13}\\
        &- 84x^{12} - 296x^{11} - 100x^{10} - 40x^{9}
        - 32x^{8} + 8x^{7} + 20x^{6} + 8x^{5} + x^{4}.
    \end{align*}
    By construction, $\det(B(T_{n}M)) = q^{40n}f((\nicefrac{p}{q})^{n})$.
    Since $f$ is a non-zero polynomial,
    we know that it has finitely many real roots.
    Since $p \neq q \in \mathbb{R}_{> 0}$,
    we know that if $p > q$, then $(\nicefrac{p}{q})^{n + 1}
    > (\nicefrac{p}{q})^{n} > 1$ for all $n \geq 1$, and
    if $p < q$, then $0 < (\nicefrac{p}{q})^{n + 1}
    < (\nicefrac{p}{q})^{n}$ for all $n \geq 1$.
    In particular, this implies that for large enough
    $n \geq 1$,
    $\det(B(T_{n}M)) = q^{40n}f((\nicefrac{p}{q})^{n}) \neq 0$.

    Therefore, we see that for large enough $n \geq 1$,
    $B(T_{n}M) \in \text{Sym}_{4}^{\tt{F}}(\mathbb{R}_{> 0})$.
    We will now let $N_{n} = (B(T_{n}M))^{2}$ for all 
    $n \geq 1$.
    We see that $N_{n} \in \text{Sym}_{4}^{\tt{pd}}(\mathbb{R}_{> 0})$
    for large enough $n$.
    We will now define the $\text{Sym}_{4}(\mathbb{R})$-polynomials
    $\xi_{1}, \xi_{2}, \xi_{3}$ such that
    $\xi_{1}: N \mapsto (N_{11}N_{14} - N_{12}N_{13})$,
    $\xi_{2}: N \mapsto (N_{11}N_{13} - N_{12}N_{14})$, and
    $\xi_{3}: N \mapsto (N_{11}N_{12} - N_{13}N_{14})$.
    From the construction of $\varrho_{\tt{tensor}}$
    in \cref{lemma: tensorPolynomial},
    we can see that if $\xi_{1}(N) \neq 0$, then
    $\varrho_{\tt{tensor}}(N) \neq 0$.
    We note that the matrix $N^{\sigma}$ is defined such that
    $(N^{\sigma})_{ij} = N_{\sigma(i)\sigma(j)}$
    for any $\sigma \in S_{4}$.
    In fact, given any $\sigma \in S_{4}$ such that
    $\{\sigma(1), \sigma(4)\} = \{1, 4\}$, or
    $\{2, 3\}$, we see that $\xi_{1}(N) \neq 0$ implies
    that $\varrho_{\tt{tensor}}(N^{\sigma}) \neq 0$.
    Similarly, given any $\sigma \in S_{4}$
    such that $\{\sigma(1), \sigma(3)\} = \{1, 4\}$, or
    $\{2, 3\}$, we see that
    $\xi_{2}(N) \neq 0$ implies that 
    $\varrho_{\tt{tensor}}(N^{\sigma}) \neq 0$, and
    given any $\sigma \in S_{4}$
    such that $\{\sigma(1), \sigma(2)\} = \{1, 4\}$, or
    $\{2, 3\}$, we see that
    $\xi_{3}(N) \neq 0$ implies that 
    $\varrho_{\tt{tensor}}(N^{\sigma}) \neq 0$.
    Summing up, we find that if
    $\xi_{i}(N) \neq 0$ for all $i \in [3]$, then
    $\rho_{\tt{tensor}}(N) \neq 0$.

    We will now show that for large enough $n \geq 1$,
    $\xi_{i}(N_{n}) \neq 0$ for all $i \in [3]$.
    To do this, we will compute $\xi_{i}(N_{n})$
    for all $n \geq 1$.
    It can be verified that
    \begin{align*}
        \xi_{1}(N_{n}) = \xi_{2}(N_{n}) = \quad &
        p^{38n}q^{2n} + 2p^{37n}q^{3n} + 4p^{36n}q^{4n} 
        + 22p^{35n}q^{5n} + 48p^{34n}q^{6n}\\
        &+ 82p^{33n}q^{7n} + 204p^{32n}q^{8n} + 358p^{31n}q^{9n}
        + 396p^{30n}q^{10n}\\
        &+ 394p^{29n}q^{11n} + 164p^{28n}q^{12n} - 722p^{27n}q^{13n} 
        - 1968p^{26n}q^{14n}\\
        &- 2886p^{25n}q^{15n} - 2740p^{24n}q^{16n} - 610p^{23n}q^{17n}
        + 2918p^{22n}q^{18n}\\
        &+ 5382p^{21n}q^{19n} + 5100p^{20n}q^{20n} + 
        2242p^{19n}q^{21n} - 1648p^{18n}q^{22n}\\
        &- 4042p^{17n}q^{23n} - 3644p^{16n}q^{24n} 
        - 1678p^{15n}q^{25n} + 140p^{14n}q^{26n}\\
        &+ 1006p^{13n}q^{27n} + 876p^{12n}q^{28n} 
        + 378p^{11n}q^{29n} + 112p^{10n}q^{30n} \\
        &+ 62p^{9n}q^{31n} + 36p^{8n}q^{32n} + 
        10p^{7n}q^{33n} + p^{6n}q^{34n}, \text{ and }
        \end{align*}
        \begin{align*}
         \xi_{3}(N_{n}) \quad  = \quad \quad  \quad  \quad \quad  &
        9p^{36n}q^{4n} + 20p^{35n}q^{5n} + 
        16p^{34n}q^{6n} + 108p^{33n}q^{7n} + 250p^{32n}q^{8n}\\
        &+ 188p^{31n}q^{9n} + 484p^{30n}q^{10n} + 
        1236p^{29n}q^{11n} + 966p^{28n}q^{12n}\\
        &+ 724p^{27n}q^{13n} + 1736p^{26n}q^{14n}
        - 244p^{25n}q^{15n} - 4638p^{24n}q^{16n}\\
        &- 4196p^{23n}q^{17n} - 2388p^{22n}q^{18n} 
        - 5356p^{21n}q^{19n} - 4616p^{20n}q^{20n}\\
        &+ 4060p^{19n}q^{21n} + 9088p^{18n}q^{22n} 
        + 5604p^{17n}q^{23n} + 1454p^{16n}q^{24n}\\
        &+ 468p^{15n}q^{25n} - 4p^{14n}q^{26n} 
        - 996p^{13n}q^{27n} - 1478p^{12n}q^{28n}\\
        &- 1220p^{11n}q^{29n} - 728p^{10n}q^{30n}
        - 348p^{9n}q^{31n} - 138p^{8n}q^{32n}\\
        &- 44p^{7n}q^{33n} - 12p^{6n}q^{34n} - 
        4p^{5n}q^{35n} - p^{4n}q^{36n}.
    \end{align*}

    We can now define the polynomials $f_{1}, f_{3}: \mathbb{R}
    \rightarrow \mathbb{R}$ such that
    \begin{align*}
        f_{1}(x) = &\
        x^{38} + 2x^{37} + 4x^{36} + 22x^{35} + 48x^{34} 
        + 82x^{33} + 204x^{32} + 358x^{31}\\
        &+ 396x^{30} + 394x^{29} + 164x^{28} - 722x^{27} - 1968x^{26}
        - 2886x^{25}\\
        &- 2740x^{24} - 610x^{23}
        + 2918x^{22} + 5382x^{21} + 5100x^{20} + 2242x^{19}\\
        &- 1648x^{18} - 4042x^{17}
        - 3644x^{16} - 1678x^{15} + 140x^{14} + 1006x^{13}\\
        &+ 876x^{12} + 378x^{11}
        + 112x^{10} + 62x^{9} + 36x^{8} + 10x^{7} + x^{6},
    \end{align*}
    \begin{align*}
        f_{3}(x) = &\ 
        9x^{36} + 20x^{35} + 16x^{34} + 108x^{33} 
        + 250x^{32} + 188x^{31} + 484x^{30}\\
        &+ 1236x^{29} + 966x^{28} + 724x^{27}
        + 1736x^{26} - 244x^{25}\\
        &- 4638x^{24} - 4196x^{23} - 2388x^{22} 
        - 5356x^{21} - 4616x^{20}\\
        &+ 4060x^{19} + 9088x^{18} + 5604x^{17}
        + 1454x^{16} + 468x^{15}\\
        &- 4x^{14} - 996x^{13} - 1478x^{12}
        - 1220x^{11} - 728x^{10}\\
        &- 348x^{9} - 138x^{8} - 44x^{7}
        - 12x^{6} - 4x^{5} - x^{4}.
    \end{align*}
    By construction, we see that
    $\xi_{1}(N_{n}) = \xi_{2}(N_{n}) = q^{40n}f_{1}((\nicefrac{p}{q})^{n})$,
    and $\xi_{3}(N_{n}) = q^{40n}f_{3}((\nicefrac{p}{q})^{n})$.
    Since $f_{1}$ and $f_{3}$ are non-zero polynomials,
    we know that they have finitely many real roots.
    We are given $p \neq q \in \mathbb{R}_{> 0}$.
    So, $(\nicefrac{p}{q})^{n + 1} > (\nicefrac{p}{q})^{n}$
    for all $n \geq 1$, or
    $(\nicefrac{p}{q})^{n + 1} < (\nicefrac{p}{q})^{n}$
    for all $n \geq 1$.
    In either case, we see that for large enough $n \geq 1$,
    $\xi_{1}(N_{n}) = \xi_{2}(N_{n}) \neq 0$, and
    $\xi_{3}(N_{n}) \neq 0$.

    So, there exists some $n^{*} \geq 1$
    such that $N_{n^{*}} \in \text{Sym}_{4}^{\tt{pd}}(\mathbb{R}_{> 0})$,
    and $\rho_{\tt{tensor}}(N_{n^{*}}) \neq 0$.
    From \cref{theorem: TensorOrHard}, we know that
    $\PlEVAL(N_{n^{*}}) \leq \PlEVAL(B(T_{n^{*}}M))
    \leq \PlEVAL(T_{n^{*}M}) \leq \PlEVAL(M)$ is
    $\#$P-hard.
    This proves that $\PlEVAL(M)$ is $\#$P-hard.
\end{proof}

With the special case taken care of, we return to the more general case
where $M$ is ``close'' to being a tensor product.
We will need a few more lemmas.

\begin{lemma}\label{lemma: fullRankExtension}
    Let $M \in \text{Sym}_{4}^{\tt{F}}(\mathbb{R}_{> 0})$,
    such that $\rho_{\tt{tensor}}(M) \neq 0$, but
    $\varrho_{\tt{tensor}}(M^{2}) = 0$.
    Then, either $\PlEVAL(M)$ is $\#$P-hard, or
    $M = KDK^{\tt{T}}$, where
    $$K = \frac{1}{2}
    \begin{pmatrix}
        1 & u & v & 1\\
        1 & -v & u & -1\\
        1 & v & -u & -1\\
        1 & -u & -v & 1\\
    \end{pmatrix}$$
    for some $u, v \in \mathbb{R}$ with $u^2+v^2 =2$, and
    $D = \kappa \cdot \text{\rm diag}(1, x, y, z)$
    for some $\kappa \in \mathbb{R}_{> 0}$, and $x, y, z \in \mathbb{R}$
    such that $0 < |x|, |y|, |z| < 1$.
    Moreover,
    \begin{enumerate}
        \item If $|x| \neq |y|$, then $u = v = 1$, and $z = -xy$.
        \item If $x = y$, then $z = -x^{2}$.
        \item If $x = -y$, then either $z = x^{2}$, 
        or $z = -x^{2}$ and $|u| \neq |v|$.
    \end{enumerate}
\end{lemma}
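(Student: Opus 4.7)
The plan is to pin down the eigenstructure of $M$ from that of $M^2$, and then do a sign analysis. Since $\varrho_{\tt{tensor}}(M^2)=0$, we have $M^2 = A\otimes B$ for some $A,B\in\text{Sym}_2(\mathbb{R})$. Because $M\in\text{Sym}_4^{\tt F}(\mathbb{R}_{>0})$ forces $M^2\in\text{Sym}_4^{\tt{pd}}(\mathbb{R}_{>0})$, by the remark following \cref{lemma: tensorPolynomial} we may take $A,B\in\text{Sym}_2^{\tt{pd}}(\mathbb{R}_{>0})$. Stretching gives $\PlEVAL(M^2)\le\PlEVAL(M)$, so applying \cref{theorem: posDefPositiveDichotomy} to $M^2$ yields: either $\PlEVAL(M)$ is $\#$P-hard, or $M^2$ is isomorphic to $A'\otimes B'$ with $A'_{11}=A'_{22}$ and $B'_{11}=B'_{22}$. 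In the latter case, permuting the rows and columns of $M$ (which preserves $\PlEVAL$-complexity) we may assume $M^2=A\otimes B$ with $A,B$ having equal diagonal entries.

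Then both $A$ and $B$ are diagonalized by $H=\tfrac{1}{\sqrt 2}\bigl(\begin{smallmatrix}1&1\\1&-1\end{smallmatrix}\bigr)$, so $M^2=K_0(D_1\otimes D_2)K_0^{\tt T}$ with $K_0=H\otimes H$ (which is exactly the matrix $K$ of the lemma when $u=v=1$), and $D_1=\text{diag}(\alpha_1,\alpha_2),\ D_2=\text{diag}(\beta_1,\beta_2)$ with all $\alpha_i,\beta_j>0$. Since $M$ and $M^2$ commute and $M$ is symmetric, $M$ admits an eigenbasis lying inside the eigenspaces of $M^2$. In the \emph{non-degenerate case}, when all four products $\alpha_i\beta_j$ are distinct, $M=K_0 D K_0^{\tt T}$ with $D$ diagonal and $D_{ii}^2=(D_1\otimes D_2)_{ii}$; the Perron vector $(1,1,1,1)/2$ forces $D_{11}>0$, so we parameterize $D=\kappa\cdot\text{diag}(1,x,y,z)$ with $\kappa>0$, $|x|,|y|,|z|<1$, and the relation $z^2=x^2y^2$. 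A direct computation of $M=K_0 D K_0^{\tt T}$ shows that $z=+xy$ yields literally a tensor product $A'\otimes B'$, while $z=-xy$ does not (even after any row/column permutation, one checks all $24$ permutations leave $\rho_{\tt{tensor}}\neq 0$); thus $\rho_{\tt{tensor}}(M)\ne 0$ forces $z=-xy$, giving case (1) with $|x|\ne|y|$ coming from the non-degeneracy hypothesis.

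In the \emph{degenerate case}, the only eigenvalue collision consistent with $|x|,|y|,|z|<1$ is $\alpha_1\beta_2=\alpha_2\beta_1$, corresponding to the $2$-dimensional $M^2$-eigenspace spanned by columns~2 and~3 of $K_0$. Since $M^2$ acts as a scalar there, any orthonormal basis of this plane is an eigenbasis of $M^2$; parameterizing such a basis as the rotated vectors $(u,-v,v,-u)/2$ and $(v,u,-u,-v)/2$ with $u^2+v^2=2$ gives exactly the form of $K$ in the lemma. In this subspace $M$ has eigenvalues $\pm\sqrt{\alpha_1\beta_2}$; equal signs force $x=y$ (so that $M^2$-restricted equals $(\kappa x)^2 I$), and opposite signs force $x=-y$. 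In each subcase, I compute $M=KDK^{\tt T}$ explicitly and check the tensor-product identities of \cref{lemma: tensorPolynomial} applied to $M$ and to each of its $24$ isomorphs. One finds: for $x=y$, $z=x^2$ makes $M$ a tensor product while $z=-x^2$ does not, forcing $z=-x^2$ (case 2); for $x=-y$, the pattern $z=-x^2$ with $|u|=|v|=1$ degenerates $K$ back to $K_0$ (up to signs) and makes $M$ a tensor product, while $z=x^2$ (any $u,v$) or $z=-x^2$ with $|u|\ne|v|$ does not, giving case (3).

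\paragraph{Main obstacle.}
The hard part will be the sign/tensor analysis in the degenerate case, particularly verifying case (3): one must symbolically write out all $16$ entries of $M=KDK^{\tt T}$ as polynomials in $u,v,x,z=\pm x^2$, and then certify, over all $24$ row/column permutations, that the eight tensor-product identities hidden inside $\rho_{\tt{tensor}}$ hold precisely in the reducible cases ($z=-x^2$ with $|u|=|v|$) and fail in all others. This is a finite but somewhat delicate computation that cannot be avoided; the non-degenerate case is comparatively routine since $K$ is forced to be $K_0$.
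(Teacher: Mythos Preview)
Your approach is essentially the paper's: apply \cref{theorem: posDefPositiveDichotomy} to $M^2$, diagonalize via $H\otimes H$, split on whether the two middle eigenvalues of $M^2$ coincide, and eliminate the sign choices that make $M$ a tensor product. Two points deserve comment.

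First, the permutation step is both unnecessary and risky. The hypothesis $\varrho_{\tt{tensor}}(M^2)=0$ already gives $M^2=A\otimes B$ literally; once \cref{theorem: posDefPositiveDichotomy} tells you $M^2$ is isomorphic to a tensor with equal diagonals, all four diagonal entries of $M^2$ are equal, and from $(A\otimes B)_{ii}$ this forces $A_{11}=A_{22}$ and $B_{11}=B_{22}$ for the \emph{original} $A,B$---no permutation required. If you do permute $M$ to $M^\sigma$, you prove $M^\sigma=KDK^{\tt T}$, but then $M$ has eigenvector matrix $P_\sigma^{\tt T}K$, which need not have the row pattern the lemma asserts.

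Second, your ``main obstacle'' is largely self-imposed. The lemma never asks you to certify that the surviving forms are \emph{not} isomorphic to a tensor product---$\rho_{\tt{tensor}}(M)\neq 0$ is a hypothesis, not a conclusion. You only need the forward direction: show that $z=xy$ (non-degenerate case), $z=x^2$ with $x=y$, and $z=-x^2$ with $x=-y$ and $|u|=|v|$ each force $M$ (or an isomorph) to be a tensor product. The first two are one-line factorizations of $D$. For the third, the paper avoids a $24$-permutation check: since $|u|=|v|=1$ one may take $u=1$; if $v=1$ then $K=H\otimes H$ and $D$ factors as $\kappa\,\text{diag}(1,-x)\otimes\text{diag}(1,x)$; if $v=-1$, the single transposition $2\leftrightarrow 3$ (together with a column sign flip in $K$) reduces to the $v=1$ case.
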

\begin{remark*}
Given $u^2+v^2 =2$, we have 
$uv=1$ iff $u=v$. In this case, $K$ is a tensor product. In particular,
when $u=v=1$, $K = H^{\otimes 2}$, where $H$ is the 2 by 2 Hadamard matrix. 
\end{remark*}
\begin{proof}
    Since $M^{2} \in \text{Sym}_{4}^{\tt{pd}}(\mathbb{R}_{> 0})$,
    such that $\varrho_{\tt{tensor}}(M^{2}) = 0$,
    we may let $M^{2} = A \otimes B$ for some 
    $A, B \in \text{Sym}_{2}^{\tt{pd}}(\mathbb{R}_{> 0})$.
    From \cref{theorem: posDefPositiveDichotomy},
    we know that $\PlEVAL(M^{2}) \leq \PlEVAL(M)$ is
    $\#$P-hard unless $A_{11} = A_{22}$,
    $B_{11} = B_{22}$.
    So, let us now assume that $A_{11} = A_{22}$,
    $B_{11} = B_{22}$.
    We note that in this case,
    $A = HD_{1}H^{\tt{T}}$, and $B = HD_{2}H^{\tt{T}}$,
    where
    $$H = \frac{1}{\sqrt{2}} \begin{pmatrix}
        1 & 1\\
        1 & -1\\
    \end{pmatrix},~~ D_{1} = \begin{pmatrix}
        A_{11} + A_{12} & 0\\
        0 & A_{11} - A_{12}\\
    \end{pmatrix}, ~~\text{ and }~~ D_{2} = \begin{pmatrix}
        B_{11} + B_{12} & 0\\
        0 & B_{11} - B_{12}\\
    \end{pmatrix}.$$
    So, we see that $M^{2} = (H \otimes H) (D_{1} \otimes D_{2})
    (H \otimes H)^{\tt{T}}$.
    We may now let $\lambda_{1} = A_{11} + A_{12} >
    A_{11} - A_{12} = \lambda_{2}$ be the eigenvalues of $A$, and
    $\mu_{1} = B_{11} + B_{12} > B_{11}- B_{12} = \mu_{2}$
    be the eigenvalues of $B$.
    So, the eigenvalues of $M^{2}$ are $\lambda_{1}\mu_{1}$,
    $\lambda_{1}\mu_{2}$, $\lambda_{2}\mu_{1}$, and
    $\lambda_{2}\mu_{2}$.

    We will now consider the eigenvalues of $M$.
    We know that they must be such that their squares are the
    eigenvalues of $M^{2}$.
    Moreover, since $M \in \text{Sym}_{4}(\mathbb{R}_{> 0})$,
    we know from the Perron-Frobenius Theorem, that $M$
    has a unique positive eigenvalue with the largest absolute
    value.
    Since $\lambda_{1}\mu_{1}$ is the unique eigenvalue of $M^{2}$ with the
    largest absolute value, this implies that
    $\kappa = \sqrt{\lambda_{1}\mu_{1}}$ is the
    unique positive eigenvalue of $M$ with the largest
    absolute value.
    We may now define $x, y, z \in \mathbb{R}_{\neq 0}$ to be such that
    $\kappa  x$, $\kappa  y$, and
    $\kappa  z$ are the eigenvalues
    of $M$ that satisfy the equations:
    $(\kappa  x)^{2} = \lambda_{1}\mu_{2}$,
    $(\kappa  y)^{2} = \lambda_{2}\mu_{1}$, and
    $(\kappa  z)^{2} = \lambda_{2}\mu_{2}$.
    Note that since $\kappa$ is positive and is the unique eigenvalue with the
    largest absolute eigenvalue, it follows that
    $|x|, |y|, |z| < 1$.
    Moreover, since $(\lambda_{1}\mu_{1})(\lambda_{2}\mu_{2})
    = (\lambda_{1}\mu_{2})(\lambda_{2}\mu_{1})$,
    we see that
    $z^{2} = (xy)^{2}$.
    So, either $z = xy$, or $z = -xy$.
    In either case, since $|x|, |y| < 1$, it follows that
    $\kappa > \kappa  |x|, \kappa  |y| > \kappa |z|$.
    So, $1 > |x|, |y| > |z|$.

    We will first consider the case where $|x| \neq |y|$.
    In that case, we see that the eigenvalues of $M^{2}$
    are all distinct. Then their corresponding eigenspaces are all 
    one dimensional, thus the corresponding unit eigenvectors are unique up to a $\pm$ sign.
    This means that $H \otimes H$ is the
    unique orthogonal matrix (up to a $\pm 1$ multiplier per each column) such that
    $M^{2} = (H \otimes H)(D_{1} \otimes D_{2})(H \otimes H)^{\tt{T}}$.
    Since $M \cdot M^{2} = M^{3} = M^{2} \cdot M$,
    we know that $M$ and $M^{2}$ must be simultaneously
    diagonalizable.
    But this then implies that $M$ must also be diagonalized by
    $H \otimes H$, which precisely takes the
    form of the matrix $K$ in the statement of this lemma 
    (with $u = v = 1$).
    Moreover, if $z = xy$, that implies that
    $M = (H \otimes H)(D_{1}' \otimes D_{2}')(H \otimes H)^{\tt{T}}
    = (HD_{1}'H^{\tt{T}}) \otimes (HD_{2}'H^{\tt{T}})$,
    where $D_{1}' = \sqrt{\kappa} \cdot \text{diag}(1, y)$, and
    $D_{2}' = \sqrt{\kappa} \cdot \text{diag}(1, x)$.
    But this implies that $\varrho_{\tt{tensor}}(M) = 0$,
    which contradicts our assumption about $M$.
    So, our assumption that $z = xy$ must be false.
    This means that $z = -xy$, in which case,
    $M = KDK^{\tt{T}}$ with $u = v = 1$, and $D = \kappa \cdot 
    \text{diag}(1, x, y, -xy)$.
    So, the statement of the
    lemma is proved.

    Now, let us consider the case where $|x| = |y|$.
    We will characterize the set of all matrices $H'$
    such that $M^{2} = (H')(D_{1} \otimes D_{2})(H')^{\tt{T}}$.
    We note that the eigenvalues $(\lambda_{1}\mu_{1})$
    and $(\lambda_{2}\mu_{2})$ of $M^{2}$ have multiplicity $1$.
    So, their corresponding eigenspaces are one dimensional.
    So, $\nicefrac{1}{2}(1, 1, 1, 1)^{\tt{T}}$ and
    $\nicefrac{1}{2}(1, -1, -1, 1)^{\tt{T}}$ must be column
    1 and column 4 of $H'$ respectively (upto a factor of $\pm 1$).
    But since $|x| = |y|$, it follows that
    the duplicate eigenvalue $(\lambda_{1}\mu_{2}) = \kappa^{2}x^{2} = 
    \kappa^{2}y^{2} = (\lambda_{2}\mu_{1})$ has multiplicity
     $2$.
    So, the corresponding eigenspace has dimension $2$.
    So, all unit vectors that are orthogonal to both
    $\nicefrac{1}{2}(1, 1, 1, 1)^{\tt{T}}$, and 
    $\nicefrac{1}{2}(1, -1, -1, 1)^{\tt{T}}$
    are unit eigenvectors of $M^{2}$ with the eigenvalue
    $(\lambda_{1}\mu_{2}) = (\lambda_{2}\mu_{1})$.
    Therefore, any unit vectors $\mathbf{v}_{1},
    \mathbf{v}_{2}$ that are orthogonal to each other,
    and also to both $(1, 1, 1, 1)^{\tt{T}}$ and
    $(1, -1, -1, 1)^{\tt{T}}$ could be columns of $H'$.
    We may now assume that $\mathbf{v}_{1} = 
    (v_{11}, v_{12}, v_{13}, v_{14})^{\tt{T}}$
    and $\mathbf{v}_{2} = (v_{21}, v_{22}, v_{23}, v_{24})^{\tt{T}}$.
    Since $\mathbf{v}_{1}$ and $\mathbf{v}_{2}$ are
    both orthogonal to $(1, 1, 1, 1)^{\tt{T}}$ and
    $(1, -1, -1, 1)^{\tt{T}}$,
    it must be the case that
    $$v_{11} + v_{12} + v_{13} + v_{14} = 0, \quad
    v_{11} - v_{12} - v_{13} + v_{14} = 0 \implies 
    (v_{11} + v_{14}) = (v_{12} + v_{13}) = 0,$$
    $$v_{21} + v_{22} + v_{23} + v_{24} = 0, \quad
    v_{21} - v_{22} - v_{23} + v_{24} = 0 \implies 
    (v_{21} + v_{24}) = (v_{22} + v_{23}) = 0.$$
    We may let $u = v_{11}$, and $v = v_{13}$.
    This implies
    that $\mathbf{v}_{1} = (u, -v, v, -u)^{\tt{T}}$ is one of the
    eigenvectors of $M^{2}$ with the eigenvalue
    $(\lambda_{1}\mu_{2})$.
    Since eigenvectors are preserved under scalar multiplication,
    we may assume that $u^{2} + v^{2} = 2$,
    and then $\mathbf{v}_{1} = \frac{1}{2}(u, -v, v, -u)^{\tt{T}}$
    is a unit vector.
    Since $\mathbf{v}_{1}$ and $\mathbf{v}_{2}$ are also
    orthogonal to each other, this implies that
    $u(v_{21} - v_{24}) + v(v_{22} - v_{23}) = 0$.
    Since $v_{24} = -v_{21}$, and $v_{23} = -v_{22}$,
    this means that
    $u(v_{21}) - v(v_{22}) = 0$.
    It follows that we have
    $\mathbf{v}_{2} = \frac{1}{2}(v, u, -u, -v)^{\tt{T}}$, 
    as the other unit eigenvector of $(\lambda_{2}\mu_{1})
    = (\lambda_{1}\mu_{2})$.

    Summing up, if $|x| = |y|$, the eigenvectors of $M^{2}$
    are $\frac{1}{2}(1, 1, 1, 1)^{\tt{T}}$,
    $\frac{1}{2}(u, -v, v, u)^{\tt{T}}$,
    $\frac{1}{2}(v, u, -u, -v)^{\tt{T}}$, and
    $\frac{1}{2}(1, -1, -1, 1)^{\tt{T}}$, for all 
    $u, v \in \mathbb{R}$ such that $u^2 + v^2 =2$.
    In other words, if $M^{2} = (H')(D_{1} \otimes D_{2})(H')^{\tt{T}}$,
    we see that $H'$ must be of the form
    of the matrix $K$ in the statement of the lemma.
    Since $M \cdot M^{2} = M^{3} = M^{2} \cdot M$,
    we know that $M$ and $M^{2}$ must be simultaneously
    diagonalizable.
    So, this implies that
    when $|x| = |y|$,
    $M = KDK^{\tt{T}}$ for some
    orthogonal matrix $K$ of the stated form
    in the statement of this lemma, and
    $D = \kappa \cdot \text{diag}(1, x, y, z)$, where
    $z = \pm xy$.
    
    If $x = y$, then this means that the
    multiplicity of the eigenvalues  $\kappa  x
    = \kappa  y$ is once again $2$.
    This implies that any nonzero linear combination of
    $\mathbf{v}_{1}$ and $\mathbf{v}_{2}$ is
    also  an eigenvector of $M$. In particular,
    that implies that $\left(\frac{u + v}{2}\right) \mathbf{v}_{1}
    + \left(\frac{v - u}{2}\right) \mathbf{v}_{2}
    =  \frac{1}{2} (1, -1, 1, -1)^{\tt{T}}$, 
    and $\left(\frac{u - v}{2}\right) \mathbf{v}_{1}
    + \left(\frac{u + v}{2}\right) \mathbf{v}_{2}
    = \frac{1}{2} (1, 1, -1, -1)^{\tt{T}}$
    are eigenvectors of $M$ as well, with the eigenvalue
    $\kappa  x = \kappa y$.
    This means that $M = (H \otimes H) D (H \otimes H)^{\tt{T}}$.
    If $z = xy$, that would again imply 
    that $\varrho_{\tt{tensor}}(M) = 0$.
    So, we conclude that $z = -xy = -x^{2}$, in which
    case, $M$ is of the form in the statement of the lemma.

    Finally, we consider the case where $x = -y$. 
    We have $z = \pm x^{2}$.
    If $z = x^{2}$, then $M$ is already of the form in the
    statement of the lemma and we are done.
    We will now assume that $z = -x^{2}$. For a contradiction assume $|u| = |v|$.
    As $u^2 + v^2 =2$, we have $|u|=|v|=1$.
    We may assume that $u = 1$,
    since unit eigenvectors are preserved when multiplied
    by $\pm 1$.
    If $v = 1$ as well, we find that
    $K = H \otimes H$.
    So, $M = (HD'_{1} H^{\tt{T}}) \otimes
    (H D'_{2}H^{\tt{T}})$, where
    $D'_{1} = \sqrt{\kappa} \cdot \text{diag}(1, -x)$,
    and $D'_{2} = \sqrt{\kappa} \cdot \text{diag}(1, x)$,
    which implies once again that
    $\varrho_{\tt{tensor}}(M) = 0$,
    contradicting our assumption about $M$.
    Finally, if $v = -1$, we will consider the matrix
    $M' \in \text{Sym}_{4}^{\tt{F}}(\mathbb{R}_{> 0})$
    such that $(M')_{ij} = M_{\sigma(i)\sigma(j)}$,
    where $\sigma \in S_{4}$ is such that
    $\sigma(1) = 1, \sigma(2) = 3, \sigma(3) = 2$,
    and $\sigma(4) = 4$.
    We note that $M$ and $M'$ are isomorphic
    to each other.
    Since $M = KDK^{\tt{T}}$, we see that
    $M' = (K')D'(K')^{\tt{T}}$, where $K'$ is obtained by
    applying $\sigma$ to the rows of $K$, and multiplying the 3rd column
    by $-1$, and $D'$ is obtained by
    multiplying both the 3rd row and column of $D$ by $-1$.
    But then, we see that $K' = H  \otimes H$, and
    $D' = D = \kappa \cdot 
    \text{diag}(1, x, -x, -x^2)$. Hence,
    $M' = (HD'_{1} H^{\tt{T}}) \otimes
    (H D'_{2}H^{\tt{T}})$, where
    $D'_{1} = \sqrt{\kappa} \cdot \text{diag}(1, -x)$,
    and $D'_{2} = \sqrt{\kappa} \cdot \text{diag}(1, x)$.
    This implies that $\varrho_{\tt{tensor}}(M') = 0$,
    which contradicts our assumption that
    $\rho_{\tt{tensor}}(M) \neq 0$.
    So, we conclude that when $x = -y$ and $z = -x^{2}$ we have
    $|u| \neq |v|$.
    
    In all cases, we have proved 
    that either $\PlEVAL(M)$ is $\#$P-hard, or
    $M$, the diagonalizing $K$, and the diagonal matrix $D$, 
    take the form  in the statement of the lemma.
\end{proof}

\begin{lemma}\label{lemma: fullRankVarRho}
    Let $M \in \text{Sym}_{4}^{\tt{pd}}(\mathbb{R}_{> 0})$,
    such that $M = A \otimes B$ for some
    $A, B \in \text{Sym}_{2}^{\tt{pd}}(\mathbb{R}_{> 0})$.
    For any $\sigma \in S_{4}$, if 
    $\{\sigma(1), \sigma(4)\} = \{1, 2\}$, or
    $\{3, 4\}$, or $\{1, 3\}$, or $\{2, 4\}$, then
    $\varrho_{\tt{tensor}}(M^{\sigma}) \neq 0$,
    where $(M^{\sigma})_{ij} = M_{\sigma(i)\sigma(j)}$
    for all $i, j \in [4]$.
\end{lemma}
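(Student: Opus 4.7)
The plan is to unpack $N = M^\sigma$ entrywise in terms of $A$ and $B$, and show that $\varrho_{\tt{tensor}}(N)=0$ would force $\det A = 0$ or $\det B = 0$, contradicting the positive-definiteness of $A$ or $B$. I will use the tensor indexing $[4]\leftrightarrow\{1,2\}^2$ via $\tau(i,j) = 2(i-1)+j$, so that $M_{\tau(i_1,j_1),\tau(i_2,j_2)} = A_{i_1 i_2}B_{j_1 j_2}$. In these labels, $\{1,4\}=\{(1,1),(2,2)\}$ is the "diagonal" pair and $\{2,3\}=\{(1,2),(2,1)\}$ the "off-diagonal" pair of the tensor decomposition, and the four excluded pair values in the hypothesis are exactly those that share either a first tensor coordinate ($\{1,2\}$ and $\{3,4\}$) or a second tensor coordinate ($\{1,3\}$ and $\{2,4\}$), rather than respecting the diagonal/off-diagonal split.

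First I exploit two symmetries to reduce the $16$ permutations essentially to a single computation. From the bit-flip observation in the proof of \cref{lemma: tensorPolynomial}, $\varrho_{\tt{tensor}}(N) = \varrho_{\tt{tensor}}(N^\tau)$ for every $\tau$ in the Klein four-group $K = \{e,(1\,2)(3\,4),(1\,3)(2\,4),(1\,4)(2\,3)\}$, and since $(M^\sigma)^\tau = M^{\sigma\tau}$, the value of $\varrho_{\tt{tensor}}(M^\sigma)$ depends only on the right coset $\sigma K$; this cuts the $16$ permutations to $4$ cosets. Secondly, a direct check shows $M^{(2\,3)} = B\otimes A$, and conjugation by $(2\,3)$ interchanges the bad pair values by $\{1,2\}\leftrightarrow\{1,3\}$ and $\{3,4\}\leftrightarrow\{2,4\}$; hence the $\{1,3\}/\{2,4\}$ type follows from the $\{1,2\}/\{3,4\}$ type by swapping the roles of $A$ and $B$. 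Enumerating shows that the eight permutations with $\{\sigma(1),\sigma(4)\}\in\{\{1,2\},\{3,4\}\}$ split into two $K$-right-cosets, with representatives $\sigma_1(1,2,3,4) = (1,3,4,2)$ and $\sigma_1'(1,2,3,4) = (1,4,3,2)$.

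For $\sigma_1$, the entries of $N = M^{\sigma_1}$ work out to
\[
N_{11}=A_{11}B_{11},\ N_{22}=A_{22}B_{11},\ N_{33}=A_{22}B_{22},\ N_{44}=A_{11}B_{22},
\]
\[
N_{12}=A_{12}B_{11},\ N_{13}=A_{12}B_{12},\ N_{14}=A_{11}B_{12},\ N_{23}=A_{22}B_{12}.
\]
If $\varrho_{\tt{tensor}}(N) = 0$, the term $(N_{14}-N_{23})^4$ gives $(A_{11}-A_{22})B_{12}=0$, so $A_{11}=A_{22}$ since $B_{12}>0$; the row-$1$ term $(N_{11}N_{14}-N_{12}N_{13})^2$ gives $(A_{11}^2-A_{12}^2)B_{11}B_{12}=0$, so $A_{11}=A_{12}$ by positivity. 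Together these force $\det A = A_{11}A_{22}-A_{12}^2 = 0$, contradicting $A\in\text{Sym}_2^{\tt{pd}}(\mathbb{R}_{>0})$. For $\sigma_1'$ the same equations arise (only $N_{12}$ and $N_{13}$ are swapped, and $M_{43}=M_{34}$ by symmetry), giving the identical contradiction; and the $\{1,3\}/\{2,4\}$ type, via $A\leftrightarrow B$, yields $\det B = 0$. The only real obstacle is the bookkeeping needed to verify that the Klein reduction plus the $A\leftrightarrow B$ swap genuinely cover all $16$ permutations from a single $(A,B)$-computation; once that is checked, each case collapses into a one-line application of $\det A > 0$ or $\det B > 0$, with no deeper tool required.
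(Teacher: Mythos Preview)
Your proof is correct and follows essentially the same approach as the paper: both extract from $\varrho_{\tt{tensor}}(M^\sigma)=0$ the two conditions $N_{14}=N_{23}$ and a single row-product identity to force $A_{11}=A_{22}$ and $A_{11}=A_{12}$ (hence $\det A=0$), and both handle the $\{1,3\}/\{2,4\}$ cases via the $(2\,3)$-swap $M^{(2\,3)}=B\otimes A$. The only difference is organizational: the paper handles all permutations with a given unordered pair $\{\sigma(1),\sigma(4)\}$ simultaneously by choosing the row $i=\sigma^{-1}(1)$ of $N$, whereas you reach the same endpoint by first reducing modulo the Klein four-group and then checking two explicit coset representatives.
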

\begin{proof}
    Since $M = A \otimes B$, we have  $\varrho_{\tt{tensor}}(M) = 0$, where
    $$M = \begin{pmatrix}
        A_{11}B_{11} & A_{11}B_{12} & A_{12}B_{11} & A_{12}B_{12}\\
        A_{11}B_{12} & A_{11}B_{22} & A_{12}B_{12} & A_{12}B_{22}\\
        A_{12}B_{11} & A_{12}B_{12} & A_{22}B_{11} & A_{22}B_{12}\\
        A_{12}B_{12} & A_{12}B_{22} & A_{22}B_{12} & A_{22}B_{22}\\
    \end{pmatrix}.$$
    We note from the definition of $\varrho_{\tt{tensor}}$
    in \cref{lemma: tensorPolynomial} that
    $\varrho_{\tt{tensor}}(N) = 0$ implies that
    $N_{14} = N_{23}$, and $N_{i1}N_{i4} = N_{i2}N_{i3}$
    for all $i \in [4]$.
    
    We will first consider $\sigma \in S_{4}$ such that
    $\{\sigma(1), \sigma(4)\} = \{1, 2\}$, or
    $\{3, 4\}$.
    So, $\varrho_{\tt{tensor}}(M^{\sigma}) = 0$ implies
    that $M_{\sigma(1)\sigma(4)} = M_{\sigma(2)\sigma(3)}$.
    This implies that $M_{12} = (A_{11}B_{12}) = 
    (A_{22}B_{12}) = M_{34}$,
    which is only possible if $A_{11} = A_{22}$, since
    $B_{12} > 0$.
    We also note that $\varrho_{\tt{tensor}}(M^{\sigma}) = 0$ implies
    that $M_{1\sigma(1)}M_{1\sigma(4)} 
    = M_{1\sigma(2)}M_{1\sigma(3)}$.
    This implies that
    $M_{11}M_{12} = M_{13}M_{14}$.
    In other words, it must be the case that
    $(A_{11}B_{11})(A_{11}B_{12}) = 
    (A_{12}B_{11})(A_{12}B_{12})$.
    Since $A_{11}, A_{12}, B_{11}, B_{12} > 0$,
    this is only possible if $A_{11} = A_{12}$.
    But if $A_{11} = A_{12} = A_{22}$, that
    means $A$ is a rank $1$ matrix, which
    contradicts the fact that
    $A \in \text{Sym}_{2}^{\tt{pd}}(\mathbb{R}_{> 0})$.
    So, we conclude that if $\{\sigma(1), \sigma(4)\} = \{1, 2\}$,
    or if $\{\sigma(1), \sigma(4)\} = \{3, 4\}$
    then $\varrho_{\tt{tensor}}(M^{\sigma}) \neq 0$.

    For $\sigma \in S_{4}$ such that
    $\{\sigma(1), \sigma(4)\} = \{1, 3\}$, or
    $\{2, 4\}$, let $\tau \in S_{4}$ be the transposition $(23)$.
    Then $M^{\tau} = (A \otimes B)^{\tau} = B \otimes A$.
    Now  $M_{\sigma(1)\sigma(4)} = M_{\sigma(2)\sigma(3)}$ becomes
    $M_{13}  = M_{24}$, i.e., $(M^{\tau})_{12} = (M^{\tau})_{34}$.
    Similarly, $M_{1\sigma(1)}M_{1\sigma(4)} 
    = M_{1\sigma(2)}M_{1\sigma(3)}$ is
    $M_{11}M_{13} = M_{12}M_{14}$, i.e., 
    $(M^{\tau})_{11}(M^{\tau})_{12} = (M^{\tau})_{13}(M^{\tau})_{14}$.
    Since $M^{\tau}$
    is also a tensor product, only switching the roles of $A, B  
    \in \text{Sym}_{2}^{\tt{pd}}(\mathbb{R}_{> 0})$ in $M$,
    $\varrho_{\tt{tensor}}(M^{\sigma}) = 0$ leads to the same
    contradiction as in the previous paragraph
    (with $M^{\tau}$ in place of $M$).
    So, $\varrho_{\tt{tensor}}(M^{\sigma}) \neq 0$
    when $\{\sigma(1), \sigma(4)\} = \{1, 3\}$,
    or 
    $\{2, 4\}$.

  
\end{proof}

\begin{lemma}\label{lemma: fullRankHardnessCase1}
    Let $M \in \text{Sym}_{4}^{\tt{F}}(\mathbb{R}_{> 0})$
    be a matrix of the form $M = KDK^{\tt{T}}$, where the matrix $K$
    and $D$ are as in
    \cref{lemma: fullRankExtension}, such that
    $|x| \neq |y|$, $u = v = 1$, and $z = -xy$.
    Then, $\PlEVAL(M)$ is $\#$P-hard.
\end{lemma}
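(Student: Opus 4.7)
The plan is to adapt the strategy of \cref{lemma: fullRankSpecialHardness}: produce an integer $n$ for which $N_n = (B(T_n M))^2$ lies in $\text{Sym}_4^{\tt{pd}}(\mathbb{R}_{>0})$ and satisfies $\rho_{\tt{tensor}}(N_n) \neq 0$, then invoke \cref{theorem: TensorOrHard} to obtain $\#$P-hardness of $\PlEVAL(N_n)$, which transfers along the reduction chain $\PlEVAL(N_n) \leq \PlEVAL(B(T_n M)) \leq \PlEVAL(T_n M) \leq \PlEVAL(M)$.

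First I would verify by direct computation of $KDK^T$ that $M$ is of Form (VI) with entries $a = \frac{\kappa}{4}(1+x+y-xy)$, $b = \frac{\kappa}{4}(1-x+y+xy)$, $c = \frac{\kappa}{4}(1+x-y+xy)$, and $d = \frac{\kappa}{4}(1-x-y-xy)$, all positive by hypothesis. Because the gadgets $T_n$, $R_n$, and $B$ all respect the Hadamard eigenbasis $K = H \otimes H$, the matrices $T_n M$, $B(T_n M)$, and $N_n$ remain of Form (VI). Direct algebra yields the identities $ad - bc = -\frac{\kappa^2 xy}{2}$, $ac - bd = \frac{\kappa^2 x(1+y^2)}{4}$, and $ab - cd = \frac{\kappa^2 y(1+x^2)}{4}$, so the hypothesis $\rho_{\tt{tensor}}(M) \neq 0$ is automatic from $xy \neq 0$; the same identities applied to $T_n M$ (whose Form (VI) labels are $a^n, b^n, c^n, d^n$) give $\rho_{\tt{tensor}}(T_n M) \neq 0$ for every integer $n \geq 1$.

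The main obstacle is that $M$ itself is never positive definite: its eigenvalues $\kappa, \kappa x, \kappa y, -\kappa xy$ cannot all be positive simultaneously, since $\lambda_2, \lambda_3 > 0$ forces $x, y > 0$, which in turn forces $\lambda_4 = -\kappa xy < 0$. Thickening alone does not remedy this, because the four eigenvalues of $T_n M$ are the signed sums $a^n \pm b^n \pm c^n \pm d^n$, dominated for large $n$ by the maximum of $\{a, b, c, d\}$; unless that maximum is $a$ (which happens only when $x, y > 0$) two of the eigenvalue combinations end up negative. This is why the squaring is essential: $N_n = (B(T_n M))^2$ is automatically positive semidefinite, and is positive definite precisely when $B(T_n M)$ is full rank.

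It remains to show that for some large $n$ both $B(T_n M)$ is full rank and $\rho_{\tt{tensor}}(N_n) \neq 0$. Using $K = H \otimes H$ and the identity $\sum_a K_{ak} K_{al} K_{am} = \frac{1}{2} \delta_{k+l+m \equiv 0 \pmod 2}$, one derives the closed form $\mu_{BN, l} = \frac{1}{4} \sum_{k_1, k_3} \nu_{k_1} \nu_{l+k_1} \nu_{k_3} \nu_{l+k_3} \nu_{k_1+k_3}$, expressing each eigenvalue of $B(T_n M)$ as a homogeneous quintic in the four eigenvalues $\nu_l^{(n)}$ of $T_n M$ (the signed sums of $a^n, b^n, c^n, d^n$). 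Full rank amounts to each of the four eigenvalue-polynomials being non-identically zero in $n$, verified by leading-term comparison as $n \to \infty$. For the tensor condition, $\rho_{\tt{tensor}}(N_n) = 0$ would force one of the three Form (VI) tensor relations $(N_n)_{11}(N_n)_{14} = (N_n)_{12}(N_n)_{13}$, $(N_n)_{11}(N_n)_{12} = (N_n)_{13}(N_n)_{14}$, or $(N_n)_{11}(N_n)_{13} = (N_n)_{12}(N_n)_{14}$ on the entries of $N_n$; expanding each side in the eigenvalues of $B(T_n M)$, then in $(a^n, b^n, c^n, d^n)$, and comparing leading terms collapses each identity to $xy = 0$, $y = 0$, or $x = 0$, all excluded by hypothesis. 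The main technical difficulty I expect lies here, in the polynomial manipulations: the analysis is richer than in \cref{lemma: fullRankSpecialHardness} because $|x| \neq |y|$ makes $a, b, c, d$ genuinely independent, and the leading-term extraction depends on which of $\{a, b, c, d\}$ is maximal --- necessitating a case-by-case argument across the four sign-cases of $(x, y)$.
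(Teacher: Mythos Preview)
Your approach via the $B$ gadget is genuinely different from the paper's, and while it is plausible in principle, the paper's route is substantially simpler and sidesteps the case analysis you anticipate.

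The paper does not use the $B$ gadget here at all. Instead it uses \emph{stretching}: since $u = v = 1$ forces $K = H \otimes H$, for every odd $n$ one has $M^n = K D^n K^{\tt T}$ with $D^n = \kappa^n \operatorname{diag}(1, x^n, y^n, -(xy)^n)$, so the entries of $M^n$ have explicit closed forms in $x^n, y^n$. The paper sets $N_n = (T_2 M^n)^2$ (a single $T_2$ thickening followed by a square, not $B$) and computes $\xi(N_n) = (N_n)_{11}(N_n)_{14} - (N_n)_{12}(N_n)_{13}$ directly: up to a positive scalar it equals $-1024\, x^{2n} y^{2n}\bigl(1 - 2y^{2n} - 2x^{2n} + \cdots + x^{4n}y^{4n}\bigr)$, which is nonzero for all large odd $n$ because $0 < |x|, |y| < 1$. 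This two-variable calculation replaces your entire leading-term case split.

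The more important structural difference is how $\rho_{\tt{tensor}}$ is handled. You attempt to kill all three Form~(VI) tensor identities simultaneously on a single $N_n$. The paper instead splits the $24$ permutations in $\rho_{\tt{tensor}}$ into two sets: the eight $\sigma$ with $\{\sigma(1),\sigma(4)\} \in \{\{1,4\},\{2,3\}\}$ are handled by the $\xi$ computation above at the point $T_2 M^{n^*}$, while the remaining sixteen are handled at the point $M^{n^*}$ itself via \cref{lemma: fullRankVarRho}, exploiting that $(M^{n^*})^2 = (M^2)^{n^*} = A^{n^*} \otimes B^{n^*}$ is still a genuine tensor product of full-rank $2\times 2$ matrices. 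Because the two witnesses live at different thickenings, the paper closes with one interpolation step (\cref{lemma: thickeningWorks}) to produce a single $M''$ on which both conditions hold, whence $\rho_{\tt{tensor}}((M'')^2) \neq 0$ and \cref{theorem: TensorOrHard} applies.

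Your plan could be pushed through, but the leading-term extraction is where all the labor sits and is subtler than you indicate: when $a$ is not the maximum of $\{a,b,c,d\}$, the dominant monomial in $(B(T_n M))_{11}$ is no longer a pure power like $a^{5n}$ but a mixed term such as $a^n c^{4n}$, and these must be tracked through the squaring and through each of the three identities in each of the four sign-cases of $(x,y)$. The paper's choice of stretching over thickening collapses this to a uniform two-variable computation independent of sign-cases, and the $T_1/T_2$ split plus \cref{lemma: fullRankVarRho} removes the need to verify more than one tensor relation by hand.
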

\begin{proof}
    We will first define
    the $\text{Sym}_{4}(\mathbb{R})$-polynomial $\xi$ such that
    $\xi: N \mapsto (N_{11}N_{14} - N_{12}N_{13})$.
    We note from our choice of $M$, that
    $M = KDK^{\tt{T}}$, where
    $$K = \frac{1}{2}
    \begin{pmatrix}
        1 & 1 & 1 & 1\\
        1 & -1 & 1 & -1\\
        1 & 1 & -1 & -1\\
        1 & -1 & -1 & 1\\
    \end{pmatrix}$$
    and
    $D = \kappa \cdot \text{diag}(1, x, y, -xy)$
    for some $\kappa \in \mathbb{R}_{> 0}$, and $x, y \in \mathbb{R}$
    such that $0 < |x|, |y| < 1$, and $|x| \neq |y|$.
    We note that for any odd $n \geq 1$,
    $M^{n} = KD^{n}K^{\tt{T}}$, where
    $D^{n} = (\kappa)^{n} \cdot \text{diag}(1, x^{n}, y^{n}, -(xy)^{n})$.
    We note that
    \begin{align*}
        (M^{n})_{11} = (M^{n})_{22} = (M^{n})_{33} = (M^{n})_{44}
        &=  \kappa^{n} \cdot \left(
        \frac{1 + x^{n} + y^{n} - (xy)^{n}}{4}\right),\\
        (M^{n})_{12} = (M^{n})_{34}
        &=  \kappa^{n} \cdot \left(
        \frac{1 - x^{n} + y^{n} + (xy)^{n}}{4}\right),\\
        (M^{n})_{13} = (M^{n})_{24}
        &=  \kappa^{n} \cdot \left(
        \frac{1 + x^{n} - y^{n} + (xy)^{n}}{4}\right),\\
        (M^{n})_{14} = (M^{n})_{23}
        &=  \kappa^{n} \cdot \left(
        \frac{1 - x^{n} - y^{n} - (xy)^{n}}{4}\right).
    \end{align*}

    We will now let $N_{n} = (T_{2}((\nicefrac{4}{\kappa^{n}})M^{n}))^{2}$
    for any odd $n \geq 1$.
    We note that
    $$(N_{n})_{11} = (1 + x^{n} + y^{n} - (xy)^{n})^{4} 
    + (1 - x^{n} + y^{n} + (xy)^{n})^{4}
    + (1 + x^{n} - y^{n} + (xy)^{n})^{4}
    + (1 - x^{n} - y^{n} - (xy)^{n})^{4},$$
    $$(N_{n})_{12} = 2((1 + x^{n} + y^{n} - (xy)^{n})
    (1 - x^{n} + y^{n} + (xy)^{n}))^{2} 
    + 2((1 + x^{n} - y^{n} + (xy)^{n})
    (1 - x^{n} - y^{n} - (xy)^{n}))^{2},$$
    $$(N_{n})_{13} = 2((1 + x^{n} + y^{n} - (xy)^{n})
    (1 + x^{n} - y^{n} + (xy)^{n}))^{2} 
    + 2((1 - x^{n} + y^{n} + (xy)^{n})
    (1 - x^{n} - y^{n} - (xy)^{n}))^{2},$$
    $$(N_{n})_{14} = 2((1 + x^{n} + y^{n} - (xy)^{n})
    (1 - x^{n} - y^{n} - (xy)^{n}))^{2} 
    + 2((1 - x^{n} + y^{n} + (xy)^{n})
    (1 + x^{n} - y^{n} + (xy)^{n}))^{2}.$$
    After simplifying these equations, and computing
    $\xi(N_{n})$, we find that
    $$\xi(N_{n}) = -1024x^{2n}y^{2n}\left(
    1 - 2y^{2n} - 2x^{2n} + y^{4n} + x^{4n}
    + 4x^{2n}y^{2n} - 2x^{2n}y^{4n} - 2x^{4n}y^{2n}
    + x^{4n}y^{4n}\right).$$
    Since $|x|, |y| < 1$, we know that for large enough $n$,
    $$1 > |2y^{2n}| + |2x^{2n}| + |y^{4n}| + |x^{4n}|
    + |4x^{2n}y^{2n}| + |x^{2n}y^{4n}| + |2x^{4n}y^{2n}|
    + |x^{4n}y^{4n}|.$$
    Moreover, we know that $|x|, |y| \neq 0$.
    So, we see that there exists some odd $n^{*}$, such that
    $\xi(N_{n^{*}}) \neq 0$.
    Since $\xi(N_{n^{*}}) = \left(\nicefrac{4}{\kappa^{n^{*}}}\right)^{4}
    \xi((T_{2}(M^{n^{*}}))^{2})$, this implies that
    $\xi((T_{2}(M^{n^{*}}))^{2}) \neq 0$.

    We will now let $M' = M^{n^{*}}$.
    Since $M \in \text{Sym}_{4}^{\tt{F}}(\mathbb{R}_{> 0})$,
    it follows that $M' \in \text{Sym}_{4}^{\tt{F}}(\mathbb{R}_{> 0})$
    as well.
    From the construction of $\varrho_{\tt{tensor}}$ in
    \cref{lemma: tensorPolynomial}, we see that
    $\xi(N) \neq 0$ implies that $\varrho_{\tt{tensor}}(N^{\sigma}) \neq 0$
    for all $\sigma \in S_{4}$ such that 
    $\{\sigma(1), \sigma(4)\} = \{1, 4\}$, or $\{2, 3\}$.
    We will now define
    $$T_{1} = \Big\{\sigma \in S_{4}: 
    \{\sigma(1), \sigma(4)\} \in 
    \big\{ \{1, 4\}, \{2, 3\} \big\} \Big\},$$
    and define the $\text{Sym}_{4}(\mathbb{R})$-polynomial $\zeta_{1}$
    such that $\zeta_{1}: N \mapsto \prod_{\sigma \in T_{1}}
    \varrho_{\tt{tensor}}((N^{2})^{\sigma})$.
    By construction, we see that
    $\zeta_{1}(T_{2}(M')) \neq 0$.
    We also note that $(M')^{2} = M^{2n^{*}} = (M^{2})^{n^{*}}$.
    By our choice of $M$,
    we know that $\varrho_{\tt{tensor}}(M^{2}) = 0$.
    So, there exist $A, B \in \text{Sym}_{2}^{\tt{pd}}(\mathbb{R}_{> 0})$
    such that $M^{2} = A \otimes B$.
    But this means that $M^{2n^{*}} = A^{n^{*}} \otimes B^{n^{*}}$,
    which means that $\varrho_{\tt{tensor}}((M')^{2}) = 0$
    as well.
    So, from \cref{lemma: fullRankVarRho}, we
    note that for any $\sigma \in S_{4}$,
    such that $\{\sigma(1), \sigma(4)\} = 
    \{1, 2\}$, or $\{3, 4\}$, or $\{1, 3\}$, or
    $\{2, 4\}$, $\varrho_{\tt{tensor}}(((M')^{2})^{\sigma}) \neq 0$.
    We will now let
    $$T_{2} = S_{4} \setminus T_{1} = \Big\{\sigma \in S_{4}: 
    \{\sigma(1), \sigma(4)\} \in 
    \big\{ \{1, 2\}, \{3, 4\}, \{1, 3\}, \{2, 4\} \big\} \Big\},$$
    and define the $\text{Sym}_{4}(\mathbb{R})$-polynomial
    $\zeta_{2}$ such that $\zeta_{2}: 
    N \mapsto \prod_{\sigma \in T_{2}}
    \varrho_{\tt{tensor}}((N^{2})^{\sigma})$.
    We have just seen that $\zeta_{2}(M') \neq 0$.
    
    Since $M' \in \text{Sym}_{4}^{\tt{F}}(\mathbb{R}_{> 0})$,
    \cref{lemma: thickeningWorks} lets us find
    $M'' \in \mathfrak{R}(M', \{\zeta_{1}, \zeta_{2}\}) \cap
    \text{Sym}_{4}^{\tt{F}}(\mathbb{R}_{> 0})$.
    Since $\zeta_{1}(M'') \neq 0$, and $\zeta_{2}(M'') \neq 0$,
    it follows that $\rho_{\tt{tensor}}((M'')^{2}) \neq 0$.
    Since $(M'')^{2} \in \text{Sym}_{4}^{\tt{pd}}(\mathbb{R}_{> 0})$,
    \cref{theorem: TensorOrHard} implies that 
    $\PlEVAL((M'')^{2}) \leq \PlEVAL(M'') \leq \PlEVAL(M')$ 
    is $\#$P-hard.
    Since $M' = M^{n^{*}}$ for some integer $n^{*} \geq 1$,
    we also see that $\PlEVAL(M') \leq \PlEVAL(M)$.
    This proves that $\PlEVAL(M)$ is $\#$P-hard.
\end{proof}

\begin{remark*}
    We should note that in
    \cref{lemma: fullRankHardnessCase1}, we do not
    require that
    $\rho_{\tt{tensor}}(M) \neq 0$.
    We only require that $M$ be of the form $M = KDK^{\tt{T}}$,
    where $K$ and $D$ are as in \cref{lemma: fullRankExtension}
    (with the stipulation on $x, y, z, u, v$ in the statement of \cref{lemma: fullRankHardnessCase1}).
    In this case, it trivially follows that
    $\varrho_{\tt{tensor}}(M^{2}) = 0$, which
    is the only property of $M$ that is used in the proof.
    This will also be true for
    \cref{lemma: fullRankHardnessCase2}, 
    \cref{lemma: fullRankHardnessCase3}, and
    \cref{lemma: fullRankHardnessCase4} below.
\end{remark*}

\begin{lemma}\label{lemma: fullRankHardnessCase2}
    Let $M \in \text{Sym}_{4}^{\tt{F}}(\mathbb{R}_{> 0})$
    be a matrix of the form $M = KDK^{\tt{T}}$, where the matrix $K$
    and $D$ are as in
    \cref{lemma: fullRankExtension}, such that
    $x = y$, and $z = -x^{2}$.
    Then, $\PlEVAL(M)$ is $\#$P-hard.
\end{lemma}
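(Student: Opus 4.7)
The plan is to mirror the structure of the proof of \cref{lemma: fullRankHardnessCase1}: form an explicit interpolated matrix $N_n$ out of $M^n$ via thickening and squaring, evaluate a fixed $\text{Sym}_4(\mathbb{R})$-polynomial $\xi$ on it, and combine the nonvanishing of $\xi(N_n)$ with the identity $\varrho_{\tt{tensor}}(M^2) = 0$ together with \cref{lemma: fullRankVarRho}, \cref{lemma: thickeningWorks}, and \cref{theorem: TensorOrHard}. The only genuinely new ingredient is handling the fact that $x = y$ makes the eigenvalue $\kappa x$ of $M$ have multiplicity two, so the middle two columns of $K$ depend on parameters $u, v$ with $u^2 + v^2 = 2$. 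The first step is to observe that these parameters cancel from every entry of $M^n$: for any odd $n$, the matrix $(\nicefrac{4}{\kappa^n}) M^n$ has all diagonal entries equal to $a := 1 + 2x^n - x^{2n}$, the entries at positions $(1,2), (1,3), (2,4), (3,4)$ equal to $b := 1 + x^{2n}$, and the entries at positions $(1,4), (2,3)$ equal to $c := 1 - 2x^n - x^{2n}$. The cancellation happens because each off-diagonal contribution $K_{i2}K_{j2} + K_{i3}K_{j3}$ gets the same weight $x^n$ (since $x = y$) and therefore collapses to an expression in $u^2 + v^2 = 2$.

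Next I would compute $N_n = (T_2((\nicefrac{4}{\kappa^n}) M^n))^2$ and the polynomial $\xi(N) := N_{11}N_{14} - N_{12}N_{13}$. Using the identity $a^2 + c^2 = 2 b^2$, which follows at once from $(A+B)^2 + (A-B)^2 = 2A^2 + 2B^2$ with $A = 1 - x^{2n}$ and $B = 2x^n$, a direct expansion yields
\[
(N_n)_{12} = (N_n)_{13} = 4 b^4, \qquad (N_n)_{11} = 6 b^4 - 2 a^2 c^2, \qquad (N_n)_{14} = 2(a^2 c^2 + b^4).
\]
Substituting and simplifying gives $\xi(N_n) = -4(a^2 c^2 - b^4)^2$. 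Factoring $a^2 c^2 - b^4 = (ac - b^2)(ac + b^2) = (-8 x^{2n}) \cdot 2(1 - x^{2n})^2$ collapses this to $\xi(N_n) = -1024\, x^{4n}(1 - x^{2n})^4$. Since $0 < |x| < 1$ by \cref{lemma: fullRankExtension}, this is nonzero for every $n \geq 1$; in particular $n^{*} = 1$ suffices.

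With $n^{*}$ in hand, the remainder of the argument is identical to that of \cref{lemma: fullRankHardnessCase1}. Set $M' = M^{n^{*}}$ and let $T_1 := \{\sigma \in S_4 : \{\sigma(1), \sigma(4)\} \in \{\{1,4\}, \{2,3\}\}\}$, $T_2 := S_4 \setminus T_1$, and $\zeta_i(N) := \prod_{\sigma \in T_i} \varrho_{\tt{tensor}}((N^2)^\sigma)$ for $i \in \{1, 2\}$. The nonvanishing of $\xi$ on $N_{n^*}$ gives $\zeta_1(T_2 M') \neq 0$, while $\varrho_{\tt{tensor}}((M')^2) = 0$ because $D^2 = \kappa^2 \cdot \text{diag}(1, x^2) \otimes \text{diag}(1, x^2)$ is a tensor product and the multiplicity-two eigenspace of $M^2$ may equivalently be spanned by the relevant columns of $H \otimes H$, yielding $M^2 = A \otimes B$. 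Then \cref{lemma: fullRankVarRho} gives $\zeta_2(M') \neq 0$, and \cref{lemma: thickeningWorks} produces an $M'' \in \mathfrak{R}(M', \{\zeta_1, \zeta_2\}) \cap \text{Sym}_4^{\tt{F}}(\mathbb{R}_{> 0})$ with $\rho_{\tt{tensor}}((M'')^2) \neq 0$. Applying \cref{theorem: TensorOrHard} to $(M'')^2 \in \text{Sym}_4^{\tt{pd}}(\mathbb{R}_{> 0})$ finishes the reduction $\PlEVAL((M'')^2) \leq \PlEVAL(M'') \leq \PlEVAL(M') \leq \PlEVAL(M)$.

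The main obstacle is the direct algebraic computation of $\xi(N_n)$: everything depends on the specific cancellations forced by $x = y$ and $u^2 + v^2 = 2$, and in particular on the identity $a^2 + c^2 = 2b^2$ which turns the expression for $\xi(N_n)$ into a perfect square. Once this identity is secured, the rest of the proof is a routine adaptation of the Case 1 template.
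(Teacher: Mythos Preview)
Your proposal is correct and takes essentially the same approach as the paper: compute the entries of $M^n$ for odd $n$ (where the $u,v$ parameters cancel via $u^2+v^2=2$), form $N_n = (T_2((\nicefrac{4}{\kappa^n})M^n))^2$, and show $\xi(N_n) = -1024\,x^{4n}(1-x^{2n})^4 \neq 0$, after which the endgame is identical to Case~1. Your explicit use of the identity $a^2 + c^2 = 2b^2$ to factor $\xi(N_n) = -4(a^2c^2 - b^4)^2$ is a nice streamlining of the computation, but the route and the final formula match the paper exactly.
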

\begin{proof}
    We will define $\xi$ such that
    $\xi: N \mapsto (N_{11}N_{14} - N_{12}N_{13})$, as in
    \cref{lemma: fullRankHardnessCase1}.
    We note from \cref{lemma: fullRankExtension}, that
    $M = KDK^{\tt{T}}$, where
    $$K = \frac{1}{2}
    \begin{pmatrix}
        1 & u & v & 1\\
        1 & -v & u & -1\\
        1 & v & -u & -1\\
        1 & -u & -v & 1\\
    \end{pmatrix}$$
    for some $u, v \in \mathbb{R}$ with $u^2+v^2 =2$, and
    $D = \kappa \cdot \text{diag}(1, x, x, -x^{2})$
    for some $\kappa \in \mathbb{R}_{> 0}$, and $x \in \mathbb{R}$
    such that $|x| < 1$.
    For odd $n \geq 1$, we note that $M^{n} = KD^{n}K^{\tt{T}}$,
    where $D^{n} = (\kappa)^{n} \cdot (1, x^{n}, x^{n}, - x^{2n})$.
    So, we note that (since $u^{2} + v^{2} = 2$),
    \begin{align*}
        (M^{n})_{11} = (M^{n})_{22} = (M^{n})_{33} = (M^{n})_{44}
        &=  \kappa^{n} \cdot \left(
        \frac{1 + (u^{2} + v^{2})x^{n} - x^{2n}}{4}\right) 
        = \kappa^{n} \cdot \left(\frac{1 + 2x^{n} - x^{2n}}{4} \right),\\
        (M^{n})_{12} = (M^{n})_{34} = (M^{n})_{13} = (M^{n})_{24}
        &=  \kappa^{n} \cdot \left(
        \frac{1 + (uv)(-x + x) + x^{2n}}{4}\right)
        = \kappa^{n} \cdot \left(\frac{1 + x^{2n}}{4}\right),\\
        (M^{n})_{14} = (M^{n})_{23}
        &=  \kappa^{n} \cdot \left(
        \frac{1 - (u^{2} + v^{2})x - x^{2n}}{4}\right)
        = \kappa^{n} \cdot \left(\frac{1 - 2x^{n} - x^{2n}}{4} \right).
    \end{align*}
    We will let 
    $N_{n} = (T_{2}((\nicefrac{4}{\kappa^{n}})M^{n}))^{2}$
    for any odd $n \geq 1$.
    We note that
    \begin{align*}
        (N_{n})_{11}
        &= (1 + 2x^{n} - x^{2n})^{4} + 2(1 + x^{2n})^{4} 
        + (1 - 2x^{n} - x^{2n})^{4},\\
        (N_{n})_{12} = (N_{n})_{13}
        &= 2((1 + 2x^{n} - x^{2n})(1 + x^{2n}))^{2} + 
        2((1 - 2x^{n} - x^{2n})(1 + x^{2n}))^{2},\\
        (N_{n})_{14}
        &= 2((1 + 2x^{n} - x^{2n})(1 - 2x^{n} - x^{2n}))^{2}
        + 2(1 + x^{2n})^{4}.
    \end{align*}
    After simplifying and computing $\xi(N_{n})$, we find that
    $$\xi(N_{n}) = (-1024)x^{4n}\left(1 - x^{2n}\right)^{4}.$$
    Since $0 < |x| < 1$, we see that
    there exists some odd $n^{*} \geq 1$ such that
    $\xi(N_{n^{*}}) \neq 0$.
    Since $\xi(N_{n^{*}}) = \left(\nicefrac{4}{\kappa^{n^{*}}}\right)^{4}
    \xi((T_{2}(M^{n^{*}}))^{2})$, this implies that
    $\xi((T_{2}(M^{n^{*}}))^{2}) \neq 0$.

    We will now let $M' = M^{n^{*}}$.
    Now, by the same argument as in the proof of
    \cref{lemma: fullRankHardnessCase1}, we can now find some
    $M'' \in \text{Sym}_{4}^{\tt{F}}(\mathbb{R}_{> 0})$, such that
    $\PlEVAL(M'') \leq \PlEVAL(M')$, and
    $\rho_{\tt{tensor}}((M'')^{2}) \neq 0$.
    So, \cref{theorem: TensorOrHard} once again lets us prove
    that $\PlEVAL((M'')^{2}) \leq \PlEVAL(M') \leq \PlEVAL(M)$
    is $\#$P-hard.
\end{proof}

\begin{lemma}\label{lemma: fullRankHardnessCase3}
    Let $M \in \text{Sym}_{4}^{\tt{F}}(\mathbb{R}_{> 0})$
    be a matrix of the form $M = KDK^{\tt{T}}$, where the matrix $K$
    and $D$ are as in
    \cref{lemma: fullRankExtension}, such that
    $x = -y$, $z = -x^{2}$, and $|u| \neq |v|$.
    Then, $\PlEVAL(M)$ is $\#$P-hard.
\end{lemma}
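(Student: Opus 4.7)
The plan is to follow the template of \cref{lemma: fullRankHardnessCase1} and \cref{lemma: fullRankHardnessCase2}. Define the $\text{Sym}_{4}(\mathbb{R})$-polynomial $\xi: N \mapsto N_{11}N_{14} - N_{12}N_{13}$. For odd $n \geq 1$, set $N_n = (T_{2}((4/\kappa^{n}) M^{n}))^{2}$. The key step is to show that $\xi(N_{n^*}) \neq 0$ for some sufficiently large odd $n^*$; once this is done, the rest of the argument follows the earlier cases verbatim to produce, via \cref{lemma: thickeningWorks}, a matrix $M'' \in \text{Sym}_{4}^{\tt{F}}(\mathbb{R}_{> 0})$ with $\rho_{\tt{tensor}}((M'')^{2}) \neq 0$, whence \cref{theorem: TensorOrHard} yields that $\PlEVAL((M'')^{2}) \leq \PlEVAL(M)$ is $\#$P-hard.

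Computing $M^{n} = K D^{n} K^{\tt{T}}$ for odd $n$, using $D^{n} = \kappa^{n} \cdot \text{diag}(1, x^{n}, -x^{n}, -x^{2n})$, and writing $s = u^{2} - v^{2}$ and $t = 2uv$ (so that $s^{2} + t^{2} = (u^{2}+v^{2})^{2} = 4$), one finds that the entries of $(4/\kappa^{n})M^{n}$ take only four distinct values,
\[
a = 1 + sx^{n} - x^{2n}, \quad b = 1 - sx^{n} - x^{2n}, \quad c = 1 - tx^{n} + x^{2n}, \quad d = 1 + tx^{n} + x^{2n},
\]
with $a$ at positions $(1,1),(4,4),(2,3)$; $b$ at $(2,2),(3,3),(1,4)$; $c$ at $(1,2),(3,4)$; and $d$ at $(1,3),(2,4)$. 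This pattern forces $(N_{n})_{12} = (N_{n})_{13} = (a^{2}+b^{2})(c^{2}+d^{2})$ and $(N_{n})_{14} = 2(a^{2}b^{2} + c^{2}d^{2})$, hence
\[
\xi(N_{n}) = (a^{4}+b^{4}+c^{4}+d^{4}) \cdot 2(a^{2}b^{2}+c^{2}d^{2}) - (a^{2}+b^{2})^{2}(c^{2}+d^{2})^{2}.
\]

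The main obstacle is the Taylor expansion of this expression in $w = x^{n}$. A careful computation shows that, after using $s^{2} + t^{2} = 4$, both the constant term and the coefficient of $w^{2}$ vanish identically, and the coefficient of $w^{4}$ simplifies to $64 s^{2}(s^{2} - 8)$. The hypothesis $|u| \neq |v|$ gives $s \neq 0$, and since $s^{2} = (u^{2}-v^{2})^{2} = 4 - 4u^{2}v^{2} \leq 4 < 8$, this leading coefficient is nonzero. Because $0 < |x| < 1$ and $x \neq 0$, for all sufficiently large odd $n^{*}$ one obtains $\xi(N_{n^{*}}) \neq 0$. The vanishing of the lower-order terms is not obvious \emph{a priori} but is forced by the degeneracy of $M$ at $x = 0$, where $M$ collapses to a rank-one multiple of the all-ones matrix and so $\xi(N_{n})$ must vanish to leading order.

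Once $\xi(N_{n^{*}}) \neq 0$ is established, set $M' = M^{n^{*}} \in \text{Sym}_{4}^{\tt{F}}(\mathbb{R}_{>0})$. For permutations $\sigma \in S_{4}$ with $\{\sigma(1),\sigma(4)\} \in \{\{1,4\}, \{2,3\}\}$, inspection of \cref{lemma: tensorPolynomial} shows that $\varrho_{\tt{tensor}}(((T_{2}(M'))^{2})^{\sigma}) = \varrho_{\tt{tensor}}((T_{2}(M'))^{2}) \geq \xi((T_{2}(M'))^{2})^{2} > 0$. For the remaining permutations, observe that $(M')^{2} = (M^{2})^{n^{*}}$; since $\varrho_{\tt{tensor}}(M^{2}) = 0$ by hypothesis, $M^{2} = A \otimes B$ with $A, B \in \text{Sym}_{2}^{\tt{pd}}(\mathbb{R}_{> 0})$, hence $(M')^{2} = A^{n^{*}} \otimes B^{n^{*}}$ is also a tensor product, and \cref{lemma: fullRankVarRho} supplies the remaining non-vanishing of $\varrho_{\tt{tensor}}(((M')^{2})^{\sigma})$. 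Applying \cref{lemma: thickeningWorks} to $M'$ with the two $\text{Sym}_{4}(\mathbb{R})$-polynomials that encode these two families of conditions produces the required $M''$ with $\rho_{\tt{tensor}}((M'')^{2}) \neq 0$, and \cref{theorem: TensorOrHard} then completes the proof.
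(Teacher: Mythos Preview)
Your proof is correct and follows the same overall template as the paper's: compute $\xi(N_n)$ for $N_n = (T_{2}((4/\kappa^{n})M^{n}))^{2}$ with odd $n$, show its leading nonzero term in $w = x^{n}$ has a nonvanishing coefficient, and then finish via \cref{lemma: thickeningWorks}, \cref{lemma: fullRankVarRho}, and \cref{theorem: TensorOrHard} exactly as in Cases~1 and~2.

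Where you differ from the paper is in the actual computation, and your route is both cleaner and more accurate. You correctly record the entry pattern of $(4/\kappa^{n})M^{n}$: in fact $(M^{n})_{22} = (M^{n})_{33}$ takes the value $b = 1 - s x^{n} - x^{2n}$ while $(M^{n})_{23}$ takes the value $a$, so $(M^{n})_{11}\neq (M^{n})_{22}$ when $|u|\neq|v|$ (the paper's displayed equalities $(M^{n})_{11}=(M^{n})_{22}=(M^{n})_{33}=(M^{n})_{44}$ are mistaken in this case). With your substitution $s = u^{2}-v^{2}$, $t = 2uv$, one indeed obtains $(N_n)_{12}=(N_n)_{13}=(a^{2}+b^{2})(c^{2}+d^{2})$, $(N_n)_{14}=2(a^{2}b^{2}+c^{2}d^{2})$, and the $w^{4}$ coefficient of $\xi(N_n)$ simplifies to $64\,s^{2}(s^{2}-8)$, which is manifestly nonzero for every $s\neq 0$ because $s^{2}=4-t^{2}\le 4<8$. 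By contrast, the paper's $f_{4}$ (computed from its different entry placement) vanishes at the exceptional values $u^{2}=1\pm\sqrt{6}/3$, which forces an extra check of $f_{6}$. Your approach eliminates that case split entirely.

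One small sharpening: $\varrho_{\tt tensor}(M^{2})=0$ is not literally ``by hypothesis'' in the lemma statement but follows from the assumed form of $M$, since the repeated eigenvalue $\kappa^{2}x^{2}$ of $M^{2}$ allows $M^{2}$ to be re-diagonalized by $H\otimes H$ and hence written as a tensor product.
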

\begin{proof}
    We will once again define $\xi$ such that
    $\xi: N \mapsto (N_{11}N_{14} - N_{12}N_{13})$.
    We note from \cref{lemma: fullRankExtension}, that
    $M = KDK^{\tt{T}}$, where
    $$K = \frac{1}{2}
    \begin{pmatrix}
        1 & u & v & 1\\
        1 & -v & u & -1\\
        1 & v & -u & -1\\
        1 & -u & -v & 1\\
    \end{pmatrix}$$
    for some $u, v \in \mathbb{R}$ with $u^2+v^2 =2$, $|u| \neq |v|$, 
    and $D = \kappa \cdot \text{diag}(1, x, -x, -x^{2})$
    for some $\kappa \in \mathbb{R}_{> 0}$, and $x, y, z \in \mathbb{R}$
    such that $|x|, |y|, |z| < 1$.

    We note that for odd $n \geq 1$,
    $M^{n} = KD^{n}K^{\tt{T}}$, where
    $D^{n} = (\kappa)^{n} \cdot \text{diag}(1, x^{n} -x^{n}, -x^{2n})$.
    So, we note that
    \begin{align*}
        (M^{n})_{11} = (M^{n})_{22} = (M^{n})_{33} = (M^{n})_{44}
        &=  \kappa^{n} \cdot \left(
        \frac{1 + (u^{2} - v^{2})x^{n} - x^{2n}}{4} \right),\\
        (M^{n})_{12} = (M^{n})_{34}
        &=  \kappa^{n} \cdot \left(
        \frac{1 -2uvx^{n} + x^{2n}}{4}\right),\\
        (M^{n})_{13} = (M^{n})_{24}
        &=  \kappa^{n} \cdot \left(
        \frac{1 + 2uvx^{n} + x^{2n}}{4} \right),\\
        (M^{n})_{14} = (M^{n})_{23}
        &=  \kappa^{n} \cdot \left(
        \frac{1 + (v^{2} - u^{2})x^{n} - x^{2n}}{4} \right).
    \end{align*}
    We will now let 
    $N_{n} = (T_{2}((\nicefrac{4}{\kappa^{n}})M^{n}))^{2}$
    for any odd $n \geq 1$.
    We note that
    $$(N_{n})_{11}
    = (1 + (u^{2} - v^{2})x^{n} - x^{2n})^{4}
    + (1 - 2uvx^{n} + x^{2n})^{4} + (1 + 2uvx^{n} + x^{2n})^{4}
    + (1 + (v^{2} - u^{2})x^{n} - x^{2n})^{4},$$
    $$(N_{n})_{12}
    = 2((1 + (u^{2} - v^{2})x^{n} - x^{2n})(1 - 2uvx^{n} + x^{2n}))^{2}
    + 2((1 + (v^{2} - u^{2})x^{n} - x^{2n})(1 + 2uvx^{n} + x^{2n}))^{2},$$
    $$(N_{n})_{13}
    = 2((1 + (u^{2} - v^{2})x^{n} - x^{2n})(1 + 2uvx^{n} + x^{2n}))^{2}
    + 2((1 + (v^{2} - u^{2})x^{n} - x^{2n})(1 - 2uvx^{n} + x^{2n}))^{2},$$
    $$(N_{n})_{14}
    = 2((1 + (u^{2} - v^{2})x^{n} - x^{2n})
    (1 + (v^{2} - u^{2})x^{n} - x^{2n}))^{2}
    + 2((1 - 2uvx^{n} + x^{2n})(1 + 2uvx^{n} + x^{2n}))^{2}.$$
    If we compute $\xi(N_{n})$, it can be verified that
    $$\xi(N_{n}) = f_{4}(u, v)x^{4n} + f_{6}(u, v)x^{6n} + 
    f_{8}(u, v)x^{8n} + f_{10}(u, v)x^{10n} + f_{12}(u, v)x^{12n},$$
    where
    $$f_{4}(u, v) = f_{12}(u, v) = 
    -48u^8 - 48v^8 + 576u^6v^2 + 576u^2v^6 -
    128u^4 - 128v^4 + 768u^2v^2 - 1824u^4v^4 + 256,$$
    \begin{multline*}
        f_{6}(u, v) = f_{10}(u, v) = 
        16u^{12} - 160u^{10}v^2 + 240u^8v^4 + 832u^6v^6 + 
        240u^4v^8 - 160u^2v^{10} + 16v^{12} \\
        + 128u^8 -  512u^6v^2 - 1280u^4v^4 - 512u^2v^6 + 
        128v^8 + 256u^4 + 512u^2v^2 + 256v^4,
    \end{multline*}
    \begin{multline*}
        f_{8}(u, v) = 4u^{16} - 32u^{14}v^2 - 16u^{12}v^4
        + 288u^{10}v^6 + 536u^8v^8 + 288u^6v^{10} 
        - 16u^4v^{12} - 32u^2v^{14} + 4v^{16}\\
        - 32u^{12} + 64u^{10}v^2 + 544u^8v^4 
        + 896u^6v^6 + 544u^4v^8 + 64u^2v^{10}
        - 32v^{12} - 288u^8\\ 
        + 384u^6v^2 - 4800u^4v^4 + 384u^2v^6 - 288v^8
        - 256u^4 + 1536u^2v^2 - 256v^4 + 512.
    \end{multline*}
    While these equations appear a bit intimidating,
    we will focus on $f_{4}(u, v)$. 
    We note that $u^{2} + v^{2} = 2$.
    So, $f_{4}(u, v) = 0$ if and only if
    \begin{multline*}
        -16\Big(3u^{8} + 3(2 - u^{2})^{4} - 36u^{6}(2 - u^{2}) 
        - 36u^{2}(2 - u^{2})^{3} + 8u^{4} + 8(2 - u^{2})^{2}\\
        - 48u^{2}(2 - u^{2}) + 114u^{2}(2 - u^{2})^{2}
        - 16\Big) = 0.
    \end{multline*}
    After simplifying, we find that $f_{4}(u, v) = 0$ if and only if
    $$-1024(3u^8 - 12u^6 + 16u^4 - 8u^2 + 1) = 0.$$
    This is a degree $4$ polynomial in $u^{2}$,
    and it can be verified that
    $$3u^8 - 12u^6 + 16u^4 - 8u^2 + 1 = 3(u^{2} - 1)^{2}
    \left(u^{2} - 1 - \frac{\sqrt{6}}{3}\right)
    \left(u^{2} - 1 + \frac{\sqrt{6}}{3}\right).$$
    Since we know that $|u| \neq |v|$, we can 
    see that it is not possible that $u^{2} = 1$.
    So, $f_{4}(u, v) = 0$ if and only if
    $u^{2} = (1 - \nicefrac{\sqrt{6}}{3})$ and
    $v^{2} = (1 + \nicefrac{\sqrt{6}}{3})$, or if
    $u^{2} = (1 + \nicefrac{\sqrt{6}}{3})$ and
    $v^{2} = (1 - \nicefrac{\sqrt{6}}{3})$.
    But in either case, for these values of $u^2$ and $v^2$, it is seen that
    $f_{6}(u, v) = (\nicefrac{16384}{9}) \neq 0$.
    Summing up, we find that
    if $f_{4}(u, v) = 0$, then
    $f_{6}(u, v) \neq 0$.
    If $f_{4}(u, v) \neq 0$, then
    we note that since $|x| < 1$, 
    for large enough $n \geq 1$,
    $|f_{4}(u, v)| > |f_{6}(u, v)|x^{2n} + |f_{8}(u, v)|x^{4n}
    + |f_{10}(u, v)|x^{6n} + |f_{12}(u, v)|x^{8n}.$
    So, there exists some odd $n^{*} \geq 1$
    such that $\xi(N_{n^{*}}) \neq 0$.
    On the other hand, if $f_{4}(u, v) = f_{12}(u, v) = 0$,
    we note that for  large enough $n \geq 1$,
    $|f_{6}(u, v)| > |f_{8}(u, v)|x^{2n} + |f_{10}(u, v)|x^{4n}$.
    So, once again there exists some odd $n^{*} \geq 1$
    such that $\xi(N_{n^{*}}) \neq 0$.
    Since $\xi(N_{n^{*}}) = \left(\nicefrac{4}{\kappa^{n^{*}}}\right)^{4}
    \xi((T_{2}(M^{n^{*}}))^{2})$, this implies that
    $\xi((T_{2}(M^{n^{*}}))^{2}) \neq 0$.

    We will now let $M' = M^{n^{*}}$.
    Now, by the same argument as in the proof of
    \cref{lemma: fullRankHardnessCase1}, we can now find some
    $M'' \in \text{Sym}_{4}^{\tt{F}}(\mathbb{R}_{> 0})$, such that
    $\PlEVAL(M'') \leq \PlEVAL(M')$, and
    $\rho_{\tt{tensor}}((M'')^{2}) \neq 0$.
    So, \cref{theorem: TensorOrHard} once again lets us prove
    that $\PlEVAL((M'')^{2}) \leq \PlEVAL(M') \leq \PlEVAL(M)$
    is $\#$P-hard.
\end{proof}

\begin{lemma}\label{lemma: fullRankHardnessCase4}
    Let $M \in \text{Sym}_{4}^{\tt{F}}(\mathbb{R}_{> 0})$
    be a matrix of the form $M = KDK^{\tt{T}}$, where the matrix $K$
    and $D$ are as in
    \cref{lemma: fullRankExtension}, such that
    $x = -y$, and $z = x^{2}$.
    Then, $\PlEVAL(M)$ is $\#$P-hard.
\end{lemma}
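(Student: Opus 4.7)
The plan is to imitate the strategy of \cref{lemma: fullRankHardnessCase1,lemma: fullRankHardnessCase2,lemma: fullRankHardnessCase3}: find an odd exponent $n^*$ for which $\xi((T_2 M^{n^*})^2) \neq 0$, where $\xi : N \mapsto N_{11}N_{14} - N_{12}N_{13}$; then the final-page argument of \cref{lemma: fullRankHardnessCase1}, combined with \cref{lemma: fullRankVarRho} and \cref{theorem: TensorOrHard}, delivers $\#$P-hardness. The only case-specific work is the displaying of this non-vanishing, which rests on an algebraic identity I describe below.

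For any odd $n$, direct computation from $M^n = K D^n K^{\tt T}$ with $D^n = \kappa^n \, \text{diag}(1, x^n, -x^n, x^{2n})$ yields, writing $t = x^n$, $s = u^2 - v^2$, $r = 2uv$:
\begin{align*}
M^n_{11} = M^n_{44} = M^n_{23} &= (\kappa^n/4)(1 + st + t^2),\\
M^n_{22} = M^n_{33} = M^n_{14} &= (\kappa^n/4)(1 - st + t^2),\\
M^n_{12} = M^n_{34} &= (\kappa^n/4)(1 - rt - t^2),\\
M^n_{13} = M^n_{24} &= (\kappa^n/4)(1 + rt - t^2).
\end{align*}
Let $A, B, C, D$ denote the squares of these four distinct values. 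Then $N := T_2(M^n)$, up to a common scalar, has the pattern with rows $(A, C, D, B)$, $(C, B, A, D)$, $(D, A, B, C)$, $(B, D, C, A)$. Matrix-multiplying directly gives $(N^2)_{11} = A^2 + B^2 + C^2 + D^2$, $(N^2)_{12} = (N^2)_{13} = (A+B)(C+D)$, and $(N^2)_{14} = 2(AB + CD)$. A brief rearrangement then yields the crucial factorization
\[
\xi((T_2 M^n)^2) \; = \; (2AB - C^2 - D^2)\,(A^2 + B^2 - 2CD).
\]

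The argument now splits on whether $uv = 0$. If $uv = 0$, the constraint $u^2 + v^2 = 2$ forces (after a harmless swap of $u, v$) $v = 0$ and $u^2 = 2$; substituting $n = 1$ one checks that $M$ is exactly the matrix of \cref{lemma: fullRankSpecialHardness} with $p = 1+x$, $q = 1-x$, and since $0 < |x| < 1$ gives $p \neq q$, that lemma closes the proof. If $uv \neq 0$, so $r \neq 0$, I expand the two factors as $t = x^n \to 0$, using the identity $s^2 + r^2 = 4$ (from $u^2 + v^2 = 2$):
\[
2AB - C^2 - D^2 = -8 r^2 t^2 + O(t^4), \qquad A^2 + B^2 - 2CD = 8(8 - r^2) t^2 + O(t^4).
\]
Since $r^2 > 0$ and $r^2 = 4u^2v^2 \leq (u^2+v^2)^2 = 4 < 8$, both factors are non-zero at order $t^2$, so $\xi((T_2 M^n)^2) = -64\, r^2 (8 - r^2)\, t^4 + O(t^6) \neq 0$ for every sufficiently large odd $n$; fix one such $n^*$ and set $M' = M^{n^*}$.

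From this point, the remainder is \emph{verbatim} the proof of \cref{lemma: fullRankHardnessCase1}: the condition $\xi(N) \neq 0$ forces $\varrho_{\tt{tensor}}(N^{\sigma}) \neq 0$ for $\sigma \in S_4$ with $\{\sigma(1),\sigma(4)\} \in \{\{1,4\}, \{2,3\}\}$, and \cref{lemma: fullRankVarRho} applied to the tensor product $(M')^2 = M^{2n^*}$ (which inherits $\varrho_{\tt{tensor}}((M')^2) = 0$ since $\varrho_{\tt{tensor}}(M^2) = 0$) covers the remaining six cosets of $\sigma$; \cref{lemma: thickeningWorks} then produces $M'' \in \text{Sym}_4^{\tt{F}}(\mathbb{R}_{>0})$ with $\rho_{\tt{tensor}}((M'')^2) \neq 0$, and \cref{theorem: TensorOrHard} gives $\#$P-hardness of $\PlEVAL((M'')^2) \leq \PlEVAL(M') \leq \PlEVAL(M)$. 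The only real obstacle is spotting the factorization of $\xi((T_2 M^n)^2)$, but the three-fold identifications among the off-diagonal entries of $M^n$ essentially force it, and the asymptotic cancellation using $s^2 + r^2 = 4$ is then straightforward.
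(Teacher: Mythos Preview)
Your proof is correct and follows the same overall strategy as the paper: handle $uv=0$ via \cref{lemma: fullRankSpecialHardness}, then for $uv\neq 0$ show $\xi\bigl((T_2 M^{n})^2\bigr)\neq 0$ for some odd $n$, and finish with the endgame argument of \cref{lemma: fullRankHardnessCase1}. The difference lies entirely in how the non-vanishing of $\xi$ is established. The paper brute-force expands $\xi(N_n)$ as a polynomial $f_4(u,v)x^{4n}+\cdots+f_{12}(u,v)x^{12n}$ and then analyzes the zero locus of $f_4$ (a quartic in $u^2$), checking that at each root $f_6$ survives. Your route instead exploits the observation that the rows of $T_2 M^n$ are permutations of the four values $A,B,C,D$, yielding the clean factorization $\xi=(2AB-C^2-D^2)(A^2+B^2-2CD)$, after which a two-line Taylor expansion in $t=x^n$ (using $s^2+r^2=4$) gives leading coefficients $-8r^2$ and $8(8-r^2)$, both manifestly non-zero when $0<r^2\leq 4$. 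This is considerably more transparent than the paper's coefficient chase, and it also gets the entry pattern of $M^n$ right: you correctly group $M^n_{11}=M^n_{44}=M^n_{23}$ and $M^n_{22}=M^n_{33}=M^n_{14}$, whereas the paper's displayed identities $M^n_{11}=M^n_{22}$ and $M^n_{14}=M^n_{23}$ hold only when $u^2=v^2$. Your factorization sidesteps any sensitivity to that issue.
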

\begin{proof}
    We note from \cref{lemma: fullRankExtension}, that
    $M = KDK^{\tt{T}}$, where
    $$K = \frac{1}{2}
    \begin{pmatrix}
        1 & u & v & 1\\
        1 & -v & u & -1\\
        1 & v & -u & -1\\
        1 & -u & -v & 1\\
    \end{pmatrix}$$
    for some $u, v \in \mathbb{R}$ with $u^2+v^2 =2$, and
    $D = \kappa \cdot \text{diag}(1, x, -x, x^{2})$
    for some $\kappa \in \mathbb{R}_{> 0}$, and $x \in \mathbb{R}$
    such that $|x| < 1$.
    
    We will first consider the case where $u = 0$.
    Since $u^{2} + v^{2} = 2$,
    this implies that $|v| = \sqrt{2}$.
    Since eigenvectors are equivalent upto scaling,
    we may assume that in fact, $v = \sqrt{2}$.
    In this case, we find that since
    $M = KDK^{\tt{T}}$,
    \begin{align*}
        M
        &= \frac{1}{2}\begin{pmatrix}
            (1 - 2x + x^{2}) & (1 - x^{2}) & (1 - x^{2}) & (1 + 2x + x^{2})\\
            (1 - x^{2}) & (1 + 2x + x^{2}) & (1 - 2x + x^{2}) & (1 - x^{2})\\
            (1 - x^{2}) & (1 - 2x + x^{2}) & (1 + 2x + x^{2}) & (1 - x^{2})\\
            (1 + 2x + x^{2}) & (1 - x^{2}) & (1 - x^{2}) & (1 - 2x + x^{2})\\
        \end{pmatrix}\\
        &= \frac{1}{2}\begin{pmatrix}
            (1 - x)^{2} & (1 - x)(1 + x) & (1 - x)(1 + x) & (1 + x)^{2}\\
            (1 - x)(1 + x) & (1 + x)^{2} & (1 - x)^{2} & (1 - x)(1 + x)\\
            (1 - x)(1 + x) & (1 - x)^{2} & (1 + x)^{2} & (1 - x)(1 + x)\\
            (1 + x)^{2} & (1 - x)(1 + x) & (1 - x)(1 + x) & (1 - x)^{2}\\
        \end{pmatrix}
    \end{align*}
    Since $x \neq 0$, we see that this matrix $M$
    has the exact form as in \cref{lemma: fullRankSpecialHardness}
    with $p = (\nicefrac{1}{\sqrt{2}})(1 - x)$, and 
    $q = (\nicefrac{1}{\sqrt{2}})(1 + x)$.
    So, we already know that $\PlEVAL(M)$ is $\#$P-hard.
    Similarly, when $v = 0$, it is seen that
    $M$ has the exact  form as in \cref{lemma: fullRankSpecialHardness}
    with $p = (\nicefrac{1}{\sqrt{2}})(1 + x)$, and 
    $q = (\nicefrac{1}{\sqrt{2}})(1 - x)$,
    and therefore, $\PlEVAL(M)$ is $\#$P-hard.

    We can therefore now assume that $u \neq 0$ and $|u| \neq \sqrt{2}$.    
    We will again define $\xi$ such that
    $\xi: N \mapsto (N_{11}N_{14} - N_{12}N_{13})$.
    We note that for odd $n \geq 1$,
    $M^{n} = KD^{n}K^{\tt{T}}$, where
    $D^{n} = (\kappa)^{n} \cdot \text{diag}(1, x^{n} -x^{n}, x^{2n})$.
    So, we note that
    \begin{align*}
        (M^{n})_{11} = (M^{n})_{22} = (M^{n})_{33} = (M^{n})_{44}
        &=  \kappa^{n} \cdot \left(
        \frac{1 + (u^{2} - v^{2})x^{n} + x^{2n}}{4} \right),\\
        (M^{n})_{12} = (M^{n})_{34}
        &=  \kappa^{n} \cdot \left(
        \frac{1 -2uvx^{n} - x^{2n}}{4}\right),\\
        (M^{n})_{13} = (M^{n})_{24}
        &=  \kappa^{n} \cdot \left(
        \frac{1 + 2uvx^{n} - x^{2n}}{4} \right),\\
        (M^{n})_{14} = (M^{n})_{23}
        &=  \kappa^{n} \cdot \left(
        \frac{1 + (v^{2} - u^{2})x^{n} + x^{2n}}{4} \right).
    \end{align*}
    We will now let 
    $N_{n} = (T_{2}((\nicefrac{4}{\kappa^{n}})M^{n}))^{2}$
    for any odd $n \geq 1$.
    We note that
    $$(N_{n})_{11}
    = (1 + (u^{2} - v^{2})x^{n} + x^{2n})^{4}
    + (1 - 2uvx^{n} - x^{2n})^{4} + (1 + 2uvx^{n} - x^{2n})^{4}
    + (1 + (v^{2} - u^{2})x^{n} + x^{2n})^{4},$$
    $$(N_{n})_{12}
    = 2((1 + (u^{2} - v^{2})x^{n} + x^{2n})(1 - 2uvx^{n} - x^{2n}))^{2}
    + 2((1 + (v^{2} - u^{2})x^{n} + x^{2n})(1 + 2uvx^{n} - x^{2n}))^{2},$$
    $$(N_{n})_{13}
    = 2((1 + (u^{2} - v^{2})x^{n} + x^{2n})(1 + 2uvx^{n} - x^{2n}))^{2}
    + 2((1 + (v^{2} - u^{2})x^{n} + x^{2n})(1 - 2uvx^{n} - x^{2n}))^{2},$$
    $$(N_{n})_{14}
    = 2((1 + (u^{2} - v^{2})x^{n} + x^{2n})
    (1 + (v^{2} - u^{2})x^{n} + x^{2n}))^{2}
    + 2((1 - 2uvx^{n} - x^{2n})(1 + 2uvx^{n} - x^{2n}))^{2}.$$
    If we compute $\xi(N_{n})$, it can be verified that
    $$\xi(N_{n}) = f_{4}(u, v)x^{4n} + f_{6}(u, v)x^{6n} + 
    f_{8}(u, v)x^{8n} + f_{10}(u, v)x^{10n} + f_{12}(u, v)x^{12n},$$
    where
    $$f_{4}(u, v) = f_{12}(u, v) = 
    -48u^8 - 48v^8 + 576u^6v^2 + 576u^2v^6 +
    128u^4 + 128v^4 - 768u^2v^2 - 1824u^4v^4 + 256,$$
    \begin{multline*}
        f_{6}(u, v) = f_{10}(u, v) = 
        16u^{12} - 160u^{10}v^2 + 240u^8v^4 + 832u^6v^6 + 
        240u^4v^8 - 160u^2v^{10} + 16v^{12} \\
        - 128u^8 + 512u^6v^2 + 1280u^4v^4 + 512u^2v^6 - 
        128v^8 + 256u^4 + 512u^2v^2 + 256v^4,
    \end{multline*}
    \begin{multline*}
        f_{8}(u, v) = 4u^{16} - 32u^{14}v^2 - 16u^{12}v^4
        + 288u^{10}v^6 + 536u^8v^8 + 288u^6v^{10} 
        - 16u^4v^{12} - 32u^2v^{14} + 4v^{16}\\
        + 32u^{12} - 64u^{10}v^2 - 544u^8v^4 
        - 896u^6v^6 - 544u^4v^8 - 64u^2v^{10}
        + 32v^{12} - 288u^8\\ 
        + 384u^6v^2 - 4800u^4v^4 + 384u^2v^6 - 288v^8
        + 256u^4 - 1536u^2v^2 + 256v^4 + 512.
    \end{multline*}
    We will focus on $f_{4}(u, v)$. 
    We note that $u^{2} + v^{2} = 2$.
    So, $f_{4}(u, v) = 0$ if and only if
    \begin{multline*}
        -16\Big(3u^{8} + 3(2 - u^{2})^{4} - 36u^{6}(2 - u^{2}) 
        - 36u^{2}(2 - u^{2})^{3} - 8u^{4} - 8(2 - u^{2})^{2}\\
        + 48u^{2}(2 - u^{2}) + 114u^{2}(2 - u^{2})^{2}
        - 16\Big) = 0.
    \end{multline*}
    After simplifying, we find that $f_{4}(u, v) = 0$ if and only if
    $$-1024(3u^8 - 12u^6 + 14u^4 - 4u^2) = 0.$$
    It can be verified that
    $$3u^8 - 12u^6 + 14u^4 - 4u^2 = 3u^{2}(u^{2} - 2)
    \left(u^{2} - 1 - \frac{1}{\sqrt{3}} \right)
    \left(u^{2} - 1 + \frac{1}{\sqrt{3}} \right).$$
    Since we have already assumed that that $u \neq 0$ and 
    $|u| \neq \sqrt{2}$,
    we see that $f_{4}(u, v) = 0$ if and only if
    $u^{2} = (1 - \nicefrac{1}{\sqrt{3}})$
    and $v^{2} = (1 + \nicefrac{1}{\sqrt{3}})$, or
    if $u^{2} = (1 + \nicefrac{1}{\sqrt{3}})$ and
    $v^{2} = (1 - \nicefrac{1}{\sqrt{3}})$.
    In either case, it is seen that
    $f_{6}(u, v) = (\nicefrac{16384}{9}) \neq 0$.
    So, we find that
    if $f_{4}(u, v) = 0$, then
    $f_{6}(u, v) \neq 0$.
    Following the same argument as in the proof
    \cref{lemma: fullRankHardnessCase3} lets us find
    some odd $n^{*} \geq 1$
    such that $\xi(N_{n^{*}}) \neq 0$.
    Since $\xi(N_{n^{*}}) = \left(\nicefrac{4}{\kappa^{n^{*}}}\right)^{4}
    \xi((T_{2}(M^{n^{*}}))^{2})$, this implies that
    $\xi((T_{2}(M^{n^{*}}))^{2}) \neq 0$.

    We will now let $M' = M^{n^{*}}$.
    Now, by the same argument as in the proof of
    \cref{lemma: fullRankHardnessCase1}, we can now find some
    $M'' \in \text{Sym}_{4}^{\tt{F}}(\mathbb{R}_{> 0})$, such that
    $\PlEVAL(M'') \leq \PlEVAL(M')$, and
    $\rho_{\tt{tensor}}((M'')^{2}) \neq 0$.
    So, \cref{theorem: TensorOrHard} once again lets us prove
    that $\PlEVAL((M'')^{2}) \leq \PlEVAL(M') \leq \PlEVAL(M)$
    is $\#$P-hard.
\end{proof}

We are finally ready to prove the following theorem.

\begin{theorem}\label{theorem: fullRankPositiveDichotomy}
    Let $M \in \text{Sym}_{4}^{\tt{F}}(\mathbb{R}_{> 0})$.
    Then $\PlEVAL(M)$ is $\#$P-hard, unless
    $M$ is isomorphic to $A \otimes B$ for some
    $A, B \in \text{Sym}_{2}^{\tt{F}}(\mathbb{R}_{> 0})$
    such that $A_{11} = A_{22}$, and $B_{11} = B_{22}$,
    in which case, $\PlEVAL(M)$ is polynomial time tractable.
\end{theorem}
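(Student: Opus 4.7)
The plan is to split according to whether $M$ is itself isomorphic to a tensor product, which is detected by $\rho_{\tt{tensor}}(M)$. In both cases, the main leverage is to pass to $M^{2}$, which lies in $\text{Sym}_{4}^{\tt{pd}}(\mathbb{R}_{>0})$ because $M$ has full rank and positive entries, so that \cref{theorem: posDefPositiveDichotomy} and \cref{theorem: TensorOrHard} become available.

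First, suppose $\rho_{\tt{tensor}}(M) = 0$, so after a row-column permutation (which preserves $\PlEVAL$) we may assume $M = A \otimes B$ with $A, B \in \text{Sym}_{2}^{\tt{F}}(\mathbb{R}_{>0})$ (full rank and positive because $M$ is). If $A_{11} = A_{22}$ and $B_{11} = B_{22}$, then \cref{theorem: fullDomain2Hardness} makes $\PlEVAL(A)$ and $\PlEVAL(B)$ tractable, and $Z_{M}(G) = Z_{A}(G) \cdot Z_{B}(G)$ yields tractability for $\PlEVAL(M)$, giving the claimed polynomial-time case. If exactly one of $A, B$ has equal diagonal entries, \cref{lemma: tensorSimple} combined with \cref{theorem: fullDomain2Hardness} shows $\PlEVAL(M)$ is $\#$P-hard. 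If instead $A_{11} \neq A_{22}$ and $B_{11} \neq B_{22}$, I will pass to $M^{2} = A^{2} \otimes B^{2}$, noting that $(A^{2})_{11} = A_{11}^{2} + A_{12}^{2} \neq A_{12}^{2} + A_{22}^{2} = (A^{2})_{22}$ (and similarly for $B$) since all entries are positive; \cref{theorem: posDefPositiveDichotomy} then delivers $\#$P-hardness of $\PlEVAL(M^{2}) \leq \PlEVAL(M)$.

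Next, suppose $\rho_{\tt{tensor}}(M) \neq 0$. I consider $M^{2} \in \text{Sym}_{4}^{\tt{pd}}(\mathbb{R}_{>0})$ and split on $\rho_{\tt{tensor}}(M^{2})$. If $\rho_{\tt{tensor}}(M^{2}) \neq 0$, \cref{theorem: TensorOrHard} applied to $M^{2}$ immediately gives $\#$P-hardness of $\PlEVAL(M^{2}) \leq \PlEVAL(M)$. Otherwise $M^{2}$ is isomorphic to a tensor product, and by the appropriate simultaneous row-column permutation I pass to an isomorphic matrix $M'$ (with $\PlEVAL(M') \equiv \PlEVAL(M)$, $M' \in \text{Sym}_{4}^{\tt{F}}(\mathbb{R}_{>0})$, and $\rho_{\tt{tensor}}(M') \neq 0$) satisfying $\varrho_{\tt{tensor}}((M')^{2}) = 0$. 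Now \cref{lemma: fullRankExtension} applies to $M'$, yielding either $\#$P-hardness outright or the structural form $M' = KDK^{\tt{T}}$ with $D = \kappa \cdot \text{diag}(1, x, y, z)$ and the stated constraints on $(u, v, x, y, z)$.

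The three enumerated sub-cases of \cref{lemma: fullRankExtension} match the four hardness lemmas: $|x| \neq |y|$ invokes \cref{lemma: fullRankHardnessCase1}; $x = y$ (forcing $z = -x^{2}$) invokes \cref{lemma: fullRankHardnessCase2}; $x = -y$ with $z = -x^{2}$ (forcing $|u| \neq |v|$) invokes \cref{lemma: fullRankHardnessCase3}; and $x = -y$ with $z = x^{2}$ invokes \cref{lemma: fullRankHardnessCase4}. These alternatives are exhaustive once $z^{2} = (xy)^{2}$ and $|x| = |y|$ are separated, so $\PlEVAL(M') \equiv \PlEVAL(M)$ is $\#$P-hard in every remaining situation. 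The main obstacle is not any single calculation but rather the bookkeeping: checking that the permutation used to establish $\varrho_{\tt{tensor}}((M')^{2}) = 0$ preserves both positivity of entries and the hypothesis $\rho_{\tt{tensor}}(M') \neq 0$, and verifying that the enumerated sub-cases of \cref{lemma: fullRankExtension} cleanly partition the structural possibilities and align exactly with the hypotheses of the four hardness lemmas.
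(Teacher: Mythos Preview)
Your proposal is correct, and for the case $\rho_{\tt{tensor}}(M) \neq 0$ it follows the paper's route essentially verbatim: pass to $M^{2} \in \text{Sym}_{4}^{\tt{pd}}(\mathbb{R}_{>0})$, dispose of $\rho_{\tt{tensor}}(M^{2}) \neq 0$ via \cref{theorem: TensorOrHard}, otherwise permute to arrange $\varrho_{\tt{tensor}}((M')^{2}) = 0$ and feed $M'$ into \cref{lemma: fullRankExtension} and then \cref{lemma: fullRankHardnessCase1}--\cref{lemma: fullRankHardnessCase4}.

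Where you differ from the paper is the case $\rho_{\tt{tensor}}(M) = 0$. The paper first permutes so that $N = M^{2} = A' \otimes B'$ is in the tractable form (coming out of \cref{theorem: posDefPositiveDichotomy}), and then runs a fairly long eigenvalue argument---simultaneous diagonalisation of $M$ and $M^{2}$, Perron--Frobenius for the top eigenvalue, a case split on $\lambda_{1}'\mu_{2}' = \lambda_{2}'\mu_{1}'$---to conclude that $M$ itself must be $A \otimes B$ with $A_{11} = A_{22}$, $B_{11} = B_{22}$. Your approach is more direct: permute so that $M = A \otimes B$ and case on the diagonals of $A$ and $B$. This bypasses the eigenvalue analysis entirely and is a genuine simplification.

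Two small points to tighten. First, \cref{lemma: tensorSimple} as stated requires $M \in \text{Sym}_{4}^{\tt{pd}}(\mathbb{R}_{>0})$, which you do not have; its proof does not use positive definiteness, but if you want to stay strictly within the stated lemmas you can handle the ``exactly one equal'' sub-case the same way as the ``neither equal'' sub-case, via $M^{2}$. Second, when you invoke \cref{theorem: posDefPositiveDichotomy} on $M^{2} = A^{2} \otimes B^{2}$, showing $(A^{2})_{11} \neq (A^{2})_{22}$ only rules out \emph{this particular} tensor decomposition; to exclude the tractable exception altogether you should observe that any $A'' \otimes B''$ with $(A'')_{11} = (A'')_{22}$ and $(B'')_{11} = (B'')_{22}$ has \emph{constant} diagonal, whereas $(M^{2})_{11} = (A^{2})_{11}(B^{2})_{11} \neq (A^{2})_{22}(B^{2})_{11} = (M^{2})_{33}$, so no permutation of $M^{2}$ can land in the tractable form.
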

\begin{proof}
    We will let $N = M^{2}$. Since $M \in \text{Sym}_{4}^{\tt{F}}
    (\mathbb{R}_{> 0})$, we see that
    $N \in \text{Sym}_{4}^{\tt{pd}}(\mathbb{R}_{> 0}).$
    We know from \cref{theorem: posDefPositiveDichotomy} that unless
    $N$ is isomorphic to $A' \otimes B'$ for some
    $A', B' \in \text{Sym}_{2}^{\tt{pd}}(\mathbb{R}_{> 0})$
    that satisfy the conditions that 
    $(A')_{11} = (A')_{22}$, and $(B')_{11} = (B')_{22}$,
    $\PlEVAL(N) \leq \PlEVAL(M)$ is $\#$P-hard.

    We will now consider $M$ such that $\rho_{\tt{tensor}}(M) \neq 0$,
    but $\rho_{\tt{tensor}}(N) = 0$.
    We may assume that after permutation of rows and columns of
    $M$ (and correspondingly $N$), that
    $N = A' \otimes B'$ for some $A'$, and $B'$ as above.
    In that case, we see that
    $\rho_{\tt{tensor}}(M) \neq 0$,
    but $\varrho_{\tt{tensor}}(N) = 0$.
    So, from \cref{lemma: fullRankExtension}, 
    \cref{lemma: fullRankHardnessCase1},
    \cref{lemma: fullRankHardnessCase2}, 
    \cref{lemma: fullRankHardnessCase3},
    and \cref{lemma: fullRankHardnessCase4},
    we see that $\PlEVAL(M)$ is $\#$P-hard.

    Finally, we consider the case where $\rho_{\tt{tensor}}(M) = 0$.
    We may assume that after permutation of rows and columns of
    $M$ (and correspondingly $N$), that
    $N = A' \otimes B'$ for some $A'$, and $B'$ as above.
    Since $(A')_{11} = (A')_{22}$, we note that
    $A' = HD'_{1}H^{\tt{T}}$, where
    $$H = \frac{1}{\sqrt{2}} \begin{pmatrix}
    1 & 1\\
    1 & -1 \\
    \end{pmatrix}, ~~\text{ and }~~ D'_{1} = \begin{pmatrix}
        A'_{11} + A'_{12} & 0\\
        0 & A'_{11} - A'_{12}\\
    \end{pmatrix}$$
    Similarly, since $(B')_{11} = (B')_{22}$, we can verify that
    $B' = HD'_{2}H^{\tt{T}}$, where $H$ is the same matrix
    as above, and
    $$D'_{2} = \begin{pmatrix}
        B'_{11} + B'_{12} & 0\\
        0 & B'_{11} - B'_{12}\\
    \end{pmatrix}$$
    We may let $\lambda'_{1} = A'_{11} + A'_{12}$, and $\lambda'_{2}
    = A'_{11} - A'_{12}$.
    Similarly, we may let $\mu'_{1} = B'_{11} + B'_{12}$, and
    $\mu'_{2} = B'_{11} - B'_{12}$.
    We note that since $A'_{12}, B'_{12} > 0$, it follows that
    $\lambda'_{1} > \lambda'_{2}$, and
    $\mu_{1}' > \mu'_{2}$.
    Moreover, since $A', B' \in 
    \text{Sym}_{2}^{\tt{pd}}(\mathbb{R}_{> 0})$,
    we also see that $\lambda'_{2}, \mu'_{2} > 0$.
    Now, it follows that
    $\lambda'_{1}\mu'_{1} > \lambda'_{1}\mu'_{2} > \lambda'_{2}\mu'_{2}$,
    and that
    $\lambda'_{1}\mu'_{1} > \lambda'_{2}\mu'_{1} > \lambda'_{2}\mu'_{2}$.    
    Now, if we let $H' = H \otimes H$, and
    $D' = D'_{1} \otimes D'_{2}$,
    we see that $N = A' \otimes B' = (H')D'(H')^{\tt{T}}$, where
    $$H' = \frac{1}{2} \begin{pmatrix}
    1 & 1 & 1 & 1\\
    1 & -1 & 1 & -1\\
    1 & 1 & -1 & -1 \\
    1 & -1 & -1&  1\\
    \end{pmatrix}, ~~\text{ and }~~ D' = \begin{pmatrix}
        \lambda'_{1}\mu'_{1} & 0 & 0 & 0\\
        0 & \lambda'_{1}\mu'_{2} & 0 & 0\\
        0 & 0 & \lambda'_{2}\mu'_{1} & 0\\
        0 & 0 & 0 & \lambda'_{2}\mu'_{2}\\
    \end{pmatrix}.$$

    We will now consider the eigenvalues of the matrix $M$.
    Since $M^{2} = N$, we know that the squares of the eigenvalues
    of $M$ must precisely be the eigenvalues of $N$.
    Moreover, since $M \in \text{Sym}_{4}(\mathbb{R}_{> 0})$,
    from the Perron-Frobenius Theorem, we know that
    $M$ has a unique positive eigenvalue  with the largest
    absolute value. Since $(\lambda_{1}'\mu_{1}')$ is the
    unique eigenvalue of $N$ with the largest absolute value,
    it follows that $\kappa_{1} = \sqrt{\lambda_{1}'\mu_{1}'}$ is one
    of the eigenvalues of $M$.
    If we now let $\lambda_{1} = \sqrt{\lambda_{1}'}$, and
    $\mu_{1} = \sqrt{\mu_{1}'}$, this means that
    $\kappa_{1}  = \lambda_{1}\mu_{1}$ is an eigenvalue of $M$.
    We also know that there exists some eigenvalue
    $\kappa_{2}$ of $M$ such that $(\kappa_{2})^{2} = \lambda_{2}'\mu_{1}'$.
    We will let $\lambda_{2} = (\nicefrac{\kappa_{2}}{\mu_{1}})
    = \pm \sqrt{\lambda_{2}'}$.
    Therefore, $\lambda_{2}\mu_{1}$ is an eigenvalue of $M$.
    Similarly, we know that there exists an eigenvalue $\kappa_{3}$ of $M$
    such that $(\kappa_{3})^{2} = \lambda_{1}'\mu_{2}'$.
    We can then let $\mu_{2} = (\nicefrac{\kappa_{3}}{\lambda_{1}})
    = \pm \sqrt{\mu_{2}'}$, such that
    $\lambda_{1}\mu_{2}$ is an eigenvalue of $M$.
    Finally, we know that there exists some eigenvalue
    $\kappa_{4}$ of $M$ such that $(\kappa_{4})^{2} = \lambda_{2}'\mu_{2}'$.
    Therefore, we see that $\kappa_{4} = \pm \sqrt{\lambda_{2}'\mu_{2}'}
    = \pm \lambda_{2}\mu_{2}$.
    As $\rho_{\tt{tensor}}(M) = 0$, $M$ is isomorphic to a tensor product.
    Then its 4 eigenvalues must satisfy an equation of the form
    $\kappa_{i} \kappa_{j} = \kappa_{k} \kappa_{\ell}$, for some
    $i, j, k, \ell$ with  $\{i, j, k, \ell \} = \{1,2,3,4\}$.
    Since $\kappa_{1}$
    and $\kappa_{4}$ have the maximum and minimum absolute value
    respectively
    among the 4 eigenvalues, we must have
    $\kappa_{1} \kappa_{4} = \kappa_{2} \kappa_{3}$.
    Since $\kappa_{1} = \lambda_{1}\mu_{1}$, 
    $\kappa_{2} = \lambda_{2}\mu_{1}$,
    $\kappa_{3} = \lambda_{1}\mu_{2}$,
    we must have  $\kappa_{4} = \lambda_{2}\mu_{2}$.
    %
    
    We now note that since $M^{2} = N$, $N \cdot M = M^{3} = M \cdot N$.
    Since these matrices commute, we see that they can
    both be diagonalized by some orthogonal matrix $K$.
    If we let $D = \text{diag}(\lambda_{1}\mu_{1}, \lambda_{1}\mu_{2},
    \lambda_{2}\mu_{1}, \lambda_{2}\mu_{2}) = D_{1} \otimes D_{2}$,
    where $D_{1} = \text{diag}(\lambda_{1}, \lambda_{2})$, and
    $D_{2} = \text{diag}(\mu_{1}, \mu_{2})$,
    We see that there exists some $K$
    such that $N = KD'K^{\tt{T}}$, and $M = KDK^{\tt{T}}$.
    
    If $\lambda_{1}'\mu_{2}' \neq \lambda_{2}'\mu_{1}'$,
    then $N$ has distinct eigenvalues, which implies that
    $H' = H \otimes H$ is the only matrix (upto scaling each
    column by $-1$) which can diagonalize $N$.
    This would then imply that $H'$ can also diagonalize $M$.
    This means that we can take $K = H' = (H \otimes H)$.
    In other words, $M = (HD_{1}H^{\tt{T}}) \otimes
    (HD_{2}H^{\tt{T}})$.
    
    If $\lambda_{1}'\mu_{2}' = \lambda_{2}'\mu_{1}'$,
    then $N$ can be diagonalized by any $K$ such that
    $$K = \frac{1}{2}\begin{pmatrix}
        1 & u & v & 1\\
        1 & -v & u & -1\\
        1 & v & -u & -1\\
        1 & -u & -v & 1\\
    \end{pmatrix}$$
    for $u, v \in \mathbb{R}$ such that $u^{2} + v^{2} = 2$.
    Moreover, $\lambda_{1}'\mu_{2}' = \lambda_{2}'\mu_{1}'$ implies
    that $\lambda_{1}\mu_{2} = \pm \lambda_{2}\mu_{1}$.
    If in fact, $\lambda_{1}\mu_{2} = \lambda_{2}\mu_{1}$,
    we can replace the two middle columns of $K$ with any
    two orthogonal vectors that lie in their span.
    So, we see that $M$ can also be diagonalized
    by $H' = H \otimes H$.
    So, once again, we find that $M = (HD_{1}H^{\tt{T}}) \otimes
    (HD_{2}H^{\tt{T}})$.
    Finally, if $\lambda_{1}\mu_{2} = - \lambda_{2}\mu_{1}$,
    we see that $M = KDK^{\tt{T}}$,
    such that $D = (\lambda_{1}\mu_{1}) \cdot \text{diag}
    (1, x, -x, -x^{2})$, where $x = (\nicefrac{\mu_{2}}{\mu_{1}})
    = -(\nicefrac{\lambda_{2}}{\lambda_{1}})$.
    So, from \cref{lemma: fullRankHardnessCase3}
    we see that unless $|u| = |v| = 1$, $\PlEVAL(M)$ is $\#$P-hard.

    Now assume $|u| = |v| = 1$.
    Since columns of $K$ can be scaled by $-1$,
    we may assume that $u = 1$.
    If $v = 1$, then we see that once again,
    $M$ is diagonalized by $K = H' = H \otimes H$.
    So, $M = (HD_{1}H^{\tt{T}}) \otimes
    (HD_{2}H^{\tt{T}})$.
    If $v = -1$, then we claim that
    $M = (HD_{2}H^{\tt{T}}) \otimes
    (HD_{1}H^{\tt{T}})$. Indeed, if we let
    \[K' = K \begin{pmatrix}
        1 & 0 & 0 & 0\\
        0 & 0 & 1 & 0\\
        0 & 1 & 0 & 0\\
        0 & 0 & 0 & 1\\
    \end{pmatrix}
    \begin{pmatrix}
        1 & 0 & 0 & 0\\
        0 & -1 &0 & 0\\
        0 & 0 & 1 & 0\\
        0 & 0 & 0 & 1\\
    \end{pmatrix}
    = \frac{1}{2}\begin{pmatrix}
        1 & 1 & 1 & 1\\
        1 & -1 & 1 & -1\\
        1 & 1 & -1 & -1\\
        1 & -1 & -1 & 1\\
    \end{pmatrix}
    = H \otimes H,
    \]
    then $M =  K D K^{\tt{T}}= K (D_1 \otimes D_2 )K^{\tt{T}} = K' (D_2 \otimes D_1 )K^{\tt{T}} = (HD_{2}H^{\tt{T}}) \otimes
    (HD_{1}H^{\tt{T}}) $.

    So, in any case, we see that $M = A \otimes B$ for some
    $A, B \in \text{Sym}_{2}^{\tt{F}}(\mathbb{R}_{> 0})$.
    Moreover, since $(A')_{11} = (A')_{22}$, and
    $(B')_{11} = (B')_{22}$, we see that
    $N_{11} = N_{22} = N_{33} = N_{44}$.
    Now, $N_{11} = N_{22}$ implies that
    \begin{align*}
        &((A_{11})^{2} + (A_{12})^{2}) \cdot ((B_{11})^{2} + (B_{12})^{2})
        = (M_{11})^{2} + (M_{12})^{2} + (M_{13})^{2} + (M_{14})^{2}\\
        &= (M_{12})^{2} + (M_{22})^{2} + (M_{23})^{2} + (M_{24})^{2}
        = ((A_{11})^{2} + (A_{12})^{2}) \cdot ((B_{12})^{2} + (B_{22})^{2}),
    \end{align*}
    which implies that $(B_{11})^{2} = (B_{22})^{2}$.
    Since $B \in \text{Sym}_{2}(\mathbb{R}_{> 0})$, in fact,
    we see that $B_{11} = B_{22}$.
    Similarly, $N_{11} = N_{33}$ would imply that
    $A_{11} = A_{22}$ as well.
    So, if $(A')_{11} = (A')_{22}$, and $(B')_{11} = (B')_{22}$,
    we see that $A_{11} = A_{22}$, and $B_{11} = B_{22}$ as well,
    in which case, 
    \cref{lemma: tensorTractable} implies that
    $\PlEVAL(M)$ is polynomial time tractable.

    So, we see that $\PlEVAL(M)$ is $\#$P-hard,
    unless $M$ is isomorphic to $A \otimes B$ for some
    $A, B \in \text{Sym}_{2}^{\tt{F}}(\mathbb{R}_{> 0})$ such that
    $A_{11} = A_{22}$, and $B_{11} = B_{22}$, in which case,
    $\PlEVAL(M)$ is polynomial time tractable.
\end{proof}

We can extend this dichotomy to all non-negative real valued
full rank matrices with minimal effort.

\begin{definition}\label{definition: domainSeparable}
    Let $M \in \text{Sym}_{q}(\mathbb{R}_{\geq 0})$.
    We say that $M$ is domain separable if it is
    isomorphic to some $A \oplus B = \left(
    \begin{smallmatrix}
        A & \mathbf{0}\\
        \mathbf{0} & B
    \end{smallmatrix}\right)$ for some non-empty matrices
    $A \in \text{Sym}_{q_{1}}(\mathbb{R}_{\geq 0})$, and
    $B \in \text{Sym}_{q_{2}}(\mathbb{R}_{\geq 0})$ where
    $q_{1} + q_{2} = q$.
\end{definition}

It is known from \cite{cai2013graph} (Lemma 4.6, p.~940, the proof of which uses Lemma 4.1, p.~937,
called the  first pinning lemma)
that $\EVAL(M)$ is $\#$P-hard iff
at least one of  $\EVAL(A)$ or $\EVAL(B)$ is $\#$P-hard,
and that $\EVAL(M)$ is polynomial time tractable iff both 
$\EVAL(A)$ and $\EVAL(B)$ are polynomial time tractable.
Furthermore, the proof of Lemma 4.1 in \cite{cai2013graph} uses only planar gadgets
(and in that proof  we can place each identifying vertex, called $w$ and $w^*$'s in the paper,
to be on the outer face).
Thus, this proof works for planar graphs,
i.e., $\PlEVAL(M)$ is $\#$P-hard
iff at least one of $\PlEVAL(A)$ or $\PlEVAL(B)$ is 
$\#$P-hard, and that $\PlEVAL(M)$ is polynomial time
tractable iff both 
$\PlEVAL(A)$ and $\PlEVAL(B)$ are polynomial time tractable.
Now, we already have a dichotomy
from ~\cite{guo2020complexity, cai2023complexity}
for $\PlEVAL(M)$ when $M \in \text{Sym}_{q}(\mathbb{R}_{\geq 0})$
for $q < 4$.
So, that allows us to handle domain separable matrices with ease.

\begin{lemma}\label{lemma: domainSeparableDichotomy}
    Let $M \in \text{Sym}_{4}^{\tt{F}}(\mathbb{R}_{\geq 0})$
    such that $M = A \oplus B$ is domain separable.
    Then $\PlEVAL(M)$ is $\#$P-hard unless $\PlEVAL(A)$ 
    and $\PlEVAL(B)$ are both polynomial time tractable,
    in which case, $\PlEVAL(M)$ is also polynomial time tractable.
\end{lemma}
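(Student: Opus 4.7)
The plan is to combine the (planar-preserving) pinning lemma from \cite{cai2013graph} with the already-established dichotomies for $\PlEVAL$ on smaller domains. Since $M \in \text{Sym}_{4}^{\tt{F}}(\mathbb{R}_{\geq 0})$ decomposes (up to isomorphism) as $M = A \oplus B$ with $A \in \text{Sym}_{q_1}(\mathbb{R}_{\geq 0})$ and $B \in \text{Sym}_{q_2}(\mathbb{R}_{\geq 0})$ for $q_1 + q_2 = 4$ and $q_1, q_2 \geq 1$, the block-diagonal structure forces $\det(M) = \det(A)\det(B)$, so both $A$ and $B$ must themselves be full rank. In particular, each of $q_1, q_2 \in \{1, 2, 3\}$.

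First, I would invoke the paragraph immediately preceding the lemma: as pointed out there, the first pinning lemma (\cite{cai2013graph}, Lemma 4.1) and its consequence (\cite{cai2013graph}, Lemma 4.6) use only planar gadgets, with all identifying vertices placed on the outer face. This gives the planar analogue: $\PlEVAL(M) \equiv_P \PlEVAL(A) \oplus \PlEVAL(B)$ in the sense that $\PlEVAL(M)$ is $\#$P-hard if and only if at least one of $\PlEVAL(A), \PlEVAL(B)$ is $\#$P-hard, and $\PlEVAL(M)$ is polynomial time tractable if and only if both are. Thus the lemma reduces to establishing a complete complexity characterization of $\PlEVAL(A)$ and $\PlEVAL(B)$.

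Second, since $A$ and $B$ are symmetric non-negative matrices of dimension at most $3$, the existing full dichotomies apply. For $q=1$, the problem $\PlEVAL(A)$ is trivially polynomial time (the partition function is just a fixed scalar power). For $q=2$, \cref{theorem: fullDomain2Hardness} (from \cite{guo2020complexity}) gives the dichotomy explicitly. For $q=3$, the dichotomy of \cite{cai2023complexity} gives the analogous classification for all $3 \times 3$ non-negative symmetric matrices. In each case we know exactly whether the sub-problem is in P or $\#$P-hard.

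Finally, combining the two ingredients yields the statement: if both $\PlEVAL(A)$ and $\PlEVAL(B)$ fall in the tractable side of their respective dichotomies, then $\PlEVAL(M)$ is in P (one simply computes $Z_A(G) \cdot Z_B(G)$ on each connected component of $G$, possibly after enumerating which component maps into which block, as in the pinning argument); otherwise the pinning reduction produces a $\#$P-hardness proof for $\PlEVAL(M)$. I do not anticipate a serious obstacle: the only non-mechanical point is to double-check that the gadgets of \cite{cai2013graph}, Lemma 4.1 can indeed be embedded in the plane with the pinned vertices on the outer face, which the excerpt has already asserted holds.
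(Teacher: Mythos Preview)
Your proposal is correct and matches the paper's approach exactly: the paper does not give a separate proof for this lemma but instead derives it from the paragraph immediately preceding it, which combines the planar-preserving pinning lemma of \cite{cai2013graph} (Lemmas 4.1 and 4.6) with the known $\PlEVAL$ dichotomies for $q<4$ from \cite{guo2020complexity, cai2023complexity}. Your write-up even slightly sharpens the exposition by noting that full rank of $M$ forces full rank of $A$ and $B$, which the paper leaves implicit.
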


\begin{definition}\label{definition: biPartite}
    Let $M \in \text{Sym}_{q}(\mathbb{R}_{\geq 0})$.
    We say that $M$ is bipartite if 
    there exists some $A \in \mathbb{R}^{q_{1} \times q_{2}}$
    for some $q_{1} + q_{2} = q$, such that
    $M = \left(
    \begin{smallmatrix}
        \mathbf{0} & A\\
        A^{\tt{T}} & \mathbf{0}\\
    \end{smallmatrix}\right)$.
\end{definition}

\begin{lemma}\label{lemma: bipartiteDichotomy}
    Let $M \in \text{Sym}_{4}^{\tt{F}}(\mathbb{R}_{\geq 0})$
    be bipartite.
    Then $\PlEVAL(M)$ is $\#$P-hard unless
    $M$ is isomorphic to $A \otimes B$ for some
    $A, B \in \text{Sym}_{2}^{\tt{F}}(\mathbb{R}_{\geq 0})$
    such that $A_{11} = A_{22}$, and $B_{11} = B_{22}$,
    in which case, $\PlEVAL(M)$ is also polynomial time tractable.
\end{lemma}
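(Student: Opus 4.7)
The plan is to pass to the stretched matrix $M^2$ and apply the $2 \times 2$ dichotomy (\cref{theorem: fullDomain2Hardness}). First, since $M \in \text{Sym}_{4}^{\tt{F}}(\mathbb{R}_{\geq 0})$ is bipartite and full rank, the only possibility for the block sizes $q_1 + q_2 = 4$ in \cref{definition: biPartite} is $q_1 = q_2 = 2$: otherwise the off-diagonal block is $1 \times 3$ and forces $\text{rank}(M) \leq 2$, contradicting full rank. So, after the permutation implicit in the definition, $M = \begin{pmatrix} 0 & C \\ C^T & 0\end{pmatrix}$ for a full-rank $C \in \mathbb{R}_{\geq 0}^{2 \times 2}$.

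The key observation is that $M^2 = S_2 M = CC^T \oplus C^T C$ is domain separable, with each summand a $2 \times 2$, non-negative, positive definite matrix (using that $C$ is full rank). Via the stretching reduction $\PlEVAL(M^2) \leq \PlEVAL(M)$, combined with \cref{lemma: domainSeparableDichotomy} and \cref{theorem: fullDomain2Hardness}, $\PlEVAL(M)$ is $\#$P-hard unless both $\PlEVAL(CC^T)$ and $\PlEVAL(C^T C)$ are polynomial-time tractable. Positive definiteness excludes the $xz = y^2$ branch of \cref{theorem: fullDomain2Hardness}, leaving only ``off-diagonal zero'' or ``equal diagonals'' for each summand.

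Writing $C = \begin{pmatrix} a & b \\ c & d\end{pmatrix}$, non-negativity turns $(CC^T)_{12} = 0$ into $ac = bd = 0$ and $(C^TC)_{12} = 0$ into $ab = cd = 0$, while the equal-diagonal alternatives read $a^2+b^2 = c^2+d^2$ and $a^2+c^2 = b^2+d^2$. Observing that independent row and column swaps of $C$ correspond to block-preserving permutations of $M$ (hence to isomorphisms of $M$), a short case analysis reduces the tractable full-rank $C$'s to three families: $(i)$ $C$ diagonal with two strictly positive entries; $(ii)$ $C$ antidiagonal with two strictly positive entries; and $(iii)$ $C = \begin{pmatrix} a & b \\ b & a\end{pmatrix}$ with $a \neq b$. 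In case $(iii)$, $M = \begin{pmatrix} 0 & 1 \\ 1 & 0\end{pmatrix} \otimes C$, exactly of the advertised form with both tensor factors satisfying the equal-diagonal condition. In the equal-entries subcases of $(i)$ and $(ii)$, an interior-block swap realizes $M \cong I_2 \otimes \begin{pmatrix} 0 & a \\ a & 0\end{pmatrix}$, again of the advertised form.

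For the tractability direction, when $M \cong A \otimes B$ the factorization $Z_M(G) = Z_A(G) \cdot Z_B(G)$ (the proof of \cref{lemma: tensorSimple} carries over from $\mathbb{R}_{> 0}$ to $\mathbb{R}_{\geq 0}$ since the argument is purely algebraic) reduces $\PlEVAL(M)$ to two instances of the $2 \times 2$ problem; the assumption $A_{11} = A_{22}$ and $B_{11} = B_{22}$ places both in the tractable branch of \cref{theorem: fullDomain2Hardness}. The main obstacle I anticipate is the case analysis in the previous paragraph, specifically the subcases of $(i)$ and $(ii)$ in which the two nonzero entries of $C$ disagree: in those $M$ is a priori only a direct sum of two $2 \times 2$ bipartite matrices and not a literal tensor product. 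Closing them requires either exhibiting an explicit cross-block permutation that repackages the direct sum as $I_2$ tensored with the appropriate $2 \times 2$ bipartite factor, or reading the lemma together with \cref{lemma: domainSeparableDichotomy} (whose direct application already dispatches the domain-separable cases on the tractable side).
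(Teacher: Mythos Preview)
Your approach is sound and reaches the same endpoint as the paper's, though by a slightly different route. The paper thickens first and considers $(T_n M)^2$ for all $n\ge 1$; using the top-left block $A_n$ and a Vandermonde argument on its diagonal entries, it concludes in the non-hard case that $\{M_{13},M_{14}\}=\{M_{23},M_{24}\}$ as multisets, whence full rank forces $M_{13}=M_{24}$, $M_{14}=M_{23}$, i.e.\ $C$ is symmetric with equal diagonal and $M=\left(\begin{smallmatrix}0&1\\1&0\end{smallmatrix}\right)\otimes C$. You instead take only $n=1$ but compensate by examining \emph{both} blocks $CC^{\tt T}$ and $C^{\tt T}C$ of $M^2$; your case-(D) computation ($a^2+b^2=c^2+d^2$ and $a^2+c^2=b^2+d^2$ force $a=d$, $b=c$) recovers the same symmetric-$C$ conclusion without thickening or Vandermonde, which is a bit more elementary.

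The obstacle you flag is genuine, and the paper's proof shares it. When $C$ is diagonal (or antidiagonal) with unequal positive entries, say $C=\mathrm{diag}(a,d)$ with $a\ne d$, one has $(A_n)_{12}=(ac)^n+(bd)^n=0$ for all $n$, so the paper's appeal to \cref{theorem: fullDomain2Hardness} via $(A_n)_{11}\ne(A_n)_{22}$ does not yield hardness (the $y=0$ branch applies instead). In these cases $M$ is domain separable---its support graph is two disjoint edges---and it is \emph{not} isomorphic to any $A\otimes B$ with $A_{11}=A_{22}$, $B_{11}=B_{22}$: the entry multiset of such a tensor product cannot be $\{a,a,d,d\}$ plus zeros when $a\ne d$ (trace zero forces one factor to be $\left(\begin{smallmatrix}0&q\\q&0\end{smallmatrix}\right)$, and then the off-diagonal block $qB$ has equal diagonal entries). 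So your first proposed fix, a cross-block permutation, cannot work. Your second reading is the correct one: in \cref{theorem: fullRankNonNegativeDichotomy} the domain-separable case is dispatched \emph{before} the bipartite lemma is invoked, so one may assume all entries of $C$ are strictly positive, which eliminates families~(i) and~(ii) entirely and leaves only family~(iii). If you want the lemma to stand on its own, add the hypothesis that $M$ is not domain separable (equivalently, that $C$ has no zero entry); under that hypothesis your proof is complete.
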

\begin{proof}
    We may assume that $M = \left(
    \begin{smallmatrix}
        \mathbf{0} & A\\
        A^{\tt{T}} & \mathbf{0}\\
    \end{smallmatrix}\right)$,
    for some $A \in (\mathbb{R}_{\geq 0})^{q_{1} \times q_{2}}$,
    where $q_{1} + q_{2} = 4$.
    If $q_{1} = 1$ (or $q_{2} = 1$),
    we can see that $M$ has rank at most $2$.
    So, it is not possible that $M \in \text{Sym}_{4}^{\tt{F}}(\mathbb{R}_{\geq 0})$.

    So, we may assume that $q_{1} = q_{2} = 2$.
    Then, $M$ must be (upto some isomorphism) of the form
    $$M = \begin{pmatrix}
        0 & 0 & M_{13} & M_{14}\\
        0 & 0 & M_{23} & M_{24}\\
        M_{13} & M_{23} & 0 & 0\\
        M_{14} & M_{24} & 0 & 0\\
    \end{pmatrix}$$
    
    But then, we see that $(T_{n}M)^{2} = A_{n} \oplus A_{n}$, where
    $$A_n = \begin{pmatrix}
        (M_{13})^{2n} + (M_{14})^{2n} & (M_{13}M_{23})^{n} + (M_{14}M_{24})^{n}\\
         (M_{13}M_{23})^{n} + (M_{14}M_{24})^{n} & (M_{23})^{2n} + (M_{24})^{2n}\\
    \end{pmatrix}$$

    If there exists some $n \geq 1$ such that
    $(A_{n})_{11} \neq (A_{n})_{22}$, then
    from \cref{theorem: fullDomain2Hardness} and
    \cref{lemma: domainSeparableDichotomy}, it follows that
    $\PlEVAL((T_{n}M)^{2}) \leq \PlEVAL(M)$ is $\#$P-hard.

    If $(A_{n})_{11} = (A_{n})_{22}$ for all $n \geq 1$, that
    implies that $\{M_{13}, M_{14}\} = \{M_{23}, M_{24}\}$ as multi-sets.
    But if $M_{13} = M_{23}$, and $M_{14} = M_{24}$, then
    $M$ can once again, not be full rank.
    This implies that $M_{13} = M_{24}$, and $M_{14} = M_{23}$.
    But in this case, we see that
    $M = B \otimes A$, where $B = \left(\begin{smallmatrix}
        0 & 1\\
        1 & 0\\
    \end{smallmatrix}\right)$.
    In this case, we see from \cref{theorem: fullDomain2Hardness}
    that $\PlEVAL(A)$ and $\PlEVAL(B)$ are both polynomial
    time tractable.
    Since $Z_{M}(G) = Z_{A}(G) \cdot Z_{B}(G)$ for all
    graphs $G = (V, E)$, this implies that
    $\PlEVAL(M)$ is also polynomial time tractable.
\end{proof}

This lets us prove the following dichotomy theorem.

\begin{theorem}\label{theorem: fullRankNonNegativeDichotomy}
    Let $M \in \text{Sym}_{4}^{\tt{F}}(\mathbb{R}_{\geq 0})$.
    Then $\PlEVAL(M)$ is $\#$P-hard, unless
    one of the following conditions is true, in which case,
    $\PlEVAL(M)$ is polynomial tractable.
     \begin{align*}
         &\mbox{(1)~~} (\text{direct sum})~ M \cong A \oplus B \text{ for polynomially tractable }
            \PlEVAL(A), \PlEVAL(B),\\
	&\mbox{(2)~~} (\text{tensor product})~ M \cong A \otimes B \text{ for polynomially tractable }
            \PlEVAL(A), \PlEVAL(B).
     \end{align*}
\end{theorem}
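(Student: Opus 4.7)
\medskip

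The plan is to combine Theorem~\ref{theorem: fullRankPositiveDichotomy}, Lemma~\ref{lemma: domainSeparableDichotomy}, and Lemma~\ref{lemma: bipartiteDichotomy} via a case split on the zero/nonzero pattern and connectivity structure of $M$. The tractability direction is immediate: if $M \cong A \oplus B$ with $\PlEVAL(A),\PlEVAL(B)$ both in P, then $\PlEVAL(M)$ is in P by Lemma~\ref{lemma: domainSeparableDichotomy}; if $M \cong A \otimes B$ with $A,B$ tractable, then $Z_M(G) = Z_A(G) Z_B(G)$ shows $\PlEVAL(M)$ is in P. So the substance is the hardness direction, and I would split into four cases according to the ``support graph'' of $M$ (the undirected graph with self-loops whose adjacency reflects $M_{ij}\neq 0$).

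\medskip

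Case 1 ($M \in \text{Sym}_{4}^{\tt{F}}(\mathbb{R}_{> 0})$): apply Theorem~\ref{theorem: fullRankPositiveDichotomy} directly. Case 2 ($M$ is domain separable): apply Lemma~\ref{lemma: domainSeparableDichotomy}, reducing to $\PlEVAL$ on blocks of size $\le 3$, for which complete dichotomies are already established in prior work~\cite{guo2020complexity, cai2023complexity}; the outcome is tractability of $M$ precisely when both blocks are tractable, giving form (1). Case 3 ($M$ is bipartite): apply Lemma~\ref{lemma: bipartiteDichotomy} to conclude either \#P-hardness or that $M$ is already a tractable tensor product, giving form (2).

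\medskip

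Case 4 is the interesting one: $M$ has some zero entries but its support graph is connected and non-bipartite (i.e., $M$ is not domain separable and not bipartite). Here I would invoke the Perron-Frobenius theory: for such $M$, the matrix $M$ is irreducible and aperiodic, so there exists $n \ge 1$ with $M^n \in \text{Sym}_{4}^{\tt{F}}(\mathbb{R}_{> 0})$ (full rank follows from full rank of $M$). By stretching, $\PlEVAL(M^n) \le \PlEVAL(M)$, so I would apply Theorem~\ref{theorem: fullRankPositiveDichotomy} to $M^n$. If $\PlEVAL(M^n)$ is \#P-hard we are done. Otherwise $M^n \cong A' \otimes B'$ with $A'_{11}=A'_{22}$, $B'_{11}=B'_{22}$, so $M^n$ is simultaneously diagonalized by $H \otimes H$ where $H = \tfrac{1}{\sqrt{2}}\bigl(\begin{smallmatrix} 1 & 1 \\ 1 & -1\end{smallmatrix}\bigr)$.

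\medskip

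The main obstacle is analyzing this last subcase. Since $M$ commutes with $M^n$, $M$ must preserve the eigenspaces of $M^n$; combined with Perron-Frobenius (the Perron eigenvalue of $M$ is positive and simple, with eigenvector up to permutation equal to the all-ones vector, since the diagonal of $M^n$ is constant), one can deduce that $M$ itself is diagonalized either by $H \otimes H$ or by a matrix $K$ of the form appearing in Lemma~\ref{lemma: fullRankExtension}. The analysis then mirrors Lemmas~\ref{lemma: fullRankHardnessCase1}--\ref{lemma: fullRankHardnessCase4} applied to the eigenstructure of $M$: either the eigenvalue signs force $M$ into the \#P-hard forms handled there (the key arguments there thickened through $T_2$ and analyzed $\xi(N) = N_{11}N_{14}-N_{12}N_{13}$ which applies verbatim to the positive matrix $T_2 M^m$ obtained here), or the signs are all positive and $M$ is itself a tensor product $A \otimes B$ with $A,B$ tractable. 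In the last subcase, since $M$ has some zero entries while $A$ and $B$ would both be entry-wise positive (the tractable form has $A_{11}=A_{22}>0$ and $|A_{12}|<A_{11}$, forcing $A_{12}>0$, similarly for $B$), we get a contradiction --- so this sub-case cannot occur, and $M$ must already fall into Case 2 or Case 3. This contradiction is the cleanest way to close the case analysis, and verifying it carefully (especially the sign/Perron analysis) is where most of the work in this theorem's proof goes.
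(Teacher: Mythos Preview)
Your case split matches the paper's, and Cases~1--3 are handled correctly. The gap is in Case~4. The paper's argument there is much cleaner than what you propose, and hinges on a trick you omit: choose $n$ \emph{odd}. With $n$ odd, real $n$-th roots are unique, so from $M^n$ having eigenvalues $\lambda'_1\mu'_1,\lambda'_1\mu'_2,\lambda'_2\mu'_1,\lambda'_2\mu'_2$ (all positive, since $A',B'\in\text{Sym}_2^{\tt pd}(\mathbb{R}_{>0})$) one gets that $M$ has eigenvalues $\lambda_1\mu_1,\lambda_1\mu_2,\lambda_2\mu_1,\lambda_2\mu_2$ with $\lambda_i=(\lambda'_i)^{1/n}$, $\mu_j=(\mu'_j)^{1/n}$, automatically inheriting the tensor structure. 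Since $M$ and $M^n$ commute they are simultaneously diagonalizable, and because any repeated eigenvalue of $M^n$ is also repeated for $M$ (again by uniqueness of odd roots), the matrix $H\otimes H$ that diagonalizes $M^n$ also diagonalizes $M$. Hence $M=(HD_1H^{\tt T})\otimes(HD_2H^{\tt T})$ is directly a tractable tensor product, with no further case analysis.

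Your route instead tries to invoke Lemmas~\ref{lemma: fullRankHardnessCase1}--\ref{lemma: fullRankHardnessCase4} on $M$, but those lemmas are stated for $M\in\text{Sym}_4^{\tt F}(\mathbb{R}_{>0})$ and do not apply when $M$ has zero entries; the suggestion that their argument ``applies verbatim to the positive matrix $T_2 M^m$'' is not fleshed out and looks circular (you would be re-deriving positivity you already have for $M^n$). More fundamentally, without specifying $n$ odd you cannot rule out sign mismatches among the eigenvalues of $M$ (e.g.\ $\nu_2=-\nu_3$ when $\nu_2^n=\nu_3^n$ with $n$ even), which is exactly what forces you into the $K$-matrix detour. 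The final contradiction you reach is correct in spirit, but the justification ``$|A_{12}|<A_{11}$, forcing $A_{12}>0$'' is a non sequitur; positivity of the tensor factors follows instead from $M\ge 0$ together with the explicit form $A=HD_1H^{\tt T}$ with $\lambda_1>\lambda_2>0$.
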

\begin{proof}
    If $M$ is domain separable, i.e., $M \cong A \oplus B$ for any $A, B$, we know from
    \cref{lemma: domainSeparableDichotomy} that
    $\PlEVAL(M)$ is $\#$P-hard, unless
    $\PlEVAL(A)$, and $\PlEVAL(B)$ are both polynomially tractable,
    in which case, $\PlEVAL(M)$ is also polynomially
    tractable.
    If $M$ is bipartite, we know from
    \cref{lemma: bipartiteDichotomy} that $\PlEVAL(M)$ is 
    $\#$P-hard, unless $M \cong A \otimes B$, where
    $\PlEVAL(A)$, and $\PlEVAL(B)$ are both polynomially tractable,
    in which case, $\PlEVAL(M)$ is also polynomially tractable.

    Now, we consider any $M$ that is neither
    domain separable, nor bipartite.
    The underlying graph of $M$ where an edge $(i,j)$ exists iff  $M_{ij} >0$
    is connected and non-bipartite.
    In this case, since
    $M \in \text{Sym}_{4}(\mathbb{R}_{\geq 0})$,
    there exists some $n^{*} \geq 1$ such that $M^{n} \in
    \text{Sym}_{4}^{\tt{F}}(\mathbb{R}_{> 0})$ for all 
    $n \geq n^{*}$. 
    We let $n \geq n^{*}$ be some odd integer.
    From \cref{theorem: fullRankPositiveDichotomy},
    we know that $\PlEVAL(M^{n}) \leq \PlEVAL(M)$ is
    $\#$P-hard, unless $M^{n}$ is isomorphic to 
    $A' \otimes B'$,  for some
    polynomial time tractable $\PlEVAL(A'), \PlEVAL(B')$, 
    where $A', B' \in \text{Sym}_{2}^{\tt{F}}(\mathbb{R}_{> 0})$, such that
    $(A')_{11} = (A')_{22}$, and $(B')_{11} = (B')_{22}$.

    Without loss of generality, we will now consider the case
    where $M^{n} = A' \otimes B'$ for some
    $A', B' \in \text{Sym}_{2}^{\tt{F}}(\mathbb{R}_{> 0})$.
    Since $(A')_{11} = (A')_{22}$, and
    $(B')_{11} = (B')_{22}$, we know that
    $A' = H D'_{1} H^{\tt{T}}$, and
    $B' = H D'_{2} H^{\tt{T}}$, where
    $$H = \frac{1}{\sqrt{2}}\begin{pmatrix}
        1 & 1\\
        1 & -1\\
    \end{pmatrix}, D'_{1} = \begin{pmatrix}
        A'_{11} + A'_{12} & 0\\
        0 & A'_{11} - A'_{12}\\
    \end{pmatrix}, D'_{2} = \begin{pmatrix}
        B'_{11} + B'_{12} & 0\\
        0 & B'_{11} - B'_{12}\\
    \end{pmatrix}$$

    So, if we let $\lambda_{1}' = A'_{11} + A'_{12}$,
    $\lambda_{2}' = A'_{11} - A'_{12}$, 
    $\mu_{1}' = B'_{11} + B'_{12}$, and
    $\mu_{2}' = B'_{11} - B'_{12}$, we see that
    $M^{n} = A' \otimes B' = (H \otimes H) D' 
    (H \otimes H)^{\tt{T}}$, where
    $D' = \text{diag}(\lambda_{1}'\mu_{1}', \lambda_{1}'\mu_{2}',
    \lambda_{2}'\mu_{1}', \lambda_{2}'\mu_{2}')$.
    We also note that since $A', B' \in 
    \text{Sym}_{2}(\mathbb{R}_{> 0})$,
    $\lambda_{1}' > |\lambda_{2}'|$, and
    $\mu_{1}' > |\mu_{2}'|$.
    So, $|\lambda_{1}'\mu_{1}'| > |\lambda_{2}'\mu_{1}'|
    > |\lambda_{2}'\mu_{2}'|$, and
    $|\lambda_{1}'\mu_{1}'| > |\lambda_{1}'\mu_{2}'|
    > |\lambda_{2}'\mu_{2}'|$.
    If $\lambda_{2}'\mu_{1}' \neq \lambda_{1}'\mu_{2}'$,
    then $M^{n}$ would have four distinct eigenvalues.
    Otherwise, it will have three distinct eigenvalues,
    with one of them being repeated.
    
    Since $A' \otimes B' = M^{n}$,  we may assume that the
    eigenvalues of $M$ are: $\nu_{1}, \nu_{2}, \nu_{3}, \nu_{4}$,
    such that
    $(\nu_{1})^{n} = \lambda_{1}'\mu_{1}'$,
    $(\nu_{2})^{n} = \lambda_{1}'\mu_{2}'$,
    $(\nu_{3})^{n} = \lambda_{2}'\mu_{1}'$,
    $(\nu_{4})^{n} = \lambda_{2}'\mu_{2}'$.
    Since $n$ is an odd integer, and 
    $\nu_{i}, \lambda_{i}', \mu_{i}'$ are all real,
    this implies that
    $\nu_{1} = (\lambda_{1}'\mu_{1}')^{\frac{1}{n}}$,
    $\nu_{2} = (\lambda_{1}'\mu_{2}')^{\frac{1}{n}}$,
    $\nu_{3} = (\lambda_{2}'\mu_{1}')^{\frac{1}{n}}$,
    $\nu_{4} = (\lambda_{2}'\mu_{2}')^{\frac{1}{n}}$.
    Now, if we let $\lambda_{i} = (\lambda_{i}')^{\frac{1}{n}}$,
    and $\mu_{i} = (\mu_{i}')^{\frac{1}{n}}$
    for $i \in [2]$, we find that
    $\nu_{1} = \lambda_{1}\mu_{1}$,
    $\nu_{2} = \lambda_{1}\mu_{2}$,
    $\nu_{3} = \lambda_{2}\mu_{1}$,
    $\nu_{4} = \lambda_{2}\mu_{2}$.

    Since $M^{n} \cdot M = M \cdot M^{n}$, we know that 
    $M$ and $M^{n}$ can be simultaneously diagonalized by the same
    orthogonal matrix.
    Now, if $\lambda_{2}'\mu_{1}' \neq \lambda_{1}'\mu_{2}'$,
    we know that the diagonalizing orthogonal matrix for $M^{n}$ is essentially
    unique, whose columns are the unit column eigenvectors of $M^{n}$
    corresponding to the respective eigenvalues. This is
    the matrix $(H \otimes H) K$, where
    $K = \left( \begin{smallmatrix}
        \pm 1 & 0 & 0 & 0\\
        0 & \pm 1 & 0 & 0\\
        0 & 0 & \pm 1 & 0\\
        0 & 0 & 0 & \pm 1\\
    \end{smallmatrix} \right )$.
    Note that for any diagonal matrix $D'$, we have $KD' = D'K$.
    This implies that $M = (H D_{1} H^{\tt{T}}) \otimes
    (H D_{2} H^{\tt{T}})$, where
    $D_{1} = \text{diag}(\lambda_{1}, \lambda_{2})$, and
    $D_{2} = \text{diag}(\mu_{1}, \mu_{2})$.
    On the other hand, if 
    $\lambda_{2}'\mu_{1}' = \lambda_{1}'\mu_{2}'$, we see that
    $(\lambda_{2}'\mu_{1}')^{\frac{1}{n}} =
    (\lambda_{1}'\mu_{2}')^{\frac{1}{n}}$.
    So, $\nu_{3} = \lambda_{2}\mu_{1} 
    = \lambda_{1}\mu_{2} = \nu_{2}$ as well.
    Therefore, we have the following orthogonal  decomposition of
    $\mathbb{R}^4 = V_1 \oplus V_{23}  \oplus V_4$ as a  direct sum: 
    $V_1$ (respectively, $V_4$)  is a one-dimensional
    eigenspace  corresponding to the eigenvalue $\lambda_{1}'\mu_{1}'$ 
    (respectively, $\lambda_{2}'\mu_{2}'$) of
    $M^n$ which is also a one-dimensional
    eigenspace corresponding to the eigenvalue $\nu_1 = \lambda_{1}\mu_{1}$ (respectively, 
    $\nu_4 = \lambda_{2}\mu_{2}$) of
    $M$; $V_{23}$  is a two-dimensional
    eigenspace  where both $M^n$ and $M$  act as scalar matrices.
    Thus any orthogonal matrix that diagonalizes $M^n$ also diagonalizes $M$.
    So, once again, we see that the matrix $M$ can be
    diagonalized by $H \otimes H$.
    So, in either case, we find that
    $M = (H D_{1} H^{\tt{T}}) \otimes
    (H D_{2} H^{\tt{T}})$, where
    $D_{1} = \text{diag}(\lambda_{1}, \lambda_{2})$, and
    $D_{2} = \text{diag}(\mu_{1}, \mu_{2})$.

    If we let $A = H D_{1} H^{\tt{T}}$, and
    $B = H D_{2} H^{\tt{T}}$, we see that
    $M = A \otimes B$.
    Moreover, we find that
    $A_{11} = \frac{1}{2}(\lambda_{1} + \lambda_{2}) = A_{22}$,
    and $B_{11} = \frac{1}{2}(\mu_{1} + \mu_{2}) = B_{22}$.
    So, we see that when
    $M^{n} = A' \otimes B'$, where
    $(A')_{11} = (A')_{22}$, and $(B')_{11} = (B')_{22}$, then
    $M = A \otimes B$ for some $A, B$ such that
    $A_{11} = A_{22}$, and $B_{11} = B_{22}$.
    Moreover, in this case, it is easily seen that
    $\PlEVAL(M)$ is polynomially tractable.
\end{proof}

\printbibliography

\end{document}